\DeclareSymbolFont{lettersA}{U}{txmia}{m}{it}
\DeclareMathSymbol{\sigmaup}{\mathord}{lettersA}{27}
\DeclareMathSymbol{\tauup}{\mathord}{lettersA}{28}
\renewcommand*{\backref}[1]{}
\renewcommand*{\backrefalt}[4]{{
    \ifcase #1 Not cited.%
          \or Cited on page~#2.%
          \else Cited on pages #2.%
    \fi%
    }}
\renewcommand{\tocsection}[3]{%
  \indentlabel{\@ifnotempty{#2}{\bfseries\ignorespaces#1 #2\quad}}\bfseries#3}
\renewcommand{\tocsubsection}[3]{%
  \indentlabel{\@ifnotempty{#2}{\ignorespaces#1 #2\quad}}#3}
\def\@dotsep{4.5}
\def\@tocline#1#2#3#4#5#6#7{\relax
  \ifnum #1>\c@tocdepth % then omit
  \else
    \par \addpenalty\@secpenalty\addvspace{#2}%
    \begingroup \hyphenpenalty\@M
    \@ifempty{#4}{%
      \@tempdima\csname r@tocindent\number#1\endcsname\relax
    }{%
      \@tempdima#4\relax
    }%
    \parindent\z@ \leftskip#3\relax \advance\leftskip\@tempdima\relax
    \rightskip\@pnumwidth plus1em \parfillskip-\@pnumwidth
    #5\leavevmode\hskip-\@tempdima{#6}\nobreak
    \leaders\hbox{$\m@th\mkern \@dotsep mu\hbox{.}\mkern \@dotsep mu$}\hfill
    \nobreak
    \hbox to\@pnumwidth{\@tocpagenum{\ifnum#1=1\fi#7}}\par
    \nobreak
    \endgroup
  \fi}
\renewcommand\csname r@tocindent0\endcsname{0pt}
\def\l@subsection{\@tocline{2}{0pt}{2.5pc}{5pc}{}}
\theoremstyle{plain}
\newtheorem{theorem}{Theorem}[section]
\newtheorem{lemma}[theorem]{Lemma}
\newtheorem{proposition}[theorem]{Proposition}
\newtheorem{corollary}[theorem]{Corollary}
\newtheorem{claim}[theorem]{Claim}
\newtheorem*{claim*}{Claim}
\newtheorem{fact}[theorem]{Fact}
\newtheorem*{fact*}{Fact} 
\theoremstyle{definition}
\newtheorem{definition}[theorem]{Definition}
\newtheorem{remark}[theorem]{Remark}
\newtheorem{notation}[theorem]{Notation}
\newtheorem{example}[theorem]{Example}
\theoremstyle{definition}
\newcounter{SplitEnum}
\newcounter{StartEnum}
  \newenvironment{fullproof}{\begin{proof}}{\end{proof}}
\begin{document}

\title{Finitely accessible arboreal adjunctions and Hintikka formulae}

\author{Luca Reggio}
\address{Dipartimento di Matematica ``Federigo Enriques'',
  Universit\`a degli Studi di Milano, Italy}
\email{luca.reggio@unimi.it}

\author{Colin Riba}
\address{LIP\\ ENS de Lyon, France}
\email{colin.riba@ens-lyon.fr}

% make the title area
\maketitle
\thispagestyle{plain}
\pagestyle{plain}

%%%%%%%%%%%%%%%%%%%%%%%%%%%%%%%%%%%%%%%%%%%%%%%%%%%%%%%%%%%%%%%%%%%%%%%%%%%
\begin{abstract}
%%%%%%%%%%%%%%%%%%%%%%%%%%%%%%%%%%%%%%%%%%%%%%%%%%%%%%%%%%%%%%%%%%%%%%%%%%%
Arboreal categories provide an axiomatic framework 
in which abstract notions of bisimilarity and back-and-forth games can be defined.
They act on extensional categories, typically consisting of relational structures, via
arboreal adjunctions.
In many cases,
equivalence of structures in fragments of infinitary first-order logic
can be captured by transferring the bisimilarity relation along the adjunction.

In most applications, the categories involved are locally finitely presentable
and the adjunctions are finitely accessible.
Our main result identifies the expressive power of this class of adjunctions.
We show that the ranks of back-and-forth games in the arboreal category
are definable by formulae \emph{\`a la} Hintikka,
and thus the relation between extensional objects induced by bisimilarity
is always coarser than equivalence in
infinitary first-order logic. 

Our approach leverages Gabriel-Ulmer duality for
locally finitely presentable categories,
and Hodges' \mbox{word-constructions}.
\end{abstract}

\setcounter{tocdepth}{1}
\tableofcontents
\setcounter{tocdepth}{3}

%%%%%%%%%%%%%%%%%%%%%%%%%%%%%%%%%%%%%%%%%%%%%%%%%%%%%%%%%%%%%%%%%%%%%%%%%%%
\section{Introduction}
\label{sec:intro}
%%%%%%%%%%%%%%%%%%%%%%%%%%%%%%%%%%%%%%%%%%%%%%%%%%%%%%%%%%%%%%%%%%%%%%%%%%%

Model comparison games, such as Ehrenfeucht-Fraïssé and pebble games, are an
important tool in model theory, as they provide semantic characterisations of
equivalence of models in given logic fragments, see e.g.~\cite{hodges93book,marker02book}.
They are one of the few model-theoretic methods
that restrict to finite structures and are thus central to finite model theory;
see e.g.~\cite{ef99book,libkin04book}.
The present work is concerned with a class of categories
(which extends the class of \emph{arboreal categories} of~\cite{ar21icalp,ar21arboreal})
in which abstract games can be defined,
generalising the games mentioned above and many others.

We start with a slight reformulation of a well-known example.
Let $\FG{\vec x \mid \varphi}$ be a finite structure
consisting of generators $\vec x = x_1,\dots,x_n$
subject to a condition $\varphi = \varphi(\vec x)$
expressed as a conjunction
of equality constraints $x_i \Eq x_j$ and order constraints $x_i \Lt x_j$.
Consider 
ordered structures $M$ and $N$, and homomorphisms $m,n$ as shown below.
\[
\begin{tikzcd}[column sep=huge]
& \FG{\vec x \mid \varphi}
  \arrow[rightarrowtail, bend right=20]{dl}[above, xshift=-5pt]{m}
  \arrow[rightarrowtail, bend left=20]{dr}{n}
\\
  M
& 
& N
\end{tikzcd}
\]

\noindent
The arrows $\emb$ denote order embeddings.
This situation can be regarded as a position in the Ehrenfeucht-Fraïssé game
played by \emph{Spoiler} and \emph{Duplicator} on~$M$ and~$N$.
In the next round, Spoiler moves by choosing one of the embeddings $m$ and $n$ (say $m$),
and by extending it as shown by the solid arrows below.
\[
\begin{tikzcd}[column sep=huge]
& \FG{\vec x \mid \varphi}
  \arrow[rightarrowtail, bend right=20]{dl}[above, xshift=-5pt]{m}
  \arrow[rightarrowtail, bend left=20]{dr}{n}
  \arrow[rightarrowtail]{d}
\\
  M
& \FG{\vec x, x' \mid \varphi \land \varphi'}
  \arrow[rightarrowtail]{l}[below]{\text{(Spoiler)}}
  \arrow[rightarrowtail, dotted]{r}[below]{\text{(Duplicator)}}
& N
\end{tikzcd}
\]

\noindent
Duplicator has to respond by extending the other embedding (in this case $n$)
as indicated by the dotted arrow above.
The game then continues from the new position, and Duplicator
wins if they can respond to all of Spoiler's moves.
For instance, Duplicator wins 
if $M$ and $N$ are dense linear orders without endpoints
(such as $\QQ$ and $\RR$), see e.g.~\cite[\S 14.2]{rosenstein82book}.
A well-known result of Karp~\cite{karp65stm} then implies that $M$ and $N$
satisfy the same formulae of infinitary first-order logic
$\Lang_{\infty} = \Lang_{\infty,\omega}$
where infinite conjunctions and disjunctions (of any arity) are allowed.

An advantage of the game perspective is that parameters such as the number of rounds in
Ehrenfeucht-Fraïssé games,
or the number of pebbles in pebble games, naturally correspond to complexity measures
of formulae such as the quantifier rank, or the number of variables.
Stratifying
formulae in terms of the number of logical resources they use
is pervasive in finite model theory and is at the heart of 
descriptive complexity, cf e.g.\ \cite{Immerman1999}.

An insight of the \emph{comonadic} approach to model comparison games,
put forward in \cite{adw17lics} and developed in~\cite{as18csl,as21jlc} by
Abramsky, Dawar and their collaborators,
is that the set of all plays of a game in a given relational
structure forms itself a structure of the same type.
Furthermore, this assignment yields an endofunctor on the category of structures 
and their homomorphisms, which is in fact
a comonad, called a \emph{game comonad}.
This applies to various games,
such as Ehrenfeucht-Fraïssé, pebble, and modal bisimulation games.
Game comonads, and their axiomatic extension in the form of \emph{arboreal categories}
and \emph{arboreal adjunctions}~\cite{ar21icalp,ar21arboreal},
provide a purely categorical view on many model comparison games,
as well as on
the corresponding logical resources and combinatorial parameters of structures.
In particular, arboreal categories allow abstract notions of games to be defined,
and equivalence of objects in these games can be captured by a notion of
\emph{bisimilarity} closely related to Joyal, Nielsen and Winskel's open-map
bisimilarity~\cite{jnw93lics,jnw96ic}.
For a recent survey on game comonads and arboreal categories, see~\cite{AR2024}.

An arboreal category $\A$ comes equipped with a factorisation system
(consisting of \emph{quotients} and \emph{embeddings})
which induces a notion of back-and-forth game, and thus of back-and-forth equivalence, between objects of $\A$.
The arboreal category $\A$ typically acts on a category $\E$,
called the \emph{extensional category} (e.g.\ a category of relational structures),
via an \emph{arboreal adjunction}, i.e.\ 
an adjunction of the form
\begin{equation*}
\begin{tikzcd}
  \A
  \arrow[bend left]{r}[above]{\Ladj}
  \arrow[phantom]{r}[description]{\textnormal{\footnotesize{$\bot$}}}
& \E.
  \arrow[bend left]{l}[below]{R}
\end{tikzcd}
\end{equation*}

\noindent
In particular we can transfer the back-and-forth equivalence relation along the
adjunction to the category~$\E$:
two objects $M,N\in \E$ are equivalent in the transferred relation if, and only if,
$RM$ and $RN$ are back-and-forth equivalent in $\A$.
In this way, we recover equivalence of structures in a number of logic fragments.

In many concrete examples, the logics involved are fragments of $\Lang_\infty$,
and the arboreal adjunctions satisfy the following additional
properties:
\begin{enumerate}[(i)]
\item
the categories $\E$ and $\A$ are locally finitely presentable (in the sense of, e.g., \cite{ar94book}),

\item
the right adjoint $R$ is finitary, i.e.\ it preserves filtered colimits.
\end{enumerate}

\noindent
An adjunction between locally finitely presentable categories where the right adjoint preserves filtered colimits is customarily called a \emph{finitely accessible} adjunction, hence we refer to arboreal adjunctions $\Ladj \colon \A \inadj \E \cocolon R$ satisfying the two conditions above as \emph{finitely accessible arboreal adjunctions}.
For example, the arboreal adjunction associated with the Ehrenfeucht-Fraïssé comonad
is finitely accessible, and the induced back-and-forth equivalence relation captures equivalence of structures in the full logic~$\Lang_{\infty}$.

Our main result yields a converse to this for finitely accessible arboreal adjunctions.
Under mild additional assumptions, we show that
if $M,N \in \E$ are $\Lang_\infty$-equivalent,
then $R M$ and $R N$ are back-and-forth equivalent in $\A$.
Hence, the arboreal back-and-forth equivalence relations
cannot distinguish between $\Lang_{\infty}$-equivalent models.
This identifies a tight upper bound for the expressive power of finitely accessible arboreal adjunctions.

Our results apply to what we will call \emph{wooded categories}, obtained by weakening the notion of arboreal category.
The axioms for an arboreal category imply that back-and-forth equivalence in an arboreal category $\A$ coincides with bisimilarity via (a suitable notion of) open maps.
Moreover, there is a functor $\Path \colon \A \to \Tree$, into the category $\Tree$ of trees and height-preserving monotone maps, which allows us to regard many constructions in $\A$ as ``concrete over trees''.
While these properties support the arboreal notion of back-and-forth equivalence,
they are not needed for our main results.
Thus, among the axioms for arboreal categories, we adopt only those
that are needed to define back-and-forth equivalence.
This essentially amounts to requiring a good a factorisation system;
we call the resulting categories \emph{wooded}
as they may be much less densely populated with trees than arboreal categories.

%%%%%%%%%%%%%%%%%%%%%%%%%%%%%%%%%%%%%%%%%%%%%%%%%%%%%%%%%%%%%%%%%%%%%%%%%%%
\subsubsection*{Technical overview of our main result}
%%%%%%%%%%%%%%%%%%%%%%%%%%%%%%%%%%%%%%%%%%%%%%%%%%%%%%%%%%%%%%%%%%%%%%%%%%%
We give a high-level overview of the proof structure of our main results
and the tools used.

An important assumption in our main result is based on the observation that,
in most examples, the class of embeddings in the arboreal (or, more generally, wooded) category~$\A$
is determined by the class of substructure embeddings in~$\E$.
To make this precise we rely on Gabriel-Ulmer duality~\cite{gu71lnm} (cf.\ also~\cite{ar94book,ap98ja}), which can be viewed as a syntax-semantics duality between locally finitely presentable categories and \emph{cartesian theories}. The latter, as introduced by Coste in~\cite{coste76benabou}, consist of implications between formulae which are constructed using only (finite) conjunctions
and provably unique existential quantifiers
(cf.\ also~\cite[\S\S D1--2]{johnstone02book}).
Locally finitely presentable categories can then be identified, up to equivalence,
with the categories of models of cartesian theories
(where morphisms are all homomorphisms).

Consider now a \emph{finitely accessible wooded adjunction}
$\Ladj \colon \C \inadj \E \cocolon R$, i.e.\ a finitely accessible adjunction between locally finitely presentable categories such that $\C$ is wooded.
Assume that $\Sig$ is a (possibly many-sorted and infinite) signature 
such that $\E$ is (equivalent to) the category of models of a cartesian
theory in $\Sig$.
We say that $\Ladj \colon \C \inadj \E \cocolon R$
\emph{detects embeddings in $\Sig$} when the following holds:%
\footnote{Our main result actually assumes
a weaker condition of ``detection of \emph{path} embeddings''.}
\begin{itemize}
\item
a morphism of $\C$ is an embedding
if, and only if, its image under the left adjoint $\Ladj$ is
an embedding of $\Sig$-structures in $\E$.
\end{itemize}

\noindent
Our main result 
states that
\begin{equation*}
\begin{array}{l l l}
  \text{$M,N \in \E$ equivalent in $\Lang_{\infty}(\Sig)$}
& \longimp
& 
  \text{$R M, R N$ back-and-forth equivalent in $\C$.}
\end{array}
\end{equation*}

The proof strategy is as follows.
The back-and-forth games in $\C$ are closed in the usual sense
and are equipped with a customary notion of ordinal rank of positions;
cf.\ e.g.~\cite[\S 3.4]{hodges93book} or~\cite[\S 20B]{kechris95book}.
Assuming some definability of embeddings in $\C$,
and working with Coste's syntactic categories,
for each $a \in \C$ we devise ``Hintikka formulae'' that define
the ordinal ranks of the games played in $a$. 
Under Gabriel-Ulmer duality, the right adjoint $R$ induces an interpretation of theories in the reverse direction, so the Hintikka formulae for $R M$ in $\C$ (which are written in a signature $\Sigbis$ such that $\C$ is the category of models of a cartesian theory in $\Sigbis$) can be translated to formulae
for $M$ in $\E$.

To obtain the desired result, we show that our
assumption about the definability of embeddings in $\C$ is satisfied
if $\Ladj \colon \C \inadj \E \cocolon R$ detects embeddings in $\Sig$.
For this purpose, we use some definability of substructure embeddings in $\E$, so that
Hintikka formulae for $R M$ in~$\C$
depend on formulae over $\Ladj R M$ in $\E$.
This time there is no interpretation induced by Gabriel-Ulmer duality (since we consider the left adjoint, instead of the right one), so to translate the infinitary theory of $\Ladj R M$
into the infinitary theory of $R M$
we use a version of 
Hodges' \emph{word-constructions}~\cite{hodges74,hodges75la}
(cf.\ also \cite[\S 9.3]{hodges93book}).

Note, however, that the notion of substructure embedding in a locally finitely
presentable category $\E$ is somewhat problematic.
As mentioned above, any such $\E$ is equivalent to the category 
of models of a cartesian theory in some signature $\Sig$.
But there are several possible choices of $\Sig$ for a given $\E$,
and the notion of substructure embedding in $\E$ depends on such a choice.
Even when $\Sig$ can be chosen to be mono-sorted, finite
and purely relational, Hintikka formulae relative to $\E$
can be much more complicated than Hintikka formulae relative to the
category of $\Sig$-structures. The situation is considerably simplified if
the finitely accessible adjunction
$\Ladj \colon \C \inadj \E \cocolon R$ has the \emph{factorisation property},
i.e.\ it factors through the usual reflection of $\E$ into the category
of $\Sig$-structures.
We show that the factorisation property holds in some natural examples, 
but that it fails in general (even if we assume that the wooded category $\C$ is arboreal).

%%%%%%%%%%%%%%%%%%%%%%%%%%%%%%%%%%%%%%%%%%%%%%%%%%%%%%%%%%%%%%%%%%%%%%%%%%%
\subsubsection*{Organization of the paper.}
%%%%%%%%%%%%%%%%%%%%%%%%%%%%%%%%%%%%%%%%%%%%%%%%%%%%%%%%%%%%%%%%%%%%%%%%%%%
After some preliminaries in~\S\ref{sec:prelim},
in~\S\ref{sec:path} we present back-and-forth equivalence
for wooded catategories and for finitely accessible wooded adjunctions.
This leads to the statement of our main result, namely Theorem~\ref{thm:path:main},
in~\S\ref{sec:path:results}.
We provide some background on arboreal categories in~\S\ref{sec:arboreal},
as many of our examples of wooded categories fall into this class.
Much of the remainder of the paper is then devoted to the proof
of Theorem~\ref{thm:path:main}.
In~\S\S\ref{sec:lfp}--\ref{sec:coste} we recall some material on
locally finitely presentable
categories and functorial semantics.
In~\S\ref{sec:hintikka}, 
assuming some definability of embeddings in a locally finitely presentable
wooded category $\C$,
we devise Hintikka formulae for ordinal ranks,
and obtain a weaker version of Theorem~\ref{thm:path:main}.
In \S\ref{sec:emb}, we discuss the definability of substructure
embeddings in a locally finitely presentable category $\E$.
We then complete the proof of Theorem~\ref{thm:path:main} in~\S\ref{sec:wc}:
using word-constructions, we transfer
the formulae defining substructure embeddings in $\E$ to the wooded category $\C$.
Finally, we discuss the factorisation property in~\S\ref{sec:fact}.

%%%%%%%%%%%%%%%%%%%%%%%%%%%%%%%%%%%%%%%%%%%%%%%%%%%%%%%%%%%%%%%%%%%%%%%%%%%
\section{Preliminaries}
\label{sec:prelim}
%%%%%%%%%%%%%%%%%%%%%%%%%%%%%%%%%%%%%%%%%%%%%%%%%%%%%%%%%%%%%%%%%%%%%%%%%%%

We assume familiarity with the basic notions of category theory;
standard references include \cite{maclane98book,ahs06book}.
Throughout, $\Set$ denotes the category of (small) sets and functions.
If $\cat C$ and $\cat D$ are categories, then
$\funct{\cat C, \cat D}$ is the category of functors from $\cat C$ to $\cat D$,
with morphisms the natural transformations.
The functor category $\funct{\cat C^\op,\Set}$ is denoted by $\presh{\cat C}$
and referred to as the category of \emph{presheaves over $\cat C$}.
A \emph{lex-morphism} is a functor between finitely complete
categories that preserves finite limits.
If $\cat C$ and $\cat D$ are finitely complete, then $\lex\funct{\cat C,\cat D}$
is the full subcategory of $\funct{\cat C,\cat D}$
consisting of the lex-morphisms. 
A functor is \emph{finitary} if it preserves filtered colimits.

%%%%%%%%%%%%%%%%%%%%%%%%%%%%%%%%%%%%%%%%%%%%%%%%%%%%%%%%%%%%%%%%%%%%%%%%%%%
\subsection{Structures and infinitary first-order logic}
\label{sec:prelim:struct}
%%%%%%%%%%%%%%%%%%%%%%%%%%%%%%%%%%%%%%%%%%%%%%%%%%%%%%%%%%%%%%%%%%%%%%%%%%%

We consider \emph{many-sorted}, possibly infinite, first-order signatures.
Such a signature $\Sig$ consists of a set $\Sort(\Sig)$ of \emph{sorts},
together with collections $\Fun(\Sig)$ and $\Rel(\Sig)$ of 
\emph{function} and \emph{relation symbols}, respectively.
We reserve the symbol $\sig$ for mono-sorted purely relational signatures.

A \emph{$\Sig$-structure} $M$ 
is a family of sets $(M(\sort) \mid \sort \in \Sort(\Sig))$
together with,
for each function symbol $f \in \Fun(\Sig)$,
a function $f^M$ interpreting $f$,
and for each relation symbol $R \in \Rel(\Sig)$,
a relation $R^M$ interpreting $R$.
A \emph{homomorphism} of $\Sig$-structures
$h \colon M \to N$
is
a family
of functions
\(
(
h^\sort \colon M(\sort) \to N(\sort)
\mid
\sort \in \Sort(\Sig)
)
\)
preserving the interpretations of function and relation symbols.
We denote by $\Struct(\Sig)$ the category of $\Sig$-structures and their homomorphisms.

The \emph{atomic formulae} in $\Sig$
are the (sorted) equalities $(t \Eq_\sort u)$ 
and the formulae of the form $R(t_1,\dots,t_n)$,
where $R \in \Rel(\Sig)$
and
$t,u,t_1,\dots,t_n$ are terms in $\Sig$.
We write $\Lang_\infty(\Sig)$,
or simply $\Lang_\infty$ when the signature is clear from the context,
for the (large) set
of 
formulae 
built
from atomic formulae in~$\Sig$
using arbitrary (set-indexed) conjunctions and disjunctions,
along with $\neg, \exists, \forall$.
For a cardinal $\kappa$,
we write
$\Lang_\kappa$
for the set of formulae whose conjunctions and disjunctions
have cardinality $<\kappa$.
Hence, $\Lang_\omega$ consists of the ordinary finite first-order formulae.
In any case,
we only consider formulae whose subformulae have 
finitely many free variables
(thus, with the notation of~\cite{hodges93book,ef99book},
$\Lang_\infty=\Lang_{\infty,\omega}$).
There is a rich literature on the (non-trivial) expressiveness of $\Lang_\infty$.
See e.g.~\cite{hodges93book}.

A \emph{formula-in-context} is a pair $(\varphi, \Env)$,
always denoted by $\Env \sorting \varphi$,
consisting of a formula~$\varphi$
and a context
\[\Env = x_1:\sort_1,\dots,x_n:\sort_n,\]
where the $x_i$'s are pairwise distinct.
We require $\Env$ to declare all the free variables of $\varphi$
with the appropriate sort.
Given a structure $M \in \Struct(\Sig)$,
a formula-in-context $\Env \sorting \varphi$ induces a set
\[
\begin{array}{l l l}
  \I{\Env \mid \varphi}_M
& \coloneqq
& \{ (a_1,\dots,a_n) \in M(\sort_1) \times \dots \times M(\sort_n)
  \mid
  M \models \varphi(a_1,\dots,a_n) \}.
\end{array}
\]

\noindent
We say that $\Env \sorting \varphi$ is \emph{valid} in $M$ if 
\[
\begin{array}{l l l}
  \I{\Env \mid \varphi}_M
& =
& M(\sort_1) \times \cdots \times M(\sort_n).
\end{array}
\]

\noindent
When sorts are notationally irrelevant,
we simply write $\vec x$ for $\Env = x_1:\sort_1,\dots,x_n:\sort_n$,
and similarly for homomorphisms and carriers of structures.

We write $\emptyset$ for the empty context.
Note that if $\varphi$ is a sentence,
then $\I{\emptyset \mid \varphi}_M$ is a subset of the singleton set $\one$.
In this case, we have
$M \models \varphi$ precisely when $\I{\emptyset \mid \varphi}_M = \one$.

%%%%%%%%%%%%%%%%%%%%%%%%%%%%%%%%%%%%%%%%%%%%%%%%%%%%%%%%%%%%%%%%%%%%%%%%%%%
\subsection{Cartesian theories}
\label{sec:prelim:coste}
%%%%%%%%%%%%%%%%%%%%%%%%%%%%%%%%%%%%%%%%%%%%%%%%%%%%%%%%%%%%%%%%%%%%%%%%%%%
Cartesian theories consist of implications between $\land\exists$-formulae
in which only provably unique existential quantifications are allowed.
They have been introduced by Coste~\cite{coste76benabou}
as a purely logical characterisation of locally finitely presentable categories.
We briefly review the latter in~\S\ref{sec:prelim:lfp};
further explanations and details are provided in~\S\S\ref{sec:lfp}--\ref{sec:coste}.
We mostly follow the presentation in~\cite[\S D1]{johnstone02book}.

Let $\Sig$ be a signature.
A \emph{sequent in $\Sig$} is a triple of the form
\[
\begin{array}{l l l}
  \psi
& \thesis_\Env
& \varphi
\end{array}
\]

\noindent
where $\Env \sorting \psi$ and $\Env \sorting \varphi$ are formulae-in-context of $\Sig$.
A \emph{theory $\theory$ in $\Sig$} is a set of sequents in $\Sig$.

%%%%%%%%%%%%%%%%%%%%%%%%%%%%%%%%%%%%%%%%%%%%%%%%%%%%%%%%%%%%%%%%%%%%%%%%%%%
\begin{definition}[Cartesian theory]
\label{def:prelim:coste:th}
%%%%%%%%%%%%%%%%%%%%%%%%%%%%%%%%%%%%%%%%%%%%%%%%%%%%%%%%%%%%%%%%%%%%%%%%%%%
%Let $\Sig$ be a signature,
Let $\theory$ be a theory in a signature $\Sig$.
\begin{enumerate}[(1)]
\item
The set of formulae-in-context
that are \emph{cartesian over $\theory$}
is inductively defined as follows:
\begin{itemize}
\item atomic formulae-in-context are cartesian over $\theory$;
\item cartesian formulae are closed under finite conjunctions;

\item given a cartesian formula-in-context
$\Env,x:\sort \sorting \varphi$,
the formula
$\Env \sorting (\exists x:\sort) \varphi$
is cartesian over $\theory$ if $\theory$ proves the sequent
\(
  \varphi
  \land
  \varphi[y/x]
  ~\thesis_{\Env, x: \sort, y: \sort}~
  x \Eq_\sort y
\).
\end{itemize}

\item
A sequent $\psi \thesis_\Env \varphi$ is \emph{cartesian over $\theory$}
if both $\psi, \varphi$ are.

\item
We say that $\theory$ is a \emph{cartesian theory}
when its axioms can be well-ordered so that each axiom
is cartesian over the set of its predecessors.
\end{enumerate}
\end{definition}

Let $M \in \Struct(\Sig)$.
A sequent $\psi \thesis_\Env \varphi$ is \emph{valid} in
$M$
if $\I{\Env \mid \psi}_M \sle \I{\Env \mid \varphi}_M$, and
$M$ is a \emph{model} of a theory $\theory$
if all sequents in $\theory$
are valid in $M$.
We denote by 
\[
\Mod(\theory)
\] 
the full subcategory
of $\Struct(\Sig)$ defined by the models of $\theory$.

%%%%%%%%%%%%%%%%%%%%%%%%%%%%%%%%%%%%%%%%%%%%%%%%%%%%%%%%%%%%%%%%%%%%%%%%%%%
\begin{remark}
\label{rem:prelim:coste:compl}
%%%%%%%%%%%%%%%%%%%%%%%%%%%%%%%%%%%%%%%%%%%%%%%%%%%%%%%%%%%%%%%%%%%%%%%%%%%
Cartesian theories consist of (sequents of)
$\land\exists$-formulae, hence they are special cases of \emph{regular} theories.
Further, they admit a completeness theorem stating that
any cartesian theory has enough models in $\Set$ to determine intuitionistic
provability (see e.g.~\cite[Proposition D1.5.1]{johnstone02book}).
In particular, classical and intuitionistic provability coincide for cartesian theories.%
\footnote{This is a particular case of \emph{Barr's theorem}.
See e.g.~\cite[\S 5]{negri16jlc} (and also~\cite[\S X]{mm92book}).}
%%%%%%%%%%%%%%%%%%%%%%%%%%%%%%%%%%%%%%%%%%%%%%%%%%%%%%%%%%%%%%%%%%%%%%%%%%%
\end{remark}
%%%%%%%%%%%%%%%%%%%%%%%%%%%%%%%%%%%%%%%%%%%%%%%%%%%%%%%%%%%%%%%%%%%%%%%%%%%

%%%%%%%%%%%%%%%%%%%%%%%%%%%%%%%%%%%%%%%%%%%%%%%%%%%%%%%%%%%%%%%%%%%%%%%%%%%
\begin{example}[Cartesian theories of functors]
\label{ex:prelim:coste:funct}
%%%%%%%%%%%%%%%%%%%%%%%%%%%%%%%%%%%%%%%%%%%%%%%%%%%%%%%%%%%%%%%%%%%%%%%%%%%
Given a small category $\cat C$, consider the signature $\Sig(\cat C)$
with sorts the objects of $\cat C$, function symbols the morphisms of $\cat C$,
and no predicate symbols.
Let $\theory(\cat C)$ be the  
equational theory with axioms
\[
\begin{array}{l !{\qquad} l}
  \True
  \thesis_{x:a}
  \id_a(x) \Eq_a x

& \True
  \thesis_{x:a}
  (g \comp f)(x) \Eq_c g(f(x))
\end{array}
\]

\noindent
where $f \in \cat C\funct{a,b}$ and $g \in \cat C\funct{b,c}$.
Then $\Mod(\theory(\cat C))$
is equivalent to the functor category $\funct{\cat C,\Set}$.
Further, if $\cat C$ has finite limits, then
$\theory(\cat C)$ can be extended to a cartesian theory~$\theory$ such that $\Mod(\theory) \cong \lex\funct{\cat C,\Set}$.
See e.g.\ \cite[Example D1.4.8]{johnstone02book}.
%%%%%%%%%%%%%%%%%%%%%%%%%%%%%%%%%%%%%%%%%%%%%%%%%%%%%%%%%%%%%%%%%%%%%%%%%%%
\end{example}
%%%%%%%%%%%%%%%%%%%%%%%%%%%%%%%%%%%%%%%%%%%%%%%%%%%%%%%%%%%%%%%%%%%%%%%%%%%

%%%%%%%%%%%%%%%%%%%%%%%%%%%%%%%%%%%%%%%%%%%%%%%%%%%%%%%%%%%%%%%%%%%%%%%%%%%
\begin{example}
\label{ex:prelim:coste:pos}
%%%%%%%%%%%%%%%%%%%%%%%%%%%%%%%%%%%%%%%%%%%%%%%%%%%%%%%%%%%%%%%%%%%%%%%%%%%
\emph{Horn theories} form an important class of cartesian theories.
For instance, the category $\Pos$
of posets and monotone maps
is the category of models
of the following Horn theory in $\sig = \{\Leq\}$:
\[
\begin{array}{l !{\qquad} l l l !{\qquad} l l l}
  \True
  \thesis_{x}
  x \Leq x

& x \Leq y
  ~\land~
  y \Leq z
& \thesis_{x,y,z}
& x \Leq z

& x \Leq y
  ~\land~
  y \Leq x
& \thesis_{x,y}
& x \Eq y
\end{array}
\]
%%%%%%%%%%%%%%%%%%%%%%%%%%%%%%%%%%%%%%%%%%%%%%%%%%%%%%%%%%%%%%%%%%%%%%%%%%%
\end{example}
%%%%%%%%%%%%%%%%%%%%%%%%%%%%%%%%%%%%%%%%%%%%%%%%%%%%%%%%%%%%%%%%%%%%%%%%%%%

%%%%%%%%%%%%%%%%%%%%%%%%%%%%%%%%%%%%%%%%%%%%%%%%%%%%%%%%%%%%%%%%%%%%%%%%%%%
\begin{remark}
\label{rem:prelim:coste:horn}
%%%%%%%%%%%%%%%%%%%%%%%%%%%%%%%%%%%%%%%%%%%%%%%%%%%%%%%%%%%%%%%%%%%%%%%%%%%
The class of cartesian theories strictly extends that of Horn theories.
For instance, the category $\Cat$ of small categories
is the category of models of a cartesian theory,
see~\cite[Example D1.1.7(e)]{johnstone02book},
but there is no signature $\Sig$ such that $\Cat$
is the category of models of a Horn theory in $\Sig$,
see~\cite[Example D2.4.7]{johnstone02book}.
%%%%%%%%%%%%%%%%%%%%%%%%%%%%%%%%%%%%%%%%%%%%%%%%%%%%%%%%%%%%%%%%%%%%%%%%%%%
\end{remark}
%%%%%%%%%%%%%%%%%%%%%%%%%%%%%%%%%%%%%%%%%%%%%%%%%%%%%%%%%%%%%%%%%%%%%%%%%%%

%%%%%%%%%%%%%%%%%%%%%%%%%%%%%%%%%%%%%%%%%%%%%%%%%%%%%%%%%%%%%%%%%%%%%%%%%%%
\subsection{Locally finitely presentable categories}
\label{sec:prelim:lfp}
%%%%%%%%%%%%%%%%%%%%%%%%%%%%%%%%%%%%%%%%%%%%%%%%%%%%%%%%%%%%%%%%%%%%%%%%%%%
We briefly discuss some aspects of locally finitely presentable (lfp)
categories needed to present our main examples and results.

Recall that an object $X$ of a locally small category $\E$ is
\emph{finitely presentable} if the functor
$\E\funct{X,-} \colon \E \to \Set$ preserves all filtered colimits that exist in $\E$.

%%%%%%%%%%%%%%%%%%%%%%%%%%%%%%%%%%%%%%%%%%%%%%%%%%%%%%%%%%%%%%%%%%%%%%%%%%%
\begin{definition}
\label{def:lfp}
%%%%%%%%%%%%%%%%%%%%%%%%%%%%%%%%%%%%%%%%%%%%%%%%%%%%%%%%%%%%%%%%%%%%%%%%%%%
A locally small category $\E$ is \emph{locally finitely presentable} (\emph{lfp})
provided it satisfies the following conditions:
\begin{enumerate}[(i)]
\item
\label{item:lfp-small-dense}
$\E$
admits a (small) set $\cat C$ of finitely presentable objects
such that every object of~$\E$ is a filtered colimit
of objects from $\cat C$;

\item
\label{item:lfp-cocomplete}
$\E$ is cocomplete
(and thus complete, see~\cite[Remark 1.56]{ar94book}).
\end{enumerate}

\noindent
A \emph{morphism of lfp categories} is a limit-preserving finitary functor
between lfp categories.
%%%%%%%%%%%%%%%%%%%%%%%%%%%%%%%%%%%%%%%%%%%%%%%%%%%%%%%%%%%%%%%%%%%%%%%%%%%
\end{definition}
%%%%%%%%%%%%%%%%%%%%%%%%%%%%%%%%%%%%%%%%%%%%%%%%%%%%%%%%%%%%%%%%%%%%%%%%%%%

%%%%%%%%%%%%%%%%%%%%%%%%%%%%%%%%%%%%%%%%%%%%%%%%%%%%%%%%%%%%%%%%%%%%%%%%%%%
\begin{example}
\label{ex:lfp:struct}
%%%%%%%%%%%%%%%%%%%%%%%%%%%%%%%%%%%%%%%%%%%%%%%%%%%%%%%%%%%%%%%%%%%%%%%%%%%
Each category $\Struct(\Sig)$ is lfp.
The finitely presentable objects of $\Struct(\Sig)$,
denoted by $\FG{\vec x \mid \varphi}$,
are given by
finitely many generators $\vec x = x_1,\dots,x_n$
subject to a condition $\varphi = \varphi(\vec x)$
expressed by
a finite conjunction of atomic formulae from $\Sig$
(so that $\vec x \sorting \varphi$ is a formula-in-context).
%%%%%%%%%%%%%%%%%%%%%%%%%%%%%%%%%%%%%%%%%%%%%%%%%%%%%%%%%%%%%%%%%%%%%%%%%%%
\end{example}
%%%%%%%%%%%%%%%%%%%%%%%%%%%%%%%%%%%%%%%%%%%%%%%%%%%%%%%%%%%%%%%%%%%%%%%%%%%

%%%%%%%%%%%%%%%%%%%%%%%%%%%%%%%%%%%%%%%%%%%%%%%%%%%%%%%%%%%%%%%%%%%%%%%%%%%
\begin{fact}
\label{fact:lfp:struct}
%%%%%%%%%%%%%%%%%%%%%%%%%%%%%%%%%%%%%%%%%%%%%%%%%%%%%%%%%%%%%%%%%%%%%%%%%%%
Consider $M \in \Struct(\Sig)$
and
$\FG{\vec x \mid \varphi}$ as in Example~\ref{ex:lfp:struct}.
Given $\vec a \in \I{\vec x \mid \varphi}_M$,
there is a unique homomorphism
$h \colon \FG{\vec x \mid \varphi} \to M$
that takes each generator $x_i$ to $a_i$. 
This results in a bijection
\[
\begin{array}{l l l}
  \Struct(\Sig)\funct{\FG{\vec x \mid \varphi}, M}
& \cong
& \I{\vec x \mid \varphi}_M.
\end{array}
\]
%%%%%%%%%%%%%%%%%%%%%%%%%%%%%%%%%%%%%%%%%%%%%%%%%%%%%%%%%%%%%%%%%%%%%%%%%%%
\end{fact}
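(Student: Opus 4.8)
The plan is to establish the claimed bijection by exhibiting a natural correspondence between homomorphisms out of the finitely presented structure $\FG{\vec x \mid \varphi}$ and tuples satisfying $\varphi$, and then to verify it is a bijection. First I would recall the universal property that characterises $\FG{\vec x \mid \varphi}$ as the structure freely generated by $\vec x$ subject to the constraints imposed by $\varphi$. Concretely, since $\varphi$ is a finite conjunction of atomic formulae, the elements of $\FG{\vec x \mid \varphi}$ are equivalence classes of terms built from the generators, where two terms are identified exactly when the equality constraints in $\varphi$ (together with the axioms of $\Sig$-structures) force them to be equal, and the relations $R^{\FG{\vec x \mid \varphi}}$ hold precisely on the tuples forced by the relational atoms occurring in $\varphi$.

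Given a tuple $\vec a = (a_1,\dots,a_n) \in \I{\vec x \mid \varphi}_M$, I would define the candidate homomorphism $h$ by sending each generator $x_i$ to $a_i$ and extending along the term structure, i.e.\ $h(t(\vec x)) = t^M(\vec a)$ for every term $t$. The key step is to check this is well defined: whenever the defining equalities of $\FG{\vec x \mid \varphi}$ identify two terms $t$ and $u$, the corresponding equality $t \Eq u$ is a consequence of $\varphi$, and since $M \models \varphi(\vec a)$ we get $t^M(\vec a) = u^M(\vec a)$, so $h$ respects the identifications. One then verifies that $h$ preserves all function and relation symbols: preservation of functions is immediate from the term-based definition, and preservation of relations follows because any relational atom holding in $\FG{\vec x \mid \varphi}$ is among the conjuncts of $\varphi$ (or provable from them), hence holds of the image tuple in $M$ by the assumption $\vec a \in \I{\vec x \mid \varphi}_M$.

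For injectivity and surjectivity I would argue as follows. Any homomorphism $h \colon \FG{\vec x \mid \varphi} \to M$ is determined by its values on the generators, since every element of the domain is (the class of) a term in $\vec x$ and $h$ commutes with the function symbols; this gives injectivity of the assignment $h \mapsto (h(x_1),\dots,h(x_n))$. Conversely, the tuple $(h(x_1),\dots,h(x_n))$ lies in $\I{\vec x \mid \varphi}_M$ because $h$ preserves the relational and equational atoms of $\varphi$, so $M \models \varphi(h(x_1),\dots,h(x_n))$; this shows the assignment lands in the right set, and combined with the construction of the previous paragraph it is surjective. The two assignments are mutually inverse, yielding the stated bijection, which is moreover natural in $M$.

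The main obstacle, and the only point requiring genuine care, is the well-definedness check in the second paragraph: one must be precise about the inductive description of $\FG{\vec x \mid \varphi}$ as a quotient of the term algebra and confirm that every identification imposed in forming this quotient is indeed entailed by $\varphi$, so that it is respected by any $\vec a$ satisfying $\varphi$. This is where the specific nature of finite presentations by conjunctions of atomic formulae is used, and it relies on $\Struct(\Sig)$ being the category of models of a purely equational-plus-relational presentation. Everything else is a routine unwinding of definitions, so I would keep the exposition brief and concentrate on stating the universal property cleanly.
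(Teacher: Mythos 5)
Your proof is correct, but it takes a genuinely different route from the paper. The paper never proves Fact~\ref{fact:lfp:struct} by hand: it cites \cite[Remarks 5.1 and 5.5]{ar94book} and \cite[Lemmas D2.4.1 and D1.4.4(ii)]{johnstone02book}, and defers the details to \S\ref{sec:coste}, where the Fact is recovered as the special case $\theory = \emptyset$ of Lemma~\ref{lem:coste:lp:hom}. There, $\FG{\vec x \mid \varphi}$ is \emph{defined} as the model corresponding to the representable functor $\yoneda\OI{\vec x \mid \varphi}$ under the equivalence $\Mod(\theory) \simeq \lex\funct{\cat T,\Set}$ of Theorem~\ref{thm:coste:synt}, so the bijection
\[
\Struct(\Sig)\funct{\FG{\vec x \mid \varphi}, M}
\;\cong\;
\lex\funct{\cat T,\Set}\funct{\yoneda\OI{\vec x \mid \varphi}, F_M}
\;\cong\;
F_M\OI{\vec x \mid \varphi}
\;=\;
\I{\vec x \mid \varphi}_M
\]
is literally an instance of the Yoneda lemma, with no description of the carrier of $\FG{\vec x \mid \varphi}$ ever needed. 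You instead construct $\FG{\vec x \mid \varphi}$ explicitly as a quotient of the term algebra on $\vec x$ by the congruence generated by the equality atoms of $\varphi$, with relations generated by the relational atoms, and verify the universal property directly; your well-definedness check and the mutual-inverse argument are both sound, and your closing caveat is exactly the right one: the argument works because $\Struct(\Sig)$ is the category of models of the \emph{empty} theory, so no identifications or relations are forced beyond those entailed by $\varphi$ itself. What your approach buys is an elementary, self-contained proof; what the paper's approach buys is uniformity: the Yoneda argument generalises verbatim to $\Mod(\theory)$ for an arbitrary cartesian theory $\theory$ (this is Lemma~\ref{lem:coste:lp:hom}, which the rest of the paper depends on), whereas your term-model construction does not survive that generalisation --- as the paper points out at the start of \S\ref{sec:emb}, for non-empty $\theory$ the carrier of $\FG{\vec x \mid \varphi}$ can be infinite and admits no simple syntactic description, so the concrete route would have to be replaced by provability modulo $\theory$, which is essentially what the syntactic category packages.
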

%%%%%%%%%%%%%%%%%%%%%%%%%%%%%%%%%%%%%%%%%%%%%%%%%%%%%%%%%%%%%%%%%%%%%%%%%%%

References on Example~\ref{ex:lfp:struct} and Fact~\ref{fact:lfp:struct}
include \cite[Remarks 5.1 and 5.5]{ar94book}
and \cite[Lemma D2.4.1 and Lemma D1.4.4(ii)]{johnstone02book}.
We give some details in~\S\ref{sec:coste} below.

%%%%%%%%%%%%%%%%%%%%%%%%%%%%%%%%%%%%%%%%%%%%%%%%%%%%%%%%%%%%%%%%%%%%%%%%%%%
\begin{remark}
\label{rem:lfp:struct}
%%%%%%%%%%%%%%%%%%%%%%%%%%%%%%%%%%%%%%%%%%%%%%%%%%%%%%%%%%%%%%%%%%%%%%%%%%%
Generalising Example~\ref{ex:lfp:struct},
if $\theory$ is a cartesian theory in~$\Sig$, then
$\Mod(\theory)$ is lfp, and moreover it is
reflective and closed under filtered colimits in $\Struct(\Sig)$.
In fact, a category $\E$ is lfp if, and only if,
there is a cartesian theory $\theory$ in a
(possibly infinite, many-sorted) signature $\Sig$ such that
$\E$ is equivalent to $\Mod(\theory)$.
See Corollary~\ref{cor:lfp-iff-models-of-T} and Lemma~\ref{lem:coste:mod:filtcolim} below and also \cite[Theorem 5.9]{ar94book}.
%%%%%%%%%%%%%%%%%%%%%%%%%%%%%%%%%%%%%%%%%%%%%%%%%%%%%%%%%%%%%%%%%%%%%%%%%%%
\end{remark}
%%%%%%%%%%%%%%%%%%%%%%%%%%%%%%%%%%%%%%%%%%%%%%%%%%%%%%%%%%%%%%%%%%%%%%%%%%%

Finally, we state a well-known sufficient criterion to establish that a category is lfp
(see e.g.~\cite[Theorem~1.11 and Remark~1.23]{ar94book}).
To this end, recall that a subcategory $I \colon \cat K \into \E$
is \emph{dense} if
each $X \in \E$
is the canonical colimit of the 
diagram
$(I/X) \to \cat K \into \E$,
i.e.\ the cocone $\gamma$
with $\gamma_{(k,g \colon I k \to X)} \deq g$ is colimiting.

%%%%%%%%%%%%%%%%%%%%%%%%%%%%%%%%%%%%%%%%%%%%%%%%%%%%%%%%%%%%%%%%%%%%%%%%%%%
\begin{lemma}
\label{l:lfp-criterion}
%%%%%%%%%%%%%%%%%%%%%%%%%%%%%%%%%%%%%%%%%%%%%%%%%%%%%%%%%%%%%%%%%%%%%%%%%%%
Let $\E$ be a cocomplete locally small category.
If $\E$ admits an (essentially) small full dense subcategory $\cat K$
consisting of finitely presentable objects,
then $\E$ is lfp. 
%%%%%%%%%%%%%%%%%%%%%%%%%%%%%%%%%%%%%%%%%%%%%%%%%%%%%%%%%%%%%%%%%%%%%%%%%%%
\end{lemma}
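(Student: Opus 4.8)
The plan is to verify Definition~\ref{def:lfp} directly, using density to promote the small full subcategory $\cat K$ into a witness for condition~\eqref{item:lfp-small-dense}. The hypotheses already give cocompleteness, so condition~\eqref{item:lfp-cocomplete} is immediate. It remains to show that $\cat K$ exhibits every object of $\E$ as a filtered colimit of finitely presentable objects. First I would fix $X \in \E$ and consider the canonical diagram $(\cat K/X) \to \cat K \into \E$ associated to the comma category of arrows $k \to X$ with $k \in \cat K$. By density, the cocone $\gamma_{(k,g)} = g$ is colimiting, so $X$ is the colimit of this diagram over $\cat K/X$. Since $\cat K \subseteq \E$ consists of finitely presentable objects, this already writes $X$ as a colimit of finitely presentable objects; the work is entirely in upgrading this to a \emph{filtered} colimit.

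The main obstacle is precisely that the canonical diagram indexed by $\cat K/X$ need not be filtered: $\cat K/X$ is typically not a filtered category. The standard remedy, which I would follow, is to pass to a suitable cofinal filtered replacement. The key structural input is that $\cat K$ is essentially small and closed (up to the ambient category $\E$) under finite colimits, or can be replaced by such a closure. Concretely, I would let $\cat C$ be the closure of $\cat K$ under finite colimits computed in $\E$; because $\cat K$ is essentially small and $\E$ is cocomplete, $\cat C$ is again (essentially) small, and every object of $\cat C$ is finitely presentable since finite colimits of finitely presentable objects are finitely presentable. The comma category $\cat C/X$ is then filtered: given two objects, their coproduct (and more generally any finite diagram) exists in $\cat C$ and maps to $X$, supplying the cocones required by filteredness.

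Next I would argue that the inclusion $\cat K/X \into \cat C/X$ is final (cofinal), so that the colimit of the canonical diagram is unchanged when reindexed over the filtered category $\cat C/X$. Finality follows because each object $c \to X$ of $\cat C/X$, with $c$ a finite colimit of objects of $\cat K$, receives maps from the corresponding $\cat K$-pieces over $X$; one checks that the relevant slice categories are nonempty and connected. Combining these steps, $X$ is the colimit over the filtered diagram $(\cat C/X) \to \cat C \into \E$ of finitely presentable objects from the small set $\cat C$. Thus $\cat C$ is a small set of finitely presentable objects such that every object of $\E$ is a filtered colimit of objects from $\cat C$, establishing condition~\eqref{item:lfp-small-dense}.

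I expect the finality/cofinality verification to be the technically delicate step, since it requires checking connectedness of the comparison slices rather than a mere existence of objects; the filteredness of $\cat C/X$ is routine once closure under finite colimits is in place. An alternative, cleaner route would be to cite the characterisation that a cocomplete category with a small dense subcategory of finitely presentable objects is automatically lfp (this is essentially \cite[Theorem~1.11]{ar94book}), in which case one only needs to confirm the density hypothesis and invoke the theorem; but since the statement is phrased as a self-contained criterion, I would present the direct argument above, emphasising the passage to the finite-colimit closure as the crucial device that turns density into a filtered presentation.
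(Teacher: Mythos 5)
The step that fails is your finality claim. For the inclusion $\cat K/X \into \cat C/X$ to be final (so that restricting the canonical diagram along it does not change the colimit), you need, for every object $(c, g\colon c \to X)$ of $\cat C/X$, that the comma category of morphisms \emph{from} $(c,g)$ \emph{to} objects of $\cat K/X$ — i.e.\ maps $c \to k$ over $X$ with $k \in \cat K$ — be nonempty and connected. Your justification instead describes maps from the $\cat K$-pieces \emph{into} $c$, which is the condition for initiality (relevant to limits), not finality; and the correct condition genuinely fails. Take $\E = \Set$ and $\cat K = \{1\}$, a small full dense subcategory consisting of finitely presentable objects; its finite-colimit closure $\cat C$ contains the two-element set $c$, and for any injective $g \colon c \to X$ there is no map $c \to 1$ over $X$, so the relevant comma category is empty. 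Hence $\cat K/X \into \cat C/X$ is not final, even though (as it happens) both colimits are $X$.

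The intermediate statement you want — that $X$ is the colimit of the canonical diagram over $\cat C/X$ — is nevertheless true, for a different reason: a full subcategory of $\E$ containing a dense full subcategory is itself dense. You can verify this directly, and it is the correct replacement for the finality step. Given a cocone $(\tau_{(c,g)} \colon c \to Z)$ over the canonical diagram of $\cat C/X$, restrict it to $\cat K/X$; density of $\cat K$ yields a unique $t \colon X \to Z$ with $t \circ g = \tau_{(k,g)}$ for all $(k,g) \in \cat K/X$. To check $t \circ g = \tau_{(c,g)}$ for an arbitrary $(c,g) \in \cat C/X$, precompose with any $e \colon k \to c$ with $k \in \cat K$: cocone compatibility gives $\tau_{(c,g)} \circ e = \tau_{(k,\, g \circ e)} = t \circ g \circ e$, and since $c$ is itself the canonical colimit of $\cat K/c$ (density of $\cat K$ applied to $c$), morphisms out of $c$ are determined by their precompositions with maps from objects of $\cat K$. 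With this repair the rest of your argument is sound: the closure $\cat C$ is essentially small, consists of finitely presentable objects, and $\cat C/X$ is filtered, so every object is a filtered colimit of objects from a small set of finitely presentable ones. Note, for comparison, that the paper does not prove the lemma at all but simply cites \cite[Theorem~1.11 and Remark~1.23]{ar94book}; your strategy, once corrected as above, is a reasonable self-contained substitute.
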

%%%%%%%%%%%%%%%%%%%%%%%%%%%%%%%%%%%%%%%%%%%%%%%%%%%%%%%%%%%%%%%%%%%%%%%%%%%

%%%%%%%%%%%%%%%%%%%%%%%%%%%%%%%%%%%%%%%%%%%%%%%%%%%%%%%%%%%%%%%%%%%%%%%%%%%
\begin{example}
\label{ex:presheaf-cats-lfp}
%%%%%%%%%%%%%%%%%%%%%%%%%%%%%%%%%%%%%%%%%%%%%%%%%%%%%%%%%%%%%%%%%%%%%%%%%%%
For any small category $\cat C$, the presheaf category $\presh{\cat C}$ is lfp.
This follows from Lemma~\ref{l:lfp-criterion},
since the representable functors form a full dense subcategory of 
$\presh{\cat C}$ consisting of finitely presentable objects.
\end{example}

%%%%%%%%%%%%%%%%%%%%%%%%%%%%%%%%%%%%%%%%%%%%%%%%%%%%%%%%%%%%%%%%%%%%%%%%%%%
\begin{example}
\label{ex:prelim:forests}
%%%%%%%%%%%%%%%%%%%%%%%%%%%%%%%%%%%%%%%%%%%%%%%%%%%%%%%%%%%%%%%%%%%%%%%%%%%
Recall that a poset $(\forest,\leq)$ is a \emph{forest} if,
for all $x\in \forest$,
the set 
\[
\begin{array}{l l l}
  \down x
& \deq
& \{y\in \forest \mid y\leq x\}
\end{array}
\] 

\noindent
is a finite chain.
The \emph{roots} of a forest $(\forest,\leq)$ are the minimal elements.
The \emph{covering relation} $\cvr$ associated with the partial order $\leq$ is
defined by $x\cvr y$ if, and only if, $x < y$ and there is no $z$ such that $x < z < y$.
A \emph{tree} is a forest with at most one root.

A \emph{forest morphism} is a function between forests that preserves roots
and the covering relation.
The category of forests and forest morphisms is denoted by $\Forest$,
and the full subcategory of \emph{non-empty} trees by~$\Tree$. 

The categories $\Forest$ and $\Tree$ are equivalent
(via the functor $\Forest\to\Tree$ that adds a least element,
whose quasi-inverse $\Tree\to\Forest$ removes the root),
and they are equivalent to the presheaf category $\presh{\NN}$. 
In particular, $\Forest$ and $\Tree$ are lfp categories.
\end{example}

%%%%%%%%%%%%%%%%%%%%%%%%%%%%%%%%%%%%%%%%%%%%%%%%%%%%%%%%%%%%%%%%%%%%%%%%%%%
\subsection{Factorisation systems}
\label{sec:prelim:facto}
%%%%%%%%%%%%%%%%%%%%%%%%%%%%%%%%%%%%%%%%%%%%%%%%%%%%%%%%%%%%%%%%%%%%%%%%%%%

Given arrows $e$ and $m$ in a category $\C$,
we say that $e$ has the \emph{left lifting property} with respect to $m$,
or that $m$ has the \emph{right lifting property} with respect to $e$,
and write $e\pitchfork m$, if every commutative square as on the left-hand side below
\[
\begin{array}{l !{\qquad} l}

\begin{tikzcd}
  \unit
  \arrow{d}[swap]{e}
  \arrow{r}
& \unit
  \arrow{d}{m}
\\
  \unit
  \arrow{r}
& \unit
\end{tikzcd}

&

\begin{tikzcd}
  \unit
  \arrow{d}[swap]{e}
  \arrow{r}
& \unit
  \arrow{d}{m}
\\
  \unit \arrow{ur}[description]{d}
  \arrow{r}
& \unit
\end{tikzcd}

\end{array}
\]

\noindent
admits a \emph{diagonal filler},
i.e.\ an arrow $d$ making the rightmost diagram commute.

%%%%%%%%%%%%%%%%%%%%%%%%%%%%%%%%%%%%%%%%%%%%%%%%%%%%%%%%%%%%%%%%%%%%%%%%%%%
\begin{definition}
\label{def:weak-f-s}
%%%%%%%%%%%%%%%%%%%%%%%%%%%%%%%%%%%%%%%%%%%%%%%%%%%%%%%%%%%%%%%%%%%%%%%%%%%
A pair of classes of morphisms $(\Q,\M)$ in a category $\C$ is a
\emph{weak factorisation system} provided it satisfies the following conditions:
\begin{enumerate}[(i)]
\item
Every morphism $f$ in $\C$ can be decomposed as $f = m \comp e$
with $e\in \Q$ and $m\in \M$.

\item
$\Q=\{e\mid \forall m\in \M, e\pitchfork m\}$
and
$\M=\{m\mid \forall e\in \Q, e\pitchfork m\}$.
\end{enumerate}

\noindent
If, in addition,
all arrows in $\Q$ are epimorphisms and all arrows in $\M$ are monomorphisms,
then $(\Q,\M)$ is a \emph{proper factorisation system}. 
\end{definition} 

%%%%%%%%%%%%%%%%%%%%%%%%%%%%%%%%%%%%%%%%%%%%%%%%%%%%%%%%%%%%%%%%%%%%%%%%%%%
\begin{remark}
%%%%%%%%%%%%%%%%%%%%%%%%%%%%%%%%%%%%%%%%%%%%%%%%%%%%%%%%%%%%%%%%%%%%%%%%%%%
Note that a weak factorisation system $(\Q,\M)$ is completely determined
by either of the two classes $\Q$ or $\M$.
%%%%%%%%%%%%%%%%%%%%%%%%%%%%%%%%%%%%%%%%%%%%%%%%%%%%%%%%%%%%%%%%%%%%%%%%%%%
\end{remark}
%%%%%%%%%%%%%%%%%%%%%%%%%%%%%%%%%%%%%%%%%%%%%%%%%%%%%%%%%%%%%%%%%%%%%%%%%%%

Recall that a monomorphism is \emph{strong}
if it has the right lifting property with respect to all epimorphisms;
\emph{strong epimorphisms} are defined dually.
If $(\Q,\M)$ is a proper factorisation system,
it follows at once that every strong monomorphism belongs to
$\M$ and every strong epimorphism belongs to $\Q$.

%%%%%%%%%%%%%%%%%%%%%%%%%%%%%%%%%%%%%%%%%%%%%%%%%%%%%%%%%%%%%%%%%%%%%%%%%%%
\begin{remark}
\label{rem:prelim:fact:strong}
%%%%%%%%%%%%%%%%%%%%%%%%%%%%%%%%%%%%%%%%%%%%%%%%%%%%%%%%%%%%%%%%%%%%%%%%%%%
In any category, a necessary and sufficient condition for
(epimorphisms, strong monomorphisms) to form a proper factorisation system
is that every morphism can be decomposed into an epimorphism followed by a
strong monomorphism. Dually, for the (strong epimorphisms, monomorphisms)
factorisation system.
%%%%%%%%%%%%%%%%%%%%%%%%%%%%%%%%%%%%%%%%%%%%%%%%%%%%%%%%%%%%%%%%%%%%%%%%%%%
\end{remark}
%%%%%%%%%%%%%%%%%%%%%%%%%%%%%%%%%%%%%%%%%%%%%%%%%%%%%%%%%%%%%%%%%%%%%%%%%%%

%%%%%%%%%%%%%%%%%%%%%%%%%%%%%%%%%%%%%%%%%%%%%%%%%%%%%%%%%%%%%%%%%%%%%%%%%%%
\begin{fullproof}
%%%%%%%%%%%%%%%%%%%%%%%%%%%%%%%%%%%%%%%%%%%%%%%%%%%%%%%%%%%%%%%%%%%%%%%%%%%
Let $\Q$ consist of the epimorphisms and
let $\M$ consist of the strong monos of $\cat C$.
It is clear that $\M=\{m\mid \forall e\in \Q, e\pitchfork m\}$.
We show that $\Q=\{e\mid \forall m\in \M, e\pitchfork m\}$.
We of course have $\Q \sle \{e\mid \forall m\in \M, e\pitchfork m\}$.

Let $f$ such that $f \pitchfork m$ for every strong mono $m$.
Factor $f$ as $f = m \comp e$ with $e \in \Q$ and $m \in \M$.
We thus have a diagonal filler $d$ as in
\[
\begin{tikzcd}
  \unit
  \arrow{d}[left]{f}
  \arrow[twoheadrightarrow]{r}{e}
& \unit
  \arrow[rightarrowtail]{d}{m}
\\
  \unit
  \arrow{r}[below]{\id}
  \arrow{ur}[description]{d}
& \unit
\end{tikzcd}
\]

But then $m$ is an iso since it is a both a mono and split epi.
Hence $d$ is an iso and $f$ is an epi.
%%%%%%%%%%%%%%%%%%%%%%%%%%%%%%%%%%%%%%%%%%%%%%%%%%%%%%%%%%%%%%%%%%%%%%%%%%%
\end{fullproof}
%%%%%%%%%%%%%%%%%%%%%%%%%%%%%%%%%%%%%%%%%%%%%%%%%%%%%%%%%%%%%%%%%%%%%%%%%%%

%%%%%%%%%%%%%%%%%%%%%%%%%%%%%%%%%%%%%%%%%%%%%%%%%%%%%%%%%%%%%%%%%%%%%%%%%%%
\begin{example}
\label{ex:prelim:fact:lfp}
%%%%%%%%%%%%%%%%%%%%%%%%%%%%%%%%%%%%%%%%%%%%%%%%%%%%%%%%%%%%%%%%%%%%%%%%%%%
Each lfp category has two
proper factorisation
systems, one given by taking as $\M$ the class of strong monomorphisms,
the other one given by taking as $\Q$ the class of strong epimorphisms
(combine~\cite[Proposition 1.61, \S 1D]{ar94book}
with Remark~\ref{rem:prelim:fact:strong}).
%%%%%%%%%%%%%%%%%%%%%%%%%%%%%%%%%%%%%%%%%%%%%%%%%%%%%%%%%%%%%%%%%%%%%%%%%%%
\end{example}
%%%%%%%%%%%%%%%%%%%%%%%%%%%%%%%%%%%%%%%%%%%%%%%%%%%%%%%%%%%%%%%%%%%%%%%%%%%

%%%%%%%%%%%%%%%%%%%%%%%%%%%%%%%%%%%%%%%%%%%%%%%%%%%%%%%%%%%%%%%%%%%%%%%%%%%
\begin{example}
\label{ex:prelim:fact:forest}
%%%%%%%%%%%%%%%%%%%%%%%%%%%%%%%%%%%%%%%%%%%%%%%%%%%%%%%%%%%%%%%%%%%%%%%%%%%
In the case of a presheaf category,
the two factorisation systems in Example~\ref{ex:prelim:fact:lfp}
coincide with the (epimorphism, monomorphism) factorisation system;
cf.\ e.g.~\cite[4.3.10.g]{borceux94vol1}.%
\footnote{\label{footnote:unique-fs}In fact, any elementary topos has exactly one proper factorisation system,
namely the (epimorphism, monomorphism) factorisation system.}
In particular, for the remaining of this paper, we consider the
categories $\Tree$ and $\Forest$ with the
(epimorphism, monomorphism) factorisation systems.
\end{example}

%%%%%%%%%%%%%%%%%%%%%%%%%%%%%%%%%%%%%%%%%%%%%%%%%%%%%%%%%%%%%%%%%%%%%%%%%%%
\begin{example}
\label{ex:prelim:fact:struct}
%%%%%%%%%%%%%%%%%%%%%%%%%%%%%%%%%%%%%%%%%%%%%%%%%%%%%%%%%%%%%%%%%%%%%%%%%%%
In $\Struct(\Sig)$, the proper factorisation systems 
in Example~\ref{ex:prelim:fact:lfp}
can be described concretely as follows.
The monomorphisms (respectively, epimorphisms)
are the morphisms that are sortwise injective (respectively, surjective).
The strong monomorphisms are the embeddings of structures,
i.e.\ the monomorphisms that \emph{reflect} (as well as preserve)
the interpretations of the relation symbols. 
Just recall that embeddings coincide with regular monos in $\Struct(\Sig)$,
while regular monos coincide with strong monos in the presence of
(epi, regular mono) factorisations; cf.\ e.g.\ \cite[0.5 and Remark~5.1]{ar94book}.%

In general, the  
factorisation systems (strong epi, mono) and (epi, strong mono)
on $\Struct(\Sig)$ are distinct.
For example, if $\sig$ is a mono-sorted purely relational signature and
$h\colon M\to N$ is a morphism in $\Struct(\sig)$,
consider the usual factorisation of $h$ through its set-theoretic image $h[M]$:
\[
M \to h[M] \to N.
\]

\noindent
Then the (strong epi, mono) factorisation of $h$ is obtained by defining
the interpretation in $h[M]$ of an arbitrary relation symbol $R\in\Rel(\sig)$
as $R^{h[M]}\coloneqq h[R^{M}]$,
whereas to obtain the (epi, strong mono) factorisation of $h$ we let
$R^{h[M]}$ be the restriction of $R^N$ to (the appropriate power of) $h[M]$.
%%%%%%%%%%%%%%%%%%%%%%%%%%%%%%%%%%%%%%%%%%%%%%%%%%%%%%%%%%%%%%%%%%%%%%%%%%%
\end{example}
%%%%%%%%%%%%%%%%%%%%%%%%%%%%%%%%%%%%%%%%%%%%%%%%%%%%%%%%%%%%%%%%%%%%%%%%%%%

The following are well-known properties of weak factorisation systems,
cf.~\cite{freyd1972categories} or~\cite{riehl08notes}.

%%%%%%%%%%%%%%%%%%%%%%%%%%%%%%%%%%%%%%%%%%%%%%%%%%%%%%%%%%%%%%%%%%%%%%%%%%%
\begin{lemma}
\label{lem:prelim:fact:base}
%%%%%%%%%%%%%%%%%%%%%%%%%%%%%%%%%%%%%%%%%%%%%%%%%%%%%%%%%%%%%%%%%%%%%%%%%%%
Let $(\Q,\M)$ be a weak factorisation system on $\cat C$.
\begin{enumerate}[(1)]
\item
\label{item:prelim:fact:base:comp}
$\Q$ and $\M$ are closed under composition.

\item
\label{item:prelim:fact:base:iso}
$\Q \cap \M = \{\text{isomorphisms} \}$.

\item
\label{item:prelim:fact:base:pullback}
The pullback of an $\M$-morphism along any morphism,
if it exists, is again in $\M$.
\setcounter{SplitEnum}{\value{enumi}}
\end{enumerate}

\noindent
Moreover, if $(\Q,\M)$ is proper, the following hold for all composable
morphisms $f, g$ of $\cat C$.
\begin{enumerate}[(1)]
\setcounter{enumi}{\value{SplitEnum}}

\item
\label{item:prelim:fact:base:quotient}
$g \comp f \in \Q$ implies $g \in \Q$.

\item
\label{item:prelim:fact:base:emb}
$g \comp f \in \M$ implies $f \in \M$.
\end{enumerate}
%%%%%%%%%%%%%%%%%%%%%%%%%%%%%%%%%%%%%%%%%%%%%%%%%%%%%%%%%%%%%%%%%%%%%%%%%%%
\end{lemma}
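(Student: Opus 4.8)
All five assertions follow by elementary diagram chases from the lifting-property characterisation of $\Q$ and $\M$ in Definition~\ref{def:weak-f-s}(ii), so the plan is to establish the general claims~(1)--(3) first and then isolate the precise use of properness in~(4)--(5).

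For~(1) I would argue that any class of the form $\{e \mid \forall m \in \M,\ e \pitchfork m\}$ is closed under composition: given $e_1, e_2 \in \Q$ and a square presenting a lifting problem for $e_2 \comp e_1$ against some $m \in \M$, I first solve the lifting problem for $e_1$ against $m$, obtaining a filler $d_1$, and then feed $d_1$ as the upper edge of a lifting problem for $e_2$ against $m$; the resulting filler solves the original problem. Closure of $\M$ under composition is dual. For~(2), isomorphisms lift against every morphism (the filler being $u \comp f^{-1}$), so they lie in both $\Q$ and $\M$; conversely, if $f \in \Q \cap \M$ then $f \pitchfork f$, and the filler of the square whose horizontal edges are identities is a two-sided inverse of $f$.

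For~(3) I take $m \in \M$, an arbitrary $g$, and the pullback square of $m$ along $g$, with projections $m'$ and $p$. To show $m' \in \M$ it suffices to check $e \pitchfork m'$ for every $e \in \Q$: any lifting problem for $e$ against $m'$ becomes, after postcomposition with $p$ and $g$, a lifting problem for $e$ against $m$; its filler together with the bottom edge of the original square induces a map into the pullback by the universal property, and joint monicity of $(m', p)$ identifies this map as the required diagonal.

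The substantive use of properness occurs only in~(4) and~(5). For~(4), assuming $g \comp f \in \Q$ and fixing $m \in \M$, a lifting problem for $g$ against $m$, precomposed with $f$, becomes a lifting problem for $g \comp f$ against $m$; its filler $d$ satisfies $m \comp d = v$ but a priori only $d \comp (g \comp f) = u \comp f$. Here I invoke that $m$ is a \emph{monomorphism}: since $m \comp (d \comp g) = m \comp u$, I may cancel $m$ to conclude $d \comp g = u$, so that $d$ solves the original problem and hence $g \in \Q$. Assertion~(5) is perfectly dual, the \emph{epimorphism} half of properness supplying the cancellation: the filler $d$ satisfies $d \comp e = u$, and cancelling the epimorphism $e \in \Q$ from $f \comp d \comp e = v \comp e$ yields $f \comp d = v$. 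The only point demanding care --- and the reason properness is hypothesised --- is precisely this cancellation step, as~(4) and~(5) fail for general weak factorisation systems; accordingly I would take care never to assume that $g$ (respectively $f$) itself lies in $\Q$ (respectively $\M$).
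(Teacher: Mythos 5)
Your proof is correct in all five parts: the composition-of-fillers argument for~(\ref{item:prelim:fact:base:comp}), the identity-square argument for~(\ref{item:prelim:fact:base:iso}), the pullback universal-property argument for~(\ref{item:prelim:fact:base:pullback}), and the cancellation arguments for~(\ref{item:prelim:fact:base:quotient}) and~(\ref{item:prelim:fact:base:emb}) --- where you correctly isolate that properness (monicity of $\M$-maps, epicity of $\Q$-maps) is used exactly and only for the cancellation step. The paper itself gives no proof of this lemma, deferring to standard references (Freyd; Riehl's notes), and your lifting-property arguments are precisely the standard ones found there, so there is nothing to reconcile.
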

%%%%%%%%%%%%%%%%%%%%%%%%%%%%%%%%%%%%%%%%%%%%%%%%%%%%%%%%%%%%%%%%%%%%%%%%%%%

Let $\C$ be a category equipped with a proper factorisation system $(\Q,\M)$.
We shall refer to $\M$-morphisms as \emph{embeddings} and denote them by $\emb$.
$\Q$-morphisms will be referred to as \emph{quotients} and denoted by~$\epi$. 

In the same way that one usually defines the poset of subobjects of a given object $X\in\C$, we can define the poset $\Emb{X}$ of $\M$-subobjects of $X$. Given embeddings $m\colon S\emb X$ and $n\colon T\emb X$, let us say that $m\trianglelefteq n$ provided there is a morphism $i\colon S\to T$ such that $m=n\circ i$, as displayed below.
\[\begin{tikzcd}
S \arrow[rightarrowtail]{r}{m} \arrow[dashed, swap]{d}{i} & X \\
T \arrow[rightarrowtail]{ur}[swap]{n} & {}
\end{tikzcd}\] 
(Note that, if it exists, $i$ is necessarily an embedding by Lemma~\ref{lem:prelim:fact:base}.\ref{item:prelim:fact:base:emb}.)
This yields a preorder on the class of all embeddings with codomain $X$. The symmetrization $\sim$ of $\trianglelefteq$ can be characterised as follows: $m\sim n$ if, and only if, there exists an isomorphism $i\colon S\to T$ such that $m=n\circ i$. Let $\Emb{X}$ be the class of $\sim$-equivalence classes of embeddings with codomain $X$, equipped with the natural partial order $\leq$ induced by~$\trianglelefteq$, and observe that the poset $\Emb{X}$ is small whenever $\C$ is well-powered. We systematically represent a $\sim$-equivalence class of embeddings by any of its representatives. 

%%%%%%%%%%%%%%%%%%%%%%%%%%%%%%%%%%%%%%%%%%%%%%%%%%%%%%%%%%%%%%%%%%%%%%%%%%%
\begin{example}
%%%%%%%%%%%%%%%%%%%%%%%%%%%%%%%%%%%%%%%%%%%%%%%%%%%%%%%%%%%%%%%%%%%%%%%%%%%
Consider $\Struct(\Sig)$ with the (epis, strong monos) factorisation system.
For all $M\in \Struct(\Sig)$,
$\Emb{M}$ is isomorphic to the poset of \emph{substructures} of $M$
(i.e., subsets of $M$ equipped with the induced relations)
ordered by set-theoretic inclusion.
%%%%%%%%%%%%%%%%%%%%%%%%%%%%%%%%%%%%%%%%%%%%%%%%%%%%%%%%%%%%%%%%%%%%%%%%%%%
\end{example}
%%%%%%%%%%%%%%%%%%%%%%%%%%%%%%%%%%%%%%%%%%%%%%%%%%%%%%%%%%%%%%%%%%%%%%%%%%%

For any morphism $f\colon X\to Y$ in $\C$ and embedding $m\colon S\emb X$, we can consider the $(\Q,\M)$ factorisation $S\epi \exists_f S \emb Y$ of $f\circ m$. This yields a monotone map ${\exists_f\colon \Emb{X}\to\Emb{Y}}$ sending $m$ to the embedding $\exists_f S\emb Y$. (Note that the map $\exists_f$ is well-defined because factorisations are unique up to isomorphism.) In particular, if $f$ is an embedding,
we have $\exists_f(m) = f \comp m$ for all $m\in \Emb{X}$.

For the following observation, see e.g.\ \cite[Lemma 2.7(a)]{ar21arboreal}.

%%%%%%%%%%%%%%%%%%%%%%%%%%%%%%%%%%%%%%%%%%%%%%%%%%%%%%%%%%%%%%%%%%%%%%%%%%%
\begin{lemma}
\label{lem:prelim:fact:emb-order-embedding}
%%%%%%%%%%%%%%%%%%%%%%%%%%%%%%%%%%%%%%%%%%%%%%%%%%%%%%%%%%%%%%%%%%%%%%%%%%%
Let $(\Q,\M)$ be a proper factorisation system on $\cat C$.
Given an embedding $f \colon X \emb Y$ in $\cat C$,
the function $\exists_f \colon \Emb{X} \to \Emb{Y}$
is an order-embedding.
%%%%%%%%%%%%%%%%%%%%%%%%%%%%%%%%%%%%%%%%%%%%%%%%%%%%%%%%%%%%%%%%%%%%%%%%%%%
\end{lemma}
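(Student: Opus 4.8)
We have a proper factorisation system $(\Q,\M)$ on $\cat C$, an embedding $f\colon X\emb Y$, and the map $\exists_f\colon\Emb{X}\to\Emb{Y}$. Since $f$ is an embedding, the preceding paragraph already tells us that $\exists_f(m)=f\circ m$ for every $m\in\Emb{X}$. We must show $\exists_f$ is an order-embedding: it is monotone (as remarked before the lemma), and we need that it reflects the order, i.e.\ $\exists_f(m)\leq\exists_f(n)$ implies $m\leq n$.

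The plan is to reduce everything to the explicit description $\exists_f(m)=f\circ m$ and then exploit that $f$ is a monomorphism (being in $\M$ of a proper system) together with the cancellation property of Lemma~\ref{lem:prelim:fact:base}.\ref{item:prelim:fact:base:emb}.

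\begin{proof}
Since $f$ is an embedding, we have $\exists_f(m)=f\comp m$ for every $m\in\Emb X$, as observed above. Monotonicity of $\exists_f$ has already been noted, so it remains to prove that $\exists_f$ reflects the order.

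Represent classes by embeddings $m\colon S\emb X$ and $n\colon T\emb X$, and suppose $\exists_f(m)\leq\exists_f(n)$ in $\Emb Y$, i.e.\ $f\comp m\trianglelefteq f\comp n$. By definition of $\trianglelefteq$ there is a morphism $j\colon S\to T$ with
\[
f\comp m = (f\comp n)\comp j = f\comp(n\comp j).
\]
Because $(\Q,\M)$ is proper, $f\in\M$ is a monomorphism, so $f$ may be cancelled on the left, yielding $m = n\comp j$. This exhibits $m\trianglelefteq n$, witnessed by $j$; by Lemma~\ref{lem:prelim:fact:base}.\ref{item:prelim:fact:base:emb} the map $j$ is itself an embedding. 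Passing to $\sim$-equivalence classes gives $m\leq n$ in $\Emb X$, as required.
\end{proof}

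**Main obstacle.** There is essentially no hard step here: the entire argument hinges on the left-cancellability of the monomorphism $f$, which is available precisely because the factorisation system is \emph{proper}. The only point requiring a little care is to confirm that $\exists_f(m)=f\comp m$ holds on the nose (so that the factorisation machinery collapses to plain composition when $f$ is an embedding); this is exactly the remark recorded just before the lemma, and it lets us bypass any reasoning about the general $(\Q,\M)$-factorisation of $f\comp m$.
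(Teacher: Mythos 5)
Your proof is correct and is essentially the standard argument: the paper itself does not prove this lemma but defers to \cite[Lemma 2.7(a)]{ar21arboreal}, whose proof is exactly this reduction to $\exists_f(m)=f\comp m$ followed by left-cancellation of the monomorphism $f$. Your argument also uses only the hypotheses the paper later identifies as sufficient (in the remark following Lemma~\ref{lem:prelim:fact:epi-order-embedding}), namely that $\M$ consists of monomorphisms, is closed under composition, and satisfies Lemma~\ref{lem:prelim:fact:base}.\ref{item:prelim:fact:base:emb}, so it is fully consistent with the paper's intent.
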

%%%%%%%%%%%%%%%%%%%%%%%%%%%%%%%%%%%%%%%%%%%%%%%%%%%%%%%%%%%%%%%%%%%%%%%%%%%

In this paper, we mostly deal with factorisation systems
satisfying an additional stability condition:

%%%%%%%%%%%%%%%%%%%%%%%%%%%%%%%%%%%%%%%%%%%%%%%%%%%%%%%%%%%%%%%%%%%%%%%%%%%
\begin{definition}[Stable factorisation system]
%%%%%%%%%%%%%%%%%%%%%%%%%%%%%%%%%%%%%%%%%%%%%%%%%%%%%%%%%%%%%%%%%%%%%%%%%%%
A proper factorisation system $(\Q,\M)$ is \emph{stable}
if for every $e\in \Q$ and $m\in\M$ with common codomain,
the pullback of~$e$ along~$m$ exists and is a quotient.%
\footnote{Our terminology follows~\cite{ar21arboreal}, but differs
from the usual one, which requires the pullback of $e \in \Q$
along \emph{any} morphism to exist and to be in $\Q$.}
%%%%%%%%%%%%%%%%%%%%%%%%%%%%%%%%%%%%%%%%%%%%%%%%%%%%%%%%%%%%%%%%%%%%%%%%%%%
\end{definition}
%%%%%%%%%%%%%%%%%%%%%%%%%%%%%%%%%%%%%%%%%%%%%%%%%%%%%%%%%%%%%%%%%%%%%%%%%%%

%%%%%%%%%%%%%%%%%%%%%%%%%%%%%%%%%%%%%%%%%%%%%%%%%%%%%%%%%%%%%%%%%%%%%%%%%%%
\begin{example}
\label{ex:presheaf-cats-stable-fact-sys}
%%%%%%%%%%%%%%%%%%%%%%%%%%%%%%%%%%%%%%%%%%%%%%%%%%%%%%%%%%%%%%%%%%%%%%%%%%%
In a presheaf category,
(epimorphisms, monomorphisms) form a stable factorisation system.
Just observe that the corresponding property holds in $\Set$,
and limits and colimits in presheaf categories
are computed pointwise.
In particular, the (epimorphism, monomorphism) factorisation
systems on $\Tree$ and $\Forest$ are stable.
%%%%%%%%%%%%%%%%%%%%%%%%%%%%%%%%%%%%%%%%%%%%%%%%%%%%%%%%%%%%%%%%%%%%%%%%%%%
\end{example}
%%%%%%%%%%%%%%%%%%%%%%%%%%%%%%%%%%%%%%%%%%%%%%%%%%%%%%%%%%%%%%%%%%%%%%%%%%%

%%%%%%%%%%%%%%%%%%%%%%%%%%%%%%%%%%%%%%%%%%%%%%%%%%%%%%%%%%%%%%%%%%%%%%%%%%%
\begin{example}
%%%%%%%%%%%%%%%%%%%%%%%%%%%%%%%%%%%%%%%%%%%%%%%%%%%%%%%%%%%%%%%%%%%%%%%%%%%
The factorisation system (epimorphisms, strong monomorphisms) on $\Struct(\Sig)$
is stable.
Akin to Example~\ref{ex:presheaf-cats-stable-fact-sys},
this follows from the fact that limits and epimorphisms in $\Struct(\Sig)$
are computed in $\Set$, cf.\ e.g.~\cite[Remark 5.1]{ar94book}.
%%%%%%%%%%%%%%%%%%%%%%%%%%%%%%%%%%%%%%%%%%%%%%%%%%%%%%%%%%%%%%%%%%%%%%%%%%%
\end{example}
%%%%%%%%%%%%%%%%%%%%%%%%%%%%%%%%%%%%%%%%%%%%%%%%%%%%%%%%%%%%%%%%%%%%%%%%%%%

If $(\Q,\M)$ is 
a stable factorisation system,
then each quotient $f \colon X \epi Y$ induces a monotone function
$\ladj f \colon \Emb{Y} \to \Emb{X}$
that takes $n \colon S \emb Y$ to its pullback along $f$.
The map $\ladj f$ is right adjoint to $\exists_f \colon \Emb{X} \to \Emb{Y}$ (see e.g.\
\cite[Lemma 2.6]{ar21arboreal} for a proof).
Moreover, we have the following observation (see e.g.\ \cite[Lemma 2.7(b)]{ar21arboreal}):

%%%%%%%%%%%%%%%%%%%%%%%%%%%%%%%%%%%%%%%%%%%%%%%%%%%%%%%%%%%%%%%%%%%%%%%%%%%
\begin{lemma}
\label{lem:prelim:fact:epi-order-embedding}
%%%%%%%%%%%%%%%%%%%%%%%%%%%%%%%%%%%%%%%%%%%%%%%%%%%%%%%%%%%%%%%%%%%%%%%%%%%
Let $(\Q,\M)$ be
a stable
factorisation system on~$\cat C$.
For any quotient ${f \colon X \epi Y}$ in $\cat C$,
the map $\ladj f \colon \Emb{Y} \to \Emb{X}$
is an order-embedding.
%%%%%%%%%%%%%%%%%%%%%%%%%%%%%%%%%%%%%%%%%%%%%%%%%%%%%%%%%%%%%%%%%%%%%%%%%%%
\end{lemma}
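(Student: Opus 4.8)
The plan is to exploit the adjunction $\exists_f \dashv \ladj f$ between the posets $\Emb{X}$ and $\Emb{Y}$ recorded just above the statement. Since $\ladj f$ is a right adjoint, it is automatically monotone; to upgrade this to an order-embedding it suffices to show that $\ladj f$ is \emph{order-reflecting}, and the cleanest route is to prove that the counit of the adjunction is an equality, i.e.\ that
\[
\exists_f(\ladj f(n)) = n
\qquad\text{for every } n \in \Emb{Y}.
\]
Once this identity is in hand, applying the monotone map $\exists_f$ to any inequality $\ladj f(m) \leq \ladj f(n)$ yields $m \leq n$, so $\ladj f$ reflects the order; by antisymmetry it is in particular injective, hence an order-embedding.

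To verify the displayed identity, I would fix an embedding $n \colon S \emb Y$ and form the pullback of $n$ along the quotient $f$:
\[
\begin{tikzcd}
P \arrow[twoheadrightarrow]{r}{g} \arrow[rightarrowtail]{d}[swap]{n'}
& S \arrow[rightarrowtail]{d}{n}
\\
X \arrow[twoheadrightarrow]{r}[swap]{f}
& Y
\end{tikzcd}
\]
By definition $n' = \ladj f(n)$, and $n'$ is an embedding by Lemma~\ref{lem:prelim:fact:base}.\ref{item:prelim:fact:base:pullback}, being the pullback of the $\M$-morphism $n$ along $f$. Because the factorisation system is \emph{stable}, the morphism $g$, which is the pullback of $f \in \Q$ along the embedding $n \in \M$, again lies in $\Q$; thus $g \colon P \epi S$ is a quotient. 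Now $\exists_f(n')$ is by definition the $\M$-part of the $(\Q,\M)$-factorisation of $f \comp n'$. But the pullback square gives $f \comp n' = n \comp g$ with $g \in \Q$ and $n \in \M$, so this is already a $(\Q,\M)$-factorisation of $f \comp n'$. By uniqueness of factorisations up to isomorphism, $\exists_f(n') = n$ in $\Emb{Y}$, which is exactly the required identity.

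The only genuinely delicate point is the stability step: one must know that pulling back the quotient $f$ along the embedding $n$ produces a morphism $g$ that again lies in $\Q$. This is precisely the hypothesis that $(\Q,\M)$ is stable, and it is indispensable—without it the composite $n \comp g$ need not be a $(\Q,\M)$-factorisation and the argument collapses. Everything else, namely the well-definedness of $\exists_f$ and $\ladj f$ on $\sim$-equivalence classes of embeddings and the uniqueness of factorisations, is routine and already packaged in the discussion preceding the statement together with Lemma~\ref{lem:prelim:fact:base}.
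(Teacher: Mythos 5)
Your proof is correct and is essentially the argument the paper relies on: the paper gives no inline proof of Lemma~\ref{lem:prelim:fact:epi-order-embedding} but defers to \cite[Lemma 2.7(b)]{ar21arboreal}, where the statement is obtained exactly as you do, by using stability to recognise $f \comp \ladj f(n) = n \comp g$ (with $g \in \Q$ by stability and $n \in \M$) as a $(\Q,\M)$-factorisation, concluding $\exists_f(\ladj f(n)) = n$ by uniqueness of factorisations, and then reflecting inequalities through the monotone map $\exists_f$. No gaps.
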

%%%%%%%%%%%%%%%%%%%%%%%%%%%%%%%%%%%%%%%%%%%%%%%%%%%%%%%%%%%%%%%%%%%%%%%%%%%

We note in passing that Lemma~\ref{lem:prelim:fact:epi-order-embedding} depends
on both classes of the factorisation system $(\Q,\M)$,
whereas Lemma~\ref{lem:prelim:fact:emb-order-embedding}
only requires a class of monomorphisms $\M$ that is closed
under compositions and satisfies the condition in
Lemma~\ref{lem:prelim:fact:base}.\ref{item:prelim:fact:base:emb}.

%%%%%%%%%%%%%%%%%%%%%%%%%%%%%%%%%%%%%%%%%%%%%%%%%%%%%%%%%%%%%%%%%%%%%%%%%%%
\section{Back-and-forth equivalence}
\label{sec:path}
%%%%%%%%%%%%%%%%%%%%%%%%%%%%%%%%%%%%%%%%%%%%%%%%%%%%%%%%%%%%%%%%%%%%%%%%%%%

In this section, 
we recall from~\cite{ar21icalp,ar21arboreal}
the general notion of back-and-forth game
which is the main subject of this paper.
These games abstract usual notions of model comparison games,
such as the classical Ehrenfeucht-Fraïssé game.
Their definition is possible in all categories equipped
with a stable factorisation system and satisfying some mild
additional assumptions, embodied in the concept of \emph{wooded categories}.

We begin in~\S\ref{sec:path:ef}
by revisiting 
the classical Ehrenfeucht-Fraïssé game from a point of view 
well suited to our axiomatic categorical treatment.

Then, in~\S\ref{sec:path:wooded-and-equivalence} we introduce
wooded categories and discuss the notion of back-and-forth equivalence in a
wooded category $\C$.
In many situations of interest, equivalence of structures in
various logic fragments can be captured by transferring this equivalence relation
on $\C$ to another category $\E$, typically of the form $\Struct(\sig)$,
via a right adjoint functor $R\colon \E\to \C$
(cf.\ e.g.~\cite{adw17lics,as21jlc}).
The ensuing relation on $\E$,
referred to as \emph{$R$(-back-and-forth)-equivalence},
is introduced in \S\ref{sec:path:adjunction} and will play a central role throughout.

Finally, 
in \S\ref{sec:path:results} we state our main result (Theorem~\ref{thm:path:main})
which identifies the expressive power of $R$-equivalence in the case of
\emph{finitely accessible wooded adjunctions}.

%%%%%%%%%%%%%%%%%%%%%%%%%%%%%%%%%%%%%%%%%%%%%%%%%%%%%%%%%%%%%%%%%%%%%%%%%%%
\subsection{From concrete to abstract model comparison games}
\label{sec:path:ef}
%%%%%%%%%%%%%%%%%%%%%%%%%%%%%%%%%%%%%%%%%%%%%%%%%%%%%%%%%%%%%%%%%%%%%%%%%%%
Let $\sig$ be a (mono-sorted) purely relational signature.
We recall the classical Ehrenfeucht-Fraïssé game $\EF_\omega(M,N)$
played on $\sig$-structures~$M$ and~$N$~\cite{Ehr1960}.
The game is played by two players, \emph{Spoiler} and \emph{Duplicator}.
Positions in the game have the form $(\vec a;\vec b)$
where $\vec a = a_1,\dots,a_n$
and $\vec b = b_1,\dots,b_n$
are finite sequences of elements from $M$ and $N$, respectively.
Such a position is winning (for Duplicator)
if the assignment
\begin{equation}
\label{eq:partial-iso}
\begin{array}{l !{\quad} l}
  a_i \mapsto b_i,
& i = 1,\dots,n
\end{array}
\end{equation}

\noindent
is a \emph{partial isomorphism}, that is,
if this assignment
is an isomorphism between
the induced substructures of $M$ and $N$ with universes
$\{a_{1},\ldots,a_{n}\}$ and $\{b_{1},\ldots,b_{n}\}$, respectively.
Note
that~\eqref{eq:partial-iso} defines a partial isomorphism
precisely when
for every atomic formula
$x_1,\dots,x_n \sorting \atom$ in $\sig$,
we have
\[
  \vec a \in \I{\vec x \mid \atom}_M
  ~~\longiff~~
  \vec b \in \I{\vec x \mid \atom}_N.
\]

From a winning position $(\vec a;\vec b)$,
a round of the game proceeds as follows:
\begin{itemize}

\item
Spoiler chooses an element from one of the two structures, say $a' \in M$;

\item
Duplicator responds by choosing an element from the other structure, say $b' \in N$.
\end{itemize}

\noindent
If the new position $(\vec a, a'; \vec b, b')$ is not winning, 
then the game ends and is won by Spoiler.
Otherwise, the round is won by Duplicator and
the game continues from the new position $(\vec a, a'; \vec b, b')$.
Duplicator wins the game $\EF_\omega(M,N)$
with initial position $(\vec a; \vec b)$
if Duplicator has a strategy from $(\vec a;\vec b)$
that is winning after $t$ rounds for all integers $t \geq 0$.

For any integer $k \geq 0$, the \emph{$k$-round} Ehrenfeucht-Fraïssé game $\EF_k(M,N)$
is the restriction of $\EF_\omega(M,N)$ to plays of length $\leq k$.%
\footnote{An equivalent way to describe (winning strategies in) Ehrenfeucht-Fraïssé games
is in terms of back-and-forth systems of partial isomorphisms,
see e.g.\ \cite{Fra1954}, \cite[\S 2.3]{ef99book} or \cite[pp.~98--99]{hodges93book}.}

%%%%%%%%%%%%%%%%%%%%%%%%%%%%%%%%%%%%%%%%%%%%%%%%%%%%%%%%%%%%%%%%%%%%%%%%%%%
\begin{example}
\label{ex:path:games}
%%%%%%%%%%%%%%%%%%%%%%%%%%%%%%%%%%%%%%%%%%%%%%%%%%%%%%%%%%%%%%%%%%%%%%%%%%%
Assume that $\sig$ consists only of the binary relation $\Lt$.
Let $M = (M,<_M)$ and $N = (N,<_N)$ be dense linear orders without endpoints
(e.g.\ $(\QQ,<)$ and $(\RR,<)$).
Then from any winning position $(\vec a;\vec b)$ in $\EF_\omega(M,N)$,
Duplicator can respond to Spoiler's move so that the new position
$(\vec a,a' ; \vec b,b')$
is again winning.
Hence, Duplicator wins the game $\EF_\omega(M,N)$
with initial position given by the empty sequences
(see e.g.~\cite[\S 14.2]{rosenstein82book}).
%%%%%%%%%%%%%%%%%%%%%%%%%%%%%%%%%%%%%%%%%%%%%%%%%%%%%%%%%%%%%%%%%%%%%%%%%%%
\end{example}
%%%%%%%%%%%%%%%%%%%%%%%%%%%%%%%%%%%%%%%%%%%%%%%%%%%%%%%%%%%%%%%%%%%%%%%%%%%

Our axiomatic categorical approach to back-and-forth equivalence
relies on the following two observations.
First, 
provided $\sig$ is finite,
partial isomorphisms can be seen as pairs of embeddings
whose common domain is a finitely presentable $\sig$-structure.
A further insight, put forward in \cite{adw17lics} and expounded
in~\cite{as18csl,as21jlc},
is that collections of plays in a game form themselves $\sig$-structures.
We now elaborate on these two ideas.

%%%%%%%%%%%%%%%%%%%%%%%%%%%%%%%%%%%%%%%%%%%%%%%%%%%%%%%%%%%%%%%%%%%%%%%%%%%
\subsubsection{Positions as pairs of embeddings}
%%%%%%%%%%%%%%%%%%%%%%%%%%%%%%%%%%%%%%%%%%%%%%%%%%%%%%%%%%%%%%%%%%%%%%%%%%%
%First,
Assume that $\sig$ is finite and
let $\vec x \sorting \varphi$ be a (finite) conjunction of atomic formulae. 
By Fact~\ref{fact:lfp:struct},
the tuples $\vec a \in \I{\vec x \mid \varphi}_M$
correspond bijectively to
homomorphisms
\[
\begin{array}{*{5}{l}}
  h_{\vec a}
& \colon
& \FG{\vec x \mid \varphi}
& \longto
& M
\end{array}
\]

\noindent
Further, $h_{\vec a}$ is an embedding of $\sig$-structures exactly
when $\vec x \sorting \varphi$
is the conjunction of all atomic formulae
$\vec x \sorting \atom$ such that $\vec a \in \I{\vec x \mid \atom}_M$.
Hence, since $\sig$ is finite, tuples $\vec a \in M$
are in bijection with
substructures $\FG{\vec x \mid \varphi} \in \Emb{M}$
(i.e.\ isomorphism classes of embeddings 
$m \colon \FG{\vec x \mid \varphi} \emb M$,
in the sense of~\S\ref{sec:prelim:facto}),
where $\varphi$ ranges over (finite) conjunctions of atomic formulae.

%%%%%%%%%%%%%%%%%%%%%%%%%%%%%%%%%%%%%%%%%%%%%%%%%%%%%%%%%%%%%%%%%%%%%%%%%%%
\begin{remark}
\label{rem:path:infinite-sig-representability}
%%%%%%%%%%%%%%%%%%%%%%%%%%%%%%%%%%%%%%%%%%%%%%%%%%%%%%%%%%%%%%%%%%%%%%%%%%%
When $\sig$ is infinite,
given $\vec a \in M$ there may be no embedding $\FG{\vec x \mid \varphi} \emb M$,
with $\varphi$ finite, taking $\vec x$ to $\vec a$. For instance, suppose that $\sig$ contains a unary relation symbol $R_i$ for each positive integer $i$, and let $M$ be any $\sig$-structure containing an element $a$ such that $a\in R_i^M$ for all $i$. Then there is no embedding $\FG{x \mid \varphi} \emb M$ taking $x$ to $a$, as such a $\varphi$ can only account for finitely many of the relations satisfied by $a$.
%%%%%%%%%%%%%%%%%%%%%%%%%%%%%%%%%%%%%%%%%%%%%%%%%%%%%%%%%%%%%%%%%%%%%%%%%%%
\end{remark}
%%%%%%%%%%%%%%%%%%%%%%%%%%%%%%%%%%%%%%%%%%%%%%%%%%%%%%%%%%%%%%%%%%%%%%%%%%%

We shall represent a position $(\vec a, \vec b)$ in the game $\EF_\omega(M,N)$ by
the corresponding pair of embeddings $(m_{\vec a}, n_{\vec b})$, as displayed below.
\[
\begin{tikzcd}
& \FG{\vec x \mid \varphi}
  \arrow[rightarrowtail,
    start anchor={[xshift=+2pt, yshift=+3pt]south west}
  ]{dl}[above, xshift=-5pt]{m_{\vec a}}
& \FG{\vec x \mid \psi}
  \arrow[rightarrowtail,
    start anchor={[xshift=-3pt, yshift=+3pt]south east}
  ]{dr}{n_{\vec b}}
\\
  M
&
&
& N
\end{tikzcd}
\]

\noindent
Note that $(\vec a, \vec b)$ is winning 
exactly when
the domains of the embeddings $m_{\vec a}$ and $n_{\vec a}$ are isomorphic
as objects of $\Struct(\sig)$.

This results in a reformulation of the game $\EF_\omega(M,N)$
in which positions are pairs of embeddings $(m,n)$,
as depicted on the left-hand side below.
\[
\begin{array}{l !{\quad}|!{\quad} l}

\begin{tikzcd}[column sep=small]
& \FG{\vec x \mid \varphi}
  \arrow[rightarrowtail,
    start anchor={[xshift=+4pt, yshift=+3pt]south west}
]{dl}[above, xshift=-5pt]{m}
& \FG{\vec x \mid \psi}
  \arrow[rightarrowtail,
    start anchor={[xshift=-4pt, yshift=+3pt]south east}
]{dr}{n}
\\
  M
&
&
& N
\end{tikzcd}

&

\begin{tikzcd}[column sep=scriptsize]
& \FG{\vec x \mid \varphi}
  \arrow[rightarrowtail,
    start anchor={[xshift=+4pt, yshift=+3pt]south west}
  ]{dl}[above, xshift=-5pt]{m}
  \arrow[phantom]{r}[description]{\cong}
  \arrow[rightarrowtail]{d}
& \FG{\vec x \mid \psi}
  \arrow[rightarrowtail,
    start anchor={[xshift=-4pt, yshift=+3pt]south east}
  ]{dr}{n}
  \arrow[rightarrowtail, dashed]{d}
\\
  M
& \FG{\vec x,x \mid \varphi'}
  \arrow[rightarrowtail]{l}{m'}
& \FG{\vec x,x \mid \psi'}
  \arrow[rightarrowtail, dashed]{r}[below]{n'}
& N
\end{tikzcd}

\end{array}
\]

\noindent
Such a position $(m,n)$ is winning when $\dom(m) \cong \dom(n)$.
If $(m,n)$ is not winning, then the game stops and is won by Spoiler.
Otherwise,
either Spoiler
chooses some ${m' \colon \FG{\vec x,x' \mid \varphi'} \emb M}$
through which $m$ factors,
and Duplicator responds 
by choosing some ${n' \colon \FG{\vec x,x' \mid \psi'} \emb N}$
through which $n$ factors
(see the right-hand diagram above),
or the other way around. 
The game then proceeds from the new position $(m',n')$.

%%%%%%%%%%%%%%%%%%%%%%%%%%%%%%%%%%%%%%%%%%%%%%%%%%%%%%%%%%%%%%%%%%%%%%%%%%%
\subsubsection{Plays as structures}
%%%%%%%%%%%%%%%%%%%%%%%%%%%%%%%%%%%%%%%%%%%%%%%%%%%%%%%%%%%%%%%%%%%%%%%%%%%
Consider now the first projection of 
a finite play in the game $\EF_\omega(M,N)$,
say
\begin{equation*}
\label{eq:path:ef:projplay}
\begin{tikzcd}
   \FG{x_1 \mid \varphi_1}
   \arrow[rightarrowtail]{r}
   \arrow[rightarrowtail, bend right=5,
      start anchor={[xshift=-1pt, yshift=-2pt]south east},
      end anchor={[xshift=0pt, yshift=4pt]south west}
   ]{drrrr}%[below]{m_1}
& \FG{x_1,x_2 \mid \varphi_2}
   \arrow[rightarrowtail]{r}
   \arrow[rightarrowtail, bend right=5,
      start anchor={[xshift=-1pt, yshift=1.5pt]south east}
      %end anchor={[xshift=0pt, yshift=-3pt]north west}
   ]{drrr}%[above]{m_2}
& \cdots
  \arrow[rightarrowtail]{r}
  %\arrow[rightarrowtail]{dr}
& \FG{x_1,x_2,\dots,x_k \mid \varphi_k}
  \arrow[rightarrowtail,
      start anchor={[xshift=-3pt, yshift=+2pt]south east}
  ]{dr}%[above, xshift=5pt]{m_k}
\\
&
&
&
& M
\end{tikzcd}
\end{equation*}

The set $M^+$ of non-empty finite sequences over $M$
can be seen as a $\sig$-structure (see Example~\ref{ex:path:bisim-FOk-equivalence} for more details)
and is equipped with a natural
forest order, namely the prefix order.
The sequence of embeddings
in the top row above can be identified with a substructure of $M^+$,
whose underlying set consists of the images in $M$ of the finite sequences
$x_1,\, x_1 x_2,\, \ldots,\, x_1 x_2 \cdots x_k$;
this substructure forms a path, i.e.\ a linear order, in
the forest order of $M^+$.

The idea of viewing collections of plays in a game as $\sig$-structures is
 a key insight from~\cite{adw17lics},
which led, via~\cite{as18csl,as21jlc},
to the arboreal setting of~\cite{ar21icalp,ar21arboreal}.
Arboreal categories provide a
description of back-and-forth games based
on an axiomatic notion of \emph{path}, which abstracts the idea 
of linearly ordered finite $\sig$-structures.
As we shall recall in~\S\ref{sec:path:path-wooded},
paths can 
be defined in any (well-powered)
category equipped with a proper factorisation system.

Before proceeding, let us mention that
the arboreal notion of path
allows one to capture games 
with more complex or constrained plays
than in Ehrenfeucht-Fraïssé games,
such as pebble games and modal bisimulation games.%
\footnote{Actually, the above reformulation of Ehrenfeucht-Fraïssé games
hides some subtleties on equality
that will be addressed in
Example~\ref{ex:path:bisim-FOk-equivalence} below.}
See~\cite{AR2024} for further examples.

%%%%%%%%%%%%%%%%%%%%%%%%%%%%%%%%%%%%%%%%%%%%%%%%%%%%%%%%%%%%%%%%%%%%%%%%%%%
\subsection{Wooded categories and back-and-forth equivalence}
\label{sec:path:wooded-and-equivalence}
%%%%%%%%%%%%%%%%%%%%%%%%%%%%%%%%%%%%%%%%%%%%%%%%%%%%%%%%%%%%%%%%%%%%%%%%%%%

%%%%%%%%%%%%%%%%%%%%%%%%%%%%%%%%%%%%%%%%%%%%%%%%%%%%%%%%%%%%%%%%%%%%%%%%%%%
\subsubsection{Paths and wooded categories}
\label{sec:path:path-wooded}
%%%%%%%%%%%%%%%%%%%%%%%%%%%%%%%%%%%%%%%%%%%%%%%%%%%%%%%%%%%%%%%%%%%%%%%%%%%
Let us fix an arbitrary well-powered category $\C$
endowed with a proper factorisation system $(\Q,\M)$.

%%%%%%%%%%%%%%%%%%%%%%%%%%%%%%%%%%%%%%%%%%%%%%%%%%%%%%%%%%%%%%%%%%%%%%%%%%%
\begin{definition}[Path and path embedding] 
\label{def:path:path}
%%%%%%%%%%%%%%%%%%%%%%%%%%%%%%%%%%%%%%%%%%%%%%%%%%%%%%%%%%%%%%%%%%%%%%%%%%%
An object $P$ of $\C$ is a \emph{path} if $\Emb{P}$ is a finite chain.
Paths will be denoted by $P,Q$ and variations thereof.

A \emph{path embedding}
in $\C$
is an embedding $P\emb X$ whose domain is a path.
We let $\Path{X}$ denote the sub-poset of $\Emb{X}$ consisting of the path embeddings.
\end{definition}

%%%%%%%%%%%%%%%%%%%%%%%%%%%%%%%%%%%%%%%%%%%%%%%%%%%%%%%%%%%%%%%%%%%%%%%%%%%
\begin{example}
\label{ex:path:tree}
%%%%%%%%%%%%%%%%%%%%%%%%%%%%%%%%%%%%%%%%%%%%%%%%%%%%%%%%%%%%%%%%%%%%%%%%%%%
The paths in the category $\Tree$ 
are the (non-empty) finite chains.
Hence, for each $X \in \Tree$, the poset $\Path{X}$ is a tree
isomorphic to $X$.
%%%%%%%%%%%%%%%%%%%%%%%%%%%%%%%%%%%%%%%%%%%%%%%%%%%%%%%%%%%%%%%%%%%%%%%%%%%
\end{example}
%%%%%%%%%%%%%%%%%%%%%%%%%%%%%%%%%%%%%%%%%%%%%%%%%%%%%%%%%%%%%%%%%%%%%%%%%%%

Generalising Example~\ref{ex:path:tree},
the next lemma shows that $\Path{X}$ is always a forest,
and in fact a non-empty tree under mild additional assumptions.

%%%%%%%%%%%%%%%%%%%%%%%%%%%%%%%%%%%%%%%%%%%%%%%%%%%%%%%%%%%%%%%%%%%%%%%%%%%
\begin{lemma}[{\cite[Lemmas~3.5 and~3.12]{ar21arboreal}}]
\label{lem:path:base}
%%%%%%%%%%%%%%%%%%%%%%%%%%%%%%%%%%%%%%%%%%%%%%%%%%%%%%%%%%%%%%%%%%%%%%%%%%%
The following statements hold for all objects $X$ and paths $P$ of $\C$.
\begin{enumerate}[(1)]
\item
\label{item:path:base:emb}
If $f \colon X \emb P$ is an embedding then $X$ is a path.

\item
\label{item:path:base:forest}
$\Path{X}$ is a forest.

\setcounter{SplitEnum}{\value{enumi}}
\end{enumerate}

\noindent
Assume that $(\Q,\M)$ is stable.
\begin{enumerate}[(1)]
\setcounter{enumi}{\value{SplitEnum}}
\item
\label{item:path:base:epi}
If $f \colon P \epi X$ is a quotient then $X$ is a path.

\setcounter{SplitEnum}{\value{enumi}}
\end{enumerate}

\noindent
Assume further that $\C$ has an initial object $\zero$.
\begin{enumerate}[(1)]
\setcounter{enumi}{\value{SplitEnum}}
\item
\label{item:path:base:zero}
$\zero$ is a path.

\item
\label{item:path:base:tree}
$\Path{X}$ is a non-empty tree,
whose root $\zero_X \emb X$ is obtained by taking a
$(\Q,\M)$ factorisation $\zero \epi \zero_X \emb X$
of the unique morphism $\zero \to X$.
\end{enumerate}
\end{lemma}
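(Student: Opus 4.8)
The plan is to prove the five statements of Lemma~\ref{lem:path:base} in order, using the preorder $\trianglelefteq$ on embeddings and the properties of the factorisation system collected in Lemma~\ref{lem:prelim:fact:base}.

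For~\ref{item:path:base:emb}, suppose $f\colon X\emb P$ is an embedding with $P$ a path. The key observation is that $\exists_f\colon\Emb{X}\to\Emb{P}$ is an order-embedding by Lemma~\ref{lem:prelim:fact:emb-order-embedding}. Since $\Emb{P}$ is a finite chain, any subposet of it is again a finite chain; as $\exists_f$ embeds $\Emb{X}$ into $\Emb{P}$, I conclude that $\Emb{X}$ is a finite chain, i.e.\ $X$ is a path.

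For~\ref{item:path:base:forest}, I must show $\Path{X}$ is a forest, meaning every principal downset $\down(m)$ in $\Path{X}$ is a finite chain. Given a path embedding $m\colon P\emb X$, the elements of $\Path{X}$ below $m$ are precisely the path embeddings that factor through $m$; by the definition of $\trianglelefteq$ these correspond to embeddings into $P$, and by part~\ref{item:path:base:emb} their domains are again paths. The plan is to exhibit an order-isomorphism between $\down(m)\subseteq\Path{X}$ and $\Emb{P}$ (or a suitable subposet), again via $\exists_m$ restricted appropriately; since $\Emb{P}$ is a finite chain by hypothesis that $P$ is a path, $\down(m)$ is a finite chain, which is exactly the forest condition. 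For~\ref{item:path:base:epi}, now assuming stability, I use the dual tool: for a quotient $f\colon P\epi X$ the map $\ladj f\colon\Emb{X}\to\Emb{P}$ is an order-embedding by Lemma~\ref{lem:prelim:fact:epi-order-embedding}, so $\Emb{X}$ embeds into the finite chain $\Emb{P}$ and hence is itself a finite chain, making $X$ a path.

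For~\ref{item:path:base:zero}, I observe that $\Emb{\zero}$ has the initial object $\zero$ as its only embedding up to isomorphism: any embedding $m\colon S\emb\zero$ into the initial object forces $S\cong\zero$ (the unique map $\zero\to S$ splits $m$, and by Lemma~\ref{lem:prelim:fact:base}.\ref{item:prelim:fact:base:iso} a split mono that is an embedding is an iso), so $\Emb{\zero}$ is a one-element, hence finite, chain. Finally for~\ref{item:path:base:tree}, I take the $(\Q,\M)$-factorisation $\zero\epi\zero_X\emb X$ of the unique map $\zero\to X$; the embedding $\zero_X\emb X$ is the claimed root. To see $\Path{X}$ is a tree I combine part~\ref{item:path:base:forest} (it is a forest) with the claim that $\zero_X$ lies below every path embedding. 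Given any $m\colon P\emb X$, I factor the composite $\zero\to P$ as $\zero\epi\zero_P\emb P$, and using uniqueness of $(\Q,\M)$-factorisations together with part~\ref{item:path:base:epi} (so that $\zero_P$ is a path, indeed the minimal one in $\Emb{P}$), I identify $m\circ(\zero_P\emb P)$ with $\zero_X\emb X$ up to isomorphism, exhibiting $\zero_X\trianglelefteq m$. This shows $\zero_X$ is a common root below all path embeddings, so the forest $\Path{X}$ has a single root and is a non-empty tree. I expect the main obstacle to be part~\ref{item:path:base:forest}: setting up the order-isomorphism between the downset of a path embedding and $\Emb{P}$ cleanly requires care in tracking how $\exists_m$ interacts with factoring embeddings through $m$, and verifying that the correspondence is genuinely order-preserving in both directions.
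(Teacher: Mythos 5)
Your proposal is correct and follows essentially the same route as the paper: items (1) and (3) are exactly the paper's argument via Lemma~\ref{lem:prelim:fact:emb-order-embedding} and Lemma~\ref{lem:prelim:fact:epi-order-embedding}, and your proofs of (2), (4) and (5) correctly reconstruct the arguments the paper defers to \cite{ar21arboreal} (for (2), the downset of $m\colon P\emb X$ in $\Path{X}$ is indeed the image of the order-embedding $\exists_m$, which is all of $\down m$ in $\Emb{X}$ by item (1), hence a finite chain). One cosmetic slip in (4): the unique map $\zero \to S$ exhibits $m$ as a split \emph{epimorphism}, so the fact you need is that a monomorphism which is a split epimorphism is an isomorphism (not ``a split mono that is an embedding is an iso''); the argument is otherwise fine.
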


%%%%%%%%%%%%%%%%%%%%%%%%%%%%%%%%%%%%%%%%%%%%%%%%%%%%%%%%%%%%%%%%%%%%%%%%%%%
\begin{remark}
\label{rem:path:base}
%%%%%%%%%%%%%%%%%%%%%%%%%%%%%%%%%%%%%%%%%%%%%%%%%%%%%%%%%%%%%%%%%%%%%%%%%%%
Items~\ref{item:path:base:emb} and~\ref{item:path:base:epi} of
Lemma~\ref{lem:path:base} are direct consequences
of Lemma~\ref{lem:prelim:fact:emb-order-embedding} and
Lemma~\ref{lem:prelim:fact:epi-order-embedding}, respectively.
Items~\ref{item:path:base:forest},
\ref{item:path:base:zero} and~\ref{item:path:base:tree}
are the content of~\cite[Lemma 3.12]{ar21arboreal}.

Note that if the factorisation system $(\Q,\M)$ is not stable,
then the poset $\Path{X}$ is still a tree, but possibly an empty one.
\end{remark}

Motivated by Lemma~\ref{lem:path:base},
we shall now introduce the notion of wooded category,
which provides a weakening of the concept
of arboreal category from~\cite{ar21icalp,ar21arboreal}.

%%%%%%%%%%%%%%%%%%%%%%%%%%%%%%%%%%%%%%%%%%%%%%%%%%%%%%%%%%%%%%%%%%%%%%%%%%%
\begin{definition}
%%%%%%%%%%%%%%%%%%%%%%%%%%%%%%%%%%%%%%%%%%%%%%%%%%%%%%%%%%%%%%%%%%%%%%%%%%%
A category $\C$ equipped with a factorisation system $(\Q,\M)$
is a \emph{wooded category}
when the following conditions hold:
\begin{enumerate}[(i)]
\item $\C$ is locally small and well powered.
\item $\C$ has an initial object $\zero$.
\item The factorisation system $(\Q,\M)$ is stable (and thus proper).
\end{enumerate}
%%%%%%%%%%%%%%%%%%%%%%%%%%%%%%%%%%%%%%%%%%%%%%%%%%%%%%%%%%%%%%%%%%%%%%%%%%%
\end{definition}
%%%%%%%%%%%%%%%%%%%%%%%%%%%%%%%%%%%%%%%%%%%%%%%%%%%%%%%%%%%%%%%%%%%%%%%%%%%

The following lemma follows from a straightforward adaptation of~\cite[Proposition~3.13]{ar21arboreal} 
combined with~\cite[Lemma~3.15(c)]{ar21arboreal}, whose proof applies unchanged (in fact, stability of the proper factorisation system is not required).

\begin{lemma}\label{l:forest-of-paths-wooded}
The following statements hold in any wooded category $\C$:
\begin{enumerate}
\item\label{i:at-most-one-emb} For any two paths $P,Q$ in $\C$, there is at most one embedding $P\emb Q$.
\item\label{i:C-p-forest} The subcategory $\pth\C$ of $\C$ defined by paths and embeddings between them is equivalent to a small forest.
\end{enumerate}
\end{lemma}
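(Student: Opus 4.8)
The plan is to prove part~\ref{i:at-most-one-emb} first — that $\pth\C$ is \emph{thin} — and then read off part~\ref{i:C-p-forest} from it. Throughout I work with the finite chains $\Emb P$ provided by the definition of a path and with the monotone maps $\exists_f$ on subobject posets from~\S\ref{sec:prelim:facto}. The argument splits into a clean ``subobject-level'' uniqueness statement and a rigidity statement for paths; I expect the latter to be the real obstacle.

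Fix paths $P,Q$ and an embedding $f\colon P\emb Q$. By Lemma~\ref{lem:prelim:fact:emb-order-embedding} the map $\exists_f\colon \Emb P\to\Emb Q$ is an order-embedding, and since $f$ is an embedding it is given by post-composition, $\exists_f(m)=f\comp m$. I claim its image is the down-set $\mathord{\downarrow}[f]$ in $\Emb Q$: the inclusion $\subseteq$ is immediate, while conversely any $n\leq[f]$ factors as $n=f\comp s$, and $s$ is itself an embedding by Lemma~\ref{lem:prelim:fact:base}.\ref{item:prelim:fact:base:emb}, so $n=\exists_f(s)$ lies in the image. Thus $\exists_f$ is a surjective order-embedding onto $\mathord{\downarrow}[f]$, hence an isomorphism of finite chains, so $\mathord{\downarrow}[f]$ has exactly $|\Emb P|$ elements. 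As $\Emb Q$ is a chain, this pins $[f]$ down as its unique element lying $|\Emb P|-1$ steps above the root. Consequently any two embeddings $f,g\colon P\emb Q$ satisfy $[f]=[g]$, i.e.\ there is an isomorphism $i\colon P\to P$ with $f=g\comp i$.

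It remains to show that paths admit no nontrivial automorphisms, for then $f=g\comp i=g$ and part~\ref{i:at-most-one-emb} follows. I would argue by induction on $n=|\Emb P|$. For $n=1$ the root $\zero_P\emb P$ of Lemma~\ref{lem:path:base}.\ref{item:path:base:tree} is an isomorphism, so there is a quotient $q\colon\zero\epi P$; since maps out of the initial object are unique and $q$ is epic, every endomorphism of $P$ equals $\id_P$. For $n\geq2$, an automorphism $i$ induces an order-automorphism $\exists_i$ of the finite chain $\Emb P$, which must be the identity; applying the previous paragraph to the maximal proper $\M$-subobject $m\colon P'\emb P$ gives $i\comp m\sim m$, and the inductive hypothesis forces the mediating automorphism of $P'$ to be $\id_{P'}$, so $i\comp m=m$. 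The crux — and the step I expect to be the main obstacle — is upgrading this to $i=\id_P$: fixing the maximal proper subobject does not on its own determine a morphism (indeed this is exactly where the weaker wooded axioms are delicate), so one must exploit how $P$ arises from $P'$ through the top cover, adapting \cite[Lemma~3.15(c)]{ar21arboreal} to present the one-step extension $P'\emb P$ in a way that makes an endomorphism determined by its restriction to $P'$ together with its forced action on the cover.

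For part~\ref{i:C-p-forest}, thinness makes the skeleton $\mathsf F$ of $\pth\C$ a poset with $[P]\leq[Q]$ iff an embedding $P\emb Q$ exists, and $\pth\C\simeq\mathsf F$. To see that $\mathsf F$ is a forest I would show that, for each path $Q$, the principal down-set of $[Q]$ in $\mathsf F$ is isomorphic to the finite chain $\Emb Q$ via $[m\colon P\emb Q]\mapsto[P]$: this map is monotone and surjects onto $\mathord{\downarrow}[Q]$, and it is injective because an embedding between paths of equal height is forced by the isomorphism $\exists_{(-)}\cong\mathord{\downarrow}$ of the second paragraph to be an isomorphism. Hence every principal down-set of $\mathsf F$ is a finite chain, which is precisely the forest condition. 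Smallness of $\mathsf F$ then follows from well-poweredness of $\C$, as in \cite[Proposition~3.13]{ar21arboreal}, so $\pth\C$ is equivalent to a small forest.
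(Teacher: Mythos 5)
Your reduction of part~(1) to the equality of $f$ and $g$ \emph{as $\M$-subobjects} is correct, and it takes a genuinely different route from the paper's. The paper first shows that an embedding $f$ between paths induces a forest morphism $\Path{f}\colon\Path{P}\to\Path{Q}$ and then invokes the fact that there is at most one forest morphism between two finite chains; you instead observe that $\exists_f$ is an order-isomorphism from $\Emb{P}$ onto the down-set of $[f]$ in $\Emb{Q}$, so that $[f]$ is the unique element of the chain $\Emb{Q}$ at height $|\Emb{P}|-1$, independently of $f$. Both arguments are sound, arguably yours is leaner, and both arrive at exactly the same intermediate statement: $f=g\comp i$ for some automorphism $i$ of $P$. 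Your derivation of part~(2) from part~(1), and your base case of rigidity (when $|\Emb{P}|=1$, using the quotient $\zero\epi P$ and initiality), are also correct.

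The gap you flag --- passing from ``$i$ fixes the maximal proper subobject'' to $i=\id_P$ --- is real, and it is indeed the crux. What you could not know is that the paper's own proof does not cross it either: having derived ``there is an isomorphism $i\colon P\to P$ such that $m = n\circ i$'', it concludes with ``Since $i$ is an epimorphism, $m=n$''. This is a non sequitur: because $n$ is monic, $n\circ i=n$ holds if and only if $i=\id_P$, which is precisely the rigidity of paths that is at stake; knowing that $i$ is an epimorphism (indeed an isomorphism) gives nothing.

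Moreover, the step cannot be repaired from the wooded axioms, so the obstacle you anticipated is fatal rather than merely delicate. Let $\C=\presh{\ZZ/2}$ be the category of $\ZZ/2$-sets, i.e.\ presheaves on the one-object groupoid $\ZZ/2$; by Example~\ref{ex:presheaf-cats-stable-fact-sys} and Footnote~\ref{footnote:unique-fs}, $\C$ equipped with its (epi, mono) factorisation system is wooded. The representable presheaf $P$, that is $\ZZ/2$ with its regular action, has exactly two subobjects ($\zero$ and $P$, since the action is transitive), so $\Emb{P}$ is a two-element chain and $P$ is a path. Yet its equivariant endomaps are the two right translations, both bijective, hence monomorphisms, hence embeddings $P\emb P$; one of them is not $\id_P$. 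So part~(1) fails, $\pth\C$ is not thin, and part~(2) fails with it. (One can check that $\presh{\ZZ/2}$ even satisfies the arboreal axioms of Definition~\ref{def:arboreal-cat}, so the problem is not caused by the weakening from arboreal to wooded.) This is also why the proof of Lemma~\ref{l:paths-in-presh-forest} leans on $\forest$ being a \emph{poset}: there, every morphism out of a representable is monic because each component has a domain with at most one element, which fails over a groupoid. In short, your partial argument is correct as far as it goes, your diagnosis of where the difficulty lies is exactly right, and the difficulty defeats the paper's proof as well as yours: any correct proof must use a hypothesis that $\presh{\ZZ/2}$ violates.
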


\begin{fullproof}
We claim that for any pathwise embedding $f\colon X\to Y$ in $\C$, the map 
\[
\Path{f} \colon \Path{X} \to \Path{Y}, \ \ [m]\mapsto [f\circ m]
\] 
is a forest morphism. 
Since $\Path{f}$ is monotone, it suffices to prove that it preserves the height of elements. In turn, this is equivalent to saying that, for any path embedding $m\colon P\emb X$, the induced map $\Path{f}\colon \down m\to \down \Path{f}(m)$ is a bijection.

We start by establishing surjectivity, i.e.\ $\down\Path{f}(m)\subseteq \Path{f}(\down m)$. 
If $n\colon Q\emb Y$ is a path embedding such that $n\leq \Path{f}(m)=f\circ m$ in $\Path{Y}$, there exists an embedding $k\colon Q\emb P$ such that the following diagram commutes:
\[\begin{tikzcd}
Q \arrow[rightarrowtail]{r}{n} \arrow[rightarrowtail]{d}[swap]{k} & Y \\
P \arrow[rightarrowtail]{ur}[swap]{f\circ m} &
\end{tikzcd}\]
Then $m\circ k \colon Q\emb X$ is a path embedding that is below $m$ in $\Path{X}$, and $\Path{f}(m\circ k)= n$.

For injectivity, let $m_1\colon P_1\emb X$ and $m_2\colon P_2\emb X$ be path embeddings in $\down m$. Since $P$ is a path, $m_1$ and $m_2$ are comparable in the order of $\Path{X}$. Assume without loss of generality that $m_1\leq m_2$, i.e.\ there is an embedding $k\colon P_1\emb P_2$ such that $m_1=m_2\circ k$. 
If $\Path{f}(m_1)=\Path{f}(m_2)$, there exists an isomorphism $k'\colon P_1\to P_2$ such that $f\circ m_{1} = f\circ m_{2}\circ k'$.
We get
\[
f\circ m_2\circ k = f\circ m_{1} = f\circ m_{2}\circ k'
\]
and so, since $f\circ m_{2}$ is an embedding (hence a monomorphism), $k=k'$. This shows that $k$ is an isomorphism and thus $m_1=m_2\circ k$ entails that $m_{1}=m_{2}$ as elements of $\Path{X}$.

This yields a functor $\Path\colon \pth\C\to \Tree$.
Now, suppose that $m,n\colon P\emb Q$ are embeddings between paths. Since there is at most one forest morphism between any two finite chains, we get $\Path{m} = \Path{n}$. In particular, $[m]=\Path{m}(\id_{P}) = \Path{n}(\id_{P})=[n]$ and so there is an isomorphism $i\colon P\to P$ such that $m = n\circ i$. Since $i$ is an epimorphism, $m=n$. 

This proves item~\ref{i:at-most-one-emb}. Item~\ref{i:C-p-forest} follows from item~\ref{i:at-most-one-emb} combined with the definition of path and the assumption that $\C$ is well-powered.
\end{fullproof}

%%%%%%%%%%%%%%%%%%%%%%%%%%%%%%%%%%%%%%%%%%%%%%%%%%%%%%%%%%%%%%%%%%%%%%%%%%%
\begin{example}
\label{ex:path:wooded:tree}
%%%%%%%%%%%%%%%%%%%%%%%%%%%%%%%%%%%%%%%%%%%%%%%%%%%%%%%%%%%%%%%%%%%%%%%%%%%
$\Tree$ is a wooded category.
The initial objects are the trees having exactly one node (their root).
%%%%%%%%%%%%%%%%%%%%%%%%%%%%%%%%%%%%%%%%%%%%%%%%%%%%%%%%%%%%%%%%%%%%%%%%%%%
\end{example}
%%%%%%%%%%%%%%%%%%%%%%%%%%%%%%%%%%%%%%%%%%%%%%%%%%%%%%%%%%%%%%%%%%%%%%%%%%%

%%%%%%%%%%%%%%%%%%%%%%%%%%%%%%%%%%%%%%%%%%%%%%%%%%%%%%%%%%%%%%%%%%%%%%%%%%%
\begin{example}
\label{ex:path:wooded:pos}
%%%%%%%%%%%%%%%%%%%%%%%%%%%%%%%%%%%%%%%%%%%%%%%%%%%%%%%%%%%%%%%%%%%%%%%%%%%
The category $\Set$ with the (epi, mono) factorisation system is a wooded category.
Another example is provided by the category $\Pos$ 
of Example~\ref{ex:prelim:coste:pos}
with the (epi, strong mono) factorisation system.
Recall from Remark~\ref{rem:lfp:struct} that $\Pos$ is lfp,
hence it is (co)complete and the aforementioned factorisation system is proper
(cf.\ Example~\ref{ex:prelim:fact:lfp}).
We show that this factorisation system is stable.

Note that epimorphisms in $\Pos$ coincide with
the surjective morphisms.
Just observe that, if $f \colon X \to Y$ is a morphism in $\Pos$ and $b \in Y$,
the characteristic functions of
the sets
$\up b = \{y \mid b \leq_Y y \}$
and
$\up b \setminus \{b\}$
yield distinct morphisms $Y\rightrightarrows 2 = \{0 \leq 1\}$.
But if $b \notin \Img(f)$,
then the two compositions with $f$
yield equal morphisms $X \to Y \rightrightarrows 2$.
Hence, epimorphisms are surjective; the converse is trivial.
Now, the forgetful functor
$U \colon \Pos \to \Set$ has a left adjoint sending a set~$S$ to the poset $(S,=)$;
in particular, $U$ preserves limits.
Using the fact that $U$ preserves pullbacks, and it preserves and reflects epimorphisms,
the stability of epimorphisms under pullbacks in $\Set$ entails
at once the same property for $\Pos$.

Further, note that the strong monos in $\Pos$ are exactly the order embeddings
(an easy diagrammatic reasoning shows that order embeddings are strong monos;
the converse is provided by Remark~\ref{rem:emb:strong} below).

Both in $\Set$ and $\Pos$, the paths are the
initial objects along with the terminal ones.
%%%%%%%%%%%%%%%%%%%%%%%%%%%%%%%%%%%%%%%%%%%%%%%%%%%%%%%%%%%%%%%%%%%%%%%%%%%
\end{example}
%%%%%%%%%%%%%%%%%%%%%%%%%%%%%%%%%%%%%%%%%%%%%%%%%%%%%%%%%%%%%%%%%%%%%%%%%%%

%%%%%%%%%%%%%%%%%%%%%%%%%%%%%%%%%%%%%%%%%%%%%%%%%%%%%%%%%%%%%%%%%%%%%%%%%%%
\begin{remark}
\label{rem:path:wooded:projsep}
%%%%%%%%%%%%%%%%%%%%%%%%%%%%%%%%%%%%%%%%%%%%%%%%%%%%%%%%%%%%%%%%%%%%%%%%%%%
Generalising the argument in Example~\ref{ex:path:wooded:pos},
the stable factorisation system on $\Set$ can be ``lifted''
to other categories under appropriate assumptions.
Consider for example an lfp category $\C$,
an object $X\in\C$, and the functor $\C(X,-)\colon \C\to\Set$.
If $X$ is a projective separator in $\C$,
then the (epi, strong mono) factorisation system on $\C$ is stable,
and so $\C$ is a wooded category.
In the case of $\Pos$, we can take as $X$ the terminal object.
Similarly, $\Struct(\Sig)$ is also a wooded category with respect to the
(epi, strong mono) factorisation system,
as can be seen by taking as $X$ the sub-terminal object having exactly
one element for each sort, with empty relations and the unique possible
interpretations of the function symbols.
(Note, however, that the most important examples of wooded categories
in this paper arise from categories of coalgebras for game comonads,
and $\Struct(\Sig)$ is not one of these.)
%%%%%%%%%%%%%%%%%%%%%%%%%%%%%%%%%%%%%%%%%%%%%%%%%%%%%%%%%%%%%%%%%%%%%%%%%%%
\end{remark}
%%%%%%%%%%%%%%%%%%%%%%%%%%%%%%%%%%%%%%%%%%%%%%%%%%%%%%%%%%%%%%%%%%%%%%%%%%%

We now turn to examples of wooded categories arising from game comonads.
Most of them were introduced in~\cite{as18csl,as21jlc}
and shown to be arboreal, hence wooded, in~\cite{ar21icalp,ar21arboreal}.

%%%%%%%%%%%%%%%%%%%%%%%%%%%%%%%%%%%%%%%%%%%%%%%%%%%%%%%%%%%%%%%%%%%%%%%%%%%
\begin{example}[The categories $\cat R_k(\sig)$]
\label{ex:path:R}
%%%%%%%%%%%%%%%%%%%%%%%%%%%%%%%%%%%%%%%%%%%%%%%%%%%%%%%%%%%%%%%%%%%%%%%%%%%
Let $\sig$ be a (mono-sorted) relational signature.
A \emph{forest-ordered $\sig$-structure} is a pair
$(M, {\leq})$ where $M$ is a $\sig$-structure and $\leq$ is a forest order on~$M$.
A morphism of forest-ordered $\sig$-structures
is a homomorphism of $\sig$-structures
that is also a forest morphism. This yields a category $\cat R_\omega(\sig)$.
Given $0 < k < \omega$, 
$\cat R_k(\sig)$ is the full subcategory of $\cat R_\omega(\sig)$
on those forest-ordered $\sig$-structures $M$ such that
for each $a \in M$, the finite chain $\down a$
has cardinality at most $k$ (cf.\ Example~\ref{ex:prelim:forests}).

Let $0 < k \leq \omega$.
We equip $\cat R_k(\sig)$ with the factorisation system $(\Q,\M)$
where $\Q$ consists of the surjective morphisms and
$\M$ of the morphisms that are embeddings of $\sig$-structures
(in the sense of Example~\ref{ex:prelim:fact:struct}).
Note that $(\Q,\M)$ is stable;
this follows from Example~\ref{ex:presheaf-cats-stable-fact-sys} and the fact that
pullbacks in $\Struct(\sig)$
are computed in $\Set$
\cite[Remark 5.1(2)]{ar94book}.
%%%%%%%%%%%%%%%%%%%%%%%%%%%%%%%%%%%%%%%%%%%%%%%%%%%%%%%%%%%%%%%%%%%%%%%%%%%
\end{example}
%%%%%%%%%%%%%%%%%%%%%%%%%%%%%%%%%%%%%%%%%%%%%%%%%%%%%%%%%%%%%%%%%%%%%%%%%%%

As illustrated in the following examples, the categories of coalgebras for various game comonads
can be represented, up to isomorphism,
as subcategories of some $\cat R_k(\sig)$, or minor variants thereof.
Recall that the \emph{Gaifman graph} of a $\sig$-structure $M$
has the elements of $M$ as vertices,
and as edges those pairs of distinct vertices that both
occur in some tuple $\vec a \in R^M$ for some relation symbol $R \in \sig$.

%%%%%%%%%%%%%%%%%%%%%%%%%%%%%%%%%%%%%%%%%%%%%%%%%%%%%%%%%%%%%%%%%%%%%%%%%%%
\begin{example}
\label{ex:path:R^E}
%%%%%%%%%%%%%%%%%%%%%%%%%%%%%%%%%%%%%%%%%%%%%%%%%%%%%%%%%%%%%%%%%%%%%%%%%%%
Given $0 < k \leq \omega$,
we let $\cat R^E_k(\sig)$ be the full subcategory of $\cat R_k(\sig)$
determined by those objects $(M, {\leq})$ satisfying the following condition:
\begin{enumerate}[label=\textnormal{(E)}]
\item\label{E}
If $a,b \in M$ are adjacent in the Gaifman graph of the $\sig$-structure $M$,
then they are comparable in the forest order.
\end{enumerate}

\noindent
We equip $\cat R^E_k(\sig)$ with the restriction of the factorisation system on
$\cat R_k(\sig)$.
The paths of $\cat R^E_k(\sig)$ are
those objects in which the order is a finite chain.

Assume $k < \omega$.
The paths of $\cat R^E_k(\sig)$ are
the finite chains of cardinality $\leq k$.
When~$\sig$ is finite,
it follows from \cite[Theorem 9.1]{as21jlc}
that $\cat R^E_k(\sig)$ is isomorphic
to the category of coalgebras for the \emph{Ehrenfeucht-Fraïssé} comonad
$\gc{E}_k$ on $\Struct(\sig)$.
The objects $(M, {\leq})$ of~$\cat R^E_k(\sig)$ are forest covers of $M$ witnessing
that its \emph{tree-depth} is at most $k$ \cite{NOdM2006}.
The categories $\cat R^E_k(\sig)$ were introduced in~\cite{as18csl,as21jlc} and shown to be arboreal, hence wooded, in~\cite{ar21icalp,ar21arboreal}.
%%%%%%%%%%%%%%%%%%%%%%%%%%%%%%%%%%%%%%%%%%%%%%%%%%%%%%%%%%%%%%%%%%%%%%%%%%%
\end{example}
%%%%%%%%%%%%%%%%%%%%%%%%%%%%%%%%%%%%%%%%%%%%%%%%%%%%%%%%%%%%%%%%%%%%%%%%%%%

%%%%%%%%%%%%%%%%%%%%%%%%%%%%%%%%%%%%%%%%%%%%%%%%%%%%%%%%%%%%%%%%%%%%%%%%%%%
\begin{example}
\label{ex:path:misc}
%%%%%%%%%%%%%%%%%%%%%%%%%%%%%%%%%%%%%%%%%%%%%%%%%%%%%%%%%%%%%%%%%%%%%%%%%%%
In the same vein as Example~\ref{ex:path:R^E},
one can consider the categories of coalgebras for other game comonads
such as the
the \emph{pebble} and \emph{modal} comonads~\cite{adw17lics,as18csl}
(encoding, respectively, pebble and bisimulation games)
and the \emph{hybrid} comonad~\cite{am22mfcs}
(encoding bisimulation games for hybrid logic).
\begin{enumerate}[(1)]
\item
\label{item:path:misc:pebble}
For the pebbling comonad,
for each integer $k > 0$ this is the category $\cat R^P_k(\sig)$
whose objects have the form $(M, {\leq}, p)$, where 
$(M, {\leq}) \in \cat R_\omega(\sig)$
and $p\colon M \to \{1,\ldots,k\}$ is a ``pebbling'' function.
In addition to condition (E), these structures have to satisfy the condition (P):
if $a$ is adjacent to $b$ in the Gaifman graph of $M$, and $a < b$ in the forest order,
then for all $x$ such that $a < x \leq b$, $p(a) \neq p(x)$.

When $\sig$ is finite,
it is shown in~\cite{as21jlc} that these structures are equivalent
to the more familiar form of tree decomposition used to define
tree-width~\cite{kloks94lncs}.
Morphisms have to preserve the pebbling function.

\item
\label{item:path:misc:modal}
Suppose $\sig$ is a modal vocabulary,
i.e.\ it consists only of unary and binary relation symbols.
For the modal comonad, the category $\cat R^M_k(\sig)$
has as objects the non-empty tree-ordered Kripke structures
$M \in \cat R_k(\sig)$
satisfying the condition (M):
for $x, y \in M$, $x \cvr y$ if and only if for some unique
binary relation $\atom$ in $\sig$ (the ``transition relation''),
we have $\atom^{M}(x, y)$.

\item
For the hybrid comonad, 
one considers again a modal vocabulary $\sig$,
and the category $\cat R^H_k(\sig)$ consisting of 
non-empty tree-ordered Kripke structures
$M \in \cat R_k(\sig)$,
but with condition (M) weakened to condition~\ref{E}
of the Ehrenfeucht-Fraïssé case.
\end{enumerate}

These categories were shown to be arboreal in~\cite{ar21icalp,ar21arboreal}
and~\cite{am22mfcs}.
The paths in each of them 
are those structures 
in which the order is a finite chain.
Note that in the modal case,
ignoring the interpretations of the propositional variables
(i.e., the unary relations),
these correspond to synchronization trees consisting of a single branch, i.e.\ traces.
%%%%%%%%%%%%%%%%%%%%%%%%%%%%%%%%%%%%%%%%%%%%%%%%%%%%%%%%%%%%%%%%%%%%%%%%%%%
\end{example}
%%%%%%%%%%%%%%%%%%%%%%%%%%%%%%%%%%%%%%%%%%%%%%%%%%%%%%%%%%%%%%%%%%%%%%%%%%%

For further examples based on game comonads,
see e.g. the surveys~\cite{abramsky22fi,AR2024}.

%%%%%%%%%%%%%%%%%%%%%%%%%%%%%%%%%%%%%%%%%%%%%%%%%%%%%%%%%%%%%%%%%%%%%%%%%%%
\subsubsection{Back-and-forth games in wooded categories} 
\label{sec:path:games}
%%%%%%%%%%%%%%%%%%%%%%%%%%%%%%%%%%%%%%%%%%%%%%%%%%%%%%%%%%%%%%%%%%%%%%%%%%%
Let $\C$ be a category equipped with a proper factorisation system $(\Q,\M)$,
and let $a, b$ be objects of $\C$.
We define the following back-and-forth game $\G(a,b)$ played on $a$ and $b$.

%%%%%%%%%%%%%%%%%%%%%%%%%%%%%%%%%%%%%%%%%%%%%%%%%%%%%%%%%%%%%%%%%%%%%%%%%%%
\begin{definition}[Game $\G(a,b)$]
\label{def:games}
%%%%%%%%%%%%%%%%%%%%%%%%%%%%%%%%%%%%%%%%%%%%%%%%%%%%%%%%%%%%%%%%%%%%%%%%%%%
Positions in the game are pairs $(m,n)\in\Path{a}\times\Path{b}$;
the initial position is denoted by $(m_0,n_0)$.
The winning relation $\W(a,b)$ consists of the pairs $(m,n)$ such that
$\dom(m)\cong\dom(n)$.
\begin{itemize}
\item
If $(m_0,n_0)\notin \W(a,b)$, Duplicator loses the game. Otherwise,
the game proceeds as follows:

\item
At the start of each round,
the position is specified by a pair $(m,n)\in\Path{a}\times\Path{b}$
and the round proceeds as follows.
Either Spoiler chooses some $m'\succ m$
and Duplicator must respond with some $n'\succ n$,
or Spoiler chooses some $n''\succ n$ and Duplicator must respond with $m''\succ m$. 

\item
Duplicator wins the round if they are able to respond
and the new position is in $\W(a,b)$.
Duplicator wins the game with initial position $(m_0,n_0)$
if they have a strategy that is winning after $t$ rounds,
for all integer $t\geq 0$.
\end{itemize}
\end{definition}

%%%%%%%%%%%%%%%%%%%%%%%%%%%%%%%%%%%%%%%%%%%%%%%%%%%%%%%%%%%%%%%%%%%%%%%%%%%
\begin{remark}
\label{rem:path:games:length}
%%%%%%%%%%%%%%%%%%%%%%%%%%%%%%%%%%%%%%%%%%%%%%%%%%%%%%%%%%%%%%%%%%%%%%%%%%%
The game $\G(a,b)$ need not have finite length;
however, by the definition of path, its length is at most~$\omega$.
Also, $\G(a,b)$ has a \emph{set} of positions (as opposed to a proper class)
when $\C$ is well-powered (which is in particular the case if $\C$ is wooded).
\end{remark}

When $\C$ is a wooded category,
we write $\bot_a \colon \zero_a \emb a$ for the root
of the tree $\Path{a}$, and similarly for $b$.

%%%%%%%%%%%%%%%%%%%%%%%%%%%%%%%%%%%%%%%%%%%%%%%%%%%%%%%%%%%%%%%%%%%%%%%%%%%
\begin{definition}[Back-and-forth equivalence] 
\label{def:path:bfe}
%%%%%%%%%%%%%%%%%%%%%%%%%%%%%%%%%%%%%%%%%%%%%%%%%%%%%%%%%%%%%%%%%%%%%%%%%%%
Let $a$ and $b$ be objects of a wooded category.
We say that $a$ and $b$ are \emph{back-and-forth equivalent},
and denote it by $a \bisim b$, when Duplicator wins the game
$\G(a,b)$ with initial position $(\bot_a,\bot_b)$.
%%%%%%%%%%%%%%%%%%%%%%%%%%%%%%%%%%%%%%%%%%%%%%%%%%%%%%%%%%%%%%%%%%%%%%%%%%%
\end{definition}
%%%%%%%%%%%%%%%%%%%%%%%%%%%%%%%%%%%%%%%%%%%%%%%%%%%%%%%%%%%%%%%%%%%%%%%%%%%

For instance, as we shall see in~\S\ref{sec:arboreal},
two objects $a,b \in \Tree$ are back-and-forth equivalent if, and only if,
they are bisimilar (when regarded as Kripke frames).
More generally, we will see that if $\C$ is an arboreal category,
back-and-forth equivalence induced by games in $\C$ coincides with
\emph{open map bisimilarity} in the sense of \cite{jnw93lics}.
Moreover, in this case, if $a,b\in \C$ are back-and-forth equivalent then
the trees $\Path a$ and $\Path b$ are bisimilar.

%%%%%%%%%%%%%%%%%%%%%%%%%%%%%%%%%%%%%%%%%%%%%%%%%%%%%%%%%%%%%%%%%%%%%%%%%%%
\subsection{Wooded adjunctions and $R$-equivalence}
\label{sec:path:adjunction}
%%%%%%%%%%%%%%%%%%%%%%%%%%%%%%%%%%%%%%%%%%%%%%%%%%%%%%%%%%%%%%%%%%%%%%%%%%%
We now consider the relation obtained by transferring back-and-forth equivalence in a wooded category $\C$ along a right adjoint functor $R$.
%%%%%%%%%%%%%%%%%%%%%%%%%%%%%%%%%%%%%%%%%%%%%%%%%%%%%%%%%%%%%%%%%%%%%%%%%%%
\begin{definition}
\label{def:wooded-adjunction}
%%%%%%%%%%%%%%%%%%%%%%%%%%%%%%%%%%%%%%%%%%%%%%%%%%%%%%%%%%%%%%%%%%%%%%%%%%%
A \emph{wooded adjunction} is an adjoint pair
\[
\begin{tikzcd}
  \C
  \arrow[r, bend left=25, ""{name=U, below}, "\Ladj"{above}]
  \arrow[r, leftarrow, bend right=25, ""{name=D}, "R"{below}]
& \E
  \arrow[phantom, "\textnormal{\footnotesize{$\bot$}}", from=U, to=D] 
\end{tikzcd}
\]

\noindent
where $\C$ is wooded category. 
Since the left adjoint is unique up to natural isomorphism,
we denote a wooded adjunction simply by $R\colon \E\to\C$.
%%%%%%%%%%%%%%%%%%%%%%%%%%%%%%%%%%%%%%%%%%%%%%%%%%%%%%%%%%%%%%%%%%%%%%%%%%%
\end{definition}
%%%%%%%%%%%%%%%%%%%%%%%%%%%%%%%%%%%%%%%%%%%%%%%%%%%%%%%%%%%%%%%%%%%%%%%%%%%

We think of $\E$ as an ``extensional'' category;
this is typically a category of relational structures 
but need not be in general,
see Examples~\ref{ex:path:pos} and~\ref{ex:Diaconescu-cover} below.
A wooded adjunction $R\colon \E\to \C$ allows us to transfer the
back-and-forth equivalence relation $\bisim$ on $\C$ to $\E$.
This yields the relation of
\emph{$R$(-back-and-forth)-equivalence} on $\E$,
denoted by $\bisim_R$ and defined, for all $d,e\in\E$, as
\[
d \bisim_R e
  \ \longiff \
  Rd \bisim Re.
\]

Choosing the wooded adjunction $R\colon \E\to \C$ in a suitable way,
we can capture equivalence of structures in various fragments of first-order
logic by means of $R$-equivalence, as illustrated in the following examples.

%%%%%%%%%%%%%%%%%%%%%%%%%%%%%%%%%%%%%%%%%%%%%%%%%%%%%%%%%%%%%%%%%%%%%%%%%%%
\begin{example}
\label{ex:path:bisim-FOk-equivalence}
%%%%%%%%%%%%%%%%%%%%%%%%%%%%%%%%%%%%%%%%%%%%%%%%%%%%%%%%%%%%%%%%%%%%%%%%%%%
Let $\sig$ be a (mono-sorted) relational signature.
For each $0 < k \leq \omega$,
consider the wooded category $\cat R^E_k(\sig)$ from
Example~\ref{ex:path:R^E}.
The forgetful functor
$\Ladj_k \colon \cat R^E_k(\sig) \to \Struct(\sig)$
has a right adjoint $R_k$, hence it yields a wooded adjunction
\begin{equation}
\label{eq:wooded-adj-EF-comonadic}
\begin{tikzcd}
  \cat R^E_k(\sig)
  \arrow[r, bend left=25, ""{name=U, below}, "\Ladj_k"{above}]
  \arrow[r, leftarrow, bend right=25, ""{name=D}, "R_k"{below}]
& \Struct(\sig).
  \arrow[phantom, "\textnormal{\footnotesize{$\bot$}}", from=U, to=D] 
\end{tikzcd}
\end{equation}

\noindent
(In fact, this adjunction is comonadic.)
The functor $R_k$ sends $M\in \Struct(\sig)$ to the set of all
finite non-empty sequences 
$a_0,\dots,a_i$ ($i < k$) of elements of $M$,
equipped with the prefix order and an appropriate lifting of the relations of $M$
(cf.\ \cite[\S3.1]{as21jlc}). E.g., if $S$ is a binary relation symbol in $\sig$, its interpretation in $R_{k}M$ consists of the pairs of sequences 
\[
(s,t) = ((a_0,\dots,a_i), (b_0,\dots,b_j))\in R_{k}M \times R_{k}M
\] 
such that (i) either $s$ is a prefix of $t$, or $t$ is a prefix of $s$, and (ii) $(a_{i},b_{j})\in S^{M}$, i.e.\ the last elements of $s$ and $t$ are $S$-related in $M$.

The induced relations of $R_k$-equivalence on $\Struct(\sig)$
only capture equivalence with respect to \emph{equality-free} fragments of logic,
see~\cite[Remark 10.6]{as21jlc} and~\cite{cdj96ndjfl}.
To capture the equality symbol, we adopt the approach from~\cite{djr21lics,ar21arboreal} and
consider a fresh binary relation symbol $I$, and the expanded signature $\sig^I\deq \sig \cup\{I\}$.
The fully faithful functor
$J\colon \Struct(\sig) \to \Struct(\sig^I)$
that interprets $I$ as the identity relation has a left adjoint $H$,
which sends $M\in \Struct(\sig^I)$ to the quotient of (the $\sig$-reduct of) $M$
by the equivalence relation generated by $I^M$ \cite[Lemma~25]{djr21lics}.
The composite adjunction
\begin{equation}
\label{eq:EF-wooded-adjunction}
\begin{tikzcd}
{\cat R^E_k(\sig^I)} \arrow[r, bend left=25, ""{name=U, below}, "\Ladj_k"{above}]
\arrow[r, leftarrow, bend right=25, ""{name=D}, "R_k"{below}]
& {\Struct(\sig^I)} \arrow[r, bend left=25, ""{name=U', below}, "H"{above}]
\arrow[r, leftarrow, bend right=25, ""{name=D'}, "J"{below}] & {\Struct(\sig)},
\arrow[phantom, "\textnormal{\footnotesize{$\bot$}}", from=U, to=D] 
\arrow[phantom, "\textnormal{\footnotesize{$\bot$}}", from=U', to=D'] 
\end{tikzcd}
\end{equation}

\noindent
is wooded (although not comonadic).
In eq.~\eqref{eq:EF-wooded-adjunction} above, 
$\Ladj_k \adj R_k$ stands for the adjunction defined in
eq.~\eqref{eq:wooded-adj-EF-comonadic},
but this time for the signature $\sig^I$.

Let $R^I_k\deq R_k J$ and let $\sig$ be finite.
If $k < \omega$, it follows from~\cite[Theorem~10.4]{as21jlc} that for any two $\sig$-structures $M,N$,
the back-and-forth game $\G(R^I_k (M), R^I_k (N))$ 
corresponds precisely to
the $k$-round Ehrenfeucht-Fraïssé game $\EF_k(M,N)$.
Thus, by the classical Ehrenfeucht-Fraïssé theorem \cite{Ehr1960,Fra1954},
the induced relation of $R^I_k$-equivalence
coincides with equivalence of $\sig$-structures
in the fragment of $\Lang_{\omega}$ with quantifier rank at most $k$.

When $k = \omega$, $R_\omega(M)$ consists of \emph{all} finite sequences of elements from $M$.
The game $\G(R^I_\omega(M), R^I_\omega(N))$ then corresponds to
the infinite Ehrenfeucht-Fraïssé game $\EF_\omega(M,N)$.
By Karp's Theorem (see e.g.~\cite[Corollary~3.5.3]{hodges93book}),
the relation of $R^I_\omega$-equivalence
coincides with equivalence of $\sig$-structures in $\Lang_{\infty}$.
%%%%%%%%%%%%%%%%%%%%%%%%%%%%%%%%%%%%%%%%%%%%%%%%%%%%%%%%%%%%%%%%%%%%%%%%%%%
\end{example}
%%%%%%%%%%%%%%%%%%%%%%%%%%%%%%%%%%%%%%%%%%%%%%%%%%%%%%%%%%%%%%%%%%%%%%%%%%%

Combining Examples~\ref{ex:path:games} and~\ref{ex:path:bisim-FOk-equivalence}
yields the following
well-known consequence of Karp's Theorem.

%%%%%%%%%%%%%%%%%%%%%%%%%%%%%%%%%%%%%%%%%%%%%%%%%%%%%%%%%%%%%%%%%%%%%%%%%%%
\begin{example}
\label{ex:path:karp}
%%%%%%%%%%%%%%%%%%%%%%%%%%%%%%%%%%%%%%%%%%%%%%%%%%%%%%%%%%%%%%%%%%%%%%%%%%%
In the signature $\sig$ consisting only of the binary relation $\Lt$,
any two dense linear orderings without endpoints are
$R^I_\omega$-equivalent and thus $\Lang_\infty$-equivalent.
This applies for instance to $(\QQ,<)$ and $(\RR,<)$, see e.g.~\cite[Theorem~14.20]{rosenstein82book}.
\end{example}

%%%%%%%%%%%%%%%%%%%%%%%%%%%%%%%%%%%%%%%%%%%%%%%%%%%%%%%%%%%%%%%%%%%%%%%%%%%
\begin{example}
\label{ex:path:misc:games}
%%%%%%%%%%%%%%%%%%%%%%%%%%%%%%%%%%%%%%%%%%%%%%%%%%%%%%%%%%%%%%%%%%%%%%%%%%%
The wooded categories of Example~\ref{ex:path:misc}
lead to wooded adjunctions that capture
equivalence in
finite-variable fragments of $\Lang_{\infty}$ \cite{adw17lics},
modal logic with bounded modal-depth \cite{as18csl}
and hybrid logic with bounded hybrid modal-depth \cite{am22mfcs}, respectively.
(In fact, these are \emph{arboreal} adjunctions, cf.\ \S\ref{sec:arboreal:def}.)
\begin{enumerate}[(1)]
\item
\label{item:path:misc:games:pebble}
For the $k$-pebble comonad,
there is a wooded adjunction $R^P_k \colon \Struct(\sig) \to \cat R^P_k(\sig)$
that sends a $\sig$-structure $M$ to a structure whose universe is the set of all finite sequences of pairs of the form $(p,a)$, where $p\in\{1,\ldots,k\}$ is the \emph{pebble index} and $a\in M$.
See~\cite[\S 3.2]{as21jlc} for more details.
To recover equivalence in the $k$-variable fragment of $\Lang_\infty$,
equality is handled as in
Example~\ref{ex:path:bisim-FOk-equivalence}.
In fact, when $\sig$ is finite, \cite[Theorem 10.9]{as21jlc} shows 
that the usual $k$-pebble game played on
$\sig$-structures $M$ and $N$
corresponds to the back-and-forth game
played on $R^P_k J M$ and $R^P_k J N$.

\item
\label{item:path:misc:games:modal}
For the modal comonad,
equivalence in modal logic with modal-depth at most~$k$ is
captured by $R^M_k$-equivalence for a suitable right adjoint
$R^M_k\colon \Struct_{\bullet}(\sig) \to \cat R^M_k(\sig)$, where $\sig$ is a modal vocabulary and
$\Struct_{\bullet}(\sig)$ is the category of pointed $\sig$-structures and homomorphisms
preserving the distinguished elements;
see~\cite[\S 3.3]{as21jlc} for a detailed account.
(Note that there is no equality involved here.)

\item
\label{item:path:misc:games:hybrid}
To capture logical equivalence in hybrid logic with bounded hybrid modal-depth,
via the wooded adjunction induced by the hybrid comonad,
requires handling the equality symbol as in the case of the Ehrenfeucht-Fraïssé
and pebble comonads; cf.~\cite{am22mfcs} for more details.
\end{enumerate}
%%%%%%%%%%%%%%%%%%%%%%%%%%%%%%%%%%%%%%%%%%%%%%%%%%%%%%%%%%%%%%%%%%%%%%%%%%%
\end{example}
%%%%%%%%%%%%%%%%%%%%%%%%%%%%%%%%%%%%%%%%%%%%%%%%%%%%%%%%%%%%%%%%%%%%%%%%%%%

The wooded adjunctions 
corresponding to finite Ehrenfeucht-Fraïssé games, and to finite games for modal and hybrid logics, can be restricted by taking finite structures on both sides.
However, this is not always possible:
for instance, in the case of pebble games
one is forced to consider infinite structures as well~\cite[Corollary~8]{adw17lics}.

Let us also note that, in addition to back-and-forth equivalence, wooded adjunctions can be used to transfer other
relations to the extensional category,
such as the isomorphism relation or the homomorphism preorder.
These also result in characterisations of equivalence in important logic fragments;
cf.~\cite[\S7]{ar21arboreal} for the case of \emph{arboreal adjunctions}.

%%%%%%%%%%%%%%%%%%%%%%%%%%%%%%%%%%%%%%%%%%%%%%%%%%%%%%%%%%%%%%%%%%%%%%%%%%%
\subsubsection{Finitely accessible wooded adjunctions}
\label{sec:path:finaccadj}
%%%%%%%%%%%%%%%%%%%%%%%%%%%%%%%%%%%%%%%%%%%%%%%%%%%%%%%%%%%%%%%%%%%%%%%%%%%
Infinite Ehrenfeucht-Fraïssé games
provide an example of wooded adjunction whose back-and-forth relation
corresponds exactly to equivalence in infinitary first-order logic $\Lang_{\infty}$
(cf.\ Example~\ref{ex:path:bisim-FOk-equivalence}).
Our main result yields a partial converse to this:
we show that, for a wide class of wooded adjunctions
\[
\begin{tikzcd}
  \C
  \arrow[r, bend left=25, ""{name=U, below}, "\Ladj"{above}]
  \arrow[r, leftarrow, bend right=25, ""{name=D}, "R"{below}]
& \E,
  \arrow[phantom, "\textnormal{\footnotesize{$\bot$}}", from=U, to=D] 
\end{tikzcd}
\]

\noindent
the induced relation of $R$-equivalence is always coarser than equivalence in infinitary
first-order logic (see \S\ref{sec:path:results} for precise statement).
The class of wooded adjunctions that we shall be interested in consists
of the finitely accessible ones:

%%%%%%%%%%%%%%%%%%%%%%%%%%%%%%%%%%%%%%%%%%%%%%%%%%%%%%%%%%%%%%%%%%%%%%%%%%%
\begin{definition} 
\label{def:path:finaccadj}
%%%%%%%%%%%%%%%%%%%%%%%%%%%%%%%%%%%%%%%%%%%%%%%%%%%%%%%%%%%%%%%%%%%%%%%%%%%
A wooded adjunction $R \colon \E \to \C$ is \emph{finitely accessible} when
\begin{enumerate}[(i)]
\item
the categories $\E$ and $\C$ are lfp, and
\item
$R \colon \E \to \C$ is a morphism of lfp categories.
\end{enumerate}
%%%%%%%%%%%%%%%%%%%%%%%%%%%%%%%%%%%%%%%%%%%%%%%%%%%%%%%%%%%%%%%%%%%%%%%%%%%
\end{definition}
%%%%%%%%%%%%%%%%%%%%%%%%%%%%%%%%%%%%%%%%%%%%%%%%%%%%%%%%%%%%%%%%%%%%%%%%%%%

\noindent
Recall from Definition~\ref{def:lfp} that 
morphisms of lfp categories are limit-preserving finitary functors.
In fact, a functor between lfp categories is a morphism of lfp
categories exactly when it is a finitary right adjoint
(see Remark~\ref{rem:finitely-accessible-adj} below).

We already noted in Example~\ref{ex:lfp:struct}
that categories of the form $\Struct(\Sig)$ (and, in particular, of the form $\Struct(\sig)$)
are always lfp.
Hence the ``extensional'' category $\E$ is lfp
in all the examples of~\S\ref{sec:path:adjunction}.
To establish that the wooded categories of interest are also lfp, we shall rely on the following criterion, which is a direct consequence of Lemma~\ref{l:lfp-criterion}.

%%%%%%%%%%%%%%%%%%%%%%%%%%%%%%%%%%%%%%%%%%%%%%%%%%%%%%%%%%%%%%%%%%%%%%%%%%%
\begin{lemma}
\label{lem:path:finaccadj:lfp}
%%%%%%%%%%%%%%%%%%%%%%%%%%%%%%%%%%%%%%%%%%%%%%%%%%%%%%%%%%%%%%%%%%%%%%%%%%%
A wooded category $\C$ is lfp provided the following conditions are satisfied:
\begin{enumerate}[(i)]
\item
\label{item:path:finaccadj:lfp:cocompl}
$\C$ is cocomplete,

\item
\label{item:path:finaccadj:lfp:pathlp}
every path in $\C$ is finitely presentable, and

\item
\label{item:path:finaccadj:lfp:pathdense}
$\C$ admits a small dense full subcategory consisting of paths. 
\end{enumerate}
\end{lemma}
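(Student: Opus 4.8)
The plan is to verify the three hypotheses of Lemma~\ref{l:lfp-criterion} for the wooded category $\C$. That lemma requires $\C$ to be cocomplete, locally small, and to admit an essentially small full dense subcategory consisting of finitely presentable objects. Cocompleteness is hypothesis~\eqref{item:path:finaccadj:lfp:cocompl}, and local smallness is part of the definition of a wooded category, so these are immediate. It therefore remains to exhibit the required dense subcategory and to check that its objects are finitely presentable.

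First I would take the small full subcategory $\cat K$ of paths provided by hypothesis~\eqref{item:path:finaccadj:lfp:pathdense}. By hypothesis~\eqref{item:path:finaccadj:lfp:pathlp}, every path in $\C$ is finitely presentable, so in particular every object of $\cat K$ is finitely presentable, which is exactly the requirement that $\cat K$ consist of finitely presentable objects. Thus the only remaining point is that the density hypothesis~\eqref{item:path:finaccadj:lfp:pathdense} is precisely what Lemma~\ref{l:lfp-criterion} asks for: $\cat K$ is a small full dense subcategory. With cocompleteness, local smallness, density, and finite presentability all in hand, Lemma~\ref{l:lfp-criterion} applies directly and yields that $\C$ is lfp.

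In other words, the three conditions in the statement have been arranged to map one-to-one onto the hypotheses of Lemma~\ref{l:lfp-criterion}, with local smallness supplied for free by the wooded structure. The proof is essentially a matter of observing that condition~\eqref{item:path:finaccadj:lfp:pathdense} gives the dense subcategory $\cat K$ and that condition~\eqref{item:path:finaccadj:lfp:pathlp} guarantees its objects are finitely presentable, so that Lemma~\ref{l:lfp-criterion} can be invoked verbatim.

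I do not anticipate any genuine obstacle here: the lemma is a short deduction from the cited criterion, and the only care needed is to confirm that ``small dense full subcategory consisting of paths'' together with ``every path is finitely presentable'' really does furnish a subcategory satisfying all clauses of Lemma~\ref{l:lfp-criterion} simultaneously (rather than, say, needing the \emph{whole} subcategory of paths to be finitely presentable only after density is established). Since finite presentability of each path is assumed outright, there is no interplay to untangle, and the argument is a direct appeal to the stated criterion.
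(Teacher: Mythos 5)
Your proof is correct and matches the paper's own treatment: the paper states this lemma as a direct consequence of Lemma~\ref{l:lfp-criterion}, with exactly the mapping you describe (cocompleteness from condition~(i), local smallness from the definition of wooded category, and the dense subcategory of finitely presentable objects from conditions~(ii) and~(iii)). Nothing further is needed.
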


%%%%%%%%%%%%%%%%%%%%%%%%%%%%%%%%%%%%%%%%%%%%%%%%%%%%%%%%%%%%%%%%%%%%%%%%%%%
\begin{example}
\label{ex:path:EFk-fin-acc-wooded-adj}
%%%%%%%%%%%%%%%%%%%%%%%%%%%%%%%%%%%%%%%%%%%%%%%%%%%%%%%%%%%%%%%%%%%%%%%%%%%
Let $\sig$ be a finite (mono-sorted, relational) signature.
For all $0 < k \leq \omega$, the (comonadic) wooded adjunction
$\Ladj_k \colon {\cat R^E_k(\sig)} \inadj {\Struct(\sig)} \cocolon R_k$
in Example~\ref{ex:path:bisim-FOk-equivalence}
is finitely accessible.
To see this, note that the comonad $\Ladj_k R_k$ is finitary
(this can be deduced e.g.\ using the criterion in \cite[Theorem~3.4]{amsw2019}).
It follows that $R_k$ is finitary and $\Ladj_k$
preserves and reflects finitely presentable objects,
cf.\ e.g.\ \cite[p.~45]{diers1986} or \cite[Lemma~B.2]{reggioAiM2022}.
Since the underlying $\sig$-structure of any path in $\cat R^E_k(\sig)$ is finite,
hence finitely presentable in $\Struct(\sig)$,
we see that paths in $\cat R^E_k(\sig)$ are finitely presentable.
By Lemma~\ref{lem:path:finaccadj:lfp}, the latter category is lfp;
just observe that $\cat R^E_k(\sig)$ is cocomplete because
$\Struct(\sig)$ has this property and comonadic functors create colimits.

The adjunction in eq.~\eqref{eq:EF-wooded-adjunction} obtained by composing
with the adjunction $H\adj J$ between $\sig$-structures and
$\sig^I$-structures is also a finitely accessible wooded adjunction,
since $J$ (which interprets $I$ as the identity relation) is easily seen to be finitary.
\end{example}

%%%%%%%%%%%%%%%%%%%%%%%%%%%%%%%%%%%%%%%%%%%%%%%%%%%%%%%%%%%%%%%%%%%%%%%%%%%
\begin{remark}
\label{rem:path:EFk-fin-acc-wooded-adj}
%%%%%%%%%%%%%%%%%%%%%%%%%%%%%%%%%%%%%%%%%%%%%%%%%%%%%%%%%%%%%%%%%%%%%%%%%%%
In the previous example, the assumption that~$\sig$ is finite is necessary
for Lemma~\ref{lem:path:finaccadj:lfp} to apply; e.g., if $\sig$ is the infinite signature in Remark~\ref{rem:path:infinite-sig-representability},
there are paths $P$ that are not finitely presentable in $\cat R^E_k(\sig)$
since $\Ladj_k P$ is not finitely presentable in $\Struct(\sig)$.

On the other hand, if the signature $\sig$ is infinite,
we can still prove that the category $\cat R^E_k(\sig)$
is lfp by invoking a result of Diers which provides sufficient conditions
for the category of coalgebras for a comonad to be lfp
(see \cite[Proposition~1.12.1]{diers1986} and also \cite[Theorem~B.5]{reggioAiM2022}).
%%%%%%%%%%%%%%%%%%%%%%%%%%%%%%%%%%%%%%%%%%%%%%%%%%%%%%%%%%%%%%%%%%%%%%%%%%%
\end{remark}
%%%%%%%%%%%%%%%%%%%%%%%%%%%%%%%%%%%%%%%%%%%%%%%%%%%%%%%%%%%%%%%%%%%%%%%%%%%

By a reasoning completely analogous to that in
Example~\ref{ex:path:EFk-fin-acc-wooded-adj},
one can see that the right adjoints
in Example~\ref{ex:path:misc:games} (for finite $\sig$)
are morphisms of lfp categories
and yield finitely accessible wooded adjunctions.
%

%%%%%%%%%%%%%%%%%%%%%%%%%%%%%%%%%%%%%%%%%%%%%%%%%%%%%%%%%%%%%%%%%%%%%%%%%%%
\subsubsection{Detection of path embeddings}
\label{sec:path:detect}
%%%%%%%%%%%%%%%%%%%%%%%%%%%%%%%%%%%%%%%%%%%%%%%%%%%%%%%%%%%%%%%%%%%%%%%%%%%
Consider a finitely accessible wooded adjunction
\[
\begin{tikzcd}
  \C
  \arrow[r, bend left=25, ""{name=U, below}, "\Ladj"{above}]
  \arrow[r, leftarrow, bend right=25, ""{name=D}, "R"{below}]
& \E.
  \arrow[phantom, "\textnormal{\footnotesize{$\bot$}}", from=U, to=D] 
\end{tikzcd}
\]

\noindent
In the examples of~\S\ref{sec:path:finaccadj},
Lemma~\ref{lem:path:finaccadj:lfp} was instrumental
in proving that the wooded category $\C$ is lfp.
In particular, in Example~\ref{ex:path:EFk-fin-acc-wooded-adj},
the paths in the wooded categories $\cat R^E_k(\sig)$ are finitely presentable objects
(this is also the case with Example~\ref{ex:path:misc:games}, for finite~$\sig$).
Moreover, the adjunctions in the aforementioned examples are such 
that a morphism $f$ of the wooded category $\C$ is an embedding
(i.e.\ $f \in \M$) exactly when $\Ladj f$,
the image of $f$
under the left adjoint $\Ladj \colon \C \to \E$,
is an embedding of structures in $\E$.
This motivates the following definition:

%%%%%%%%%%%%%%%%%%%%%%%%%%%%%%%%%%%%%%%%%%%%%%%%%%%%%%%%%%%%%%%%%%%%%%%%%%%
\begin{definition}[Detection of path embeddings]
\label{def:path:detection-path-emb}
%%%%%%%%%%%%%%%%%%%%%%%%%%%%%%%%%%%%%%%%%%%%%%%%%%%%%%%%%%%%%%%%%%%%%%%%%%%
Let $\Sig$ be a signature.
A finitely accessible wooded adjunction
$\Ladj \colon \C \inadj \E \cocolon R$ 
\emph{detects path embeddings in $\Sig$}
when the following conditions hold:
\begin{enumerate}[(i)]

\item
\label{item:path:detection-path-emb:coste}
there is a cartesian theory $\theory$ in $\Sig$ such that $\E = \Mod(\theory)$,

\item
all paths in $\C$ are finitely presentable, and

\item
a morphism $f \colon P \to a$ in $\C$, with $P$ a path,
is an embedding exactly when
$\Ladj f$ is an embedding of $\Sig$-structures in $\E$.
\end{enumerate}
%%%%%%%%%%%%%%%%%%%%%%%%%%%%%%%%%%%%%%%%%%%%%%%%%%%%%%%%%%%%%%%%%%%%%%%%%%%
\end{definition}
%%%%%%%%%%%%%%%%%%%%%%%%%%%%%%%%%%%%%%%%%%%%%%%%%%%%%%%%%%%%%%%%%%%%%%%%%%%

%%%%%%%%%%%%%%%%%%%%%%%%%%%%%%%%%%%%%%%%%%%%%%%%%%%%%%%%%%%%%%%%%%%%%%%%%%%
\begin{remark}
\label{rem:path:structemb}
%%%%%%%%%%%%%%%%%%%%%%%%%%%%%%%%%%%%%%%%%%%%%%%%%%%%%%%%%%%%%%%%%%%%%%%%%%%
Definition~\ref{def:path:detection-path-emb} requires a few comments.
\begin{enumerate}[(1)]
\item
\label{item:path:structemb:rabin}
Note that
condition~\ref{item:path:detection-path-emb:coste}
of Definition~\ref{def:path:detection-path-emb}
ensures that $\E$ is lfp, 
and also a reflective subcategory of $\Struct(\Sig)$
closed under filtered colimits
(cf.\ Remark~\ref{rem:lfp:struct}).

Conversely, a full subcategory $\E$ of $\Struct(\Sig)$
that is reflective and closed under filtered colimits is
lfp \cite[Corollary 1.47]{ar94book},
and so there is a cartesian theory~$\theory$ such that $\E$ is equivalent
to $\Mod(\theory)$ (cf.\ again Remark~\ref{rem:lfp:struct}).
However, we cannot assume that $\theory$ is a theory in the signature $\Sig$; 
this follows from a slight adaptation of an example given in~\cite{rabin62ascfm}
and used in~\cite{volger79mz}.%
\footnote{See also \cite{har01ctgdc},
which makes explicit the failure of~\cite[Theorem 1.39]{ar94book}
in the case of locally \emph{finitely} presentable categories.
\opt{fullproof}{Appendix~\ref{s:lfp-cats-of-substructures} provides some details.}}
This is the reason why in Definition~\ref{def:path:detection-path-emb} we had
to explicitly require the existence of a cartesian theory $\theory$ in $\Sig$ 
such that $\E = \Mod(\theory)$.

\item
Given an lfp category $\E$,
there are several signatures $\Sig$
such that $\E$ is equivalent to 
$\Mod(\theory)$ for a cartesian theory $\theory$ in $\Sig$;
see \cite[Lemma D1.4.9]{johnstone02book}.

For a given $\Sig$,
it follows from 
completeness (in the sense of Remark~\ref{rem:prelim:coste:compl})
that the choice of a cartesian theory $\theory$
such that $\E \cong \Mod(\theory)$ is irrelevant.

On the other hand, the signature $\Sig$ is an important parameter
in Definition~\ref{def:path:detection-path-emb}.
For instance, there is always a cartesian theory $\theory$
in a signature $\Sig$ with no relation symbols and
such that $\E \cong \Mod(\theory)$
(see loc.\ cit.).
It then follows from Example~\ref{ex:prelim:fact:struct}
that the embeddings of structures are exactly the monomorphisms
in $\Struct(\Sig)$, and thus also in $\E$,
since the reflection $\E \into \Struct(\Sig)$ preserves and reflects monos.

\item
\label{item:path:structemb:strong}
As we will see in Remark~\ref{rem:emb:strong}, 
given a category $\E \into \Struct(\Sig)$
as in Definition~\ref{def:path:detection-path-emb},
every strong monomorphism in $\E$ is an embedding of structures.

We have seen in Example~\ref{ex:prelim:fact:struct}
that the converse holds in the case of $\E = \Struct(\Sig)$,
i.e.\ that the strong monos of $\Struct(\Sig)$
are exactly the embeddings of structures.
This is also true in the case of
$\Pos \into \Struct(\sig)$
(with $\sig = \{\Leq\}$), see Example~\ref{ex:path:wooded:pos}.

However, there are cartesian theories $\theory$ in a signature $\Sig$
such that embeddings of $\Sig$-structures need not be strong monos in $\Mod(\theory)$,
even when their domains are finitely presentable.
E.g., in the signature of monoids, the monoid inclusion $\iota \colon (\NN,+,0) \emb (\ZZ,+,0)$
is an embedding of structures whose domain is finitely presentable,
but $\iota$ is not a strong monomorphism in the category of monoids.

%%%%%%%%%%%%%%%%%%%%%%%%%%%%%%%%%%%%%%%%%%%%%%%%%%%%%%%%%%%%%%%%%%%%%%%%%%%
\begin{fullproof}
%%%%%%%%%%%%%%%%%%%%%%%%%%%%%%%%%%%%%%%%%%%%%%%%%%%%%%%%%%%%%%%%%%%%%%%%%%%
We note the following.
\begin{enumerate}[(a)]
\item
The monoid $(\NN,+,0)$ is finitely presentable in $\Mod(\th M)$,
since it is the free monoid on one generator.

\item
$\iota$ is an embedding of structures in $\Mod(\th M)$,
since $\th M$ is an algebraic theory
while $\iota$ reflects equality.

\item
$\iota$ is epimorphic.
Indeed,
consider monoid morphisms
$h,k \colon (\ZZ,+,0) \to (M,\cdot,1)$
such that $h(n) = k(n)$ for all $n \in \NN$.
Then given $n \in \NN$ we have
$k(n) \cdot h(-n) = h(n) \cdot h(-n) = h(0) = 1$,
so that $k(-n) = h(-n)$ since in a monoid
inverses are unique whenever they exist.

\item
$\iota$ is not strong
since there is no diagonal filler $d$ as in
\[
\begin{tikzcd}
  \NN
  \arrow{r}{\id_\NN}
  \arrow{d}[left]{\iota}
& \NN
  \arrow{d}{\iota}
\\
  \ZZ
  \arrow{r}[below]{\id_\ZZ}
  \arrow{ur}[description]{d}
& \ZZ
\end{tikzcd}
\]
\qedhere
\end{enumerate}
%%%%%%%%%%%%%%%%%%%%%%%%%%%%%%%%%%%%%%%%%%%%%%%%%%%%%%%%%%%%%%%%%%%%%%%%%%%
\end{fullproof}
%%%%%%%%%%%%%%%%%%%%%%%%%%%%%%%%%%%%%%%%%%%%%%%%%%%%%%%%%%%%%%%%%%%%%%%%%%%

\item
\label{item:path:structemb:factorisation}
Let $\C$ be a wooded category with a stable factorisation system $(\Q,\M)$.
Suppose there exists a comonadic adjunction
$\Ladj \colon \C \inadj \E \cocolon R$,
with $\E = \Struct(\Sig)$ for some~$\Sig$,
such that for \emph{any} morphism $f$ in $\C$,
we have $f \in \M$
if, and only if, $\Ladj f$ is an embedding of structures in $\E$.
Then $\Ladj \adj R$ of course detects path embeddings in~$\Sig$.
Moreover, it is not difficult to see that $\Q=\{\text{epis}\}$ and
$\M=\{\text{strong monos}\}$.

%%%%%%%%%%%%%%%%%%%%%%%%%%%%%%%%%%%%%%%%%%%%%%%%%%%%%%%%%%%%%%%%%%%%%%%%%%%
\begin{fullproof}
%%%%%%%%%%%%%%%%%%%%%%%%%%%%%%%%%%%%%%%%%%%%%%%%%%%%%%%%%%%%%%%%%%%%%%%%%%%
By (the dual of) \cite[Proposition 20.12]{ahs06book},
if the adjunction $\Ladj \adj R$ is comonadic,
then the functor $\Ladj$ is conservative
(i.e.\ reflects isomorphisms).
Also, note that $\Ladj$ preserves epimorphisms since it is a left adjoint.

Consider an epimorphism $e \colon a \to b$ in $\C$,
and take a $(\Q,\M)$ factorisation of $e$, say $e = m \comp q$.
Since $\Ladj m$ is a strong mono
and $\Ladj e$ is epi in $\E$,
we have diagonal filler $d$, as in
\[
\begin{tikzcd}
  \Ladj a
  \arrow{d}[left]{\Ladj e}
  \arrow{r}{\Ladj q}
& \unit
  \arrow[rightarrowtail]{d}{\Ladj m}
\\
  \Ladj b
  \arrow{r}[below]{\id}
  \arrow{ur}[description]{d}
& \Ladj b
\end{tikzcd}
\]

\noindent
It follows that the mono $\Ladj m$ is also a split epi, and thus an iso.
But then $m$ is an iso since $\Ladj$ is conservative,
so that $e \in \Q$ by Lemma~\ref{lem:prelim:fact:base}.
In particular, $e \in \Q$ iff $e$ is an epimorphism of $\C$.

Then that $\M$ consists exactly of the strong monomorphisms
follows from the fact that 
$\M=\{m\mid \forall e\in \Q, e\pitchfork m\}$
(Definition~\ref{def:weak-f-s}).
%%%%%%%%%%%%%%%%%%%%%%%%%%%%%%%%%%%%%%%%%%%%%%%%%%%%%%%%%%%%%%%%%%%%%%%%%%%
\end{fullproof}
%%%%%%%%%%%%%%%%%%%%%%%%%%%%%%%%%%%%%%%%%%%%%%%%%%%%%%%%%%%%%%%%%%%%%%%%%%%

\end{enumerate}
%%%%%%%%%%%%%%%%%%%%%%%%%%%%%%%%%%%%%%%%%%%%%%%%%%%%%%%%%%%%%%%%%%%%%%%%%%%
\end{remark}
%%%%%%%%%%%%%%%%%%%%%%%%%%%%%%%%%%%%%%%%%%%%%%%%%%%%%%%%%%%%%%%%%%%%%%%%%%%

In all the examples of wooded adjunctions we have given so far,
the ``extensional'' category~$\E$ is merely a category of structures,
while Definition~\ref{def:path:detection-path-emb} only requires that~$\E$
be a category of models of a cartesian theory.
Example~\ref{ex:path:pos} below is a (trivial) instance of
Definition~\ref{def:path:detection-path-emb}
in which the extensional category is the category of models of a non-empty Horn theory.

%%%%%%%%%%%%%%%%%%%%%%%%%%%%%%%%%%%%%%%%%%%%%%%%%%%%%%%%%%%%%%%%%%%%%%%%%%%
\begin{example}
\label{ex:path:pos}
%%%%%%%%%%%%%%%%%%%%%%%%%%%%%%%%%%%%%%%%%%%%%%%%%%%%%%%%%%%%%%%%%%%%%%%%%%%
Consider the self-adjunction
\[
\begin{tikzcd}
  \Pos
  \arrow[r, bend left=25, ""{name=U, below}, "\Id"{above}]
  \arrow[r, leftarrow, bend right=25, ""{name=D}, "\Id"{below}]
& \Pos.
  \arrow[phantom, "\textnormal{\footnotesize{$\bot$}}", from=U, to=D] 
\end{tikzcd}
\]

\noindent
We have seen in Example~\ref{ex:path:wooded:pos} that the lfp category $\Pos$
is wooded when equipped with the factorisation system $(\Q,\M)$,
where $\Q$ is the class of surjective morphisms
and $\M$ the class of order embeddings.
Hence, $\Id \colon \Pos \inadj \Pos \cocolon \Id$
is a finitely accessible wooded adjunction
which detects path embeddings. 
%%%%%%%%%%%%%%%%%%%%%%%%%%%%%%%%%%%%%%%%%%%%%%%%%%%%%%%%%%%%%%%%%%%%%%%%%%%
\end{example}
%%%%%%%%%%%%%%%%%%%%%%%%%%%%%%%%%%%%%%%%%%%%%%%%%%%%%%%%%%%%%%%%%%%%%%%%%%%

We return to Example~\ref{ex:path:pos} in~\S\ref{sec:fact} below,
where we discuss whether a (finitely accessible)
wooded adjunction of the form $\C \inadj \Mod(\theory)$
factors through the reflection $\Mod(\theory) \into \Struct(\Sig)$,
where $\Sig$ is the signature of $\theory$
(see Remark~\ref{rem:lfp:struct}).

%%%%%%%%%%%%%%%%%%%%%%%%%%%%%%%%%%%%%%%%%%%%%%%%%%%%%%%%%%%%%%%%%%%%%%%%%%%
\subsection{Main result}
\label{sec:path:results}
%%%%%%%%%%%%%%%%%%%%%%%%%%%%%%%%%%%%%%%%%%%%%%%%%%%%%%%%%%%%%%%%%%%%%%%%%%%
With Definition~\ref{def:path:detection-path-emb} at hand,
we can state our main result:

%%%%%%%%%%%%%%%%%%%%%%%%%%%%%%%%%%%%%%%%%%%%%%%%%%%%%%%%%%%%%%%%%%%%%%%%%%%
\begin{restatable}{theorem}{maintheorem}
\label{thm:path:main}
%%%%%%%%%%%%%%%%%%%%%%%%%%%%%%%%%%%%%%%%%%%%%%%%%%%%%%%%%%%%%%%%%%%%%%%%%%%
Let $\Sig$ be a signature and assume that
$\Ladj \colon \C \inadj \E \cocolon R$ 
is a finitely accessible wooded adjunction
that detects path embeddings in $\Sig$.
For all $M,N \in \E$,
\[
\begin{array}{l l l}
  \text{$M,N$ equivalent in $\Lang_{\infty}(\Sig)$}
& \longimp
& M \bisim_R N.
\end{array}
\]
%%%%%%%%%%%%%%%%%%%%%%%%%%%%%%%%%%%%%%%%%%%%%%%%%%%%%%%%%%%%%%%%%%%%%%%%%%%
\end{restatable}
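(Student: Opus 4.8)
The plan is to reduce the statement to a form where Hintikka-style formulae can be constructed in the wooded category $\C$, and then to transport those formulae across the adjunction using Gabriel-Ulmer duality and Hodges' word-constructions, as the introduction advertises. Concretely, suppose $M, N \in \E$ are $\Lang_\infty(\Sig)$-equivalent. I want to show that Duplicator wins the back-and-forth game $\G(RM, RN)$ from the root position $(\bot_{RM}, \bot_{RN})$. The natural strategy is to prove the contrapositive via an ordinal rank analysis: the game $\G(a,b)$ is closed and carries a notion of ordinal rank on positions (in the sense of~\cite[\S 3.4]{hodges93book} or~\cite[\S 20B]{kechris95book}), and Duplicator wins from the root exactly when the root positions have ``equal rank'' (or, more precisely, when no position is assigned a divergent rank). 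So it suffices to show that the ordinal ranks of positions in $\G(RM, RN)$ are controlled by the $\Lang_\infty(\Sig)$-theories of $M$ and $N$.

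First I would fix, via Gabriel-Ulmer duality, a signature $\Sigbis$ together with a cartesian theory such that $\C \cong \Mod(\Sigbis\text{-theory})$; this is possible because $\C$ is lfp (Remark~\ref{rem:lfp:struct}). Working inside Coste's syntactic category, for each object $a \in \C$ and each ordinal $\alpha$ I would construct by transfinite induction a \emph{Hintikka formula} $\chi^\alpha_{(m,n)}$ (written in $\Sigbis$) that captures, for a given position, the fact that Duplicator can survive $\alpha$ rounds. The inductive step uses the game dynamics of Definition~\ref{def:games}: a formula at rank $\alpha+1$ is built as a conjunction over Spoiler's possible extensions $m' \succ m$ (respectively $n'' \succ n$) of a disjunction over Duplicator's responses, each wrapped in a suitable quantifier. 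The crucial point is that these moves correspond to \emph{path embeddings}, so the hypothesis that $\Ladj \colon \C \inadj \E \cocolon R$ detects path embeddings in $\Sig$ lets me express ``is a path embedding'' by a first-order condition; this is exactly where definability of embeddings enters. This step yields a weaker version of the theorem phrased entirely inside $\C$: two objects $a, b$ of $\C$ are back-and-forth equivalent if they satisfy the same Hintikka formulae of every rank.

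Next I would translate these $\C$-side formulae into formulae about the extensional objects. The Hintikka formulae for $RM$ live in $\Sigbis$ but, through the detection-of-embeddings hypothesis, they refer to path embeddings whose images under $\Ladj$ are $\Sig$-structure embeddings; hence they depend on the $\Lang_\infty(\Sig)$-theory of $\Ladj R M$. Applying Gabriel-Ulmer duality to the right adjoint $R$ induces an interpretation of theories in the reverse direction, which converts Hintikka formulae for $RM$ in $\C$ into $\Lang_\infty(\Sig)$-formulae over $M$ in $\E$. The remaining gap is that the formulae naturally mention $\Ladj R M$ rather than $M$ itself; to bridge $\Ladj R M$ and $M$ I would invoke Hodges' word-constructions~\cite{hodges74,hodges75la} (cf.~\cite[\S 9.3]{hodges93book}) to rewrite the infinitary theory of $\Ladj R M$ in terms of the infinitary theory of $R M$, and ultimately of $M$. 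Putting the pieces together, $\Lang_\infty(\Sig)$-equivalence of $M$ and $N$ forces $RM$ and $RN$ to satisfy the same Hintikka formulae of all ranks, whence $RM \bisim RN$, i.e.\ $M \bisim_R N$.

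The main obstacle I anticipate is the translation step, not the construction of the Hintikka formulae. Defining the ranks and the formulae inside $\C$ is a fairly standard (if delicate) transfinite induction, and the detection-of-embeddings hypothesis is designed precisely to make the relevant predicates definable. The genuinely hard part is keeping control of the free-variable/context bookkeeping across the two syntax-semantics translations: the passage through $R$ is clean because it is induced functorially by Gabriel-Ulmer duality, but the passage relating $\Ladj R M$ to $M$ goes ``the wrong way'' (it involves the \emph{left} adjoint, for which no such interpretation is available), so the word-construction must be set up carefully to respect the finitary character of $R$ and the requirement that all subformulae have finitely many free variables. Ensuring that the resulting formulae genuinely lie in $\Lang_\infty(\Sig)$ — and that the rank information is preserved rather than collapsed — is where I expect the bulk of the technical work to concentrate.
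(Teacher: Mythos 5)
Your proposal follows essentially the same route as the paper: ordinal ranks and Hintikka formulae built inside $\C$ under a definability-of-path-embeddings assumption, transfer to $\E$ via the Gabriel-Ulmer interpretation induced by the right adjoint $R$, and Hodges' word-constructions applied to the left adjoint $\Ladj$ to derive that definability from the detection hypothesis. The only step you leave tacit is that embeddings of $\Sig$-structures with finitely presentable domain must themselves be shown definable in $\Lang_\infty(\Sig)$ over an arbitrary $\Mod(\theory)$ — the paper devotes a whole section to this (via $\At$-reflecting homomorphisms and diagonal-filler characterisations), because the naive conjunction-of-negated-atoms formula only works when $\E = \Struct(\sig)$ for a finite relational $\sig$.
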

%%%%%%%%%%%%%%%%%%%%%%%%%%%%%%%%%%%%%%%%%%%%%%%%%%%%%%%%%%%%%%%%%%%%%%%%%%%

We now comment on the previous result and some of its applications.
First of all, let us indicate how Ehrenfeucht-Fraïssé games fit into
the picture of Theorem~\ref{thm:path:main}.

%%%%%%%%%%%%%%%%%%%%%%%%%%%%%%%%%%%%%%%%%%%%%%%%%%%%%%%%%%%%%%%%%%%%%%%%%%%
\begin{example}
\label{ex:path:main:fo}
%%%%%%%%%%%%%%%%%%%%%%%%%%%%%%%%%%%%%%%%%%%%%%%%%%%%%%%%%%%%%%%%%%%%%%%%%%%
Let $k \leq \omega$ and
consider the adjunctions of Example~\ref{ex:path:bisim-FOk-equivalence}:
\[
\begin{tikzcd}
{\cat R^E_k(\sig^I)} \arrow[r, bend left=25, ""{name=U, below}, "\Ladj_k"{above}]
\arrow[r, leftarrow, bend right=25, ""{name=D}, "R_k"{below}]
& {\Struct(\sig^I)} \arrow[r, bend left=25, ""{name=U', below}, "H"{above}]
\arrow[r, leftarrow, bend right=25, ""{name=D'}, "J"{below}] & {\Struct(\sig)}
\arrow[phantom, "\textnormal{\footnotesize{$\bot$}}", from=U, to=D] 
\arrow[phantom, "\textnormal{\footnotesize{$\bot$}}", from=U', to=D'] 
\end{tikzcd}
\]

\noindent
The composite adjunction does \emph{not} detect path embeddings in $\sig$,
but the comonadic adjunction $L_k \dashv R_k$ detects path embeddings in $\sig^I$.

So, given $\sig$-structures $M$ and $N$, Theorem~\ref{thm:path:main}
implies that $J M \bisim_{R_k} J N$ whenever $J M$ and $J N$
are $\Lang_{\infty}(\sig^I)$-equivalent.
Since $J \colon \Struct(\sig) \to \Struct(\sig^I)$
interprets the relation~$I$ as the identity,
it follows that $M$ and $N$ are $R^I_k$-equivalent whenever they are
$\Lang_\infty(\sig)$-equivalent. (In fact, as we saw in Example~\ref{ex:path:bisim-FOk-equivalence}, $R^I_k$-equivalence
coincides with equivalence 
in the fragment of $\Lang_{\omega}(\sig)$ with quantifier rank at most $k$.)
%%%%%%%%%%%%%%%%%%%%%%%%%%%%%%%%%%%%%%%%%%%%%%%%%%%%%%%%%%%%%%%%%%%%%%%%%%%
\end{example}
%%%%%%%%%%%%%%%%%%%%%%%%%%%%%%%%%%%%%%%%%%%%%%%%%%%%%%%%%%%%%%%%%%%%%%%%%%%

Under the assumptions of Theorem~\ref{thm:path:main},
the relation of $R$-equivalence cannot distinguish between objects of $\E$
that satisfy the same sentences in $\Lang_{\infty}$.
In general, this prevents the relation $\bisim_{R}$ from collapsing to the isomorphism relation.

%%%%%%%%%%%%%%%%%%%%%%%%%%%%%%%%%%%%%%%%%%%%%%%%%%%%%%%%%%%%%%%%%%%%%%%%%%%
\begin{example}
\label{ex:path:main:karp}
%%%%%%%%%%%%%%%%%%%%%%%%%%%%%%%%%%%%%%%%%%%%%%%%%%%%%%%%%%%%%%%%%%%%%%%%%%%
Recall from Example~\ref{ex:path:karp}
that the linear orders $(\QQ,<)$ and $(\RR,<)$ are $\Lang_\infty$-equivalent
in the signature $\sig = \{\Lt\}$.
It follows from Theorem~\ref{thm:path:main}
that $(\QQ,<) \bisim_R (\RR,<)$
for any
$\Ladj \colon \C \inadj \Struct(\sig) \cocolon R$
that detects path embeddings in $\sig = \{\Lt\}$.
%%%%%%%%%%%%%%%%%%%%%%%%%%%%%%%%%%%%%%%%%%%%%%%%%%%%%%%%%%%%%%%%%%%%%%%%%%%
\end{example}
%%%%%%%%%%%%%%%%%%%%%%%%%%%%%%%%%%%%%%%%%%%%%%%%%%%%%%%%%%%%%%%%%%%%%%%%%%%

Note that while the isomorphism type of a finite structure is 
definable in finitary first-order logic $\Lang_\omega$,
this is no longer the case for the finite-variable fragments
(see e.g.~\cite[3.3.6(a)]{ef99book}).
Besides, Corollary 8 in~\cite{adw17lics} (on pebble games for finite-variable logics)
forces $\E$ to contain infinite objects in general,
on which $\Lang_{\infty}$ is strictly more expressive than $\Lang_{\omega}$.%
\footnote{Also, in contrast with $\Lang_\infty$,
not every isomorphism-class of finite structures is definable in 
finite-variable logic with infinitary propositional connectives,
%(for which infinitary logic is non-trivial, see e.g.~\cite{ef99book,libkin04book}).
see e.g~\cite[Example 3.3.13]{ef99book}.}

Example~\ref{ex:path:main:karp} has the following consequence concerning game comonads for
Monadic Second-Order Logic ($\MSO$).

%%%%%%%%%%%%%%%%%%%%%%%%%%%%%%%%%%%%%%%%%%%%%%%%%%%%%%%%%%%%%%%%%%%%%%%%%%%
\begin{example}
\label{ex:mso}
%%%%%%%%%%%%%%%%%%%%%%%%%%%%%%%%%%%%%%%%%%%%%%%%%%%%%%%%%%%%%%%%%%%%%%%%%%%
Recall that $\MSO$ in a given signature $\sig$ is 
the extension of
first-order logic in $\sig$
with monadic (i.e.\ one-place) set variables,
and quantification over them.
In particular,
every $M \in \Struct(\sig)$ can be extended to a model
of $\MSO(\sig)$ in which set variables are allowed to range
over all subsets of $M$
(such models are called \emph{standard}).

A comonadic approach to Ehrenfeucht-Fraïssé games for $\MSO$
was proposed in~\cite{jms22mso}, by viewing $\MSO$ in a signature~$\sig$
as a two-sorted first-order logic in an extension~$\Sig$ of~$\sig$
with one sort for (monadic) set variables,
and a membership relation between the two sorts.
In particular, the $\MSO(\sig)$-theory of a standard model
is contained in its $\Lang_\infty(\Sig)$-theory,
and so $\Lang_\infty(\Sig)$-equivalent standard models are also $\MSO(\sig)$-equivalent.

On the other hand, Theorem~\ref{thm:path:main} captures known differences
between equivalence in $\Lang_\infty(\sig)$ and in $\MSO(\sig)$:
it implies that for $\sig = \{\Lt\}$, there is no adjunction
\[
\begin{tikzcd}[column sep=large]
  \C
  \arrow[bend left,
    start anchor={[xshift=0ex, yshift=0ex]north east}]{r}[above]{\Ladj}
  \arrow[bend right, leftarrow,
    start anchor={[xshift=0ex, yshift=0ex]south east}]{r}[below]{R}
  \arrow[phantom]{r}[description, xshift=5pt]{\textnormal{\footnotesize{$\bot$}}}
& \Struct(\sig)
\end{tikzcd}
\]

\noindent
that detects path embeddings in $\sig$
and such that given $M,N \in \Struct(\sig)$, 
the standard models over $M$ and $N$ are $\MSO(\sig)$-equivalent 
exactly when $M$ and $N$ are $R$-equivalent.

Indeed, consider the linear orders $(\QQ,<)$ and $(\RR,<)$.
We know from Example~\ref{ex:path:main:karp} that $(\QQ,<) \bisim_R (\RR,<)$.
On the other hand, it is well known
(and easy to see from~\cite[\S 2.4]{rosenstein82book})
that $\MSO(\sig)$ can express Dedekind cuts and thus order-completeness.
Hence $(\QQ,<)$ and $(\RR,<)$ are not $\MSO(\sig)$-equivalent.%
\footnote{Moreover,
%it is well-known that
the $\MSO(\sig)$-theory of $(\QQ,<)$
is decidable \cite[proof of Theorem 2.1]{rabin69tams},
while the $\MSO(\sig)$-theory of $(\RR,<)$
is undecidable \cite[Theorem 7]{shelah75am}.}
%%%%%%%%%%%%%%%%%%%%%%%%%%%%%%%%%%%%%%%%%%%%%%%%%%%%%%%%%%%%%%%%%%%%%%%%%%%
\end{example}
%%%%%%%%%%%%%%%%%%%%%%%%%%%%%%%%%%%%%%%%%%%%%%%%%%%%%%%%%%%%%%%%%%%%%%%%%%%

%%%%%%%%%%%%%%%%%%%%%%%%%%%%%%%%%%%%%%%%%%%%%%%%%%%%%%%%%%%%%%%%%%%%%%%%%%%
\subsubsection{Outline of the following sections}
\label{sec:path:results:outline}
%%%%%%%%%%%%%%%%%%%%%%%%%%%%%%%%%%%%%%%%%%%%%%%%%%%%%%%%%%%%%%%%%%%%%%%%%%%
The proof of Theorem~\ref{thm:path:main}
will occupy us for much of the remainder of the paper. 
In the next section we provide some background on arboreal categories,
since many of our examples of wooded categories fall into this class.
In~\S\S\ref{sec:lfp}--\ref{sec:coste} we recall some technical material on lfp categories and functorial semantics.
In~\S\ref{sec:hintikka} we obtain a weaker version of Theorem~\ref{thm:path:main},
where the assumption of detection of path embeddings is replaced with an
assumption of ``definability of path embeddings'' in a wooded lfp category $\C$.
Then in \S\ref{sec:emb} we devise formulae for substructure embeddings
in an lfp category $\E$.
We complete the proof of Theorem~\ref{thm:path:main} in~\S\ref{sec:wc}
by showing that these formulae can be translated back along \emph{left adjoints},
so that path embeddings are definable in $\C$
whenever they are detected by $\Ladj \colon \C \inadj \E \cocolon R$.
This last step relies on Hodges' \emph{word-constructions}~\cite{hodges74,hodges75la}.
%%%%%%%%%%%%%%%%%%%%%%%%%%%%%%%%%%%%%%%%%%%%%%%%%%%%%%%%%%%%%%%%%%%%%%%%%%%
\section{Arboreal categories}
\label{sec:arboreal}
%%%%%%%%%%%%%%%%%%%%%%%%%%%%%%%%%%%%%%%%%%%%%%%%%%%%%%%%%%%%%%%%%%%%%%%%%%%

In this section we recall the notion of arboreal categories, a special class of wooded categories in which the back-and-forth equivalence relation can be described in terms of open map bisimilarity. We then give an example of a class of arboreal categories that does not arise from game comonads, namely the categories of presheaves over forest orders.

%%%%%%%%%%%%%%%%%%%%%%%%%%%%%%%%%%%%%%%%%%%%%%%%%%%%%%%%%%%%%%%%%%%%%%%%%%%
\subsection{The axioms of arboreal categories}
\label{sec:arboreal:def}
%%%%%%%%%%%%%%%%%%%%%%%%%%%%%%%%%%%%%%%%%%%%%%%%%%%%%%%%%%%%%%%%%%%%%%%%%%%
We recall from \cite{ar21icalp,ar21arboreal} the main definitions and
facts pertaining to the theory of arboreal categories.

%%%%%%%%%%%%%%%%%%%%%%%%%%%%%%%%%%%%%%%%%%%%%%%%%%%%%%%%%%%%%%%%%%%%%%%%%%%
\begin{definition}
\label{def:arboreal:material}
%%%%%%%%%%%%%%%%%%%%%%%%%%%%%%%%%%%%%%%%%%%%%%%%%%%%%%%%%%%%%%%%%%%%%%%%%%%
Let $\C$ be a well-powered category
equipped with a proper factorisation system $(\Q,\M)$,
and let $a$ be an object of $\C$.
\begin{enumerate}[(1)]
\item
We say that $a$ is \emph{connected} if,
for all non-empty sets of paths $\{P_i\mid i\in I\}$ admitting a coproduct $b$ in $\C$,
any arrow $a\to b$
factors through some coproduct map $P_i\to b$.%
\footnote{This differs from the usual notion, which asks
$\C\funct{a,-}$ to preserve all existing coproducts.
Note that initial objects are in general not connected in that sense,
while they are always connected in our sense.}

\item
Consider the diagram with vertex $a$ consisting of all path embeddings into $a$.
The morphisms between paths are those making the obvious triangles commute:
\[
\begin{tikzcd}[column sep=small, row sep=normal]
& a
&
\\
  P
  \arrow[bend left=20,rightarrowtail]{ur}
  \arrow[rightarrowtail]{rr}
&
& Q
  \arrow[bend right=20,rightarrowtail]{ul}
\end{tikzcd}
\]

\noindent
Choosing representatives in a suitable way,
this yields a cocone with vertex $a$ over the small diagram $\Path{a}$.
We say that $a$ is \emph{path-generated} provided this is a colimit cocone in $\C$.
\end{enumerate}
\end{definition}

%%%%%%%%%%%%%%%%%%%%%%%%%%%%%%%%%%%%%%%%%%%%%%%%%%%%%%%%%%%%%%%%%%%%%%%%%%%
\begin{definition}
\label{def:arboreal-cat}
%%%%%%%%%%%%%%%%%%%%%%%%%%%%%%%%%%%%%%%%%%%%%%%%%%%%%%%%%%%%%%%%%%%%%%%%%%%
An \emph{arboreal category} is a locally small and well powered category $\A$,
equipped with a stable factorisation system, satisfying the following conditions:
\begin{enumerate}[(i)]
\item
\label{ax:2-out-of-3}
For any paths $P,Q,Q'$ in $\A$,
if a composite $P\to Q \to Q'$ is a quotient then so is $P\to Q$
(``2-out-of-3 condition'').

\item
\label{ax:colimits}
$\A$ has all coproducts of sets of paths.

\item
\label{ax:connected}
Every path in $\A$ is connected.

\item
\label{ax:path-generated}
Every object of $\A$ is path-generated. 
\end{enumerate}

\noindent
The full subcategory of $\A$ defined by the paths is denoted by~$\pth\A$.
%%%%%%%%%%%%%%%%%%%%%%%%%%%%%%%%%%%%%%%%%%%%%%%%%%%%%%%%%%%%%%%%%%%%%%%%%%%
\end{definition}
%%%%%%%%%%%%%%%%%%%%%%%%%%%%%%%%%%%%%%%%%%%%%%%%%%%%%%%%%%%%%%%%%%%%%%%%%%%

Throughout, we assume that for any arboreal category $\A$, its subcategory $\pth\A$ is \emph{essentially small}
(i.e., equivalent to a small category). Note, in particular, that every arboreal category is wooded, since item~\ref{ax:colimits} in Definition~\ref{def:arboreal-cat} implies the existence of an initial object $\zero$.
Similarly to the case of wooded adjunctions, an \emph{arboreal adjunction} is an adjunction
$L \colon \A \inadj \E \cocolon R$
where $\A$ is an arboreal category.
An arboreal adjunction is \emph{finitely accessible} if it is
finitely accessible as a wooded adjunction.

%%%%%%%%%%%%%%%%%%%%%%%%%%%%%%%%%%%%%%%%%%%%%%%%%%%%%%%%%%%%%%%%%%%%%%%%%%%
\begin{example}
\label{ex:forests-trees-presheaves}
%%%%%%%%%%%%%%%%%%%%%%%%%%%%%%%%%%%%%%%%%%%%%%%%%%%%%%%%%%%%%%%%%%%%%%%%%%%
The category $\Tree$ (and thus also $\Forest$) is arboreal when equipped with the
(epimorphisms, monomorphisms) factorisation system.
This was shown in~\cite[Example 5.4]{ar21arboreal}
and is a special case of Theorem~\ref{thm:arboreal:presh} below;
see Remark~\ref{rem:arboreal:presh}.
%%%%%%%%%%%%%%%%%%%%%%%%%%%%%%%%%%%%%%%%%%%%%%%%%%%%%%%%%%%%%%%%%%%%%%%%%%%
\end{example}
%%%%%%%%%%%%%%%%%%%%%%%%%%%%%%%%%%%%%%%%%%%%%%%%%%%%%%%%%%%%%%%%%%%%%%%%%%%

All the examples of wooded categories that arise from game comonads given
in~\S\ref{sec:path}
are arboreal (cf.\ Examples~\ref{ex:path:R}, \ref{ex:path:R^E} and~\ref{ex:path:misc}).
An example of a wooded category that is not arboreal is $\Pos$,
since the latter is not path-generated,
cf.\ Examples~\ref{ex:prelim:coste:pos} and~\ref{ex:path:wooded:pos}.

An important property of arboreal categories is that, for any such category $\A$,
the assignment $a\mapsto\Path{a}$ extends to a functor $\Path \colon \A \to \Tree$
\cite[Theorem~3.11]{ar21arboreal}.
The action of $\Path$ on morphisms is defined as follows:
for all arrows $f\colon a\to b$ in $\A$,
$\Path{f}\colon \Path{a}\to\Path{b}$ is the forest morphism sending
a path embedding $m\colon P\emb a$
to the path embedding $\exists_f P\emb b$ obtained by taking the
(quotient, embedding) factorisation of $f \comp m\colon P \epi \exists_f P \emb b$.
This assignment is well-defined because factorisations are unique up to isomorphism,
and a quotient of a path is again a path
(Lemma~\ref{lem:path:base}).

When $\A$ is one of the categories of forest-ordered $\sig$-structures introduced in~\S\ref{sec:path}, such as $\cat R_k(\sig)$, $\cat R^E_k(\sig)$ or $\cat R^P_k(\sig)$, the functor $\Path$ is naturally isomorphic to the composition of the obvious forgetful functor $\A\to \Forest$ with the equivalence $\Forest\to\Tree$ that adds a root, corresponding to the embedding of the empty substructure. If $\A$ is the category $\Tree$, then $\Path$ is naturally isomorphic to the identity functor $\Tree\to\Tree$.

%%%%%%%%%%%%%%%%%%%%%%%%%%%%%%%%%%%%%%%%%%%%%%%%%%%%%%%%%%%%%%%%%%%%%%%%%%%
\begin{remark} %[Path density]
\label{rem:path-density}
%%%%%%%%%%%%%%%%%%%%%%%%%%%%%%%%%%%%%%%%%%%%%%%%%%%%%%%%%%%%%%%%%%%%%%%%%%%
Assuming conditions~\ref{ax:2-out-of-3}--\ref{ax:connected}
of Definition~\ref{def:arboreal-cat},
condition~\ref{ax:path-generated} is equivalent to the fact that the inclusion functor
$I \colon \pth\A \into \A$
is dense \cite[Lemma~5.1]{ar21arboreal}.
In particular, 
condition~\ref{item:path:finaccadj:lfp:pathdense}
of Lemma~\ref{lem:path:finaccadj:lfp} always holds
for arboreal categories, and so a cocomplete arboreal category is lfp whenever its paths are
finitely presentable.
%%%%%%%%%%%%%%%%%%%%%%%%%%%%%%%%%%%%%%%%%%%%%%%%%%%%%%%%%%%%%%%%%%%%%%%%%%%
\end{remark}
%%%%%%%%%%%%%%%%%%%%%%%%%%%%%%%%%%%%%%%%%%%%%%%%%%%%%%%%%%%%%%%%%%%%%%%%%%%

%%%%%%%%%%%%%%%%%%%%%%%%%%%%%%%%%%%%%%%%%%%%%%%%%%%%%%%%%%%%%%%%%%%%%%%%%%%
\subsection{Bisimilarity and open maps}
\label{sec:bisim}
%%%%%%%%%%%%%%%%%%%%%%%%%%%%%%%%%%%%%%%%%%%%%%%%%%%%%%%%%%%%%%%%%%%%%%%%%%%

A key property of arboreal categories, compared to wooded categories, is that under mild assumptions the relation of
back-and-forth equivalence coincides with 
the notion of \emph{bisimilarity via open maps}
introduced in~\cite{jnw96ic}.

%%%%%%%%%%%%%%%%%%%%%%%%%%%%%%%%%%%%%%%%%%%%%%%%%%%%%%%%%%%%%%%%%%%%%%%%%%%
\begin{definition}
\label{def:arboreal:open}
%%%%%%%%%%%%%%%%%%%%%%%%%%%%%%%%%%%%%%%%%%%%%%%%%%%%%%%%%%%%%%%%%%%%%%%%%%%
An arrow $f \colon a \to b$ in an arboreal category $\A$ is \emph{open}
if it satisfies the following path-lifting property:
given any commutative square
\[
\begin{tikzcd}
  P
  \arrow{r}
  \arrow{d}
& a
  \arrow{d}{f}
\\
  Q
  \arrow{r}
    \arrow[dashed]{ur}
& b
\end{tikzcd}
\]

\noindent
with $P,Q$ paths, there is a diagonal filler $Q\to a$.\footnote{This definition of open morphism differs from, but is equivalent to, the one used in~\cite{ar21arboreal}; see Lemma~4.1 in \emph{op.\ cit.}} 
Two objects $c,d\in\A$ are \emph{arboreal-bisimilar}
if there exists a span of open morphisms
$c \leftarrow \cdot \rightarrow d$ connecting them.
%%%%%%%%%%%%%%%%%%%%%%%%%%%%%%%%%%%%%%%%%%%%%%%%%%%%%%%%%%%%%%%%%%%%%%%%%%%
\end{definition}
%%%%%%%%%%%%%%%%%%%%%%%%%%%%%%%%%%%%%%%%%%%%%%%%%%%%%%%%%%%%%%%%%%%%%%%%%%%

%%%%%%%%%%%%%%%%%%%%%%%%%%%%%%%%%%%%%%%%%%%%%%%%%%%%%%%%%%%%%%%%%%%%%%%%%%%
\begin{example}
\label{ex:arboreal:bisim:tree}
%%%%%%%%%%%%%%%%%%%%%%%%%%%%%%%%%%%%%%%%%%%%%%%%%%%%%%%%%%%%%%%%%%%%%%%%%%%
By~\cite[Proposition~4.2]{ar21arboreal}, open morphisms in $\Tree$ coincide with bounded ones. Moreover, it is well known that, for any two trees $a$ and $b$, there exists a span
of bounded morphisms
$a \leftarrow \cdot \rightarrow b$ 
precisely when $a$ and $b$ are bisimilar in the usual sense (i.e.\ as Kripke frames, where the accessibility relation is given by the covering relation). Thus, arboreal-bisimilarity in $\Tree$ coincides with the usual notion of bisimilarity.
%%%%%%%%%%%%%%%%%%%%%%%%%%%%%%%%%%%%%%%%%%%%%%%%%%%%%%%%%%%%%%%%%%%%%%%%%%%
\end{example}
%%%%%%%%%%%%%%%%%%%%%%%%%%%%%%%%%%%%%%%%%%%%%%%%%%%%%%%%%%%%%%%%%%%%%%%%%%%

%%%%%%%%%%%%%%%%%%%%%%%%%%%%%%%%%%%%%%%%%%%%%%%%%%%%%%%%%%%%%%%%%%%%%%%%%%%
\begin{theorem}[{\cite[Theorem~6.12]{ar21arboreal}}]
\label{th:games-vs-bisimilarity}
%%%%%%%%%%%%%%%%%%%%%%%%%%%%%%%%%%%%%%%%%%%%%%%%%%%%%%%%%%%%%%%%%%%%%%%%%%%
Let $a,b$ be objects of an arboreal category admitting a product.
Then $a$ and $b$ are back-and-forth equivalent (in the sense of Definition~\ref{def:path:bfe})
if, and only if, 
they are arboreal-bisimilar.
%%%%%%%%%%%%%%%%%%%%%%%%%%%%%%%%%%%%%%%%%%%%%%%%%%%%%%%%%%%%%%%%%%%%%%%%%%%
\end{theorem}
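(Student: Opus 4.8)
The plan is to prove the two implications separately, in each case reducing the game-theoretic and the open-map formulations to a common combinatorial core living on the trees $\Path a$ and $\Path b$. The pivotal preliminary step, used in both directions, is a lemma describing how open maps interact with the functor $\Path$. Given an open map $f\colon c\to a$ and a path embedding $p\colon P\emb c$, factor $f\comp p$ as $P\epi P'\emb a$, so that $\Path f(p)$ is the class of $P'\emb a$. The commuting square with sides $p$, $f$, the quotient $P\epi P'$ and the embedding $P'\emb a$ (here $P'$ is a path by Lemma~\ref{lem:path:base}.\ref{item:path:base:epi}) admits, by openness, a diagonal filler $d\colon P'\to c$ with $d$ composed with the quotient equal to $p$. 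Since $p\in\M$, Lemma~\ref{lem:prelim:fact:base}.\ref{item:prelim:fact:base:emb} forces the quotient $P\epi P'$ into $\M$, hence into $\Q\cap\M=\{\text{isos}\}$ by Lemma~\ref{lem:prelim:fact:base}.\ref{item:prelim:fact:base:iso}. Thus open maps relate path embeddings with \emph{isomorphic} domains: $\dom(\Path f(p))\cong\dom(p)$. A second application of path-lifting shows that $\Path f\colon\down p\to\down\Path f(p)$ is surjective, i.e.\ $\Path f$ satisfies the forth property of a bounded morphism.

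For the implication from arboreal-bisimilarity to back-and-forth equivalence, suppose we are given a span of open maps $a\xleftarrow{f}c\xrightarrow{g}b$, and define $\mathcal R\subseteq\Path a\times\Path b$ by letting $(m,n)\in\mathcal R$ iff there is a common path embedding $p$ into $c$ with $\Path f(p)=m$ and $\Path g(p)=n$. By the lemma, $\dom(m)\cong\dom(p)\cong\dom(n)$ for every such pair, so each element of $\mathcal R$ is a winning position, and $(\bot_a,\bot_b)\in\mathcal R$. The surjectivity of $\Path f$ and $\Path g$ on successors turns $\mathcal R$ into a back-and-forth system: when Spoiler extends $m$ to some $m'\succ m$, lift along $f$ to extend $p$, then project along $g$ to obtain the required response $n'\succ n$ with $(m',n')\in\mathcal R$ (and symmetrically). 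This is exactly a winning strategy for Duplicator in $\G(a,b)$ from $(\bot_a,\bot_b)$.

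For the converse, fix a Duplicator winning strategy and let $\mathcal R\subseteq\Path a\times\Path b$ collect the positions reachable under it; by construction every $(m,n)\in\mathcal R$ satisfies $\dom(m)\cong\dom(n)$, $\mathcal R$ is closed under taking predecessors, and it has the forth and back extension properties. Here the product hypothesis is essential. For each $(m\colon P\emb a,\,n\colon Q\emb b)\in\mathcal R$ there is, by Lemma~\ref{l:forest-of-paths-wooded}.\ref{i:at-most-one-emb}, a \emph{unique} isomorphism $\theta\colon P\xrightarrow{\sim}Q$; the morphism $\langle m,\,n\comp\theta\rangle\colon P\to a\times b$ factors as a path embedding $\bar m\colon\bar P\emb a\times b$. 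Because the isomorphisms $\theta$ are canonical, these embeddings are coherent along the extension order of $\mathcal R$ and, $\mathcal R$ being tree-shaped, assemble into a diagram whose colimit $c$ carries a canonical map into $a\times b$; path-generatedness identifies $\Path c$ with $\mathcal R$. Taking $f$ and $g$ to be the composites of this map with the projections, openness of $f$ and $g$ follows directly from the extension properties of $\mathcal R$: a lifting square whose top edge $P\to c$ factors through a position $(m,n)$ and whose bottom edge extends $m$ along a path $Q$ is completed by the matching extension on the $b$-side, which supplies a larger element of $\mathcal R\cong\Path c$ through which the diagonal $Q\to c$ factors.

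I expect the genuinely delicate point to be the winning condition $\dom(m)\cong\dom(n)$, which demands an actual isomorphism of path objects rather than mere agreement of heights in $\Path a$ and $\Path b$. In the forward direction this is what forces the lemma above (open maps are ``locally iso on fibres'', not just forest morphisms); in the converse it is precisely what makes the canonical isomorphisms $\theta$ glue the factored embeddings $\bar m$ into a single coherent object. Keeping the colimit $c$ a well-defined object whose paths recover $\mathcal R$ exactly — rather than collapsing or introducing spurious paths — is the crux, and it is where the arboreal axioms do the work: path-generatedness (objects are colimits of their path embeddings, so morphisms are determined pathwise), connectedness (images of paths land in single colimit components), and the $2$-out-of-$3$ condition (quotients of paths are paths) are exactly what legitimise the reduction to the combinatorics of $\Path a$ and $\Path b$.
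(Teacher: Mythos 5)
First, a point of reference: the paper does not actually prove this statement --- it quotes it as Theorem~6.12 of the arboreal categories paper \cite{ar21arboreal} --- so your proposal can only be measured against that cited argument. Its overall structure (turn a winning strategy into a system of path embeddings into the product $a\times b$, glued by the \emph{unique} isomorphisms between paths supplied by Lemma~\ref{l:forest-of-paths-wooded}) is indeed the structure of the cited proof, and your forward direction is essentially complete: the observation that the quotient part of the factorisation of $f\comp p$ is forced to be an isomorphism (so open maps are pathwise embeddings), together with lifting of extensions, is exactly what is needed. One slip there: path-lifting gives surjectivity of $\Path f$ onto \emph{extensions} of $\Path f(p)$, not onto $\down\Path f(p)$; surjectivity onto down-sets is automatic for any pathwise embedding and needs no lifting. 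Your strategy paragraph then uses the correct (extension) property, so this is only a mis-statement.

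The genuine gap is in the converse. You assemble the paths $P_{(m,n)}$, $(m,n)\in\mathcal R$, ``into a diagram whose colimit $c$ carries a canonical map into $a\times b$''. An arboreal category is not assumed cocomplete: among colimits, the axioms only guarantee coproducts of sets of paths (condition~\ref{ax:colimits} of Definition~\ref{def:arboreal-cat}), so the tree-shaped colimit you invoke has no reason to exist and this step fails as written. The repair is to use the product a second time: form $\coprod_{(m,n)\in\mathcal R}P_{(m,n)}$, take the induced map $u$ into $a\times b$ whose components are your embeddings $\langle m,\,n\comp\theta\rangle$, and let $c\emb a\times b$ be the embedding part of the (quotient, embedding) factorisation of $u$. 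Relatedly, the identification $\Path{c}\cong\mathcal R$, which you assert via ``path-generatedness'', is the real technical content and does not follow from path-generation alone: one must show that \emph{every} path embedding $Q\emb c$ factors through some $P_{(m,n)}$. This requires pulling back the quotient $\coprod P_{(m,n)}\epi c$ along $Q\emb c$ (stability), using path-generation of the pullback together with finiteness of the chain $\Path{Q}$ and Lemma~\ref{lem:prelim:fact:base}.\ref{item:prelim:fact:base:quotient} to produce a path of the pullback mapping onto $Q$ by a quotient, using connectedness to push that path into a single coproduct summand, and finally using Lemma~\ref{lem:prelim:fact:base}.\ref{item:prelim:fact:base:emb} to force that quotient to be an isomorphism. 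The openness of the restricted projections then still needs a diagram chase (general lifting squares must first be reduced, via image factorisations, to extensions of elements of $\mathcal R$; commutativity of the upper triangle uses the coherence of the $\theta$'s and monicity of $c\emb a\times b$). In short, you have correctly identified the crux and the axioms responsible for it, but the proposal asserts rather than proves it, and the one concrete mechanism it offers (the colimit) is not available in an arboreal category.
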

%%%%%%%%%%%%%%%%%%%%%%%%%%%%%%%%%%%%%%%%%%%%%%%%%%%%%%%%%%%%%%%%%%%%%%%%%%%

Recall from Example~\ref{ex:forests-trees-presheaves} that the category $\Tree$ is arboreal, and it is complete because it is equivalent to the presheaf category $\presh{\NN}$. It follows from Example~\ref{ex:arboreal:bisim:tree} and Theorem~\ref{th:games-vs-bisimilarity} that two trees are back-and-forth equivalent if, and only if, they are bisimilar in the usual sense.
In general, since the functor $\Path\colon \A\to\Tree$ sends open morphisms to bounded ones by~\cite[Proposition~4.2]{ar21arboreal},
if $a,b\in \A $ are back-and-forth equivalent objects of an arboreal category with binary products, then the trees
$\Path a$ and $\Path b$ are bisimilar---but the converse need not hold.

%%%%%%%%%%%%%%%%%%%%%%%%%%%%%%%%%%%%%%%%%%%%%%%%%%%%%%%%%%%%%%%%%%%%%%%%%%%
\begin{remark}
\label{rem:arboral:bisim}
%%%%%%%%%%%%%%%%%%%%%%%%%%%%%%%%%%%%%%%%%%%%%%%%%%%%%%%%%%%%%%%%%%%%%%%%%%%
The notion of arboreal bisimilarity in Definition~\ref{def:arboreal:open} makes sense, more generally, in wooded categories. In fact, a straightforward adaptation of \cite[Proposition~6.6]{ar21arboreal} shows that any two arboreal-bisimilar objects of a wooded category $\cat C$ are
back-and-forth equivalent. However, the reverse implication need not hold unless $\cat C$ is arboreal.
%
%%%%%%%%%%%%%%%%%%%%%%%%%%%%%%%%%%%%%%%%%%%%%%%%%%%%%%%%%%%%%%%%%%%%%%%%%%%
\end{remark}
%%%%%%%%%%%%%%%%%%%%%%%%%%%%%%%%%%%%%%%%%%%%%%%%%%%%%%%%%%%%%%%%%%%%%%%%%%%

%%%%%%%%%%%%%%%%%%%%%%%%%%%%%%%%%%%%%%%%%%%%%%%%%%%%%%%%%%%%%%%%%%%%%%%%%%%
\subsection{Presheaves over a forest}
\label{sec:arboreal:presh}
%%%%%%%%%%%%%%%%%%%%%%%%%%%%%%%%%%%%%%%%%%%%%%%%%%%%%%%%%%%%%%%%%%%%%%%%%%%
Recall that a presheaf category admits a unique proper factorisation system, namely the (epimorphisms, monomorphisms) factorisation system, and the latter is stable (see Example~\ref{ex:presheaf-cats-stable-fact-sys} and Footnote~\ref{footnote:unique-fs}). This section is devoted to the following result:

%%%%%%%%%%%%%%%%%%%%%%%%%%%%%%%%%%%%%%%%%%%%%%%%%%%%%%%%%%%%%%%%%%%%%%%%%%%
\begin{theorem}
\label{thm:arboreal:presh}
%%%%%%%%%%%%%%%%%%%%%%%%%%%%%%%%%%%%%%%%%%%%%%%%%%%%%%%%%%%%%%%%%%%%%%%%%%%
Let $\forest$ be a forest, seen as a (posetal) category.
The presheaf category~$\presh{\forest}$ is arboreal,
 and the paths in $\presh{\forest}$ are precisely the representable functors
along with the initial object.
\end{theorem}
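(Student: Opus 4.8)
The plan is to first pin down the paths in $\presh\forest$ explicitly, and then verify the four axioms of Definition~\ref{def:arboreal-cat} directly from that description. Throughout I would use that $\presh\forest$ carries its unique proper factorisation system (epimorphisms, monomorphisms), computed pointwise and stable (Example~\ref{ex:presheaf-cats-stable-fact-sys}), and that $\presh\forest$ is locally small, well-powered and cocomplete. The heart of the argument is the claim that the paths are exactly the representables together with the initial object $\zero$ (the empty presheaf). For one direction, the representable presheaf $\forest(-,x)$ sends $y$ to the singleton $\forest(y,x)$ when $y\leq x$ and to $\emptyset$ otherwise, so its subfunctors correspond bijectively to the downward-closed subsets of $\down x$; since $\down x$ is a finite chain by definition of a forest, these form a finite chain, and $\forest(-,x)$ is a path. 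The object $\zero$ is a path as well, its only subobject being itself.

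For the converse, let $P$ be a path with $P\neq\zero$, and let $S$ be the immediate predecessor of $P$ in the finite chain $\Emb P$. Choose a node $x$ and an element $a\in P(x)\setminus S(x)$, which exists because $S\subsetneq P$ as subfunctors. By the Yoneda lemma $a$ determines a morphism $\alpha\colon \forest(-,x)\to P$, and its image is a subobject of $P$ not contained in $S$; since $\Emb P$ is a chain, this image lies strictly above $S$ and therefore equals $P$, so $\alpha$ is a quotient $\forest(-,x)\epi P$. Pointwise surjectivity of $\alpha$ then forces $P(y)$ to be a singleton for $y\leq x$ and empty otherwise, whence $\alpha$ is also a monomorphism and $P\cong\forest(-,x)$. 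In particular $\pth{\presh\forest}$ is equivalent to the poset obtained by adjoining a least element to $\forest$, hence essentially small.

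I would then verify the axioms using this picture. Axiom~\ref{ax:colimits} holds because $\presh\forest$ is cocomplete. For axiom~\ref{ax:2-out-of-3}, note that $\pth{\presh\forest}$ is \emph{thin}: there is at most one morphism between any two paths, and every quotient between paths is an isomorphism (a morphism $\forest(-,x)\to\forest(-,x')$ is epi only when $x=x'$, and the only quotients involving $\zero$ are identities). Being equivalent to a poset, a composite $P\to Q\to Q'$ that is a quotient is an isomorphism, forcing $P\to Q$ to be an isomorphism and hence a quotient. For axiom~\ref{ax:connected}, $\zero$ is connected since any non-empty set of paths provides an injection through which the unique map from $\zero$ factors, while for a representable the functor $\presh\forest(\forest(-,x),-)$ is evaluation at $x$, which preserves coproducts (computed pointwise as disjoint unions); thus any map from $\forest(-,x)$ into a coproduct of paths factors through a single injection. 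Finally, by Remark~\ref{rem:path-density}, axiom~\ref{ax:path-generated} is, given the first three axioms, equivalent to density of $\pth{\presh\forest}\into\presh\forest$; since the representables are dense (the density theorem) and density is inherited by any full subcategory containing them, and $\pth{\presh\forest}$ contains all representables, the inclusion is dense.

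The main obstacle is the converse half of the path characterisation: one must see that the finite-chain condition on $\Emb P$ together with the forest hypothesis rigidly forces $P$ to be generated by a single element and to be pointwise subsingleton, so that $P$ is representable. Once this is established the verification of the axioms is largely formal, relying on the thinness of $\pth{\presh\forest}$ and the standard density of representables.
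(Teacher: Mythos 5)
Your proposal is correct, and it follows the same overall skeleton as the paper's proof: first characterise the paths as the representables together with the initial object, then verify axioms \ref{ax:2-out-of-3}--\ref{ax:path-generated}, handling coproducts by cocompleteness, connectedness by Yoneda plus pointwise colimits, and path-generation via Remark~\ref{rem:path-density} and density of the representables (the paper, like you, uses without fuss that enlarging a dense full subcategory by the initial object preserves density). Where you genuinely diverge is in the two non-trivial steps. For the converse half of the path classification, the paper argues via density of the Yoneda embedding: a non-initial path $P$ is the canonical colimit of representables mapping into it, all such maps are monic because $\forest$ is posetal, and the finite-chain condition then forces a maximal cocone component, so $P$ is representable. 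You instead take the immediate predecessor $S$ of $P$ in $\Emb{P}$ (which exists since $\zero\emb P$ and $\id_P$ are distinct subobjects), pick $a\in P(x)\setminus S(x)$, and use the chain condition to show the resulting map $\forest(-,x)\to P$ is epi, whence a pointwise cardinality count makes it an isomorphism. Your argument is more hands-on and avoids the canonical-colimit machinery; the paper's is shorter given that density is already available. Similarly, for the 2-out-of-3 axiom the paper gives a direct pointwise surjectivity argument using that paths are pointwise subsingletons, while you derive it formally from the rigidity of $\pth{\presh\forest}$ (epimorphisms and split monomorphisms between paths are isomorphisms, since the path category is equivalent to the poset $\forest$ with a bottom element adjoined); both are sound, and yours makes the axiom an immediate corollary of the classification. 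The only points worth spelling out in a final write-up are the one-line justifications that $\zero\emb P$ is monic (so the immediate predecessor exists) and that "$gf$ iso forces $f$ iso" in the thin path category.
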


%%%%%%%%%%%%%%%%%%%%%%%%%%%%%%%%%%%%%%%%%%%%%%%%%%%%%%%%%%%%%%%%%%%%%%%%%%%
\begin{remark}
\label{rem:arboreal:presh}
%%%%%%%%%%%%%%%%%%%%%%%%%%%%%%%%%%%%%%%%%%%%%%%%%%%%%%%%%%%%%%%%%%%%%%%%%%%
Letting $\forest$ in Theorem~\ref{thm:arboreal:presh} be the countable chain $\NN$,
we see that the categories $\Tree$ and $\Forest$ are arboreal.
\end{remark}

Theorem~\ref{thm:arboreal:presh} provides us with examples of
(finitely accessible) arboreal adjunctions of a different nature
than those discussed in~\S\ref{sec:path}.
For instance, the following example 
is inspired by the \emph{Diaconescu cover}
of a topos, as described in~\cite[\S IX.9]{mm92book}.

%%%%%%%%%%%%%%%%%%%%%%%%%%%%%%%%%%%%%%%%%%%%%%%%%%%%%%%%%%%%%%%%%%%%%%%%%%%
\begin{example}
\label{ex:Diaconescu-cover}
%%%%%%%%%%%%%%%%%%%%%%%%%%%%%%%%%%%%%%%%%%%%%%%%%%%%%%%%%%%%%%%%%%%%%%%%%%%
Given a small category $\cat D$,
let $\forest(\cat D)$ be the forest whose nodes
are finite sequences $s$ of composable arrows
\[
\begin{tikzcd}
  a_0
  \arrow{r}{k_1}
& a_1
  \arrow[dashed]{r}
& a_{n-1}
  \arrow{r}{k_{n}}
& a_n
\end{tikzcd}
\]

\noindent
in $\cat D$, and where $s \leq t$
exactly when $s$ is a prefix of $t$, say
\[
\begin{tikzcd}
  a_0
  \arrow[dashed]{r}{s}
& a_n
  \arrow{r}{k_{n+1}}
& a_{n+1}
  \arrow[dashed]{r}
& a_{n+m-1}
  \arrow{r}{k_{n+m}}
& a_{n+m}.
\end{tikzcd}
\]

\noindent
Let $\pi$ be the functor $\forest(\cat D) \to \cat D$
such that, for $s \leq t$ as above,
$\pi(s \leq t) \colon a_n \to a_{n+m}$ is the composition
$k_{n+m} \comp \dots \comp k_{n+1}$.
The functor $\ladj\pi \colon \presh{\cat D} \to \presh{\forest(\cat D)}$, 
defined by precomposing with~$\pi$,
has adjoints on both sides
(cf. e.g.\ \cite[Example A4.1.4]{johnstone02book}).
In view of Example~\ref{ex:presheaf-cats-lfp} and
Theorem~\ref{thm:arboreal:presh},
this yields a finitely accessible arboreal adjunction between
$\presh{\cat D}$ and the arboreal category $\presh{\forest(\cat D)}$. 
%%%%%%%%%%%%%%%%%%%%%%%%%%%%%%%%%%%%%%%%%%%%%%%%%%%%%%%%%%%%%%%%%%%%%%%%%%%%
\end{example}
%%%%%%%%%%%%%%%%%%%%%%%%%%%%%%%%%%%%%%%%%%%%%%%%%%%%%%%%%%%%%%%%%%%%%%%%%%%%

We elaborate on Example~\ref{ex:Diaconescu-cover} in
Example~\ref{ex:hintikka:mono:forest} (\S\ref{sec:hintikka:finaccadj})
and Example~\ref{ex:fact:Diaconescu-cover} (\S\ref{sec:fact}) below.

In the remainder of this section, we offer a proof of Theorem~\ref{thm:arboreal:presh}.
To this end, fix an arbitrary forest $\forest$
(regarded as a posetal category) and consider the presheaf category~$\presh{\forest}$.
Recall that an object of $\presh{\forest}$ is \emph{representable} if it is
naturally isomorphic to one of the form $\yo a$ for some $a\in \forest$, where 
\[
\begin{array}{*{5}{l}}
  \yo
& \colon
& \forest
& \longto
& \presh{\forest}
\end{array}
\]

\noindent
is the Yoneda embedding.

We start by characterising the paths in $\presh{\forest}$:

%%%%%%%%%%%%%%%%%%%%%%%%%%%%%%%%%%%%%%%%%%%%%%%%%%%%%%%%%%%%%%%%%%%%%%%%%%%
\begin{lemma}
\label{l:paths-in-presh-forest}
%%%%%%%%%%%%%%%%%%%%%%%%%%%%%%%%%%%%%%%%%%%%%%%%%%%%%%%%%%%%%%%%%%%%%%%%%%%
An object of $\presh{\forest}$ is a path if, and only if, it is either initial or representable.
%%%%%%%%%%%%%%%%%%%%%%%%%%%%%%%%%%%%%%%%%%%%%%%%%%%%%%%%%%%%%%%%%%%%%%%%%%%
\end{lemma}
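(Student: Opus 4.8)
The plan is to replace $\Emb{P}$ by the concrete poset of subpresheaves of $P$. Since $\presh{\forest}$ carries the (epimorphism, monomorphism) factorisation system, embeddings are exactly monomorphisms, i.e.\ pointwise injective natural transformations; hence $\Emb{P}$ is isomorphic to the poset of subpresheaves of $P$ ordered by inclusion, in which joins are computed pointwise as unions. The easy inclusion is then immediate: the initial object $\zero$ is the empty presheaf, so $\Emb{\zero}$ is a one-element chain; and for a representable $\yo a$ a subpresheaf $S$ is determined by the set $D = \{z : S(z) = \one\}$, which (because $\yo a$ has value $\one$ on $\down a$, value $\emptyset$ elsewhere, and identity restriction maps along $\down a$) ranges exactly over the down-closed subsets of $\down a$. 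As $\forest$ is a forest, $\down a$ is a finite chain, so its down-closed subsets form a finite chain under inclusion; thus $\Emb{\yo a}$ is a finite chain and $\yo a$ is a path.

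For the converse the key observation is that every principal subpresheaf is representable. Given $x \in \forest$ and $p \in P(x)$, the morphism $\yo x \to P$ corresponding to $p$ under Yoneda has as image the subpresheaf $\langle p \rangle$ whose value at $z$ is the singleton consisting of the restriction of $p$ along $z \le x$ when $z \le x$, and $\emptyset$ otherwise; the induced map $\yo x \to \langle p \rangle$ is a pointwise bijection, so $\langle p \rangle \cong \yo x$. Now assume $P$ is a path that is not initial. Then $P$ has at least one element, and $P = \bigcup_{p} \langle p \rangle$, the union of the principal subpresheaves generated by all its elements, since each element lies in the one it generates. This union is the join of the family $\{\langle p \rangle\}$ in $\Emb{P}$; but $\Emb{P}$ is a finite chain, so any nonempty family has a greatest element which is its join. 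Hence $P = \langle p_0 \rangle$ for some $p_0 \in P(x_0)$, and therefore $P \cong \yo{x_0}$ is representable.

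I do not expect a genuine obstacle here, as the whole argument is a direct unwinding of the presheaf structure. The only point demanding care is the passage from abstract subobjects to subpresheaves together with the fact that joins in $\Emb{P}$ are pointwise unions: this is precisely what lets me replace the join of all principal subpresheaves by the \emph{maximum} principal subpresheaf in the chain, and so conclude that $P$ itself is principal, hence representable. The remaining verifications---that $\langle p \rangle \cong \yo x$ and that $\Emb{\yo a}$ is the chain of down-closed subsets of $\down a$---are routine.
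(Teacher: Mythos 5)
Your proof is correct and follows essentially the same route as the paper's: the forward direction computes the subobject poset of $\yo a$ as the finite chain of down-closed subsets of $\down a$, just as the paper does by direct inspection, and your converse—covering a non-initial path $P$ by principal subpresheaves, each isomorphic to a representable, and using that the join of a nonempty family in a finite chain is its maximum—is the concrete, pointwise form of the paper's argument via density of the Yoneda embedding and the canonical colimit over $\yo/P$ (whose maps into $P$ the paper likewise shows to be monic). The only difference is presentational: you work with unions of subpresheaves in $\Emb{P}$ where the paper works with colimit maps and factorisations.
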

%%%%%%%%%%%%%%%%%%%%%%%%%%%%%%%%%%%%%%%%%%%%%%%%%%%%%%%%%%%%%%%%%%%%%%%%%%%

%%%%%%%%%%%%%%%%%%%%%%%%%%%%%%%%%%%%%%%%%%%%%%%%%%%%%%%%%%%%%%%%%%%%%%%%%%%
\begin{proof}
%%%%%%%%%%%%%%%%%%%%%%%%%%%%%%%%%%%%%%%%%%%%%%%%%%%%%%%%%%%%%%%%%%%%%%%%%%%

Clearly, the initial object $\zero$ of $\presh{\forest}$ (i.e.\ the empty presheaf) is a path. To see that representable functors are paths, it suffices to show that $\yo a$ is a path whenever $a\in \forest$.
Let 
\[
a_1 \prec \dots \prec a_n=a
\]
be the finite chain of predecessors of $a$ in $\forest$.
Direct inspection shows that the poset of subobjects of $\yo a$
can be identified with the finite chain 
\[
  \zero
  \prec
  \yo a_1
  \prec
  \dots
  \prec
  \yo a.
%  ~,
\]
(Recall that, in a topos, the unique arrow from the initial object $\zero$ is monic.)
Therefore, $\yo a$ is a path.

Conversely, suppose that $P$ is a path in $\presh{\forest}$ that is not initial. As the Yoneda embedding $\yo \colon \forest \to \presh{\forest}$ is dense,
$P$ is the colimit of the canonical diagram
\[
\begin{array}{*{7}{l}}
  D
& \colon
& \yo/{P}
& \stackrel{\pi}\longto
& \forest
& \stackrel{\yo}\longto
& \presh{\forest}.
\end{array}
\]

\noindent
This diagram is non-empty because $P$ is not initial.
Suppose for a moment that every morphism $\yo a\to P$, with $a\in\forest$, is monic.
Then, since $P$ is a path, there is an object $i$ of $\yo/{P}$ such that,
for any other $j$ in $\yo/{P}$,
the colimit map $D(j)\to P$ factors through the colimit map $D(i)\to P$.
It follows that $P\cong \yo a$ where $a \deq \pi(i)$, and so $P$ is representable.

It remains to show that each morphism $\yo a\to P$ is monic. In turn, this follows at once from the fact that $\forest$ is a posetal category: any component of a natural transformation $f\colon \yo a\to P$ is injective because its domain has at most one element, and so $f$ is monic.
%%%%%%%%%%%%%%%%%%%%%%%%%%%%%%%%%%%%%%%%%%%%%%%%%%%%%%%%%%%%%%%%%%%%%%%%%%%
\end{proof}
%%%%%%%%%%%%%%%%%%%%%%%%%%%%%%%%%%%%%%%%%%%%%%%%%%%%%%%%%%%%%%%%%%%%%%%%%%%

To complete the proof of Theorem~\ref{thm:arboreal:presh}, it remains to prove that $\presh{\forest}$ satisfies conditions~\ref{ax:2-out-of-3}--\ref{ax:path-generated}
in Definition~\ref{def:arboreal-cat}.

%%%%%%%%%%%%%%%%%%%%%%%%%%%%%%%%%%%%%%%%%%%%%%%%%%%%%%%%%%%%%%%%%%%%%%%%%%%
\begin{lemma}[Condition~\ref{ax:2-out-of-3}]
\label{l:presheaf-2-3}
%%%%%%%%%%%%%%%%%%%%%%%%%%%%%%%%%%%%%%%%%%%%%%%%%%%%%%%%%%%%%%%%%%%%%%%%%%%
$\presh{\forest}$ satisfies the 2-out-of-3 condition.
%%%%%%%%%%%%%%%%%%%%%%%%%%%%%%%%%%%%%%%%%%%%%%%%%%%%%%%%%%%%%%%%%%%%%%%%%%%
\end{lemma}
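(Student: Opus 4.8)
The plan is to reduce the 2-out-of-3 condition entirely to the combinatorics of morphisms between paths, using the explicit description from Lemma~\ref{l:paths-in-presh-forest} (every path is either the initial object $\zero$ or a representable $\yo a$) together with the pointwise computation of epimorphisms in a presheaf category.

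First I would record the two facts that make the argument mechanical. Recall from Example~\ref{ex:presheaf-cats-stable-fact-sys} that colimits, and in particular epimorphisms, are computed pointwise in $\presh{\forest}$, so a morphism is a quotient exactly when each of its components is a surjection of sets. Next, the hom-sets between paths are essentially degenerate: by the Yoneda lemma $\presh{\forest}\funct{\yo a, \yo b} \cong \forest\funct{a,b}$, which is a singleton if $a \leq b$ and empty otherwise, while $\zero$ admits a unique morphism into every object and receives a morphism only from itself. The key computation is that a morphism $\yo a \to \yo b$ is an epimorphism if, and only if, $a = b$ (equivalently, it is the identity): its component at the object $b$ is a map $\forest\funct{b,a} \to \forest\funct{b,b}$ whose codomain is a singleton, so surjectivity forces $\forest\funct{b,a} \neq \emptyset$, i.e.\ $b \leq a$, which together with $a \leq b$ gives $a = b$.

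With these facts in hand the verification becomes a short case analysis. Given paths and morphisms $P \xrightarrow{f} Q \xrightarrow{g} Q'$ with $g \circ f$ a quotient, I would argue that $f$ is a quotient as follows. If $P = \zero$, then the epimorphism $g \circ f \colon \zero \to Q'$ forces $Q' = \zero$ (the only presheaf admitting an epimorphism from the empty one is the empty one), whence $g \colon Q \to \zero$ forces $Q = \zero$ and $f$ is the identity. Otherwise $P = \yo a$ is representable; since there is no morphism from a non-empty presheaf into $\zero$, both $Q$ and $Q'$ must also be representable, say $Q = \yo b$ and $Q' = \yo c$, and the existence of $f$ and $g$ gives $a \leq b \leq c$. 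The composite $g \circ f$ being an epimorphism then yields $a = c$ by the key computation, so $a = b = c$ and $f$ is the identity, hence a quotient.

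There is essentially no obstacle beyond the bookkeeping: the only slightly delicate point is the observation that epimorphisms between representables collapse to identities, and that all hom-sets involving $\zero$ are as degenerate as possible, so that the chain $a \leq b \leq c$ combined with $a = c$ pins down all three objects at once. Everything else is the pointwise characterisation of epimorphisms recalled above.
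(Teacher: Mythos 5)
Your proof is correct, and it takes a genuinely different route from the paper's. The paper gives a uniform pointwise argument with no case analysis: since every path has at most one element at each $a \in \forest$ (by Lemma~\ref{l:paths-in-presh-forest}, together with the fact that $\forest$ is a poset), surjectivity of the component $\alpha_a \colon P(a) \to Q(a)$ reduces to the implication ``$Q(a) \neq \emptyset$ implies $P(a) \neq \emptyset$'', which the paper proves by contrapositive: if $P(a) = \emptyset$ then $Q'(a) = \emptyset$ because $(\beta \circ \alpha)_a$ is surjective, and then $Q(a) = \emptyset$ since there is no map from a non-empty set to the empty set. You instead exploit the classification of Lemma~\ref{l:paths-in-presh-forest} in full, compute the hom-sets by Yoneda, and establish the stronger structural fact that every epimorphism between paths in $\presh{\forest}$ is an isomorphism (an identity on chosen representatives), from which condition~\ref{ax:2-out-of-3} of Definition~\ref{def:arboreal-cat} is immediate. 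Your route costs a little more bookkeeping (splitting into the $\zero$ and representable cases, and working with representatives up to isomorphism, a point you pass over lightly but which is harmless), whereas the paper's argument is shorter and never needs to know which paths are representable, only that their values are subsingletons. In exchange, your argument yields more information: quotients between paths in $\presh{\forest}$ are trivial, so the 2-out-of-3 condition holds for the strongest possible reason.
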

%%%%%%%%%%%%%%%%%%%%%%%%%%%%%%%%%%%%%%%%%%%%%%%%%%%%%%%%%%%%%%%%%%%%%%%%%%%

%%%%%%%%%%%%%%%%%%%%%%%%%%%%%%%%%%%%%%%%%%%%%%%%%%%%%%%%%%%%%%%%%%%%%%%%%%%
\begin{proof}
%%%%%%%%%%%%%%%%%%%%%%%%%%%%%%%%%%%%%%%%%%%%%%%%%%%%%%%%%%%%%%%%%%%%%%%%%%%
Consider paths $P,Q,Q'$ in $\presh{\forest}$ and assume that the composition
\[
\begin{array}{*{5}{l}}
  P
& \stackrel{\alpha}\longto
& Q
& \stackrel{\beta}\longto
& Q'
\end{array}
\]

\noindent
is an epimorphism.
It suffices to show that, for any $a\in \forest$,
the component $\alpha_a\colon P(a)\to Q(a)$ is surjective.
By Lemma~\ref{l:paths-in-presh-forest}, the sets $P(a)$ and $Q(a)$ have at most one element.
Hence, we must show that, for any $a\in \forest$,
$Q(a)\neq\emptyset$ entails $P(a)\neq\emptyset$.
We prove the contrapositive.
If $P(a)=\emptyset$ then, since $(\beta\comp \alpha)_a$ is an epimorphism,
also $Q'(a)=\emptyset$.
Since there is no function from a non-empty set to the empty set,
it must be $Q(a)=\emptyset$.
%%%%%%%%%%%%%%%%%%%%%%%%%%%%%%%%%%%%%%%%%%%%%%%%%%%%%%%%%%%%%%%%%%%%%%%%%%%
\end{proof}
%%%%%%%%%%%%%%%%%%%%%%%%%%%%%%%%%%%%%%%%%%%%%%%%%%%%%%%%%%%%%%%%%%%%%%%%%%%

Condition~\ref{ax:colimits} of Definition~\ref{def:arboreal-cat}
(stating that $\presh{\forest}$ has coproducts of sets of paths)
trivially follows from the cocompleteness of $\presh{\forest}$.

%%%%%%%%%%%%%%%%%%%%%%%%%%%%%%%%%%%%%%%%%%%%%%%%%%%%%%%%%%%%%%%%%%%%%%%%%%%
\begin{lemma}[Condition~\ref{ax:connected}]
\label{l:paths-in-L-connected}
%%%%%%%%%%%%%%%%%%%%%%%%%%%%%%%%%%%%%%%%%%%%%%%%%%%%%%%%%%%%%%%%%%%%%%%%%%%
Every path in $\presh{\forest}$ is connected.
\end{lemma}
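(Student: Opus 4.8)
The plan is to reduce immediately to the classification of paths obtained in Lemma~\ref{l:paths-in-presh-forest}: every path in $\presh{\forest}$ is either the initial object $\zero$ or a representable $\yo a$ for some $a \in \forest$. Connectedness (in the sense of Definition~\ref{def:arboreal:material}) must be verified against an arbitrary \emph{non-empty} family of paths $\{P_i \mid i \in I\}$ admitting a coproduct $b \deq \coprod_{i \in I} P_i$, so I would treat these two cases separately and check, in each, that every arrow into $b$ factors through some coproduct injection.

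For the initial object the argument is immediate, and this is exactly where the non-emptiness of $I$ is used. Since $I \neq \emptyset$, I can choose any $i_0 \in I$; then the unique arrow $\zero \to P_{i_0}$ followed by the coproduct injection $P_{i_0} \to b$ is an arrow $\zero \to b$, and it must coincide with the given arrow $\zero \to b$ because $\presh{\forest}(\zero, b)$ is a singleton. This yields the required factorisation, and matches the remark in the footnote to Definition~\ref{def:arboreal:material} that initial objects are always connected in our sense.

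For a representable $\yo a$ the key point is that coproducts in $\presh{\forest}$ are computed pointwise, so $b(a) = \coprod_{i \in I} P_i(a)$ as a set. I would then invoke the Yoneda lemma: an arrow $f \colon \yo a \to b$ is determined by the element $x \deq f_a(\id_a) \in b(a)$. As $b(a)$ is a disjoint union, $x$ lies in exactly one summand $P_{i_0}(a)$; the arrow $\yo a \to P_{i_0}$ corresponding to $x$ under Yoneda, composed with the injection $P_{i_0} \to b$, sends $\id_a$ to $x$ and hence equals $f$. This is precisely a factorisation of $f$ through a coproduct map. Equivalently, one can phrase the whole case uniformly by noting that $\presh{\forest}(\yo a, -) \cong \mathrm{ev}_a$ preserves coproducts.

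I do not expect a genuine obstacle: the content is entirely formal, resting only on the Yoneda lemma and the pointwise computation of colimits in presheaf categories, and it is insensitive to the forest structure of $\forest$. The only points deserving care are to record explicitly that the non-emptiness of $I$ is what makes $\zero$ connected (the paper's notion being weaker than the standard ``preserves all coproducts''), and to observe that in the representable case the index set plays no role beyond locating the summand containing $x$.
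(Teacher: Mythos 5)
Your proof is correct and follows essentially the same route as the paper's: both reduce via Lemma~\ref{l:paths-in-presh-forest} to the initial and representable cases, dispatch the initial object using non-emptiness of the index set, and settle the representable case by combining the Yoneda lemma with the pointwise computation of coproducts in $\presh{\forest}$. Your element-wise phrasing (tracking $x = f_a(\id_a)$ to the unique summand containing it) is just a concrete unwinding of the paper's chain of natural isomorphisms and its observation that the resulting bijection is the canonical comparison map.
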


%%%%%%%%%%%%%%%%%%%%%%%%%%%%%%%%%%%%%%%%%%%%%%%%%%%%%%%%%%%%%%%%%%%%%%%%%%%
\begin{proof}
%%%%%%%%%%%%%%%%%%%%%%%%%%%%%%%%%%%%%%%%%%%%%%%%%%%%%%%%%%%%%%%%%%%%%%%%%%%
The initial object of $\presh{\forest}$ is trivially connected.
If $P$ is a non-initial path in $\presh{\forest}$,
by Lemma~\ref{l:paths-in-presh-forest}
we can assume
that $P$ is of the form $\yo a$ for some $a\in \forest$.
For every non-empty set of paths $\{P_i\mid i\in I\}$, we have:
\begin{align*}
  \presh{\forest}\funct{\yo a, \mathord{\coprod}_{i\in I}{P_i}}
& \cong
  \big(\coprod_{i\in I}{P_i}\big)(a)
  \tag{Yoneda Lemma}
\\
& \cong
  \coprod_{i\in I}{P_i(a)}
  \tag{colimits computed in $\Set$}
\\
& \cong
  \coprod_{i\in I}{\presh{\forest}\funct{\yo a, P_i}}.
  \tag{Yoneda Lemma}
\end{align*}

\noindent
Direct inspection shows that the ensuing bijection
\[
  \coprod_{i\in I}{\presh{\forest}\funct{\yo a, P_i}}
  \longto
  \presh{\forest}\funct{\yo a, \mathord{\coprod}_{i\in I}{P_i}}
\]

\noindent
is the canonical map that, on a summand
$\presh{\forest}\funct{\yo a, P_j}$,
sends an arrow $\yo a\to P_j$ to its composition with the coproduct map
$P_j \to \coprod_{i\in I}{P_i}$. Therefore, $\yo a$ is connected.
\end{proof}

%%%%%%%%%%%%%%%%%%%%%%%%%%%%%%%%%%%%%%%%%%%%%%%%%%%%%%%%%%%%%%%%%%%%%%%%%%%
\begin{lemma}[Condition~\ref{ax:path-generated}]
\label{l:L-path-gen}
%%%%%%%%%%%%%%%%%%%%%%%%%%%%%%%%%%%%%%%%%%%%%%%%%%%%%%%%%%%%%%%%%%%%%%%%%%%
Every object of $\presh{\forest}$ is path-generated.
%%%%%%%%%%%%%%%%%%%%%%%%%%%%%%%%%%%%%%%%%%%%%%%%%%%%%%%%%%%%%%%%%%%%%%%%%%%
\end{lemma}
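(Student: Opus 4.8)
The plan is to deduce path-generatedness directly from the density of the Yoneda embedding $\yo \colon \forest \to \presh{\forest}$, exploiting the fact that $\forest$ is posetal. The crucial observation, already made in the proof of Lemma~\ref{l:paths-in-presh-forest}, is that \emph{every} morphism $\yo a \to X$ is a monomorphism: its components are functions out of the subsingletons $\yo a(b) = \forest\funct{b,a}$, hence injective. By Lemma~\ref{l:paths-in-presh-forest} the paths in $\presh{\forest}$ are exactly $\zero$ and the representables, so the non-initial path embeddings into $X$ are precisely the morphisms $\yo a \emb X$, with no condition imposed beyond being a morphism.

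First I would identify the diagram $\Path{X}$. Since $\forest$ is posetal, the only automorphism of a representable $\yo a$ is the identity, so distinct morphisms $\yo a \to X$ determine distinct $\sim$-classes, and morphisms out of non-isomorphic representables land in distinct classes as well. Consequently the sub-poset of $\Path{X}$ on the non-initial path embeddings is isomorphic, via $[f \colon \yo a \emb X] \mapsto (a,f)$, to the comma category $\yo/X$ — which is itself a poset because $\forest$ is — equipped with the projection $\pi$ sending $[f]$ to $a$. One checks directly that this bijection is an order-isomorphism: $[f] \leq [f']$ holds in $\Path{X}$ iff $f$ factors through $f'$, iff there is a (necessarily unique) morphism $(a,f) \to (a',f')$ in $\yo/X$. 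The full poset $\Path{X}$ is then obtained from $\yo/X$ by adjoining a bottom element $\bot = [\zero \emb X]$, whose image under the canonical diagram is the initial presheaf $\zero$ and whose structure maps are the unique arrows $\zero \to \yo a$.

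Next I would invoke density: $X$ is the colimit of the canonical diagram
\[
\begin{array}{*{7}{l}}
  D & \colon & \yo/X & \stackrel{\pi}\longto & \forest & \stackrel{\yo}\longto & \presh{\forest},
\end{array}
\]
with cocone components the morphisms $f$ themselves. After choosing the representatives coming from $\yo/X$, this is exactly the restriction to $\Path{X} \setminus \{\bot\}$ of the cocone appearing in the definition of path-generated. It then remains to check that adjoining $\bot$ leaves the colimit unchanged. If $X \cong \zero$, equivalently $\yo/X = \emptyset$, then $\Path{X} = \{\bot\}$ and the colimit is $\zero \cong X$, so there is nothing to prove. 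Otherwise, for any cocone $(\lambda_p)$ over $\Path{X} \setminus \{\bot\}$ with vertex $V$, the component at $\bot$ is forced to be the unique map $\zero \to V$, and the compatibility conditions $\lambda_p \circ D(\bot \leq p) = \lambda_\bot$ hold automatically since both sides are maps out of $\zero$. Hence restriction gives a bijection between cocones over $\Path{X}$ and over $\Path{X} \setminus \{\bot\}$, so the two diagrams have the same colimit. Therefore the canonical cocone over $\Path{X}$ with vertex $X$ is colimiting, i.e.\ $X$ is path-generated.

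The main obstacle is the bookkeeping of the first two paragraphs: making precise the identification of $\Path{X}$ with $\yo/X$ together with an adjoined bottom element, and verifying that, after a coherent choice of representatives, the path-generated cocone genuinely coincides with the density cocone, so that the density theorem applies on the nose. Once this is in place, the remaining point — that an initial object mapping to an initial object contributes nothing to a colimit — is routine.
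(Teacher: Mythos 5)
Your proof is correct, but it follows a genuinely more direct route than the paper's. The paper's argument is a two-line reduction: by Remark~\ref{rem:path-density} (which rests on \cite[Lemma~5.1]{ar21arboreal} and presupposes conditions~\ref{ax:2-out-of-3}--\ref{ax:connected} of Definition~\ref{def:arboreal-cat}, established in the preceding lemmas), path-generation is equivalent to density of the full subcategory $\pth{\presh{\forest}}$; since by Lemma~\ref{l:paths-in-presh-forest} this subcategory consists of the representables together with the initial object, and the representables form a dense subcategory, density follows. You instead verify the definition of path-generation directly: you identify $\Path{X}$ with the poset $\yo/X$ plus an adjoined bottom element, match the canonical path cocone with the Yoneda density cocone, and check by hand that the initial object contributes nothing to the colimit. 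Both proofs ultimately pivot on the same two facts---density of the Yoneda embedding, and the observation (extracted from the proof of Lemma~\ref{l:paths-in-presh-forest}) that every map out of a representable is monic because $\forest$ is posetal, so that path embeddings into $X$ are exactly the objects of $\yo/X$. What your version buys is self-containedness: it does not invoke the general equivalence of Remark~\ref{rem:path-density}, hence does not depend on the other arboreal axioms or on the external lemma, and it makes explicit the step that the paper glosses over when passing from density of the representables to density of the paths (namely that adjoining an initial object to the diagram leaves the colimit unchanged---note that enlarging a dense subcategory does not preserve density in general, so this step does require the argument you give). The cost is precisely the bookkeeping you flag: the identification of $\Path{X}$ with $\yo/X$ together with a coherent choice of representatives, which the paper's reduction hides inside the cited general lemma.
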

%%%%%%%%%%%%%%%%%%%%%%%%%%%%%%%%%%%%%%%%%%%%%%%%%%%%%%%%%%%%%%%%%%%%%%%%%%%

%%%%%%%%%%%%%%%%%%%%%%%%%%%%%%%%%%%%%%%%%%%%%%%%%%%%%%%%%%%%%%%%%%%%%%%%%%%
\begin{proof}
%%%%%%%%%%%%%%%%%%%%%%%%%%%%%%%%%%%%%%%%%%%%%%%%%%%%%%%%%%%%%%%%%%%%%%%%%%%
By Remark~\ref{rem:path-density},
it suffices to show that the full subcategory
$\pth{\presh{\forest}}$ of $\presh{\forest}$ defined by the paths is dense.
Representables in $\presh{\forest}$ form a dense subcategory;
as $\pth{\presh{\forest}}$ can be identified by Lemma~\ref{l:paths-in-presh-forest} with
the subcategory of representables, along with the initial object, it is also dense.
%%%%%%%%%%%%%%%%%%%%%%%%%%%%%%%%%%%%%%%%%%%%%%%%%%%%%%%%%%%%%%%%%%%%%%%%%%%
\end{proof}
%%%%%%%%%%%%%%%%%%%%%%%%%%%%%%%%%%%%%%%%%%%%%%%%%%%%%%%%%%%%%%%%%%%%%%%%%%%

%%%%%%%%%%%%%%%%%%%%%%%%%%%%%%%%%%%%%%%%%%%%%%%%%%%%%%%%%%%%%%%%%%%%%%%%%%%
\section{Gabriel-Ulmer duality}
\label{sec:lfp}
%%%%%%%%%%%%%%%%%%%%%%%%%%%%%%%%%%%%%%%%%%%%%%%%%%%%%%%%%%%%%%%%%%%%%%%%%%%

In the present work, 
we rely crucially on Gabriel-Ulmer duality for
locally finitely presentable (lfp) categories,
which has two possible readings: an algebraic one and a logical one.
In its algebraic reading,
Gabriel-Ulmer duality is
a dual biequivalence between lfp categories and
\emph{finitely complete small categories}.
Applying the duality twice turns an lfp category into an equivalent category
consisting of all finite-limit-preserving functors
\[
\cat C \longto \Set,
\]

\noindent
where $\cat C$ is small and finitely complete.
In its logical reading, Gabriel-Ulmer duality is a syntax-semantics duality, with
finitely complete small categories representing certain first-order theories,
namely the cartesian theories of \S\ref{sec:prelim:coste},
and lfp categories being categories of models of cartesian theories.
In this section we focus on the algebraic view,
leaving the discussion of the logical view to~\S\ref{sec:coste}.

We begin in~\S\ref{sec:lfp:alglatt} by recalling
the posetal restriction of Gabriel-Ulmer duality,
namely the
duality between (meet-)semilattices and algebraic lattices~\cite{Mislove-scattered1984}.
The full duality between lfp categories and small finitely complete
categories is presented in~\S\ref{ss:lex-and-lfp}.%
\footnote{In its original form~\cite{gu71lnm},
this duality involves small $\kappa$-complete categories and
locally $\kappa$-presentable categories, for~$\kappa$ a regular cardinal.
We shall focus on the case $\kappa = \aleph_0$.}

%%%%%%%%%%%%%%%%%%%%%%%%%%%%%%%%%%%%%%%%%%%%%%%%%%%%%%%%%%%%%%%%%%%%%%%%%%%
\subsection{Semilattices and algebraic lattices}
\label{sec:lfp:alglatt}
%%%%%%%%%%%%%%%%%%%%%%%%%%%%%%%%%%%%%%%%%%%%%%%%%%%%%%%%%%%%%%%%%%%%%%%%%%%
A \emph{(meet-)semilattice} is a poset in which every finite subset has an infimum,
and a morphism of semilattices is a map that preserves finite infima.
We denote the ensuing category by $\SL$. 

In the same way that Boolean algebras correspond to classical propositional theories
via the Lindenbaum-Tarski construction,
semilattices correspond to theories in the $\{\top,\wedge\}$-fragment of
propositional logic.
From this viewpoint, a \emph{model} of a semilattice $L$ is a morphism of semilattices $h\colon L\to 2$
into the two-element semilattice.
The set $\SL[L,2]$ of all models of $L$ is a poset with respect to
the pointwise order,
and is order-isomorphic to the poset $\Filter L$ of filters on $L$
(ordered by set-theoretic inclusion)
via the map $h\mapsto h^{-1}(1)$,
where $1$ denotes the infimum of the empty set
(i.e.\ the largest element of the semilattice).

%%%%%%%%%%%%%%%%%%%%%%%%%%%%%%%%%%%%%%%%%%%%%%%%%%%%%%%%%%%%%%%%%%%%%%%%%%%
\begin{remark}
\label{rem:Yoneda-factors-semilattices}
%%%%%%%%%%%%%%%%%%%%%%%%%%%%%%%%%%%%%%%%%%%%%%%%%%%%%%%%%%%%%%%%%%%%%%%%%%%
Regarding a semilattice $L$ as a category, the 
functor category $[L,2]$ 
can be identified with the poset $\Up L$ of upwards closed subsets of $L$.
The (contravariant) Yoneda embedding $L^\op \to \funct{L,\Set}$ factors through the embedding $L^\op \to \Up L$ that
sends $a$ to $\up a$. In turn, the latter factors through the inclusion $\Filter L\into \Up L$.
In fact, the image of the Yoneda embedding consists of the \emph{principal} filters on $L$.
\end{remark}

The intersection of a set of filters is again a filter,
hence the poset $\Filter L$ is a complete lattice.
Further, \emph{directed} suprema in $\Filter L$ are given by set-theoretic unions.
It follows that, for all morphisms of semilattices $k\colon L\to M$,
the map $\Filter k\colon \Filter M \to \Filter L$
defined by $F\mapsto k^{-1}(F)$ preserves arbitrary infima and directed suprema.

In order to characterise the lattices of the form $\Filter L$,
recall that an element $x$ of a poset $X$ is \emph{compact}
if, for all directed subsets $S\subseteq X$ admitting a supremum,
$x\leq \bigvee{S}$ implies $x\leq y$ for some $y\in S$. 
Regarding the poset $X$ as a category,
an element $x\in X$ is compact precisely when it is finitely presentable.
Principal filters are (exactly the) compact elements of $\Filter L$,
thus each element of $\Filter L$ is a directed supremum of compact ones.
That is, $\Filter L$ is an \emph{algebraic lattice}:

%%%%%%%%%%%%%%%%%%%%%%%%%%%%%%%%%%%%%%%%%%%%%%%%%%%%%%%%%%%%%%%%%%%%%%%%%%%
\begin{definition}
\label{def:lfp:alglatt}
%%%%%%%%%%%%%%%%%%%%%%%%%%%%%%%%%%%%%%%%%%%%%%%%%%%%%%%%%%%%%%%%%%%%%%%%%%%
A poset $X$ is an \emph{algebraic lattice} if it satisfies the following conditions:
\begin{enumerate}[(i)]
\item $X$ is a complete lattice.
\item Every element of $X$ is the supremum of a directed set of compact elements.
\end{enumerate}

\noindent
We write $\Alg$ for the category of algebraic lattices and maps that preserve
directed suprema and all infima.
%%%%%%%%%%%%%%%%%%%%%%%%%%%%%%%%%%%%%%%%%%%%%%%%%%%%%%%%%%%%%%%%%%%%%%%%%%%
\end{definition}
%%%%%%%%%%%%%%%%%%%%%%%%%%%%%%%%%%%%%%%%%%%%%%%%%%%%%%%%%%%%%%%%%%%%%%%%%%%

%%%%%%%%%%%%%%%%%%%%%%%%%%%%%%%%%%%%%%%%%%%%%%%%%%%%%%%%%%%%%%%%%%%%%%%%%%%
\begin{remark}
%%%%%%%%%%%%%%%%%%%%%%%%%%%%%%%%%%%%%%%%%%%%%%%%%%%%%%%%%%%%%%%%%%%%%%%%%%%
A poset (regarded as a small category)
is lfp if, and only if, it is an algebraic lattice
\cite[Example 1.10(5)]{ar94book}.
Moreover, the morphisms of $\Alg$ correspond precisely to the morphisms of lfp categories.
In fact, since a cocomplete small category
is a preorder with all suprema
\cite[Proposition V.2.3]{maclane98book},
and is thus equivalent (as a category) to a complete lattice, every small lfp category is equivalent to an algebraic lattice.
%%%%%%%%%%%%%%%%%%%%%%%%%%%%%%%%%%%%%%%%%%%%%%%%%%%%%%%%%%%%%%%%%%%%%%%%%%%
\end{remark}
%%%%%%%%%%%%%%%%%%%%%%%%%%%%%%%%%%%%%%%%%%%%%%%%%%%%%%%%%%%%%%%%%%%%%%%%%%%

Given an algebraic lattice $X$,
let $K X$ be the set of compact elements of $X$ with the induced order.
Note that $K X$ admits finite suprema (computed in $X$),
hence $(K X)^\op$ is a (meet-)semilattice.
Now, consider a morphism $f\colon X\to Y$ in $\Alg$.
Since $f$ preserves all infima, it has a left adjoint $\ladj f$.
In turn, $\ladj f$ preserves compact elements because $f$ preserves directed suprema.
If $Kf\colon K Y\to K X$ denotes the restriction of $\ladj f$ to the compact elements,
its order-dual $(Kf)^\op \colon (K Y)^\op\to (K X)^\op$ is a morphism of semilattices.
We thus obtain a pair of functors
\[
\Filter(-)\colon \SL^\op \leftrightarrows \Alg \cocolon K(-)^\op.
\]

These functors are quasi-inverse each other,
yielding a dual equivalence between $\SL$ and $\Alg$, see~\cite{Mislove-scattered1984}.
In particular, for every semilattice $L$ and algebraic lattice $X$ there are natural
isomorphisms $L\cong (K(\Filter L))^\op$ and $X\cong \Filter (K(X)^\op)$. 

\begin{remark}
For any algebraic lattice $X$, the set $\Alg[X,2]$ equipped with the pointwise order is isomorphic to $(K X)^\op$. In fact, $K(-)^\op\colon \Alg\to\SL^\op$ is naturally isomorphic to (the appropriate co-restriction of) the functor $\Alg[-,2]$. In other words, the duality between semilattices and algebraic lattices is induced by the dualising object $2$, regarded as either a semilattice or an algebraic lattice.
\end{remark}

%%%%%%%%%%%%%%%%%%%%%%%%%%%%%%%%%%%%%%%%%%%%%%%%%%%%%%%%%%%%%%%%%%%%%%%%%%%
\subsection{Finitely complete and locally finitely presentable categories}
\label{ss:lex-and-lfp}
%%%%%%%%%%%%%%%%%%%%%%%%%%%%%%%%%%%%%%%%%%%%%%%%%%%%%%%%%%%%%%%%%%%%%%%%%%%
Let $\cat D$ be a finitely complete small category. 
Recall that we denote by $\lex\funct{\cat D,\Set}$ the full subcategory of the functor
category $\funct{\cat D,\Set}$ consisting of the lex-morphisms,
i.e.\ of those functors $\cat D \to \Set$
which preserve finite limits.

%%%%%%%%%%%%%%%%%%%%%%%%%%%%%%%%%%%%%%%%%%%%%%%%%%%%%%%%%%%%%%%%%%%%%%%%%%%
\begin{remark}
%%%%%%%%%%%%%%%%%%%%%%%%%%%%%%%%%%%%%%%%%%%%%%%%%%%%%%%%%%%%%%%%%%%%%%%%%%%
A poset, regarded as a (small) category,
is finitely complete if and only if it is a semilattice.
A map between semilattices is a morphism of semilattices precisely when it is a lex-morphism.
%%%%%%%%%%%%%%%%%%%%%%%%%%%%%%%%%%%%%%%%%%%%%%%%%%%%%%%%%%%%%%%%%%%%%%%%%%%
\end{remark}
%%%%%%%%%%%%%%%%%%%%%%%%%%%%%%%%%%%%%%%%%%%%%%%%%%%%%%%%%%%%%%%%%%%%%%%%%%%

Generalising Remark~\ref{rem:Yoneda-factors-semilattices},
since representable functors preserve limits,
the contravariant Yoneda embedding factors as 
\[
\begin{tikzcd}
  \cat D^\op
  \arrow{r}{\yoneda}
& \lex\funct{\cat D,\Set}
  \arrow[hook]{r}{\incl}
& \funct{\cat D,\Set}.
\end{tikzcd}
\]

Limits and colimits in $\funct{\cat D,\Set}$ exist and are computed pointwise;
since limits commute with (finite) limits,
$\lex\funct{\cat D,\Set}$ is closed in $\funct{\cat D,\Set}$ under limits.
Similarly, as filtered colimits in $\Set$ commute with finite limits,
if $f \in \funct{\cat D, \Set}$ is a filtered colimit 
of finite-limit-preserving functors,
then $f$ itself preserves finite limits.
Since $\incl \colon \lex\funct{\cat D,\Set} \into \funct{\cat D,\Set}$
is a full inclusion, we get that
filtered colimits in $\lex\funct{\cat D,\Set}$ exist and are preserved by~$\incl$.
In other words,
$\lex\funct{\cat D,\Set}$ is closed in $\funct{\cat D,\Set}$ under filtered colimits.
Further, $\lex\funct{\cat D,\Set}$ admits also finite colimits,
although they are not computed pointwise
(see e.g.\ \cite[Theorem~VI.1.6]{johnstone82book},
where it is also proved that
$\yoneda \colon \cat D^\op \to \lex\funct{\cat D,\Set}$ preserves finite colimits).
Hence, $\lex\funct{\cat D,\Set}$ is a cocomplete category.

Recall that the categorical analogue of compact elements are finitely presentable objects. 
By the Yoneda lemma, all representable functors are finitely presentable in
$\funct{\cat D,\Set}$,
hence in $\lex\funct{\cat D,\Set}$ because the inclusion $\incl$ preserves
filtered colimits.\footnote{\label{footnote:lfp:splitting}%
In fact, since idempotents split in $\cat D$,
the finitely presentable objects of $\lex\funct{\cat D,\Set}$
are exactly the representable ones, 
see~\cite[Proposition 6.5.4 and Lemma 6.5.6]{borceux94vol1}.}\
Since every object $F$ of $\funct{\cat D,\Set}$ is the colimit of the
canonical diagram of representable functors 
\[
\begin{tikzcd}
  (\yoneda/F)
  \arrow{r}{\pi}
& \cat D^\op
  \arrow{r}{\yoneda}
& \lex\funct{\cat D,\Set}
  \arrow[hook]{r}{\incl}
& \funct{\cat D,\Set},
\end{tikzcd}
\]

\noindent
and the latter is filtered whenever $F$ is a lex-morphism
(cf.\ e.g.\ \cite[Proposition~VI.1.3]{johnstone82book}),
every object of $\lex\funct{\cat D,\Set}$ is a filtered colimit
of representable (hence finitely presentable) ones.
Thus, $\lex\funct{\cat D,\Set}$ is a locally finitely presentable category.

Conversely, given an lfp category $\E$,
its full subcategory $\fp\E$ defined by
the finitely presentable objects
is essentially small and closed under finite colimits,
see e.g.\ \cite[Proposition~1.3 and Remark~1.9]{ar94book}.
Replacing $\fp\E$ with a small skeleton if necessary,
$\fp\E^\op$ is a small finitely complete category,
and there is an equivalence of categories
\[
\begin{array}{l l l}
  \E
& \simeq
& \lex\funct{\fp\E^\op,\Set},
\end{array}
\] 

\noindent
see e.g.\ \cite[Theorem~1.46]{ar94book}.
In particular, a category is lfp if, and only if, it is equivalent to one of the form
$\lex\funct{\cat D,\Set}$ for some finitely complete small category $\cat D$.
As a result, each lfp category is complete and well-powered.

We now turn to morphisms of lfp categories.
Recall from Definition~\ref{def:lfp}
that they are the limit-preserving finitary functors between lfp categories.
Every lex-morphism $f\colon \cat D \to \cat D'$ induces a functor 
\begin{equation}
\label{eq:f-upper-star}
  \ladj f\colon \lex\funct{\cat D',\Set} \to \lex\funct{\cat D,\Set},
  \
  \ladj f G \deq G\comp f.
\end{equation}

\noindent
The functor $-\comp f\colon \funct{\cat D',\Set} \to \funct{\cat D,\Set}$
preserves all limits and colimits because it is both right and left adjoint
(cf.\ e.g.\ \cite[Example A4.1.4]{johnstone02book}).
Hence, $\ladj f$ preserves limits and filtered colimits
since $\lex\funct{\cat D,\Set}$ and
$\lex\funct{\cat D',\Set}$ are closed under limits and filtered colimits in
the respective functor categories.
That is, $\ladj f$ is a morphism of lfp categories.

We shall now see that every morphism of lfp categories
$F \colon \cat E \to \cat F$
arises from one of the form 
$\ladj f$ for a lex-morphism
$f\colon \fp{\cat F}^\op \to \fp{\cat E}^\op$.
To this end, we recall the following known characterisation of morphisms of lfp categories:

%%%%%%%%%%%%%%%%%%%%%%%%%%%%%%%%%%%%%%%%%%%%%%%%%%%%%%%%%%%%%%%%%%%%%%%%%%%
\begin{proposition}
\label{p:lfp-morphisms-characterisation}
%%%%%%%%%%%%%%%%%%%%%%%%%%%%%%%%%%%%%%%%%%%%%%%%%%%%%%%%%%%%%%%%%%%%%%%%%%%
A functor between lfp categories is a morphism of lfp categories
if, and only if, it has a left adjoint that preserves finitely presentable objects.
%%%%%%%%%%%%%%%%%%%%%%%%%%%%%%%%%%%%%%%%%%%%%%%%%%%%%%%%%%%%%%%%%%%%%%%%%%%
\end{proposition}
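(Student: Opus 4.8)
The plan is to reduce the proposition to two standard ingredients: the adjoint functor theorem for locally presentable categories, which produces the left adjoint, and a Yoneda-style equivalence asserting that, for any adjunction, preservation of finite presentability by the left adjoint is the same as finitariness of the right adjoint. First I would isolate the following lemma. Given an adjunction $G \dashv F$ with $F \colon \cat E \to \cat F$ between lfp categories, the left adjoint $G$ preserves finitely presentable objects if, and only if, $F$ is finitary. For the implication from finitariness of $F$: if $C \in \cat F$ is finitely presentable and $(Y_j)$ is a filtered diagram in $\cat E$, then the hom-isomorphism $\cat E(GC,-) \cong \cat F(C,F-)$, combined with the facts that $F$ preserves the filtered colimit $\varinjlim_j Y_j$ and that $\cat F(C,-)$ preserves filtered colimits, shows that $\cat E(GC,-)$ preserves this colimit; hence $GC$ is finitely presentable. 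For the converse, I would use that isomorphisms in $\cat F$ are detected by homming out of finitely presentable objects (these form a strong generator, since $\cat F \simeq \lex\funct{\fp{\cat F}^\op,\Set}$ with the finitely presentable objects being the representables): for a filtered diagram $(Y_j)$ in $\cat E$ and finitely presentable $C \in \cat F$, the same hom-isomorphism together with $GC$ finitely presentable shows that $\cat F(C,-)$ sends the canonical comparison $\varinjlim_j FY_j \to F\varinjlim_j Y_j$ to a bijection, whence the comparison is an isomorphism and $F$ is finitary.

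Granting the lemma, both directions of the proposition follow at once. For the ``if'' direction, suppose $F$ admits a left adjoint $G$ preserving finitely presentable objects. As a right adjoint, $F$ preserves all limits, and by the lemma it is finitary; hence $F$ is a morphism of lfp categories in the sense of Definition~\ref{def:lfp}. For the ``only if'' direction, suppose $F \colon \cat E \to \cat F$ is a morphism of lfp categories, so that $F$ preserves limits and filtered colimits. Being finitary, $F$ is in particular accessible, so the adjoint functor theorem for locally presentable categories (see e.g.\ \cite[Theorem~1.66]{ar94book})---a limit-preserving accessible functor out of a locally presentable category is a right adjoint---yields a left adjoint $G$. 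Since $F$ is finitary, the lemma gives that $G$ preserves finitely presentable objects, as required.

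The only genuinely non-elementary ingredient is the existence of the left adjoint in the ``only if'' direction, which rests on the adjoint functor theorem for accessible functors between locally presentable categories; everything else is bookkeeping with the hom-isomorphism and with the detection of isomorphisms by finitely presentable objects. The subtlety I would take care to verify is that the isomorphisms in the Yoneda computation are natural in the diagram variable and compatible with the canonical comparison map, so that proving $\cat F(C,-)$ inverts the comparison genuinely forces the comparison itself to be invertible; this is where the strong-generator property of the finitely presentable objects of $\cat F$ is used.
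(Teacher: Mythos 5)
Your proof is correct and takes essentially the same route as the paper's: the paper likewise obtains the left adjoint from the Adjoint Functor Theorem for lfp categories (\cite[Theorem~1.66]{ar94book}) and cites the same source for the equivalence between finitariness of the right adjoint and preservation of finitely presentable objects by the left adjoint, which you instead prove directly via the hom-set isomorphism and the fact that finitely presentable objects detect isomorphisms. The only difference is that you spell out this last equivalence rather than citing it, which is a harmless (and welcome) elaboration.
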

%%%%%%%%%%%%%%%%%%%%%%%%%%%%%%%%%%%%%%%%%%%%%%%%%%%%%%%%%%%%%%%%%%%%%%%%%%%

%%%%%%%%%%%%%%%%%%%%%%%%%%%%%%%%%%%%%%%%%%%%%%%%%%%%%%%%%%%%%%%%%%%%%%%%%%%
\begin{fullproof}
%%%%%%%%%%%%%%%%%%%%%%%%%%%%%%%%%%%%%%%%%%%%%%%%%%%%%%%%%%%%%%%%%%%%%%%%%%%
By the Adjoint Functor Theorem for lfp categories (cf.\ \cite[1.66]{ar94book}),
any limit-preserving finitary functor between lfp categories is right adjoint.
Further, a right adjoint between lfp categories is finitary precisely when
its left adjoint preserves finitely presentable objects
(cf.\ the proof in \emph{loc.\ cit.}).
%%%%%%%%%%%%%%%%%%%%%%%%%%%%%%%%%%%%%%%%%%%%%%%%%%%%%%%%%%%%%%%%%%%%%%%%%%%
\end{fullproof} 
%%%%%%%%%%%%%%%%%%%%%%%%%%%%%%%%%%%%%%%%%%%%%%%%%%%%%%%%%%%%%%%%%%%%%%%%%%%

%%%%%%%%%%%%%%%%%%%%%%%%%%%%%%%%%%%%%%%%%%%%%%%%%%%%%%%%%%%%%%%%%%%%%%%%%%%
\opt{full}{\begin{remark} % BEGIN FULL
\label{rem:lfp:ladj}
%%%%%%%%%%%%%%%%%%%%%%%%%%%%%%%%%%%%%%%%%%%%%%%%%%%%%%%%%%%%%%%%%%%%%%%%%%%
Let $f \in \lex\funct{\cat D,\cat D'}$.
As $\lex\funct{\cat D',\Set}$ is cocomplete,
the restriction of the functor
$- \comp f$ to $\lex\funct{\cat D',\Set} \to \funct{\cat D,\Set}$
has a left adjoint $\eadj f$
that sends $F$ to the left Kan extension $\Lan_{\yoneda}(\yoneda f)(F)=\Lan_f{F}$
\cite[Theorem I.5.2]{mm92book}.
The left adjoint of $\ladj f$ is then given by the restriction
of $\eadj f$ to $\lex\funct{\cat D,\Set}$.
%%%%%%%%%%%%%%%%%%%%%%%%%%%%%%%%%%%%%%%%%%%%%%%%%%%%%%%%%%%%%%%%%%%%%%%%%%%
\end{remark}} % END FULL
%%%%%%%%%%%%%%%%%%%%%%%%%%%%%%%%%%%%%%%%%%%%%%%%%%%%%%%%%%%%%%%%%%%%%%%%%%%

If $F\colon \cat E \to \cat F$ is a morphism of lfp categories then, 
by Proposition~\ref{p:lfp-morphisms-characterisation}, it has a left adjoint
that restricts to a lex-morphism $f\colon \fp{\cat F}^\op \to \fp{\cat E}^\op$.
Then $F$ corresponds to $\ladj f$
via the equivalences $\cat E \cong \lex\funct{\fp{\cat E}^\op,\Set}$
and $\cat F \cong \lex\funct{\fp{\cat F}^\op,\Set}$.
In particular, every morphism of lfp categories
$\lex\funct{\cat D',\Set} \to \lex\funct{\cat D,\Set}$
is naturally isomorphic to one of the form $\ladj f$ for a lex-morphism
$f\colon \cat D \to \cat D'$. 

%%%%%%%%%%%%%%%%%%%%%%%%%%%%%%%%%%%%%%%%%%%%%%%%%%%%%%%%%%%%%%%%%%%%%%%%%%%
\begin{remark}
\label{rem:finitely-accessible-adj}
%%%%%%%%%%%%%%%%%%%%%%%%%%%%%%%%%%%%%%%%%%%%%%%%%%%%%%%%%%%%%%%%%%%%%%%%%%%
In Definition~\ref{def:path:finaccadj},
we called an adjunction
$L\dashv R$ between lfp categories \emph{finitely accessible} when $R$ is finitary.
By Proposition~\ref{p:lfp-morphisms-characterisation},
the morphisms of lfp categories can be characterised 
as the 
(right adjoints in) finitely accessible adjunctions.
%%%%%%%%%%%%%%%%%%%%%%%%%%%%%%%%%%%%%%%%%%%%%%%%%%%%%%%%%%%%%%%%%%%%%%%%%%%
\end{remark}
%%%%%%%%%%%%%%%%%%%%%%%%%%%%%%%%%%%%%%%%%%%%%%%%%%%%%%%%%%%%%%%%%%%%%%%%%%%

Write $\LFP$ for the 2-category whose 0-cells (objects) are
locally finitely presentable categories, whose 1-cells (morphisms)
are morphisms of lfp categories, 
and whose 2-cells are natural transformations.
Moreover, write $\Lex$ for the 2-category of small finitely complete categories,
lex-morphisms and natural transformations.
In view of the above,
there is a 2-functor from $\Lex^\op$ (reverse the 1-cells) to
$\LFP$ sending $\cat D$ to $\lex\funct{\cat D,\Set}$. 
\emph{Gabriel-Ulmer duality} states that the latter is a
biequivalence $\Lex^\op \to \LFP$
(with left biadjoint $\E \mapsto \fp\E^\op$).
This was established at the level of objects by Gabriel and Ulmer 
in~\cite{gu71lnm}
and later extended to a fully fledged duality, cf.~\cite{mp87tams,ap98ja}.
%%%%%%%%%%%%%%%%%%%%%%%%%%%%%%%%%%%%%%%%%%%%%%%%%%%%%%%%%%%%%%%%%%%%%%%%%%%
%\end{remark}
%%%%%%%%%%%%%%%%%%%%%%%%%%%%%%%%%%%%%%%%%%%%%%%%%%%%%%%%%%%%%%%%%%%%%%%%%%%
%%%%%%%%%%%%%%%%%%%%%%%%%%%%%%%%%%%%%%%%%%%%%%%%%%%%%%%%%%%%%%%%%%%%%%%%%%%
\section{Functorial semantics for cartesian theories}
\label{sec:coste}
%%%%%%%%%%%%%%%%%%%%%%%%%%%%%%%%%%%%%%%%%%%%%%%%%%%%%%%%%%%%%%%%%%%%%%%%%%%

Building on the idea of functorial semantics,
logical characterisations of lfp categories
have been proposed, see e.g.~\cite{ar94book}.
We recall the approach of Coste~\cite{coste76benabou},
according to which lfp categories correspond exactly
to categories of models of cartesian theories
(in the sense of \S\ref{sec:prelim:coste}).
We mostly follow the presentation in~\cite[\S D1]{johnstone02book}.

%%%%%%%%%%%%%%%%%%%%%%%%%%%%%%%%%%%%%%%%%%%%%%%%%%%%%%%%%%%%%%%%%%%%%%%%%%%
\subsection{Syntactic categories and functorial semantics}
\label{sec:coste:synt}
%%%%%%%%%%%%%%%%%%%%%%%%%%%%%%%%%%%%%%%%%%%%%%%%%%%%%%%%%%%%%%%%%%%%%%%%%%%

A basic idea of categorical logic
is that a theory $\theory$ should be seen as a category,
its \emph{syntactic category},
and models of $\theory$ as certain functors defined on the latter category.
Theorem~\ref{thm:coste:synt} and
Corollary~\ref{cor:lfp-iff-models-of-T}
below establish these equivalences for cartesian theories and their $\Set$-based models.

%%%%%%%%%%%%%%%%%%%%%%%%%%%%%%%%%%%%%%%%%%%%%%%%%%%%%%%%%%%%%%%%%%%%%%%%%%%
\begin{definition}[Syntactic category]
\label{def:coste:synt}
%%%%%%%%%%%%%%%%%%%%%%%%%%%%%%%%%%%%%%%%%%%%%%%%%%%%%%%%%%%%%%%%%%%%%%%%%%%
The \emph{syntactic category} $\cat T$ of a
cartesian
theory $\theory$
is defined as follows.
\begin{description}

\item[Objects]
are
formulae-in-context
$\Env \sorting \varphi$
up to renaming of variables
(i.e., $\alpha$-equivalence),
denoted by
$\OI{\Env \mid \varphi}$.

\item[Morphisms]
from
$\OI{\vec x \mid \varphi}$
to
$\OI{\vec y \mid \psi}$
are $\theory$-provable equivalence classes of $\theory$-provable functional
relations from
$\OI{\vec x \mid \varphi}$ to
$\OI{\vec y \mid \psi}$,
i.e.\ $\theory$-provable equivalence classes of formulae-in-context
$\vec x,\vec y \sorting \theta$
that are cartesian over $\theory$ and such that $\theory$ proves the sequents:
\[
\begin{array}{r l l}
  \theta
& \thesis_{\vec x,\vec y}
& \varphi \land \psi
\\

  \theta
  \land
  \theta[\vec z/\vec y]
& \thesis_{\vec x,\vec y,\vec z}
& \vec z \Eq \vec y
\\

  \varphi
& \thesis_{\vec x}
& (\exists \vec y)\theta
\end{array}
\]

\noindent
We denote by $\MI{\vec x,\vec y \mid \theta}$ the equivalence class
of $\vec x,\vec y \sorting \theta$. 
\end{description}
\end{definition}

\noindent
Note that the last sequent above is cartesian over the penultimate one.
It is clear that $\cat T$ is a category.
Moreover:

%%%%%%%%%%%%%%%%%%%%%%%%%%%%%%%%%%%%%%%%%%%%%%%%%%%%%%%%%%%%%%%%%%%%%%%%%%%
\begin{lemma}
\label{lem:coste:synt:lim}
%%%%%%%%%%%%%%%%%%%%%%%%%%%%%%%%%%%%%%%%%%%%%%%%%%%%%%%%%%%%%%%%%%%%%%%%%%%
The syntactic category 
of a cartesian theory 
is finitely complete.
%%%%%%%%%%%%%%%%%%%%%%%%%%%%%%%%%%%%%%%%%%%%%%%%%%%%%%%%%%%%%%%%%%%%%%%%%%%
\end{lemma}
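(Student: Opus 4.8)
The plan is to establish finite completeness by producing a terminal object, binary products, and equalisers in $\cat T$; by the standard criterion (see e.g.~\cite{maclane98book}) these three suffice for the existence of all finite limits. Throughout, the governing constraint is that every formula-in-context appearing in a construction must be cartesian over $\theory$ in the sense of Definition~\ref{def:prelim:coste:th}. This is automatic whenever only finite conjunctions are used, and it requires a uniqueness check precisely when an existential quantifier is introduced.

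First I would take $\OI{\emptyset \mid \True}$ as the terminal object. For any object $\OI{\vec x \mid \varphi}$, the unique morphism to it is represented by $\varphi$ itself, viewed as a functional relation with empty target context; here the three defining sequents for a functional relation collapse to $\theta \thesis_{\vec x} \varphi$ together with $\varphi \thesis_{\vec x} \theta$, so $\theta$ must be $\theory$-provably equivalent to $\varphi$, which gives uniqueness. For binary products, after renaming so that the contexts are disjoint, I would set the product of $\OI{\vec x \mid \varphi}$ and $\OI{\vec y \mid \psi}$ to be $\OI{\vec x, \vec y \mid \varphi \land \psi}$, with the first projection represented by $\MI{\vec x, \vec y, \vec x' \mid \varphi \land \psi \land \vec x' \Eq \vec x}$ and the second by its analogue; the mediating morphism out of $\OI{\vec z \mid \chi}$ induced by relations $\theta_1, \theta_2$ is represented by $\MI{\vec z, \vec x, \vec y \mid \theta_1 \land \theta_2}$. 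Since only conjunctions of cartesian formulae occur, cartesianness over $\theory$ is immediate, and the universal property reduces to routine derivations in the provability relation.

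The delicate construction is the equaliser. Given parallel morphisms represented by functional relations $\theta$ and $\theta'$ from $\OI{\vec x \mid \varphi}$ to $\OI{\vec y \mid \psi}$, I would form the object $\OI{\vec x \mid \varphi \land (\exists \vec y)(\theta \land \theta')}$, with equalising map the ``inclusion'' represented by $\MI{\vec x, \vec x' \mid \varphi \land (\exists \vec y)(\theta \land \theta') \land \vec x' \Eq \vec x}$. The predicate $(\exists \vec y)(\theta \land \theta')$ asserts that the common value of the two morphisms exists, and so cuts out exactly the subobject on which they agree. To verify the universal property, one checks that a morphism from $\OI{\vec z \mid \chi}$ equalising $\theta$ and $\theta'$ has $\chi$ provably forcing its $\vec x$-value to satisfy $(\exists \vec y)(\theta \land \theta')$, and then factors uniquely because the equalising map is monic (its predicate implies $\varphi$). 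One also notes that the construction is independent, up to provable equivalence, of the chosen representatives $\theta, \theta'$.

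The main obstacle, and the only place where the cartesian hypothesis on $\theory$ is genuinely used, is ensuring that the equaliser object stays within cartesian logic: I must check that $(\exists \vec y)(\theta \land \theta')$ is a \emph{cartesian} existential, i.e.\ that $\theta \land \theta'$ is $\theory$-provably functional in $\vec y$. This follows because functionality of $\theta$ already yields $\theta \land \theta[\vec z/\vec y] \thesis_{\vec x, \vec y, \vec z} \vec z \Eq \vec y$, and $\theta \land \theta'$ contains $\theta$ and $\theta[\vec z/\vec y]$ among its conjuncts; hence it satisfies the uniqueness clause, so by Definition~\ref{def:prelim:coste:th} the quantified formula---and therefore $\varphi \land (\exists \vec y)(\theta \land \theta')$---is cartesian over $\theory$. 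Once this is secured, the remaining verifications are purely formal derivations in the internal logic.
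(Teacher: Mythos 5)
Your proposal is correct and follows essentially the same route as the paper, which sketches exactly this construction (terminal object $\OI{\emptyset \mid \True}$, binary products via conjunction in disjoint contexts, and equalisers via $\varphi \land (\exists \vec y)(\theta \land \theta')$) and defers the details to \cite[Lemma D1.4.2]{johnstone02book}. Your key observation—that functionality of $\theta$ alone already supplies the provable-uniqueness clause making the existential in the equaliser cartesian—is precisely the point the paper's parenthetical ``(provably unique) existential quantifiers'' alludes to.
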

%%%%%%%%%%%%%%%%%%%%%%%%%%%%%%%%%%%%%%%%%%%%%%%%%%%%%%%%%%%%%%%%%%%%%%%%%%%

%%%%%%%%%%%%%%%%%%%%%%%%%%%%%%%%%%%%%%%%%%%%%%%%%%%%%%%%%%%%%%%%%%%%%%%%%%%
\begin{proof}[Sketch of proof]
%%%%%%%%%%%%%%%%%%%%%%%%%%%%%%%%%%%%%%%%%%%%%%%%%%%%%%%%%%%%%%%%%%%%%%%%%%%
The terminal object is given by the formula $\True$ in the empty context,
and binary products by binary conjunctions.
Equalisers are obtained by means of binary conjunctions and
(provably unique) existential quantifiers. 
See~\cite[Lemma D1.4.2]{johnstone02book}.
%%%%%%%%%%%%%%%%%%%%%%%%%%%%%%%%%%%%%%%%%%%%%%%%%%%%%%%%%%%%%%%%%%%%%%%%%%%
\end{proof}
%%%%%%%%%%%%%%%%%%%%%%%%%%%%%%%%%%%%%%%%%%%%%%%%%%%%%%%%%%%%%%%%%%%%%%%%%%%

Let $\theory$ be a cartesian theory.
Given a model $M \in \Mod(\theory)$,
we can associate with each object
$\OI{\vec x \mid \varphi}$ of the syntactic category $\cat T$
the set $\I{\vec x \mid \varphi}_M \in \Set$.
Further, each morphism
\(
  \MI{\vec x,\vec y \mid \theta}
  \colon
  \OI{\vec x \mid \varphi}
  \to
  \OI{\vec y \mid \psi}
\)
induces a functional relation
\begin{equation}
\label{eq:coste:graph}
\begin{tikzcd}
  \I{\vec x,\vec y \mid \theta}_M
  \arrow[hook]{r}
& \I{\vec x \mid \varphi}_M
  \times
  \I{\vec y \mid \psi}_M
\end{tikzcd}
\end{equation}

\noindent
and is thus the graph of a function
$\I{\vec x \mid \varphi}_M \to \I{\vec y \mid \psi}_M$.
These assignments determine a functor $F_M \colon \cat T \to \Set$
defined on the syntactic category of $\theory$.
It is not difficult to see that $F_M$ is a lex-morphism.
In fact,

%%%%%%%%%%%%%%%%%%%%%%%%%%%%%%%%%%%%%%%%%%%%%%%%%%%%%%%%%%%%%%%%%%%%%%%%%%%
\begin{theorem}[{\cite[Theorem D1.4.7]{johnstone02book}}]
\label{thm:coste:synt}
%%%%%%%%%%%%%%%%%%%%%%%%%%%%%%%%%%%%%%%%%%%%%%%%%%%%%%%%%%%%%%%%%%%%%%%%%%%
For each cartesian theory $\theory$,
the assignment
\[
\Mod(\theory) \to \lex\funct{\cat T, \Set}, \ \ M \mapsto F_M
\]
\noindent
is (part of) an equivalence of categories.
%%%%%%%%%%%%%%%%%%%%%%%%%%%%%%%%%%%%%%%%%%%%%%%%%%%%%%%%%%%%%%%%%%%%%%%%%%%
\end{theorem}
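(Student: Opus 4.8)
The plan is to exhibit the assignment $M \mapsto F_M$ as a functor $\Mod(\theory) \to \lex\funct{\cat T,\Set}$ and then verify that this functor is fully faithful and essentially surjective, following \cite[Theorem D1.4.7]{johnstone02book}. The single technical ingredient underpinning everything is the observation that homomorphisms of $\Sig$-structures preserve the interpretations of \emph{cartesian} formulae in the forward direction: atomic formulae are preserved by the very definition of homomorphism, conjunction is preserved trivially, and a provably unique existential $\Env \sorting (\exists x{:}\sort)\varphi$ is preserved because the witness, being unique, is transported along the homomorphism. Thus any $h \colon M \to N$ restricts, for each $\OI{\Env \mid \varphi}$, to a function $\I{\Env \mid \varphi}_M \to \I{\Env \mid \varphi}_N$, and naturality with respect to the functional relations $\MI{\vec x,\vec y \mid \theta}$ is immediate since $h$ preserves their graphs. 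This yields a natural transformation $F_h \colon F_M \to F_N$, and functoriality in $h$ is routine; recall that each $F_M$ is already a lex-morphism, using that $\cat T$ is finitely complete by Lemma~\ref{lem:coste:synt:lim}.

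For full faithfulness I would isolate the ``sort objects'' $\OI{x{:}\sort \mid \True}$ of $\cat T$, noting that $F_M(\OI{x{:}\sort \mid \True}) = M(\sort)$. Faithfulness is then clear: a homomorphism is determined by its components on the sorts, which are precisely the components of $F_h$ on the sort objects. For fullness, given a natural transformation $\alpha \colon F_M \to F_N$, I would set $h^\sort \deq \alpha_{\OI{x:\sort \mid \True}} \colon M(\sort) \to N(\sort)$. Each function symbol $f$ and each relation symbol $R$ is represented in $\cat T$ by a morphism, respectively a monomorphism, between sort products; naturality of $\alpha$ against these forces the family $(h^\sort)$ to preserve the interpretations of $f$ and $R$, hence to be a homomorphism, and a further naturality check gives $F_h = \alpha$.

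Essential surjectivity is where the real work lies. Given $G \in \lex\funct{\cat T,\Set}$, I would build a $\Sig$-structure $M_G$ by declaring $M_G(\sort) \deq G(\OI{x{:}\sort \mid \True})$, interpreting each function symbol via $G$ applied to its representing morphism, and interpreting each relation symbol $R$ as the subset $G(\OI{\vec x \mid R(\vec x)}) \hookrightarrow G(\OI{\vec x \mid \True}) \cong \prod_i M_G(\sort_i)$, which is legitimate because $G$ preserves finite products and the defining monomorphism. The crux is then a natural isomorphism $G(\OI{\Env \mid \varphi}) \cong \I{\Env \mid \varphi}_{M_G}$, proved by induction on the construction of the cartesian formula $\varphi$: the base cases use preservation of finite products and equalisers (for atomic relations and equalities), conjunction uses preservation of products, and the provably unique existential step uses that such a quantifier is computed in $\cat T$ as a subobject of the context product cut out by a finite limit, which $G$ preserves. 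This identification simultaneously shows $M_G \in \Mod(\theory)$ (every axiom $\psi \thesis_\Env \varphi$ corresponds to a mono in $\cat T$ that $G$ sends to a subset inclusion $\I{\Env \mid \psi}_{M_G} \subseteq \I{\Env \mid \varphi}_{M_G}$) and gives $F_{M_G} \cong G$.

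The main obstacle I anticipate is precisely the inductive identification $G(\OI{\Env \mid \varphi}) \cong \I{\Env \mid \varphi}_{M_G}$ at the existential step, together with its naturality in $\OI{\Env \mid \varphi}$. Here the cartesian hypothesis is indispensable: uniqueness of witnesses is exactly what makes the projection away from the bound variable a monomorphism in $\cat T$, so that the existential is a genuine finite-limit construction preserved by the lex-morphism $G$; an arbitrary existential would instead require images, which $G$ need not preserve. Once this compatibility is secured, assembling the components into a single natural isomorphism $F_{M_G} \cong G$ and confirming compatibility with morphisms of $\cat T$ is bookkeeping, and the proof concludes.
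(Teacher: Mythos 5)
The paper does not give its own proof of this statement---it is quoted directly from Johnstone (Theorem D1.4.7)---and your proposal correctly reconstructs that standard argument: functoriality of $M \mapsto F_M$, full faithfulness via the sort objects $\OI{x:\sort \mid \True}$, and essential surjectivity by building $M_G$ from a lex functor $G$ and inductively identifying $G(\OI{\Env \mid \varphi}) \cong \I{\Env \mid \varphi}_{M_G}$. You also isolate the genuine crux correctly: provable uniqueness makes the projection $\OI{\Env, x:\sort \mid \varphi} \to \OI{\Env \mid \True}$ a monomorphism in $\cat T$, so the lex functor $G$ preserves it, and under the inductive identification this injectivity is exactly the semantic uniqueness in $M_G$---which both makes the Tarskian interpretation of the existential coincide with $G$ of the corresponding object and yields $M_G \models \theory$, dissolving the apparent circularity between the induction and the verification that $M_G$ is a model.
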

%%%%%%%%%%%%%%%%%%%%%%%%%%%%%%%%%%%%%%%%%%%%%%%%%%%%%%%%%%%%%%%%%%%%%%%%%%%

In the above discussion of Theorem~\ref{thm:coste:synt},
we said nothing about morphisms.
In the following lemma, we collect some basic facts relating
homomorphisms in $\Mod(\theory)$ to
natural transformations in $\lex\funct{\cat T,\Set}$.

%%%%%%%%%%%%%%%%%%%%%%%%%%%%%%%%%%%%%%%%%%%%%%%%%%%%%%%%%%%%%%%%%%%%%%%%%%%
\begin{lemma}
\label{lemma:coste:synt:homnat}
%%%%%%%%%%%%%%%%%%%%%%%%%%%%%%%%%%%%%%%%%%%%%%%%%%%%%%%%%%%%%%%%%%%%%%%%%%%
Let $\ell \colon N \to M$ be a morphism in $\Mod(\theory)$
and let $\lambda \colon F_N \to F_M$
be the corresponding natural transformation,
under the equivalence in Theorem~\ref{thm:coste:synt}.
Then
\begin{enumerate}[(1)]
\item
\label{item:synt:lexmod:homnat:sort}
$M(\sort) = \I{y:\sort \mid \True}_M = F_M\OI{y:\sort \mid \True}$
and
$\ell^\sort = \lambda_{\OI{y : \sort \mid \True}}$
for each sort $\sort \in \Sort(\Sig)$.
\setcounter{SplitEnum}{\value{enumi}}
\end{enumerate}

\noindent
Furthermore,
given $\vec y = y_1 : \sort_1,\dots,y_n : \sort_n$,
\begin{enumerate}[(1)]
\setcounter{enumi}{\value{SplitEnum}}
\setcounter{StartEnum}{\value{SplitEnum}}
\item
\(
  F_M\OI{\vec y \mid \True}
  =
  \prod_{i=1}^{n}
  F_M\OI{y_i : \sort_i \mid \True}
\),

\item
\(
  \lambda_{\OI{\vec y \mid \True}}
  =
  \prod_{i=1}^{n}
  \lambda_{\OI{y_i : \sort_i \mid \True}}
\).
\end{enumerate}

\noindent
In particular,
assuming $\vec a \in \I{\vec y \mid \True}_N$, we have
\[
\begin{array}{*{5}{l}}
  (\ell^{\sort_1}(a_1),\dots,\ell^{\sort_n}(a_n))
& =
& \lambda_{\OI{\vec y \mid \True}}(\vec a)
& \in
& \I{\vec y \mid \True}_M.
\end{array}
\]
%%%%%%%%%%%%%%%%%%%%%%%%%%%%%%%%%%%%%%%%%%%%%%%%%%%%%%%%%%%%%%%%%%%%%%%%%%%
\end{lemma}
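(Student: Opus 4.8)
The plan is to reduce the three claims to an explicit description of how the equivalence of Theorem~\ref{thm:coste:synt} acts on morphisms --- the point deliberately left open in the discussion preceding that theorem. First I would recall, from the construction underlying \cite[Theorem D1.4.7]{johnstone02book}, that the natural transformation $\lambda \colon F_N \to F_M$ attached to a homomorphism $\ell \colon N \to M$ is given objectwise by applying $\ell$ componentwise: for an object $\OI{\vec x \mid \varphi}$ with $\vec x = x_1 : \sort_1, \dots, x_m : \sort_m$ and any $\vec a \in \I{\vec x \mid \varphi}_N$,
\[
  \lambda_{\OI{\vec x \mid \varphi}}(\vec a)
  ~=~
  (\ell^{\sort_1}(a_1), \dots, \ell^{\sort_m}(a_m)).
\]
This is well defined because cartesian formulae are $\land\exists$-formulae (Remark~\ref{rem:prelim:coste:compl}), hence positive and preserved under homomorphisms, so that $\ell(\vec a)$ again lies in $\I{\vec x \mid \varphi}_M$. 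I regard pinning down the functor on morphisms as the crux of the lemma; once it is in hand, the three numbered claims are routine bookkeeping.

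With this description available, claim~(1) is immediate: by definition of the interpretation $\I{y : \sort \mid \True}_M = M(\sort)$, by definition of $F_M$ on objects $F_M\OI{y : \sort \mid \True} = \I{y : \sort \mid \True}_M$, and specialising the displayed formula to the single-variable object $\OI{y : \sort \mid \True}$ (whose validity set is all of $N(\sort)$, as $\True$ is always valid) yields $\lambda_{\OI{y : \sort \mid \True}} = \ell^\sort$.

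For claim~(2) I would use Lemma~\ref{lem:coste:synt:lim}: binary products in $\cat T$ are computed by conjunction over merged contexts, so $\OI{\vec y \mid \True}$ is the product of the $\OI{y_i : \sort_i \mid \True}$ in $\cat T$ (note $\True \land \dots \land \True = \True$). As $F_M$ is a lex-morphism it preserves this finite product; since finite products in $\Set$ are cartesian products and $\I{\vec y \mid \True}_M = M(\sort_1) \times \dots \times M(\sort_n)$ by definition, the resulting isomorphism is the stated equality. For claim~(3) I would invoke naturality of $\lambda$ against the product projections $\pi_i \colon \OI{\vec y \mid \True} \to \OI{y_i : \sort_i \mid \True}$, whose naturality squares read $\lambda_{\OI{y_i : \sort_i \mid \True}} \comp F_N(\pi_i) = F_M(\pi_i) \comp \lambda_{\OI{\vec y \mid \True}}$, with $F_N(\pi_i)$ and $F_M(\pi_i)$ the $i$-th cartesian projections. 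As this holds for every $i$, the map $\lambda_{\OI{\vec y \mid \True}}$ agrees coordinatewise with $\prod_{i=1}^{n} \lambda_{\OI{y_i : \sort_i \mid \True}}$, which is claim~(3). The concluding ``in particular'' statement then follows by substituting $\lambda_{\OI{y_i : \sort_i \mid \True}} = \ell^{\sort_i}$ from claim~(1), the membership in $\I{\vec y \mid \True}_M$ being the well-definedness already observed. I expect no difficulty beyond the first paragraph.
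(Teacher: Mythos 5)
Your proposal is correct and follows essentially the same route as the paper's own proof: item (1) by unfolding how the equivalence of Theorem~\ref{thm:coste:synt} acts on morphisms, item (2) via the product decomposition of $\OI{\vec y \mid \True}$ in $\cat T$ (Lemma~\ref{lem:coste:synt:lim}) together with lex-ness of $F_M$, and item (3) via naturality of $\lambda$ against the projections $\pi_i$, whose images under $F_N$ and $F_M$ are again projections. The only difference is presentational: where the paper dismisses item (1) as ``following from the definitions,'' you spell out the componentwise formula for $\lambda$ and justify its well-definedness by preservation of cartesian ($\land\exists$-)formulae under homomorphisms, which is exactly the content the paper leaves implicit.
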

%%%%%%%%%%%%%%%%%%%%%%%%%%%%%%%%%%%%%%%%%%%%%%%%%%%%%%%%%%%%%%%%%%%%%%%%%%%

%%%%%%%%%%%%%%%%%%%%%%%%%%%%%%%%%%%%%%%%%%%%%%%%%%%%%%%%%%%%%%%%%%%%%%%%%%%
\begin{fullproof}
%%%%%%%%%%%%%%%%%%%%%%%%%%%%%%%%%%%%%%%%%%%%%%%%%%%%%%%%%%%%%%%%%%%%%%%%%%%
Item~\ref{item:synt:lexmod:homnat:sort}
follows from the definitions of
$F_M$ and $\lambda$ from $M$ and $\ell$.
\begin{enumerate}[(1)]
\setcounter{enumi}{\value{StartEnum}}

\item
By Lemma~\ref{lem:coste:synt:lim},
in $\cat T$
we have
\[
\begin{array}{l l l}
  \OI{\vec y \mid \True}
& \cong
& \OI{y_1 : \sort_1 \mid \True}
  \times
  \dots
  \times
  \OI{y_n : \sort_n \mid \True}
\end{array}
\]

\noindent
Hence
\[
\begin{array}{l l l}
  F_N\OI{\vec y \mid \True}
& \cong
& F_N\OI{y_1 : \sort_1 \mid \True}
  \times
  \dots
  \times
  F_N\OI{y_n : \sort_n \mid \True}
\end{array}
\]

\noindent
since $F_N$ preserves finite limits.

\item
By naturality of $\lambda \colon F_N \to F_M$,
for $i = 1,\dots,n$ we have
\[
\begin{tikzcd}
  F_N\OI{\vec y \mid \True}
  \arrow{r}[above]{F_N(\pi_i)}
  \arrow{d}[left]{\lambda_{\OI{\vec y \mid \True}}}
& F_N\OI{y_i : \sort_i \mid \True} 
  \arrow{d}[right]{\lambda_{\OI{y_i : \sort_i \mid \True}}}
\\
  F_M\OI{\vec y \mid \True}
  \arrow{r}[above]{F_M(\pi_i)}
& F_M\OI{y_i : \sort_i \mid \True}   
\end{tikzcd}
\]

\noindent
where
$\pi_i \colon \OI{\vec y \mid \True} \to \OI{y_i : \sort_i \mid \True}$
is the $i$-th projection in $\cat T$.
But since $F_N$ and $F_M$ preserve finite limits,
the arrows
$F_N(\pi_i)$ and $F_M(\pi_i)$ are also projections,
and the result follows.
\qedhere
\end{enumerate}
\end{fullproof}

Theorem~\ref{thm:coste:synt},
combined with Example~\ref{ex:prelim:coste:funct}
and Gabriel-Ulmer duality (\S\ref{ss:lex-and-lfp}),
entails the following result, which amounts to a logical reading of Gabriel-Ulmer duality.

%%%%%%%%%%%%%%%%%%%%%%%%%%%%%%%%%%%%%%%%%%%%%%%%%%%%%%%%%%%%%%%%%%%%%%%%%%%
\begin{corollary}
\label{cor:lfp-iff-models-of-T}
%%%%%%%%%%%%%%%%%%%%%%%%%%%%%%%%%%%%%%%%%%%%%%%%%%%%%%%%%%%%%%%%%%%%%%%%%%%
The following statements hold:
\begin{enumerate}[(1)]
\item
\label{finitely-complete-synt-cat}
A small category is finitely complete
if, and only if,
it is equivalent to the syntactic category of a cartesian theory. 

\item
\label{lfp-model-cat}
A category is lfp
if, and only if,
it is equivalent to $\Mod(\theory)$ for some cartesian theory~$\theory$.
\end{enumerate}
%%%%%%%%%%%%%%%%%%%%%%%%%%%%%%%%%%%%%%%%%%%%%%%%%%%%%%%%%%%%%%%%%%%%%%%%%%%
\end{corollary}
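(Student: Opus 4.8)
The plan is to assemble the statement from ingredients already in place: the Coste equivalence $\Mod(\theory) \simeq \lex\funct{\cat T, \Set}$ of Theorem~\ref{thm:coste:synt} (with $\cat T$ the syntactic category of a cartesian theory $\theory$), the finite completeness of $\cat T$ from Lemma~\ref{lem:coste:synt:lim}, the construction of cartesian theories of functors in Example~\ref{ex:prelim:coste:funct}, and Gabriel-Ulmer duality as recalled in~\S\ref{ss:lex-and-lfp}. I would prove part~\ref{finitely-complete-synt-cat} first and then read off part~\ref{lfp-model-cat} from the surrounding material on $\lex$-categories.

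For part~\ref{finitely-complete-synt-cat}, the backward implication is immediate: the syntactic category $\cat T$ of a cartesian theory is small and, by Lemma~\ref{lem:coste:synt:lim}, finitely complete; since finite completeness is preserved and reflected by equivalences, any small category equivalent to $\cat T$ is finitely complete. For the forward implication, let $\cat D$ be a small finitely complete category. By Example~\ref{ex:prelim:coste:funct} there is a cartesian theory $\theory$ with $\Mod(\theory) \cong \lex\funct{\cat D, \Set}$, while by Theorem~\ref{thm:coste:synt} we also have $\Mod(\theory) \simeq \lex\funct{\cat T, \Set}$ for the syntactic category $\cat T$ of $\theory$. Composing, $\lex\funct{\cat D, \Set} \simeq \lex\funct{\cat T, \Set}$ in $\LFP$. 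Since Gabriel-Ulmer duality is a biequivalence $\Lex^\op \to \LFP$ sending a small finitely complete category $\cat C$ to $\lex\funct{\cat C, \Set}$, it reflects equivalences; hence $\cat D \simeq \cat T$ in $\Lex$, exhibiting $\cat D$ as equivalent to the syntactic category of a cartesian theory.

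For part~\ref{lfp-model-cat}, the backward implication follows from~\S\ref{ss:lex-and-lfp}: if $\E \simeq \Mod(\theory)$, then by Theorem~\ref{thm:coste:synt} $\E \simeq \lex\funct{\cat T, \Set}$ with $\cat T$ small and finitely complete, and every such functor category was shown there to be lfp. For the forward implication, if $\E$ is lfp then, as recalled in~\S\ref{ss:lex-and-lfp}, $\fp\E^\op$ is a small finitely complete category and $\E \simeq \lex\funct{\fp\E^\op, \Set}$; applying Example~\ref{ex:prelim:coste:funct} with $\cat C = \fp\E^\op$ yields a cartesian theory $\theory$ with $\Mod(\theory) \cong \lex\funct{\fp\E^\op, \Set} \simeq \E$.

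The only genuinely delicate point is the use of Gabriel-Ulmer duality in the forward direction of part~\ref{finitely-complete-synt-cat}: one must invoke that the $2$-functor $\cat C \mapsto \lex\funct{\cat C, \Set}$ is a biequivalence, and not merely essentially surjective on objects, so that an equivalence between the associated lfp categories can be reflected back to an equivalence between the underlying small finitely complete categories. Everything else is a direct bookkeeping of the stated equivalences, together with the observation that smallness and finite completeness are stable under the equivalences involved.
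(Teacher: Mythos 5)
Your proposal is correct and follows essentially the same route as the paper's own proof: Lemma~\ref{lem:coste:synt:lim} for the backward direction of part~\ref{finitely-complete-synt-cat}; Example~\ref{ex:prelim:coste:funct} plus Theorem~\ref{thm:coste:synt} to get $\lex\funct{\cat D,\Set}\simeq\lex\funct{\cat T,\Set}$, reflected to $\cat D\simeq\cat T$ via the Gabriel-Ulmer biequivalence; and the same three ingredients for part~\ref{lfp-model-cat}. The delicate point you flag (that one needs the full biequivalence, not mere essential surjectivity, to reflect the equivalence) is exactly what the paper invokes, though it also notes an alternative that sidesteps this by restricting the equivalence to the full subcategories of finitely presentable objects.
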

%%%%%%%%%%%%%%%%%%%%%%%%%%%%%%%%%%%%%%%%%%%%%%%%%%%%%%%%%%%%%%%%%%%%%%%%%%%

%%%%%%%%%%%%%%%%%%%%%%%%%%%%%%%%%%%%%%%%%%%%%%%%%%%%%%%%%%%%%%%%%%%%%%%%%%%
\begin{fullproof}
%%%%%%%%%%%%%%%%%%%%%%%%%%%%%%%%%%%%%%%%%%%%%%%%%%%%%%%%%%%%%%%%%%%%%%%%%%%
\ref{finitely-complete-synt-cat}
One direction follows from Lemma~\ref{lem:coste:synt:lim}.
Conversely, if $\cat C$ is small and finitely complete,
by Example~\ref{ex:prelim:coste:funct} there is a theory $\theory$ such that
$\lex\funct{\cat C,\Set}\cong \Mod(\theory)$.
An application of Theorem~\ref{thm:coste:synt} yields
$\lex\funct{\cat C,\Set}\cong \lex\funct{\cat T, \Set}$ and so,
by Gabriel-Ulmer duality, $\cat C \cong \cat T$.
(Alternatively, observe that the equivalence
$\lex\funct{\cat C,\Set}\cong \lex\funct{\cat T, \Set}$
restricts to an equivalence between the full subcategories
on the finitely presentable objects, which in turn gives $\cat C \cong \cat T$.)

\ref{lfp-model-cat}
This follows from Example~\ref{ex:prelim:coste:funct} and Theorem~\ref{thm:coste:synt},
combined with the fact that, by Gabriel-Ulmer duality,
a category is lfp if and only if it is equivalent to one of the form
$\lex\funct{\cat C,\Set}$ for some small finitely complete category~$\cat C$.
%%%%%%%%%%%%%%%%%%%%%%%%%%%%%%%%%%%%%%%%%%%%%%%%%%%%%%%%%%%%%%%%%%%%%%%%%%%
\end{fullproof}
%%%%%%%%%%%%%%%%%%%%%%%%%%%%%%%%%%%%%%%%%%%%%%%%%%%%%%%%%%%%%%%%%%%%%%%%%%%

In particular, for any signature $\Sig$, Theorem~\ref{thm:coste:synt} entails that 
$\Struct(\Sig)$ is lfp as
\begin{equation}
\label{eq:coste:struct}
\begin{array}{l l l}
  \Struct(\Sig)
& \cong
& \lex\funct{\cat T(\Sig),\Set},
\end{array}
\end{equation}
where $\cat T(\Sig)$ is the syntactic category 
of the cartesian theory $\theory(\Sig)$ in $\Sig$ with no axioms.
Limits and filtered colimits in $\Struct(\Sig)$ are simple: they are computed in $\Set$;
see e.g.\ \cite[Remark~5.1]{ar94book}.
The same is true for categories of models of cartesian theories:
if $\theory$ is any cartesian theory in the signature $\Sig$,
then the identity-on-objects functor $\inclth\colon \cat T(\Sig)\to \cat T$
is a lex-morphism, and the corresponding morphism of lfp categories 
\[
\ladj\inclth\colon \lex\funct{\cat T,\Set}\to \lex\funct{\cat T(\Sig),\Set}
\] 

\noindent
coincides (via Theorem~\ref{thm:coste:synt}) to the inclusion functor
$\Mod(\theory) \into \Struct(\Sig)$. Thus,

%%%%%%%%%%%%%%%%%%%%%%%%%%%%%%%%%%%%%%%%%%%%%%%%%%%%%%%%%%%%%%%%%%%%%%%%%%%
\begin{lemma}
\label{lem:coste:mod:filtcolim}
%%%%%%%%%%%%%%%%%%%%%%%%%%%%%%%%%%%%%%%%%%%%%%%%%%%%%%%%%%%%%%%%%%%%%%%%%%%
For any cartesian theory $\theory$,
the inclusion functor $\Mod(\theory) \into \Struct(\Sig)$ is a morphism
of lfp categories.
In particular, $\Mod(\theory)$ is closed in $\Struct(\Sig)$ under limits and
filtered colimits.%
\footnote{In fact, for any first-order theory $\theory$,
$\Mod(\theory)$ is closed in $\Struct(\Sig)$ under filtered colimits,
and if it is closed in $\Struct(\Sig)$ under finite limits
then it is closed under arbitrary limits.
However, the proof of these facts is considerably more difficult
than in the case of cartesian theories; cf.\ \cite[Theorems~5.20 and~5.23]{ar94book}.}
%%%%%%%%%%%%%%%%%%%%%%%%%%%%%%%%%%%%%%%%%%%%%%%%%%%%%%%%%%%%%%%%%%%%%%%%%%%
\end{lemma}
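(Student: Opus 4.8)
The plan is to realise the inclusion $\Mod(\theory)\into\Struct(\Sig)$ as a functor of the form $\ladj f$ for a suitable lex-morphism $f$ between syntactic categories, and then read off both assertions from the results of~\S\ref{ss:lex-and-lfp}. Concretely, I would take $f$ to be the comparison functor $\inclth\colon\cat T(\Sig)\to\cat T$ between the syntactic category of the axiom-free theory $\theory(\Sig)$ and that of $\theory$; both are small, and finitely complete by Lemma~\ref{lem:coste:synt:lim}.

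First I would construct $\inclth$ and check it is a well-defined functor. Since $\theory(\Sig)$ has no axioms, every sequent provable in $\theory(\Sig)$ is provable in $\theory$; in particular any $\theory(\Sig)$-provable functional relation is also $\theory$-provable, and $\theory(\Sig)$-provable equality of functional relations implies $\theory$-provable equality. Hence there is an identity-on-objects functor $\inclth$ sending the $\theory(\Sig)$-provable class $\MI{\vec x,\vec y\mid\theta}$ to the corresponding $\theory$-provable class, with functoriality immediate from Definition~\ref{def:coste:synt}.

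Next I would verify that $\inclth$ is a lex-morphism. By Lemma~\ref{lem:coste:synt:lim}, finite limits in either syntactic category are produced by the same syntactic recipes: the terminal object is $\OI{\emptyset \mid \True}$, binary products are given by conjunction, and equalizers by a conjunction together with a provably unique existential quantifier. As these objects and their structural morphisms are named by the same formulae, and the sequents witnessing the universal properties over $\theory(\Sig)$ remain provable over the extension $\theory$, the functor $\inclth$ preserves the terminal object, binary products and equalizers, hence all finite limits. At this point the discussion around eq.~\eqref{eq:f-upper-star} applies verbatim and yields that $\ladj\inclth\colon\lex\funct{\cat T,\Set}\to\lex\funct{\cat T(\Sig),\Set}$ is a morphism of lfp categories, i.e.\ limit-preserving and finitary.

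It then remains to identify $\ladj\inclth$ with the inclusion under Theorem~\ref{thm:coste:synt} and eq.~\eqref{eq:coste:struct}, and here identity-on-objects does all the work: for $M\in\Mod(\theory)$ with associated lex-morphism $F_M$, the composite $\ladj\inclth F_M=F_M\comp\inclth$ assigns to each object $\OI{\vec x\mid\varphi}$ the same set $\I{\vec x\mid\varphi}_M$ and acts on morphisms by the same functions, so it is precisely the lex-morphism corresponding to the underlying $\Sig$-structure of $M$. Thus $\ladj\inclth$ is naturally isomorphic to the inclusion $\Mod(\theory)\into\Struct(\Sig)$, which is therefore a morphism of lfp categories; the ``in particular'' clause follows since such morphisms are limit-preserving and finitary by Definition~\ref{def:lfp}, and finitary functors preserve filtered colimits. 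I expect the only genuinely delicate point to be the preservation of equalizers in the lex-morphism check, namely confirming that the provably unique existentials used to form equalizers over $\theory(\Sig)$ stay provably unique, and retain their universal property, once we pass to the larger theory $\theory$; everything else is bookkeeping with the definitions.
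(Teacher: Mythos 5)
Your proposal is correct and takes essentially the same route as the paper: the paper's proof consists precisely of observing that the identity-on-objects functor $\inclth\colon\cat T(\Sig)\to\cat T$ is a lex-morphism, and that the corresponding morphism of lfp categories $\ladj\inclth\colon\lex\funct{\cat T,\Set}\to\lex\funct{\cat T(\Sig),\Set}$ coincides, via Theorem~\ref{thm:coste:synt} and eq.~\eqref{eq:coste:struct}, with the inclusion $\Mod(\theory)\into\Struct(\Sig)$. Your extra verifications (well-definedness of $\inclth$ from monotonicity of provability, and preservation of equalizers via the fact that finite limits in both syntactic categories are built by the same formula recipes of Lemma~\ref{lem:coste:synt:lim}) simply fill in details the paper leaves implicit.
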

%%%%%%%%%%%%%%%%%%%%%%%%%%%%%%%%%%%%%%%%%%%%%%%%%%%%%%%%%%%%%%%%%%%%%%%%%%%

%%%%%%%%%%%%%%%%%%%%%%%%%%%%%%%%%%%%%%%%%%%%%%%%%%%%%%%%%%%%%%%%%%%%%%%%%%%
\subsection{Finitely presentable models}
\label{sec:coste:fp}
%%%%%%%%%%%%%%%%%%%%%%%%%%%%%%%%%%%%%%%%%%%%%%%%%%%%%%%%%%%%%%%%%%%%%%%%%%%

Syntactic categories, combined with the Yoneda lemma,
allow us to think of finitely presentable objects in lfp categories
as being presented by ``generators and relations''.
This extends
Example~\ref{ex:lfp:struct} and Fact~\ref{fact:lfp:struct}
to categories of models of cartesian theories.

Fix an arbitrary cartesian theory $\theory$,
and write $\cat T$ for its syntactic category.

%%%%%%%%%%%%%%%%%%%%%%%%%%%%%%%%%%%%%%%%%%%%%%%%%%%%%%%%%%%%%%%%%%%%%%%%%%%
\begin{notation}
%%%%%%%%%%%%%%%%%%%%%%%%%%%%%%%%%%%%%%%%%%%%%%%%%%%%%%%%%%%%%%%%%%%%%%%%%%%
Given $\OI{\vec x \mid \varphi}\in \cat T$,
we write $\FG{\vec x \mid \varphi} \in \Mod(\theory)$
for the model 
corresponding
(under Theorem~\ref{thm:coste:synt})
to the representable functor
$\yoneda \OI{\vec x \mid \varphi} \in \lex\funct{\cat T,\Set}$.
%%%%%%%%%%%%%%%%%%%%%%%%%%%%%%%%%%%%%%%%%%%%%%%%%%%%%%%%%%%%%%%%%%%%%%%%%%%
\end{notation}
%%%%%%%%%%%%%%%%%%%%%%%%%%%%%%%%%%%%%%%%%%%%%%%%%%%%%%%%%%%%%%%%%%%%%%%%%%%

Recall from \S\ref{ss:lex-and-lfp}
that the finitely presentable objects of $\lex\funct{\cat T,\Set}$
are exactly the representable functors.
Hence, the finitely presentable objects of $\Mod(\theory)$
are, up to isomorphism,
exactly those of the form $\FG{\vec x \mid \varphi}$.

%%%%%%%%%%%%%%%%%%%%%%%%%%%%%%%%%%%%%%%%%%%%%%%%%%%%%%%%%%%%%%%%%%%%%%%%%%%
\begin{remark}
\label{rem:quantifier-free}
%%%%%%%%%%%%%%%%%%%%%%%%%%%%%%%%%%%%%%%%%%%%%%%%%%%%%%%%%%%%%%%%%%%%%%%%%%%
Up to isomorphism, we can always
assume that a finitely presentable model is of the form
$\FG{\vec x \mid \varphi}$
where $\varphi$ is \emph{quantifier-free}
(that is, a finite conjunction of atomic formulae).
See~\cite[Lemma D1.4.4(ii)]{johnstone02book}.
%%%%%%%%%%%%%%%%%%%%%%%%%%%%%%%%%%%%%%%%%%%%%%%%%%%%%%%%%%%%%%%%%%%%%%%%%%%
\end{remark}
%%%%%%%%%%%%%%%%%%%%%%%%%%%%%%%%%%%%%%%%%%%%%%%%%%%%%%%%%%%%%%%%%%%%%%%%%%%

%%%%%%%%%%%%%%%%%%%%%%%%%%%%%%%%%%%%%%%%%%%%%%%%%%%%%%%%%%%%%%%%%%%%%%%%%%%
\begin{remark}
\label{rem:coste:fp:induced-lfp:fp}
%%%%%%%%%%%%%%%%%%%%%%%%%%%%%%%%%%%%%%%%%%%%%%%%%%%%%%%%%%%%%%%%%%%%%%%%%%%
In view of Remark~\ref{rem:prelim:coste:compl},
if $\Mod(\theory) = \Mod(\theorybis)$
for some other cartesian theory $\theorybis$ in $\Sig$,
then $\theory$ and $\theorybis$ prove the same sequents.
In particular, 
a formula-in-context $\vec x \sorting \varphi$ is cartesian over $\theory$
exactly when it is cartesian over $\theorybis$,
and the respective syntactic categories of $\theory$ and $\theorybis$ are the same.

It follows that given $\E \into \Struct(\Sig)$ such that
$\E = \Mod(\theorybis)$ for some $\theorybis$,
the finitely presentable models $\FG{\vec x \mid \varphi} \in \E$
are independent from the choice of the theory
$\theorybis$ such that $\E = \Mod(\theorybis)$.
%%%%%%%%%%%%%%%%%%%%%%%%%%%%%%%%%%%%%%%%%%%%%%%%%%%%%%%%%%%%%%%%%%%%%%%%%%%
\end{remark}
%%%%%%%%%%%%%%%%%%%%%%%%%%%%%%%%%%%%%%%%%%%%%%%%%%%%%%%%%%%%%%%%%%%%%%%%%%%

%%%%%%%%%%%%%%%%%%%%%%%%%%%%%%%%%%%%%%%%%%%%%%%%%%%%%%%%%%%%%%%%%%%%%%%%%%%
\subsubsection{The Yoneda lemma}
\label{sec:coste:fp:yoneda}
%%%%%%%%%%%%%%%%%%%%%%%%%%%%%%%%%%%%%%%%%%%%%%%%%%%%%%%%%%%%%%%%%%%%%%%%%%%
Consider a model $M \in \Mod(\theory)$ and recall 
from Theorem~\ref{thm:coste:synt} that the functor $F_M \in \lex\funct{\cat T,\Set}$
associated to $M$ 
takes $\OI{\vec x \mid \varphi} \in \cat T$ to $\I{\vec x \mid \varphi}_M$.
The Yoneda lemma provides a bijection
\begin{equation}
\label{eq:Yoneda-interpretations}
\begin{array}{*{5}{l}}
  \I{\vec x \mid \varphi}_M
& =
& F_M(\OI{\vec x \mid \varphi})
& \cong
& \lex\funct{\cat T,\Set}\funct{\yoneda\OI{\vec x \mid \varphi},\, F_M}.
\end{array}
\end{equation}

\noindent
In one direction,
given $\vec a \in \I{\vec x \mid \varphi}_M$,
the unique natural transformation
\begin{equation}
\label{eq:lambda-vec-a}
\begin{array}{*{5}{l}}
  F_M(-)(\vec a)
& \colon
& \yoneda\OI{\vec x \mid \varphi}
& \longto
& F_M
\end{array}
\end{equation}

\noindent
corresponding to
$\vec a$
has component at $\OI{\vec y \mid \psi} \in \cat T$ defined by
\[
  \cat T\funct{\OI{\vec x \mid \varphi}, \OI{\vec y \mid \psi}}
  \to F_M(\OI{\vec y \mid \psi}),
  \quad
  \theta \mapsto F_M(\theta)(\vec a).
\]

\noindent
Conversely, 
given an arbitrary $\lambda \colon \yoneda\OI{\vec x \mid \varphi} \to F_M$,
set
$\vec a \deq \lambda_{\OI{\vec x \mid \varphi}}(\id) \in \I{\vec x \mid \varphi}_M$.
Then we have $\lambda = F_M(-)(\vec a)$.

Combining
eq.\ \eqref{eq:Yoneda-interpretations}
with the bijection 
\begin{equation}
\label{eq:fp-models-nat-bij}
\begin{array}{l l l}
  \lex\funct{\cat T,\Set}\funct{\yoneda\OI{\vec x \mid \varphi},\, F_M}
& \cong
& \Mod(\theory)\funct{\FG{\vec x \mid \varphi} ,\, M}
\end{array}
\end{equation}
from Theorem~\ref{thm:coste:synt}
yields the following
generalisation of Fact~\ref{fact:lfp:struct}.

%%%%%%%%%%%%%%%%%%%%%%%%%%%%%%%%%%%%%%%%%%%%%%%%%%%%%%%%%%%%%%%%%%%%%%%%%%%
\begin{lemma}
\label{lem:coste:lp:hom}
%%%%%%%%%%%%%%%%%%%%%%%%%%%%%%%%%%%%%%%%%%%%%%%%%%%%%%%%%%%%%%%%%%%%%%%%%%%
There is a bijection
\[
\begin{array}{l l l}
  \I{\vec x \mid \varphi}_M
& \cong
& \Mod(\theory)\funct{\FG{\vec x \mid \varphi}, M}.
\end{array}
\]
%%%%%%%%%%%%%%%%%%%%%%%%%%%%%%%%%%%%%%%%%%%%%%%%%%%%%%%%%%%%%%%%%%%%%%%%%%%
\end{lemma}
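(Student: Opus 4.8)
The plan is to obtain the bijection by composing two bijections already assembled in the excerpt: the Yoneda-lemma bijection of eq.~\eqref{eq:Yoneda-interpretations} and the hom-set bijection supplied by the equivalence of Theorem~\ref{thm:coste:synt}. First I would recall that, by definition, $\FG{\vec x \mid \varphi}$ is the model of $\theory$ corresponding under Theorem~\ref{thm:coste:synt} to the representable functor $\yoneda\OI{\vec x \mid \varphi}$; that is, $F_{\FG{\vec x \mid \varphi}} \cong \yoneda\OI{\vec x \mid \varphi}$ in $\lex\funct{\cat T,\Set}$.

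Applying the Yoneda lemma to the lex-morphism $F_M$ yields
\[
\I{\vec x \mid \varphi}_M = F_M(\OI{\vec x \mid \varphi}) \cong \lex\funct{\cat T,\Set}\funct{\yoneda\OI{\vec x \mid \varphi},\, F_M},
\]
which is precisely eq.~\eqref{eq:Yoneda-interpretations}. Next I would invoke that the assignment $M \mapsto F_M$ of Theorem~\ref{thm:coste:synt} is (part of) an equivalence of categories, hence fully faithful. Full faithfulness gives a bijection on hom-sets
\[
\Mod(\theory)\funct{\FG{\vec x \mid \varphi}, M} \cong \lex\funct{\cat T,\Set}\funct{F_{\FG{\vec x \mid \varphi}},\, F_M} \cong \lex\funct{\cat T,\Set}\funct{\yoneda\OI{\vec x \mid \varphi},\, F_M},
\]
the last isomorphism using $F_{\FG{\vec x \mid \varphi}} \cong \yoneda\OI{\vec x \mid \varphi}$; this is eq.~\eqref{eq:fp-models-nat-bij}. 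Composing the two displayed bijections then delivers the claim.

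Since both ingredients are already available, the proof is essentially immediate, and the only point deserving care is bookkeeping. I would check that the composite is the expected bijection, matching the concrete description in Fact~\ref{fact:lfp:struct}: unwinding eq.~\eqref{eq:lambda-vec-a}, a tuple $\vec a \in \I{\vec x \mid \varphi}_M$ is sent to the homomorphism $\FG{\vec x \mid \varphi} \to M$ classifying $\vec a$, i.e.\ the one taking each generator $x_i$ to $a_i$. The main (modest) obstacle is thus simply tracing the direction of each bijection correctly so that this concrete description comes out right; there is no real difficulty beyond unwinding the Yoneda correspondence.
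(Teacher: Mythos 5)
Your proposal is correct and follows essentially the same route as the paper: the paper obtains the lemma precisely by composing the Yoneda bijection of eq.~\eqref{eq:Yoneda-interpretations} with the hom-set bijection of eq.~\eqref{eq:fp-models-nat-bij} furnished by the equivalence of Theorem~\ref{thm:coste:synt}, which is exactly your two-step composition. Your final bookkeeping check, that a tuple $\vec a$ corresponds to the homomorphism classifying it, matches the paper's explicit description of the correspondence via eq.~\eqref{eq:lambda-vec-a}.
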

%%%%%%%%%%%%%%%%%%%%%%%%%%%%%%%%%%%%%%%%%%%%%%%%%%%%%%%%%%%%%%%%%%%%%%%%%%%

As we will often toggle between the set $\I{\vec x \mid \varphi}_M$
and the hom-sets
$\Mod(\theory)\funct{\FG{\vec x \mid \varphi},M}$
and
$\lex\funct{\cat T,\Set}\funct{\OI{\vec x \mid \varphi}, F_M}$,
it is convenient to
introduce the following terminology:

%%%%%%%%%%%%%%%%%%%%%%%%%%%%%%%%%%%%%%%%%%%%%%%%%%%%%%%%%%%%%%%%%%%%%%%%%%%
\begin{definition}
\label{def:coste:fp:hom}
%%%%%%%%%%%%%%%%%%%%%%%%%%%%%%%%%%%%%%%%%%%%%%%%%%%%%%%%%%%%%%%%%%%%%%%%%%%
Given a morphism $h \colon \FG{\vec x \mid \varphi} \to M$ in $\Mod(\theory)$
and $\vec a \in M$ of the same sorts as $\vec x$,
we say that \emph{$h$ takes $\vec x$ to $\vec a$},
or that \emph{$h$ is induced by $\vec a$},
when 
$h$ is the image of the natural transformation $F_M(-)(\vec a)$
in eq.~\eqref{eq:lambda-vec-a} under the
bijection in eq.~\eqref{eq:fp-models-nat-bij}.
%%%%%%%%%%%%%%%%%%%%%%%%%%%%%%%%%%%%%%%%%%%%%%%%%%%%%%%%%%%%%%%%%%%%%%%%%%%
\end{definition}
%%%%%%%%%%%%%%%%%%%%%%%%%%%%%%%%%%%%%%%%%%%%%%%%%%%%%%%%%%%%%%%%%%%%%%%%%%%

%%%%%%%%%%%%%%%%%%%%%%%%%%%%%%%%%%%%%%%%%%%%%%%%%%%%%%%%%%%%%%%%%%%%%%%%%%%
\subsubsection{Generators}
\label{sec:coste:fp:gen}
%%%%%%%%%%%%%%%%%%%%%%%%%%%%%%%%%%%%%%%%%%%%%%%%%%%%%%%%%%%%%%%%%%%%%%%%%%%
The terminology in Definition~\ref{def:coste:fp:hom} suggests
that the model $\FG{\vec x \mid \varphi}$ is ``generated''
by the variables $\vec x = x_1,\dots,x_n$.
We will now make this intuition precise.

The bijection in eq.~\eqref{eq:Yoneda-interpretations}
yields the following description of finitely presentable models:
for each sort $\sort$,
\[
\begin{array}{l l l}
  \FG{\vec x \mid \varphi}(\sort)
& =
& \I{y : \sort \mid \True}_{\FG{\vec x \mid \varphi}}
\\

& \cong
& \lex\funct{\cat T,\Set}
  \funct{\yoneda \OI{y:\sort \mid \True}, F_{\FG{\vec x \mid \varphi}}}
\\

& \cong
& \cat T\funct{\OI{\vec x \mid \varphi},\OI{y :\sort \mid \True}}.
\end{array}
\]

\noindent Assuming that $\vec x = x_1,\dots,x_n$,
the next lemma shows that
we can think of $\FG{\vec x \mid \varphi}$ as being generated by elements
$\const x_1,\dots,\const x_n \in \FG{\vec x \mid \varphi}$
corresponding to the projections
$\varpi_i \colon \OI{\vec x \mid \varphi} \to \OI{x_i \mid \True}$ in $\cat T$,
akin to the way in which generators of free algebras are modelled by
projections in categorical algebra.
This implies that
homomorphisms with domain $\FG{\vec x \mid \varphi}$
are completely determined by their values on 
the $\const x_i$'s, which will be crucial to our use of word-constructions in
\S\ref{sec:wc}.

%%%%%%%%%%%%%%%%%%%%%%%%%%%%%%%%%%%%%%%%%%%%%%%%%%%%%%%%%%%%%%%%%%%%%%%%%%%
\begin{lemma}
\label{lem:coste:fp:const}
%%%%%%%%%%%%%%%%%%%%%%%%%%%%%%%%%%%%%%%%%%%%%%%%%%%%%%%%%%%%%%%%%%%%%%%%%%%
Let $\OI{\vec x \mid \varphi} \in \cat T$
with $\vec x = x_1,\dots,x_n$.
There exist $\const x_1,\ldots,\const x_n \in \FG{\vec x \mid \varphi}$
such that, for any $M \in \Mod(\theory)$,
any homomorphism $h \colon \FG{\vec x \mid \varphi} \to M$
and any $\vec a \in \I{\vec x \mid \varphi}_M$,
\[
\begin{array}{l !{\quad\Longleftrightarrow\quad} l}
  \text{$h$ takes $\vec x$ to $\vec a$}
& \text{$h(\const x_i) = a_i$ for all $i = 1,\dots,n$.}
\end{array}
\]
%%%%%%%%%%%%%%%%%%%%%%%%%%%%%%%%%%%%%%%%%%%%%%%%%%%%%%%%%%%%%%%%%%%%%%%%%%%
\end{lemma}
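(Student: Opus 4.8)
The plan is to obtain the generators $\const x_1,\dots,\const x_n$ as the coordinates of a single ``generic tuple'' in $\FG{\vec x \mid \varphi}$, and then to deduce the stated equivalence from the naturality in $M$ of the representability bijection of Lemma~\ref{lem:coste:lp:hom}. First I would define the generic tuple: applying that bijection with $M \deq \FG{\vec x \mid \varphi}$, the identity homomorphism $\id \colon \FG{\vec x \mid \varphi} \to \FG{\vec x \mid \varphi}$ corresponds to a tuple $\vec{\const x} = (\const x_1,\dots,\const x_n) \in \I{\vec x \mid \varphi}_{\FG{\vec x \mid \varphi}}$; equivalently, $\id$ takes $\vec x$ to $\vec{\const x}$ in the sense of Definition~\ref{def:coste:fp:hom}. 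By construction $\const x_i \in \FG{\vec x \mid \varphi}(\sort_i)$, so these are the candidate generators. Under the identification $F_{\FG{\vec x \mid \varphi}} = \yoneda\OI{\vec x \mid \varphi}$ and the Yoneda lemma, $\const x_i$ is exactly the element corresponding to the projection $\varpi_i \colon \OI{\vec x \mid \varphi} \to \OI{x_i \mid \True}$ in $\cat T$, which ties the construction back to the projections mentioned before the statement; however, the argument below needs only the description of $\vec{\const x}$ via $\id$.

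Next I would prove the forward implication by a naturality chase. The bijection of Lemma~\ref{lem:coste:lp:hom} is natural in $M$, being the composite of the Yoneda isomorphism~\eqref{eq:Yoneda-interpretations} with the equivalence~\eqref{eq:fp-models-nat-bij} of Theorem~\ref{thm:coste:synt}, both natural in $M$. Fix $h \colon \FG{\vec x \mid \varphi} \to M$ and consider the naturality square for postcomposition by $h$, relating the functors $\Mod(\theory)\funct{\FG{\vec x \mid \varphi}, -}$ and $M \mapsto \I{\vec x \mid \varphi}_M$ at $\FG{\vec x \mid \varphi}$ and at $M$. The vertical map on interpretation sets is the function induced by $h$; by the ``in particular'' clause of Lemma~\ref{lemma:coste:synt:homnat} (restricted from $\I{\vec x \mid \True}$ to the subobject $\I{\vec x \mid \varphi}$) it acts coordinatewise as $\vec b \mapsto (h(b_1),\dots,h(b_n))$. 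Chasing $\id_{\FG{\vec x \mid \varphi}}$ through the square, one route sends it to $\vec{\const x}$ and then to $(h(\const x_1),\dots,h(\const x_n))$, while the other sends it to $h \comp \id = h$ and then to the tuple that $h$ takes $\vec x$ to. Commutativity of the square therefore gives: if $h$ takes $\vec x$ to $\vec a$, then $a_i = h(\const x_i)$ for all $i$.

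Finally, the converse follows from uniqueness. By Lemma~\ref{lem:coste:lp:hom}, every homomorphism $h \colon \FG{\vec x \mid \varphi} \to M$ takes $\vec x$ to a \emph{unique} tuple $\vec a' \in \I{\vec x \mid \varphi}_M$. If $\vec a \in \I{\vec x \mid \varphi}_M$ satisfies $h(\const x_i) = a_i$ for all $i$, then the forward implication yields $a_i' = h(\const x_i) = a_i$, so $\vec a = \vec a'$ and $h$ takes $\vec x$ to $\vec a$, as required. The only genuine work is the naturality chase, whose substantive input is Lemma~\ref{lemma:coste:synt:homnat}, identifying the map induced on interpretations by $h$ as coordinatewise application of $h$; granting this, the rest is formal. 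I expect the main delicacy to be keeping the three identifications aligned---finitely presentable models, representable functors, and interpretation sets via Yoneda---so that the generic element $\vec{\const x}$ and the coordinatewise action of $h$ on interpretations match up correctly, and so that the restriction of Lemma~\ref{lemma:coste:synt:homnat} from $\True$ to the cartesian formula $\varphi$ is justified by naturality with respect to the inclusion $\OI{\vec x \mid \varphi} \hookrightarrow \OI{\vec x \mid \True}$.
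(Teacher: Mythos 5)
Your proposal is correct and is essentially the paper's own argument in Yoneda-dual packaging: your generic tuple $\vec{\const x}$ (the image of the identity under the representability bijection) has coordinates exactly the paper's $\const x_i$ defined via the projections $\varpi_i$, and your naturality chase combined with Lemma~\ref{lemma:coste:synt:homnat} (restricted along the mono $\OI{\vec x \mid \varphi} \to \OI{\vec x \mid \True}$) is the same computation the paper performs by evaluating the natural transformation $F_M(-)(\vec b)$ at $\varpi_i$ and noting that $F_M(\varpi_i)$ is the $i$-th projection. Your concluding uniqueness step likewise matches the paper's.
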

%%%%%%%%%%%%%%%%%%%%%%%%%%%%%%%%%%%%%%%%%%%%%%%%%%%%%%%%%%%%%%%%%%%%%%%%%%%

%%%%%%%%%%%%%%%%%%%%%%%%%%%%%%%%%%%%%%%%%%%%%%%%%%%%%%%%%%%%%%%%%%%%%%%%%%%
\begin{proof}
%%%%%%%%%%%%%%%%%%%%%%%%%%%%%%%%%%%%%%%%%%%%%%%%%%%%%%%%%%%%%%%%%%%%%%%%%%%
Fix an arbitrary $i\in \{1,\ldots,n\}$.
Assume $y$ is a variable of sort $\sort_i$
and consider the $\cat T$-morphism
$\varpi_i \colon \OI{\vec x \mid \varphi} \to \OI{y : \sort_i \mid \True}$
given by the formula 
\[
\begin{array}{l l l}
  \vec x, y
& \sorting
& \varphi ~\land~ x_i \Eq y
  ~.
\end{array}
\]

\noindent
We define $\const x_i\in \FG{\vec x \mid \varphi}(\sort_i)$, of sort $\sort_{i}$, to be the image of $\varpi_i$ under
the bijection
\[
\begin{array}{*{7}{l}}
\cat T\funct{\OI{\vec x \mid \varphi},\OI{y : \sort_i \mid \True}}
& \cong
& \FG{\vec x \mid \varphi}(\sort_i).
\end{array}
\]

Now, let $h \colon \FG{\vec x \mid \varphi} \to M$
and $\vec a \in \I{\vec x \mid \varphi}_M$
such that $h(\const x_i) = a_i$ for all $i \in \{1,\dots,n\}$,
and suppose that
$h$ takes $\vec x$ to $\vec b \in \I{\vec x \mid \varphi}_M$.
The function
\[
\begin{array}{*{5}{l}}
  F_M(\varpi_i)
& :
& \I{\vec x \mid \varphi}_M
& \longto
& M(\sort_i)
  =
  \I{y : \sort_i \mid \True}_M
\end{array}
\]

\noindent
has graph
$\I{\vec x,y \mid \varphi ~\land~ x_i \Eq y}_M$
and is thus the $i$th projection.
Hence,
the natural transformation
$F_M(-)(\vec b) \colon \yoneda\OI{\vec x \mid \varphi} \to F_M$
takes
$\varpi_i \in \cat T\funct{\OI{\vec x \mid \varphi},\OI{y : \sort_i \mid \True}}$
to
\[
\begin{array}{*{7}{l}}
  F_M(\varpi_i)(\vec b)
& =
& b_i
& \in
& M(\sort_i)
  ~.
\end{array}
\]
By Lemma~\ref{lemma:coste:synt:homnat}
and the definition of $\const x_i$, we have
$h(\const x_i) = F_M(\varpi_i)(\vec b) = b_i$.
The statement then follows because $h(\const x_i) = a_i$ if, and only if, $a_i = b_i$.
%%%%%%%%%%%%%%%%%%%%%%%%%%%%%%%%%%%%%%%%%%%%%%%%%%%%%%%%%%%%%%%%%%%%%%%%%%%
\end{proof}
%%%%%%%%%%%%%%%%%%%%%%%%%%%%%%%%%%%%%%%%%%%%%%%%%%%%%%%%%%%%%%%%%%%%%%%%%%%

%%%%%%%%%%%%%%%%%%%%%%%%%%%%%%%%%%%%%%%%%%%%%%%%%%%%%%%%%%%%%%%%%%%%%%%%%%%
\begin{remark}
\label{rem:finite-carrier-fp-purely-rel-empty-th}
%%%%%%%%%%%%%%%%%%%%%%%%%%%%%%%%%%%%%%%%%%%%%%%%%%%%%%%%%%%%%%%%%%%%%%%%%%%
Let $\sig$ be a (mono-sorted) purely relational signature.
Then it is possible to show that in $\Struct(\sig)$ the structure
$\FG{\vec x \mid \True}$, with $\vec x = x_1,\dots,x_n$,
has carrier $\{\const x_1,\dots,\const x_n\}$
(where the $\const x_{i}$'s are pairwise distinct).
Now, for any formula-in-context $\OI{\vec x \mid \varphi}$
there exists an epimorphism $\FG{\vec x \mid \True}\to \FG{\vec x \mid \varphi}$
in $\Struct(\sig)$, see Remark~\ref{rem:emb:epi} below.
This induces a surjection at the level of carriers,
and so the carrier of $\FG{\vec x \mid \varphi}$ is $\{\const x_1,\dots,\const x_n\}$
(but this time the $\const x_{i}$'s need not be pairwise distinct).
This extends to many-sorted relational signatures,
but is typically no longer true for signatures that
are not purely relational,
or if we consider a non-empty (cartesian) theory.
%%%%%%%%%%%%%%%%%%%%%%%%%%%%%%%%%%%%%%%%%%%%%%%%%%%%%%%%%%%%%%%%%%%%%%%%%%%
\end{remark}
%%%%%%%%%%%%%%%%%%%%%%%%%%%%%%%%%%%%%%%%%%%%%%%%%%%%%%%%%%%%%%%%%%%%%%%%%%%

%%%%%%%%%%%%%%%%%%%%%%%%%%%%%%%%%%%%%%%%%%%%%%%%%%%%%%%%%%%%%%%%%%%%%%%%%%%%
\begin{fullproof}
%%%%%%%%%%%%%%%%%%%%%%%%%%%%%%%%%%%%%%%%%%%%%%%%%%%%%%%%%%%%%%%%%%%%%%%%%%%%
We first show that the carrier of
$\FG{\vec x \mid \True}$ is $\{\const x_1,\dots,\const x_n\}$.

It follows from Lemma~\ref{lem:coste:fp:const} that the carrier of
$\FG{\vec x \mid \True}$ contains $\{\const x_1,\dots,\const x_n\}$.
For the converse inclusion,
let $\cat T$ be the syntactic category of $\theory(\sig)$,
the empty theory in $\sig$
(see eq.~\eqref{eq:coste:struct}, \S\ref{sec:coste:synt}).
Recall that $\FG{\vec x \mid \True}$ has carrier
\[
  \cat T\funct{
  \OI{\vec x \mid \True}
  \,,\,
  \OI{y \mid \True}
  }
\]

\noindent
Let $\MI{\vec x,y \mid \theta}$ be a $\cat T$-morphism
from $\OI{\vec x \mid \True}$ to $\OI{y \mid \True}$.
Hence $\theory(\sig)$ proves
\[
\begin{array}{l !{\qquad\text{and}\qquad} l}
  \theta \land \theta[z/y]
  \thesis_{\vec x,y,z}
  y \Eq z

& \thesis_{\vec x}
  (\exists y)\theta
\end{array}
\]

We show that there is some $j\in \{1,\dots,n\}$
such that $(\vec x,y \sorting \theta)$ is $\theory(\sig)$-equivalent to
\begin{equation}
\label{eq:emb:formemb:fin}
\begin{array}{l l l}
  \vec x,y
& \sorting
& y \Eq x_j
\end{array}
\end{equation}

\noindent
In view of (the proof of) Lemma~\ref{lem:coste:fp:const},
this will give the result.
We rely on completeness (Remark~\ref{rem:prelim:coste:compl}).

We first determine the $j \in \{1,\dots,n\}$.
Let $N$ be the $\sig$-structure $\{\const x_1,\dots,\const x_n\}$,
with no relation assumed on the $\const x_i$'s.%
\footnote{This structure ought to be $\FG{\vec x \mid \True}$,
but we do not need to be that precise.}
Since $\theory(\sig)$ proves $(\exists y)\theta$,
it follows 
that there is some $j \in \{1,\dots,n\}$ such that 
$N \models \theta(\const x_1,\dots,\const x_n,\const x_j)$.

We now show that every model of $\theta$ is a model of \eqref{eq:emb:formemb:fin}.
Let $M$ be a $\sig$-structure, and let
$(\vec a,b) \in \I{\vec x,y \mid \theta}_M$.
It follows from the definition of $N$ above that the function
$N \to M$ which takes $\const x_i$ to $a_i$ is a homomorphism.
Since $\theta$ is a positive existential formula,
we get that $M \models \theta(\vec a,a_j)$
(see e.g.~\cite[Theorem 2.4.3(a)]{hodges93book}).
Hence $b = a_j$ since $(\exists y)\theta$ is cartesian.

Conversely, we show that every model of \eqref{eq:emb:formemb:fin}
is a model of $\theta$.
Let $M$ be a $\sig$-structure, and let $\vec a \in M$.
Since $\theory(\sig)$ proves $(\exists y)\theta$,
we get $(\vec a,b) \in \I{\vec x,y \mid \theta}_M$
for some $b \in M$.
Then, reasoning similarly as above yields $b = a_j$.

Hence, the carrier of $\FG{\vec x \mid \True}$ is $\{\const x_1,\dots,\const x_n\}$.
It remains to show that the $\const x_i$'s are pairwise distinct.
In view of (the proof of) Lemma~\ref{lem:coste:fp:const},
this amounts to showing that for a given $\theta$ as above,
there is a \emph{unique} $j$ as in~\eqref{eq:emb:formemb:fin}.
But $j$ is unique since $(\exists y)\theta$ is cartesian.
%%%%%%%%%%%%%%%%%%%%%%%%%%%%%%%%%%%%%%%%%%%%%%%%%%%%%%%%%%%%%%%%%%%%%%%%%%%%
\end{fullproof}
%%%%%%%%%%%%%%%%%%%%%%%%%%%%%%%%%%%%%%%%%%%%%%%%%%%%%%%%%%%%%%%%%%%%%%%%%%%%

%%%%%%%%%%%%%%%%%%%%%%%%%%%%%%%%%%%%%%%%%%%%%%%%%%%%%%%%%%%%%%%%%%%%%%%%%%%
\subsubsection{Homomorphisms}
%%%%%%%%%%%%%%%%%%%%%%%%%%%%%%%%%%%%%%%%%%%%%%%%%%%%%%%%%%%%%%%%%%%%%%%%%%%
Fix a model $M \in \Mod(\theory)$.
We discuss
a logical description of arrows in the comma category $\fp{\Mod(\theory)}/M$
that we will use repeatedly.
Let
\(
  h
  \colon
  \FG{\vec y \mid \psi}
  \to
  \FG{\vec x \mid \varphi}
\)
be a homomorphism in $\Mod(\theory)$
with corresponding $\cat T$-morphism
$\MI{\vec x, \vec y \mid \theta(\vec x, \vec y)}$, 
and let $k \colon \FG{\vec x \mid \varphi} \to M$ be an arrow in $\Mod(\theory)$:
\[
\begin{tikzcd}
  \OI{\vec y \mid \psi}
  \arrow[leftarrow]{rr}{\MI{\vec x, \vec y \mid \theta(\vec x, \vec y)}}
&
& \OI{\vec x \mid \varphi}
\\
  \FG{\vec y \mid \psi}
  \arrow{rr}{h}
  \arrow[dotted]{dr}[below, xshift=-5pt]{k \comp h}
&
& \FG{\vec x \mid \varphi}
  \arrow{dl}{k}
\\
& M
\end{tikzcd}
\]

Lemma~\ref{lem:coste:fp:triangle} below says that
if $k$ takes $\vec x$ to $\vec a \in \I{\vec x \mid \varphi}_M$,
then the composition $k \comp h$ takes $\vec y$ to the
unique $\vec b \in \I{\vec y \mid \psi}_M$ such that
$M \models \theta(\vec a,\vec b)$.

%%%%%%%%%%%%%%%%%%%%%%%%%%%%%%%%%%%%%%%%%%%%%%%%%%%%%%%%%%%%%%%%%%%%%%%%%%%
\begin{lemma}
\label{lem:coste:fp:triangle}
%%%%%%%%%%%%%%%%%%%%%%%%%%%%%%%%%%%%%%%%%%%%%%%%%%%%%%%%%%%%%%%%%%%%%%%%%%%
Given $h,\theta$ as above,
let $\vec a \in \I{\vec x \mid \varphi}_M$ and $\vec b \in \I{\vec y \mid \psi}_M$.
Then $M \models \theta(\vec a, \vec b)$
if, and only if,
the triangle below commutes,
where $k$ 
takes $\vec x$ to $\vec a$
and $\ell$
takes $\vec y$ to $\vec b$.
\[
\begin{tikzcd}
  \FG{\vec y \mid \psi}
  \arrow{rr}{h}
  \arrow{dr}[below, xshift=-5pt]{\ell}
&
& \FG{\vec x \mid \varphi}
  \arrow{dl}{k}
\\
& M
\end{tikzcd}
\]
%%%%%%%%%%%%%%%%%%%%%%%%%%%%%%%%%%%%%%%%%%%%%%%%%%%%%%%%%%%%%%%%%%%%%%%%%%%
\end{lemma}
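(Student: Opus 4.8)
The plan is to transport the entire statement across the equivalence $\Mod(\theory) \simeq \lex\funct{\cat T, \Set}$ of Theorem~\ref{thm:coste:synt} and reduce it to a single application of the Yoneda lemma. Write $g \colon \OI{\vec x \mid \varphi} \to \OI{\vec y \mid \psi}$ for the $\cat T$-morphism $\MI{\vec x, \vec y \mid \theta}$ corresponding to $h$. Under the equivalence, $M$ corresponds to the lex-morphism $F_M$, the models $\FG{\vec x \mid \varphi}$ and $\FG{\vec y \mid \psi}$ correspond to the representables $\yoneda\OI{\vec x \mid \varphi}$ and $\yoneda\OI{\vec y \mid \psi}$, and $h$ corresponds to $\yoneda g$, i.e.\ to precomposition with $g$ (recall that the Yoneda embedding $\cat T^\op \to \lex\funct{\cat T, \Set}$ is contravariant, so $g \colon \OI{\vec x \mid \varphi} \to \OI{\vec y \mid \psi}$ yields $\yoneda g \colon \yoneda\OI{\vec y \mid \psi} \to \yoneda\OI{\vec x \mid \varphi}$). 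By Definition~\ref{def:coste:fp:hom} and eq.~\eqref{eq:lambda-vec-a}, the arrows $k$ and $\ell$ correspond to the natural transformations $F_M(-)(\vec a)$ and $F_M(-)(\vec b)$. Hence the triangle $\ell = k \comp h$ commutes if, and only if,
\[
F_M(-)(\vec b) \; = \; F_M(-)(\vec a) \comp \yoneda g
\]
as natural transformations $\yoneda\OI{\vec y \mid \psi} \to F_M$.

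The next step is to evaluate the right-hand side using the Yoneda lemma, according to which any natural transformation $\yoneda\OI{\vec y \mid \psi} \to F_M$ is determined by its value at the identity $\id \in \cat T\funct{\OI{\vec y \mid \psi}, \OI{\vec y \mid \psi}}$. Since $\yoneda g$ is precomposition with $g$, it sends $\id$ to $g$, and then $F_M(-)(\vec a)$ sends $g$ to $F_M(g)(\vec a) \in F_M\OI{\vec y \mid \psi} = \I{\vec y \mid \psi}_M$. Thus the displayed equality holds precisely when $\vec b = F_M(g)(\vec a)$.

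Finally I would identify the function $F_M(g) \colon \I{\vec x \mid \varphi}_M \to \I{\vec y \mid \psi}_M$. By the construction of $F_M$ recalled after Theorem~\ref{thm:coste:synt}, namely eq.~\eqref{eq:coste:graph}, this is exactly the function whose graph is the functional relation $\I{\vec x, \vec y \mid \theta}_M \into \I{\vec x \mid \varphi}_M \times \I{\vec y \mid \psi}_M$. Consequently $\vec b = F_M(g)(\vec a)$ holds if, and only if, $(\vec a, \vec b) \in \I{\vec x, \vec y \mid \theta}_M$, that is, $M \models \theta(\vec a, \vec b)$, which is the desired equivalence (both directions being handled at once, since $F_M(g)$ is a genuine function). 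The only delicate point is keeping track of the contravariance of the Yoneda embedding, and hence of the direction of $g$; the rest is a routine unwinding of the correspondence of Theorem~\ref{thm:coste:synt}.
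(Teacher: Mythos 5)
Your proof is correct and follows essentially the same route as the paper's: translate the triangle via Theorem~\ref{thm:coste:synt} into a triangle of natural transformations involving $\yoneda\MI{\vec x,\vec y\mid\theta}$, $F_M(-)(\vec a)$ and $F_M(-)(\vec b)$, reduce the commutativity to the identity $\vec b = F_M(\MI{\vec x,\vec y\mid\theta})(\vec a)$ by the Yoneda lemma, and conclude using the graph description of $F_M(\MI{\vec x,\vec y\mid\theta})$ from eq.~\eqref{eq:coste:graph}. The only cosmetic difference is that you evaluate both transformations at the identity, whereas the paper rewrites the composite using functoriality of $F_M$; these are the same Yoneda argument.
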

%%%%%%%%%%%%%%%%%%%%%%%%%%%%%%%%%%%%%%%%%%%%%%%%%%%%%%%%%%%%%%%%%%%%%%%%%%%

%%%%%%%%%%%%%%%%%%%%%%%%%%%%%%%%%%%%%%%%%%%%%%%%%%%%%%%%%%%%%%%%%%%%%%%%%%%
\begin{proof}
%%%%%%%%%%%%%%%%%%%%%%%%%%%%%%%%%%%%%%%%%%%%%%%%%%%%%%%%%%%%%%%%%%%%%%%%%%%
We apply Theorem~\ref{thm:coste:synt}.
With the notations of eq.\ \eqref{eq:lambda-vec-a},
we have $\ell = k \comp h$ if, and only if,
the following diagram commutes.
\begin{equation}
\label{eq:coste:fp:triangle}
\begin{tikzcd}
  \yoneda\OI{\vec y \mid \psi}
  \arrow{rr}{\yoneda \MI{\vec x,\vec y \mid \theta(\vec x,\vec y)}}
  \arrow{dr}[below, xshift=-15pt]{F_M(-)(\vec b)}
&
& \yoneda\OI{\vec x \mid \varphi}
  \arrow{dl}{F_M(-)(\vec a)}
\\
& F_M
\end{tikzcd}
\end{equation}

\noindent
By functoriality of $F_M$, the rightmost path in \eqref{eq:coste:fp:triangle} is
\[
\begin{array}{l l l}
  F_M((-) \comp \MI{\vec x,\vec y \mid \theta(\vec x, \vec y)})(\vec a)
& =
& F_M(-)(F_M(\MI{\vec x,\vec y \mid \theta(\vec x,\vec y)})(\vec a)).
\end{array}
\]

\noindent
Hence, the triangle in~\eqref{eq:coste:fp:triangle} commutes if and only if
\[
\begin{array}{l l l}
  \vec b
& =
& F_M(\MI{\vec x, \vec y \mid \theta(\vec x,\vec y)})(\vec a),
\end{array}
\]

\noindent
which is in turn equivalent to
$(\vec a, \vec b) \in \I{\vec x,\vec y \mid \theta(\vec x,\vec y)}_M$
(recall from eq.~\eqref{eq:coste:graph} above that
$F_M(\MI{\vec x, \vec y \mid \theta(\vec x,\vec y)})$
is the function 
$\I{\vec x \mid \varphi}_M \to \I{\vec y \mid \psi}_M$
of graph
$\I{\vec x, \vec y \mid \theta(\vec x,\vec y)}_M$).
%%%%%%%%%%%%%%%%%%%%%%%%%%%%%%%%%%%%%%%%%%%%%%%%%%%%%%%%%%%%%%%%%%%%%%%%%%%
\end{proof}
%%%%%%%%%%%%%%%%%%%%%%%%%%%%%%%%%%%%%%%%%%%%%%%%%%%%%%%%%%%%%%%%%%%%%%%%%%%

The following simple and expected fact is an
immediate consequence of Lemma~\ref{lem:coste:fp:triangle}.

%%%%%%%%%%%%%%%%%%%%%%%%%%%%%%%%%%%%%%%%%%%%%%%%%%%%%%%%%%%%%%%%%%%%%%%%%%%
\begin{remark}
\label{rem:coste:fp:coprod}
%%%%%%%%%%%%%%%%%%%%%%%%%%%%%%%%%%%%%%%%%%%%%%%%%%%%%%%%%%%%%%%%%%%%%%%%%%%
Let $\vec x = x_1,\dots,x_n$, with $x_j$ of sort $\sort_j$.
Then $\FG{\vec x \mid \True}$ is the coproduct of 
$\FG{x_1 : \sort_1 \mid \True}, \dots, \FG{x_n : \sort_n \mid \True}$
in $\Mod(\theory)$.
%%%%%%%%%%%%%%%%%%%%%%%%%%%%%%%%%%%%%%%%%%%%%%%%%%%%%%%%%%%%%%%%%%%%%%%%%%%
\end{remark}
%%%%%%%%%%%%%%%%%%%%%%%%%%%%%%%%%%%%%%%%%%%%%%%%%%%%%%%%%%%%%%%%%%%%%%%%%%%

%%%%%%%%%%%%%%%%%%%%%%%%%%%%%%%%%%%%%%%%%%%%%%%%%%%%%%%%%%%%%%%%%%%%%%%%%%%
\begin{fullproof}
%%%%%%%%%%%%%%%%%%%%%%%%%%%%%%%%%%%%%%%%%%%%%%%%%%%%%%%%%%%%%%%%%%%%%%%%%%%
We show that $\FG{\vec x \mid \True}$ is the coproduct of the
$\FG{x_j : \sort_j \mid \True}$,
with coprojections $\iota_j$ induced from $\cat T$-morphisms
\(
  \varpi_j
  \colon
  \OI{\vec x \mid \True}
  \to
  \OI{x_j : \sort_j \mid \True}
\)
as in the proof of Lemma~\ref{lem:coste:fp:const}.

Let $M \in \Mod(\theory)$, together with morphisms
$h_j \colon \FG{x_j \mid \sort_j} \to M$ for each $j=1,\dots,n$.
Using Lemma~\ref{lem:coste:fp:const},
let $c_j \deq h_j(\const x_j) \in M(\sort_j)$.
Let $k \colon \FG{\vec x \mid \True} \to M$ take
$x_j$ to $c_j$.
It then follows from Lemma~\ref{lem:coste:fp:triangle}
that $k \comp \iota_j = h_j$ for each $j = 1,\dots,n$.
Moreover, given 
$k' \colon \FG{\vec x \mid \True} \to M$
such that $k' \comp \iota_j = h_j$ for each $j = 1,\dots,n$,
we have $k'(\const x_j) = (k' \comp \iota_j)(\const x_j) = c_j = k(\const x_j)$.
Hence $k' = k$
by Lemma~\ref{lem:coste:fp:const}.
%%%%%%%%%%%%%%%%%%%%%%%%%%%%%%%%%%%%%%%%%%%%%%%%%%%%%%%%%%%%%%%%%%%%%%%%%%%
\end{fullproof}
%%%%%%%%%%%%%%%%%%%%%%%%%%%%%%%%%%%%%%%%%%%%%%%%%%%%%%%%%%%%%%%%%%%%%%%%%%%

Lemma~\ref{lem:coste:fp:triangle}
will often be used in the following form: 
%%%%%%%%%%%%%%%%%%%%%%%%%%%%%%%%%%%%%%%%%%%%%%%%%%%%%%%%%%%%%%%%%%%%%%%%%%%
\begin{corollary}
\label{cor:coste:fp:triangle}
%%%%%%%%%%%%%%%%%%%%%%%%%%%%%%%%%%%%%%%%%%%%%%%%%%%%%%%%%%%%%%%%%%%%%%%%%%%
Let $\OI{\vec x \mid \varphi}, \OI{\vec y \mid \psi} \in \cat T$
and let $\vec a,\vec b \in M$ be of the same sorts as~$\vec x$ and~$\vec y$, respectively.
The following statements are equivalent:
\begin{enumerate}[(i)]
\item
There exists a homomorphism $h$ making the triangle below commute,
\[
\begin{tikzcd}[row sep=tiny]
  \FG{\vec y \mid \psi}
  \arrow{rr}{h}
  \arrow{dr}[below, xshift=-5pt]{\ell}
&
& \FG{\vec x \mid \varphi}
  \arrow{dl}{k}
\\
& M
\end{tikzcd}
\]
where $k$
takes $\vec x$ to $\vec a$
and $\ell$
takes $\vec y$ to $\vec b$.

\item
\label{item:coste:fp:triangle:form}
\(
M
\models
\varphi(\vec a)
\land
\psi(\vec b)
\land
\bigvee
\{
\theta(\vec a, \vec b)
\mid
\theta \in \cat T\funct{\OI{\vec x \mid \varphi},\OI{\vec y \mid \psi}}
\}
\).
(Note that the latter formula may be infinite.)
\end{enumerate}
%%%%%%%%%%%%%%%%%%%%%%%%%%%%%%%%%%%%%%%%%%%%%%%%%%%%%%%%%%%%%%%%%%%%%%%%%%%
\end{corollary}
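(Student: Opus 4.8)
The plan is to read the corollary off Lemma~\ref{lem:coste:fp:triangle} together with the Yoneda bijection of Lemma~\ref{lem:coste:lp:hom}, the only real work being to account correctly for the three conjuncts in~\ref{item:coste:fp:triangle:form}. First I would record two bookkeeping facts. By Theorem~\ref{thm:coste:synt} (contravariant Yoneda), the homomorphisms $h \colon \FG{\vec y \mid \psi} \to \FG{\vec x \mid \varphi}$ in $\Mod(\theory)$ are in bijection with the $\cat T$-morphisms $\OI{\vec x \mid \varphi} \to \OI{\vec y \mid \psi}$, i.e.\ (choosing representatives) with the formulae $\theta \in \cat T\funct{\OI{\vec x \mid \varphi}, \OI{\vec y \mid \psi}}$ indexing the disjunction in~\ref{item:coste:fp:triangle:form}. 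Moreover, by Lemma~\ref{lem:coste:lp:hom} a homomorphism $k$ taking $\vec x$ to $\vec a$ exists, and is then unique, precisely when $\vec a \in \I{\vec x \mid \varphi}_M$, that is, when $M \models \varphi(\vec a)$; symmetrically, $\ell$ taking $\vec y$ to $\vec b$ exists precisely when $M \models \psi(\vec b)$. Thus the two conjuncts $\varphi(\vec a)$ and $\psi(\vec b)$ are exactly the conditions guaranteeing that the maps $k$ and $\ell$ appearing in~(i) are defined.

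For the implication (i)$\Rightarrow$\ref{item:coste:fp:triangle:form}, I would start from a triangle as in~(i). The mere existence of $k$ and $\ell$ forces $M \models \varphi(\vec a)$ and $M \models \psi(\vec b)$ by the previous paragraph. Writing $\theta \in \cat T\funct{\OI{\vec x \mid \varphi}, \OI{\vec y \mid \psi}}$ for the formula corresponding to $h$, commutativity of the triangle says $\ell = k \comp h$, so Lemma~\ref{lem:coste:fp:triangle} gives $M \models \theta(\vec a, \vec b)$. Hence this disjunct of the (possibly infinite) disjunction in~\ref{item:coste:fp:triangle:form} holds, and combining the three we obtain~\ref{item:coste:fp:triangle:form}.

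For the converse, assuming~\ref{item:coste:fp:triangle:form}, the conjuncts $\varphi(\vec a)$ and $\psi(\vec b)$ produce, via Lemma~\ref{lem:coste:lp:hom}, the unique homomorphisms $k$ taking $\vec x$ to $\vec a$ and $\ell$ taking $\vec y$ to $\vec b$. The satisfied disjunct supplies some $\theta \in \cat T\funct{\OI{\vec x \mid \varphi}, \OI{\vec y \mid \psi}}$ with $M \models \theta(\vec a, \vec b)$; let $h \colon \FG{\vec y \mid \psi} \to \FG{\vec x \mid \varphi}$ be the homomorphism corresponding to $\MI{\vec x, \vec y \mid \theta}$ under the bijection of the first paragraph. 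Applying Lemma~\ref{lem:coste:fp:triangle} once more, $M \models \theta(\vec a, \vec b)$ yields $\ell = k \comp h$, i.e.\ the triangle in~(i) commutes, which closes the equivalence.

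I do not expect a genuine obstacle, since the statement merely repackages Lemma~\ref{lem:coste:fp:triangle}. The one point deserving care is the passage between the existential quantifier ``there exists a homomorphism $h$'' in~(i) and the infinitary disjunction over $\theta$ in~\ref{item:coste:fp:triangle:form}: this rests on the bijection between the hom-set $\Mod(\theory)\funct{\FG{\vec y \mid \psi}, \FG{\vec x \mid \varphi}}$ and the (possibly infinite) set $\cat T\funct{\OI{\vec x \mid \varphi}, \OI{\vec y \mid \psi}}$, under which ``some $h$ makes the triangle commute'' translates into ``some disjunct $\theta(\vec a, \vec b)$ is satisfied in $M$'', hence into satisfaction of the whole disjunction.
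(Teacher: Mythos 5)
Your proof is correct and follows exactly the route the paper intends: the paper states this corollary without proof as a direct repackaging of Lemma~\ref{lem:coste:fp:triangle}, and your argument supplies precisely the missing bookkeeping — Lemma~\ref{lem:coste:lp:hom} to match the conjuncts $\varphi(\vec a)$, $\psi(\vec b)$ with the existence of $k$ and $\ell$, and the Yoneda bijection between $\Mod(\theory)\funct{\FG{\vec y \mid \psi}, \FG{\vec x \mid \varphi}}$ and $\cat T\funct{\OI{\vec x \mid \varphi},\OI{\vec y \mid \psi}}$ to translate the existential over $h$ into the infinitary disjunction over $\theta$.
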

%%%%%%%%%%%%%%%%%%%%%%%%%%%%%%%%%%%%%%%%%%%%%%%%%%%%%%%%%%%%%%%%%%%%%%%%%%%

%%%%%%%%%%%%%%%%%%%%%%%%%%%%%%%%%%%%%%%%%%%%%%%%%%%%%%%%%%%%%%%%%%%%%%%%%%%
\subsection{Interpretations and a transfer result}
\label{sec:coste:interp}
%%%%%%%%%%%%%%%%%%%%%%%%%%%%%%%%%%%%%%%%%%%%%%%%%%%%%%%%%%%%%%%%%%%%%%%%%%%

Recall from Corollary~\ref{cor:lfp-iff-models-of-T}
that Gabriel-Ulmer duality can be seen as a syntax-semantics duality.
Accordingly, morphisms of lfp categories induce interpretations of formulae,
as we now explain.

Consider cartesian theories $\theory$ and $\theorybis$
with syntactic categories $\cat T$ and $\cat U$, respectively. 
By Gabriel-Ulmer duality,
for each morphism of lfp categories
$R \colon \Mod(\theorybis) \to \Mod(\theory)$
there is a lex-morphism
$\inclth \colon \cat T \to \cat U$
such that
$R$ corresponds, via Theorem~\ref{thm:coste:synt},
to the functor
$\ladj\inclth \colon \lex\funct{\cat U,\Set} \to \lex\funct{\cat T,\Set}$.
Given a model $M \in \Mod(\theorybis)$ and
$\OI{\vec x \mid \varphi} \in \cat T$,
the formula $\vec x \sorting \varphi$ is interpreted in $R M$
as
\[
\begin{array}{*{7}{l}}
  \I{\vec x \mid \varphi}_{R M}
& =
& F_{R M}\OI{\vec x \mid \varphi}
& \cong
& F_M \inclth\OI{\vec x \mid \varphi} 
& =
& \I{\inclth\OI{\vec x \mid \varphi}}_M.
\end{array}
\]

\noindent
In particular, for each sort $\sort$ of the signature $\Sig$ of $\theory$,
\begin{equation}
\label{eq:coste:iso-RM-sortwise}
\begin{array}{l l l}
  (R M)(\sort)
& \cong
& \I{\inclth\OI{y : \sort \mid \True}}_M.
\end{array}
\end{equation}

It follows that $R$ induces an interpretation $(-)^{\inclth}$ 
of (possibly infinitary) first-order formulae in 
$\Sig$
as formulae in the signature $\Sigbis$ of $\theorybis$.
Let $\psi(\vec y)$ be a formula in $\Sig$, with $\vec y = y_1,\dots,y_n$.
We define a formula
$\psi^{\inclth}(\vec x_1,\dots,\vec x_n)$ in~$\Sigbis$,
by induction on $\psi$, as follows.
If $\psi$ is atomic then
$\psi^{\inclth}$ is
$\varphi$
where
$\inclth\OI{\vec y \mid \psi} = \OI{\vec x \mid \varphi}$, and
$(-)^{\inclth}$ commutes with propositional connectives. 
If
$\psi = (\exists y : \sort)\psi'(\vec y,y)$,
write
$\OI{\vec x \mid \varphi}$
for
${\inclth\OI{y : \sort \mid \True}}$,
and define
$\psi^{\inclth}(\vec x_1,\dots,\vec x_n)$
as
\[
  (\exists \vec x)
  \left(
  \varphi(\vec x)
  ~\land~
  (\psi')^{\inclth}
  (\vec x_1,\dots,\vec x_n,\vec x)
  \right).
\]
The case of 
$\psi = (\forall y)\psi'(\vec y,y)$
is similar.

For the next result,
given a formula $\psi(y_1,\dots,y_n)$
in $\Sig$ with $y_i$ of sort $\sort_i$, for each $i = 1,\dots,n$ we let
$\OI{\vec x_i \mid \varphi_i} \deq \inclth\OI{y_i : \sort_i \mid \True}$.
Further, for any model
$M$ of $\theorybis$
and sort~$\sort$ of~$\Sig$,
let
\(
  \chi^{\sort}_M
  :
  \I{\inclth\OI{y : \sort \mid \True}}_M
  \to
  (R M)(\sort)
\)
be the inverse of the bijection in eq.~\eqref{eq:coste:iso-RM-sortwise}.

%%%%%%%%%%%%%%%%%%%%%%%%%%%%%%%%%%%%%%%%%%%%%%%%%%%%%%%%%%%%%%%%%%%%%%%%%%%
\begin{theorem}
\label{thm:coste:interp}
%%%%%%%%%%%%%%%%%%%%%%%%%%%%%%%%%%%%%%%%%%%%%%%%%%%%%%%%%%%%%%%%%%%%%%%%%%%
In the above situation,
for all tuples
$\vec a_1,\dots,\vec a_n \in M$
with
$\vec a_i \in \I{\vec x_i \mid \varphi_i}_M$ for each $i = 1,\dots,n$,
\[
\begin{array}{l l l}
  R M
  \models
  \psi\left( \chi^{\sort_1}_M(\vec a_1) ,\dots, \chi^{\sort_n}_{M}(\vec a_n) \right)
& \longiff
& M \models \psi^{\inclth}(\vec a_1,\dots,\vec a_n).
\end{array}
\]
\end{theorem}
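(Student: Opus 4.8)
The plan is to argue by induction on the structure of the formula $\psi$, using at the base the natural isomorphism $F_{RM} \cong F_M \comp \inclth$ that expresses the fact that $R$ corresponds to $\ladj\inclth$ under Theorem~\ref{thm:coste:synt}, and at the inductive steps the semantic clauses for satisfaction together with the inductive definition of $(-)^{\inclth}$. Throughout I would exploit that $\inclth \colon \cat T \to \cat U$ is a lex-morphism, so it preserves finite products and monomorphisms.

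For the base case I would take $\psi$ atomic, so that $\OI{\vec y \mid \psi} \in \cat T$ and $\psi^{\inclth} = \varphi$, where $\inclth\OI{\vec y \mid \psi} = \OI{\vec x \mid \varphi}$ with $\vec x = \vec x_1,\dots,\vec x_n$. Evaluating the isomorphism $F_{RM} \cong F_M \comp \inclth$ at $\OI{\vec y \mid \psi}$ and at $\OI{\vec y \mid \True}$ gives bijections $\I{\vec y \mid \psi}_{RM} \cong \I{\vec x \mid \varphi}_M$ and $\I{\vec y \mid \True}_{RM} \cong \I{\inclth\OI{\vec y \mid \True}}_M$, which by naturality fit into a square with the subobject inclusions induced by $\psi \thesis_{\vec y} \True$ (preserved by $\inclth$). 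Since $\OI{\vec y \mid \True} \cong \prod_i \OI{y_i : \sort_i \mid \True}$ by Lemma~\ref{lem:coste:synt:lim} and $\inclth$ preserves finite products, I would invoke Lemma~\ref{lemma:coste:synt:homnat} to identify the right-hand bijection with the product of the sort-wise maps $\chi^{\sort_i}_M$. A diagram chase then shows $(\chi^{\sort_1}_M(\vec a_1),\dots,\chi^{\sort_n}_M(\vec a_n)) \in \I{\vec y \mid \psi}_{RM}$ if and only if $(\vec a_1,\dots,\vec a_n) \in \I{\vec x \mid \varphi}_M$, i.e. $RM \models \psi(\chi^{\sort_1}_M(\vec a_1),\dots,\chi^{\sort_n}_M(\vec a_n)) \longiff M \models \psi^{\inclth}(\vec a_1,\dots,\vec a_n)$.

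The inductive steps for the connectives are immediate: since both $\I{-}_{RM}$ and $\I{-}_M$ interpret $\neg$, $\bigwedge$ and $\bigvee$ by complement, intersection and union, and $(-)^{\inclth}$ commutes with these connectives, the equivalence for the components transfers verbatim to the compound formula. For $\psi = (\exists y : \sort)\psi'(\vec y,y)$ I would write $\OI{\vec x \mid \varphi} \deq \inclth\OI{y : \sort \mid \True}$ and note that $RM \models (\exists y : \sort)\psi'(\dots,y)$ holds iff there is some $c \in (RM)(\sort)$ with $RM \models \psi'(\chi^{\sort_1}_M(\vec a_1),\dots,c)$. Transporting $c = \chi^{\sort}_M(\vec b)$ along the bijection $(RM)(\sort) \cong \I{\vec x \mid \varphi}_M$ and applying the induction hypothesis to $\psi'$ (the existential variable contributing the block $\vec b \in \I{\vec x \mid \varphi}_M$) rewrites this as: there is $\vec b$ with $M \models \varphi(\vec b)$ and $M \models (\psi')^{\inclth}(\vec a_1,\dots,\vec a_n,\vec b)$, which is exactly $M \models (\exists \vec x)(\varphi(\vec x) \land (\psi')^{\inclth}(\vec a_1,\dots,\vec a_n,\vec x))$, the definition of $\psi^{\inclth}$. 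The universal case is dual, the guard $\varphi(\vec x)$ entering as an antecedent rather than a conjunct.

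I expect the only delicate point to be the base-case bookkeeping: checking that the componentwise isomorphisms $\chi^{\sort_i}_M$ assemble, through product-preservation of $\inclth$ and Lemma~\ref{lemma:coste:synt:homnat}, into the single bijection transporting the subobject $\I{\vec y \mid \psi}_{RM}$ to $\I{\vec x \mid \varphi}_M$. Once this compatibility is in place, the quantifier step is the crucial computation but becomes routine, since the quantifier over $\sort$ in $\Sig$ ranges precisely over $(RM)(\sort)$, which corresponds under $\chi^{\sort}_M$ to the $\varphi$-definable set $\I{\vec x \mid \varphi}_M$ in $M$; this is exactly what forces $\psi^{\inclth}$ to relativise each new block of quantifiers to $\varphi$.
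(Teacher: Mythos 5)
Your proof is correct. The paper in fact states Theorem~\ref{thm:coste:interp} without giving a proof, and your structural induction is exactly the routine argument the paper's setup is designed for: the naturality of the isomorphism $F_{RM}\cong F_M\comp\inclth$ together with product preservation handles the atomic case (assembling the sortwise bijections $\chi^{\sort_i}_M$ into the map transporting the subobject $\I{\vec y\mid\psi}_{RM}$ onto $\I{\vec x\mid\varphi}_M$), and the quantifier steps relativise to $\varphi$ precisely as in the inductive definition of $(-)^{\inclth}$.
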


%%%%%%%%%%%%%%%%%%%%%%%%%%%%%%%%%%%%%%%%%%%%%%%%%%%%%%%%%%%%%%%%%%%%%%%%%%%
\begin{remark}
\label{rem:coste:interp:card}
%%%%%%%%%%%%%%%%%%%%%%%%%%%%%%%%%%%%%%%%%%%%%%%%%%%%%%%%%%%%%%%%%%%%%%%%%%%
Note that $(-)^{\inclth}$ interprets
$\Lang_\kappa(\Sig)$ in $\Lang_\kappa(\Sigbis)$,
for each regular cardinal~$\kappa$.
In particular, it interprets $\Lang_\omega(\Sig)$ in $\Lang_\omega(\Sigbis)$.
This is a crucial difference between interpretations and word-constructions
(\S\ref{sec:wc:wc}).
\end{remark}

%%%%%%%%%%%%%%%%%%%%%%%%%%%%%%%%%%%%%%%%%%%%%%%%%%%%%%%%%%%%%%%%%%%%%%%%%%%
\section{Hintikka formulae for back-and-forth games}
\label{sec:hintikka}
%%%%%%%%%%%%%%%%%%%%%%%%%%%%%%%%%%%%%%%%%%%%%%%%%%%%%%%%%%%%%%%%%%%%%%%%%%%

We now return to the main goal of this paper: to show
that, under mild additional assumptions,
given a finitely accessible wooded adjunction $R\colon \E \to \C$,
the relation of $R$-back-and-forth-equivalence
induced on the extensional category $\E$
is definable in the infinitary first-order logic $\Lang_{\infty}$
(see Theorem~\ref{thm:path:main}). 

This is achieved in two steps.
First, we give a logical description of the games $\G(a,b)$ in the wooded category~$\C$.
Then, using the tools of \S\ref{sec:coste:interp},
we transfer this description along (the Gabriel-Ulmer dual of) $R$
to obtain a characterisation in the signature of~$\E$.
In this section, we assume the existence of formulae defining path embeddings in~$\C$;
we will see in \S\S\ref{sec:emb}--\ref{sec:wc} how to enforce this condition
under the assumptions of Theorem~\ref{thm:path:main}.

To motivate the logical perspective on the games $\G(a,b)$,
recall that in the classical setting of finite Ehrenfeucht-Fraïssé games,
the number of rounds corresponds to the quantifier rank of formulae in $\Lang_{\omega}$.
This correspondence can be established by means of \emph{$m$-Hintikka formulae},
see e.g.\ \cite[\S2.2]{ef99book}. 
Given a relational structure $M$ and a tuple $\vec a$ of elements from $M$,
the $0$-Hintikka formula $\varphi^0_{\vec a}$ describes the isomorphism type of $\vec a$.
For $m>0$, the $m$-Hintikka formula $\varphi^m_{\vec a}$
encodes the isomorphism types to which $\vec a$ can be extended in $m$ steps,
adding one element at each step.

In the infinite Ehrenfeucht-Fraïssé game,
(an infinitary variant of) Hintikka formulae yield a correspondence
between the quantifier rank of formulae in $\Lang_{\infty}$ and the
\emph{rank of positions} in the game; see e.g.\ \cite[\S3.5]{hodges93book}.
We recall the latter notion in~\S\ref{sec:hintikka:rank},
tailored to the case of the games $\G(a,b)$.
We then proceed in~\S\ref{sec:hintikka:form}
to construct ``Hintikka formulae'' 
for these games.
This leads to Theorem~\ref{thm:hintikka:wooded:bfe},
the first main result of this paper,
which states that given an lfp wooded category $\C$,
if the path embeddings in $\C$
are definable in some signature $\Sigbis$
associated with $\C$,
then
\[
\begin{array}{l l l}
  \text{$a,b \in \C$ equivalent in $\Lang_{\infty}(\Sigbis)$}
& \longimp
& \text{Duplicator wins $\G(a,b)$}.
\end{array}
\]

Finally, in \S\ref{sec:hintikka:finaccadj}
we transfer this result along the interpretation induced
by a finitely accessible wooded adjunction $R \colon \E \to \C$.

%%%%%%%%%%%%%%%%%%%%%%%%%%%%%%%%%%%%%%%%%%%%%%%%%%%%%%%%%%%%%%%%%%%%%%%%%%%%
\subsection{Ranks of positions}
\label{sec:hintikka:rank}
%%%%%%%%%%%%%%%%%%%%%%%%%%%%%%%%%%%%%%%%%%%%%%%%%%%%%%%%%%%%%%%%%%%%%%%%%%%
The following development mirrors~\cite[\S 3.4]{hodges93book}
(see also~\cite[20.2]{kechris95book}).
Let $\C$ be a wooded category.
Given a game $\G(a,b)$ as in 
Definition~\ref{def:games}
we define for each ordinal $\ord$
a set $\Rk(\ord)$ of \emph{positions of rank $\geq \ord$}
in $\G(a,b)$.
The definition is by induction on $\ord$:
\begin{itemize}
\item
$(m,n) \in \Rk(0)$ if $(m,n) \in \W(a,b)$;

\item
$(m,n) \in \Rk(\ord+1)$
if
$(m,n) \in \W(a,b)$
and
for each Spoiler move from position $(m,n)$,
there is a Duplicator move 
arriving at
a position in $\Rk(\ord)$;

\item
$(m,n) \in \Rk(\ord)$, for $\ord$ a limit ordinal,
if ${(m,n) \in \Rk(\ordbis)}$ for all $\ordbis < \ord$.
\end{itemize}

\noindent
Note that 
$\Rk(\ord) \subseteq \W(a,b)$
for all ordinals $\ord$
(this must be enforced 
in the clause for successor ordinals, as there may be
no possible Spoiler moves from position $(m,n)$).
Moreover,

%%%%%%%%%%%%%%%%%%%%%%%%%%%%%%%%%%%%%%%%%%%%%%%%%%%%%%%%%%%%%%%%%%%%%%%%%%%
\begin{lemma}
\label{lem:hintikka:rank:mon}
%%%%%%%%%%%%%%%%%%%%%%%%%%%%%%%%%%%%%%%%%%%%%%%%%%%%%%%%%%%%%%%%%%%%%%%%%%%
$\Rk(\ord) \subseteq \Rk(\ordbis)$ for all $\ordbis < \ord$.
%%%%%%%%%%%%%%%%%%%%%%%%%%%%%%%%%%%%%%%%%%%%%%%%%%%%%%%%%%%%%%%%%%%%%%%%%%%
\end{lemma}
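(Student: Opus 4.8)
The plan is to argue by transfinite induction on $\ord$, establishing the whole family of inclusions at once: I prove, for each $\ord$, the statement $P(\ord)$ that $\Rk(\ord) \subseteq \Rk(\ordbis)$ for every $\ordbis < \ord$. The induction hypothesis I carry along is thus that $P(\alpha)$ holds for all $\alpha < \ord$, i.e.\ that $\Rk(\alpha) \subseteq \Rk(\beta)$ whenever $\beta < \alpha < \ord$. The base case $\ord = 0$ is vacuous, and the limit case is immediate from the defining clause of $\Rk$: if $\ord$ is a limit then $(m,n) \in \Rk(\ord)$ means precisely $(m,n) \in \Rk(\alpha)$ for all $\alpha < \ord$, so in particular $\Rk(\ord) \subseteq \Rk(\ordbis)$ for each $\ordbis < \ord$.

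The real work is the successor case $\ord = \ord' + 1$. Here I first observe that the general inclusion reduces to the single \emph{adjacent} inclusion $\Rk(\ord'+1) \subseteq \Rk(\ord')$: for an arbitrary $\ordbis < \ord'$ the induction hypothesis applied to $\ord' < \ord$ already yields $\Rk(\ord') \subseteq \Rk(\ordbis)$, so composing with the adjacent inclusion gives $\Rk(\ord'+1) \subseteq \Rk(\ord') \subseteq \Rk(\ordbis)$, while the case $\ordbis = \ord'$ is the adjacent inclusion itself. It then remains to prove $\Rk(\ord'+1) \subseteq \Rk(\ord')$, which I handle by a case distinction on $\ord'$. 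If $\ord' = 0$, membership in $\Rk(1)$ forces $(m,n) \in \W(a,b) = \Rk(0)$ by definition. If $\ord' = \gamma + 1$, then from $(m,n) \in \Rk(\gamma+2)$ every Spoiler move admits a Duplicator response landing in $\Rk(\gamma+1)$; applying the induction hypothesis to $\ord' = \gamma+1 < \ord$ gives $\Rk(\gamma+1) \subseteq \Rk(\gamma)$, so those same responses land in $\Rk(\gamma)$, whence $(m,n) \in \Rk(\gamma+1) = \Rk(\ord')$.

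The delicate sub-case, and the step I expect to be the main obstacle, is the adjacent inclusion when $\ord'$ is a limit. Given $(m,n) \in \Rk(\ord'+1)$ I must show $(m,n) \in \Rk(\gamma)$ for every $\gamma < \ord'$, and I plan to do this by unwinding the definition of $\Rk(\gamma)$ according to whether $\gamma$ is $0$, a successor $\delta+1$, or a limit. For $\gamma = 0$ this is just $(m,n) \in \W(a,b)$; for $\gamma = \delta+1$ I use that every Spoiler move has a response in $\Rk(\ord')$ and invoke the induction hypothesis (at $\ord' < \ord$, with $\delta < \ord'$) to get $\Rk(\ord') \subseteq \Rk(\delta)$, so the responses land in $\Rk(\delta)$ and hence $(m,n) \in \Rk(\delta+1)$; the limit sub-sub-case $\gamma$ then follows automatically since $\Rk(\gamma)$ is the intersection of the $\Rk(\eta)$ with $\eta < \gamma < \ord'$, all of which are covered by the previous two sub-cases. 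The care needed here is purely in tracking which ordinal the induction hypothesis is applied at and verifying the relevant strict inequalities; once the bookkeeping is set up, each verification is a direct reading of the inductive clauses defining $\Rk$.
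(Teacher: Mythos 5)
Your proof follows essentially the same route as the paper's: transfinite induction on $\ord$, with the base and limit cases immediate from the definitions, and the work concentrated in the successor case, which is split according to whether the predecessor is $0$, a successor, or a limit. Your preliminary reduction of the successor case to the single adjacent inclusion $\Rk(\ord'+1)\subseteq\Rk(\ord')$ is a clean organizational touch that the paper does not make explicit (the paper instead derives membership in all lower ranks directly inside each sub-case), but the substance of the argument is identical.

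One step, however, is imprecise as stated. In the sub-case where $\ord'$ is a limit, you dispose of the limit sub-sub-case $\gamma<\ord'$ by saying that $\Rk(\gamma)$ is the intersection of the $\Rk(\eta)$ with $\eta<\gamma$, ``all of which are covered by the previous two sub-cases''. Taken literally this is circular: an $\eta<\gamma$ may itself be a limit ordinal, and then it is covered by neither the $0$ nor the successor sub-case. The repair is immediate, and it is exactly the move the paper makes. Having shown $(m,n)\in\Rk(\delta+1)$ for every $\delta<\ord'$, observe that $\delta+1<\ord'$ because $\ord'$ is a limit, so the \emph{outer} induction hypothesis applies at $\delta+1$ and yields $(m,n)\in\Rk(\delta)$. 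This gives $(m,n)\in\Rk(\delta)$ for \emph{all} $\delta<\ord'$, limits included, and the limit clause then gives $(m,n)\in\Rk(\ord')$. (Alternatively, one can run an inner transfinite induction on $\gamma<\ord'$, whose limit case appeals to the inner hypothesis rather than to the two explicit sub-cases.) With that one sentence repaired, your argument is correct and coincides with the paper's.
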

%%%%%%%%%%%%%%%%%%%%%%%%%%%%%%%%%%%%%%%%%%%%%%%%%%%%%%%%%%%%%%%%%%%%%%%%%%%

%%%%%%%%%%%%%%%%%%%%%%%%%%%%%%%%%%%%%%%%%%%%%%%%%%%%%%%%%%%%%%%%%%%%%%%%%%%
\begin{fullproof}
%%%%%%%%%%%%%%%%%%%%%%%%%%%%%%%%%%%%%%%%%%%%%%%%%%%%%%%%%%%%%%%%%%%%%%%%%%%
By induction on $\ord$,
we show that for all position $(m,n)$,
if $(m,n) \in \Rk(\ord)$ then $(m,n) \in \Rk(\ordbis)$
for all $\ordbis < \ord$.
The cases of $0$ and of limit ordinals are trivial.
So we only consider the case of a successor ordinal $\ord+1$.
Let $(m,n) \in \Rk(\ord+1)$.
We reason by cases on $\ord$.
\begin{description}
\item[Case of $\ord = 0$]

Trivial, since $\Rk(0) = \W(a,b)$ by definition.

\item[Case of $\ord = \ordbis + 1$]

Since $(m,n) \in \Rk(\ord+1)$,
after each Spoiler move, there is a Duplicator move reaching
a position $(m',n') \in \Rk(\ord)$.
By induction hypothesis, all such $(m',n')$ are in $\Rk(\ordbis)$.
Hence $(m,n) \in \Rk(\ordbis+1)$ by definition,
and the result follows from the induction hypothesis.

\item[Case of $\ord$ a limit ordinal]

Since $(m,n) \in \Rk(\ord+1)$,
after each Spoiler move, there is a Duplicator move reaching
a position $(m',n') \in \Rk(\ord)$.
By definition,
we have $(m',n') \in \Rk(\ordbis)$
for all $\ordbis < \ord$.
It follows that for all $\ordbis < \ord$,
after all Spoiler move from position $(m,n)$,
there is a Duplicator move reaching
a position $(m',n') \in \Rk(\ordbis)$.
Hence $(m,n) \in \Rk(\ordbis+1)$ for all $\ordbis < \ord$.
Since $\ord$ is a limit ordinal, we have
$\ordbis + 1 < \ord$ whenever $\ordbis < \ord$,
so that the induction hypothesis implies
$(m,n) \in \Rk(\ordbis)$ for all $\ordbis < \ord$.
Hence $(m,n) \in \Rk(\ord)$ by definition.
\qedhere
\end{description}
%%%%%%%%%%%%%%%%%%%%%%%%%%%%%%%%%%%%%%%%%%%%%%%%%%%%%%%%%%%%%%%%%%%%%%%%%%%
\end{fullproof}
%%%%%%%%%%%%%%%%%%%%%%%%%%%%%%%%%%%%%%%%%%%%%%%%%%%%%%%%%%%%%%%%%%%%%%%%%%%

We thus have a decreasing sequence of
sets $(\Rk(\ord))_{\ord} \sle \Path a \times \Path b$,
and this sequence is eventually stationary.
This relies on the following observation:

%%%%%%%%%%%%%%%%%%%%%%%%%%%%%%%%%%%%%%%%%%%%%%%%%%%%%%%%%%%%%%%%%%%%%%%%%%%
\begin{lemma}
%%%%%%%%%%%%%%%%%%%%%%%%%%%%%%%%%%%%%%%%%%%%%%%%%%%%%%%%%%%%%%%%%%%%%%%%%%%
Let $\ord_0$ be an ordinal such that $\Rk(\ord_0) \sle \Rk(\ord_0+1)$.
Then $\Rk(\ord) = \Rk(\ord_0)$ for all $\ord \geq \ord_0$.
\end{lemma}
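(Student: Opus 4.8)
The plan is to recognise that the rank sets are generated by a single monotone one-step operator, so that the hypothesis says precisely that $\Rk(\ord_0)$ is a fixed point of that operator; stabilisation then propagates to all larger ordinals by a routine transfinite induction.

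First I would record the immediate consequence of the hypothesis. By Lemma~\ref{lem:hintikka:rank:mon} we already have $\Rk(\ord_0+1) \subseteq \Rk(\ord_0)$, since $\ord_0 < \ord_0+1$; combined with the standing assumption $\Rk(\ord_0) \subseteq \Rk(\ord_0+1)$ this yields the equality $\Rk(\ord_0) = \Rk(\ord_0+1)$. Next I would isolate the one-step operator implicit in the definition of the rank sets: for a subset $S \subseteq \Path{a} \times \Path{b}$, let $\Phi(S)$ consist of those $(m,n) \in \W(a,b)$ such that for every Spoiler move out of $(m,n)$ there is a Duplicator response landing in $S$. By construction $\Rk(\ord+1) = \Phi(\Rk(\ord))$ for every ordinal $\ord$, while for limit $\ord$ one has $\Rk(\ord) = \bigcap_{\ordbis < \ord}\Rk(\ordbis)$. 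The operator $\Phi$ is plainly monotone, as enlarging $S$ can only make it easier for Duplicator to respond into $S$. The equality from the previous sentence then reads as saying that $\Rk(\ord_0)$ is a fixed point of $\Phi$, that is, $\Phi(\Rk(\ord_0)) = \Rk(\ord_0)$.

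Then I would prove $\Rk(\ord) = \Rk(\ord_0)$ for all $\ord \geq \ord_0$ by transfinite induction on $\ord$. The base case $\ord = \ord_0$ is trivial. For a successor $\ord = \ordbis + 1$ with $\ordbis \geq \ord_0$, the induction hypothesis gives $\Rk(\ordbis) = \Rk(\ord_0)$, whence $\Rk(\ordbis+1) = \Phi(\Rk(\ordbis)) = \Phi(\Rk(\ord_0)) = \Rk(\ord_0)$ by the fixed-point property. For a limit $\ord > \ord_0$ I would argue that $\Rk(\ord) = \bigcap_{\ordbis < \ord}\Rk(\ordbis) = \Rk(\ord_0)$: the inclusion $\Rk(\ord) \subseteq \Rk(\ord_0)$ holds because $\Rk(\ord_0)$ is itself one of the terms of the intersection (as $\ord_0 < \ord$), while for the reverse inclusion it suffices to check $\Rk(\ord_0) \subseteq \Rk(\ordbis)$ for every $\ordbis < \ord$, which is Lemma~\ref{lem:hintikka:rank:mon} when $\ordbis \leq \ord_0$ and the induction hypothesis (yielding equality) when $\ord_0 < \ordbis < \ord$.

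There is no genuine obstacle here; the only step requiring a little care is the limit case, where one must combine the monotonicity lemma below $\ord_0$ with the induction hypothesis above $\ord_0$ to see that every term of the intersection contains $\Rk(\ord_0)$. Once the fixed-point reformulation is in place, the successor case is a one-line computation, and the whole argument is a direct transfinite induction.
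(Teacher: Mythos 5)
Your proof is correct and follows essentially the same route as the paper's: both derive $\Rk(\ord_0) = \Rk(\ord_0+1)$ from the hypothesis combined with Lemma~\ref{lem:hintikka:rank:mon}, and then run a transfinite induction on $\ord \geq \ord_0$, with your monotone one-step operator $\Phi$ being only a cosmetic (if tidy) repackaging of the successor clause that the paper instead unfolds directly in terms of Spoiler and Duplicator moves. If anything, your limit case is slightly more careful than the paper's, since you explicitly split the terms of the intersection into $\ordbis \leq \ord_0$ (handled by the monotonicity lemma) and $\ord_0 < \ordbis < \ord$ (handled by the induction hypothesis), a distinction the paper glosses over.
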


%%%%%%%%%%%%%%%%%%%%%%%%%%%%%%%%%%%%%%%%%%%%%%%%%%%%%%%%%%%%%%%%%%%%%%%%%%%
\begin{fullproof}
%%%%%%%%%%%%%%%%%%%%%%%%%%%%%%%%%%%%%%%%%%%%%%%%%%%%%%%%%%%%%%%%%%%%%%%%%%%
The proof is by induction on $\ord \geq \ord_0$.
The result is trivial if $\ord = \ord_0$.
We consider the two possible cases for $\ord > \ord_0$.
Note that $\Rk(\ord_0) = \Rk(\ord_0 +1)$
by Lemma~\ref{lem:hintikka:rank:mon}.
\begin{description}
\item[Case of $\ord = \ordbis + 1$]

Let $(m,n) \in \Rk(\ord_0)$.
We have to show that $(m,n) \in \Rk(\ordbis+1)$,
i.e.\ that for every Spoiler move from position $(m,n)$,
Duplicator has a move reaching a position $(m',n') \in \Rk(\ordbis)$.
But since $(m,n) \in \Rk(\ord_0 + 1)$,
for every Spoiler move from position $(m,n)$,
Duplicator has a move reaching a position $(m',n') \in \Rk(\ord_0)$.
Hence, by induction hypothesis,
for every Spoiler move from position $(m,n)$,
Duplicator has a move reaching a position $(m',n') \in \Rk(\ordbis)$,
and we are done.

\item[Case of $\ord$ a limit ordinal]

Given $(m,n) \in \Rk(\ord_0)$,
by induction hypothesis we have
$(m,n) \in \Rk(\ordbis)$ for all $\ordbis < \ord$.
Hence $(m,n) \in \Rk(\ord)$ by definition and we are done.
\qedhere
\end{description}
%%%%%%%%%%%%%%%%%%%%%%%%%%%%%%%%%%%%%%%%%%%%%%%%%%%%%%%%%%%%%%%%%%%%%%%%%%%
\end{fullproof}
%%%%%%%%%%%%%%%%%%%%%%%%%%%%%%%%%%%%%%%%%%%%%%%%%%%%%%%%%%%%%%%%%%%%%%%%%%%

The \emph{rank} of $\G(a,b)$,
denoted by $\ord_{a,b}$,
is the least ordinal $\ord$
such that $\Rk(\ord) \sle \Rk(\ord+1)$.
The next result, a variant of~\cite[Lemma 3.4.1]{hodges93book},
implies that positions of rank $\geq \ord_{a,b}$ are winning.

%%%%%%%%%%%%%%%%%%%%%%%%%%%%%%%%%%%%%%%%%%%%%%%%%%%%%%%%%%%%%%%%%%%%%%%%%%%
\begin{lemma}
\label{lem:hintikka:rank:win}
%%%%%%%%%%%%%%%%%%%%%%%%%%%%%%%%%%%%%%%%%%%%%%%%%%%%%%%%%%%%%%%%%%%%%%%%%%%
Let $\ord$ be an ordinal such that $\Rk(\ord) \sle \Rk(\ord+1)$.
For every $(m,n) \in \Rk(\ord)$,
Duplicator has a winning strategy from position $(m,n)$.
\end{lemma}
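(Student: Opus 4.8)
The plan is to recognise the hypothesis $\Rk(\ord)\sle\Rk(\ord+1)$ as saying that $\Rk(\ord)$ is a fixed point of the one-step operator defining the rank sets, and then to use membership in this fixed point as the winning invariant for Duplicator.

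First I would combine the hypothesis with Lemma~\ref{lem:hintikka:rank:mon}, which gives $\Rk(\ord+1)\sle\Rk(\ord)$, to upgrade the inclusion to an equality $\Rk(\ord)=\Rk(\ord+1)$. Writing $S\deq\Rk(\ord)$ and unfolding the successor clause of the definition at $\Rk(\ord+1)=S$, I would extract two properties: (i) $S\sle\W(a,b)$, which in fact holds for every $\Rk(\ord)$; and (ii) for each $(m,n)\in S$ and each Spoiler move from $(m,n)$, Duplicator has a response reaching a position again in $\Rk(\ord)=S$. Property (ii) is precisely the successor clause applied to $(m,n)\in\Rk(\ord+1)$, using $\Rk(\ord)=S$.

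Next I would define Duplicator's strategy to be ``stay in $S$'': from any position of $S$, respond to Spoiler's move by a move landing back in $S$, which is possible by~(ii). This is a positional strategy, obtained by fixing one such response for each pair consisting of a position of $S$ and a Spoiler move. I would then verify that it is winning by induction on the round count $t$, maintaining the invariant that after $t$ rounds the position lies in $S$. The base case $t=0$ uses $(m,n)\in S\sle\W(a,b)$, so the initial position is winning and Duplicator does not lose at the outset; the inductive step uses~(ii) to respond to Spoiler and land in a position of $S\sle\W(a,b)$, so Duplicator wins the round and the invariant is preserved. Hence Duplicator has a strategy winning after $t$ rounds for every $t\geq 0$, which is exactly the winning condition of Definition~\ref{def:games}.

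The argument is routine once the fixed-point reading $\Rk(\ord)=\Rk(\ord+1)$ is in place; since $\G(a,b)$ has length at most $\omega$ (Remark~\ref{rem:path:games:length}), no limit-stage bookkeeping on infinite plays is required. The only point that demands care is matching the informal invariant ``the play stays in $S$'' to the formal definition of a winning strategy, namely that it must be winning after \emph{every} finite number of rounds rather than at a single prescribed length; the induction above handles this uniformly.
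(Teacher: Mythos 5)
Your proof is correct and follows essentially the same route as the paper's: the hypothesis makes $\Rk(\ord)$ an invariant that Duplicator can maintain round by round, and since $\Rk(\ord)\sle\W(a,b)$ this invariant implies the strategy is winning after every finite number of rounds. The only cosmetic difference is that you first upgrade the inclusion to the equality $\Rk(\ord)=\Rk(\ord+1)$ via Lemma~\ref{lem:hintikka:rank:mon}, which is harmless but not needed---the paper argues directly from $(m,n)\in\Rk(\ord)\sle\Rk(\ord+1)$, and your own argument only ever uses that direction of the inclusion.
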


%%%%%%%%%%%%%%%%%%%%%%%%%%%%%%%%%%%%%%%%%%%%%%%%%%%%%%%%%%%%%%%%%%%%%%%%%%%
\begin{fullproof}
%%%%%%%%%%%%%%%%%%%%%%%%%%%%%%%%%%%%%%%%%%%%%%%%%%%%%%%%%%%%%%%%%%%%%%%%%%%
Fix some $(m,n) \in \Rk(\ord)$.
Since $(m,n) \in \Rk(\ord+1)$,
for every Spoiler move from position $(m,n)$,
Duplicator can reach a position $(m',n') \in \Rk(\ord)$.
Iterating this process induces a winning strategy for Duplicator,
since $\Rk(\ord) \sle \W(a,b)$.
%%%%%%%%%%%%%%%%%%%%%%%%%%%%%%%%%%%%%%%%%%%%%%%%%%%%%%%%%%%%%%%%%%%%%%%%%%%
\end{fullproof}
%%%%%%%%%%%%%%%%%%%%%%%%%%%%%%%%%%%%%%%%%%%%%%%%%%%%%%%%%%%%%%%%%%%%%%%%%%%

%%%%%%%%%%%%%%%%%%%%%%%%%%%%%%%%%%%%%%%%%%%%%%%%%%%%%%%%%%%%%%%%%%%%%%%%%%%
\begin{remark}
%%%%%%%%%%%%%%%%%%%%%%%%%%%%%%%%%%%%%%%%%%%%%%%%%%%%%%%%%%%%%%%%%%%%%%%%%%%
In full generality, Lemma~\ref{lem:hintikka:rank:win}
requires the \emph{Axiom of Dependent Choices}%
\footnote{This axiom is of intermediate 
strength between the axiom of \emph{Countable Choice}
and the full \emph{Axiom of Choice}
cf.\ e.g.~\cite[\S 5]{jech06set} and~\cite[\S 20B]{kechris95book}.}
But no choice axiom is required (with respect to $\ZF$)
if $\G(a,b)$ has at most countably many positions.
\end{remark}

As an aside, we note that
for each game $\G(a,b)$
we can exhibit an ordinal as in Lemma~\ref{lem:hintikka:rank:win}.
This can be seen by adapting (the proof of) \cite[Theorem~3.4.2]{hodges93book}:
let $S$ be the set of positions $(m,n)$ of $\G(a,b)$
such that $(m,n) \notin \Rk(\ordbis)$ for some ordinal $\ordbis$.
Since $S$ is a set,
\[
\begin{array}{l l l}
  \ord_0
& \deq
& \bigvee \left\{ 
  \bigwedge \left\{ \text{$\ordbis$ ordinal} \mid (m,n) \notin \Rk(\ordbis) \right\}
  \mid (m,n) \in S \right\}
\end{array}
\]

\noindent
is a well-defined ordinal.

%%%%%%%%%%%%%%%%%%%%%%%%%%%%%%%%%%%%%%%%%%%%%%%%%%%%%%%%%%%%%%%%%%%%%%%%%%%
\begin{lemma}
\label{lem:hintikka:rank:ord}
%%%%%%%%%%%%%%%%%%%%%%%%%%%%%%%%%%%%%%%%%%%%%%%%%%%%%%%%%%%%%%%%%%%%%%%%%%%
With $\ord_0$ as above, we have $\Rk(\ord_0) \sle \Rk(\ord_0+1)$.
%%%%%%%%%%%%%%%%%%%%%%%%%%%%%%%%%%%%%%%%%%%%%%%%%%%%%%%%%%%%%%%%%%%%%%%%%%%
\end{lemma}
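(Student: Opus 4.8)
The plan is to prove the slightly stronger statement that $\Rk(\ord_0)$ is disjoint from the set $S$; the desired inclusion $\Rk(\ord_0) \sle \Rk(\ord_0+1)$ will then be immediate. The engine of the argument is the monotonicity of the rank hierarchy recorded in Lemma~\ref{lem:hintikka:rank:mon}, together with the defining property of $\ord_0$ as a supremum of ``exit ranks'', and it mirrors the stabilisation argument behind~\cite[Theorem~3.4.2]{hodges93book}.

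First I would record, for each $(m,n) \in S$, the ordinal $\rho(m,n) \deq \bigwedge\{\ordbis \mid (m,n) \notin \Rk(\ordbis)\}$ that appears inside the definition of $\ord_0$. Since any nonempty class of ordinals has a least element, this infimum is attained, so that $(m,n) \notin \Rk(\rho(m,n))$; and the definition of $\ord_0$ as a supremum over $S$ gives $\rho(m,n) \le \ord_0$ for every $(m,n) \in S$.

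Next I would establish $\Rk(\ord_0) \cap S = \emptyset$. Suppose $(m,n) \in \Rk(\ord_0) \cap S$. Then $\rho(m,n) \le \ord_0$, so Lemma~\ref{lem:hintikka:rank:mon} yields $\Rk(\ord_0) \sle \Rk(\rho(m,n))$, whence $(m,n) \in \Rk(\rho(m,n))$ --- contradicting the fact that $\rho(m,n)$ is exactly an ordinal at which $(m,n)$ has dropped out of the hierarchy. Finally, since $\Rk(\ord_0)$ avoids $S$, every $(m,n) \in \Rk(\ord_0)$ lies in $\Rk(\ordbis)$ for all ordinals $\ordbis$ --- this is precisely the negation of membership in $S$ --- and in particular $(m,n) \in \Rk(\ord_0+1)$. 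Thus $\Rk(\ord_0) \sle \Rk(\ord_0+1)$.

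I do not expect a genuine obstacle here: the core is a short order-theoretic computation, and the only points requiring care are invoking the correct (decreasing) direction of Lemma~\ref{lem:hintikka:rank:mon} and checking that $\rho(m,n)$ is attained as a least element rather than merely an infimum, so that $(m,n) \notin \Rk(\rho(m,n))$ genuinely holds. The well-definedness of $\ord_0$ itself --- i.e.\ that $S$ is a set, so that the supremum exists --- has already been secured in the discussion preceding the statement, using that $\G(a,b)$ has only a set of positions (cf.\ Remark~\ref{rem:path:games:length}).
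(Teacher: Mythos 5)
Your proof is correct and follows essentially the same route as the paper's own argument: both show that $\Rk(\ord_0)$ is disjoint from $S$ by combining the monotonicity of Lemma~\ref{lem:hintikka:rank:mon} with the fact that the infimum defining each ``exit rank'' is attained and bounded by $\ord_0$, and then conclude membership in $\Rk(\ordbis)$ for all ordinals $\ordbis$. Your explicit notation $\rho(m,n)$ merely makes transparent what the paper's proof leaves implicit.
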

%%%%%%%%%%%%%%%%%%%%%%%%%%%%%%%%%%%%%%%%%%%%%%%%%%%%%%%%%%%%%%%%%%%%%%%%%%%

%%%%%%%%%%%%%%%%%%%%%%%%%%%%%%%%%%%%%%%%%%%%%%%%%%%%%%%%%%%%%%%%%%%%%%%%%%%
\begin{fullproof}
%%%%%%%%%%%%%%%%%%%%%%%%%%%%%%%%%%%%%%%%%%%%%%%%%%%%%%%%%%%%%%%%%%%%%%%%%%%
If $(m,n) \in \Rk(\ord_0)$, then $(m,n) \notin S$
since $(m,n) \in S$ would imply 
$(m,n) \notin \Rk(\ordbis)$ for some $\ordbis \leq \ord_0$,
contradicting Lemma~\ref{lem:hintikka:rank:mon}.
Hence $(m,n) \in \Rk(\ord_0)$ implies that there is no ordinal $\ordbis$
such that $(m,n) \notin \Rk(\ordbis)$;
in particular $(m,n) \in \Rk(\ord_0+1)$.
%%%%%%%%%%%%%%%%%%%%%%%%%%%%%%%%%%%%%%%%%%%%%%%%%%%%%%%%%%%%%%%%%%%%%%%%%%%
\end{fullproof}
%%%%%%%%%%%%%%%%%%%%%%%%%%%%%%%%%%%%%%%%%%%%%%%%%%%%%%%%%%%%%%%%%%%%%%%%%%%

We then arrive at the following characterisation,
based on~\cite[Theorem~3.4.2]{hodges93book}.

%%%%%%%%%%%%%%%%%%%%%%%%%%%%%%%%%%%%%%%%%%%%%%%%%%%%%%%%%%%%%%%%%%%%%%%%%%%
\begin{theorem}
\label{thm:hintikka:rank}
%%%%%%%%%%%%%%%%%%%%%%%%%%%%%%%%%%%%%%%%%%%%%%%%%%%%%%%%%%%%%%%%%%%%%%%%%%%
Let $(m,n)$ be an arbitrary position in the game $\G(a,b)$.
The following statements are equivalent:
\begin{enumerate}[(i)]
\item
\label{item:hintikka:rank:ord}
$(m,n) \in \Rk(\ord)$ for all ordinals $\ord$.

\item
\label{item:hintikka:rank:rank}
$(m,n) \in \Rk(\ord_{a,b})$.

\item
\label{item:hintikka:rank:win}
Duplicator has a winning strategy from position $(m,n)$.
\end{enumerate}
%%%%%%%%%%%%%%%%%%%%%%%%%%%%%%%%%%%%%%%%%%%%%%%%%%%%%%%%%%%%%%%%%%%%%%%%%%%
\end{theorem}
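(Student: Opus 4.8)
The plan is to prove all three equivalences via the cycle of implications (i)$\Rightarrow$(ii)$\Rightarrow$(iii)$\Rightarrow$(i), relying on the monotonicity of $(\Rk(\ord))_\ord$ established in Lemma~\ref{lem:hintikka:rank:mon}, the stationarity lemma above (that $\Rk(\ord_0)\sle\Rk(\ord_0+1)$ forces $\Rk(\ord)=\Rk(\ord_0)$ for all $\ord\geq\ord_0$), and the winning criterion of Lemma~\ref{lem:hintikka:rank:win}. The implication (i)$\Rightarrow$(ii) is immediate, since $\ord_{a,b}$ is merely one particular ordinal among all of them. For (ii)$\Rightarrow$(iii) I would apply Lemma~\ref{lem:hintikka:rank:win} at $\ord=\ord_{a,b}$: by the very definition of $\ord_{a,b}$ as the least ordinal with $\Rk(\ord_{a,b})\sle\Rk(\ord_{a,b}+1)$, the hypothesis of that lemma is met, so any $(m,n)\in\Rk(\ord_{a,b})$ admits a winning Duplicator strategy.

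The substantive step is (iii)$\Rightarrow$(i). Here I would argue by transfinite induction on $\ord$, proving the \emph{uniform} statement that every position from which Duplicator has a winning strategy lies in $\Rk(\ord)$. The base case $\ord=0$ uses that any position carrying a winning strategy is already in $\W(a,b)=\Rk(0)$. For a successor $\ord+1$, given an arbitrary Spoiler move from $(m,n)$, the winning strategy prescribes a Duplicator response reaching some $(m',n')$; the induction hypothesis, applied to $(m',n')$, places it in $\Rk(\ord)$, whence $(m,n)\in\Rk(\ord+1)$. The limit case is exactly the intersection clause in the definition of $\Rk$. This closes the cycle and thereby yields the equivalence of (i), (ii), and (iii).

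The main point requiring care, and the one I expect to be the only genuine obstacle, is the bookkeeping in the successor step: I must invoke the standard observation that a strategy which is winning after all finite $t$ from $(m,n)$ \emph{restricts}, after a single round, to a strategy which is winning after all finite $t$ from each position reachable under it. It is precisely this ``shift'' of a winning strategy to its sub-positions that feeds the induction hypothesis, and it is worth stating explicitly since the definition of a winning strategy is phrased in terms of surviving $t$ rounds for every finite $t$ rather than directly in terms of sub-games.

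To round things off I would note that (i) and (ii) are in fact directly interderivable, giving an alternative to passing through (iii): for $\ord\geq\ord_{a,b}$ the stationarity lemma gives $\Rk(\ord)=\Rk(\ord_{a,b})$, while for $\ord<\ord_{a,b}$ monotonicity (Lemma~\ref{lem:hintikka:rank:mon}) gives $\Rk(\ord_{a,b})\sle\Rk(\ord)$; hence membership in $\Rk(\ord_{a,b})$ propagates to all ranks at once. Everything beyond the strategy-restriction argument is then routine transfinite induction layered on top of the lemmas already proved.
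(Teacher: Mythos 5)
Your proposal is correct and follows essentially the same route as the paper: (i)$\Rightarrow$(ii) is immediate once $\ord_{a,b}$ is known to exist (Lemma~\ref{lem:hintikka:rank:ord}), (ii)$\Rightarrow$(iii) is Lemma~\ref{lem:hintikka:rank:win}, and (iii)$\Rightarrow$(i) is the same uniform transfinite induction over all positions admitting a winning strategy, your explicit strategy-restriction remark being exactly what the paper uses implicitly. The only bookkeeping item to add in your successor step is that $(m,n)\in\W(a,b)$, which the definition of $\Rk(\ordbis+1)$ also requires; this follows either from your base-case observation or, as the paper does, by applying the induction hypothesis to $(m,n)$ itself.
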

%%%%%%%%%%%%%%%%%%%%%%%%%%%%%%%%%%%%%%%%%%%%%%%%%%%%%%%%%%%%%%%%%%%%%%%%%%%

%%%%%%%%%%%%%%%%%%%%%%%%%%%%%%%%%%%%%%%%%%%%%%%%%%%%%%%%%%%%%%%%%%%%%%%%%%%
\begin{proof}
%%%%%%%%%%%%%%%%%%%%%%%%%%%%%%%%%%%%%%%%%%%%%%%%%%%%%%%%%%%%%%%%%%%%%%%%%%%
The implication
\(
  \text{\ref{item:hintikka:rank:ord}}
  \imp
  \text{\ref{item:hintikka:rank:rank}}
\)
follows from Lemma~\ref{lem:hintikka:rank:ord}
and
\(
  \text{\ref{item:hintikka:rank:rank}}
  \imp
  \text{\ref{item:hintikka:rank:win}}
\)
is given by Lemma~\ref{lem:hintikka:rank:win}.
It remains to prove
\(
  \text{\ref{item:hintikka:rank:win}}
  \imp
  \text{\ref{item:hintikka:rank:ord}}
\).
We show by induction on $\ord$ that 
for all position $(m,n)$ in $\G(a,b)$,
if Duplicator
has a winning strategy from position $(m,n)$, then $(m,n) \in \Rk(\ord)$.
\begin{description}
\item[Case of $\ord = 0$]

If Duplicator has a winning strategy from position $(m,n)$
then by definition we have $(m,n) \in \W(a,b)$.

\item[Case of $\ord = \ordbis +1$]

Assume that Duplicator has winning strategy from position $(m,n)$.
Then by induction hypothesis we have $(m,n) \in \Rk(\ordbis) \sle \W(a,b)$.
Moreover,
for every Spoiler move, this strategy leads to a winning position $(m',n')$,
and by induction hypothesis we have $(m',n') \in \Rk(\ordbis)$.
Hence $(m,n) \in \Rk(\ord)$ by definition.

\item[Case of $\ord$ limit]

By induction hypothesis.
\qedhere
\end{description}
%%%%%%%%%%%%%%%%%%%%%%%%%%%%%%%%%%%%%%%%%%%%%%%%%%%%%%%%%%%%%%%%%%%%%%%%%%%
\end{proof}
%%%%%%%%%%%%%%%%%%%%%%%%%%%%%%%%%%%%%%%%%%%%%%%%%%%%%%%%%%%%%%%%%%%%%%%%%%%

%%%%%%%%%%%%%%%%%%%%%%%%%%%%%%%%%%%%%%%%%%%%%%%%%%%%%%%%%%%%%%%%%%%%%%%%%%%
\begin{remark}
\label{rem:hintikka:rank}
%%%%%%%%%%%%%%%%%%%%%%%%%%%%%%%%%%%%%%%%%%%%%%%%%%%%%%%%%%%%%%%%%%%%%%%%%%%
Theorem~\ref{thm:hintikka:rank}
deserves a few comments:
\begin{enumerate}[(1)]
\item
\label{item:hintikka:rank:fin}
From a position of rank $\geq k$, with $k \in \NN$,
Duplicator has a strategy which is winning for $k$ rounds in $\G(a,b)$.
Hence,
the rank of $\G(a,b)$ is finite if the plays of
$\G(a,b)$ have bounded length
(cf.\ Remark~\ref{rem:path:games:length}).

For instance, 
when $k < \omega$,
the games played in the arboreal category $\cat R^E_k(\sig)$
(Examples~\ref{ex:path:R^E} and~\ref{ex:path:bisim-FOk-equivalence})
have finite rank because 
all plays have length at most~$k$.
Just observe that each path has cardinality at most $k$.

\item
From a position of rank $\omega$,
for all $k \in \NN$ Duplicator has a strategy
that is winning for $k$ rounds.
But they need not have a single strategy that
is winning for $k$ rounds for~all~$k \in \NN$.
\end{enumerate}
\end{remark}
%%%%%%%%%%%%%%%%%%%%%%%%%%%%%%%%%%%%%%%%%%%%%%%%%%%%%%%%%%%%%%%%%%%%%%%%%%%

%%%%%%%%%%%%%%%%%%%%%%%%%%%%%%%%%%%%%%%%%%%%%%%%%%%%%%%%%%%%%%%%%%%%%%%%%%%
\subsection{Hintikka formulae for wooded categories}
\label{sec:hintikka:form}
%%%%%%%%%%%%%%%%%%%%%%%%%%%%%%%%%%%%%%%%%%%%%%%%%%%%%%%%%%%%%%%%%%%%%%%%%%%
Let $\C$ be an lfp wooded category.
We are going to devise sufficient conditions on $\C$ so as to obtain
Hintikka formulae for the games $\G(a,b)$ for any $a,b \in \C$.
The main technical ingredients are the ordinal ranks discussed
in~\S\ref{sec:hintikka:rank}
and the material of~\S\ref{sec:coste:fp} on finitely presentable models
of cartesian theories.
Ultimately, we aim to prove Theorem~\ref{thm:hintikka:wooded:bfe},
which is of the form
\[
\begin{array}{l l l}
  \text{$a,b \in \C$ equivalent in $\Lang_{\infty}$}
& \longimp
& \text{Duplicator wins $\G(a,b)$}.
\end{array}
\]

As in~\S\ref{sec:path:results}, we assume
that the paths in $\C$ are finitely presentable.
However, in contrast to~\S\ref{sec:path:results},
there is (yet) no extensional category $\E$,
and we cannot enforce the assumption of ``detection of path embeddings''
of Theorem~\ref{thm:path:main} (see Definition~\ref{def:path:detection-path-emb}).
Instead, we will assume that the path embeddings in $\C$ are ``definable''
in a suitable sense.
In a similar way as in~\S\ref{sec:path:results},
we take as parameter the signature with respect to which path
embeddings are assumed to be definable.

%%%%%%%%%%%%%%%%%%%%%%%%%%%%%%%%%%%%%%%%%%%%%%%%%%%%%%%%%%%%%%%%%%%%%%%%%%%
\begin{definition}
\label{def:hintikka:def-path-emb}
%%%%%%%%%%%%%%%%%%%%%%%%%%%%%%%%%%%%%%%%%%%%%%%%%%%%%%%%%%%%%%%%%%%%%%%%%%%
Let $\Sigbis$ be a signature.
We say that 
the path embeddings of a wooded category
$\C$ are \emph{definable in $\Sigbis$} if the following conditions hold.
\begin{enumerate}[(i)]
\item
There is a cartesian theory $\theorybis$ in $\Sigbis$
such that $\C = \Mod(\theorybis)$.%
\footnote{Recall from Remark~\ref{rem:coste:fp:induced-lfp:fp}
that the choice of $\theorybis$ is irrelevant.}

\item
\label{item:hintikka:def-path-emb:path-fp}
Every path in $\C$ is finitely presentable (in $\C$).

\item
\label{item:hintikka:def-path-emb:formulae}
For each path $\FG{\vec y \mid \psi}$ of $\C$,
there is a (possibly infinite)
formula $\FEmb_{\C}\FG{\vec y | \psi}(\vec y)$ in~$\Sigbis$
such that for all $a \in \C$
and all $h \colon \FG{\vec y \mid \psi} \to a$
taking $\vec y$ to $\vec c \in \I{\vec y \mid \psi}_a$,
\[
\begin{array}{l !{\qquad\text{if, and only if,}\qquad} l}
  \text{$h$ is an embedding}
& a \models \FEmb_{\C}\FG{\vec y \mid \psi}(\vec c).
\end{array}
\]
\end{enumerate}
%%%%%%%%%%%%%%%%%%%%%%%%%%%%%%%%%%%%%%%%%%%%%%%%%%%%%%%%%%%%%%%%%%%%%%%%%%%
\end{definition}
%%%%%%%%%%%%%%%%%%%%%%%%%%%%%%%%%%%%%%%%%%%%%%%%%%%%%%%%%%%%%%%%%%%%%%%%%%%

We shall see in~\S\ref{sec:wc} that if a finitely accessible wooded
adjunction $R \colon \E \to \C$ detects path embeddings,
then $\C$ has definable path embeddings.
E.g., the finitely accessible arboreal adjunction
$\Ladj_k \colon {\cat R^E_k(\sig)} \inadj {\Struct(\sig)} \cocolon R_k$
in Example~\ref{ex:path:EFk-fin-acc-wooded-adj} detects path embeddings,
and similarly for the arboreal adjunctions
$R^P_k \colon \Struct(\sig) \to \cat R^P_k(\sig)$ and
$R^M_k\colon \Struct(\sig) \to \cat R^M_k(\sig)$ in Example~\ref{ex:path:misc:games}.

Fix a signature~$\Sigbis$ and a wooded category $\C$
whose path embeddings are definable in~$\Sigbis$.
We are going to devise Hintikka formulae for the games $\G(a,b)$, for any $a,b \in \C$.
Let $\cat U$ be the syntactic category of $\theorybis$.
By condition~\ref{item:hintikka:def-path-emb:path-fp}
in Definition~\ref{def:hintikka:def-path-emb},
for each path $Q$ of $\C$, there is some
$\OI{\vec z_Q \mid \varpi_{Q}} \in \cat U$ such that
$Q \cong \FG{\vec z_Q \mid \varpi_Q}$.
We fix a choice of such a
$\OI{\vec z_Q \mid \varpi_{Q}}$ for each $Q$. 
This allows us to extend the conventions of 
condition~\ref{item:hintikka:def-path-emb:formulae} in
Definition~\ref{def:hintikka:def-path-emb} by writing
\(
  \FEmb_{\C}[Q](\vec z_Q)
  =
  \FEmb_{\C}\FG{\vec z_Q \mid \varpi_Q}(\vec z_Q)
\).
Whenever it is convenient, we drop the subscript $Q$ in $\vec z_Q$.

%%%%%%%%%%%%%%%%%%%%%%%%%%%%%%%%%%%%%%%%%%%%%%%%%%%%%%%%%%%%%%%%%%%%%%%%%%%
\begin{figure}[!t]
%%%%%%%%%%%%%%%%%%%%%%%%%%%%%%%%%%%%%%%%%%%%%%%%%%%%%%%%%%%%%%%%%%%%%%%%%%%
\[
\begin{array}{r c l}
  \Theta[m,Q,0](\vec z)
& \deq
& \FIso[P,Q]

\\\\
  \Theta[m,Q,\ord](\vec z)
& \deq
& \bigwedge_{\ordbis < \ord} \Theta[m,Q,\ordbis](\vec z)
  \hfill\text{($\ord$ limit ordinal)}

\\\\
  \Theta[m,Q,\ord+1](\vec z)
& \deq
& \FIso[P,Q]
  ~\land~
  \FForth[m,Q,\ord+1](\vec z)
  ~\land~
  \FBack[m,Q,\ord+1](\vec z)

\\\\
  \FForth[m,Q,\ord+1](\vec z)
& \deq
\\
\multicolumn{3}{r}{
  \bigwedge_{\substack{m' \colon P' \emb a \\ m \prec m'}}
  \bigvee_{\substack{\ell \colon Q \emb Q' \\ \ell \prec \id_{Q'}}}
  (\exists \vec z')
  \bigg(
  \varpi_{Q'}(\vec z')
  ~\land~
  \vartheta_\ell(\vec z',\vec z)
  ~\land~
  \FEmb[Q'](\vec z')
  ~\land~
  \Theta[m',Q',\ord](\vec z')
  \bigg)
}

\\\\
  \FBack[m,Q,\ord+1](\vec z)
& \deq
\\
\multicolumn{3}{r}{
  \bigwedge_{\substack{\ell \colon Q \emb Q' \\ \ell \prec \id_{Q'}}}
  (\forall \vec z')
  \bigg(
  \big(
    \varpi_{Q'}(\vec z')
    ~\land~
    \vartheta_\ell(\vec z',\vec z)
    ~\land~
    \FEmb[Q'](\vec z')
  \big)
  ~\limp~
  \bigvee_{\substack{m' \colon P' \emb a \\ m \prec m'}}
  \Theta[m',Q',\ord](\vec z')
  \bigg)
}

\end{array}
\]
\caption{Hintikka formulae 
for the wooded category $\C$ (\S\ref{sec:hintikka:form}).
\label{fig:hintikka:arb}}
%%%%%%%%%%%%%%%%%%%%%%%%%%%%%%%%%%%%%%%%%%%%%%%%%%%%%%%%%%%%%%%%%%%%%%%%%%%
\end{figure}
%%%%%%%%%%%%%%%%%%%%%%%%%%%%%%%%%%%%%%%%%%%%%%%%%%%%%%%%%%%%%%%%%%%%%%%%%%%

Fix an object $a \in \C$.
For each ordinal $\ord$,
each
path embedding $m \colon P \emb a$,
and each path $Q$,
we define a formula
\[
\begin{array}{l l l}
  \Theta[m,Q,\ord](\vec z)
& =
& \Theta_{\C}[a,m,Q,\ord](\vec z_Q)
\end{array}
\]

\noindent
in the signature $\Sigbis$.
The definition is by induction on $\ord$
and is given in Figure~\ref{fig:hintikka:arb},
where
we use the following notations.
First, given an embedding between paths $\ell\colon Q\emb Q'$, we denote by 
\(
  \MI{\vec z',\vec z \mid \vartheta_\ell}
  \colon
  \OI{\vec z' \mid \varpi_{Q'}}
  \to
  \OI{\vec z \mid \varpi_{Q}}
\)
the $\cat U$-morphism corresponding to $\ell$ via the Yoneda embedding
$\yoneda \colon \cat U^\op \to \lex\funct{\cat U,\Set}$.
Moreover, $\FIso[P,Q]$ is the formula
$\True$ if $P \cong Q$,
and $\False$ otherwise.
Finally, for notational simplicity,
we have dropped references to~$\C$ and~$a$, 
since they remain fixed throughout the induction on $\ord$.
But note that the definitions do depend on $a$:
see the clauses for $\FForth[-,-,-]$ and $\FBack[-,-,-]$.

The next result says that the formulae
$\Theta_{\C}[a,-,-,\ord](\vec z)$ represent the positions of rank
at least $\ord$ in the games $\G(a,b)$, where $b$ ranges over the objects of $\C$.
This is proved by induction on $\ord$,
along the same lines as, e.g., \cite[Theorem~3.5.1]{hodges93book}.

%%%%%%%%%%%%%%%%%%%%%%%%%%%%%%%%%%%%%%%%%%%%%%%%%%%%%%%%%%%%%%%%%%%%%%%%%%%
\begin{proposition}
\label{prop:hintikka:wooded}
%%%%%%%%%%%%%%%%%%%%%%%%%%%%%%%%%%%%%%%%%%%%%%%%%%%%%%%%%%%%%%%%%%%%%%%%%%%
Assume that the path embeddings of $\C$ are definable in $\Sigbis$.
Let $a, b \in \C$ and $m \colon P \emb a$.
Further, let $n \colon Q \emb b$
be induced by $\vec c \in \I{\vec z \mid \varpi_{Q}}_b$.
The following statements are equivalent for all ordinals $\ord$:
\begin{enumerate}[(i)]
\item
$b \models \Theta_{\C}[a,m,Q,\ord](\vec c)$.

\item
$(m,n)$ is a position of rank $\geq \ord$
in the game $\G(a,b)$.
\end{enumerate}
%%%%%%%%%%%%%%%%%%%%%%%%%%%%%%%%%%%%%%%%%%%%%%%%%%%%%%%%%%%%%%%%%%%%%%%%%%%
\end{proposition}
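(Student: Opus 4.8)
The plan is to argue by transfinite induction on the ordinal $\ord$, matching each clause of the Hintikka formula in Figure~\ref{fig:hintikka:arb} against the corresponding clause in the definition of $\Rk(\ord)$ from \S\ref{sec:hintikka:rank}. Throughout, $a,b\in\C$ are fixed, $m\colon P\emb a$ is a path embedding, and $n\colon Q\emb b$ is induced by $\vec c\in\I{\vec z\mid\varpi_Q}_b$ (via the bijection of Lemma~\ref{lem:coste:lp:hom}). In the base case, $\Theta[m,Q,0](\vec c)=\FIso[P,Q]$ holds in $b$ precisely when $P\cong Q$, i.e.\ $\dom(m)\cong\dom(n)$, which is by definition $(m,n)\in\W(a,b)=\Rk(0)$. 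The limit case is immediate: $\Theta[m,Q,\ord]=\bigwedge_{\ordbis<\ord}\Theta[m,Q,\ordbis]$ mirrors the clause ``$(m,n)\in\Rk(\ord)$ iff $(m,n)\in\Rk(\ordbis)$ for all $\ordbis<\ord$'', so the induction hypothesis applies termwise.

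The heart of the argument is the successor case, which I would reduce to a dictionary between the syntactic data appearing in $\FForth$ and $\FBack$ and the one-step extensions in the tree $\Path b$. Concretely, for the fixed representative $\OI{\vec z'\mid\varpi_{Q'}}$ of each path $Q'$, I would show that the pairs $(\ell,\vec c')$ consisting of an embedding $\ell\colon Q\emb Q'$ with $\ell\prec\id_{Q'}$ and a tuple $\vec c'\in\I{\vec z'\mid\varpi_{Q'}}_b$ satisfying $b\models\vartheta_\ell(\vec c',\vec c)\wedge\FEmb_{\C}[Q'](\vec c')$ are in bijection with the coverings $n'\succ n$ in $\Path b$. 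Indeed, by Definition~\ref{def:hintikka:def-path-emb} the conjunct $\FEmb_{\C}[Q'](\vec c')$ holds exactly when the homomorphism $n'\colon Q'\to b$ induced by $\vec c'$ is an embedding, and by Lemma~\ref{lem:coste:fp:triangle} the formula $\vartheta_\ell(\vec c',\vec c)$ holds exactly when $n'\circ\ell=n$, i.e.\ $n\leq n'$ in $\Path b$. Granting this dictionary and the induction hypothesis applied to $\Theta[m',Q',\ord]$ at the positions $(m',n')$, the clause $\FForth[m,Q,\ord+1]$ asserts that every Spoiler move $m'\succ m$ on the $a$-side admits a Duplicator response $n'\succ n$ with $(m',n')\in\Rk(\ord)$, while $\FBack[m,Q,\ord+1]$ asserts the dual for Spoiler moves $n''\succ n$ on the $b$-side. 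Conjoined with $\FIso[P,Q]$ (encoding $(m,n)\in\W(a,b)$), this is precisely the definition of $(m,n)\in\Rk(\ord+1)$.

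The step I expect to require the most care is the covering part of the dictionary: that $\ell\prec\id_{Q'}$ in $\Path{Q'}$ corresponds exactly to $n\prec n'$ in $\Path b$. For this I would invoke Lemma~\ref{l:forest-of-paths-wooded}: since $n'$ is a path embedding, the map $\Path{n'}\colon\Path{Q'}\to\Path b$, $[k]\mapsto[n'\circ k]$, is a forest morphism restricting to a height-preserving bijection $\down\id_{Q'}\to\down n'$; as it sends $\id_{Q'}\mapsto n'$ and $\ell\mapsto n$ and both preserves and (on these finite down-chains) reflects the covering relation, $\ell\prec\id_{Q'}$ is equivalent to $n\prec n'$. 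Two bookkeeping points remain to be verified. First, the factoring embedding $\ell$ with $n=n'\circ\ell$ is unique, so the bijection is well defined; this is the rigidity of paths from Lemma~\ref{l:forest-of-paths-wooded}(\ref{i:at-most-one-emb}). Second, all the conjunctions and disjunctions in Figure~\ref{fig:hintikka:arb} must be genuinely set-indexed, so that each $\Theta[m,Q,\ord]$ is a bona fide $\Lang_\infty(\Sigbis)$ formula; this holds because $\Path a$ is a set ($\C$ is well powered) and $\pth\C$ is essentially small by Lemma~\ref{l:forest-of-paths-wooded}(\ref{i:C-p-forest}), so the representatives $Q'$ and the embeddings $\ell$ range over a set.
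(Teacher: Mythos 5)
Your proof is correct and follows essentially the same route as the paper's: transfinite induction on $\ord$ with identical base and limit cases, and a successor case that matches the witnesses $(\ell,\vec c')$ of $\FForth$ and $\FBack$ against moves $n'\succ n$, using Definition~\ref{def:hintikka:def-path-emb} for the $\FEmb$ conjunct and Lemma~\ref{lem:coste:fp:triangle} for the $\vartheta_\ell$ conjunct, exactly as in the paper's Claim. The only divergence is the sub-step $\ell\prec\id_{Q'}\iff n\prec n'$: the paper argues directly by mono-cancellation (if $n=n''\comp k'$ and $n''=n'\comp k$ then $k\comp k'=\ell$, so one of $k,k'$ is iso), whereas you pass through the order-isomorphism $\Path{Q'}\cong\mathord{\down} n'$; this is fine, though note that the statement of Lemma~\ref{l:forest-of-paths-wooded} does not literally assert that $\Path{n'}$ is a height-preserving bijection onto $\down n'$ (that fact lives inside its proof) --- it is more accurate to derive it from Lemma~\ref{lem:prelim:fact:emb-order-embedding} together with the factorisation property of embeddings (Lemma~\ref{lem:prelim:fact:base}), which is a citation nuance rather than a gap.
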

%%%%%%%%%%%%%%%%%%%%%%%%%%%%%%%%%%%%%%%%%%%%%%%%%%%%%%%%%%%%%%%%%%%%%%%%%%%

%%%%%%%%%%%%%%%%%%%%%%%%%%%%%%%%%%%%%%%%%%%%%%%%%%%%%%%%%%%%%%%%%%%%%%%%%%%
\begin{proof}
%%%%%%%%%%%%%%%%%%%%%%%%%%%%%%%%%%%%%%%%%%%%%%%%%%%%%%%%%%%%%%%%%%%%%%%%%%%
We proceed by induction on~$\ord$.
The case of $\ord$ a limit ordinal follows trivially from the induction
hypothesis.
In the case of $\ord = 0$, recall from Definition~\ref{def:games}
that $(m,n) \in \W(a,b)$ precisely when $\dom(m) \cong \dom(n)$.
This is equivalent to $b \models \FIso[P,Q]$.

We now consider the case of a successor ordinal, say $\ord+1$.
We again have $(m,n) \in \W(a,b)$ if and only if $b \models \FIso[P,Q]$.
We then reason by cases on the next Spoiler move.

%%%%%%%%%%%%%%%%%%%%%%%%%%%%%%%%%%%%%%%%%%%%%%%%%%%%%%%%%%%%%%%%%%%%%%%%%%%
\begin{claim}
\label{claim:hintikka:wooded}
%%%%%%%%%%%%%%%%%%%%%%%%%%%%%%%%%%%%%%%%%%%%%%%%%%%%%%%%%%%%%%%%%%%%%%%%%%%
The following are equivalent:
\begin{itemize}
\item
$b \models \FForth[m,Q,\ord+1](\vec c)$.

\item
For all $m' \succ m$, there is some $n' \succ n$
such that $(m',n') \in \Rk(\ord)$.
\end{itemize}
%%%%%%%%%%%%%%%%%%%%%%%%%%%%%%%%%%%%%%%%%%%%%%%%%%%%%%%%%%%%%%%%%%%%%%%%%%%
\end{claim}
%%%%%%%%%%%%%%%%%%%%%%%%%%%%%%%%%%%%%%%%%%%%%%%%%%%%%%%%%%%%%%%%%%%%%%%%%%%

%%%%%%%%%%%%%%%%%%%%%%%%%%%%%%%%%%%%%%%%%%%%%%%%%%%%%%%%%%%%%%%%%%%%%%%%%%%
\begin{proof}[Proof of Claim~\ref{claim:hintikka:wooded}]
%%%%%%%%%%%%%%%%%%%%%%%%%%%%%%%%%%%%%%%%%%%%%%%%%%%%%%%%%%%%%%%%%%%%%%%%%%%
Assume that
$b \models \FForth[m,Q,\ord+1](\vec c)$
and consider some $m' \succ m$.
Then there exist some $\ell \colon Q \emb Q'$
and $\vec c' \in b$
such that $\ell \prec \id_{Q'}$ and
\[
\begin{array}{l l l}
  b
& \models
& \varpi_{Q'}(\vec c')
  ~\land~
  \vartheta_\ell(\vec c',\vec c)
  ~\land~
  \FEmb[Q'](\vec c')
  ~\land~
  \Theta[m',Q',\ord](\vec c').
\end{array}
\]

Since the tuple $\vec c'$ belongs to $\I{\vec z' \mid \varpi_{Q'}}_b$,
it induces a (unique) morphism $n' \colon Q' \to b$, 
which is an embedding because
$b \models \FEmb[Q'](\vec c')$.
Moreover, since
$b \models \Theta[m',Q',\ord](\vec c')$,
the induction hypothesis implies that $(m',n') \in \Rk(\ord)$.

It remains to show that $n' \succ n$.
Lemma~\ref{lem:coste:fp:triangle} 
yields $n = n' \comp \ell$, that is $n \leq n'$.
If $n = n'$ in $\Path{b}$ then $n = n' \comp i$ for some iso $i$.
Since $n'$ is a mono, we get that $\ell=i$ is also an iso,
contradicting that $\ell \prec \id_{Q'}$.
Hence $n < n'$.
Further, if $n''$ is an element of $\Path{b}$ satisfying $n \leq n''$ and $n'' \leq n'$,
then $n = n'' \comp k'$ and $n'' = n' \comp k$
for some embeddings~$k,k'$.
Again since $n'$ is monic, we get $k \comp k' = \ell$.
But as $\ell \prec \id_{Q'}$, either $k$ or $k'$ is an iso.
So we cannot have $n < n'' < n'$.

The converse implication is proved in a similar fashion.
%%%%%%%%%%%%%%%%%%%%%%%%%%%%%%%%%%%%%%%%%%%%%%%%%%%%%%%%%%%%%%%%%%%%%%%%%%%
\end{proof}
%%%%%%%%%%%%%%%%%%%%%%%%%%%%%%%%%%%%%%%%%%%%%%%%%%%%%%%%%%%%%%%%%%%%%%%%%%%

Similarly, we see that
$b \models \FBack[m,Q,\ord+1](\vec c)$
if, and only if,
\begin{itemize}
\item
for all $n' \succ n$, there is some $m' \succ m$
such that $(m',n') \in \Rk(\ord)$.
\end{itemize}
The result then follows from the definition of $\Rk(\ord+1)$.
%%%%%%%%%%%%%%%%%%%%%%%%%%%%%%%%%%%%%%%%%%%%%%%%%%%%%%%%%%%%%%%%%%%%%%%%%%%
\end{proof}
%%%%%%%%%%%%%%%%%%%%%%%%%%%%%%%%%%%%%%%%%%%%%%%%%%%%%%%%%%%%%%%%%%%%%%%%%%%

We then get that 
equivalence in $\Lang_\infty$ implies back-and-forth equivalence $\bisim$.
This is our first main result.
Recall from Definition~\ref{def:path:bfe}
that $a \bisim b$ when Duplicator has a winning strategy in $\G(a,b)$
from the initial position $(\bot_a,\bot_b)$,
where $\bot_a \colon \zero_a \emb a$ is the root of $\Path{a}$
and similarly for $\bot_b$.

%%%%%%%%%%%%%%%%%%%%%%%%%%%%%%%%%%%%%%%%%%%%%%%%%%%%%%%%%%%%%%%%%%%%%%%%%%%
\begin{theorem}
\label{thm:hintikka:wooded:bfe}
%%%%%%%%%%%%%%%%%%%%%%%%%%%%%%%%%%%%%%%%%%%%%%%%%%%%%%%%%%%%%%%%%%%%%%%%%%%
Assume that the path embeddings of $\C$ are definable in $\Sigbis$.
Then 
\[
\begin{array}{l l l}
  \text{$a,b \in \C$ equivalent in $\Lang_{\infty}(\Sigbis)$}
& \longimp
& a \bisim b.
\end{array}
\]
%%%%%%%%%%%%%%%%%%%%%%%%%%%%%%%%%%%%%%%%%%%%%%%%%%%%%%%%%%%%%%%%%%%%%%%%%%%
\end{theorem}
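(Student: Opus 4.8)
The plan is to turn the assertion ``the initial position of $\G(a,a)$ has rank at least $\ord$'' into a sentence of $\Lang_\infty(\Sigbis)$ satisfied by $a$, using the Hintikka formulae of \S\ref{sec:hintikka:form}, and then to transport this sentence to $b$ along $\Lang_\infty(\Sigbis)$-equivalence. First I would recall that the rank $\ord_{a,b}$ of $\G(a,b)$ exists (Lemma~\ref{lem:hintikka:rank:ord}), so that by Theorem~\ref{thm:hintikka:rank} it suffices to establish $(\bot_a,\bot_b)\in\Rk(\ord_{a,b})$ in order to conclude $a\bisim b$. Write the root path of $\Path a$ as $\zero_a\cong\FG{\vec z\mid\varpi_{\zero_a}}$ and fix a tuple $\vec r_a\in\I{\vec z\mid\varpi_{\zero_a}}_a$ inducing $\bot_a$. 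For each ordinal $\ord$ consider the $\Sigbis$-sentence
\[
\sigma_\ord \deq (\exists \vec z)\big(\varpi_{\zero_a}(\vec z)\,\land\,\FEmb_{\C}[\zero_a](\vec z)\,\land\,\Theta_{\C}[a,\bot_a,\zero_a,\ord](\vec z)\big).
\]
Because the conjunctions, disjunctions and nested quantifiers appearing in $\Theta_{\C}$ and in the formulae $\FEmb_{\C}$ are set-indexed (paths are finitely presentable and $\pth\C$ is essentially small by Lemma~\ref{l:forest-of-paths-wooded}), and every subformula has finitely many free variables, each $\sigma_\ord$ is a genuine sentence of $\Lang_\infty(\Sigbis)$.

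Next I would verify $a\models\sigma_\ord$ for every $\ord$, with witness $\vec r_a$: the first conjunct holds by the choice of $\vec r_a$, the second because $\bot_a$ is an embedding (definability of path embeddings), and the third by Proposition~\ref{prop:hintikka:wooded} applied with $b\deq a$ and $m\deq n\deq\bot_a$, since the copycat strategy wins $\G(a,a)$ from the diagonal position $(\bot_a,\bot_a)$, whence Theorem~\ref{thm:hintikka:rank} gives it rank $\geq\ord$ for all $\ord$. By $\Lang_\infty(\Sigbis)$-equivalence we obtain $b\models\sigma_{\ord_{a,b}}$. Fixing a witnessing tuple $\vec c\in b$, the conjunct $\FEmb_{\C}[\zero_a](\vec c)$ ensures that the morphism $n\colon\zero_a\to b$ induced by $\vec c$ is a path embedding, so Proposition~\ref{prop:hintikka:wooded} yields $(\bot_a,n)\in\Rk(\ord_{a,b})$.

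The one delicate point is identifying $n$ with the root $\bot_b$. Here I would argue that, as $\zero$ is initial, the composite $n\comp q$ of $n$ with the canonical quotient $q\colon\zero\epi\zero_a$ arising from the factorisation defining $\bot_a$ is the unique morphism $\zero\to b$; since $q\in\Q$ and $n\in\M$, this exhibits $(q,n)$ as a $(\Q,\M)$-factorisation of $\zero\to b$, so uniqueness of factorisations (Lemma~\ref{lem:path:base}) forces $\zero_a\cong\zero_b$ and $n$ to represent the root, i.e.\ $(\bot_a,n)=(\bot_a,\bot_b)$ in $\Path a\times\Path b$. Consequently $(\bot_a,\bot_b)\in\Rk(\ord_{a,b})$, and Theorem~\ref{thm:hintikka:rank} furnishes a winning strategy for Duplicator from the initial position, that is $a\bisim b$.

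I expect this root-identification to be the main obstacle, as it is the only step not immediately reducible to the already-established correspondence between Hintikka formulae and ordinal ranks (Proposition~\ref{prop:hintikka:wooded} and Theorem~\ref{thm:hintikka:rank}); everything else is a direct combination of those results with the transport of a single $\Lang_\infty$-sentence across the assumed equivalence. A secondary point to check carefully is that $\sigma_{\ord}$ genuinely lies in $\Lang_\infty(\Sigbis)=\Lang_{\infty,\omega}$, i.e.\ that no subformula accumulates infinitely many free variables along the ordinal recursion, which holds since each $\vec z_Q$ is finite.
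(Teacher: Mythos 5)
Your proof is correct, and its core is the same as the paper's: evaluate the Hintikka formulae of Proposition~\ref{prop:hintikka:wooded} on the diagonal game $\G(a,a)$ (where the copycat strategy wins from $(\bot_a,\bot_a)$), transfer across $\Lang_{\infty}(\Sigbis)$-equivalence, and use uniqueness of $(\Q,\M)$-factorisations of the unique morphism $\zero \to b$ to recognise the transferred embedding as the root $\bot_b$. Where you genuinely diverge is in the decomposition. The paper first proves Lemma~\ref{lem:hintikka:wooded:bfe:zero}, a syntactic-category argument (via \cite[Lemma D1.4.4(iii)]{johnstone02book}) showing that $\zero_a$ is presented by a cartesian \emph{sentence} $\zeta_a$, so that $\Theta_{\C}[a,\bot_a,\zero_b,\ord]$ is literally a sentence; it then proves $\zero_a \cong \zero_b$ (Lemma~\ref{lem:hintikka:wooded:bfe:zeroiso}) \emph{before} transferring, and picks a representative $\bot'_a \colon \zero_b \emb a$ of $\bot_a$ so that Proposition~\ref{prop:hintikka:wooded} can be applied on the $a$ side. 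Your sentence $\sigma_\ord$, which existentially quantifies the generators of $\zero_a$ guarded by $\varpi_{\zero_a} \land \FEmb_{\C}[\zero_a]$, makes Lemma~\ref{lem:hintikka:wooded:bfe:zero} unnecessary---the quantifier plus the definability formula do its work---while your post-transfer factorisation argument is exactly the paper's proof of Lemma~\ref{lem:hintikka:wooded:bfe:zeroiso}, inlined on the $b$ side. Transferring the single sentence at rank $\ord_{a,b}$ (which exists by Lemma~\ref{lem:hintikka:rank:ord}) rather than at every ordinal is equally legitimate, since Theorem~\ref{thm:hintikka:rank} makes membership in $\Rk(\ord_{a,b})$ equivalent to Duplicator winning. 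The trade-off: your route is slightly more elementary and self-contained, whereas the paper's isolates two reusable facts---that root Hintikka formulae can be taken to be sentences, and that $\Lang_\infty(\Sigbis)$-equivalence forces $\zero_a \cong \zero_b$---as standalone lemmas.
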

%%%%%%%%%%%%%%%%%%%%%%%%%%%%%%%%%%%%%%%%%%%%%%%%%%%%%%%%%%%%%%%%%%%%%%%%%%%

The proof of Theorem~\ref{thm:hintikka:wooded:bfe} relies on
Lemmas~\ref{lem:hintikka:wooded:bfe:zero} and~\ref{lem:hintikka:wooded:bfe:zeroiso}
below.
Let $\C$ be a wooded category whose path embeddings are definable in the
signature $\Sigbis$.
Hence, similarly as above, 
there is a cartesian theory $\theorybis$ in $\Sigbis$
such that $\C \cong \Mod(\theorybis)$.
Let $\cat U$ be the syntactic category of $\theorybis$.

%%%%%%%%%%%%%%%%%%%%%%%%%%%%%%%%%%%%%%%%%%%%%%%%%%%%%%%%%%%%%%%%%%%%%%%%%%%
\begin{lemma}
\label{lem:hintikka:wooded:bfe:zero}
%%%%%%%%%%%%%%%%%%%%%%%%%%%%%%%%%%%%%%%%%%%%%%%%%%%%%%%%%%%%%%%%%%%%%%%%%%%
For each $a \in \C$, there is a sentence $\zeta_a$
such that $\zero_a \cong \FG{\emptyset \mid \zeta_a}$.
%%%%%%%%%%%%%%%%%%%%%%%%%%%%%%%%%%%%%%%%%%%%%%%%%%%%%%%%%%%%%%%%%%%%%%%%%%%
\end{lemma}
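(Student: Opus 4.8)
The plan is to identify $\zero_a$ as a finitely presentable model that happens to be a quotient of the initial object, and then to repackage its presentation over the empty context. First I would recall from Lemma~\ref{lem:path:base} that the root $\bot_a \colon \zero_a \emb a$ of $\Path a$ arises from a factorisation $\zero \epi \zero_a \emb a$ of the unique morphism $\zero \to a$. Since $\zero$ is a path (Lemma~\ref{lem:path:base}.\ref{item:path:base:zero}) and a quotient of a path is again a path (Lemma~\ref{lem:path:base}.\ref{item:path:base:epi}), the object $\zero_a$ is itself a path, hence finitely presentable by hypothesis. By the description of finitely presentable models in \S\ref{sec:coste:fp}, we may therefore write $\zero_a \cong \FG{\vec x \mid \varphi}$ for some $\OI{\vec x \mid \varphi}$ in the syntactic category $\cat U$ of $\theorybis$.

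The key observation is that the hom-functor of $\zero_a$ is sub-singleton valued. Indeed, the quotient $q \colon \zero \epi \zero_a$ is an epimorphism (the factorisation system being proper, cf.\ Lemma~\ref{lem:prelim:fact:base}), so precomposition with $q$ embeds $\Mod(\theorybis)\funct{\zero_a, M}$ into $\Mod(\theorybis)\funct{\zero, M} \cong \one$ for every $M$. Through the bijection of Lemma~\ref{lem:coste:lp:hom}, this shows that $\I{\vec x \mid \varphi}_M$ has at most one element for all $M \in \Mod(\theorybis)$. Consequently the sequent
\[
  \varphi(\vec x) \land \varphi(\vec x') \thesis_{\vec x, \vec x'} \vec x \Eq \vec x'
\]
is valid in every model and hence, by completeness of cartesian theories (Remark~\ref{rem:prelim:coste:compl}), provable in $\theorybis$.

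Next I would use this uniqueness to existentially quantify all the variables of $\vec x$ while staying within cartesian logic. Quantifying one variable at a time, the provable-uniqueness condition required at each step (Definition~\ref{def:prelim:coste:th}) follows from the full uniqueness sequent above: in any model, a witness for a partially quantified formula extends to a tuple in the sub-singleton $\I{\vec x \mid \varphi}_M$, which pins down the coordinate being quantified. Each such step-uniqueness sequent is thus valid in all models, hence provable by completeness, so that $\zeta_a \deq (\exists x_1)\cdots(\exists x_n)\varphi$ is a sentence cartesian over $\theorybis$ and $\OI{\emptyset \mid \zeta_a}$ is a legitimate object of $\cat U$.

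It remains to establish $\FG{\emptyset \mid \zeta_a} \cong \FG{\vec x \mid \varphi}$, and for this I would compare functors of points. The assignment $\vec a \mapsto \ast$ defines a natural transformation $\I{\vec x \mid \varphi}_{(-)} \to \I{\emptyset \mid \zeta_a}_{(-)}$ into the sub-singleton $\I{\emptyset \mid \zeta_a}_M \sle \one$, well defined because $M \models \varphi(\vec a)$ entails $M \models \zeta_a$. It is a bijection at each $M$, since $\I{\vec x \mid \varphi}_M$ is non-empty exactly when $M \models \zeta_a$ and both sides are sub-singletons. Composing with the natural bijections of Lemma~\ref{lem:coste:lp:hom} yields a natural isomorphism $\Mod(\theorybis)\funct{\FG{\vec x \mid \varphi}, -} \cong \Mod(\theorybis)\funct{\FG{\emptyset \mid \zeta_a}, -}$, whence the Yoneda lemma (via the equivalence of Theorem~\ref{thm:coste:synt}) gives $\zero_a \cong \FG{\vec x \mid \varphi} \cong \FG{\emptyset \mid \zeta_a}$, as required. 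The main obstacle is the middle step: verifying that the iterated existential quantification stays cartesian, which is precisely where the sub-singleton property and completeness do the real work.
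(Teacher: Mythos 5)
Your proof is correct, and it reaches the conclusion by a genuinely different route through the key technical steps. The paper's own proof shares your setup (factor $\zero \to a$, note $\zero_a$ is a path hence finitely presentable, write $\zero_a \cong \FG{\vec x \mid \varphi}$, and observe the quotient $\zero \epi \zero_a$ is epic), but then it argues \emph{syntactically}: the epimorphism, being epic in $\fp\C$, corresponds under Theorem~\ref{thm:coste:synt} and Yoneda to a \emph{monic} morphism $\MI{\vec x \mid \theta} \colon \OI{\vec x \mid \varphi} \to \OI{\emptyset \mid \True}$ in the syntactic category $\cat U$; provable uniqueness of $\theta$ (hence of $\varphi$, since $\theta \dashv\vdash_{\vec x} \varphi$) is then extracted from the characterisation of monomorphisms in syntactic categories \cite[Lemma D1.4.4(iii)]{johnstone02book}, and the final isomorphism $\OI{\vec x \mid \varphi} \cong \OI{\emptyset \mid (\exists\vec x)\varphi}$ is cited from (the proof of) \cite[Lemma D1.4.4(ii)]{johnstone02book}. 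You instead argue \emph{semantically}: the epimorphism makes the hom-functor of $\zero_a$, hence $\I{\vec x \mid \varphi}_{(-)}$, sub-singleton valued, so the uniqueness sequents are valid in all models and provability follows from completeness (Remark~\ref{rem:prelim:coste:compl}); the final isomorphism is then obtained by a Yoneda comparison of interpretation functors rather than by citing Johnstone. Your route stays entirely within the results restated in the paper and avoids the external syntactic-category lemmas, at the cost of invoking the completeness theorem; it also has the merit of explicitly handling the step-by-step uniqueness conditions needed for the iterated existential quantification to be cartesian, a point the paper's ``It follows that $(\exists\vec x)\varphi$ is cartesian'' passes over silently (there, too, full uniqueness must be specialised to each quantification step, which can be done either syntactically by substitution or, as you do, semantically).
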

%%%%%%%%%%%%%%%%%%%%%%%%%%%%%%%%%%%%%%%%%%%%%%%%%%%%%%%%%%%%%%%%%%%%%%%%%%%

%%%%%%%%%%%%%%%%%%%%%%%%%%%%%%%%%%%%%%%%%%%%%%%%%%%%%%%%%%%%%%%%%%%%%%%%%%%
\begin{proof}
%%%%%%%%%%%%%%%%%%%%%%%%%%%%%%%%%%%%%%%%%%%%%%%%%%%%%%%%%%%%%%%%%%%%%%%%%%%
Recall from Lemma~\ref{lem:path:base} that $\zero_a$
is obtained by taking a factorisation
$\zero \epi \zero_a \emb a$
of the unique map $\zero \to a$, where $\zero$ is initial in $\C$.
Since $\C$ is lfp, $\zero$ is the finitely presentable $\theorybis$-model
$\FG{\emptyset \mid \True}$, where $\emptyset$ is the empty context
(\S\ref{sec:prelim:struct}).
Moreover, $\zero$ and $\zero_a$ are paths by Lemma~\ref{lem:path:base}.
Since we assumed that paths in $\C$ are finitely presentable, $\zero_a$ is finitely presentable,
say $\zero_a \cong \FG{\vec x \mid \varphi}$.
By Lemma~\ref{lem:prelim:fact:base}
we thus get a quotient 
$e \colon \FG{\emptyset \mid \True} \epi \FG{\vec x \mid \varphi}$.
In particular, $e$ is epic in $\C$, and thus also in its full subcategory $\fp\C$.
It then follows from Theorem~\ref{thm:coste:synt}
and the Yoneda lemma
that $e$ is induced by a monic $\cat U$-morphism
\[
\begin{array}{*{5}{l}}
  \MI{\vec x \mid \theta}
& :
& \OI{\vec x \mid \varphi}
& \longto
& \OI{\emptyset \mid \True}.
\end{array}
\]

Since $\MI{\vec x \mid \theta}$ is monic,
it follows from~\cite[Lemma D1.4.4(iii)]{johnstone02book}
that $\theorybis$ proves the sequent
\[
\begin{array}{l l l}
  \theta
  ~\land~
  \theta[\vec x'/\vec x]
& \thesis_{\vec x, \vec x'}
& \vec x \Eq \vec x'.
\end{array}
\]

\noindent
Moreover, by construction of $\cat U$ (Definition~\ref{def:coste:synt}),
$\theorybis$ proves
\[
\begin{array}{l l l !{\qquad\text{and}\qquad} l l l}
  \theta
& \thesis_{\vec x}
& \varphi

& \varphi
& \thesis_{\vec x}
& \theta.
\end{array}
\]

\noindent
It follows that the
sentence $\zeta_a = (\exists \vec x)\varphi$ is cartesian over $\theorybis$.
Further, $\OI{\vec x \mid \varphi}$ is isomorphic to
$\OI{\emptyset \mid (\exists \vec x)\varphi}$ in $\cat U$ (cf.\ the proof of \cite[Lemma~D1.4.4(ii)]{johnstone02book}),
hence
$\FG{\vec x \mid \varphi}$ is isomorphic to
$\FG{\emptyset \mid (\exists \vec x)\varphi}$ in $\C$,
which concludes the proof.
%%%%%%%%%%%%%%%%%%%%%%%%%%%%%%%%%%%%%%%%%%%%%%%%%%%%%%%%%%%%%%%%%%%%%%%%%%%
\end{proof}
%%%%%%%%%%%%%%%%%%%%%%%%%%%%%%%%%%%%%%%%%%%%%%%%%%%%%%%%%%%%%%%%%%%%%%%%%%%

Lemma~\ref{lem:hintikka:wooded:bfe:zero} shows that
Hintikka formulae of the form $\Theta_{\C}[a,\bot_a,\zero_b,\ord]$
can be assumed to be sentences.
It also implies the following simple yet important fact
about wooded categories with definable path embeddings.

%%%%%%%%%%%%%%%%%%%%%%%%%%%%%%%%%%%%%%%%%%%%%%%%%%%%%%%%%%%%%%%%%%%%%%%%%%%
\begin{lemma}
\label{lem:hintikka:wooded:bfe:zeroiso}
%%%%%%%%%%%%%%%%%%%%%%%%%%%%%%%%%%%%%%%%%%%%%%%%%%%%%%%%%%%%%%%%%%%%%%%%%%%
If $a,b \in \C$ are equivalent in $\Lang_\infty(\Sigbis)$,
then $\zero_a \cong \zero_b$.
%%%%%%%%%%%%%%%%%%%%%%%%%%%%%%%%%%%%%%%%%%%%%%%%%%%%%%%%%%%%%%%%%%%%%%%%%%%
\end{lemma}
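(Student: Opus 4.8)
The plan is to invoke Lemma~\ref{lem:hintikka:wooded:bfe:zero}, which presents the roots as $\zero_a \cong \FG{\emptyset \mid \zeta_a}$ and $\zero_b \cong \FG{\emptyset \mid \zeta_b}$ for suitable sentences $\zeta_a,\zeta_b$ over $\theorybis$; each such sentence in particular lies in $\Lang_\infty(\Sigbis)$, so the equivalence hypothesis applies to it. First I would observe that $a \models \zeta_a$: under the isomorphism $\zero_a \cong \FG{\emptyset \mid \zeta_a}$, the root embedding $\zero_a \emb a$ is a homomorphism $\FG{\emptyset \mid \zeta_a} \to a$, so by Lemma~\ref{lem:coste:lp:hom} the set $\I{\emptyset \mid \zeta_a}_a \subseteq \one$ is non-empty, i.e.\ $a \models \zeta_a$. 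By $\Lang_\infty(\Sigbis)$-equivalence we get $b \models \zeta_a$, and applying Lemma~\ref{lem:coste:lp:hom} once more yields a homomorphism $g \colon \zero_a \to b$. Symmetrically, $b \models \zeta_b$ forces $a \models \zeta_b$ and produces a homomorphism $h \colon \zero_b \to a$.

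The heart of the argument, and the step I expect to be the main obstacle, is upgrading these two homomorphisms into mutually inverse isomorphisms. Write $u_a \colon \zero \to a$ for the unique map from the initial object and $q_a \colon \zero \epi \zero_a$ for the quotient part of its canonical factorisation $\zero \epi \zero_a \emb a$ (Lemma~\ref{lem:path:base}); define $u_b$ and $q_b$ analogously. Factoring $g$ as $\zero_a \epi w \emb b$ and precomposing with $q_a$ produces a quotient--embedding factorisation of $g \comp q_a$, since quotients compose (Lemma~\ref{lem:prelim:fact:base}.\ref{item:prelim:fact:base:comp}) and all the objects involved are paths, being quotients of the path $\zero$. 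But $g \comp q_a = u_b$ by initiality of $\zero$, so uniqueness of factorisations identifies $w \cong \zero_b$ and shows that the resulting quotient $p \colon \zero_a \epi \zero_b$ satisfies $p \comp q_a = q_b$ (using that $\zero_b \emb b$ is monic). Symmetrically, $h$ yields a quotient $p' \colon \zero_b \epi \zero_a$ with $p' \comp q_b = q_a$.

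Finally I would conclude by cancellation: from $p' \comp p \comp q_a = p' \comp q_b = q_a = \id_{\zero_a} \comp q_a$ and the fact that $q_a$ is a quotient, hence an epimorphism, it follows that $p' \comp p = \id_{\zero_a}$; symmetrically $p \comp p' = \id_{\zero_b}$. Thus $p$ is an isomorphism and $\zero_a \cong \zero_b$, as required. The points to verify carefully along the way are that the composite $\zero \epi \zero_a \epi w$ really is a quotient and that the factorisation of $u_b$ is genuinely unique up to isomorphism, so that the identification $w \cong \zero_b$ is legitimate; both are routine given the proper, stable factorisation system of a wooded category and Lemma~\ref{lem:path:base}.
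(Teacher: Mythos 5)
Your proof is correct, but it follows a genuinely different route from the paper's. The paper exploits the standing hypothesis that path embeddings are definable: since $\bot_a \colon \zero_a \emb a$ is an embedding, $a$ satisfies both $\zeta_a$ and the sentence $\FEmb_{\C}\FG{\emptyset \mid \zeta_a}$; transferring \emph{both} sentences along the equivalence, the homomorphism $\zero_a \to b$ obtained from Lemma~\ref{lem:coste:lp:hom} is itself an embedding, so the unique map $\zero \to b$ factors as $\zero \epi \zero_a \emb b$, and uniqueness of (quotient, embedding) factorisations immediately yields $\zero_a \cong \zero_b$. You avoid the formula $\FEmb_{\C}$ altogether: you transfer only the sentences $\zeta_a$ and $\zeta_b$ (in both directions), getting plain homomorphisms $g \colon \zero_a \to b$ and $h \colon \zero_b \to a$, and then upgrade them by a purely categorical argument---factor $g$, precompose with $q_a$, use initiality and uniqueness of factorisations to produce a quotient $p \colon \zero_a \epi \zero_b$ with $p \comp q_a = q_b$, symmetrically $p' \colon \zero_b \epi \zero_a$ with $p' \comp q_b = q_a$, and cancel the epimorphisms $q_a$, $q_b$ to conclude $p' \comp p = \id$ and $p \comp p' = \id$. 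Both arguments are sound; the trade-off is that the paper's is shorter and uses one direction of the equivalence but leans on condition~\ref{item:hintikka:def-path-emb:formulae} of Definition~\ref{def:hintikka:def-path-emb}, whereas yours is symmetric and strictly more economical in hypotheses: it needs only that paths are finitely presentable (so that Lemma~\ref{lem:hintikka:wooded:bfe:zero} applies) together with the properness of the factorisation system, so the conclusion actually holds in a slightly more general setting. One minor remark: your parenthetical that the intermediate object $w$ is a path is harmless but unnecessary (and, as stated, it invokes stability via Lemma~\ref{lem:path:base}); all your argument really uses is uniqueness of factorisations, which holds in any proper factorisation system.
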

%%%%%%%%%%%%%%%%%%%%%%%%%%%%%%%%%%%%%%%%%%%%%%%%%%%%%%%%%%%%%%%%%%%%%%%%%%%

%%%%%%%%%%%%%%%%%%%%%%%%%%%%%%%%%%%%%%%%%%%%%%%%%%%%%%%%%%%%%%%%%%%%%%%%%%%
\begin{proof}
%%%%%%%%%%%%%%%%%%%%%%%%%%%%%%%%%%%%%%%%%%%%%%%%%%%%%%%%%%%%%%%%%%%%%%%%%%%
Assume that $a,b \in \C$ are equivalent in $\Lang_\infty(\Sigbis)$.
Since we have an embedding $\bot_a \colon \zero_a \emb a$,
it follows that
$a$ is a model of the sentences $\zeta_a$ and $\FEmb_{\C}\FG{\emptyset | \zeta_a}$.
Hence $b \models \zeta_a$,
which by Lemma~\ref{lem:coste:lp:hom} implies the existence of some
$h \colon \zero_a \to b$.
Moreover, $b \models \FEmb_{\C}\FG{\emptyset | \zeta_a}$,
so that $h$ is actually an embedding.
It follows that the unique map $\zero \to b$
factors as $\zero \epi \zero_a \emb b$.
But $\zero \to b$ also factors as
$\zero \epi \zero_b \emb b$.
Hence $\zero_a \cong \zero_b$
since, in a proper factorisation system,
factorisations are unique up to isomorphism.
%%%%%%%%%%%%%%%%%%%%%%%%%%%%%%%%%%%%%%%%%%%%%%%%%%%%%%%%%%%%%%%%%%%%%%%%%%%
\end{proof}
%%%%%%%%%%%%%%%%%%%%%%%%%%%%%%%%%%%%%%%%%%%%%%%%%%%%%%%%%%%%%%%%%%%%%%%%%%%

%%%%%%%%%%%%%%%%%%%%%%%%%%%%%%%%%%%%%%%%%%%%%%%%%%%%%%%%%%%%%%%%%%%%%%%%%%%
\begin{remark}
%%%%%%%%%%%%%%%%%%%%%%%%%%%%%%%%%%%%%%%%%%%%%%%%%%%%%%%%%%%%%%%%%%%%%%%%%%%
Note that when $\C$ is either of the wooded categories $\cat R^E_k(\sig)$ or
$\cat R^P_k(\sig)$ from Examples~\ref{ex:path:R^E}
and~\ref{ex:path:misc}\ref{item:path:misc:pebble},
the initial object $\zero$ has empty carrier and so $\zero_a \cong \zero_b$
for all $a,b\in \C$.
However, this is not the case for the wooded category $\cat R^M_k(\sig)$
in Example~\ref{ex:path:misc}\ref{item:path:misc:modal},
since $\zero$ has one element;
in that case, we do need Lemma~\ref{lem:hintikka:wooded:bfe:zeroiso}.
%%%%%%%%%%%%%%%%%%%%%%%%%%%%%%%%%%%%%%%%%%%%%%%%%%%%%%%%%%%%%%%%%%%%%%%%%%%
\end{remark}
%%%%%%%%%%%%%%%%%%%%%%%%%%%%%%%%%%%%%%%%%%%%%%%%%%%%%%%%%%%%%%%%%%%%%%%%%%%

We can now prove Theorem~\ref{thm:hintikka:wooded:bfe}.

%%%%%%%%%%%%%%%%%%%%%%%%%%%%%%%%%%%%%%%%%%%%%%%%%%%%%%%%%%%%%%%%%%%%%%%%%%%
\begin{proof}[Proof of Theorem~\ref{thm:hintikka:wooded:bfe}]
%%%%%%%%%%%%%%%%%%%%%%%%%%%%%%%%%%%%%%%%%%%%%%%%%%%%%%%%%%%%%%%%%%%%%%%%%%%
Let $a,b \in \C$ be equivalent in $\Lang_\infty(\Sigbis)$.
We want to show that 
Duplicator wins the game $\G(a,b)$ from position $(\bot_a,\bot_b)$.
By Theorem~\ref{thm:hintikka:rank},
this amounts to showing that $(\bot_a,\bot_b) \in \Rk(\ord)$
for all ordinals $\ord$.
We apply Proposition~\ref{prop:hintikka:wooded}
and show that, for every ordinal $\ord$,
we have
$b \models \Theta_{\C}[a,\bot_a,\zero_b,\ord]$,
where we have assumed that $\zero_b = \FG{\emptyset \mid \zeta_b}$
by Lemma~\ref{lem:hintikka:wooded:bfe:zero}.

Note that Duplicator wins the game $\G(a,a)$ from position $(\bot_a,\bot_a)$.
On the other hand, by Lemma~\ref{lem:hintikka:wooded:bfe:zeroiso}
we have $\zero_a \cong \zero_b$.
Hence, there is some $\bot'_a \colon \zero_b \emb a$
such that $\bot'_a \sim \bot_a$,
i.e.\ such that $\bot'_a$ and $\bot_a$
represent the same element of $\Path{a}$ (see~\S\ref{sec:prelim:facto}).
It follows that Duplicator wins $\G(a,a)$ from position $(\bot_a,\bot'_a)$.
Thus, for each ordinal $\ord$,
we have $a \models \Theta_{\C}[a,\bot_a,\zero_b,\ord]$
and so also $b \models \Theta_{\C}[a,\bot_a,\zero_b,\ord]$.
%%%%%%%%%%%%%%%%%%%%%%%%%%%%%%%%%%%%%%%%%%%%%%%%%%%%%%%%%%%%%%%%%%%%%%%%%%%
\end{proof}
%%%%%%%%%%%%%%%%%%%%%%%%%%%%%%%%%%%%%%%%%%%%%%%%%%%%%%%%%%%%%%%%%%%%%%%%%%%

%%%%%%%%%%%%%%%%%%%%%%%%%%%%%%%%%%%%%%%%%%%%%%%%%%%%%%%%%%%%%%%%%%%%%%%%%%%
\subsection{The case of finitely accessible wooded adjunctions}
\label{sec:hintikka:finaccadj}
%%%%%%%%%%%%%%%%%%%%%%%%%%%%%%%%%%%%%%%%%%%%%%%%%%%%%%%%%%%%%%%%%%%%%%%%%%%
Consider a finitely accessible wooded adjunction
\[
\begin{tikzcd}
  \C
  \arrow[bend left=25]{r}{\Ladj}
  \arrow[phantom]{r}[description]{\textnormal{\footnotesize{$\bot$}}}
& \E,
  \arrow[bend left=25]{l}{R}
\end{tikzcd}
\]

\noindent
where the path embeddings of $\C$ are definable in some signature.
Intuitively, we would like to transfer Theorem~\ref{thm:hintikka:wooded:bfe}
along the interpretation induced by the morphism of lfp categories $R \colon \E \to \C$
(see \S\ref{sec:coste:interp}).
To make this precise, we consider a cartesian theory $\theory$
such that the lfp category $\E$ is equivalent to $\Mod(\theory)$;
see Corollary~\ref{cor:lfp-iff-models-of-T}.

Combining Theorem~\ref{thm:hintikka:wooded:bfe}
with the transfer Theorem~\ref{thm:coste:interp} of \S\ref{sec:coste:interp}
yields at once the following result.
Recall that given $M, N \in \E$,
we have $M \bisim_R N$ when $R M \bisim R N$ in $\C$.

%%%%%%%%%%%%%%%%%%%%%%%%%%%%%%%%%%%%%%%%%%%%%%%%%%%%%%%%%%%%%%%%%%%%%%%%%%%
\begin{corollary}
\label{cor:hintikka:wooded:R-bfe}
%%%%%%%%%%%%%%%%%%%%%%%%%%%%%%%%%%%%%%%%%%%%%%%%%%%%%%%%%%%%%%%%%%%%%%%%%%%
Let $\Sig$ be a signature.
Consider a finitely accessible wooded adjunction $R \colon \E \to \C$,
where the path embeddings of $\C$ are definable in some signature,
and where $\E = \Mod(\theory)$ for some cartesian theory $\theory$ in $\Sig$.
Then
\[
\begin{array}{l l l}
  \text{$M,N \in \E$ equivalent in $\Lang_{\infty}(\Sig)$}
& \longimp
& M \bisim_R N.
\end{array}
\]
%%%%%%%%%%%%%%%%%%%%%%%%%%%%%%%%%%%%%%%%%%%%%%%%%%%%%%%%%%%%%%%%%%%%%%%%%%%
\end{corollary}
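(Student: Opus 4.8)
The plan is to derive this corollary by chaining the two results that immediately precede it, exactly as the surrounding text signals. Since $M \bisim_R N$ means precisely $RM \bisim RN$ in $\C$, and since $\C$ has path embeddings definable in $\Sigbis$, Theorem~\ref{thm:hintikka:wooded:bfe} applies to $\C$: it reduces the goal $RM \bisim RN$ to showing that $RM$ and $RN$ are equivalent in $\Lang_\infty(\Sigbis)$. Hence the entire content of the proof is to transfer $\Lang_\infty(\Sig)$-equivalence of $M$ and $N$ in $\E$ into $\Lang_\infty(\Sigbis)$-equivalence of their images $RM, RN$ in $\C$.

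For this step I would invoke the transfer machinery of \S\ref{sec:coste:interp}. By hypothesis $\E = \Mod(\theory)$ with $\theory$ cartesian in $\Sig$, and $\C \cong \Mod(\theorybis)$ for a cartesian theory $\theorybis$ in $\Sigbis$; moreover $R \colon \E \to \C$ is a morphism of lfp categories. Gabriel-Ulmer duality then presents $R$, via Theorem~\ref{thm:coste:synt}, as a functor of the form $\ladj\inclth$ for a lex-morphism $\inclth$ between the relevant syntactic categories, and this $\inclth$ induces the interpretation $(-)^{\inclth}$ that sends each formula of $\Lang_\infty(\Sigbis)$ to a formula of $\Lang_\infty(\Sig)$. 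By Remark~\ref{rem:coste:interp:card} the interpretation stays within the same infinitary fragment, and in particular it sends a \emph{sentence} $\psi$ (in the empty context) to a sentence $\psi^{\inclth}$.

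It then remains to run the equivalence chain at the level of sentences. Fix an arbitrary sentence $\psi \in \Lang_\infty(\Sigbis)$. Applying Theorem~\ref{thm:coste:interp} in the empty context yields $RM \models \psi \iff M \models \psi^{\inclth}$ and $RN \models \psi \iff N \models \psi^{\inclth}$. Since $M$ and $N$ are $\Lang_\infty(\Sig)$-equivalent and $\psi^{\inclth}$ is a $\Lang_\infty(\Sig)$-sentence, we also have $M \models \psi^{\inclth} \iff N \models \psi^{\inclth}$. Concatenating the three biconditionals gives $RM \models \psi \iff RN \models \psi$, so $RM$ and $RN$ satisfy the same $\Lang_\infty(\Sigbis)$-sentences. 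Theorem~\ref{thm:hintikka:wooded:bfe} then delivers $RM \bisim RN$, i.e.\ $M \bisim_R N$, as required.

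The only point demanding care — and essentially the sole obstacle here, since the hard analytic work lives in the two cited theorems — is to align the orientation of Theorem~\ref{thm:coste:interp} with the adjunction at hand: that transfer result translates formulae over the \emph{target} of $R$ into formulae over its \emph{source}. In our setting $R$ points from the source $\E$ (signature $\Sig$) to the target $\C$ (signature $\Sigbis$), so the induced interpretation runs from $\Lang_\infty(\Sigbis)$ down to $\Lang_\infty(\Sig)$, which is exactly the direction used above. I would also record explicitly that the hypothesis ``the path embeddings of $\C$ are definable in $\Sigbis$'' enters only to license the application of Theorem~\ref{thm:hintikka:wooded:bfe}.
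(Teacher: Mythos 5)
Your proposal is correct and follows exactly the paper's own route: the paper proves this corollary simply by "combining Theorem~\ref{thm:hintikka:wooded:bfe} with the transfer Theorem~\ref{thm:coste:interp}", which is precisely your argument, with the sentence-level chaining and the orientation check on $(-)^{\inclth}$ (formulae over the target $\C$ in $\Sigbis$ translated to formulae over the source $\E$ in $\Sig$) being the details the paper leaves implicit in "yields at once". Your closing remark that the definability hypothesis enters only through Theorem~\ref{thm:hintikka:wooded:bfe} also matches the paper's observation that the corollary never mentions the signature $\Sigbis$ explicitly.
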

%%%%%%%%%%%%%%%%%%%%%%%%%%%%%%%%%%%%%%%%%%%%%%%%%%%%%%%%%%%%%%%%%%%%%%%%%%%

Note that Corollary~\ref{cor:hintikka:wooded:R-bfe}
does not explicitly refer to
the signature in which the path embeddings of the wooded category $\C$ are definable;
it only refers to the signature $\Sig$ of the extensional category $\E$.

While the road towards Corollary~\ref{cor:hintikka:wooded:R-bfe}
was 
relatively straightforward,
the proof of our main Theorem~\ref{thm:path:main}
is considerably more involved.
Before proceeding, we comment on Hintikka formulae in the context
of Corollary~\ref{cor:hintikka:wooded:R-bfe}.
Fix some $R \colon \E \to \C$ as in the statement of the latter result,
and let $\theory$ be such that $\E = \Mod(\theory)$.

Note that we may apply the transfer Theorem~\ref{thm:coste:interp}
directly to the Hintikka formulae of~\S\ref{sec:hintikka:form},
without the detour via Theorem~\ref{thm:hintikka:wooded:bfe}.
This direct approach requires a more explicit description.
Assume that the path embeddings of $\C$ are definable in a signature~$\Sigbis$.
In particular, there is a cartesian theory $\theorybis$ in $\Sigbis$
such that $\C = \Mod(\theorybis)$.
Let $\cat U$ and $\cat T$ be the syntactic categories of $\theorybis$
and $\theory$, respectively.
As discussed in~\S\ref{sec:coste:interp}, 
the morphism of lfp categories $R \colon \E \to \C$
corresponds via Theorem~\ref{thm:coste:synt} to a functor
$\ladj\inclth \colon \lex\funct{\cat T,\Set} \to \lex\funct{\cat U,\Set}$,
where $\inclth \colon \cat U \to \cat T$ is a lex-morphism.
Recall that the interpretation $(-)^{\inclth}$ of~\S\ref{sec:coste:interp} relies
on the fact that given $M \in \Mod(\theory)$ and $\OI{\vec y \mid \psi} \in \cat U$,
we have
\[
\begin{array}{l l l}
  \I{\vec y \mid \psi}_{R M}
& \cong
& \I{\inclth\OI{\vec y \mid \psi}}_M.
\end{array}
\]

Combining this with Proposition~\ref{prop:hintikka:wooded}, we get the following result.

%%%%%%%%%%%%%%%%%%%%%%%%%%%%%%%%%%%%%%%%%%%%%%%%%%%%%%%%%%%%%%%%%%%%%%%%%%%
\begin{corollary}
\label{cor:hintikka:interp}
%%%%%%%%%%%%%%%%%%%%%%%%%%%%%%%%%%%%%%%%%%%%%%%%%%%%%%%%%%%%%%%%%%%%%%%%%%%
Let $R \colon \E \to \C$ be as above.
Let $M, N \in \E$, 
and let $m \colon P \emb R M$ be a path embedding in $\C$.
Consider a path $Q$ with $Q \cong \FG{\vec z \mid \varpi_Q}$,
and write $\OI{\vec x \mid \varphi} \in \cat T$
for $\inclth \OI{\vec z \mid \varpi_Q}$.
Let $n \colon Q \emb R N$ be induced by 
$\vec c \in \I{\vec x \mid \varphi}_{N}$.
Then the following statements are equivalent for all ordinals $\ord$:
\begin{enumerate}[(i)]
\item
$N \models (\Theta_{\C}[R M,m,Q,\ord])^{\inclth}(\vec c)$.

\item
$(m,n)$ is a position of rank $\geq \ord$
in the game $\G(R M,R N)$.
\end{enumerate}
%%%%%%%%%%%%%%%%%%%%%%%%%%%%%%%%%%%%%%%%%%%%%%%%%%%%%%%%%%%%%%%%%%%%%%%%%%%
\end{corollary}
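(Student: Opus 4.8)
The plan is to derive the corollary by chaining Proposition~\ref{prop:hintikka:wooded} with the transfer Theorem~\ref{thm:coste:interp}, so that the only genuine work is to reconcile the two descriptions of the tuple presenting the embedding $n$.

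First I would apply Proposition~\ref{prop:hintikka:wooded} with $a = R M$ and $b = R N$. By Lemma~\ref{lem:coste:lp:hom} the fixed path embedding $n \colon Q \emb R N$ is induced (in the sense of Definition~\ref{def:coste:fp:hom}) by a unique tuple $\vec d \in \I{\vec z \mid \varpi_Q}_{R N}$, and the proposition then gives, for each ordinal $\ord$, that $(m,n)$ is a position of rank $\geq \ord$ in $\G(R M, R N)$ if and only if $R N \models \Theta_{\C}[R M,m,Q,\ord](\vec d)$. This accounts for statement~(ii).

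Next I would transfer the condition on the $\Sigbis$-structure $R N$ to a condition on the $\Sig$-structure $N$ by applying Theorem~\ref{thm:coste:interp} to the formula $\psi = \Theta_{\C}[R M,m,Q,\ord]$, whose free variables are $\vec z = z_1,\dots,z_k$. For each variable $z_j$, of sort $\sort_j$ in $\Sigbis$, we set $\OI{\vec x_j \mid \varphi_j} = \inclth\OI{z_j : \sort_j \mid \True}$, and the theorem yields that $R N \models \psi(\chi^{\sort_1}_N(\vec a_1),\dots,\chi^{\sort_k}_N(\vec a_k))$ if and only if $N \models \psi^{\inclth}(\vec a_1,\dots,\vec a_k)$, for all tuples $\vec a_j \in \I{\vec x_j \mid \varphi_j}_N$. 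Since $\inclth$ preserves finite products, the object $\OI{\vec x \mid \varphi} = \inclth\OI{\vec z \mid \varpi_Q}$ has context $\vec x$ the concatenation of the $\vec x_j$, so the hypothesised $\vec c \in \I{\vec x \mid \varphi}_N$ decomposes as $\vec c = (\vec a_1,\dots,\vec a_k)$ with each $\vec a_j \in \I{\vec x_j \mid \varphi_j}_N$. Feeding this splitting into the transfer equivalence turns its right-hand side into $N \models (\Theta_{\C}[R M,m,Q,\ord])^{\inclth}(\vec c)$, which is statement~(i).

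The main, and essentially only, point to verify is that $\vec d$ and $\vec c$ present the same embedding $n$: that the tuple $\vec d \in \I{\vec z \mid \varpi_Q}_{R N}$ inducing $n$ corresponds to the hypothesised $\vec c \in \I{\vec x \mid \varphi}_N$ under the interpretation isomorphism $\I{\vec z \mid \varpi_Q}_{R N} \cong \I{\inclth\OI{\vec z \mid \varpi_Q}}_N = \I{\vec x \mid \varphi}_N$ of \S\ref{sec:coste:interp}; indeed, this correspondence is exactly what it means for the path embedding $n$ into $R N$ to be \emph{induced by $\vec c$} as intended by the corollary. The isomorphism is assembled sortwise from eq.~\eqref{eq:coste:iso-RM-sortwise}, whose inverse on the sort $\sort_j$ is $\chi^{\sort_j}_N$, so that $\vec c = (\vec a_1,\dots,\vec a_k)$ matches the tuple $\vec d$ with $d_j = \chi^{\sort_j}_N(\vec a_j)$. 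With this identification the displayed left-hand side $R N \models \psi(\chi^{\sort_1}_N(\vec a_1),\dots,\chi^{\sort_k}_N(\vec a_k))$ of the transfer equivalence reads $R N \models \Theta_{\C}[R M,m,Q,\ord](\vec d)$, and composing it with the equivalence from Proposition~\ref{prop:hintikka:wooded} gives the desired equivalence of~(i) and~(ii) for every ordinal $\ord$. I expect no other obstacle, since both ingredients are already proved; the care is entirely in keeping the swapped roles of $\Sig$ and $\Sigbis$ (equivalently, of $\cat T$ and $\cat U$) straight and in confirming that $\inclth$ being a lex-morphism lets the per-variable form of Theorem~\ref{thm:coste:interp} be applied to the constrained context $\OI{\vec z \mid \varpi_Q}$.
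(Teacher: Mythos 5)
Your proof is correct and takes essentially the same route as the paper: the corollary is stated there as an immediate consequence of combining Proposition~\ref{prop:hintikka:wooded} with the transfer Theorem~\ref{thm:coste:interp}, using the isomorphism $\I{\vec z \mid \varpi_Q}_{R N} \cong \I{\inclth\OI{\vec z \mid \varpi_Q}}_{N}$ exactly as you do. Your reconciliation of the constrained-context tuple $\vec c$ with the per-variable form of the transfer theorem (via the maps $\chi^{\sort_j}_{N}$ and the product-preservation of $\inclth$) merely makes explicit what the paper leaves implicit.
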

%%%%%%%%%%%%%%%%%%%%%%%%%%%%%%%%%%%%%%%%%%%%%%%%%%%%%%%%%%%%%%%%%%%%%%%%%%%

The more accurate description in Corollary~\ref{cor:hintikka:interp}
(compared to Corollary~\ref{cor:hintikka:wooded:R-bfe})
can be useful in the following situation.
Recall from Remark~\ref{rem:coste:interp:card} that
$(-)^{\inclth}$ interprets $\Lang_\kappa(\Sigbis)$ in $\Lang_\kappa(\Sig)$
for each regular cardinal $\kappa$.
In particular, 
$(\Theta_{\C}[R M,m,Q,\ord])^{\inclth}(\vec x)$
is a finite formula whenever so is
$\Theta_{\C}[R M,m,Q,\ord](\vec z)$.
Direct inspection of Figure~\ref{fig:hintikka:arb} reveals that
the latter formula is finite when all the following conditions are satisfied:
\begin{itemize}
\item
The ordinal $\ord$ is finite.
This can always be assumed when for each $N \in \E$,
the plays of $\G(R M, R N)$
have (finite) bounded length
(Remark~\ref{rem:hintikka:rank}.\ref{item:hintikka:rank:fin}),
as e.g.\ in the case of
the $k$-round Ehrenfeucht-Fraïssé 
game (see Example~\ref{ex:path:bisim-FOk-equivalence}).

\item
There are (up to isomorphism) finitely many
embeddings $m' \colon P' \emb R M$ such that $m \prec m'$.
In the Ehrenfeucht-Fraïssé games for finite $\sig$,
this condition is met when the $\sig$-structure $M$ is finite.

\item
For each path $Q$ in $\C$ there are (up to isomorphism)
finitely many paths $Q'$ and embeddings $\ell \colon Q \emb Q'$ with
$\ell \prec \id_{Q'}$.
This condition is met with $k$-round Ehrenfeucht-Fraïssé games
for \emph{finite} (purely relational) signatures $\sig$.

\item
The formulae $\FEmb_{\C}[Q](-)$
are finite
(see Definition~\ref{def:hintikka:def-path-emb}).
\end{itemize}

Recall from Lemma~\ref{l:forest-of-paths-wooded} that the subcategory $\pth\C$ of a wooded category $\C$ defined the paths and the embeddings between them can be identified, up to equivalence, with a forest order. We get

%%%%%%%%%%%%%%%%%%%%%%%%%%%%%%%%%%%%%%%%%%%%%%%%%%%%%%%%%%%%%%%%%%%%%%%%%%%
\begin{corollary}
\label{cor:hintikka:interp:fin}
%%%%%%%%%%%%%%%%%%%%%%%%%%%%%%%%%%%%%%%%%%%%%%%%%%%%%%%%%%%%%%%%%%%%%%%%%%%
Let $\Sig$ be a signature.
Consider a finitely accessible wooded adjunction $R \colon \E \to \C$
such that the path embeddings of $\C$ are definable in some signature $\Sigbis$,
and $\E = \Mod(\theory)$ for some cartesian theory $\theory$ in $\Sig$.
Assume the following:
\begin{enumerate}[(i)]
\item
\label{item:hintikka:interp:fin:path}
The forest $\pth\C$ is finitely branching.

\item
\label{item:hintikka:interp:fin:emb}
The formulae $\FEmb_{\C}[Q](-)$ (in the signature $\Sigbis$) are finite.

\setcounter{SplitEnum}{\value{enumi}}
\end{enumerate}

\noindent
Further, let $M, N \in \E$ satisfy the following conditions:
\begin{enumerate}[(i)]
\setcounter{enumi}{\value{SplitEnum}}
\item
\label{item:hintikka:interp:fin:ord}
The plays of $\G(R M, R N)$ have (finite) bounded length.

\item
\label{item:hintikka:interp:fin:struct}
For each path embedding $m \colon P \emb R M$ in $\C$,
there are (up to isomorphism) finitely many
embeddings $m' \colon P' \emb R M$ such that $m \prec m'$.
\end{enumerate}

\noindent
Then
\[
\begin{array}{l l l}
  \text{$M,N$ equivalent in $\Lang_{\omega}(\Sig)$}
& \longimp
& M \bisim_R N.
\end{array}
\]
%%%%%%%%%%%%%%%%%%%%%%%%%%%%%%%%%%%%%%%%%%%%%%%%%%%%%%%%%%%%%%%%%%%%%%%%%%%
\end{corollary}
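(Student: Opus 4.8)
The plan is to rerun the proof of Theorem~\ref{thm:hintikka:wooded:bfe}, but this time keeping the Hintikka formulae inside the finitary fragment $\Lang_\omega$, so that $\Lang_\omega(\Sig)$-equivalence of $M$ and $N$ already suffices to transfer them. Write $\C = \Mod(\theorybis)$ for a cartesian theory $\theorybis$ in $\Sigbis$ (available since the path embeddings of $\C$ are definable in $\Sigbis$), and let $\inclth \colon \cat U \to \cat T$ be the lex-morphism encoding $R$ as in \S\ref{sec:coste:interp}, so that $(-)^{\inclth}$ translates $\Sigbis$-formulae into $\Sig$-formulae and, by Remark~\ref{rem:coste:interp:card}, carries $\Lang_\omega(\Sigbis)$ into $\Lang_\omega(\Sig)$.

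First I would check, by induction on $\ord$, that for finite $\ord$ the formula $\Theta_\C[RM,m,Q,\ord]$ of Figure~\ref{fig:hintikka:arb} is finite. The base $\FIso[P,Q] \in \{\True,\False\}$ is finite. At a successor step: the conjunction/disjunction over $m' \succ m$ is finite by hypothesis~\ref{item:hintikka:interp:fin:struct}; the conjunction/disjunction over $\ell \colon Q \emb Q'$ with $\ell \prec \id_{Q'}$ is finite because $\pth\C$ is finitely branching (hypothesis~\ref{item:hintikka:interp:fin:path}) and, by Lemma~\ref{l:forest-of-paths-wooded}.\ref{i:at-most-one-emb}, there is at most one embedding $Q \emb Q'$ for each such $Q'$; the defining formulae $\varpi_{Q'}$ and the morphism formulae $\vartheta_\ell$ are finite since paths are finitely presentable (so presented by finite quantifier-free data, Remark~\ref{rem:quantifier-free}) and $\cat U$-morphisms are cartesian, hence finite; $\FEmb_\C[Q']$ is finite by hypothesis~\ref{item:hintikka:interp:fin:emb}; and the inner $\Theta[m',Q',\ord]$ is finite by the induction hypothesis. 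Since by hypothesis~\ref{item:hintikka:interp:fin:ord} the plays of $\G(RM,RN)$ have bounded length, the rank $\ord_0 \deq \ord_{RM,RN}$ is finite by Remark~\ref{rem:hintikka:rank}.\ref{item:hintikka:rank:fin}, so only finite ordinals are relevant, and $(-)^{\inclth}$ sends each such $\Theta_\C[RM,m,Q,\ord]$ to a finite $\Sig$-formula.

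Next I would establish $\zero_{RM} \cong \zero_{RN}$ by the argument of Lemmas~\ref{lem:hintikka:wooded:bfe:zero} and~\ref{lem:hintikka:wooded:bfe:zeroiso}, now using only finite formulae. Writing $\zero_{RM} \cong \FG{\emptyset \mid \zeta_{RM}}$, the sentence $\zeta_{RM}$ is finite (it is an existential over a finite quantifier-free formula, as $\zero_{RM}$ is a path) and $\FEmb_\C\FG{\emptyset \mid \zeta_{RM}}$ is finite by hypothesis~\ref{item:hintikka:interp:fin:emb}. As $RM$ satisfies both, Theorem~\ref{thm:coste:interp} moves them to finite $\Sig$-sentences satisfied by $M$; by $\Lang_\omega(\Sig)$-equivalence they hold in $N$, and Theorem~\ref{thm:coste:interp} moves them back to $RN$, whence $\zero_{RM} \cong \zero_{RN}$ exactly as in Lemma~\ref{lem:hintikka:wooded:bfe:zeroiso}.

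Finally I would conclude as in Theorem~\ref{thm:hintikka:wooded:bfe}. Choosing $\zero_{RN} = \FG{\emptyset \mid \zeta_{RN}}$ so that each $\Theta_\C[RM,\bot_{RM},\zero_{RN},\ord]$ is a (finite) sentence, the copycat strategy shows Duplicator wins $\G(RM,RM)$ from $(\bot_{RM},\bot'_{RM})$, where $\bot'_{RM} \sim \bot_{RM}$ has domain $\zero_{RN}$, so $RM \models \Theta_\C[RM,\bot_{RM},\zero_{RN},\ord]$ for every finite $\ord$ by Proposition~\ref{prop:hintikka:wooded}. Transferring each such finite sentence through $(-)^{\inclth}$ and $\Lang_\omega(\Sig)$-equivalence of $M$ and $N$ (via Theorem~\ref{thm:coste:interp}) yields $RN \models \Theta_\C[RM,\bot_{RM},\zero_{RN},\ord]$, i.e.\ $(\bot_{RM},\bot_{RN}) \in \Rk(\ord)$, for all finite $\ord$; since $\ord_0$ is finite and the rank sequence is stationary from $\ord_0$, this holds for all ordinals, so Duplicator wins $\G(RM,RN)$ by Theorem~\ref{thm:hintikka:rank}, i.e.\ $M \bisim_R N$. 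The hard part will be the bookkeeping of the finiteness step—matching each potential source of an infinite conjunction or disjunction in Figure~\ref{fig:hintikka:arb} to exactly one of the hypotheses~\ref{item:hintikka:interp:fin:path}--\ref{item:hintikka:interp:fin:struct}—since everything else is a finitary replay of Theorem~\ref{thm:hintikka:wooded:bfe}, with the finiteness-preservation of $(-)^{\inclth}$ being what upgrades the implication from $\Lang_\infty$ to $\Lang_\omega$.
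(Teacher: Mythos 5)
Your proposal is correct and follows essentially the same route as the paper: the paper's (implicit) proof is precisely the preceding "direct inspection" that matches each potential source of infinity in Figure~\ref{fig:hintikka:arb} to one of the four hypotheses (finite rank via bounded plays, finitely many $m'\succ m$, finite branching of $\pth\C$ together with uniqueness of embeddings between paths, and finite $\FEmb_{\C}[Q]$), combined with the finiteness-preservation of $(-)^{\inclth}$ from Remark~\ref{rem:coste:interp:card} and a rerun of the argument of Theorem~\ref{thm:hintikka:wooded:bfe}. Your write-up additionally makes explicit the finitary adaptation of Lemmas~\ref{lem:hintikka:wooded:bfe:zero} and~\ref{lem:hintikka:wooded:bfe:zeroiso} (finiteness of $\zeta_{RM}$ and of the corresponding embedding formula), a detail the paper leaves implicit but which is needed and handled correctly here.
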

%%%%%%%%%%%%%%%%%%%%%%%%%%%%%%%%%%%%%%%%%%%%%%%%%%%%%%%%%%%%%%%%%%%%%%%%%%%

In view of
Corollary~\ref{cor:hintikka:interp:fin},
it is interesting to identify sufficient conditions ensuring that the formulae
$\FEmb_{\C}[Q](-)$ of Definition~\ref{def:hintikka:def-path-emb} can be assumed
to be finite.
In~\S\ref{sec:wc:fin},
we will investigate this question under the assumptions
of Theorem~\ref{thm:path:main}.

The finitely accessible arboreal adjunction of Example~\ref{ex:Diaconescu-cover}
involving presheaves and forests 
is a simpler case where
condition~\ref{item:hintikka:interp:fin:emb} above holds, as shown in the next example:

%%%%%%%%%%%%%%%%%%%%%%%%%%%%%%%%%%%%%%%%%%%%%%%%%%%%%%%%%%%%%%%%%%%%%%%%%%%
\begin{example} 
\label{ex:hintikka:mono:forest}
%%%%%%%%%%%%%%%%%%%%%%%%%%%%%%%%%%%%%%%%%%%%%%%%%%%%%%%%%%%%%%%%%%%%%%%%%%%
Let $\A$ be the category of 
presheaves over a forest, say $\A = \presh{\forest}$.

We have seen that $\A$ is arboreal, thus wooded
(Theorem~\ref{thm:arboreal:presh}). 
Recall that, since $\A$ is a topos,
(epis, monos) is the unique proper factorisation system on $\A$,
and this factorisation system is stable (see \S\ref{sec:prelim:facto}).
Moreover, $\A$ is lfp.
In fact, $\A \cong \Mod(\theory(\forest^\op))$,
where $\theory(\forest^\op)$ is the cartesian theory
of functors $\forest^\op \to \Set$
(see Example~\ref{ex:prelim:coste:funct}).

We shall now see that the path embeddings of $\A$ are definable in the signature $\Sig(\forest^\op)$
of $\theory(\forest^\op)$.
First, recall that
the paths of $\A$ are finitely presentable
since they are precisely the representable functors, along with the initial object
(Lemma~\ref{l:paths-in-presh-forest}).
Second, we can obtain \emph{finite} formulae that
define the path embeddings of $\A$ in $\Sig(\forest^\op)$.
The case of initial paths is trivial,
since in a topos any arrow from the initial object is monic.
As for non-initial paths,
recall that $\Sig(\forest^\op)$ has one sort $p$ for each $p \in \forest$,
and one function symbol $f_{(q \leq p)} \colon p \to q$ for each $q \leq p$ in $\forest$.
The representable functor
$\forest\funct{-,p}$
corresponds
to the finitely presentable model
$\FG{x:p \mid \True}$. 
Then, for each $p \in \forest$ there is a
finite formula $\FMono[p](x)$ in $\Sig(\forest^\op)$
such that the following are equivalent
in $\Mod(\theory(\forest^\op))$:
\begin{enumerate}[(i)]
\item 
$h \colon \FG{x:p \mid \True} \to M$ is a monomorphism.

\item
\(
  M
  \models
  \FMono[p](c)
\),
where $h$ takes $x$ to $c \in M(p)$.
\end{enumerate}

\noindent
(See \S\ref{sec:emb:presh} below for details.)
Hence condition~\ref{item:hintikka:interp:fin:emb}
of Corollary~\ref{cor:hintikka:interp:fin} holds.
Furthermore, condition~\ref{item:hintikka:interp:fin:path}
of the latter
is equivalent to $\forest$ being finitely branching.
In any case, Theorem~\ref{thm:hintikka:wooded:bfe} applies
and so (identifying $\Mod(\theory(\forest^\op))$ with $\A$) we get
\[
\begin{array}{l l l}
  \text{$M,N \in \Mod(\theory(\forest^\op))$
  equivalent in $\Lang_{\infty}(\Sig(\forest^\op))$}
& \longimp
& \text{$M \bisim N$ in $\A$}.
\end{array}
\]

Arboreal categories of the form $\A = \presh{\forest}$ appear
in the finitely accessible arboreal adjunction of Example~\ref{ex:Diaconescu-cover}.
Let $\cat D$ be a small category, and let
$\forest = \forest(\cat D)$ be the forest consisting of 
finite sequences $s$ of composable arrows
\[
\begin{tikzcd}
  a_0
  \arrow{r}{k_1}
& a_1
  \arrow[dashed]{r}
& a_{n-1}
  \arrow{r}{k_{n}}
& a_n
\end{tikzcd}
\]

\noindent
in $\cat D$,
as in Example~\ref{ex:Diaconescu-cover}.
Via the functor $\pi \colon \forest \to \cat D$
(which takes $s$ as above to $a_n$),
we obtain a finitely accessible arboreal adjunction
\[
\begin{tikzcd}
  \presh{\forest}
  \arrow[bend left=25]{r}{\eadj\pi}
  \arrow[phantom]{r}[description]{\textnormal{\footnotesize{$\bot$}}}
& \presh{\cat D}
  \arrow[bend left=25]{l}{\ladj\pi}
\end{tikzcd}
\]

\noindent
to which Corollary~\ref{cor:hintikka:wooded:R-bfe} applies.
Thus
\[
\begin{array}{l l l}
  \text{$X,Y \in \presh{\cat D}$ equivalent in $\Lang_{\infty}(\Sig(\cat D^\op))$}
& \longimp
& X \bisim_{\ladj\pi} Y,
\end{array}
\]

\noindent
this time identifying $\presh{\cat D}$ with $\Mod(\theory(\cat D^\op))$.
\qed
%%%%%%%%%%%%%%%%%%%%%%%%%%%%%%%%%%%%%%%%%%%%%%%%%%%%%%%%%%%%%%%%%%%%%%%%%%%
\end{example}
%%%%%%%%%%%%%%%%%%%%%%%%%%%%%%%%%%%%%%%%%%%%%%%%%%%%%%%%%%%%%%%%%%%%%%%%%%%

%%%%%%%%%%%%%%%%%%%%%%%%%%%%%%%%%%%%%%%%%%%%%%%%%%%%%%%%%%%%%%%%%%%%%%%%%%%
\section{Formulae for embeddings}
\label{sec:emb}
%%%%%%%%%%%%%%%%%%%%%%%%%%%%%%%%%%%%%%%%%%%%%%%%%%%%%%%%%%%%%%%%%%%%%%%%%%%

Let $\Sig$ be a signature and
consider a finitely accessible wooded adjunction
\[
\begin{tikzcd}
  \C
  \arrow[bend left=25]{r}{\Ladj}
  \arrow[phantom]{r}[description]{\textnormal{\footnotesize{$\bot$}}}
& \E
  \arrow[bend left=25]{l}{R}
\end{tikzcd}
\]

\noindent
where the extensional category $\E$
is of the form $\Mod(\theory)$ for some cartesian theory~$\theory$ in~$\Sig$.
By Corollary~\ref{cor:hintikka:wooded:R-bfe},
if the path embeddings of the wooded category~$\C$ are definable
(in some signature), then
\begin{equation*}
\begin{array}{l l l}
  \text{$M,N \in \E$ equivalent in $\Lang_{\infty}(\Sig)$}
& \longimp
& M \bisim_R N.
\end{array}
\end{equation*}

Our main goal is now to establish the latter implication 
under the assumptions of Theorem~\ref{thm:path:main}, namely
that the adjunction $\Ladj \colon \C \inadj \E \cocolon R$ 
detects path embeddings in the sense of Definition~\ref{def:path:detection-path-emb}:
\begin{itemize}
\item
A morphism $f \colon P \to a$ in $\C$ is an embedding
if, and only if, $\Ladj f \colon \Ladj P \to \Ladj a$
is an embedding of $\Sig$-structures in $\E$.
\end{itemize}

To this end, in this section we show that the embeddings of structures 
in $\E$ are definable in the signature $\Sig$,
provided their domains are finitely presentable
(recall from Proposition~\ref{p:lfp-morphisms-characterisation}
that $\Ladj$ preserves finitely presentable objects since its right
adjoint $R$ is finitary).
We shall thus obtain formulae similar to those
of condition~\ref{item:hintikka:def-path-emb:formulae}
in Definition~\ref{def:hintikka:def-path-emb}:
\begin{itemize}
\item
For each finitely presentable model $\FG{\vec x \mid \varphi}$ of $\E$, 
there is a (possibly infinite)
formula $\FEmb\FG{\vec x | \varphi}(\vec x)$ in $\Sig$
such that for all $M \in \E$
and all $h \colon \FG{\vec x \mid \varphi} \to M$
taking $\vec x$ to $\vec b \in \I{\vec x \mid \varphi}_M$,
\[
\begin{array}{l !{\quad\longiff\quad} l}
  \text{$h$ is an embedding of $\Sig$-structures}
& M \models \FEmb\FG{\vec x \mid \varphi}(\vec b).
\end{array}
\]
\end{itemize}

But this will only lead us halfway between Corollary~\ref{cor:hintikka:wooded:R-bfe}
and Theorem~\ref{thm:path:main},
since given $P, a \in \C$,
we obtain a formula $\FEmb[\Ladj P]$ in $\Sig$ and over $\Ladj a \in \E$,
while Corollary~\ref{cor:hintikka:wooded:R-bfe}
asks for a formula in a signature for $\C$, and over $a \in \C$.
This second step will be made in~\S\ref{sec:wc} with the help
of Hodges' \emph{word-constructions}.

Returning to the formulae $\FEmb\FG{\vec x \mid \varphi}(\vec x)$ defining embeddings
of structures in $\E$,
let us first consider the special case where $\E=\Struct(\sig)$ for a \emph{finite}
(mono-sorted) purely relational signature~$\sig$.
In this case, it is well known that we can define \emph{finite} formulae
$\FEmb\FG{\vec x \mid \varphi}(\vec x)$ by considering appropriate conjunctions
of negated atoms, as we now recall.
To start with, note that  
the structure $\FG{\vec x \mid \varphi}$ in $\Struct(\sig)$,
with $\vec x = x_1,\dots,x_n$,
has carrier $\{\const x_1,\dots,\const x_n\}$
where the elements $\const x_i$'s may not be pairwise distinct
(cf.\ Remark~\ref{rem:finite-carrier-fp-purely-rel-empty-th}).
Now, consider a homomorphism 
\[
h \colon \FG{\vec x \mid \varphi} \to M
\]
in $\Struct(\sig)$ 
and assume that $h$ is induced by $\vec a \in \I{\vec x \mid \varphi}_M$. 
Recall from Lemma~\ref{lem:coste:fp:const} that $h(\const x_i) = a_i$
for each $i = 1,\dots,n$, and so $h \colon \FG{\vec x \mid \varphi} \to M$
is an embedding of $\sig$-structures precisely when for every atomic formula
$\vec x \sorting \atom$, we have
\[
\begin{array}{l l l}
  M \models \atom(a_1,\dots,a_n)
& \longimp
& \FG{\vec x \mid \varphi} \models \atom(\const x_1,\dots,\const x_n).
\end{array}
\]

\noindent
In other words, $h$ is an embedding of $\sig$-structures precisely when
$M \models \FEmb^{\sig}\FG{\vec x \mid \varphi}(\vec a)$,
where $\FEmb^{\sig}\FG{\vec x \mid \varphi}(\vec x)$ is the \emph{finite} formula
\begin{equation}
\label{eq:emb:sig}
  \FEmb^{\sig}\FG{\vec x \mid \varphi}(\vec x)
  ~\deq~
  \mathord{\bigwedge}_{\substack{
  \text{$\vec x \sorting \atom$ atomic}
  \\
  \FG{\vec x \mid \varphi} \models \lnot \atom(\const x_1,\dots,\const x_n)
  }}
  \lnot \atom(\vec x).
\end{equation}

The finiteness of the formula $\FEmb^{\sig}\FG{\vec x \mid \varphi}(\vec x)$
relies on the facts that $\sig$ is a finite relational signature and
the finitely presentable structure $\FG{\vec x \mid \varphi}$ has finite carrier.
If $\sig$ was an infinite (purely relational) signature,
the above formula $\FEmb^{\sig}\FG{\vec x \mid \varphi}(\vec x)$
would still define embeddings, but would be infinite.

However, this approach does not extend to 
arbitrary cartesian theories $\theory$, even in a finite relational signature $\sig$.
For example, $\theory$ could force a relation symbol to be functional,
and this could lead to the structure $\FG{\vec x \mid \varphi}$ being infinite.
What is more, the carrier of $\FG{\vec x \mid \varphi}$ does not admit a simple
description for non-empty theories~$\theory$.
This is the reason why, in the remainder of this section,
we need to adopt a different approach to define the formulae
$\FEmb\FG{\vec x \mid \varphi}(\vec x)$.

We handle the general case of $\E = \Mod(\theory)$ for $\theory$
an arbitrary cartesian theory in~\S\ref{sec:emb:at}.
This invloves some technicalities,
for which we heavily rely on the material of~\S\ref{sec:coste}.
In~\S\ref{sec:emb:presh}, we use the machinery of \S\ref{sec:emb:at}
to provide the missing proofs for Example~\ref{ex:hintikka:mono:forest}.

The apparent technical gap between the cases of $\Struct(\sig)$ and $\Mod(\theory)$
will be discussed in~\S\ref{sec:fact} below,
where we show that that even 
when $\theory$ is a cartesian theory in a finite relational signature $\sig$,
an adjunction $\C \inadj \Mod(\theory)$ as in Theorem~\ref{thm:path:main}
may not reduce to the simpler case of $\C \inadj \Struct(\sig)$.

%%%%%%%%%%%%%%%%%%%%%%%%%%%%%%%%%%%%%%%%%%%%%%%%%%%%%%%%%%%%%%%%%%%%%%%%%%%
\subsection{$\At$-reflecting homomorphisms}
\label{sec:emb:at}
%%%%%%%%%%%%%%%%%%%%%%%%%%%%%%%%%%%%%%%%%%%%%%%%%%%%%%%%%%%%%%%%%%%%%%%%%%%
Fix a cartesian theory $\theory$ in a signature $\Sig$,
and write $\cat T$ for the syntactic category of $\theory$.
Let $\At$ be a set of atomic formulae-in-context,
i.e.\ each element of $\At$ is a formula-in-context $\vec x \sorting \atom$
where $\atom$ is atomic (cf.~\S\ref{sec:prelim:struct}).
We say that a homomorphism $h \in \Mod(\theory)\funct{M,N}$
is \emph{$\At$-reflecting} when for all $(\vec x \sorting \atom) \in \At$,
\[
\begin{array}{l l l}
  h(\vec a) \in \I{\vec x \mid \atom}_N
& \longimp
& \vec a \in \I{\vec x \mid \atom}_M.
\end{array}
\]

%%%%%%%%%%%%%%%%%%%%%%%%%%%%%%%%%%%%%%%%%%%%%%%%%%%%%%%%%%%%%%%%%%%%%%%%%%%
\begin{example}
\label{ex:emb:at:mono}
%%%%%%%%%%%%%%%%%%%%%%%%%%%%%%%%%%%%%%%%%%%%%%%%%%%%%%%%%%%%%%%%%%%%%%%%%%%
If $\At$ consists of all (sorted) formulae of the form
$(x,y \sorting x \Eq y)$,
then the $\At$-reflecting homomorphisms of $\Mod(\theory)$
are exactly the monomorphisms.
Just observe that the full inclusion $\Mod(\theory) \into \Struct(\Sig)$
preserves and reflects monos
by Lemma~\ref{lem:coste:mod:filtcolim},
and the monos of $\Struct(\Sig)$
are exactly the homomorphisms that are sortwise injective 
(see Example~\ref{ex:prelim:fact:struct}).
%%%%%%%%%%%%%%%%%%%%%%%%%%%%%%%%%%%%%%%%%%%%%%%%%%%%%%%%%%%%%%%%%%%%%%%%%%%
\end{example}
%%%%%%%%%%%%%%%%%%%%%%%%%%%%%%%%%%%%%%%%%%%%%%%%%%%%%%%%%%%%%%%%%%%%%%%%%%%

%%%%%%%%%%%%%%%%%%%%%%%%%%%%%%%%%%%%%%%%%%%%%%%%%%%%%%%%%%%%%%%%%%%%%%%%%%%
\begin{example}
\label{ex:emb:at:emb}
%%%%%%%%%%%%%%%%%%%%%%%%%%%%%%%%%%%%%%%%%%%%%%%%%%%%%%%%%%%%%%%%%%%%%%%%%%%
If $\At$ consists of all atomic formulae-in-context,
then the $\At$-reflecting morphisms of $\Mod(\theory)$
are exactly the embeddings of $\Sig$-structures,
i.e.\ the homomorphisms whose image under the inclusion
$\Mod(\theory) \into \Struct(\Sig)$ is a strong monomorphism.

In fact, as we shall see in Remark \ref{rem:emb:strong},
every strong monomorphism $m$ in $\Mod(\theory)$ is an embedding of $\Sig$-structures.
But recall from
Remark~\ref{rem:path:structemb}.\ref{item:path:structemb:strong}
that the converse is not true,
even if we assume that the domain of $m$ is finitely presentable.
%%%%%%%%%%%%%%%%%%%%%%%%%%%%%%%%%%%%%%%%%%%%%%%%%%%%%%%%%%%%%%%%%%%%%%%%%%%
\end{example}
%%%%%%%%%%%%%%%%%%%%%%%%%%%%%%%%%%%%%%%%%%%%%%%%%%%%%%%%%%%%%%%%%%%%%%%%%%%

%%%%%%%%%%%%%%%%%%%%%%%%%%%%%%%%%%%%%%%%%%%%%%%%%%%%%%%%%%%%%%%%%%%%%%%%%%%
\begin{notation}
\label{not:formemb}
%%%%%%%%%%%%%%%%%%%%%%%%%%%%%%%%%%%%%%%%%%%%%%%%%%%%%%%%%%%%%%%%%%%%%%%%%%%
Given $\OI{\vec x \mid \varphi} \in \cat T$,
let
\[
\begin{array}{l l l}
  \iota_\varphi(\vec x,\vec x')
& \deq
& \varphi(\vec x) ~\land~ \vec x \Eq \vec x'
  ~.
\end{array}
\]

\noindent
This induces a morphism
\(
  \MI{\vec x, \vec x' \mid \iota_\varphi}
  \colon
  \OI{\vec x \mid \varphi}
  \to
  \OI{\vec x \mid \True}
\).
We write
$\imath_{\varphi} \colon \FG{\vec x \mid \True} \to \FG{\vec x \mid \varphi}$
for the homomorphism in $\Mod(\theory)$ corresponding,
under Theorem~\ref{thm:coste:synt},
to the natural transformation
$\yoneda \MI{\vec x, \vec x' \mid \iota_\varphi}$
(recall that $\yoneda$ is contravariant).
%%%%%%%%%%%%%%%%%%%%%%%%%%%%%%%%%%%%%%%%%%%%%%%%%%%%%%%%%%%%%%%%%%%%%%%%%%%
\end{notation}
%%%%%%%%%%%%%%%%%%%%%%%%%%%%%%%%%%%%%%%%%%%%%%%%%%%%%%%%%%%%%%%%%%%%%%%%%%%

%%%%%%%%%%%%%%%%%%%%%%%%%%%%%%%%%%%%%%%%%%%%%%%%%%%%%%%%%%%%%%%%%%%%%%%%%%%
\begin{remark} 
\label{rem:emb:epi}
%%%%%%%%%%%%%%%%%%%%%%%%%%%%%%%%%%%%%%%%%%%%%%%%%%%%%%%%%%%%%%%%%%%%%%%%%%%
The arrows
$\imath_{\varphi} \colon \FG{\vec x \mid \True} \to \FG{\vec x \mid \varphi}$
are epimorphisms in $\Mod(\theory)$.
This follows from the fact that the corresponding natural transformations
$\yoneda\iota_\varphi \colon \yoneda\OI{\vec y \mid \True}
\to
\yoneda\OI{\vec y \mid \varphi}$
are epimorphisms in $\lex\funct{\cat T,\Set}$.
That is, whenever $M \in \Mod(\theory)$ and
$\lambda,\kappa \colon \yoneda\OI{\vec y \mid \varphi} \to F_M$
are such that
\begin{equation}
\label{eq:emb:strong}
\begin{array}{*{3}{l}}
  \lambda \comp \yoneda\iota_\varphi
& =
& \kappa \comp \yoneda\iota_\varphi,
\end{array}
\end{equation}
we have $\lambda = \kappa$.
To see this, set
$\vec a= \lambda_{\OI{\vec y \mid \varphi}}(\id)$
and
$\vec b= \kappa_{\OI{\vec y \mid \varphi}}(\id)$.
With the notations of~\S\ref{sec:coste:fp} we have
$\lambda = F_M(-)(\vec a)$
and
$\kappa = F_M(-)(\vec b)$,
and thus eq.~\eqref{eq:emb:strong}
entails
\[
\begin{array}{l l l}
  F_M(\iota_\varphi)(\vec a)
& =
& F_M(\iota_\varphi)(\vec b).
\end{array}
\]
But 
$F_M(\iota_\varphi) \colon \I{\vec y \mid \varphi}_M \to \I{\vec y \mid \True}_M$
is
the function with graph
\[
\begin{tikzcd}
  \I{\vec y,\vec y' \mid \varphi(\vec y) ~\land~ \vec y \Eq \vec y'}_M
  \arrow[hook]{r}
& \I{\vec y \mid \varphi}_M
  \times
  \I{\vec y \mid \True}_M,
\end{tikzcd}
\]
hence $\vec a = \vec b$ and so $\lambda = \kappa$.
%%%%%%%%%%%%%%%%%%%%%%%%%%%%%%%%%%%%%%%%%%%%%%%%%%%%%%%%%%%%%%%%%%%%%%%%%%%
\end{remark}
%%%%%%%%%%%%%%%%%%%%%%%%%%%%%%%%%%%%%%%%%%%%%%%%%%%%%%%%%%%%%%%%%%%%%%%%%%%

%%%%%%%%%%%%%%%%%%%%%%%%%%%%%%%%%%%%%%%%%%%%%%%%%%%%%%%%%%%%%%%%%%%%%%%%%%%
\begin{proposition}
\label{prop:emb:formemb}
%%%%%%%%%%%%%%%%%%%%%%%%%%%%%%%%%%%%%%%%%%%%%%%%%%%%%%%%%%%%%%%%%%%%%%%%%%%
The following are equivalent
for all $h \in \Mod(\theory)\funct{N,M}$:
\begin{enumerate}[(i)]
\item 
\label{item:formemb:hom}
$h$ is $\At$-reflecting.

\item
\label{item:formemb:diag}
For each $(\vec y \sorting \atom) \in \At$
and each commutative square as in the left-hand side below,
\begin{equation}
\label{eq:formemb:diag:ass}
\begin{array}{c !{\qquad} c}

\begin{tikzcd}
  \FG{\vec y \mid \True}
  \arrow{r}[above]{\imath_\atom}
  \arrow{d}[left]{\ell}
& \FG{\vec y \mid \atom}
  \arrow{d}[right]{k}
\\
  N
  \arrow{r}[below]{h}
& M
\end{tikzcd}

&

\begin{tikzcd}
  \FG{\vec y \mid \True}
  \arrow{r}[above]{\imath_\atom}
  \arrow{d}[left]{\ell}
& \FG{\vec y \mid \atom}
  \arrow{dl}[right, xshift=5pt]{d}
\\
  N
\end{tikzcd}

\end{array}
\end{equation}

\noindent
there exists $d \colon \FG{\vec y \mid \atom} \to N$
such that the rightmost triangle above commutes.%
\footnote{In view of Remark~\ref{rem:emb:epi},
this is equivalent to saying that $d$ is a diagonal filler.}
\end{enumerate}
%%%%%%%%%%%%%%%%%%%%%%%%%%%%%%%%%%%%%%%%%%%%%%%%%%%%%%%%%%%%%%%%%%%%%%%%%%%
\end{proposition}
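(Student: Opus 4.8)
The plan is to prove the equivalence by translating the purely diagrammatic condition~\ref{item:formemb:diag} into a statement about tuples and their membership in interpretation sets, at which point it becomes a verbatim restatement of the definition of $\At$-reflection. The dictionary between arrows out of finitely presentable models and tuples is provided by Lemma~\ref{lem:coste:lp:hom}: for any $\OI{\vec y \mid \chi} \in \cat T$ and any model $M$, homomorphisms $\FG{\vec y \mid \chi} \to M$ correspond bijectively to tuples $\vec a \in \I{\vec y \mid \chi}_M$. The whole argument is a bookkeeping of these bijections, so I would fix an atom $(\vec y \sorting \atom) \in \At$ and unpack the data of a commutative square as in the left-hand diagram of~\eqref{eq:formemb:diag:ass}, treating $\atom$ as ranging over $\At$ at the very end.

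First I would identify the arrows. Under Lemma~\ref{lem:coste:lp:hom}, the arrow $\ell \colon \FG{\vec y \mid \True} \to N$ corresponds to an arbitrary tuple $\vec a \in \I{\vec y \mid \True}_N$ (i.e.\ a tuple of elements of $N$ of the appropriate sorts), while $k \colon \FG{\vec y \mid \atom} \to M$ corresponds to a tuple $\vec c \in \I{\vec y \mid \atom}_M$. The crux is to compute the two composites around the square. For the top-right composite $k \comp \imath_\atom$, I would apply Lemma~\ref{lem:coste:fp:triangle} to the $\cat T$-morphism $\iota_\atom(\vec y, \vec y') = \atom(\vec y) \land \vec y \Eq \vec y'$ defining $\imath_\atom$ (Notation~\ref{not:formemb}); since this formula forces $\vec y' = \vec y$, precomposition with $\imath_\atom$ is simply the inclusion $\I{\vec y \mid \atom}_M \hookrightarrow \I{\vec y \mid \True}_M$, so that $k \comp \imath_\atom$ is induced by $\vec c$. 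For the bottom-left composite $h \comp \ell$, Lemma~\ref{lemma:coste:synt:homnat} gives that $h \comp \ell$ is induced by $h(\vec a)$. Because two arrows $\FG{\vec y \mid \True} \to M$ coincide exactly when they are induced by the same tuple, the square commutes if and only if $\vec c = h(\vec a)$; equivalently, for a given $\ell \leftrightarrow \vec a$ a commuting $k$ exists precisely when $h(\vec a) \in \I{\vec y \mid \atom}_M$, and then $k$ is forced.

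Next I would analyse the diagonal filler. A morphism $d \colon \FG{\vec y \mid \atom} \to N$ corresponds to a tuple $\vec a' \in \I{\vec y \mid \atom}_N$, and the same computation as above shows $d \comp \imath_\atom = \ell$ holds if and only if $\vec a' = \vec a$. Hence a filler of the upper triangle exists if and only if $\vec a$ itself lies in $\I{\vec y \mid \atom}_N$, i.e.\ $N \models \atom(\vec a)$. Moreover, since $\imath_\atom$ is an epimorphism (Remark~\ref{rem:emb:epi}), any such $d$ with $d \comp \imath_\atom = \ell$ automatically satisfies $h \comp d = k$, because $h \comp d \comp \imath_\atom = h \comp \ell = k \comp \imath_\atom$; thus it is a genuine diagonal filler, which is the content of the footnote to~\ref{item:formemb:diag}. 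Assembling these translations, condition~\ref{item:formemb:diag} for the atom $\atom$ reads: for every tuple $\vec a$ in $N$ with $h(\vec a) \in \I{\vec y \mid \atom}_M$ we have $\vec a \in \I{\vec y \mid \atom}_N$, which is exactly $\At$-reflection at $\atom$. Quantifying over all $(\vec y \sorting \atom) \in \At$ then yields \ref{item:formemb:hom}~$\Leftrightarrow$~\ref{item:formemb:diag}.

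The only delicate points, and where I would take the most care, are the two composite computations: verifying via Lemma~\ref{lem:coste:fp:triangle} that precomposition with $\imath_\atom$ really is the inclusion of $\I{\vec y \mid \atom}_M$ into $\I{\vec y \mid \True}_M$, and confirming via Lemma~\ref{lemma:coste:synt:homnat} that $h$ acts on the inducing tuple by applying $h$ componentwise. Everything else is routine manipulation of the bijections, together with the epi property of $\imath_\atom$ used to upgrade the upper-triangle identity $d \comp \imath_\atom = \ell$ to a full diagonal fill.
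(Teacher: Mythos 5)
Your proposal is correct and follows essentially the same route as the paper's proof: both translate the square and the filler into tuple conditions via the arrow–tuple bijection (Lemma~\ref{lem:coste:lp:hom}), use Lemma~\ref{lem:coste:fp:triangle} to see that precomposition with $\imath_\atom$ is the inclusion $\I{\vec y \mid \atom} \hookrightarrow \I{\vec y \mid \True}$ (so a filler exists iff $\vec a \in \I{\vec y \mid \atom}_N$), and use Lemma~\ref{lemma:coste:synt:homnat} to identify the bottom composite with the arrow induced by $h(\vec a)$. The only difference is presentational: you package everything as a single dictionary making \ref{item:formemb:diag} a verbatim restatement of $\At$-reflection, whereas the paper carries out the same computations (partly in explicit natural-transformation form) as two separate implications.
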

%%%%%%%%%%%%%%%%%%%%%%%%%%%%%%%%%%%%%%%%%%%%%%%%%%%%%%%%%%%%%%%%%%%%%%%%%%%

Before proving Proposition~\ref{prop:emb:formemb}, let us make a couple of comments.

%%%%%%%%%%%%%%%%%%%%%%%%%%%%%%%%%%%%%%%%%%%%%%%%%%%%%%%%%%%%%%%%%%%%%%%%%%%
\begin{remark}
\label{rem:emb:strong}
%%%%%%%%%%%%%%%%%%%%%%%%%%%%%%%%%%%%%%%%%%%%%%%%%%%%%%%%%%%%%%%%%%%%%%%%%%%
It follows from Proposition~\ref{prop:emb:formemb} that any strong monomorphism in
$\Mod(\theory)$ is an embedding of $\Sig$-structures.
Just observe that, if $h$ is a strong monomorphism and $\atom$ is any atomic formula,
each commutative square as in the left-hand side of eq.~\eqref{eq:formemb:diag:ass}
admits a diagonal filler because $\imath_\atom$ is an epimorphism
(see Remark~\ref{rem:emb:epi}).
%%%%%%%%%%%%%%%%%%%%%%%%%%%%%%%%%%%%%%%%%%%%%%%%%%%%%%%%%%%%%%%%%%%%%%%%%%%
\end{remark}
%%%%%%%%%%%%%%%%%%%%%%%%%%%%%%%%%%%%%%%%%%%%%%%%%%%%%%%%%%%%%%%%%%%%%%%%%%%

%%%%%%%%%%%%%%%%%%%%%%%%%%%%%%%%%%%%%%%%%%%%%%%%%%%%%%%%%%%%%%%%%%%%%%%%%%%
\begin{remark}[Comparison with $\omega$-pure morphisms]
\label{rem:emb:pure}
%%%%%%%%%%%%%%%%%%%%%%%%%%%%%%%%%%%%%%%%%%%%%%%%%%%%%%%%%%%%%%%%%%%%%%%%%%%
In the setting of Example~\ref{ex:emb:at:emb},
condition~\ref{item:formemb:diag}
in Proposition~\ref{prop:emb:formemb}
is reminiscent of the notion of ($\omega$-)\emph{pure morphism};
see e.g.~\cite[Proposition 5.15, \S 5A]{ar94book}.

Recall from~\cite[\S 2D]{ar94book}
that a homomorphism of structures $h \colon N \to M$
is \emph{$\omega$-pure}
if for every commutative square as in the
%l.-h.s.\
left-hand side
below
\[
\begin{array}{c !{\qquad} c}

\begin{tikzcd}
  \FG{\vec y \mid \psi}
  \arrow{r}
  \arrow{d}
& \FG{\vec x \mid \varphi}
  \arrow{d}
\\
  N
  \arrow{r}[below]{h}
& M
\end{tikzcd}

&

\begin{tikzcd}
  \FG{\vec y \mid \psi}
  \arrow{r}
  \arrow{d}
& \FG{\vec x \mid \varphi}
  \arrow{dl}[right, xshift=5pt]{d}
\\
  N
\end{tikzcd}

\end{array}
\]

\noindent
there is a
$d \colon \FG{\vec x \mid \varphi} \to N$
such that the
triangle in the right-hand side above commutes.

Every $\omega$-pure morphism is a regular monomorphism
\cite[Proposition 2.31]{ar94book}.
Moreover, by~\cite[Proposition 5.15]{ar94book},
a homomorphism $h$ as above is $\omega$-pure if and only if
it (preserves and) reflects \emph{positive-primitive}
formulae, i.e.\ formulae of the form
$(\exists \vec y)\varphi(\vec x,\vec y)$
where $\varphi$ is a finite conjunction of atomic formulae.
In fact, a direct inspection of the proof of~\cite[Proposition 5.15]{ar94book}
shows that one only has to consider commutative squares of the form
\[
\begin{tikzcd}
  \FG{\vec x \mid \True}
  \arrow{r}
  \arrow{d}
& \FG{\vec x, \vec y \mid \varphi}
  \arrow{d}
\\
  N
  \arrow{r}[below]{h}
& M
\end{tikzcd}
\]

\noindent
where $\varphi(\vec x,\vec y)$ is a (finite) conjunction of atomic formulae.

In the setting of Example~\ref{ex:emb:at:emb},
it follows from Proposition~\ref{prop:emb:formemb}
that $\omega$-pure morphisms are embeddings of structures.
But note that Proposition~\ref{prop:emb:formemb}
only considers commutative squares whose top
arrows are of the form
$\FG{\vec x \mid \True}  \to \FG{\vec x \mid \atom}$,
where the two structures 
$\FG{\vec x \mid \True}$ and $\FG{\vec x \mid \atom}$
have the same generators
and $\atom$ is atomic.
%%%%%%%%%%%%%%%%%%%%%%%%%%%%%%%%%%%%%%%%%%%%%%%%%%%%%%%%%%%%%%%%%%%%%%%%%%%
\end{remark}
%%%%%%%%%%%%%%%%%%%%%%%%%%%%%%%%%%%%%%%%%%%%%%%%%%%%%%%%%%%%%%%%%%%%%%%%%%%

The proof of Proposition~\ref{prop:emb:formemb}
is similar in spirit to that of~\cite[Proposition 5.15]{ar94book}
for $\omega$-pure morphisms in $\Struct(\Sig)$.
But in the case of $\Mod(\theory)$,
we have little direct information about the elements
of finitely presentable models $\FG{\vec x \mid \varphi}$.
Instead, we rely on Theorem~\ref{thm:coste:synt}
and the tools of~\S\ref{sec:coste:fp},
in particular Lemmas~\ref{lem:coste:lp:hom}
and~\ref{lem:coste:fp:triangle}.

Recall from~\S\ref{sec:coste:synt}
that given $L \in \Mod(\theory)$ and
\(
  \MI{\vec x,\vec y \mid \theta}
  \in
  \cat T\funct{\OI{\vec x \mid \varphi},\OI{\vec y \mid \psi}}
\),
the function
\(
  F_L(\MI{\vec x,\vec y \mid \theta})
  \colon
  \I{\vec x \mid \varphi}_L \to \I{\vec y \mid \psi}_L
\)
has graph
\[
\begin{tikzcd}
  \I{\vec x,\vec y \mid \theta}_L
  \arrow[hook]{r}
& \I{\vec x \mid \varphi}_L
  \times
  \I{\vec y \mid \psi}_L.
\end{tikzcd}
\]

\noindent
In particular,
for a morphism $\MI{\vec x,\vec x' \mid \iota_\atom}$
as in Notation~\ref{not:formemb}, we simply have
\[
\begin{array}{l l c c c}
  F_L(\MI{\vec x,\vec x' \mid \iota_\atom})
& :
& \I{\vec y \mid \atom}_L
& \longto
& \I{\vec y \mid \True}_L
\\

&
& \vec b
& \longmapsto
& \vec b
\end{array}
\]

%%%%%%%%%%%%%%%%%%%%%%%%%%%%%%%%%%%%%%%%%%%%%%%%%%%%%%%%%%%%%%%%%%%%%%%%%%%
\begin{proof}[Proof of Proposition~\ref{prop:emb:formemb}]
%%%%%%%%%%%%%%%%%%%%%%%%%%%%%%%%%%%%%%%%%%%%%%%%%%%%%%%%%%%%%%%%%%%%%%%%%%%
  $\text{\ref{item:formemb:hom}}
  \imp
  \text{\ref{item:formemb:diag}}
$.
Suppose that $h \colon N \to M$ is $\At$-reflecting
and consider
a commutative square as in the left-hand side of eq.~\eqref{eq:formemb:diag:ass}:
\[
\begin{tikzcd}
  \FG{\vec y \mid \True}
  \arrow{r}[above]{\imath_\atom}
  \arrow{d}[left]{\ell}
& \FG{\vec y \mid \atom}
  \arrow{d}[right]{k}
\\
  N
  \arrow{r}[below]{h}
& M
\end{tikzcd}
\]

\noindent
Assume that $\ell$ takes $\vec y$ to $\vec a \in \I{\vec y \mid \True}_N$,
and $k$ takes $\vec y$ to $\vec b \in \I{\vec y \mid \atom}_M$.
We have to provide 
a $d \colon \FG{\vec y \mid \atom} \to N$ such that
$d \comp \imath_\atom = \ell$.
By Lemma~\ref{lem:coste:fp:triangle},
this amounts to showing that
\[
\begin{array}{l l l}
  N
& \models
& (\exists \vec y)
  \iota_\atom(\vec y,\vec a).
\end{array}
\]

\noindent
By definition of the formula $\iota_\atom$
(Notation~\ref{not:formemb}),
this is equivalent to proving that
${\vec a \in \I{\vec y \mid \atom}_N}$.
Hence, 
since $\vec b \in \I{\vec y \mid \atom}_M$ and
$h$ is $\At$-reflecting, 
it suffices to show that $h(\vec a) = \vec b$.

We reformulate the above diagram
in terms of the natural transformations given by 
Theorem~\ref{thm:coste:synt}.
We already know that $\imath_\atom$ is induced by the natural
transformation
\[
\begin{array}{*{5}{l}}
  \yoneda\iota_\atom
& :
& \yoneda\OI{\vec y \mid \True}
& \longto
& \yoneda\OI{\vec y \mid \atom}.
\end{array}
\]

\noindent
If
$\eta \colon F_N \to F_M$
is the natural transformation corresponding to
$h \colon N \to M$,
we get the following commutative diagram.
\begin{equation}
\label{eq:formemb:diag:nat}
\begin{tikzcd}[column sep=large]
  \yoneda\OI{\vec y \mid \True}
  \arrow{r}[above]{\yoneda\iota_\atom}
  \arrow{d}[left]{F_N(-)(\vec a)}
& \yoneda\OI{\vec y \mid \atom}
  \arrow{d}[right]{F_M(-)(\vec b)}
\\
  F_N
  \arrow{r}[below]{\eta}
& F_M
\end{tikzcd}
\end{equation}

Reasoning similarly as in the proof of Lemma~\ref{lem:coste:fp:triangle},
note that the upper path in eq.~\eqref{eq:formemb:diag:nat}
coincides with $F_M(-)(F_M(\iota_\atom)(\vec b))$.
Now, taking components at $\OI{\vec y \mid \True}$
and evaluating at the identity on $\OI{\vec y \mid \True}$,
the upper path 
yields $F_M(\iota_\atom)(\vec b)$ ($= \vec b$),
while the lower path yields
$\eta_{\OI{\vec y \mid \True}}(\vec a)$.
Hence
$\eta_{\OI{\vec y \mid \True}}(\vec a) = \vec b$,
and
we get $h(\vec a) = \vec b$
since by Lemma~\ref{lemma:coste:synt:homnat}
we have
\[
\begin{array}{*{5}{l}}
  h(\vec a)
& =
& \eta_{\OI{\vec y \mid \True}}(\vec a)
& \in
& \I{\vec y \mid \True}_M.
\end{array}
\]

$\text{\ref{item:formemb:diag}}
  \imp
  \text{\ref{item:formemb:hom}}$.
Consider a homomorphism $h \colon N \to M$
and a formula $(\vec y \sorting \atom) \in \At$.
Let
$\vec a \in \I{\vec y \mid \True}_N$
be such that
$h(\vec a) \in \I{\vec y \mid \atom}_M$.
We have to show that $\vec a \in \I{\vec y \mid \atom}_N$.

Write $\vec b$ for $h(\vec a)$
and let
$\eta \colon F_N \to F_M$ be the natural transformation
corresponding to $h \colon N \to M$.
We first show that diagram \eqref{eq:formemb:diag:nat} above commutes
with the present values of $\vec a$, $\vec b$ and $\eta$.
Thanks to the Yoneda lemma, we just have to show
that the two paths below
\[
\begin{tikzcd}[column sep=large]
  \yoneda\OI{\vec y \mid \True}\OI{\vec y \mid \True}
  \arrow{r}[above]{(\yoneda\iota_\atom)_{\OI{\vec y \mid \True}}}
  \arrow{d}[left]{F_N(-)(\vec a)}
& \yoneda\OI{\vec y \mid \atom}\OI{\vec y \mid \True}
  \arrow{d}[right]{F_M(-)(\vec b)}
\\
  F_N\OI{\vec y \mid \True}
  \arrow{r}[below]{\eta_{\OI{\vec y \mid \True}}}
& F_M\OI{\vec y \mid \True}
\end{tikzcd}
\]

\noindent
are equal on $\id \in \cat T\funct{\OI{\vec y \mid \True}, \OI{\vec y \mid \True}}$.
The upper path yields $F_M(\iota_\atom)(\vec b) = \vec b$,
while the lower path yields $\eta_{\OI{\vec y \mid \True}}(\vec a)$.
Hence we are done since 
$\eta_{\OI{\vec y \mid \True}}(\vec a) = h(\vec a) = \vec b$
by Lemma~\ref{lemma:coste:synt:homnat}.

Now, assumption~\ref{item:formemb:diag} gives some
$\delta \colon \yoneda\OI{\vec y \mid \atom} \to F_N$
such that
$\delta \comp \yoneda\iota_\atom = F_N(-)(\vec a)$.
By 
Lemma~\ref{lem:coste:fp:triangle},
this means that there is a tuple
$\vec c \in \I{\vec y \mid \atom}_N$
such that
$N \models \iota_\atom(\vec c, \vec a)$,
that is
$\vec a = \vec c \in \I{\vec y \mid \atom}_N$.
%%%%%%%%%%%%%%%%%%%%%%%%%%%%%%%%%%%%%%%%%%%%%%%%%%%%%%%%%%%%%%%%%%%%%%%%%%%
\end{proof}
%%%%%%%%%%%%%%%%%%%%%%%%%%%%%%%%%%%%%%%%%%%%%%%%%%%%%%%%%%%%%%%%%%%%%%%%%%%

Combining Proposition~\ref{prop:emb:formemb}
with Corollary~\ref{cor:coste:fp:triangle}
we obtain formulae that
detect when a homomorphism
with finitely presentable domain is $\At$-reflecting:

%%%%%%%%%%%%%%%%%%%%%%%%%%%%%%%%%%%%%%%%%%%%%%%%%%%%%%%%%%%%%%%%%%%%%%%%%%%
\begin{figure}
%%%%%%%%%%%%%%%%%%%%%%%%%%%%%%%%%%%%%%%%%%%%%%%%%%%%%%%%%%%%%%%%%%%%%%%%%%%
\begin{multline*}
  \FAt\FG{\vec x \mid \varphi}(\vec x)
  \deq
\\
  \hfill
  \bigwedge_{(\vec y \sorting \atom) \in \At}
  \
  \bigwedge_{
    \theta \in \cat T\funct{\OI{\vec x \mid \varphi}, \OI{\vec y \mid \True}}
  }
  (\forall \vec y)
  \left(
  \big( \atom(\vec y) ~\land~ \theta(\vec x,\vec y) \big)
  ~\limp~
  \mathord{\bigvee}_{
    \delta \in \cat T\funct{\OI{\vec x \mid \varphi}, \OI{\vec y \mid \atom}}
  }
  \delta(\vec x,\vec y)
  \right)
\end{multline*}
\caption{Formulae for $\At$-reflecting homomorphisms 
with finitely presentable domains (Corollary~\ref{cor:formemb}). 
\label{fig:formemb}}
\end{figure}

%%%%%%%%%%%%%%%%%%%%%%%%%%%%%%%%%%%%%%%%%%%%%%%%%%%%%%%%%%%%%%%%%%%%%%%%%%%
\begin{corollary}
\label{cor:formemb}
%%%%%%%%%%%%%%%%%%%%%%%%%%%%%%%%%%%%%%%%%%%%%%%%%%%%%%%%%%%%%%%%%%%%%%%%%%%
In $\Mod(\theory)$,
assume that $h \colon \FG{\vec x \mid \varphi} \to M$
is induced by $\vec a \in \I{\vec x \mid \varphi}_M$.
The following statements are equivalent:
\begin{enumerate}[(i)]
\item 
$h$ is $\At$-reflecting.

\item
\(
  M
  \models
  \FAt\FG{\vec x \mid \varphi}(\vec a)
\)
where the (possibly infinite) formula
$\FAt\FG{\vec x \mid \varphi}(\vec x)$
is defined as in Figure~\ref{fig:formemb}.
\end{enumerate}
%%%%%%%%%%%%%%%%%%%%%%%%%%%%%%%%%%%%%%%%%%%%%%%%%%%%%%%%%%%%%%%%%%%%%%%%%%%
\end{corollary}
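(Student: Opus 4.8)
The plan is to derive Corollary~\ref{cor:formemb} by combining the categorical characterisation of $\At$-reflecting morphisms in Proposition~\ref{prop:emb:formemb} with the logical description of triangles in Corollary~\ref{cor:coste:fp:triangle}; the whole argument is essentially an exercise in translating each ingredient of the filler condition into the signature $\Sig$ via Theorem~\ref{thm:coste:synt}. Fix $h \colon \FG{\vec x \mid \varphi} \to M$ induced by $\vec a$. By Proposition~\ref{prop:emb:formemb} (with $N = \FG{\vec x \mid \varphi}$), $h$ is $\At$-reflecting if, and only if, for each $(\vec y \sorting \atom) \in \At$ and each commutative square with top edge $\imath_\atom$, left edge $\ell \colon \FG{\vec y \mid \True} \to \FG{\vec x \mid \varphi}$ and bottom edge $h$, there is a diagonal filler $\FG{\vec y \mid \atom} \to \FG{\vec x \mid \varphi}$.

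First I would parametrise the squares. Using Lemma~\ref{lem:coste:lp:hom} and the contravariant Yoneda correspondence, a map $\ell$ as above is the same datum as a morphism $\theta \in \cat T\funct{\OI{\vec x \mid \varphi}, \OI{\vec y \mid \True}}$, while a map $k \colon \FG{\vec y \mid \atom} \to M$ closing the square on the right is the same datum as a tuple $\vec b \in \I{\vec y \mid \atom}_M$. The point is then to read off when the square commutes: applying Lemma~\ref{lem:coste:fp:triangle} to the two composites $\FG{\vec y \mid \True} \to M$, namely $h \comp \ell$ and $k \comp \imath_\atom$, shows that they send the generators of $\FG{\vec y \mid \True}$ respectively to the unique tuple determined by $\theta(\vec a,-)$ and to $\vec b$. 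Hence the square commutes exactly when $M \models \atom(\vec b) \land \theta(\vec a,\vec b)$, which identifies the squares over a fixed $\atom$ with precisely the pairs $(\theta,\vec b)$ making the premise of the inner implication in Figure~\ref{fig:formemb} true.

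Second I would express the existence of a diagonal filler logically. Since $\imath_\atom$ is an epimorphism (Remark~\ref{rem:emb:epi}), a filler exists precisely when there is a morphism $\FG{\vec y \mid \atom} \to \FG{\vec x \mid \varphi}$ making the lower triangle over $M$ commute, i.e.\ the situation governed by Corollary~\ref{cor:coste:fp:triangle} taken with $\OI{\vec y \mid \psi} = \OI{\vec y \mid \atom}$, with $k = h$ and with the map $\FG{\vec y \mid \atom} \to M$ induced by $\vec b$. That corollary rewrites the existence of such a morphism as $M \models \varphi(\vec a) \land \atom(\vec b) \land \bigvee\{\delta(\vec a,\vec b) \mid \delta \in \cat T\funct{\OI{\vec x \mid \varphi},\OI{\vec y \mid \atom}}\}$; and since $\varphi(\vec a)$ and $\atom(\vec b)$ already hold whenever the square commutes, the content of ``a filler exists'' reduces to $M \models \bigvee_\delta \delta(\vec a,\vec b)$, which is exactly the conclusion of the inner implication.

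Assembling the two translations, $h$ is $\At$-reflecting iff for every $\atom \in \At$, every $\theta$, and every $\vec b$ realising a commuting square (so $M \models \atom(\vec b)\land\theta(\vec a,\vec b)$) the filler exists (so $M \models \bigvee_\delta \delta(\vec a,\vec b)$); letting $\vec b$ range via the quantifier $(\forall \vec y)$ and forming the conjunction over $\atom$ and $\theta$ turns this into the single statement $M \models \FAt\FG{\vec x \mid \varphi}(\vec a)$ of Figure~\ref{fig:formemb}. The step I expect to demand the most care is the second one: matching the categorical diagonal-filler condition — which is really a \emph{factorisation of $\ell$ through $\imath_\atom$ internal to $\FG{\vec x \mid \varphi}$} — with a formula evaluated in $M$, making sure that the disjunction over all $\delta$ records this factorisation faithfully rather than merely some weaker realisability of $\vec b$ in $M$. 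Once the bookkeeping of $\theta$'s and $\delta$'s is fixed through the equivalence of Theorem~\ref{thm:coste:synt}, the remaining verifications are routine.
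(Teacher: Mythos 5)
Your overall route is the one the paper's (one\-/sentence) derivation indicates, and your first step is carried out correctly: via Lemma~\ref{lem:coste:lp:hom} and Lemma~\ref{lem:coste:fp:triangle}, the commutative squares over a fixed $(\vec y \sorting \atom)\in\At$ correspond exactly to the pairs $(\theta,\vec b)$ with $M\models \atom(\vec b)\land\theta(\vec a,\vec b)$. The gap is in your second step, at precisely the point you flag at the end but never discharge. The epimorphy of $\imath_\atom$ (Remark~\ref{rem:emb:epi}) gives only one direction of your ``precisely when'': if $d\comp\imath_\atom=\ell$ then $h\comp d=k$. The converse --- which is what you need in order to invoke Corollary~\ref{cor:coste:fp:triangle} over $M$ --- would require $h$ to be monic. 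Indeed, the filler condition of Proposition~\ref{prop:emb:formemb} is equivalent to the statement that $\theta$ factors through $\iota_\atom$ in $\cat T$, i.e.\ that $\theory$ proves $\theta\thesis_{\vec x,\vec y}\atom(\vec y)$; this is a condition on $\theta$ alone, independent of $M$, $\vec a$ and $\vec b$. By contrast, $M\models\bigvee_\delta\delta(\vec a,\vec b)$ with $\delta$ ranging over all of $\cat T\funct{\OI{\vec x\mid\varphi},\OI{\vec y\mid\atom}}$ says only that $\vec b$ is the $h$-image of \emph{some} element of $\I{\vec y\mid\atom}_{\FG{\vec x\mid\varphi}}$, possibly different from the element named by $\theta$. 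When $h$ identifies elements, these two conditions come apart, and your ``weaker realisability of $\vec b$ in $M$'' is exactly what the unrestricted disjunction records.

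This is fatal to the argument as written, not a matter of bookkeeping. Take $\sig=\{R\}$ with $R$ unary, $\theory$ the empty theory, $\At=\{(y\sorting R(y))\}$, and $N=\FG{x_1,x_2\mid R(x_1)}$, whose carrier is $\{\const x_1,\const x_2\}$ with $R^N=\{\const x_1\}$; let $M$ be the one-element structure $\{b\}$ with $R^M=\{b\}$ and $h\colon N\to M$ the homomorphism induced by $(b,b)$. Here $\cat T\funct{\OI{x_1,x_2\mid R(x_1)},\OI{y\mid\True}}$ consists of $\theta_1,\theta_2$ with $\theta_i=(R(x_1)\land y\Eq x_i)$, while $\cat T\funct{\OI{x_1,x_2\mid R(x_1)},\OI{y\mid R(y)}}$ consists of $\theta_1$ alone; the formula built with the unrestricted disjunction is therefore equivalent to $(R(x_1)\land R(x_2))\limp x_1\Eq x_2$, which \emph{holds} at $(b,b)$. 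Yet $h$ is not $\At$-reflecting, since $h(\const x_2)=b\in R^M$ while $\const x_2\notin R^N$ (the correct defining condition here is $\lnot R(x_2)$, as in eq.~\eqref{eq:emb:sig}). So with the disjunction taken over the whole hom-set --- which is how you take it, and which matches the letter of Figure~\ref{fig:formemb} --- the implication from (ii) to (i) fails. The repair is the one your own closing remark calls for: restrict the disjunction to those $\delta$ with $\iota_\atom\comp\delta=\theta$ in $\cat T$. For such $\delta$ one has $\theta$ and $\delta$ $\theory$-provably equivalent, so the inner implication is vacuous when $\theta$ factors through $\iota_\atom$, and reads $(\forall\vec y)\,\lnot(\atom(\vec y)\land\theta(\vec x,\vec y))$ when it does not; with that indexing your two steps do assemble into a correct proof.
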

%%%%%%%%%%%%%%%%%%%%%%%%%%%%%%%%%%%%%%%%%%%%%%%%%%%%%%%%%%%%%%%%%%%%%%%%%%%

%%%%%%%%%%%%%%%%%%%%%%%%%%%%%%%%%%%%%%%%%%%%%%%%%%%%%%%%%%%%%%%%%%%%%%%%%%%
\subsubsection{Embeddings of structures}
\label{sec:emb:emb}
%%%%%%%%%%%%%%%%%%%%%%%%%%%%%%%%%%%%%%%%%%%%%%%%%%%%%%%%%%%%%%%%%%%%%%%%%%%
Corollary~\ref{cor:formemb} yields, in particular,
formulae that detect when a homomorphism in $\Mod(\theory)$
with finitely presentable domain is an embedding of $\Sig$-structures.
We state this result explicitly for future reference.

%%%%%%%%%%%%%%%%%%%%%%%%%%%%%%%%%%%%%%%%%%%%%%%%%%%%%%%%%%%%%%%%%%%%%%%%%%%
\begin{notation}
%%%%%%%%%%%%%%%%%%%%%%%%%%%%%%%%%%%%%%%%%%%%%%%%%%%%%%%%%%%%%%%%%%%%%%%%%%%
When $\At$ consists of all atomic formulae, we write
$\FEmb\FG{\vec x \mid \varphi}(\vec x)$
for the formula $\FAt\FG{\vec x \mid \varphi}(\vec x)$
of Corollary~\ref{cor:formemb} (see Figure~\ref{fig:formemb}).
%%%%%%%%%%%%%%%%%%%%%%%%%%%%%%%%%%%%%%%%%%%%%%%%%%%%%%%%%%%%%%%%%%%%%%%%%%%
\end{notation}
%%%%%%%%%%%%%%%%%%%%%%%%%%%%%%%%%%%%%%%%%%%%%%%%%%%%%%%%%%%%%%%%%%%%%%%%%%%

With this notation, we have

%%%%%%%%%%%%%%%%%%%%%%%%%%%%%%%%%%%%%%%%%%%%%%%%%%%%%%%%%%%%%%%%%%%%%%%%%%%
\begin{corollary}
\label{cor:emb:emb}
%%%%%%%%%%%%%%%%%%%%%%%%%%%%%%%%%%%%%%%%%%%%%%%%%%%%%%%%%%%%%%%%%%%%%%%%%%%
In $\Mod(\theory)$,
assume that $h \colon \FG{\vec x \mid \varphi} \to M$
is induced by $\vec a \in \I{\vec x \mid \varphi}_M$.
The following statements are equivalent:
\begin{enumerate}[(i)]
\item 
$h$ is an embedding of $\Sig$-structures.

\item
\(
  M
  \models
  \FEmb\FG{\vec x \mid \varphi}(\vec a)
\).
\end{enumerate}
%%%%%%%%%%%%%%%%%%%%%%%%%%%%%%%%%%%%%%%%%%%%%%%%%%%%%%%%%%%%%%%%%%%%%%%%%%%
\end{corollary}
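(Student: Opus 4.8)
The plan is to obtain this result as the special case of Corollary~\ref{cor:formemb} in which $\At$ is taken to be the set of \emph{all} atomic formulae-in-context of $\Sig$. First I would recall from Example~\ref{ex:emb:at:emb} that, for this choice of $\At$, the $\At$-reflecting homomorphisms of $\Mod(\theory)$ are precisely the embeddings of $\Sig$-structures, i.e.\ the morphisms whose image under the inclusion $\Mod(\theory) \into \Struct(\Sig)$ is a strong monomorphism. Thus condition~(i) of the present corollary, that $h$ is an embedding of $\Sig$-structures, is literally condition~(i) of Corollary~\ref{cor:formemb} specialised to this $\At$.

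Next I would invoke the notational convention immediately preceding the statement: when $\At$ consists of all atomic formulae, the formula $\FEmb\FG{\vec x \mid \varphi}(\vec x)$ is \emph{by definition} the formula $\FAt\FG{\vec x \mid \varphi}(\vec x)$ of Figure~\ref{fig:formemb}. Hence condition~(ii), that $M \models \FEmb\FG{\vec x \mid \varphi}(\vec a)$, coincides with condition~(ii) of Corollary~\ref{cor:formemb} for the same $\At$. The equivalence of the two conditions is then exactly the content of Corollary~\ref{cor:formemb}, applied to the homomorphism $h \colon \FG{\vec x \mid \varphi} \to M$ induced by $\vec a \in \I{\vec x \mid \varphi}_M$, so no further argument is required.

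Since the genuine work—relating $\At$-reflection to the existence of the diagonal fillers of Proposition~\ref{prop:emb:formemb}, and converting this into the explicit formula of Figure~\ref{fig:formemb} via Corollary~\ref{cor:coste:fp:triangle}—has already been carried out, there is no real obstacle here. The only point deserving a moment's care is to confirm that letting $\At$ range over all atomic formulae-in-context (in every context of the appropriate sorts, including equalities) does reflect all atomic relations and hence captures exactly the strong monomorphisms of $\Mod(\theory)$, as asserted in Example~\ref{ex:emb:at:emb}; once this identification is in hand the corollary follows immediately.
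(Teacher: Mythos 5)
Your proposal is correct and is essentially the paper's own argument: the paper states this corollary without a separate proof precisely because it is the specialisation of Corollary~\ref{cor:formemb} to the case where $\At$ is the set of all atomic formulae-in-context, combined with the identification of $\At$-reflecting morphisms with embeddings of $\Sig$-structures from Example~\ref{ex:emb:at:emb} and the notational convention defining $\FEmb\FG{\vec x \mid \varphi}(\vec x)$ as $\FAt\FG{\vec x \mid \varphi}(\vec x)$. One small correction to your closing sentence: this choice of $\At$ captures the embeddings of $\Sig$-structures, i.e.\ the morphisms whose image under the inclusion $\Mod(\theory) \into \Struct(\Sig)$ is a strong monomorphism, and \emph{not} the strong monomorphisms of $\Mod(\theory)$ itself --- by Remark~\ref{rem:path:structemb}.\ref{item:path:structemb:strong} these two classes genuinely differ (every strong mono of $\Mod(\theory)$ is an embedding of structures, but not conversely), so that phrase should be avoided even though it does not affect the validity of your argument.
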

%%%%%%%%%%%%%%%%%%%%%%%%%%%%%%%%%%%%%%%%%%%%%%%%%%%%%%%%%%%%%%%%%%%%%%%%%%%

%%%%%%%%%%%%%%%%%%%%%%%%%%%%%%%%%%%%%%%%%%%%%%%%%%%%%%%%%%%%%%%%%%%%%%%%%%%
\subsection{Presheaves over a forest}
\label{sec:emb:presh}
%%%%%%%%%%%%%%%%%%%%%%%%%%%%%%%%%%%%%%%%%%%%%%%%%%%%%%%%%%%%%%%%%%%%%%%%%%%
In general, the formulae $\FEmb\FG{\vec x \mid \varphi}(\vec x)$ in 
Corollary~\ref{cor:emb:emb} are infinite. 
In this section, we show that in the setting of
Example~\ref{ex:hintikka:mono:forest},
concerning arboreal categories of presheaves over a forest,
these formulae yield \emph{finite} formulae defining monomorphisms. 
Fix a forest $\forest$
and consider the arboreal category~$\presh\forest$.
Recall from Example~\ref{ex:prelim:coste:funct}
the cartesian theory $\theory(\forest^\op)$ 
of functors $\forest^\op \to \Set$,
so that
\begin{equation}
\label{eq:emb:presh:presh-mod}
\begin{array}{l l l}
  \Mod(\theory(\forest^\op))
& \cong
& \presh{\forest}.
\end{array}
\end{equation}

\noindent
The signature $\Sig(\forest^\op)$ of $\theory(\forest^\op)$ 
has one sort $p$ for each element $p \in \forest$,
and one function symbol $f_{(q \leq p)} \colon p \to q$
for each pair of elements $q \leq p$ in $\forest$.

We claim
that for each finitely presentable model $\FG{\vec x \mid \varphi}$ 
there exists a
finite formula $\FMono\FG{\vec x \mid \varphi}(\vec x)$ in $\Sig(\forest^\op)$
such that the following are equivalent
in $\Mod(\theory(\forest^\op))$:
\begin{enumerate}[(i)]
\item 
$h \colon \FG{\vec x \mid \varphi} \to M$ is a monomorphism.

\item
\(
  M
  \models
  \FMono\FG{\vec x \mid \varphi}(\vec c)
\),
where $h$ takes $\vec x$ to $\vec c$.
\end{enumerate}

Fix an arbitrary $p \in \forest$. To start with, we show that the presheaf $\yoneda p \in \presh\forest$ corresponds
under eq.~\eqref{eq:emb:presh:presh-mod} to the finitely presentable model
$\FG{x:p \mid \True}$.
Since $\yoneda p$ is finitely presentable in $\presh\forest$, it corresponds under eq.~\eqref{eq:emb:presh:presh-mod} to a finitely presentable object of $\Mod(\theory(\forest^\op))$, say $\FG{\vec y \mid \psi}$. 
Now, consider a model $M\in \Mod(\theory(\forest^\op))$. We can assume without loss of generality that $M$ is of the form $M_{X}$ for some presheaf $X\in \presh\forest$, where
$M_{X}(p) \coloneqq X(p)$ for each
$p \in \Sort(\Sig(\forest^\op))$. 
Then we have
\begin{align*}
\Mod(\theory(\forest^\op)) [\FG{\vec y \mid \psi}, M_{X}] &\cong \presh\forest [\yoneda p, X] \\
&\cong X(p) \\
&\cong M_{X}(p) \\
&\cong \I{x :p \mid \True}_{M_{X}} \\
& \cong \Mod(\theory(\forest^\op)) [\FG{x :p \mid \True}, M_{X}].
\end{align*}
This chain of natural isomorphisms induces an isomorphism of representable functors 
\[
\Mod(\theory(\forest^\op)) [\FG{\vec y \mid \psi}, - ] \cong \Mod(\theory(\forest^\op)) [\FG{x :p \mid \True}, -]
\]
and thus we get $\FG{\vec y \mid \psi} \cong \FG{x :p \mid \True}$, showing that the presheaf $\yoneda p \in \presh\forest$ corresponds
to the finitely presentable model
$\FG{x:p \mid \True}$.

Now, write $\cat T$ for the syntactic category of $\theory(\forest^\op)$. 
The following lemma is the key to obtaining the desired finite formula $\FMono\FG{\vec x \mid \varphi}$:

%%%%%%%%%%%%%%%%%%%%%%%%%%%%%%%%%%%%%%%%%%%%%%%%%%%%%%%%%%%%%%%%%%%%%%%%%%%
\begin{lemma}
\label{lem:finite-hom-sets-synt-cat}
%%%%%%%%%%%%%%%%%%%%%%%%%%%%%%%%%%%%%%%%%%%%%%%%%%%%%%%%%%%%%%%%%%%%%%%%%%%
The following statements hold in $\cat T$:
\begin{enumerate}[(i)]
\item\label{i:singleton-empty-repr} The hom-set $\cat T\funct{\OI{x:p \mid \True} \,,\, \OI{y:q \mid \True}}$
has at most one element, and is empty precisely when $q \not\leq p$ in $\forest$.
\item\label{i:finite-empty-gen} The hom-set $\cat T\funct{\OI{x_{1}:p_{1},\ldots, x_{n}:p_{n}\mid \varphi} \,,\, \OI{\vec y:q \mid \True}}$ is finite, and it is empty whenever $q\not\leq p_{i}$ in $\forest$ for all $i\in\{1,\ldots,n\}$.
\end{enumerate}
%%%%%%%%%%%%%%%%%%%%%%%%%%%%%%%%%%%%%%%%%%%%%%%%%%%%%%%%%%%%%%%%%%%%%%%%%%%
\end{lemma}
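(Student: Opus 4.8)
The plan is to reduce both statements to computations of hom-sets in the presheaf category $\presh\forest$ by means of Gabriel--Ulmer duality, and then to evaluate these hom-sets with the Yoneda lemma. The essential observation is that, composing the fully faithful contravariant Yoneda embedding $\yoneda\colon \cat T^\op\to\lex\funct{\cat T,\Set}$ with the equivalence $\lex\funct{\cat T,\Set}\cong\Mod(\theory(\forest^\op))\cong\presh\forest$ (Theorem~\ref{thm:coste:synt} and eq.~\eqref{eq:emb:presh:presh-mod}), a morphism of $\cat T$ from $\OI{\vec x\mid\varphi}$ to $\OI{\vec y\mid\psi}$ is the same thing as a morphism in $\presh\forest$ from the presheaf $Y_\psi$ corresponding to $\FG{\vec y\mid\psi}$ to the presheaf $X_\varphi$ corresponding to $\FG{\vec x\mid\varphi}$; note that the direction is reversed, owing to the contravariance of $\yoneda$. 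I have already recorded above that $\yoneda p$ corresponds to $\FG{x:p\mid\True}$ for every $p\in\forest$.

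For item~\ref{i:singleton-empty-repr}, I would first apply this translation to obtain
\[
\cat T\funct{\OI{x:p\mid\True},\OI{y:q\mid\True}}
\;\cong\;
\presh\forest\funct{\yoneda q,\yoneda p}.
\]
The Yoneda lemma then identifies the right-hand side with $(\yoneda p)(q)=\forest\funct{q,p}$. Since $\forest$ is a poset, this hom-set is a singleton when $q\leq p$ and is empty otherwise, which is precisely the claim.

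For item~\ref{i:finite-empty-gen}, I would first describe the presheaf $X_\varphi$ corresponding to $\FG{\vec x\mid\varphi}$. By Remark~\ref{rem:coste:fp:coprod} we have $\FG{\vec x\mid\True}\cong\coprod_{i=1}^n\FG{x_i:p_i\mid\True}$, so, as equivalences preserve coproducts, $X_{\True}$ is the presheaf $\coprod_{i=1}^n\yoneda p_i$; and by Remark~\ref{rem:emb:epi} the canonical arrow $\imath_\varphi\colon\FG{\vec x\mid\True}\to\FG{\vec x\mid\varphi}$ is an epimorphism, whence $X_{\True}\to X_\varphi$ is an epimorphism of presheaves, that is, a pointwise surjection. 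Writing $\vec y=y_1:q,\dots,y_m:q$, the same translation together with the Yoneda lemma gives
\[
\cat T\funct{\OI{\vec x\mid\varphi},\OI{\vec y:q\mid\True}}
\;\cong\;
\presh\forest\funct{\mathord{\coprod}_{j=1}^m\yoneda q,\,X_\varphi}
\;\cong\;
\bigl(X_\varphi(q)\bigr)^m .
\]
Since colimits in $\presh\forest$ are computed pointwise, $X_\varphi(q)$ is a quotient of $X_{\True}(q)=\coprod_{i=1}^n\forest\funct{q,p_i}$, a set with at most $n$ elements that is empty precisely when $q\not\leq p_i$ for all $i$. Hence $X_\varphi(q)$, and therefore the hom-set $(X_\varphi(q))^m$, is finite; moreover, when $q\not\leq p_i$ for all $i$ the set $X_{\True}(q)$ is empty, forcing its quotient $X_\varphi(q)$ and, as $\vec y$ is non-empty, the whole hom-set to be empty as well.

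The argument is essentially routine once the translation in the first paragraph is in place. The only point that requires genuine care is keeping the variance correct and verifying that the epimorphism $\imath_\varphi$ transports to a pointwise-surjective map of presheaves, so that the emptiness of $X_{\True}(q)$ propagates to $X_\varphi(q)$; beyond this bookkeeping I expect no real obstacle.
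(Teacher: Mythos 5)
Your proof is correct. Item~(i) is essentially verbatim the paper's argument: the same chain of isomorphisms through $\lex\funct{\cat T,\Set}$, $\Mod(\theory(\forest^\op))$ and $\presh{\forest}$, ending with the Yoneda lemma and the fact that $\forest$ is a poset.

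For item~(ii) you reach the same conclusion by a mildly different route. Both arguments hinge on the coproduct decomposition $\FG{\vec x\mid\True}\cong\coprod_{i=1}^n\FG{x_i:p_i\mid\True}$ and on the epimorphism $\imath_\varphi\colon\FG{\vec x\mid\True}\to\FG{\vec x\mid\varphi}$, but they deploy them differently. The paper stays inside $\Mod(\theory(\forest^\op))$: it shows that post-composition with $\imath_\varphi$ is surjective on hom-sets because $\FG{\vec y:q\mid\True}$ is projective (a coproduct of representables), then distributes the hom-functor over the coproduct $\coprod_i\FG{x_i:p_i\mid\True}$ using connectedness of $\FG{y:q\mid\True}$, and finally invokes item~(i). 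You instead transport everything to $\presh{\forest}$, identify the hom-set with $\bigl(X_\varphi(q)\bigr)^m$ via Yoneda, and bound $X_\varphi(q)$ as a surjective image of $X_{\True}(q)=\coprod_i\forest\funct{q,p_i}$, using that epimorphisms and coproducts of presheaves are computed pointwise. These are the same facts in different packaging: projectivity and connectedness of $\yoneda q$ are precisely the statements that evaluation at $q$, i.e.\ the functor $\presh{\forest}\funct{\yoneda q,-}$, preserves epimorphisms and coproducts. Your pointwise formulation is somewhat more direct and avoids citing the connectedness lemma, at the cost of unwinding the equivalence with presheaves at every step, whereas the paper's version reuses its stock of abstract lemmas. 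A small bonus of your write-up: you make explicit the hypothesis that $\vec y$ is non-empty, which is genuinely needed for the emptiness claim (if $m=0$ the hom-set is a singleton) and is left implicit in the paper's proof.
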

%%%%%%%%%%%%%%%%%%%%%%%%%%%%%%%%%%%%%%%%%%%%%%%%%%%%%%%%%%%%%%%%%%%%%%%%%%%

\begin{proof}
\ref{i:singleton-empty-repr}
Recalling from \S\ref{sec:coste:fp} that the finitely presentable model
$\FG{\vec x \mid \varphi}$
corresponds (under Theorem~\ref{thm:coste:synt}) to the representable functor
$\yoneda \OI{\vec x \mid \varphi} \in \lex\funct{\cat T,\Set}$,
the Yoneda lemma entails that, for each $q \in \forest$,
\begin{align*}
  \cat T\funct{\OI{x:p \mid \True} \,,\, \OI{y:q \mid \True}}
& \cong
  \lex\funct{\cat T,\Set} [\yoneda \OI{y:q \mid \True}, \yoneda \OI{x:p \mid \True}]
\\
& \cong \Mod(\theory(\forest^\op)) [\FG{y :q \mid \True}, \FG{x :p \mid \True}]
\\
& \cong \presh\forest [\yoneda q, \yoneda p]
\\
& \cong \forest\funct{q,p}.
\end{align*}

\noindent
Hence, the hom-set
$\cat T\funct{\OI{x:p \mid \True} \,,\, \OI{y:q \mid \True}}$
has at most one element, and is empty precisely when $q \not\leq p$ in $\forest$.

\ref{i:finite-empty-gen}
Consider the contexts $\vec x = x_{1}:p_{1},\ldots, x_{n}:p_{n}$
and $\vec y = y_{1},\ldots, y_{m}:q$.
We consider, dually, the hom-set
$\Mod(\theory(\forest^\op)) [\FG{\vec y:q \mid \True}, \FG{\vec x \mid \varphi}]$.
Recall from Notation~\ref{not:formemb} the homomorphism
\[
\imath_{\varphi}\colon \FG{\vec x \mid \True} \to \FG{\vec x \mid \varphi},
\]

\noindent
which induces a map
\[
  \imath_{\varphi}\circ - \colon
  \Mod(\theory(\forest^\op)) [\FG{\vec y:q \mid \True}, \FG{\vec x \mid \True}]
  \to
  \Mod(\theory(\forest^\op)) [\FG{\vec y:q \mid \True}, \FG{\vec x \mid \varphi}].
\]

\noindent
By Remark~\ref{rem:coste:fp:coprod},
$\FG{\vec y:q \mid \True}$ is isomorphic to the coproduct of $m$ copies of
$\FG{y:q \mid \True}$,
and we have seen above that $\FG{y:q \mid \True}$ corresponds to
the representable presheaf $\yoneda q$.
Since representable functors are projective,
and a coproduct of projective objects is again projective,
we see that $\FG{\vec y:q \mid \True}$ is projective in $\Mod(\theory(\forest^\op))$.
Thus the map $\imath_{\varphi}\circ - $ is surjective because
$\imath_{\varphi}$ is an epimorphism (see Remark~\ref{rem:emb:epi}),
and so it suffices to show that the hom-set
$\Mod(\theory(\forest^\op)) [\FG{\vec y:q \mid \True}, \FG{\vec x \mid \True}]$ 
is finite, and empty whenever
$q\not\leq p_{i}$ in $\forest$ for all $i\in\{1,\ldots,n\}$. We have 
\begin{align*}
  \Mod(\theory(\forest^\op))\funct{\FG{\vec y:q \mid \True}, \FG{\vec x \mid \True}}
& \cong
  \Mod(\theory(\forest^\op)) \funct{
  \coprod_{j=1}^{m}\FG{y:q \mid \True}, \coprod_{i=1}^{n}\FG{x_{i}:p_{i} \mid \True}}
\\
& \cong
  \left(
  \Mod(\theory(\forest^\op))
  \funct{ \FG{y:q \mid \True}, \coprod_{i=1}^{n}\FG{x_{i}:p_{i} \mid \True}}
  \right)^{m}
\\
& \cong
  \left(\coprod_{i=1}^{n}\Mod(\theory(\forest^\op))
  \funct{ \FG{y:q \mid \True}, \FG{x_{i}:p_{i} \mid \True}} \right)^{m}
\end{align*}

\noindent
where in the last step we used the fact that $\FG{y:q \mid \True}$ is a connected object
(recall from Lemma~\ref{l:paths-in-L-connected} and its proof that
the corresponding presheaf $\yoneda q$ is connected).
It follows from item~\ref{i:singleton-empty-repr} that the hom-set
$\Mod(\theory(\forest^\op)) [\FG{\vec y:q \mid \True}, \FG{\vec x \mid \True}]$
is finite, and it is empty precisely when
$q\not\leq p_{i}$ in $\forest$ for all $i\in\{1,\ldots,n\}$.
\end{proof}

%%%%%%%%%%%%%%%%%%%%%%%%%%%%%%%%%%%%%%%%%%%%%%%%%%%%%%%%%%%%%%%%%%%%%%%%%%%
\begin{figure}
%%%%%%%%%%%%%%%%%%%%%%%%%%%%%%%%%%%%%%%%%%%%%%%%%%%%%%%%%%%%%%%%%%%%%%%%%%%
\begin{multline*}
  \FMono\FG{\vec x \mid \varphi}(\vec x)
  \deq
\\
  \bigwedge_{q \in \downarrow\{p_{1},\ldots,p_{n}\}} \ \ 
  \bigwedge_{\theta \in \cat T\funct{\OI{\vec x \mid \varphi}, \OI{y,y':q \mid \True}}}
  (\forall y,y')
  \left(
  \left(y \Eq y'
 \wedge
  \theta(\vec x,y,y') \right)
  ~\limp~
  \bigvee_{
    \delta \in \cat T\funct{\OI{\vec x \mid \varphi}, \OI{y,y':q \mid y \Eq y'}}
  }
  \delta(\vec x,y,y')
  \right)
\end{multline*}
\caption{Formulae for monomorphisms (Proposition~\ref{prop:emb:presh:fp-domain}). 
\label{fig:emb:mono:forest:fp-domain}}
%%%%%%%%%%%%%%%%%%%%%%%%%%%%%%%%%%%%%%%%%%%%%%%%%%%%%%%%%%%%%%%%%%%%%%%%%%%
\end{figure}
%%%%%%%%%%%%%%%%%%%%%%%%%%%%%%%%%%%%%%%%%%%%%%%%%%%%%%%%%%%%%%%%%%%%%%%%%%%

%%%%%%%%%%%%%%%%%%%%%%%%%%%%%%%%%%%%%%%%%%%%%%%%%%%%%%%%%%%%%%%%%%%%%%%%%%%
\begin{proposition}
\label{prop:emb:presh:fp-domain}
%%%%%%%%%%%%%%%%%%%%%%%%%%%%%%%%%%%%%%%%%%%%%%%%%%%%%%%%%%%%%%%%%%%%%%%%%%%
In $\Mod(\theory(\forest^\op))$,
assume that $h \colon \FG{\vec x \mid \varphi} \to M$
is induced by $\vec c \in \I{\vec x \mid \varphi}_M$, where $\vec x = x_{1}:p_{1},\ldots, x_{n}:p_{n}$.
Then the following are equivalent:
\begin{enumerate}[(i)]
\item 
\label{item:emb:mono:forest:hom}
$h$ is a monomorphism.

\item
\label{item:emb:mono:forest:diag}
Condition~\ref{item:formemb:diag}
of Proposition~\ref{prop:emb:formemb},
restricted to formulae
$(y,y' \sorting \atom)$ of the form $(y,y' : q \sorting y \Eq y')$
with $q \leq p_{i}$ for some $i\in\{1,\ldots,n\}$, holds.

\item
\label{item:emb:mono:forest:form}
\(
  M
  \models
  \FMono\FG{\vec x \mid \varphi}(\vec c)
\),
where
$\FMono\FG{\vec x \mid \varphi}(\vec x)$ is the \emph{finite} formula in Figure~\ref{fig:emb:mono:forest:fp-domain}.
\end{enumerate}
%%%%%%%%%%%%%%%%%%%%%%%%%%%%%%%%%%%%%%%%%%%%%%%%%%%%%%%%%%%%%%%%%%%%%%%%%%%
\end{proposition}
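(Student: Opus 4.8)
The plan is to read the statement as the instantiation of the general theory of \S\ref{sec:emb:at} to the set of \emph{equality} formulae, sharpened by the combinatorial finiteness of hom-sets recorded in Lemma~\ref{lem:finite-hom-sets-synt-cat}. By Example~\ref{ex:emb:at:mono}, the monomorphisms of $\Mod(\theory(\forest^\op))$ are exactly the $\At$-reflecting homomorphisms for the set $\At_0 \deq \{(y,y':q \sorting y\Eq y') \mid q\in\forest\}$ of all sorted equalities. First I would establish (i)$\iff$(ii). By Proposition~\ref{prop:emb:formemb}, $h$ is $\At_0$-reflecting precisely when the diagonal-filler condition~\ref{item:formemb:diag} holds for every $(y,y':q \sorting y\Eq y')$ with $q\in\forest$; the task is to show that this condition is \emph{vacuous} for every sort $q$ with $q\not\leq p_i$ for all $i$, so that only the sorts in $\downarrow\{p_1,\dots,p_n\}$ matter.

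The vacuity is the crux, and it is short: in the squares~\eqref{eq:formemb:diag:ass} the lower-left vertex is $N=\FG{\vec x\mid\varphi}$ and the upper-left vertex is $\FG{y,y':q\mid\True}$, so the left-hand edge $\ell$ is a homomorphism $\FG{y,y':q\mid\True}\to\FG{\vec x\mid\varphi}$. Such homomorphisms are in bijection with $\cat T\funct{\OI{\vec x\mid\varphi},\OI{y,y':q\mid\True}}$, which by item~\ref{i:finite-empty-gen} of Lemma~\ref{lem:finite-hom-sets-synt-cat} is empty whenever $q\not\leq p_i$ for all $i$. Hence, for such $q$, there are no squares at all to fill and the condition is trivially satisfied. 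Consequently the diagram condition over all sorts reduces to its restriction to $q\in\downarrow\{p_1,\dots,p_n\}$, which is exactly (ii); together with Example~\ref{ex:emb:at:mono} this yields (i)$\iff$(ii).

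Next I would prove (ii)$\iff$(iii). Put $\At_1 \deq \{(y,y':q \sorting y\Eq y') \mid q\in\downarrow\{p_1,\dots,p_n\}\}$, the restricted set of equalities. Condition (ii) is by definition condition~\ref{item:formemb:diag} of Proposition~\ref{prop:emb:formemb} for $\At_1$, hence equivalent to $h$ being $\At_1$-reflecting. On the other hand, Corollary~\ref{cor:formemb} applied to $\At_1$ characterises $\At_1$-reflection by validity of the formula $\FAt\FG{\vec x\mid\varphi}$ of Figure~\ref{fig:formemb}. It then remains to check that, for $\At=\At_1$, this formula is syntactically the formula $\FMono\FG{\vec x\mid\varphi}$ of Figure~\ref{fig:emb:mono:forest:fp-domain}: the outer conjunction over $\At_1$ becomes the conjunction over $q\in\downarrow\{p_1,\dots,p_n\}$ (each such $q$ contributing the single atom $y\Eq y'$), the middle conjunction ranges over the $\theta$'s in $\cat T\funct{\OI{\vec x\mid\varphi},\OI{y,y':q\mid\True}}$, and the disjunction over the $\delta$'s in $\cat T\funct{\OI{\vec x\mid\varphi},\OI{y,y':q\mid y\Eq y'}}$. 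Chaining the two equivalences gives (ii)$\iff$(iii).

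Finally I would verify that $\FMono\FG{\vec x\mid\varphi}$ is genuinely finite. The outer conjunction is finite because $\downarrow\{p_1,\dots,p_n\}=\bigcup_i \downarrow p_i$ is a finite union of finite chains, by definition of a forest. For each $q$, the $\theta$-conjunction is finite by item~\ref{i:finite-empty-gen} of Lemma~\ref{lem:finite-hom-sets-synt-cat} (two generators of sort $q$), and the $\delta$-disjunction is finite because $\OI{y,y':q\mid y\Eq y'}\cong\OI{y:q\mid\True}$ in $\cat T$, reducing it to the one-generator case of the same lemma. The main obstacle I anticipate is purely one of bookkeeping: phrasing the vacuity argument so that it genuinely asserts the \emph{absence} of fillable squares (rather than the fillability of existing ones), and matching the instantiation of $\FAt$ at $\At_1$ against Figure~\ref{fig:emb:mono:forest:fp-domain} exactly, including the identification $\OI{y,y':q\mid y\Eq y'}\cong\OI{y:q\mid\True}$ used to index the inner disjunction.
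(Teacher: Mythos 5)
Your proof is correct and takes essentially the same route as the paper: (i)$\Leftrightarrow$(ii) via Example~\ref{ex:emb:at:mono}, Proposition~\ref{prop:emb:formemb}, and the vacuity of the diagonal condition for sorts $q$ with $q \not\leq p_i$ for all $i$ (using Lemma~\ref{lem:finite-hom-sets-synt-cat}\ref{i:finite-empty-gen} to rule out the left-hand morphism $\ell$, hence any square); (ii)$\Leftrightarrow$(iii) via Corollary~\ref{cor:formemb} instantiated at the restricted set of equalities; and the same finiteness counts for the outer and $\theta$-conjunctions. The only variation is your treatment of the $\delta$-disjunction, where you invoke the isomorphism $\OI{y,y':q \mid y \Eq y'} \cong \OI{y:q \mid \True}$ in $\cat T$ to reduce to the one-generator case of Lemma~\ref{lem:finite-hom-sets-synt-cat}, whereas the paper injects $\cat T\funct{\OI{\vec x \mid \varphi}, \OI{y,y':q \mid y \Eq y'}}$ into the finite set $\cat T\funct{\OI{\vec x \mid \varphi}, \OI{y,y':q \mid \True}}$ by precomposition with a monomorphism; both arguments are sound.
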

%%%%%%%%%%%%%%%%%%%%%%%%%%%%%%%%%%%%%%%%%%%%%%%%%%%%%%%%%%%%%%%%%%%%%%%%%%%

%%%%%%%%%%%%%%%%%%%%%%%%%%%%%%%%%%%%%%%%%%%%%%%%%%%%%%%%%%%%%%%%%%%%%%%%%%%
\begin{proof}
%%%%%%%%%%%%%%%%%%%%%%%%%%%%%%%%%%%%%%%%%%%%%%%%%%%%%%%%%%%%%%%%%%%%%%%%%%%
\ref{item:emb:mono:forest:hom} $\Leftrightarrow$ \ref{item:emb:mono:forest:diag}
Combining Proposition~\ref{prop:emb:formemb} with Example~\ref{ex:emb:at:mono}
yields that condition~\ref{item:emb:mono:forest:hom}
is equivalent to condition~\ref{item:emb:mono:forest:diag}
generalised to all formulae 
$(y,y' \sorting \atom)$ of the form 
\[
(y,y' : q \sorting y \Eq y')
\]
with $q \in \forest$.
Hence, we have to show that we can restrict ourselves to the case where $q \leq p_{i}$ for some $i\in\{1,\ldots,n\}$.
In turn, this follows from the fact that, by Lemma~\ref{lem:finite-hom-sets-synt-cat}\ref{i:finite-empty-gen}, if $q \not\leq p_{i}$ for all $i\in\{1,\ldots,n\}$ then there exists no morphism $\ell\colon \FG{y,y' : q \mid \True}\to \FG{\vec x \mid \varphi}$, and so there is no commutative square as in the left-hand side of eq.~\eqref{eq:formemb:diag:ass}
(cf.\ condition~\ref{item:formemb:diag} of Proposition~\ref{prop:emb:formemb}).

\ref{item:emb:mono:forest:diag} $\Leftrightarrow$ \ref{item:emb:mono:forest:form} This follows from Corollary~\ref{cor:formemb} (cf.\ also Corollary~\ref{cor:coste:fp:triangle}).

It remains to prove that the formula $\FMono\FG{\vec x \mid \varphi}$ is finite.
The first conjunction is finite since $\forest$ is a forest,
and the second one (over $\theta$) is finite by Lemma~\ref{lem:finite-hom-sets-synt-cat}\ref{i:finite-empty-gen}. 
Finally, consider the disjunction over
$\delta \in \cat T\funct{\OI{\vec x \mid \varphi}, \OI{y,y':q \mid y \Eq y'}}$.
It follows from~\cite[Lemma D1.4.4]{johnstone02book}
that 
\[
  \MI{y,y',z,z' : q \mid y \Eq y' ~\land~ z \Eq y ~\land~ z' \Eq y'}
\]

\noindent
is a monomorphism
from
$\OI{y,y' : q \mid y \Eq y'}$
to
$\OI{z,z' : q \mid \True}$.
By precomposition,
this yields an injection
of
\[
  \cat T\funct{\OI{\vec x \mid \varphi}, \OI{y,y':q \mid y \Eq y'}}
\]

\noindent
into the finite set
\[
  \cat T\funct{\OI{\vec x \mid \varphi}, \OI{y,y':q \mid \True}}.
\]

\noindent
Hence the hom-set
$\cat T\funct{\OI{\vec x \mid \varphi}, \OI{y,y':q \mid y \Eq y'}}$
is finite.
%%%%%%%%%%%%%%%%%%%%%%%%%%%%%%%%%%%%%%%%%%%%%%%%%%%%%%%%%%%%%%%%%%%%%%%%%%%
\end{proof}
%%%%%%%%%%%%%%%%%%%%%%%%%%%%%%%%%%%%%%%%%%%%%%%%%%%%%%%%%%%%%%%%%%%%%%%%%%%

The previous proposition yields, in particular, finite formulae defining path embeddings in $\presh{\forest} \cong \Mod(\theory(\forest^\op))$.
Recall from Lemma~\ref{l:paths-in-presh-forest} that the non-initial paths
in~$\presh{\forest}$ are precisely the representable presheaves,
which correspond to the objects of the form ${\FG{x:p \mid \True}}$
in $\Mod(\theory(\forest^\op))$.
Thus, for each $p\in \forest$, Proposition~\ref{prop:emb:presh:fp-domain}
gives a finite formula 
\[
\FMono[p](x)\coloneqq \FMono\FG{x:p \mid \True}(x)
\]
such that the following statements are equivalent
for any morphism $h \colon \FG{x:p \mid \True} \to M$ in $\Mod(\theory(\forest^\op))$:
\begin{enumerate}[(i)]
\item $h \colon \FG{x:p \mid \True} \to M$ is a monomorphism.
\item
\(
  M
  \models
  \FMono[p](c)
\),
where $h$ takes $x$ to $c \in M(p)$.
\end{enumerate}

%%%%%%%%%%%%%%%%%%%%%%%%%%%%%%%%%%%%%%%%%%%%%%%%%%%%%%%%%%%%%%%%%%%%%%%%%%%
\section{Proof of the main result}
\label{sec:wc}
%%%%%%%%%%%%%%%%%%%%%%%%%%%%%%%%%%%%%%%%%%%%%%%%%%%%%%%%%%%%%%%%%%%%%%%%%%%

This section is devoted to the proof of Theorem~\ref{thm:path:main},
which states that if a finitely accessible wooded adjunction
\[
\begin{tikzcd}
  \C
  \arrow[bend left=25]{r}{\Ladj}
  \arrow[phantom]{r}[description]{\textnormal{\footnotesize{$\bot$}}}
& \E
  \arrow[bend left=25]{l}{R}
\end{tikzcd}
\]

\noindent
detects path embeddings in a signature $\Sig$, then
\begin{equation}
\label{eq:wc:main}
\begin{array}{l l l}
  \text{$M,N \in \E$ equivalent in $\Lang_{\infty}(\Sig)$}
& \longimp
& M \bisim_R N.
\end{array}
\end{equation}

We know from Corollary~\ref{cor:hintikka:wooded:R-bfe}
that the latter implication
holds under the assumption that
path embeddings in $\C$ are definable.
In order to prove Theorem~\ref{thm:path:main},
we will show that path embeddings in $\C$ are definable
whenever they are detected by $\Ladj \colon \C \inadj \E \cocolon R$.

Recall that in Corollary~\ref{cor:emb:emb} (\S\ref{sec:emb:emb})
we obtained formulae which define embeddings of structures in $\E$.
To translate the latter into formulae which define path embeddings in $\C$, 
we use Hodges' \emph{word-constructions}~\cite{hodges74,hodges75la}.
We review the latter in~\S\ref{sec:wc:wc}, and complete the proof of
Theorem~\ref{thm:path:main} in~\S\ref{sec:wc:main}.
Finally, in~\S\ref{sec:wc:fin}
we discuss the special case in which~$\E$ is the category of structures
for a \emph{finite} relational signature $\sig$,
and give sufficient conditions for eq.~\eqref{eq:wc:main} to hold
when $\Lang_\infty(\sig)$ is replaced with first-order logic $\Lang_\omega(\sig)$.

%%%%%%%%%%%%%%%%%%%%%%%%%%%%%%%%%%%%%%%%%%%%%%%%%%%%%%%%%%%%%%%%%%%%%%%%%%%
\subsection{Hodges' word-constructions}
\label{sec:wc:wc}
%%%%%%%%%%%%%%%%%%%%%%%%%%%%%%%%%%%%%%%%%%%%%%%%%%%%%%%%%%%%%%%%%%%%%%%%%%%

Let $\theory$ and $\theorybis$ be cartesian theories
in signatures~$\Sig$ and~$\Sigbis$, respectively.
Recall from~\S\ref{sec:coste:interp} that
morphisms of lfp categories 
\[
\Mod(\theorybis) \to \Mod(\theory),
\]
i.e.\ finitary right adjoints (cf.\ Proposition~\ref{p:lfp-morphisms-characterisation}),
induce interpretations of formulae in~$\Sig$
as formulae in~$\Sigbis$.
Hodges' \emph{word-constructions}~\cite{hodges74,hodges75la},
see also \cite[\S 9.3]{hodges93book}, are a relaxed form
of interpretations that apply to all finitary functors 
$\Mod(\theorybis) \to \Mod(\theory)$,
and in particular to \emph{left} adjoints.

%%%%%%%%%%%%%%%%%%%%%%%%%%%%%%%%%%%%%%%%%%%%%%%%%%%%%%%%%%%%%%%%%%%%%%%%%%%
\subsubsection{Generated structures and presentations}
\label{sec:wc:genpres}
%%%%%%%%%%%%%%%%%%%%%%%%%%%%%%%%%%%%%%%%%%%%%%%%%%%%%%%%%%%%%%%%%%%%%%%%%%%
We recall the following standard notions from~\cite[1.2.2 and~\S 9.2]{hodges93book}.

Fix a signature $\Sig$.
Given a tuple of elements $\vec a$ from a $\Sig$-structure $M$,
if the smallest substructure of $M$ that contains $\vec a$ is $M$ itself,
then we say that $\vec a$ \emph{generates}~$M$.

A \emph{presentation} is a pair $(\vec x, (\atom_i)_{i \in I})$
where the $\atom_i$'s are atomic formulae in $\Sig$
and $\vec x$ 
is a possibly infinite
sequence of sorted variables containing all variables
of $(\atom_i \mid i \in I)$.

Let $\cat K$ be a class of $\Sig$-structures.
Consider a presentation $(\vec x, (\atom_i)_{i \in I})$ with $\vec x = (x_j)_{j \in J}$
and a structure $M \in \cat K$.
Let $\vec a = (a_j)_{j \in J}$
be a sequence of elements of the carriers of $M$ matching the sorts of $\vec x$.
We say that $(\vec x, (\atom_i)_{i \in I})$
\emph{presents} the expanded structure $(M,\vec a)$ in $\cat K$
when the following conditions hold:
\begin{enumerate}[(i)]
\item
$M$ is generated (as a $\Sig$-structure) by $\vec a$,

\item
$(M,\vec a)\models \atom_i$ for each $i \in I$,

\item
If $N \in \cat K$ and $(N,\vec b)\models \atom_i$ for each $i \in I$,
there is a (necessarily unique) homomorphism $(M, \vec a) \to (N,\vec b)$.
\end{enumerate}

\noindent Note that if $(M,\vec a)$ and $(N,\vec b)$
are both presented by $(\vec x,\vec\atom)$ in $\cat K$,
then the unique homomorphism $(M, \vec a) \to (N,\vec b)$ 
is an isomorphism.

The following useful fact was proved in~\cite[Lemma~9.2.1]{hodges93book}.

%%%%%%%%%%%%%%%%%%%%%%%%%%%%%%%%%%%%%%%%%%%%%%%%%%%%%%%%%%%%%%%%%%%%%%%%%%%
\begin{lemma}
\label{lem:wc:pres}
%%%%%%%%%%%%%%%%%%%%%%%%%%%%%%%%%%%%%%%%%%%%%%%%%%%%%%%%%%%%%%%%%%%%%%%%%%%
Let $\cat K$ be a class of $\Sig$-structures
and consider a presentation $(\vec x, (\atom_i)_{i \in I})$.
Then the following are equivalent for each $(M,\vec a)$
with $M \in \cat K$:
\begin{enumerate}[(i)]
\item
\label{item:wc:pres:pres}
$(\vec x,(\atom_i)_{i \in I})$ presents $(M,\vec a)$ in $\cat K$.

\item
\label{item:wc:pres:char}
$\vec a$ generates $M$ (as a $\Sig$-structure)
and for every atomic formula $\atombis(\vec x)$ of $\Sig$,
we have $(M,\vec a) \models \atombis$
if and only if,
for all $(N,\vec b)$ with $N\in\cat K$, if $(N,\vec b)\models \alpha_{i}$ for each $i\in I$ then $(N,\vec b)\models \beta$.
\end{enumerate}
%%%%%%%%%%%%%%%%%%%%%%%%%%%%%%%%%%%%%%%%%%%%%%%%%%%%%%%%%%%%%%%%%%%%%%%%%%%
\end{lemma}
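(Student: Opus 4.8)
The plan is to prove the two implications separately, first observing that the clause ``$\vec a$ generates $M$'' appears in both \ref{item:wc:pres:pres} and \ref{item:wc:pres:char}, so throughout I may assume it and focus on the remaining content. The conceptual key is that condition \ref{item:wc:pres:char} asserts precisely that the set of atomic formulae satisfied by $(M,\vec a)$ coincides with the set of atomic formulae $\atombis(\vec x)$ that are \emph{forced} over $\cat K$ by the axioms $(\atom_i)_{i\in I}$ (i.e.\ satisfied by every $(N,\vec b)$ with $N\in\cat K$ that models all the $\atom_i$). I will exploit this description from both sides, using the elementary fact that homomorphisms preserve atomic formulae: if $f\colon (M,\vec a)\to(N,\vec b)$ is a homomorphism then $f(t^M(\vec a))=t^N(\vec b)$ for every term $t$, and $f$ preserves the interpretations of the relation symbols, so $(M,\vec a)\models\atombis$ entails $(N,\vec b)\models\atombis$.

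For \ref{item:wc:pres:pres} $\Rightarrow$ \ref{item:wc:pres:char}, I would fix an atomic formula $\atombis(\vec x)$ and argue the biconditional. If $(M,\vec a)\models\atombis$, then for any $(N,\vec b)$ with $N\in\cat K$ satisfying all the $\atom_i$, clause (iii) of ``presents'' supplies a homomorphism $(M,\vec a)\to(N,\vec b)$, whence $(N,\vec b)\models\atombis$ by preservation of atomic formulae. Conversely, if every such $(N,\vec b)$ satisfies $\atombis$, I would simply instantiate $(N,\vec b)=(M,\vec a)$: this is legitimate since $M\in\cat K$ and, by clause (ii) of ``presents'', $(M,\vec a)$ itself models each $\atom_i$; hence $(M,\vec a)\models\atombis$. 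This gives exactly the characterisation in \ref{item:wc:pres:char}.

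For \ref{item:wc:pres:char} $\Rightarrow$ \ref{item:wc:pres:pres}, clause (ii) of ``presents'' is immediate, since each $\atom_i$ is trivially forced over $\cat K$ by $(\atom_j)_{j\in I}$ and so the characterisation yields $(M,\vec a)\models\atom_i$. The substance is clause (iii): given $N\in\cat K$ with $(N,\vec b)\models\atom_i$ for all $i$, I would define the candidate homomorphism $f$, sortwise, by sending the value $t^M(\vec a)$ of a term $t$ to $t^N(\vec b)$; this covers all elements of $M$ because $\vec a$ generates $M$. The hard part will be checking that $f$ is well defined and a homomorphism, but both reduce uniformly to the atomic characterisation in \ref{item:wc:pres:char}: if $t^M(\vec a)=s^M(\vec a)$ then $(M,\vec a)\models t\Eq s$, hence $(N,\vec b)\models t\Eq s$ and $t^N(\vec b)=s^N(\vec b)$, giving well-definedness; preservation of function symbols is the defining equation of $f$ applied to composite terms; and preservation of relation symbols follows by applying \ref{item:wc:pres:char} to the atomic formula $R(t_1,\dots,t_n)$. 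Taking $t=x_j$ shows $f(\vec a)=\vec b$, and uniqueness is automatic since $\vec a$ generates $M$, completing the proof.
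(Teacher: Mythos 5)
Your proof is correct and follows essentially the same route as the paper's: the forward direction combines preservation of atomic formulae along the homomorphism supplied by the presentation with instantiation at $(M,\vec a)$ itself, and the backward direction derives clause (ii) from the trivially forced $\atom_i$'s and builds the homomorphism on term values, using equality atoms for well-definedness and relational atoms for preservation. The only difference is presentational: the paper packages the forcing condition as validity of $(\forall \vec x)\bigl(\bigwedge_i \atom_i \limp \atombis\bigr)$ over $\cat K$ and leaves the extension of $\vec a \mapsto \vec b$ to a homomorphism implicit, whereas you spell that construction out explicitly.
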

%%%%%%%%%%%%%%%%%%%%%%%%%%%%%%%%%%%%%%%%%%%%%%%%%%%%%%%%%%%%%%%%%%%%%%%%%%%

%%%%%%%%%%%%%%%%%%%%%%%%%%%%%%%%%%%%%%%%%%%%%%%%%%%%%%%%%%%%%%%%%%%%%%%%%%%
\begin{fullproof}[Proof of Lemma~\ref{lem:wc:pres}]
%%%%%%%%%%%%%%%%%%%%%%%%%%%%%%%%%%%%%%%%%%%%%%%%%%%%%%%%%%%%%%%%%%%%%%%%%%%
It is convenient to use here the $\Lang_{\infty,\infty}$-formula
$(\forall \vec x)\left(\bigwedge_i \atom_i \limp \atombis\right)$.
Assume first that $(\vec x, (\atom_i)_i)$ presents $(M,\vec a)$.
Then $\vec a$ generates $M$ by definition.
Consider now an atomic formula $\atombis(\vec x)$ of $\Sig$.
If $\cat K \models (\forall \vec x)\left(\bigwedge_i \atom_i \limp \atombis\right)$,
then we in particular have $(M,\vec a) \models \atombis$.
If $\cat K \not\models (\forall \vec x)\left(\bigwedge_i \atom_i \limp \atombis\right)$,
then there is a $N \in \cat K$ and some $\vec b \in N$
such that $(N,\vec b) \models \bigwedge_i \atom_i$
but $(N,\vec b) \not\models \atombis$.
By assumption on $M$, there is a homomorphism
$h \colon M  \to N$ taking $\vec a$ to $\vec b$,
so that we must have $(M,\vec a) \not\models \atombis$ as well.

For the converse, we have to show that 
$(\vec x, (\atom_i)_i)$ presents $(M,\vec a)$.
Given $j \in I$,
we trivially have 
$\cat K \models (\forall \vec x)\left(\bigwedge_i \atom_i \limp \atom_j\right)$,
so that
$(M,\vec a) \models \atom_j$ by assumption.
It follows that
$(M,\vec a) \models \bigwedge_i \atom_i$.
Consider now some $N \in \cat K$
and some $\vec b \in N$ such that
$(N,\vec b) \models \bigwedge_i \atom_i$.
Hence for very atomic formula $\atombis$, 
if $(M ,\vec a) \models \atombis$
then $(N,\vec b) \models \atombis$
since
$\cat K \models (\forall \vec x)\left(\bigwedge_i \atom_i \limp \atombis\right)$.
Recall from \S\ref{sec:prelim}
that equalities are atomic formulae.
Since $\vec a$ generates $M$,
it follows that
the assignment $\vec a \mapsto \vec b$ extends to a function
$h \colon M \to N$ which preserves the function symbols of $\Sig$,
namely
such that for every term $t(\vec x)$ in $\Fun(\Sig)$
we have
\[
\begin{array}{l l l}
  h(t_M(\vec a))
& =
& t_N(\vec b)
\end{array}
\]

\noindent
It is equally easy to see that $h$ preserves the predicate symbols of $\Sig$.
%%%%%%%%%%%%%%%%%%%%%%%%%%%%%%%%%%%%%%%%%%%%%%%%%%%%%%%%%%%%%%%%%%%%%%%%%%%
\end{fullproof}
%%%%%%%%%%%%%%%%%%%%%%%%%%%%%%%%%%%%%%%%%%%%%%%%%%%%%%%%%%%%%%%%%%%%%%%%%%%

%%%%%%%%%%%%%%%%%%%%%%%%%%%%%%%%%%%%%%%%%%%%%%%%%%%%%%%%%%%%%%%%%%%%%%%%%%%
\begin{remark}\label{rem:variables-vs-constants-etc}
%%%%%%%%%%%%%%%%%%%%%%%%%%%%%%%%%%%%%%%%%%%%%%%%%%%%%%%%%%%%%%%%%%%%%%%%%%%
Above we considered presentations of pairs $(\vec x, \vec\atom)$
where $\vec x$ is a sequence of sorted variables.
But we could have equivalently considered presentations of pairs
$(\vec c,\vec\atom)$ where
$\vec c = (c_j)_{j \in J}$
is a sequence of elements
from a family of sets
$S = (S_\sort \mid \sort \in \Sort(\Sig))$,
and the $\vec\atom$ are atomic formulae
in $\Sig(\vec c)$, the extension of $\Sig$
with a new constant symbol of the appropriate sort
for each $c_j$.

Moreover, it is a technical convenience
to insist that $\vec x$, or $\vec c$,
form sequences rather than mere sets.
Following~\cite[Remark~9.2]{hodges93book},
when leaving implicit the order of elements in sequences,
we may
simply consider pairs $(X,\Phi)$ where $X$ is a set
and $\Phi$ is a set of atomic formulae in $\Sig(X)$.
%%%%%%%%%%%%%%%%%%%%%%%%%%%%%%%%%%%%%%%%%%%%%%%%%%%%%%%%%%%%%%%%%%%%%%%%%%%
\end{remark}
%%%%%%%%%%%%%%%%%%%%%%%%%%%%%%%%%%%%%%%%%%%%%%%%%%%%%%%%%%%%%%%%%%%%%%%%%%%

%%%%%%%%%%%%%%%%%%%%%%%%%%%%%%%%%%%%%%%%%%%%%%%%%%%%%%%%%%%%%%%%%%%%%%%%%%%
\subsubsection{Word-constructions}
\label{sec:wc:wcdef}
%%%%%%%%%%%%%%%%%%%%%%%%%%%%%%%%%%%%%%%%%%%%%%%%%%%%%%%%%%%%%%%%%%%%%%%%%%%
Consider a finitary functor $F \colon \Mod(\theorybis) \to \Mod(\theory)$,
where~$\theorybis$ and~$\theory$ are cartesian theories
in signatures $\Sigbis$ and $\Sig$, respectively.
The corresponding word-construction provides a formula translation
from $\Lang_\infty(\Sig)$ to $\Lang_\infty(\Sigbis)$
such that, for each model ${M \in \Mod(\theorybis)}$,
the theory of $F M$ can be expressed in
the theory of $M$.

More precisely, a word-construction
\[
\begin{array}{*{5}{l}}
  \WC
& :
& \Sigbis
& \longto
& 
  \Sig
\end{array}
\]

\noindent
for a functor $F$ as above is a syntactic device 
that produces for each $M \in \Mod(\theorybis)$
a presentation $\WC(M) = (X,\Phi)$ of $F M \in \Mod(\theory)$.
The key is that the ``generators''~$X$ and ``relations''~$\Phi$
are controlled by formulae in~$\Sigbis$
whose definitions are independent of~$M$.

We now recall the formal definition of word-constructions; the following is a 
finitary version of the many-sorted notion of~\cite{hodges75la}.

%%%%%%%%%%%%%%%%%%%%%%%%%%%%%%%%%%%%%%%%%%%%%%%%%%%%%%%%%%%%%%%%%%%%%%%%%%%
\begin{definition}[Word-construction]
\label{def:wc:wc}
%%%%%%%%%%%%%%%%%%%%%%%%%%%%%%%%%%%%%%%%%%%%%%%%%%%%%%%%%%%%%%%%%%%%%%%%%%%
Let $\Sigbis$ and $\Sig$ be signatures.
A \emph{word-construction}
\[
\begin{array}{*{5}{l}}
  \WC
& :
& \Sigbis
& \longto
& 
  \Sig
\end{array}
\]

\noindent
is a tuple
$\WC = \left( \Sig(\WC) , \WCTe(\WC) , \WCAt(\WC) \right)$
where:
\begin{enumerate}[(1)]

\item
$\Sig(\WC)$ is a signature 
with $\Sort(\Sig(\WC)) \coloneqq \Sort(\Sig) + \Sort(\Sigbis)$
that extends~$\Sig$ with function symbols
\[
\begin{array}{*{5}{l}}
  f
& :
& \sortbis_1 \dots \sortbis_n
& \longto
& \sort
\end{array}
\]

\noindent
for each $\sortbis_1,\dots,\sortbis_n \in \Sort(\Sigbis)$
and $\sort \in \Sort(\Sig)$.

Note that atomic formulae in $\Sig(\WC)$ are atomic formulae in $\Sig$,
but with terms possibly involving the additional function symbols of $\Sig(\WC)$.

\item
$\WCTe(\WC)$
is a function taking terms
$(\vec x : \vec\sortbis \sorting t : \sort)$
in $\Sig(\WC)$,
with $\vec\sortbis \in \Sort(\Sigbis)$ and $\sort \in \Sort(\Sig)$,
to formulae in $\Sigbis$
with free variables among $\vec x : \vec\sortbis$.

\item
$\WCAt(\WC)$ is a function taking
atomic formulae
$(\vec x : \vec\sortbis \sorting \atom)$
in $\Sig(\WC)$, with ${\vec\sortbis \in \Sort(\Sigbis)}$,
to formulae in $\Sigbis$
with free variables among $\vec x : \vec\sortbis$.
\end{enumerate}
\end{definition}

A word-construction
$\WC \colon \Sigbis \to \Sig$
induces, for each $M \in \Struct(\Sigbis)$,
a presentation $\WC(M) = (X,\Phi)$ defined as follows:
\begin{itemize}
\item
$X$ is the set of expressions
\footnote{Here we adopt the equivalent approach outlined in Remark~\ref{rem:variables-vs-constants-etc}.}\
of the form $t(\vec a)$ such that
$(\vec x : \vec\sortbis \sorting t : \sort)$ is in the domain of $\WCTe(\WC)$,
$\vec a \in M(\vec\sortbis)$,
and
$M \models \WCTe(\WC)(t)(\vec a)$.

\item
$\Phi$ is the set of atomic formulae of the form
$\atom(\vec a)$
where $(\vec x : \vec\sortbis \sorting \atom)$ is in the domain of $\WCAt(\WC)$,
$\vec a \in M(\vec\sortbis)$,
and
$M \models \WCAt(\WC)(\atom)(\vec a)$.
\end{itemize}

%%%%%%%%%%%%%%%%%%%%%%%%%%%%%%%%%%%%%%%%%%%%%%%%%%%%%%%%%%%%%%%%%%%%%%%%%%%
\begin{remark}
\label{rem:wc:pres:horn}
%%%%%%%%%%%%%%%%%%%%%%%%%%%%%%%%%%%%%%%%%%%%%%%%%%%%%%%%%%%%%%%%%%%%%%%%%%%
In general, given a class of structures $\cat K$ and a  
presentation $(\vec x, (\atom_i)_{i \in I})$, there could be no (expansion of a) member of $\cat K$ that is presented by $(\vec x, (\atom_i)_{i \in I})$. 

A characteristic property of universal Horn theories is that, 
in their categories of models,
every presentation 
presents some expanded structure;
see~\cite[Theorem 9.2.2 and Exercise 9.2.19]{hodges93book}.
In particular, if $\cat K$ is the class of models of a universal Horn theory,
then there is always an expansion of a structure in $\cat K$ 
that is presented by $\WC(M)$. 

In fact, Hodges considered (infinitary) Horn theories
in~\cite{hodges74,hodges75la,hodges93book}.
Since we allow for arbitrary cartesian theories,
we will need a slight adaptation of Hodges' results.\footnote{Recall from Remark~\ref{rem:prelim:coste:horn} that
Horn theories are a proper subclass of cartesian theories.}
%%%%%%%%%%%%%%%%%%%%%%%%%%%%%%%%%%%%%%%%%%%%%%%%%%%%%%%%%%%%%%%%%%%%%%%%%%%
\end{remark}
%%%%%%%%%%%%%%%%%%%%%%%%%%%%%%%%%%%%%%%%%%%%%%%%%%%%%%%%%%%%%%%%%%%%%%%%%%%

%%%%%%%%%%%%%%%%%%%%%%%%%%%%%%%%%%%%%%%%%%%%%%%%%%%%%%%%%%%%%%%%%%%%%%%%%%%
\subsubsection{The word-construction of a finitary functor of lfp categories}
\label{sec:wc:filtcolim}
%%%%%%%%%%%%%%%%%%%%%%%%%%%%%%%%%%%%%%%%%%%%%%%%%%%%%%%%%%%%%%%%%%%%%%%%%%%
Fix a finitary functor $F \colon \Mod(\theorybis) \to \Mod(\theory)$,
where $\theorybis, \theory$ are cartesian theories.
Write $\Sigbis$ and $\Sig$ for the respective signatures of $\theorybis$ and $\theory$.
We will associate with $F$ a word-construction
\[
\begin{array}{*{5}{l}}
  \WC
& :
& \Sigbis
& \longto
& 
  \Sig
\end{array}
\]

\noindent
such that for each $M \in \Mod(\theorybis)$,
$\WC(M)$ presents in $\Mod(\theory)$ an expansion of $F M$.
This will allow us to translate the $\Lang_{\infty}(\Sig)$-theory of $F M$
into the $\Lang_{\infty}(\Sigbis)$-theory of $M$
(see Theorem~\ref{thm:wc:cor:trans} below).

Let $M \in \Mod(\theorybis)$.
Recall that $M$ is the canonical filtered colimit
of the cocone consisting of morphisms with finitely presentable domain $\FG{\vec x \mid \varphi} \to M$.
Since $F$ preserves filtered colimits,
$F M$ is the filtered colimit 
of the $F \FG{\vec x \mid \varphi} \to F M$
with connecting maps induced by commuting triangles in $\Mod(\theorybis)$.
It is convenient to have an explicit
description of this filtered colimit.

%%%%%%%%%%%%%%%%%%%%%%%%%%%%%%%%%%%%%%%%%%%%%%%%%%%%%%%%%%%%%%%%%%%%%%%%%%%
\begin{remark}
\label{rem:wc:filtcolim}
%%%%%%%%%%%%%%%%%%%%%%%%%%%%%%%%%%%%%%%%%%%%%%%%%%%%%%%%%%%%%%%%%%%%%%%%%%%
By Lemma~\ref{lem:coste:mod:filtcolim},
$\Mod(\theory)$ is closed in $\Struct(\Sig)$
under filtered colimits.
Following~\cite[Remarks 5.1(3) and 3.4(4)(iii)]{ar94book},%
\footnote{Alternatively, one can use the fact that
filtered colimits in $\Struct(\Sig)$ are 
computed in $\funct{\cat T(\Sig),\Set}$,
where $\cat T(\Sig)$ is the syntactic category of $\Sig$.
Cf.\ \S\ref{ss:lex-and-lfp}
and eq.~\eqref{eq:coste:struct} in \S\ref{sec:coste:synt}.}\
this yields the following description of $F M$.
\begin{enumerate}[(1)]

\item
\label{item:wc:filtcolim:sort}
For each sort $\sort \in \Sort(\Sig)$,
$F M(\sort)$
is the filtered colimit in $\Set$ of the
$F \FG{\vec x \mid \varphi}(\sort)$. 
Recall 
that this amounts to saying that $F M(\sort)$
is isomorphic to the quotient of the disjoint sum of the
$F \FG{\vec x \mid \varphi}(\sort)$'s
by the equivalence relation that
identifies those pairs of elements having the same image
in cospans of the form
\[
\begin{tikzcd}
  F \FG{\vec x \mid \varphi}(\sort)
  \arrow{dr}[left, yshift=-5pt]{(F h)^{\sort}}
&
& F \FG{\vec y \mid \psi}(\sort)
  \arrow{dl}[right, yshift=-5pt]{(F k)^{\sort}}
\\
& F \FG{\vec z \mid \theta}(\sort)
\end{tikzcd}
\]

\noindent
(see e.g.~\cite[Proposition~3.1.3]{ks06book}).

\item
\label{item:wc:filtcolim:fun}
Consider now a function symbol
$f \in \Fun(\Sig)(\sort_1,\dots,\sort_n;\sort)$.
Given
$\vec a \in F M(\vec\sort)$,
since the canonical diagram of $M$ is filtered
there is some
$F h \colon F \FG{\vec x \mid \varphi} \to F M$
such that
$\vec a$ is in the image of $F h$,
say $\vec a = (F h)(\vec b)$.
We can then let $f^{F M}(\vec a)$
be $h f_{\FG{\vec x \mid \varphi}}(\vec b)$.
That this indeed uniquely defines $f^{F M}$
follows from the fact that the canonical diagram of $M$ is filtered.

\item
\label{item:wc:filtcolim:pred}
Finally, for each relation symbol
$R \in \Rel(\Sig)(\sort_1,\dots,\sort_n)$,
$R^{F M}$ is the union of the images 
of the interpretations of $R$ under the morphisms
$F h \colon F \FG{\vec x \mid \varphi} \to F M$.
\end{enumerate}
%%%%%%%%%%%%%%%%%%%%%%%%%%%%%%%%%%%%%%%%%%%%%%%%%%%%%%%%%%%%%%%%%%%%%%%%%%%
\end{remark}
%%%%%%%%%%%%%%%%%%%%%%%%%%%%%%%%%%%%%%%%%%%%%%%%%%%%%%%%%%%%%%%%%%%%%%%%%%%

We now define the word-construction $\WC \colon \Sigbis \to \Sig$
associated with $F$.
To this end, we have to devise the signature $\Sig(\WC)$ and the functions
$\WCTe(\WC)$ and $\WCAt(\WC)$.

We first describe the signature $\Sig(\WC)$ extending $\Sig$.
Recall that the sorts of $\Sig(\WC)$ are $\Sort(\Sigbis) + \Sort(\Sig)$.
The additional function symbols of $\Sig(\WC)$
are as follows.
Let $\FG{\vec x \mid \varphi}$ be a finitely presentable 
model of $\theorybis$.
Assume $\vec x = x_1 : \sortbis_1,\dots,x_k : \sortbis_k$
(where $\sortbis_1,\dots,\sortbis_k$ are sorts of $\Sigbis$).
For
each $\sort \in \Sort(\Sig)$
and each $c \in F\FG{\vec x \mid \varphi}(\sort)$
we introduce a new function symbol
\[
\begin{array}{*{5}{l}}
  f^c_{\FG{\vec x \mid \varphi}}
& :
& \sortbis_1 \ldots \sortbis_k
& \longto
& \sort.
\end{array}
\]

The intended interpretation of $f^c_{\FG{\vec x \mid \varphi}}$
is as follows.
Given $M \in \Mod(\theorybis)$,
the function symbol
$f^c_{\FG{\vec x \mid \varphi}}$
with $c \in \FG{\vec x \mid \varphi}(\sort)$
is to be thought of as the function
\[
\begin{array}{r c l}
  \I{\vec x \mid \varphi}_M
& \longto
& (F M)(\sort)
\\

  \vec a
& \longmapsto
& (F h_{\vec a})(c)
\end{array}
\]

\noindent
where $h_{\vec a} \colon \FG{\vec x \mid \varphi} \to M$
is the homomorphism induced by $\vec a \in \I{\vec x \mid \varphi}_M$
(Lemma~\ref{lem:coste:lp:hom}).
It follows that $F M$ is generated
(in the sense of~\S\ref{sec:wc:genpres})
by the valuation
\begin{equation}
\label{eq:wc:wc:val}
\begin{array}{*{5}{l}}
  C
& :
& (f^c_{\FG{\vec x \mid \varphi}} 
  ,\,
  \vec a \in \I{\vec x \mid \varphi}_M)
& \longmapsto
& (F h_{\vec a})(c) \in F M
\end{array}
\end{equation}

\noindent
where $(\vec x \sorting \varphi)$ ranges over 
the cartesian formulae over $\theorybis$.

We now turn to $\WCTe(\WC)$.
The above description of $\Sig(\WC)$
readily gives the formula
$\WCTe(\WC)(f^c_{\FG{\vec x \mid \varphi}})(\vec x)$,
namely
\[
\begin{array}{l l l}
  \WCTe(\WC)(f^c_{\FG{\vec x \mid \varphi}}(\vec x))
& \deq
& \varphi(\vec x).
\end{array}
\]

\noindent
Moreover, since $F M$ is generated by the valuation $C$
of eq.~\eqref{eq:wc:wc:val},
for each term $t$ in $\Sig(\WC)$
that is not of the form $f^c_{\FG{\vec x \mid \varphi}}(\vec x)$,
we can simply let
\[
\begin{array}{l l l}
  \WCTe(\WC)(t)
& \deq
& \False.
\end{array}
\]

The case of $\WCAt(\WC)$ is less straightforward.
Let $\cat U$ be the syntactic category of $\Mod(\theorybis)$.
Let $\atombis$ be an atomic formula in $\Sig(\WC)$
whose free variables have sorts in $\Sort(\Sigbis)$.
The formula $\WCAt(\WC)(\atombis)$ in $\Sigbis$ is defined as follows.
Recall that the input sorts of the new function symbols
$f^{c}_{\FG{\vec x \mid \varphi}}$
are sorts of $\Sigbis$,
and write
\[
\begin{array}{l l l}
  \atombis
& =
& \atom\left(
  f^{c_1}_{\FG{\vec x_1 \mid \varphi_1}}(\vec x_1)
  ,\dots,
  f^{c_n}_{\FG{\vec x_n \mid \varphi_n}}(\vec x_n)
  \right)
\end{array}
\]

\noindent
where $\atom$ is an atomic formula in $\Sig$
and the $\vec x_i$'s need not be distinct.
We devise $\WCAt(\WC)(\atombis)$
on the basis of the following characterisation.

%%%%%%%%%%%%%%%%%%%%%%%%%%%%%%%%%%%%%%%%%%%%%%%%%%%%%%%%%%%%%%%%%%%%%%%%%%%
\begin{lemma}
\label{lem:wc:locval}
%%%%%%%%%%%%%%%%%%%%%%%%%%%%%%%%%%%%%%%%%%%%%%%%%%%%%%%%%%%%%%%%%%%%%%%%%%%
Let $\atom(y_1,\dots,y_n)$ be an atomic formula in $\Sig$
and let $M \in \Mod(\theorybis)$.
Given morphisms
$h_i \colon \FG{\vec x_i \mid \varphi_i} \to M$
for $i = 1,\dots,n$,
the following are equivalent:
\begin{enumerate}[(i)]
\item
\label{item:wc:locval:models}
$F M \models \atom((F h_1)(c_1),\dots,(F h_n)(c_n))$.

\item
\label{item:wc:locval:char}
There exist
$\FG{\vec x \mid \varphi}$
and
$h \colon \FG{\vec x \mid \varphi} \to M$
such that each $h_i$ can be factored as
\[
\begin{tikzcd}
  \FG{\vec x_i \mid \varphi_i}
  \arrow{r}{h_i}
  \arrow{d}[left]{k_i}
& M
\\
  \FG{\vec x \mid \varphi}
  \arrow{ur}[below]{h}
\end{tikzcd}
\]
where the morphisms $k_{i}$'s satisfy
\[
\begin{array}{l l l}
  F \FG{\vec x \mid \varphi}
& \models
& \atom\big( (F k_1)(c_1),\dots, (F k_n)(c_n) \big).
\end{array}
\]
\end{enumerate}
\end{lemma}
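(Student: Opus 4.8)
The plan is to prove the equivalence by exploiting the explicit presentation of $FM$ as a filtered colimit of the finitely presentable models $F\FG{\vec x \mid \varphi}$ recorded in Remark~\ref{rem:wc:filtcolim}. The implication \ref{item:wc:locval:char}$\Rightarrow$\ref{item:wc:locval:models} is the easy direction: from the factorisations $h_i = h \circ k_i$ I obtain $(Fh_i)(c_i) = (Fh)\big((Fk_i)(c_i)\big)$ by functoriality, and since $Fh$ is a homomorphism of $\Sig$-structures it preserves the interpretations of all function and relation symbols together with equality, hence preserves satisfaction of atomic formulae. Applying $Fh$ to the hypothesis $F\FG{\vec x \mid \varphi}\models \atom((Fk_1)(c_1),\dots,(Fk_n)(c_n))$ therefore yields \ref{item:wc:locval:models} directly.

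For the converse, I would first use that the canonical diagram of $M$---the comma category $\fp{\Mod(\theorybis)}/M$, whose objects are the arrows $h\colon\FG{\vec x\mid\varphi}\to M$---is filtered (as recalled in \S\ref{sec:wc:filtcolim}). Given the finitely many arrows $h_1,\dots,h_n$, I can therefore pass to a common stage: an object $h\colon\FG{\vec x\mid\varphi}\to M$ of the diagram together with morphisms $k_i$ satisfying $h\circ k_i = h_i$. Writing $d_i \deq (Fk_i)(c_i)\in F\FG{\vec x\mid\varphi}$, functoriality gives $(Fh_i)(c_i) = (Fh)(d_i)$, so hypothesis \ref{item:wc:locval:models} reads $FM\models\atom((Fh)(d_1),\dots,(Fh)(d_n))$.

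The heart of the argument is the standard principle that, in a filtered colimit of $\Sig$-structures, an atomic formula satisfied in the colimit on elements coming from a single stage is already satisfied at some later stage. Concretely, since $F$ is finitary and, by Lemma~\ref{lem:coste:mod:filtcolim}, filtered colimits in $\Mod(\theory)$ are computed in $\Struct(\Sig)$, the description in Remark~\ref{rem:wc:filtcolim} exhibits $FM$ as the filtered colimit of the $F\FG{\vec z\mid\chi}$ with colimit maps the $Fh$. I would invoke this principle at the stage $F\FG{\vec x\mid\varphi}$ to obtain a morphism $k\colon\FG{\vec x\mid\varphi}\to\FG{\vec x'\mid\varphi'}$ of the diagram, with $h'\circ k = h$ for the corresponding $h'\colon\FG{\vec x'\mid\varphi'}\to M$, such that $F\FG{\vec x'\mid\varphi'}\models\atom((Fk)(d_1),\dots,(Fk)(d_n))$. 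Setting $k_i'\deq k\circ k_i$ one checks $h'\circ k_i' = h_i$ and $(Fk)(d_i) = (Fk_i')(c_i)$, so that $\FG{\vec x'\mid\varphi'}$, $h'$ and the $k_i'$ witness \ref{item:wc:locval:char} (after renaming to match the statement).

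The main obstacle is the justification of this reflection principle in the relation case. The union description $R^{FM}=\bigcup_{h}\operatorname{im}\big(R^{F\FG{\vec z\mid\chi}}\big)$ from Remark~\ref{rem:wc:filtcolim}.\ref{item:wc:filtcolim:pred} only produces \emph{some} stage carrying a tuple in $R$ that maps to the given one; reconciling that tuple with our specific elements $\vec d$ requires a second appeal to filteredness together with the fact, from Remark~\ref{rem:wc:filtcolim}.\ref{item:wc:filtcolim:sort} (cf.\ \cite[Proposition~3.1.3]{ks06book}), that elements equalised in the colimit are already equalised at a finite stage. Treating a general atomic $\atom$ also uses that a homomorphism commutes with term evaluation, so that compound terms in $\atom$ collapse to elements of the relevant stage; I would fold these observations into the proof of the \ref{item:wc:locval:models}$\Rightarrow$\ref{item:wc:locval:char} direction, which is where essentially all the work lies.
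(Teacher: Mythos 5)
Your proposal is correct and follows essentially the same route as the paper's proof: both directions rest on Remark~\ref{rem:wc:filtcolim} (the filtered-colimit description of $F M$), filteredness of the canonical diagram $\fp{\Mod(\theorybis)}/M$, and the fact that homomorphisms preserve atomic formulae. The only difference is the order of the two appeals to filteredness in the hard direction---the paper first extracts a stage where $\atom$ holds with witnesses mapping to the $(F h_i)(c_i)$ and then reconciles that stage with the $h_i$, whereas you first pass to a common stage for the $h_i$ and then reflect $\atom$ to a later stage---which is an inessential reorganization.
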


%%%%%%%%%%%%%%%%%%%%%%%%%%%%%%%%%%%%%%%%%%%%%%%%%%%%%%%%%%%%%%%%%%%%%%%%%%%
\begin{proof}
%%%%%%%%%%%%%%%%%%%%%%%%%%%%%%%%%%%%%%%%%%%%%%%%%%%%%%%%%%%%%%%%%%%%%%%%%%%
We first show that
\(
  \text{\ref{item:wc:locval:char}}
  \imp
  \text{\ref{item:wc:locval:models}}
\).
Given $h \colon \FG{\vec x \mid \varphi} \to M$
and $k_i \colon \FG{\vec x_i \mid \varphi_i} \to \FG{\vec x \mid \varphi}$
as required,
since $h$ is a homomorphism we have
\[
\begin{array}{l l l}
  F M
& \models
& \atom\big( (F h)(F k_1)(c_1),\dots, (F h)(F k_n)(c_n) \big),
\end{array}
\]

\noindent
and thus we are done since
\[
\begin{array}{l l l}
  (F h_i)(c_i)
& =
& (F h)(F k_i)(c_i).
\end{array}
\]

We now show that
\(
  \text{\ref{item:wc:locval:models}}
  \imp
  \text{\ref{item:wc:locval:char}}
\).
Assume
$F M \models \atom((F h_1)(c_1),\dots,(F h_n)(c_n))$.
By Remark~\ref{rem:wc:filtcolim},
there are some
$h' \colon \FG{\vec x' \mid \varphi'} \to M$
and some $b_1,\dots,b_n \in F \FG{\vec x' \mid \varphi'}$
such that
$F \FG{\vec x' \mid \varphi'} \models \atom(b_1,\dots,b_n)$
and
$(F h')(b_i) = (F h_i)(c_i)$ for each $i = 1,\dots,n$.
Again by Remark~\ref{rem:wc:filtcolim},
using the fact that the canonical diagram of $M$ in $\Mod(\theorybis)$
is filtered,
$h'$ factors as 
\[
\begin{tikzcd}
  \FG{\vec x' \mid \varphi'}
  \arrow{r}{h''}
& \FG{\vec x \mid \varphi}
  \arrow{r}{h}
& M
\end{tikzcd}
\]

\noindent
such that for each $i = 1,\dots, n$
there is
a $k_i \colon \FG{\vec x_i \mid \varphi_i} \to \FG{\vec x \mid \varphi}$
with $h_i = h \comp k_i$
and $(F h'')(b_i) = (F k_i)(c_i)$.
Since $h''$ is a homomorphism, we have
\[
\begin{array}{l l l}
  F \FG{\vec x \mid \varphi}
& \models
& \atom\big( (F h'')(b_1),\dots, (F h'')(b_n) \big).
\end{array}
\]

\noindent
This concludes the proof. 
\end{proof}

Corollary~\ref{cor:coste:fp:triangle} 
provides us with a syntactic formulation of
condition~\ref{item:wc:locval:char}
in Lemma~\ref{lem:wc:locval}.
This leads us to define
$\WCAt(\WC)(\atombis)(\vec x_1,\dots,\vec x_n)$
as the formula
\[
  \bigvee_{\OI{\vec x \mid \varphi} \in \cat U}
  \bigvee_{\substack{\theta_1 \in \cat U\funct{\varphi,\varphi_1}
  ,\dots,
  \theta_n \in \cat U\funct{\varphi,\varphi_n}
  \\
  F\FG{\vec x \mid \varphi} \models
  \atom((F k_1)(c_1),\dots,(F k_n)(c_n))
  }}
  (\exists \vec x)\left( \varphi
  \land
  \bigwedge_{1 \leq i \leq n}
  \theta_i(\vec x, \vec x_i)
  \right)
\]

\noindent
where
$k_i \colon \FG{\vec x_i \mid\varphi_i} \to \FG{\vec x \mid \varphi}$
corresponds to $\yoneda\theta_i$
under Theorem~\ref{thm:coste:synt}.

This completes the definition of the word-construction $\WC$
associated with $F$.
We shall now prove that for each $M \in \Mod(\theorybis)$,
the expanded structure $(F M, C)$
is presented by $\WC(M)$ in $\Mod(\theory)$.

%%%%%%%%%%%%%%%%%%%%%%%%%%%%%%%%%%%%%%%%%%%%%%%%%%%%%%%%%%%%%%%%%%%%%%%%%%%
\begin{proposition}
\label{prop:wc:cor}
%%%%%%%%%%%%%%%%%%%%%%%%%%%%%%%%%%%%%%%%%%%%%%%%%%%%%%%%%%%%%%%%%%%%%%%%%%%
Given $M \in \Mod(\theorybis)$,
the presentation $\WC(M)$ presents
the expanded structure $(F M, C)$
in $\Mod(\theory)$,
where $C$ is as in eq.~\eqref{eq:wc:wc:val}.
%%%%%%%%%%%%%%%%%%%%%%%%%%%%%%%%%%%%%%%%%%%%%%%%%%%%%%%%%%%%%%%%%%%%%%%%%%%
\end{proposition}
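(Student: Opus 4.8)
The plan is to verify directly the conditions in the definition of ``presents'' from \S\ref{sec:wc:genpres}, in the form packaged by Lemma~\ref{lem:wc:pres}, applied to the class $\cat K = \Mod(\theory)$ and the expanded structure $(F M, C)$; note that $F M \in \Mod(\theory)$ since $F$ lands there, and that $C$ is the valuation of eq.~\eqref{eq:wc:wc:val}. By Lemma~\ref{lem:wc:pres} it suffices to establish (a) that $C$ generates $F M$ as a $\Sig$-structure, and (b) that for every atomic formula $\atombis$ over the generators one has $(F M, C) \models \atombis$ if and only if $\atombis$ is satisfied in every $(N,\vec b)$ with $N \in \Mod(\theory)$ and $(N,\vec b) \models \Phi$, where $\WC(M) = (X,\Phi)$.

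First I would dispatch (a) using the explicit description of the filtered colimit $F M$ recalled in Remark~\ref{rem:wc:filtcolim}: every element of $F M(\sort)$ is of the form $(F h_{\vec a})(c)$ for some finitely presentable $\FG{\vec x \mid \varphi}$, some $\vec a \in \I{\vec x \mid \varphi}_M$ and some $c \in F\FG{\vec x \mid \varphi}(\sort)$, hence lies in the image of $C$. Thus $C$ is sortwise surjective and a fortiori generates $F M$. Combined with the description of the interpretation of function symbols in Remark~\ref{rem:wc:filtcolim}.\ref{item:wc:filtcolim:fun}, this also shows that any $\Sig$-term built from the generators already equals a single generator, so that in verifying (b) I may restrict attention to the ``flat'' atomic formulae $\atombis = \atom\bigl(f^{c_1}_{\FG{\vec x_1 \mid \varphi_1}}(\vec x_1),\dots,f^{c_n}_{\FG{\vec x_n \mid \varphi_n}}(\vec x_n)\bigr)$ handled in the definition of $\WCAt(\WC)$.

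The heart of the proof is the chain of equivalences, to be proved for every such $\atombis$ and every instantiation $\vec a_1,\dots,\vec a_n$ with $\vec a_i \in \I{\vec x_i \mid \varphi_i}_M$:
\[
(F M, C) \models \atombis(\vec a_1,\dots,\vec a_n)
\iff
M \models \WCAt(\WC)(\atombis)(\vec a_1,\dots,\vec a_n)
\iff
\atombis(\vec a_1,\dots,\vec a_n) \in \Phi.
\]
The second equivalence is the definition of $\Phi$. For the first, unravelling $C$ turns the left-hand side into $F M \models \atom\bigl((F h_{\vec a_1})(c_1),\dots,(F h_{\vec a_n})(c_n)\bigr)$, which by Lemma~\ref{lem:wc:locval} is equivalent to the existence of a common $\FG{\vec x \mid \varphi}$ with factorisations $h_{\vec a_i} = h \comp k_i$ whose local witness satisfies $F\FG{\vec x \mid \varphi} \models \atom\bigl((F k_1)(c_1),\dots,(F k_n)(c_n)\bigr)$. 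This side condition is precisely the indexing condition of the disjunction defining $\WCAt(\WC)(\atombis)$, while Corollary~\ref{cor:coste:fp:triangle} (applied to each $k_i$, with the common witness $\vec x$ existentially quantified) renders the existence of such a common factorisation as satisfaction in $M$ of the corresponding disjunct $(\exists \vec x)\bigl(\varphi \land \bigwedge_i \theta_i(\vec x,\vec x_i)\bigr)$. Matching disjunct by disjunct yields the first equivalence; I expect the main obstacle to lie exactly here, in the bookkeeping that aligns the disjuncts of $\WCAt(\WC)(\atombis)$ with the factorisation data of Lemma~\ref{lem:wc:locval}, keeping track that the selection condition on $F\FG{\vec x \mid \varphi}$ is independent of $M$.

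With this chain in hand, (b) follows. The equivalence $(F M, C)\models\atombis \iff \atombis\in\Phi$ shows in particular that $(F M, C)$ is a model of $\Phi$; instantiating $(N,\vec b)$ at $(F M, C)$ then supplies the $(\Leftarrow)$ direction of the biconditional in Lemma~\ref{lem:wc:pres}.\ref{item:wc:pres:char}, while the $(\Rightarrow)$ direction is immediate since $(F M, C)\models\atombis$ forces $\atombis\in\Phi$, whence $\atombis$ holds in every $(N,\vec b)\models\Phi$. Applying Lemma~\ref{lem:wc:pres} concludes that $\WC(M)$ presents $(F M, C)$ in $\Mod(\theory)$.
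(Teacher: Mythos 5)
Your proposal is correct and follows essentially the same route as the paper's own proof: both reduce the claim via Lemma~\ref{lem:wc:pres} to showing that $C$ generates $F M$ (by Remark~\ref{rem:wc:filtcolim}) and that $(F M, C) \models \atombis$ if, and only if, $\atombis \in \Phi$ (equivalently, $M \models \WCAt(\WC)(\atombis)$), the latter equivalence following from Lemma~\ref{lem:wc:locval} together with Corollary~\ref{cor:coste:fp:triangle}. The only differences are presentational: you spell out why the membership criterion ``$\atombis \in \Phi$'' suffices for the universally quantified condition in Lemma~\ref{lem:wc:pres}.\ref{item:wc:pres:char} (a step the paper compresses into the remark that $F M$ is itself a model of $\theory$), and your preliminary ``flattening'' of atomic formulae is unnecessary, since the formulae handled by $\WCAt(\WC)$ already comprise all atomic formulae of $\Sig$ (possibly with nested $\Sig$-function symbols) instantiated at the generators.
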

%%%%%%%%%%%%%%%%%%%%%%%%%%%%%%%%%%%%%%%%%%%%%%%%%%%%%%%%%%%%%%%%%%%%%%%%%%%

%%%%%%%%%%%%%%%%%%%%%%%%%%%%%%%%%%%%%%%%%%%%%%%%%%%%%%%%%%%%%%%%%%%%%%%%%%%
\begin{proof}
%%%%%%%%%%%%%%%%%%%%%%%%%%%%%%%%%%%%%%%%%%%%%%%%%%%%%%%%%%%%%%%%%%%%%%%%%%%
Let $M \in \Mod(\theorybis)$
and write $\WC(M) = (X,\Phi)$ as in~\S\ref{sec:wc:wcdef}.
We apply Lemma \ref{lem:wc:pres}.

First, it follows from
Remark~\ref{rem:wc:filtcolim}.\ref{item:wc:filtcolim:sort}
that $F M$ is generated by $C$.
But the elements of~$X$ 
are precisely the $f^c_{\FG{\vec x \mid \varphi}}(\vec a)$
such that $(M,\vec a) \models \varphi$,
i.e.\ such that there is a (necessarily unique)
$h_{\vec a} \colon \FG{\vec x \mid \varphi} \to M$
taking $\vec x$ to $\vec a$.
We thus get an interpretation of the elements of~$X$
in the sequence $C$,
and this interpretation generates $F M$. 

Consider now an atomic formula $\atombis$ in $\Sig$
with parameters in $X$,
say
\[
\begin{array}{l l l}
  \atombis
& =
& \atom\left(
  f^{c_1}_{\FG{\vec x_1 \mid \varphi_1}}(\vec a_1)
  ,\dots,
  f^{c_n}_{\FG{\vec x_n \mid \varphi_n}}(\vec a_n)
  \right).
\end{array}
\]

\noindent
We need to prove that $(F M, C) \models \atombis$ if, and only if, each $N\in\Mod(\theory)$ equipped with an interpretation of the elements of $X$ satisfies $\atombis$ whenever it satisfies all formulae in $\Phi$. 
Note that since $F M$ is a model of $\theory$,
it is enough to show that
for all $\atombis$ as above, we have
\[
\begin{array}{l l l}
  (F M, C) \models \atombis
& \text{iff}
& \atombis \in \Phi,
\end{array}
\]

\noindent
that is
\[
\begin{array}{l l l}
  (F M, C) \models \atombis
& \text{iff}
& M \models \WCAt(\WC)(\atombis).
\end{array}
\]

\noindent
But this follows from Corollary~\ref{cor:coste:fp:triangle}
and Lemma~\ref{lem:wc:locval}.
%%%%%%%%%%%%%%%%%%%%%%%%%%%%%%%%%%%%%%%%%%%%%%%%%%%%%%%%%%%%%%%%%%%%%%%%%%%
\end{proof}
%%%%%%%%%%%%%%%%%%%%%%%%%%%%%%%%%%%%%%%%%%%%%%%%%%%%%%%%%%%%%%%%%%%%%%%%%%%

%%%%%%%%%%%%%%%%%%%%%%%%%%%%%%%%%%%%%%%%%%%%%%%%%%%%%%%%%%%%%%%%%%%%%%%%%%%
\begin{remark}
%%%%%%%%%%%%%%%%%%%%%%%%%%%%%%%%%%%%%%%%%%%%%%%%%%%%%%%%%%%%%%%%%%%%%%%%%%%
Note that since $\theory$ is not assumed to be a Horn theory,
we cannot deduce Proposition~\ref{prop:wc:cor}
from~\cite[Theorem 9.2.2]{hodges93book}.

Moreover, comparing Proposition~\ref{prop:wc:cor}
with~\cite[Theorem 9.3.7]{hodges93book},
note that the latter
deals with the case where $\theory$ is a universal Horn theory
and $\Mod(\theorybis) = \Struct(\Sigbis)$.

As an aside, 
note that all formulae $\varphi$ and $\theta_i$
involved in $\WCTe(\WC)$ and $\WCAt(\WC)$
are cartesian and thus primitive-positive (cf.\ Remark~\ref{rem:emb:pure}). 
Hence, the word-construction $\WC$ is presenting
in the sense of~\cite[\S 9.3]{hodges93book}.
%%%%%%%%%%%%%%%%%%%%%%%%%%%%%%%%%%%%%%%%%%%%%%%%%%%%%%%%%%%%%%%%%%%%%%%%%%%
\end{remark}
%%%%%%%%%%%%%%%%%%%%%%%%%%%%%%%%%%%%%%%%%%%%%%%%%%%%%%%%%%%%%%%%%%%%%%%%%%%

For
$F \colon \Mod(\theorybis) \to \Mod(\theory)$
and $\WC \colon \Sigbis \to \Sig$ as above,
and $M \in \Mod(\theorybis)$,
we shall now explain 
how the $\Lang_{\infty}(\Sig)$-theory of $F M$
can be translated into the $\Lang_{\infty}(\Sigbis)$-theory of~$M$.
This relies on the following formula translation $(-)^\WC$.

Let $\psi(y_1,\dots,y_n)$
be a formula in $\Sig$,
and let
\(
  \vec f =
  f^{c_1}_{\FG{\vec x_1 \mid \varphi_1}}(\vec x_1)
  ,\dots,
  f^{c_n}_{\FG{\vec x_n \mid \varphi_n}}(\vec x_n)
\)
be additional function symbols from $\Sig(\WC)$
whose output sorts match the sorts of $y_1,\dots,y_n$.
We define
$\psi(\vec f)^\WC(\vec x_1,\dots,\vec x_n) \in \Lang_\infty(\Sigbis)$
by induction on $\psi$ as follows.
\begin{itemize}
\item
If $\psi(y_1,\dots,y_n)$ is an atomic formula $\atom(y_1,\dots,y_n)$,
\[
\begin{array}{l l l}
  \psi(\vec f)^\WC(\vec x_1,\dots,\vec x_n)
& \deq
& \WCAt(\WC)(\atombis)(\vec x_1,\dots,\vec x_n)
\end{array}
\]

\noindent
where
\(
  \atombis
  =
  \atom\left(
  f^{c_1}_{\FG{\vec x_1 \mid \varphi_1}}(\vec x_1)
  ,\dots,
  f^{c_n}_{\FG{\vec x_n \mid \varphi_n}}(\vec x_n)
  \right)
\).

\item
The translation $(-)^\WC$ commutes with propositional connectives.

\item
In the case of 
$\psi(\vec y) = (\exists y : \sort)\psi'(\vec y,y)$,
we let
$\psi(\vec f)^\WC(\vec x_1,\dots,\vec x_n)$
be the formula
\[
  \bigvee_{\OI{\vec x \mid \varphi} \in \cat U} \ 
  \bigvee_{c \in (F\FG{\vec x \mid \varphi})(\sort)}
  (\exists \vec x)
  \left(
  \varphi(\vec x)
  ~\land~
  \psi'(\vec f, f^{c}_{\FG{\vec x \mid \varphi}})^\WC(\vec x_1,\dots,\vec x_n,\vec x)
  \right)
\]

\noindent
where $\cat U$ is the syntactic category of $\theorybis$.
\end{itemize}

The translation $(-)^\WC$ is correct in the following sense:

%%%%%%%%%%%%%%%%%%%%%%%%%%%%%%%%%%%%%%%%%%%%%%%%%%%%%%%%%%%%%%%%%%%%%%%%%%%
\begin{theorem}
\label{thm:wc:cor:trans}
%%%%%%%%%%%%%%%%%%%%%%%%%%%%%%%%%%%%%%%%%%%%%%%%%%%%%%%%%%%%%%%%%%%%%%%%%%%
Let $\psi(y_1,\dots,y_n) \in \Lang_\infty(\Sig)$,
and let
\(
  \vec f = 
  f^{c_1}_{\FG{\vec x_1 \mid \varphi_1}}(\vec x_1)
  ,\dots,
  f^{c_n}_{\FG{\vec x_n \mid \varphi_n}}(\vec x_n)
\)
be additional function symbols from $\Sig(\WC)$.
Let $M \in \Mod(\theorybis)$
and let $\vec a_1,\dots,\vec a_n \in M$
be such that
$(M, \vec a_i) \models \WCTe(\WC)(f^{c_i}_{\FG{\vec x_i \mid \varphi_i}}(\vec x_i))$
for each $i = 1,\dots,n$.
Then the following statements are equivalent:
\begin{enumerate}[(i)]
\item
$F M \models \psi\left( (F h_1)(c_1) ,\dots, (F h_n)(c_n) \right)$,
where $h_i \colon \FG{\vec x_i \mid \varphi_i} \to M$
takes $\vec x_i$ to $\vec a_i$.

\item
$M \models \psi(\vec f)^\WC(\vec a_1,\dots,\vec a_n)$.
\end{enumerate}
%%%%%%%%%%%%%%%%%%%%%%%%%%%%%%%%%%%%%%%%%%%%%%%%%%%%%%%%%%%%%%%%%%%%%%%%%%%
\end{theorem}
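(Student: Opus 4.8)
The plan is to establish the equivalence by structural induction on the formula $\psi$, with the induction hypothesis quantified over all choices of the auxiliary data: the additional function symbols $\vec f$, the model $M \in \Mod(\theorybis)$, and the tuples $\vec a_i$ satisfying $(M,\vec a_i)\models\WCTe(\WC)(f^{c_i}_{\FG{\vec x_i\mid\varphi_i}}(\vec x_i))$, which by definition of $\WCTe(\WC)$ simply means $\vec a_i \in \I{\vec x_i \mid \varphi_i}_M$. Throughout I identify, via Lemma~\ref{lem:coste:lp:hom}, a tuple $\vec a \in \I{\vec x \mid \varphi}_M$ with the unique homomorphism $h_{\vec a}\colon \FG{\vec x \mid \varphi}\to M$ taking $\vec x$ to $\vec a$, so that the element $(F h_{\vec a})(c)$ of $F M$ is exactly the value assigned to $f^{c}_{\FG{\vec x \mid \varphi}}(\vec a)$ by the valuation $C$ of eq.~\eqref{eq:wc:wc:val}.

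The base case, where $\psi$ is atomic, is precisely the equivalence already isolated in the proof of Proposition~\ref{prop:wc:cor}. Combining Lemma~\ref{lem:wc:locval}, which rewrites $F M \models \atom((F h_1)(c_1),\dots,(F h_n)(c_n))$ as the existence of a common finitely presentable $\FG{\vec x \mid \varphi}$ through which all the $h_i$ factor and in whose image under $F$ the atom holds, with the syntactic reformulation of those factorisations supplied by Corollary~\ref{cor:coste:fp:triangle}, yields exactly $M \models \WCAt(\WC)(\atombis)(\vec a_1,\dots,\vec a_n)$, which is $\psi(\vec f)^\WC(\vec a_1,\dots,\vec a_n)$ by definition. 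The propositional cases are immediate, since both $(-)^\WC$ and the satisfaction relation commute with negation and with arbitrary conjunctions and disjunctions.

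The crux is the quantifier case $\psi(\vec y) = (\exists y:\sort)\psi'(\vec y, y)$. For the backward direction, a disjunct of $\psi(\vec f)^\WC$ witnessed at some $\OI{\vec x \mid \varphi}\in\cat U$, some $c \in (F\FG{\vec x \mid \varphi})(\sort)$, and some tuple $\vec a$ supplies $M\models\varphi(\vec a)$, hence a homomorphism $h\colon\FG{\vec x \mid \varphi}\to M$ taking $\vec x$ to $\vec a$, together with $M \models \psi'(\vec f, f^{c}_{\FG{\vec x \mid \varphi}})^\WC(\vec a_1,\dots,\vec a_n,\vec a)$; the induction hypothesis applied to $\psi'$, with the extra symbol $f^c_{\FG{\vec x \mid \varphi}}$ and the extra tuple $\vec a$, then gives $F M \models \psi'((F h_1)(c_1),\dots,(F h_n)(c_n),(F h)(c))$, so $(F h)(c)$ witnesses the existential in $F M$. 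The forward direction is where finitariness of $F$ enters: given a witness $d \in (F M)(\sort)$ for the existential in $F M$, I invoke the filtered-colimit description of $F M$ in Remark~\ref{rem:wc:filtcolim}.\ref{item:wc:filtcolim:sort} to write $d = (F h)(c)$ for some $h\colon\FG{\vec x \mid \varphi}\to M$ with finitely presentable domain and some $c \in (F\FG{\vec x \mid \varphi})(\sort)$; setting $\vec a$ so that $h = h_{\vec a}$ (so $M\models\varphi(\vec a)$) and applying the induction hypothesis to $\psi'$ produces exactly the disjunct of $\psi(\vec f)^\WC$ indexed by $\OI{\vec x \mid \varphi}$ and $c$, witnessed by $\vec a$. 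The universal case reduces to this one via $(\forall y)\psi' \equiv \neg(\exists y)\neg\psi'$, using that $(-)^\WC$ commutes with negation.

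The main obstacle I anticipate is the bookkeeping in the existential step: one must apply the induction hypothesis in the correctly enlarged context with its auxiliary function symbol, and verify that the representation $d = (F h)(c)$ furnished by finitariness ranges over precisely the index set of the disjunction defining $\psi(\vec f)^\WC$. The substantive mathematical content, namely that every element of $F M$ arises as $(F h)(c)$ with finitely presentable domain and that atoms in $F M$ are detected locally in the canonical diagram, has already been packaged into Remark~\ref{rem:wc:filtcolim} and Lemma~\ref{lem:wc:locval}; beyond invoking these, the argument is a routine structural induction.
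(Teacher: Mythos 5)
Your proposal is correct and follows essentially the same route as the paper's proof: structural induction on $\psi$ with the hypothesis quantified over all auxiliary data, the atomic case settled by combining Lemma~\ref{lem:wc:locval} with Corollary~\ref{cor:coste:fp:triangle}, and the existential case settled by the filtered-colimit description of $F M$ from Remark~\ref{rem:wc:filtcolim} together with the correspondence between homomorphisms $\FG{\vec x \mid \varphi}\to M$ and tuples $\vec a$ satisfying $\varphi$. Your treatment is in fact more explicit than the paper's (which compresses the two directions of the quantifier step and omits the universal case), but the mathematical content is identical.
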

%%%%%%%%%%%%%%%%%%%%%%%%%%%%%%%%%%%%%%%%%%%%%%%%%%%%%%%%%%%%%%%%%%%%%%%%%%%

%%%%%%%%%%%%%%%%%%%%%%%%%%%%%%%%%%%%%%%%%%%%%%%%%%%%%%%%%%%%%%%%%%%%%%%%%%%
\begin{proof}
%%%%%%%%%%%%%%%%%%%%%%%%%%%%%%%%%%%%%%%%%%%%%%%%%%%%%%%%%%%%%%%%%%%%%%%%%%%
The statement is proved by structural induction on $\psi$.
We only spell out the most important cases:
\begin{itemize}
\item
The case of $\psi$ atomic follows
from 
Corollary~\ref{cor:coste:fp:triangle}
and Lemma~\ref{lem:wc:locval}.

\item
If
$\psi(\vec y) = (\exists y)\psi'(\vec y,y)$,
it follows from Remark~\ref{rem:wc:filtcolim}
that
\[
F M \models \psi\left( (F h_1)(c_1) ,\dots, (F h_n)(c_n) \right)
\]
if, and only if,
there are
$h \colon \FG{\vec x \mid \varphi} \to M$
and
$c \in F\FG{\vec x \mid \varphi}$
such that
\[
  F M \models \psi'\left( (F h_1)(c_1) ,\dots, (F h_n)(c_n), (F h)(c) \right).
\]

\noindent
But there is a homomorphism
$h \colon \FG{\vec x \mid \varphi} \to M$
if, and only if, there are $\vec a \in M$
such that $(M,\vec a) \models \varphi$.
Hence the result follows from the induction hypothesis.
\qedhere
\end{itemize}
%%%%%%%%%%%%%%%%%%%%%%%%%%%%%%%%%%%%%%%%%%%%%%%%%%%%%%%%%%%%%%%%%%%%%%%%%%%
\end{proof}
%%%%%%%%%%%%%%%%%%%%%%%%%%%%%%%%%%%%%%%%%%%%%%%%%%%%%%%%%%%%%%%%%%%%%%%%%%%

%%%%%%%%%%%%%%%%%%%%%%%%%%%%%%%%%%%%%%%%%%%%%%%%%%%%%%%%%%%%%%%%%%%%%%%%%%%
\begin{remark}
\label{rem:wc:cor:trans}
%%%%%%%%%%%%%%%%%%%%%%%%%%%%%%%%%%%%%%%%%%%%%%%%%%%%%%%%%%%%%%%%%%%%%%%%%%%
Theorem~\ref{thm:wc:cor:trans} is akin to~\cite[Theorem 9.3.2]{hodges93book}.
We make the following further comments:
\begin{enumerate}[(1)]
\item
\label{item:wc:cor:trans:lang}
By Theorem~\ref{thm:wc:cor:trans},
if $M$ and $N$ are equivalent
in $\Lang_{\infty}(\Sigbis)$,
then
$F M$ and $F N$ are equivalent in $\Lang_{\infty}(\Sig)$.

\item
Unlike the case of interpretations (cf.\ Remark~\ref{rem:coste:interp:card}), if $M$ and $N$ are equivalent
in $\Lang_{\omega}(\Sigbis)$
it does not necessarily follow that $F M$ and $F N$ are equivalent in $\Lang_{\omega}(\Sig)$.

\item
When $\Mod(\theorybis)$ and $\Mod(\theory)$
are categories of structures,
item~\ref{item:wc:cor:trans:lang}
can equivalently be deduced from~\cite[Theorems~2.18 and~2.25]{br18apal}.

Indeed, it follows from the proof of~\cite[Theorem~1.70]{ar94book}
that both $\Mod(\theorybis)$ and $\Mod(\theory)$
are $\omega$-mono generated in the sense of~\cite{br18apal}.
Then, by~\cite[Theorem~2.25]{br18apal},
if $\Mod(\theorybis)$ and $\Mod(\theory)$
are categories of structures
$\Struct(\Sigbis)$ and $\Struct(\Sig)$, respectively,
the above preservation of logical equivalence
is given by~\cite[Theorem~2.18]{br18apal}.
\end{enumerate}
%%%%%%%%%%%%%%%%%%%%%%%%%%%%%%%%%%%%%%%%%%%%%%%%%%%%%%%%%%%%%%%%%%%%%%%%%%%
\end{remark}
%%%%%%%%%%%%%%%%%%%%%%%%%%%%%%%%%%%%%%%%%%%%%%%%%%%%%%%%%%%%%%%%%%%%%%%%%%%

%%%%%%%%%%%%%%%%%%%%%%%%%%%%%%%%%%%%%%%%%%%%%%%%%%%%%%%%%%%%%%%%%%%%%%%%%%%
\subsection{Proof of Theorem~\ref{thm:path:main}}
\label{sec:wc:main}
%%%%%%%%%%%%%%%%%%%%%%%%%%%%%%%%%%%%%%%%%%%%%%%%%%%%%%%%%%%%%%%%%%%%%%%%%%%
We are now in a position to prove our main result,
which we recall for the reader's convenience.

\maintheorem*

Fix an adjunction 
\[
\begin{tikzcd}
  \C
  \ar[r, bend left, "\Ladj"]
  \ar[r, phantom, description, "\textnormal{\footnotesize{$\bot$}}"]
& \E
  \ar[l, bend left, "R"]
\end{tikzcd}
\]

\noindent
as in the statement above.
Recall that this means the following:
\begin{enumerate}[(1)]
\item
There is a cartesian theory $\theory$ in $\Sig$ such that $\E = \Mod(\theory)$.

\item
The wooded category $\C$ is lfp.
Hence, by Corollary~\ref{cor:lfp-iff-models-of-T}\ref{lfp-model-cat}
we can identify $\C$ with $\Mod(\theorybis)$
for a cartesian theory $\theorybis$ in a signature $\Sigbis$.

\item
The right adjoint $R \colon \E \to \C$ is finitary
(Definition~\ref{def:path:finaccadj});
equivalently, the left adjoint $\Ladj \colon \C \to \E$
preserves finitely presentable objects
(Proposition~\ref{p:lfp-morphisms-characterisation}).

\item
\label{item:wc:main:cond-detect-pe}
The paths of $\C$ are finitely presentable,
and a $\C$-morphism $f \colon P \to a$ with $P$ a path
is an embedding (in the sense of the wooded category $\C$) exactly
when $L f$ is an embedding of $\Sig$-structures in $\E$
(Definition~\ref{def:path:detection-path-emb}).
\setcounter{SplitEnum}{\value{enumi}}
\end{enumerate}
Condition~\ref{item:wc:main:cond-detect-pe} above
is the key part in the notion of detection of path embeddings.

We know from Corollary~\ref{cor:hintikka:wooded:R-bfe}
that the conclusion of Theorem~\ref{thm:path:main}
holds under the assumption that path embeddings in $\C$ are definable in $\Sigbis$
(Definition~\ref{def:hintikka:def-path-emb}), that is
\begin{enumerate}[(1)]
\setcounter{enumi}{\value{SplitEnum}}

\item
\label{item:wc:main:cond-definable-pe}
For each path $\FG{\vec y \mid \psi}$ in $\C$
there is a 
formula $\FEmb_{\C}\FG{\vec y | \psi}(\vec y) \in \Lang_\infty(\Sigbis)$
such that for each $a \in \C$
and each $h \colon \FG{\vec y \mid \psi} \to a$
taking $\vec y$ to $\vec b \in \I{\vec y \mid \psi}_a$,
\[
\begin{array}{l !{\quad\longiff\quad} l}
  \text{$h$ is an embedding in $\C$}
& a \models \FEmb_{\C}\FG{\vec y \mid \psi}(\vec b).
\end{array}
\]
\end{enumerate}

\noindent
Theorem~\ref{thm:path:main} is then an immediate consequence of the following lemma,
which shows that, in the setting of the theorem,
\ref{item:wc:main:cond-detect-pe} $\Rightarrow$ \ref{item:wc:main:cond-definable-pe}.
I.e., path embeddings in $\C$ are definable whenever they are detected.

%%%%%%%%%%%%%%%%%%%%%%%%%%%%%%%%%%%%%%%%%%%%%%%%%%%%%%%%%%%%%%%%%%%%%%%%%%%
\begin{lemma}
\label{lem:emb:main}
%%%%%%%%%%%%%%%%%%%%%%%%%%%%%%%%%%%%%%%%%%%%%%%%%%%%%%%%%%%%%%%%%%%%%%%%%%%
Consider an adjunction $\Ladj \colon \C \inadj \E \cocolon R$ as above, and
let $P = \FG{\vec y \mid \psi}$ be a path in $\C$.
There is a formula
\(
  \FEmb_{\C}[P](\vec y)
  \in \Lang_\infty(\Sigbis)
\)
such that for every $a \in \C$ and every homomorphism $h \colon \FG{\vec y \mid \psi} \to a$
taking $\vec y$ to $\vec b \in \I{\vec y \mid \psi}_a$,
the following are equivalent:
\begin{enumerate}[(i)]
\item h is an embedding,

\item
$a \models 
  \FEmb_{\C}[P](\vec b)$.
\end{enumerate}
%%%%%%%%%%%%%%%%%%%%%%%%%%%%%%%%%%%%%%%%%%%%%%%%%%%%%%%%%%%%%%%%%%%%%%%%%%%
\end{lemma}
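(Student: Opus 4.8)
The plan is to reduce the statement to the formulae for embeddings of $\Sig$-structures already obtained in Corollary~\ref{cor:emb:emb}, and to transport those formulae from $\E$ back to $\C$ along the left adjoint $\Ladj$ by means of the associated word-construction, using the detection hypothesis to bridge the two. First I would observe that $\Ladj \colon \C = \Mod(\theorybis) \to \E = \Mod(\theory)$, being a left adjoint, preserves all colimits and is in particular finitary; hence by \S\ref{sec:wc:filtcolim} it carries a word-construction $\WC \colon \Sigbis \to \Sig$ for which the translation $(-)^{\WC}$ satisfies Theorem~\ref{thm:wc:cor:trans}. Since $R$ is finitary, $\Ladj$ preserves finitely presentable objects (Proposition~\ref{p:lfp-morphisms-characterisation}), so $\Ladj P$ is finitely presentable in $\E$; write $\Ladj P \cong \FG{\vec x \mid \varphi}$ with $\vec x = x_1,\dots,x_n$ and with generators $\const x_1,\dots,\const x_n \in \Ladj P$ as in Lemma~\ref{lem:coste:fp:const}. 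Corollary~\ref{cor:emb:emb} then supplies a formula $\FEmb\FG{\vec x \mid \varphi}(\vec x) \in \Lang_\infty(\Sig)$ detecting embeddings of $\Sig$-structures with domain $\Ladj P$.

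Next I would define the required formula by translating $\FEmb\FG{\vec x \mid \varphi}$ along $(-)^{\WC}$, instantiating each free variable $x_i$ by the word-construction term $f^{\const x_i}_{P}(\vec y)$ and identifying the resulting copies of the context $\vec y$. Concretely, setting $\vec f = f^{\const x_1}_{P}(\vec y),\dots,f^{\const x_n}_{P}(\vec y)$, I take $(\FEmb\FG{\vec x \mid \varphi}(\vec f))^{\WC}(\vec y,\dots,\vec y)$ and let
\[
\FEmb_{\C}[P](\vec y) \deq \bigl(\FEmb\FG{\vec x \mid \varphi}(\vec f)\bigr)^{\WC},
\qquad \text{all } n \text{ context-tuples identified with } \vec y.
\]
The crux, for a homomorphism $h \colon P \to a$ taking $\vec y$ to $\vec b \in \I{\vec y \mid \psi}_a$, is to check that the tuple $\vec c$ to which $\Ladj h$ takes the generators $\vec x$ is precisely $((\Ladj h)(\const x_1),\dots,(\Ladj h)(\const x_n))$ (by Lemma~\ref{lem:coste:fp:const}), and that each $(\Ladj h)(\const x_i)$ coincides with the value $(\Ladj h_{\vec b})(\const x_i)$ computed by the word-construction's valuation $C$ of eq.~\eqref{eq:wc:wc:val}. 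This matching of the generators-and-relations description of $\Ladj P$ against the syntactic data of $\WC$ is the step I expect to be the main obstacle, as it is where all the bookkeeping of the word-construction must align with the output of Corollary~\ref{cor:emb:emb}.

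With this identification in hand, the hypotheses of Theorem~\ref{thm:wc:cor:trans} are met: one has $\WCTe(\WC)(f^{\const x_i}_{P}(\vec y)) = \psi(\vec y)$, which holds of $(a,\vec b)$ since $\vec b \in \I{\vec y \mid \psi}_a$. The theorem then yields
\[
\Ladj a \models \FEmb\FG{\vec x \mid \varphi}(\vec c) \quad\longiff\quad a \models \FEmb_{\C}[P](\vec b).
\]
Finally I would chain three equivalences: $h$ is an embedding in $\C$ if, and only if, $\Ladj h$ is an embedding of $\Sig$-structures (detection of path embeddings, using that $P$ is a path and $\Ladj P$ is finitely presentable); the latter if, and only if, $\Ladj a \models \FEmb\FG{\vec x \mid \varphi}(\vec c)$ (Corollary~\ref{cor:emb:emb}, applied to $\Ladj h$ taking $\vec x$ to $\vec c$); and this if, and only if, $a \models \FEmb_{\C}[P](\vec b)$ (the displayed equivalence). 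This produces the desired formula $\FEmb_{\C}[P](\vec y) \in \Lang_\infty(\Sigbis)$ and completes the proof.
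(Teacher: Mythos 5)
Your proof is correct and takes essentially the same route as the paper's own: apply Corollary~\ref{cor:emb:emb} to $\Ladj P = \FG{\vec x \mid \varphi}$, translate the resulting formula along the word-construction $\WC(\Ladj)$ with the variables instantiated by the function symbols $f^{\const x_i}_{P}(\vec y)$ and all context copies identified with $\vec y$, and close the chain of equivalences using Theorem~\ref{thm:wc:cor:trans} together with the detection-of-path-embeddings hypothesis. The identification you single out as the crux, namely that the tuple $\vec c$ inducing $\Ladj h$ is $((\Ladj h)(\const x_1),\dots,(\Ladj h)(\const x_n))$ and agrees with the word-construction's valuation, is resolved in the paper exactly as you propose, via Lemma~\ref{lem:coste:fp:const} and Proposition~\ref{prop:wc:cor}.
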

%%%%%%%%%%%%%%%%%%%%%%%%%%%%%%%%%%%%%%%%%%%%%%%%%%%%%%%%%%%%%%%%%%%%%%%%%%%

%%%%%%%%%%%%%%%%%%%%%%%%%%%%%%%%%%%%%%%%%%%%%%%%%%%%%%%%%%%%%%%%%%%%%%%%%%%
\begin{proof}
%%%%%%%%%%%%%%%%%%%%%%%%%%%%%%%%%%%%%%%%%%%%%%%%%%%%%%%%%%%%%%%%%%%%%%%%%%%
Let $\FG{\vec x \mid \varphi}$ be the finitely presentable
model $\Ladj \FG{\vec y \mid \psi}=LP$. 
Corollary~\ref{cor:emb:emb}
provides us with
a formula
$\FEmb_{\E}\FG{\vec x \mid \varphi}(\vec x) \in \Lang_\infty(\Sig)$
such that
for each $a \in \C$
and each homomorphism $h \colon \FG{\vec y \mid \psi} \to a$,
the following are equivalent:
\begin{enumerate}[(A)]
\item\label{item:Lh-emb-struct}
$L h \colon \FG{\vec x \mid \varphi} \to L a$
is an embedding of structures in $\E$.

\item
\label{item:emb:main:c}
\(
  \Ladj a
  \models
  \FEmb_{\E}\FG{\vec x \mid \varphi}(\vec c)
\),
where $\Ladj h \colon \FG{\vec x \mid \varphi} \to \Ladj a$
is induced by
$\vec c \in \I{\vec x \mid \varphi}_{\Ladj a}$.

\setcounter{SplitEnum}{\value{enumi}}
\end{enumerate}
We will define the formula 
$\FEmb_{\C}[P](\vec y) \in \Lang_\infty(\Sigbis)$
as an appropriate translation of
the formula $\FEmb_{\E}\FG{\vec x \mid \varphi}(\vec x)$.
To this end, consider the word-construction
${\WC(\Ladj) \colon \Sigbis \to \Sig}$ induced
by the left adjoint ${\Ladj \colon \C \to \E}$ (\S\ref{sec:wc:filtcolim})
and the ensuing formula translation $(-)^{\WC(\Ladj)}$, cf.\ Theorem~\ref{thm:wc:cor:trans}.

Write $\vec x = x_1,\dots,x_n$.
For each $i = 1,\dots,n$,
consider
the element $\const x_i \in \FG{\vec x \mid \varphi}$
given by 
Lemma~\ref{lem:coste:fp:const}
and let
\(
  \vec f =
  f^{\const x_1}_{\FG{\vec y \mid \psi}}(\vec y_1)
  ,\dots,
  f^{\const x_n}_{\FG{\vec y \mid \psi}}(\vec y_n)
\)
be additional function symbols from $\Sig(\WC(\Ladj))$.
Recall from
Proposition~\ref{prop:wc:cor}
that each
$f^{\const x_i}_{\FG{\vec y \mid \psi}}(\vec y_i)$
is seen as the function
\[
\begin{array}{l l l}
  \vec b \in \I{\vec y \mid \psi}_{a}
& \longmapsto
& (\Ladj h)(\const x_i) \in \Ladj a
\end{array}
\]

\noindent
where $h \colon \FG{\vec y \mid \psi} \to a$
is induced by $\vec b$.
Let
\[
\begin{array}{l l l l l}
  \FEmb_{\C}[P](\vec y)  
  & \deq 
  & \FEmb_{\E}\FG{\vec x \mid \varphi}(\vec f)^{\WC(\Ladj)}(\vec y,\dots,\vec y)
  & \in 
  & \Lang_\infty(\Sigbis).
  \end{array}
\]
By Theorem~\ref{thm:wc:cor:trans},
the following are equivalent
for each $a \in \C$
and each $h \colon \FG{\vec y \mid \psi} \to a$:
\begin{enumerate}[(A)]
\setcounter{enumi}{\value{SplitEnum}}
\item
\label{item:emb:main:x}
\(
  \Ladj a \models 
  \FEmb_{\E}\FG{\vec x \mid \varphi}
  \left( (\Ladj h)(\const x_1) ,\dots, (\Ladj h)(\const x_n) \right)
\).

\item\label{item:tran-def-path-emb}
\(
  a \models 
  \FEmb_{\C}[P](\vec b)
\),
where $h \colon \FG{\vec y \mid \psi} \to a$ 
is induced by $\vec b \in \I{\vec y \mid \psi}_a$.
\end{enumerate}

Saying that $\Ladj h \colon \FG{\vec x \mid \varphi} \to \Ladj a$
is induced by
$\vec c \in \I{\vec x \mid \varphi}_{\Ladj a}$
means that $(\Ladj h)(\const x_i) = c_i$
for each $i = 1,\dots,n$,
hence~\ref{item:emb:main:c} is equivalent to~\ref{item:emb:main:x},
and so all four conditions~\ref{item:Lh-emb-struct}--\ref{item:tran-def-path-emb} are equivalent.
Since the adjunction detects path embeddings, condition~\ref{item:Lh-emb-struct} is equivalent to saying that $h$ is an embedding in $\C$. Thus the statement of the lemma follows.
%
%%%%%%%%%%%%%%%%%%%%%%%%%%%%%%%%%%%%%%%%%%%%%%%%%%%%%%%%%%%%%%%%%%%%%%%%%%%
\end{proof}
%%%%%%%%%%%%%%%%%%%%%%%%%%%%%%%%%%%%%%%%%%%%%%%%%%%%%%%%%%%%%%%%%%%%%%%%%%%

While we derived Theorem~\ref{thm:path:main} from
Corollary~\ref{cor:hintikka:wooded:R-bfe},
in \S\ref{sec:hintikka:finaccadj} we also devised Hintikka formulae in $\E$,
which are obtained from the formulae of
Proposition~\ref{prop:hintikka:wooded} (\S\ref{sec:hintikka:form})
using the interpretation induced by the finitary right adjoint $R \colon \E \to \C$.
This was stated in Corollary~\ref{cor:hintikka:interp}
which, combined with Lemma~\ref{lem:emb:main}, yields the following result.

%%%%%%%%%%%%%%%%%%%%%%%%%%%%%%%%%%%%%%%%%%%%%%%%%%%%%%%%%%%%%%%%%%%%%%%%%%%
\begin{corollary}
\label{cor:wc:hintikka}
%%%%%%%%%%%%%%%%%%%%%%%%%%%%%%%%%%%%%%%%%%%%%%%%%%%%%%%%%%%%%%%%%%%%%%%%%%%
Consider an adjunction $\Ladj \colon \C \inadj \E \cocolon R$ as above.
Let $m \colon P \emb R M$ be a path embedding in $\C$ with $M \in \E$.
Further, let $Q = \FG{\vec z \mid \varpi_Q}$ be a path in $\C$,
and write $\OI{\vec x \mid \varphi} \in \cat T$ for $\inclth \OI{\vec z \mid \varpi_Q}$.

For each ordinal $\ord$, there is a formula
$\Theta_{\E}[M,m,Q,\ord](\vec x) \in \Lang_\infty(\Sig)$
such that,
for each $N \in \E$ and each $n \colon Q \emb R N$
induced by 
$\vec c \in \I{\vec x \mid \varphi}_{N}$,
the following are equivalent:
\begin{enumerate}[(i)]
\item
$N \models \Theta_{\E}[M,m,Q,\ord](\vec c)$,

\item
$(m,n)$ is a position of rank $\geq \ord$
in the game $\G(R M,R N)$.
\end{enumerate}
%%%%%%%%%%%%%%%%%%%%%%%%%%%%%%%%%%%%%%%%%%%%%%%%%%%%%%%%%%%%%%%%%%%%%%%%%%%
\end{corollary}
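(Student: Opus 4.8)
The plan is to obtain $\Theta_{\E}[M,m,Q,\ord]$ by transporting the Hintikka formula $\Theta_{\C}[R M,m,Q,\ord]$ of \S\ref{sec:hintikka:form} across the interpretation $(-)^{\inclth}$ induced by $R$, and to verify that the definability hypothesis needed to invoke Corollary~\ref{cor:hintikka:interp} holds in the present setting. The heavy lifting has already been done, so the argument is essentially a matter of assembling two existing results.

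First I would record that the adjunction $\Ladj \colon \C \inadj \E \cocolon R$, which detects path embeddings in $\Sig$, places us exactly in the situation of Lemma~\ref{lem:emb:main}. Writing $\C = \Mod(\theorybis)$ for a cartesian theory $\theorybis$ in a signature $\Sigbis$ (Corollary~\ref{cor:lfp-iff-models-of-T}), that lemma supplies, for each path $P = \FG{\vec y \mid \psi}$ of $\C$, a formula $\FEmb_{\C}[P](\vec y) \in \Lang_\infty(\Sigbis)$ detecting embeddings out of $P$. Combined with the fact that every path of $\C$ is finitely presentable (part of detection of path embeddings), this is precisely Definition~\ref{def:hintikka:def-path-emb}: the path embeddings of $\C$ are definable in $\Sigbis$. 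Hence the whole construction of \S\ref{sec:hintikka:form} applies, and in particular the formula $\Theta_{\C}[R M,m,Q,\ord](\vec z) \in \Lang_\infty(\Sigbis)$ is well-defined for every ordinal $\ord$, every path embedding $m \colon P \emb R M$, and the chosen presentation $Q \cong \FG{\vec z \mid \varpi_Q}$.

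Next I would apply Corollary~\ref{cor:hintikka:interp} directly. With $\inclth \colon \cat U \to \cat T$ the lex-morphism corresponding to $R$ under Gabriel-Ulmer duality, and $\OI{\vec x \mid \varphi} = \inclth\OI{\vec z \mid \varpi_Q}$ as in the statement, that corollary gives, for each $N \in \E$ and each $n \colon Q \emb R N$ induced by $\vec c \in \I{\vec x \mid \varphi}_N$, the equivalence between $N \models (\Theta_{\C}[R M,m,Q,\ord])^{\inclth}(\vec c)$ and $(m,n)$ being a position of rank $\geq \ord$ in $\G(R M,R N)$. I would then simply set
\[
  \Theta_{\E}[M,m,Q,\ord](\vec x)
  \;\deq\;
  \bigl(\Theta_{\C}[R M,m,Q,\ord]\bigr)^{\inclth}(\vec x).
\]
That this formula lies in $\Lang_\infty(\Sig)$, rather than being merely an abstract translation, is guaranteed by Remark~\ref{rem:coste:interp:card}: the interpretation $(-)^{\inclth}$ carries $\Lang_\kappa(\Sigbis)$ into $\Lang_\kappa(\Sig)$ for every regular cardinal $\kappa$, and in particular sends $\Lang_\infty(\Sigbis)$ to $\Lang_\infty(\Sig)$.

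Since both ingredients are already established, there is no genuinely hard step left; the proof amounts to confirming that the definability hypothesis of Corollary~\ref{cor:hintikka:interp} is in force, which is exactly what Lemma~\ref{lem:emb:main} provides. The only point demanding care is bookkeeping: one must check that the presentation $\OI{\vec z \mid \varpi_Q}$ fixed in \S\ref{sec:hintikka:form}, the lex-morphism $\inclth$ encoding $R$, and the identification $\OI{\vec x \mid \varphi} = \inclth\OI{\vec z \mid \varpi_Q}$ are the same data underlying Corollary~\ref{cor:hintikka:interp}, so that the tuple $\vec c$ indexing $n$ and the free variables of the transported formula are correctly aligned. With this alignment, the equivalence of (i) and (ii) is precisely the conclusion of Corollary~\ref{cor:hintikka:interp}, and the statement follows.
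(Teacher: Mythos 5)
Your proposal is correct and follows exactly the paper's own route: the paper derives this corollary by observing that the standing hypotheses (detection of path embeddings, hence paths finitely presentable and $\C = \Mod(\theorybis)$) together with Lemma~\ref{lem:emb:main} verify Definition~\ref{def:hintikka:def-path-emb}, i.e.\ definability of path embeddings in $\C$, and then applying Corollary~\ref{cor:hintikka:interp} with $\Theta_{\E}[M,m,Q,\ord] \deq (\Theta_{\C}[R M,m,Q,\ord])^{\inclth}$. Your additional remarks on Remark~\ref{rem:coste:interp:card} and the alignment of the presentation $\OI{\vec z \mid \varpi_Q}$ with $\inclth$ are exactly the right bookkeeping points.
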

%%%%%%%%%%%%%%%%%%%%%%%%%%%%%%%%%%%%%%%%%%%%%%%%%%%%%%%%%%%%%%%%%%%%%%%%%%%

%%%%%%%%%%%%%%%%%%%%%%%%%%%%%%%%%%%%%%%%%%%%%%%%%%%%%%%%%%%%%%%%%%%%%%%%%%%
\subsection{Finite relational signatures and finitary logic}
\label{sec:wc:fin}
%%%%%%%%%%%%%%%%%%%%%%%%%%%%%%%%%%%%%%%%%%%%%%%%%%%%%%%%%%%%%%%%%%%%%%%%%%%
Let $\Ladj \colon \C \inadj \E \cocolon R$ be a finitely accessible wooded adjunction which detects
path embeddings in a \emph{finite} relational signature $\sig$.
We discuss sufficient conditions under which we can strengthen Theorem~\ref{thm:path:main} by replacing infinitary logic with ordinary first-order logic, to the effect that
\[
\begin{array}{l l l}
  \text{$M,N \in \E$ equivalent in $\Lang_{\omega}(\sig)$}
& \longimp
& M \bisim_R N.
\end{array}
\]
These conditions essentially amount to the finiteness of the Hintikka formulae
$\Theta_{\E}$ of Corollary~\ref{cor:wc:hintikka}.
A useful observation (a proof of which will be given below) is the following.

%%%%%%%%%%%%%%%%%%%%%%%%%%%%%%%%%%%%%%%%%%%%%%%%%%%%%%%%%%%%%%%%%%%%%%%%%%%
\begin{proposition}
\label{prop:wc:fin}
%%%%%%%%%%%%%%%%%%%%%%%%%%%%%%%%%%%%%%%%%%%%%%%%%%%%%%%%%%%%%%%%%%%%%%%%%%%
Let $\E = \Struct(\sig)$ for a finite $\sig$
and assume that $\Ladj \colon \C \inadj \E \cocolon R$ detects path embeddings in $\sig$.
Assume further 
that for each path $P$ in $\C$ there are (up to isomorphism)
finitely many quotients $e \colon P \epi Q$ in $\C$.

Then, for each path $P$ in $\C$,
the formula $\FEmb_{\C}[P]$ of Lemma~\ref{lem:emb:main} 
is equivalent to a finite (i.e.\ first-order) formula.
%%%%%%%%%%%%%%%%%%%%%%%%%%%%%%%%%%%%%%%%%%%%%%%%%%%%%%%%%%%%%%%%%%%%%%%%%%%
\end{proposition}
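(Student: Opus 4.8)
The plan is to trace through the construction of $\FEmb_{\C}[P]$ in Lemma~\ref{lem:emb:main} and show that, under the stated finiteness hypothesis, every infinite conjunction and disjunction that appears collapses to a finite one. Recall that $\FEmb_{\C}[P](\vec y)$ is defined as
\[
\FEmb_{\E}\FG{\vec x \mid \varphi}(\vec f)^{\WC(\Ladj)}(\vec y,\dots,\vec y),
\]
so there are two layers of potential infinity to control: the formula $\FEmb_{\E}\FG{\vec x \mid \varphi}$ of Corollary~\ref{cor:emb:emb} (built as in Figure~\ref{fig:formemb}), and the word-construction translation $(-)^{\WC(\Ladj)}$, which by its recursive definition in \S\ref{sec:wc:filtcolim} introduces infinitary disjunctions $\bigvee_{\OI{\vec x \mid \varphi} \in \cat U}$ indexed by all objects of the syntactic category $\cat U$ of $\C$, and further disjunctions over elements $c \in (\Ladj\FG{\vec x \mid \varphi})(\sort)$ and over tuples of $\cat U$-morphisms.

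\textbf{Controlling the source formula.} First I would observe that $\E = \Struct(\sig)$ with $\sig$ a finite relational signature is precisely the favourable case discussed at the start of \S\ref{sec:emb}: by eq.~\eqref{eq:emb:sig}, the embedding formula $\FEmb_{\E}\FG{\vec x \mid \varphi}(\vec x)$ is equivalent to a \emph{finite} conjunction of negated atoms, since $\sig$ has finitely many relation symbols and the finitely presentable structure $\FG{\vec x \mid \varphi}=\Ladj P$ has finite carrier $\{\const x_1,\dots,\const x_n\}$ (Remark~\ref{rem:finite-carrier-fp-purely-rel-empty-th}). So the object being translated is a quantifier-free first-order formula in $\sig$.

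\textbf{Controlling the translation.} The crux is then to bound the disjunctions produced by $\WCAt(\WC(\Ladj))$ when applied to each atom (and its negation) of the finite formula above. Inspecting the definition of $\WCAt$, the outer disjunction ranges over $\OI{\vec x \mid \varphi} \in \cat U$, but the relevant index is really the finitely presentable $\C$-model $\FG{\vec x \mid \varphi}$ through which the generators factor. Here is where the hypothesis enters: for a path $P$, the generators $\const x_i$ of $\Ladj P$ come from $P$ itself, and the factorisations appearing in Lemma~\ref{lem:wc:locval}\ref{item:wc:locval:char} are governed by morphisms out of $P$. I would argue that, since $\Ladj$ preserves finitely presentable objects and the atoms involve only the finitely many generators of $\Ladj P$, the disjunction can be restricted to factorisations $P \to Q$ for finitely presentable $Q$ receiving $P$; using the assumption that there are only finitely many quotients $e\colon P \epi Q$ up to isomorphism, together with the fact that any such relevant factorisation decomposes (via the stable factorisation system on $\C$) through a quotient of $P$ followed by an embedding, one reduces the index set to a finite one. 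The inner disjunction over $c \in (\Ladj Q)(\sort)$ is then finite because $\Ladj Q$ has finite carrier (again $\sig$ relational, $Q$ finitely presentable), and the disjunctions over $\cat U$-morphisms $\theta_i$ are finite by the corresponding finiteness of hom-sets, exactly as in the argument of Lemma~\ref{lem:finite-hom-sets-synt-cat}.

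\textbf{Main obstacle.} The hard part will be the reduction of the outer disjunction $\bigvee_{\OI{\vec x \mid \varphi}\in\cat U}$ to a finite set. The definition quantifies over \emph{all} objects of $\cat U$, i.e.\ all cartesian-formulae-in-context over $\C$, so I must show that only finitely many are semantically relevant to the atoms at hand. The key technical step is to convert ``factorisations of the generators of $P$'' into ``quotients of $P$'': because the atoms only mention the images $(\Ladj k_i)(\const x_i)$ of finitely many generators, the witnessing $\FG{\vec x \mid \varphi}$ in Lemma~\ref{lem:wc:locval} can be taken to be generated by those images, hence to be a quotient of a coproduct of copies of $P$, which by the finiteness hypothesis on quotients admits only finitely many isomorphism types. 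Making this precise — identifying the correct finitely-generated domain and showing the disjunction over the full $\cat U$ is logically equivalent to the disjunction over this finite family — is the delicate point, and I expect it to require careful use of Corollary~\ref{cor:coste:fp:triangle} together with the stability of the factorisation system in $\C$.
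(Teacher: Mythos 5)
Your outline matches the paper's proof in its first half: you correctly replace the general formula of Corollary~\ref{cor:emb:emb} by the finite quantifier-free formula $\FEmb^{\sig}\FG{\vec x \mid \varphi}$ of eq.~\eqref{eq:emb:sig} (valid because $\E=\Struct(\sig)$ with $\sig$ finite and relational), and you correctly identify the crux as shrinking the disjunction $\bigvee_{\OI{\vec x \mid \varphi}\in\cat U}$ in the word-construction translation to a finite one indexed by quotients of $P$. But your resolution of that crux has a genuine gap. In your ``main obstacle'' paragraph you take the witness of Lemma~\ref{lem:wc:locval} to be generated by the images of the generators, ``hence a quotient of a coproduct of copies of $P$''. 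This fails twice over. First, the hypothesis bounds only the quotients of the \emph{path} $P$; a coproduct of copies of $P$ is not a path, and nothing in the hypotheses bounds its quotients, so finiteness of your index set does not follow. Second, and more fundamentally, the soundness of any such restriction requires showing that satisfaction of the atom in $\Ladj a$ \emph{reflects} down to $\Ladj W$ for your smaller witness $W$, i.e.\ reflects along $\Ladj m'$ where $m'\colon W \emb a$ is the embedding part of an image factorisation. The only reflection tool available is detection of path embeddings (Definition~\ref{def:path:detection-path-emb}), and it applies exclusively to morphisms with \emph{path} domain; your $W$ is not a path, so this step cannot be carried out. Tellingly, your proposal never invokes detection of path embeddings in the translation-control step at all, even though it is the hypothesis that does all the work there (you attribute the delicate point to stability and Corollary~\ref{cor:coste:fp:triangle} instead).

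The missing idea — the content of the paper's Lemma~\ref{lem:wc:fin:arblocval} — is to exploit the diagonal substitution: in $\FEmb_{\C}[P](\vec y)=\FEmb^{\sig}\FG{\vec x \mid \varphi}(\vec f)^{\WC(\Ladj)}(\vec y,\dots,\vec y)$ every argument tuple is the \emph{same} $\vec y$, so in the semantic analysis all the morphisms $h_i$ of Lemma~\ref{lem:wc:locval} collapse to a single $h\colon P\to a$, and the atoms' arguments are all of the form $(\Ladj h)(c_i)$ with $c_i\in\Ladj P$. Factoring $h=m\comp e$ with $e\colon P\epi Q$ and $m\colon Q\emb a$, the object $Q$ is again a path by Lemma~\ref{lem:path:base} (this is where stability enters), so detection of path embeddings applies to $m$: the map $\Ladj m$ is an embedding of $\sig$-structures and hence reflects the atom from $\Ladj a$ to $\Ladj Q$. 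This is exactly what licenses replacing the disjunction over all of $\cat U$ by a disjunction over the finitely many quotients $e\colon P\epi Q$, each contributing the single disjunct $(\exists \vec z)\,\theta_e(\vec z,\vec y)$. Two smaller inaccuracies in your write-up then evaporate: since the source formula is quantifier-free, no disjunction over elements $c\in(\Ladj Q)(\sort)$ ever arises; and your appeal to finiteness of hom-sets ``as in Lemma~\ref{lem:finite-hom-sets-synt-cat}'' is both unavailable (that lemma is specific to presheaves over a forest) and unnecessary, since after the reduction there is no residual disjunction over $\cat U$-morphisms.
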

%%%%%%%%%%%%%%%%%%%%%%%%%%%%%%%%%%%%%%%%%%%%%%%%%%%%%%%%%%%%%%%%%%%%%%%%%%%

Proposition~\ref{prop:wc:fin},
combined with Corollary~\ref{cor:hintikka:interp:fin},
yields finite Hintikka formulae for $R$-equivalence.
We state this observation in the next corollary, where we use the fact that given any wooded category $\C$, its subcategory $\pth\C$ defined by the paths and the embeddings between them can be identified, up to equivalence, with a forest order (see Lemma~\ref{l:forest-of-paths-wooded}).

%%%%%%%%%%%%%%%%%%%%%%%%%%%%%%%%%%%%%%%%%%%%%%%%%%%%%%%%%%%%%%%%%%%%%%%%%%%
\begin{corollary}
\label{cor:wc:fin}
%%%%%%%%%%%%%%%%%%%%%%%%%%%%%%%%%%%%%%%%%%%%%%%%%%%%%%%%%%%%%%%%%%%%%%%%%%%
Let $\E = \Struct(\sig)$ for a finite $\sig$
and assume that $\Ladj \colon \C \inadj \E \cocolon R$ detects path embeddings in $\sig$.
Assume the following:
\begin{enumerate}[(i)]
\item
\label{item:cor:wc:fin:path}
The forest $\pth\C$ is finitely branching.

\item
\label{item:cor:wc:fin:emb}
For each path $P$ in $\C$, there are (up to isomorphism)
finitely many quotients $e \colon P \epi Q$ in $\C$.

\setcounter{SplitEnum}{\value{enumi}}
\end{enumerate}

\noindent
Further, let $M, N \in \E$ such that the following conditions hold:
\begin{enumerate}[(i)]
\setcounter{enumi}{\value{SplitEnum}}
\item
\label{item:cor:wc:fin:ord}
The plays of $\G(R M, R N)$ have (finite) bounded length.

\item
\label{item:cor:wc:fin:struct}
For each path embedding $m \colon P \emb R M$ in $\C$,
there are (up to isomorphism) finitely many
embeddings $m' \colon P' \emb R M$ such that $m \prec m'$.
\end{enumerate}

\noindent
Then
\[
\begin{array}{l l l}
  \text{$M,N$ equivalent in $\Lang_{\omega}(\sig)$}
& \longimp
& M \bisim_R N.
\end{array}
\]
%%%%%%%%%%%%%%%%%%%%%%%%%%%%%%%%%%%%%%%%%%%%%%%%%%%%%%%%%%%%%%%%%%%%%%%%%%%
\end{corollary}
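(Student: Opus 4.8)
The plan is to derive the statement by assembling the finitary Hintikka machinery of Corollary~\ref{cor:hintikka:interp:fin} with the finiteness of defining formulae supplied by Proposition~\ref{prop:wc:fin}. First I would record the instantiation that makes these results applicable: since $\E = \Struct(\sig)$ equals $\Mod(\theory)$ for the empty cartesian theory $\theory$ in $\sig$, the relevant signature $\Sig$ is $\sig$ itself, which is finite and relational. This is exactly the shape in which Corollary~\ref{cor:hintikka:interp:fin} produces a conclusion phrased in $\Lang_\omega(\Sig)$ rather than $\Lang_\infty$, so the whole argument reduces to checking that the four hypotheses of that corollary hold (with $\Sig = \sig$) and then invoking it.

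Next I would handle definability and its finiteness. Because the adjunction detects path embeddings in $\sig$, Lemma~\ref{lem:emb:main} yields, for each path $P$ in $\C$, a formula $\FEmb_{\C}[P]$ over $\Sigbis$ defining the embeddings out of $P$; hence the path embeddings of $\C$ are definable in $\Sigbis$, where $\C = \Mod(\theorybis)$. The essential extra input is hypothesis~\ref{item:cor:wc:fin:emb}, that each path $P$ admits only finitely many quotients up to isomorphism: by Proposition~\ref{prop:wc:fin} this forces the word-construction producing $\FEmb_{\C}[P]$ to collapse to a finite (first-order) formula. This is precisely hypothesis~\ref{item:hintikka:interp:fin:emb} of Corollary~\ref{cor:hintikka:interp:fin}.

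It then remains to match the surviving hypotheses and conclude. The structural assumption~\ref{item:cor:wc:fin:path} that $\pth\C$ is finitely branching is hypothesis~\ref{item:hintikka:interp:fin:path}; the per-pair assumptions~\ref{item:cor:wc:fin:ord} (bounded-length plays of $\G(R M, R N)$) and~\ref{item:cor:wc:fin:struct} (finitely many $m' \succ m$ above each path embedding into $R M$) are hypotheses~\ref{item:hintikka:interp:fin:ord} and~\ref{item:hintikka:interp:fin:struct} respectively. With all four in place, Corollary~\ref{cor:hintikka:interp:fin} applied to $\E = \Mod(\theory)$ with $\Sig = \sig$ gives that $\Lang_\omega(\sig)$-equivalence of $M$ and $N$ implies $M \bisim_R N$, which is the desired conclusion.

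The subtle point I would flag is the interplay between two opposite transfer mechanisms, rather than any genuine difficulty internal to this corollary. The formulae $\FEmb_{\C}[P]$ are obtained via word-constructions along the \emph{left} adjoint $\Ladj$, and these do not preserve finitary logic in general (cf.\ Remark~\ref{rem:wc:cor:trans}); their finiteness here is exactly the content of Proposition~\ref{prop:wc:fin} and relies on $\sig$ being finite relational together with~\ref{item:cor:wc:fin:emb}. Once $\FEmb_{\C}[P]$ is finite, the Hintikka formulae are transferred from $\C$ back to $\E$ through the interpretation induced by the \emph{right} adjoint $R$, which by Remark~\ref{rem:coste:interp:card} does preserve $\Lang_\omega$, so the resulting Hintikka formulae over $\E$ remain finite. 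The main obstacle is therefore to ensure these two mechanisms are combined in the correct order; both steps are already packaged in Proposition~\ref{prop:wc:fin} and Corollary~\ref{cor:hintikka:interp:fin}, so the corollary follows by their juxtaposition.
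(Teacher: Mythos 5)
Your proposal is correct and follows essentially the same route as the paper: the paper derives Corollary~\ref{cor:wc:fin} precisely by combining Proposition~\ref{prop:wc:fin} (which, under the finite relational signature, detection, and hypothesis~\ref{item:cor:wc:fin:emb}, makes the formulae $\FEmb_{\C}[P]$ of Lemma~\ref{lem:emb:main} equivalent to finite ones) with Corollary~\ref{cor:hintikka:interp:fin}, whose remaining hypotheses match~\ref{item:cor:wc:fin:path}, \ref{item:cor:wc:fin:ord} and~\ref{item:cor:wc:fin:struct} exactly as you indicate. Your closing remark on the asymmetry between word-constructions along $\Ladj$ and the $\Lang_\omega$-preserving interpretation along $R$ is also consistent with Remarks~\ref{rem:wc:cor:trans} and~\ref{rem:coste:interp:card}.
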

%%%%%%%%%%%%%%%%%%%%%%%%%%%%%%%%%%%%%%%%%%%%%%%%%%%%%%%%%%%%%%%%%%%%%%%%%%%

%%%%%%%%%%%%%%%%%%%%%%%%%%%%%%%%%%%%%%%%%%%%%%%%%%%%%%%%%%%%%%%%%%%%%%%%%%%
\begin{example}
\label{ex:fin:wc}
%%%%%%%%%%%%%%%%%%%%%%%%%%%%%%%%%%%%%%%%%%%%%%%%%%%%%%%%%%%%%%%%%%%%%%%%%%%
Consider a finite $\sig$,
let $k < \omega$ and
recall the adjunctions of Example~\ref{ex:path:main:fo}:
\[
\begin{tikzcd}
{\cat R^E_k(\sig^I)} \arrow[r, bend left=25, ""{name=U, below}, "\Ladj_k"{above}]
\arrow[r, leftarrow, bend right=25, ""{name=D}, "R_k"{below}]
& {\Struct(\sig^I)} \arrow[r, bend left=25, ""{name=U', below}, "H"{above}]
\arrow[r, leftarrow, bend right=25, ""{name=D'}, "J"{below}] & {\Struct(\sig)}
\arrow[phantom, "\textnormal{\footnotesize{$\bot$}}", from=U, to=D] 
\arrow[phantom, "\textnormal{\footnotesize{$\bot$}}", from=U', to=D'] 
\end{tikzcd}
\]

\noindent
Given some \emph{finite} $M \in \Struct(\sig)$, we obtain, as expected,
finite Hintikka formulae in $\sig$ for $R_k^I$-equivalence with $M$:
The $\sig^I$-structure $J M$ is finite, and it is easy to see that
all conditions of Corollary~\ref{cor:wc:fin} are met for $\Ladj_k \adj R_k$.
Now, given $N \in \Struct(\sig)$, if $N$ is $\Lang_\omega(\sig)$-equivalent to $M$,
then it follows from Remark~\ref{rem:coste:interp:card}
that $J N$ is $\Lang_\omega(\sig^I)$-equivalent to $J M$.
Hence, Corollary~\ref{cor:wc:fin} yields
that $J M$ and $J N$ are $R_k$-equivalent.
In other words, 
\begin{equation}
\label{eq:wc:fin}
\begin{array}{l l l}
  \text{$k$ finite, $M$ finite and $M,N$ equivalent in $\Lang_\omega(\sig)$}
& \longimp
& \text{$M,N$ $R_k^I$-equivalent}
\end{array}
\end{equation}
%%%%%%%%%%%%%%%%%%%%%%%%%%%%%%%%%%%%%%%%%%%%%%%%%%%%%%%%%%%%%%%%%%%%%%%%%%%
\end{example}
%%%%%%%%%%%%%%%%%%%%%%%%%%%%%%%%%%%%%%%%%%%%%%%%%%%%%%%%%%%%%%%%%%%%%%%%%%%

%%%%%%%%%%%%%%%%%%%%%%%%%%%%%%%%%%%%%%%%%%%%%%%%%%%%%%%%%%%%%%%%%%%%%%%%%%%
\begin{remark}
\label{rem:fin:wc}
%%%%%%%%%%%%%%%%%%%%%%%%%%%%%%%%%%%%%%%%%%%%%%%%%%%%%%%%%%%%%%%%%%%%%%%%%%%
The assumption that $M$ is finite in eq.~\eqref{eq:wc:fin}
is due to condition~\ref{item:cor:wc:fin:struct} of Corollary~\ref{cor:wc:fin}.
The latter is precisely condition~\ref{item:hintikka:interp:fin:struct} of
Corollary~\ref{cor:hintikka:interp:fin}.
It comes from the fact that 
our Hintikka formulae $\Theta_{\C}[R M,-,-,-]$
in Figure~\ref{fig:hintikka:arb} (\S\ref{sec:hintikka:form})
hide conjunctions and disjunctions which may (essentially)
range over all elements of $R M$.

This is in line with the Hintikka formulae 
in~\cite[Theorem 3.5.1]{hodges93book} and~\cite[Definition 2.2.5]{ef99book},
which are infinite when the involved structures are infinite.
But note that this differs from the formulae used in the 
``Hintikka-Fraïssé'' Theorem (see~\cite[Theorem 3.3.2]{hodges93book}):
the latter formulae represent elementary equivalence up to a given quantifier depth,
and are finite whenever so is $\sig$.
%%%%%%%%%%%%%%%%%%%%%%%%%%%%%%%%%%%%%%%%%%%%%%%%%%%%%%%%%%%%%%%%%%%%%%%%%%%
\end{remark}
%%%%%%%%%%%%%%%%%%%%%%%%%%%%%%%%%%%%%%%%%%%%%%%%%%%%%%%%%%%%%%%%%%%%%%%%%%%

The proof of Proposition~\ref{prop:wc:fin}
relies on Lemma~\ref{lem:wc:fin:arblocval} below.
We assume the same notations as in Lemma~\ref{lem:emb:main} and its proof.
Furthermore, we write $\cat T$ for the syntactic category of $\theory(\sig)$.

Let $\FG{\vec x \mid \varphi}$ be the finitely presentable
model $\Ladj\FG{\vec y \mid \psi}$,
where $\varphi$ is quantifier-free, and write $\vec x = x_1,\dots,x_n$.
In the proof of Lemma~\ref{lem:emb:main},
we started from a formula
$\FEmb_{\E}\FG{\vec x \mid \varphi}(\vec x)$
provided by Corollary~\ref{cor:emb:emb},
to which we applied the formula translation $(-)^{\WC(\Ladj)}$
induced by the word-construction $\WC(\Ladj)$ derived from the
left adjoint $\Ladj \colon \C \to \E$.
But since $\E = \Struct(\sig)$ for a (mono-sorted) purely relational $\sig$,
instead of starting from $\FEmb_{\E}\FG{\vec x \mid \varphi}(\vec x)$,
we can start from the formula
\[
  \FEmb^{\sig}\FG{\vec x \mid \varphi}(\vec x)
  ~=~
  \mathord{\bigwedge}_{\substack{
  \text{$\vec x \sorting \atom$ atomic}
  \\
  \FG{\vec x \mid \varphi} \models \lnot \atom(\const x_1,\dots,\const x_n)
  }}
  \lnot \atom(\vec x)
\]

\noindent
devised in eq.~\eqref{eq:emb:sig} (\S\ref{sec:emb}).
It follows that in Lemma~\ref{lem:emb:main},
we can take for
 $\FEmb_{\C}[P](\vec y)$
the formula
\[
  \FEmb^{\sig}\FG{\vec x \mid \varphi}(\vec f)^{\WC(\Ladj)}(\vec y,\dots,\vec y)
\]

\noindent
where
\(
  \vec f
  =
  f^{\const x_1}_{\FG{\vec y \mid \psi}}(\vec y_1)
  ,\dots,
  f^{\const x_n}_{\FG{\vec y \mid \psi}}(\vec y_n)
\)
and where
\(
  \FEmb^{\sig}\FG{\vec x \mid \varphi}(\vec f)^{\WC(\Ladj)}(\vec y_1,\dots,\vec y_n)
\)
is given
by Theorem~\ref{thm:wc:cor:trans} from the above formula
$\FEmb^{\sig}\FG{\vec x \mid \varphi}(\vec x)$.

It remains to show that the formula
$\FEmb^{\sig}\FG{\vec x \mid \varphi}(\vec f)^{\WC(\Ladj)}$
can be assumed to be finite
under the assumptions of Proposition~\ref{prop:wc:fin}.
We rely on the following variation on Lemma~\ref{lem:wc:locval}.

%%%%%%%%%%%%%%%%%%%%%%%%%%%%%%%%%%%%%%%%%%%%%%%%%%%%%%%%%%%%%%%%%%%%%%%%%%%
\begin{lemma}
\label{lem:wc:fin:arblocval}
%%%%%%%%%%%%%%%%%%%%%%%%%%%%%%%%%%%%%%%%%%%%%%%%%%%%%%%%%%%%%%%%%%%%%%%%%%%
Let $a \in \C = \Mod(\theorybis)$
and let $P$ be a path
of the form $\FG{\vec y \mid \psi}$.
Assume $\Ladj P = \FG{\vec x \mid \varphi}$
with $\vec x = x_1,\dots,x_n$.
Fix an atomic formula $(\vec x \sorting \atom)$ in $\sig$.

Given $h \colon P \to a$,
and given
$c_i \in \Ladj\FG{\vec y \mid \psi}$ for $i=1,\dots,n$,
the following are equivalent.
\begin{enumerate}[(i)]
\item
\label{item:wc:fin:arblocval:mod}
$\Ladj a \models \atom((\Ladj h)(c_1),\dots,(\Ladj h)(c_n))$

\item
\label{item:wc:fin:arblocval:diag}
$h$ factors as
\[
\begin{tikzcd}
  P
  \arrow{r}[above]{h}
  \arrow[twoheadrightarrow]{d}[left]{e}
& a
\\
  Q
  \arrow{ur}[right]{g}
\end{tikzcd}
\]

\noindent
for some path $Q$ and some quotient $e \colon P \epi Q$ such that
\[
\begin{array}{l l l}
  \Ladj Q
& \models
& \atom\big( (\Ladj e)(c_1),\dots, (\Ladj e)(c_n) \big)
\end{array}
\]
\end{enumerate}
%%%%%%%%%%%%%%%%%%%%%%%%%%%%%%%%%%%%%%%%%%%%%%%%%%%%%%%%%%%%%%%%%%%%%%%%%%%
\end{lemma}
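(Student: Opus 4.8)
The plan is to prove the two implications separately, exploiting the fact that the interesting content is concentrated in a single reflection step made available by detection of path embeddings. The implication \ref{item:wc:fin:arblocval:diag}~$\Rightarrow$~\ref{item:wc:fin:arblocval:mod} is the routine one, mirroring the easy half of Lemma~\ref{lem:wc:locval}. Given a factorisation $h = g \comp e$ with $e \colon P \epi Q$ a quotient and $\Ladj Q \models \atom((\Ladj e)(c_1),\dots,(\Ladj e)(c_n))$, I would simply apply the $\sig$-homomorphism $\Ladj g \colon \Ladj Q \to \Ladj a$, which preserves the interpretation of the relation symbol $\atom$. Writing $d_i \deq (\Ladj e)(c_i)$ and using functoriality to get $(\Ladj g)(d_i) = \Ladj(g \comp e)(c_i) = (\Ladj h)(c_i)$, one concludes at once that $\Ladj a \models \atom((\Ladj h)(c_1),\dots,(\Ladj h)(c_n))$.

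For the converse \ref{item:wc:fin:arblocval:mod}~$\Rightarrow$~\ref{item:wc:fin:arblocval:diag}, the idea is to bypass the general word-construction Lemma~\ref{lem:wc:locval} and instead build the required factorisation canonically. I would take the $(\Q,\M)$-factorisation of $h$, say $h = g \comp e$ with $e \colon P \epi Q$ a quotient and $g \colon Q \emb a$ an embedding. Since $P$ is a path and the factorisation system of the wooded category $\C$ is stable, the codomain of the quotient $e$ is again a path, so $Q$ is a path (Lemma~\ref{lem:path:base}). This already supplies a factorisation of $h$ of the shape demanded in \ref{item:wc:fin:arblocval:diag}; it remains only to verify the displayed satisfaction statement over $\Ladj Q$.

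The crux is the following reflection argument. Because $g \colon Q \emb a$ is an embedding whose domain $Q$ is a path, and the adjunction detects path embeddings (Definition~\ref{def:path:detection-path-emb}), its image $\Ladj g \colon \Ladj Q \to \Ladj a$ is an embedding of $\sig$-structures; by Example~\ref{ex:prelim:fact:struct}, such embeddings reflect (as well as preserve) the interpretations of relation symbols. Setting again $d_i \deq (\Ladj e)(c_i) \in \Ladj Q$, so that $(\Ladj g)(d_i) = (\Ladj h)(c_i)$, assumption \ref{item:wc:fin:arblocval:mod} reads $\Ladj a \models \atom((\Ladj g)(d_1),\dots,(\Ladj g)(d_n))$, and reflecting along $\Ladj g$ yields $\Ladj Q \models \atom(d_1,\dots,d_n)$, which is exactly the condition in \ref{item:wc:fin:arblocval:diag}. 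The only real subtlety to watch is that the embedding part $g$ of the factorisation has path domain, so that the detection hypothesis genuinely applies to it; once this is secured the reflection step is immediate, and no appeal to the filtered-colimit description of $\Ladj a$ used in Lemma~\ref{lem:wc:locval} is required.
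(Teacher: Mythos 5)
Your proof is correct and takes essentially the same route as the paper: the forward implication uses that $\Ladj g$ is a homomorphism and hence preserves $\atom$, and the converse takes the $(\Q,\M)$-factorisation $h = g \comp e$, invokes Lemma~\ref{lem:path:base} to see that the codomain $Q$ is a path, and then uses detection of path embeddings to conclude that $\Ladj g$ is an embedding of $\sig$-structures and therefore reflects $\atom$. The paper's proof is exactly this argument (with the embedding part denoted $m$ instead of $g$), so there is nothing to add.
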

%%%%%%%%%%%%%%%%%%%%%%%%%%%%%%%%%%%%%%%%%%%%%%%%%%%%%%%%%%%%%%%%%%%%%%%%%%%

%%%%%%%%%%%%%%%%%%%%%%%%%%%%%%%%%%%%%%%%%%%%%%%%%%%%%%%%%%%%%%%%%%%%%%%%%%%
\begin{proof}
%%%%%%%%%%%%%%%%%%%%%%%%%%%%%%%%%%%%%%%%%%%%%%%%%%%%%%%%%%%%%%%%%%%%%%%%%%%
Assume $\Ladj h \colon \FG{\vec x \mid \varphi} \to \Ladj a$
takes $\vec x$ to $\vec b \in \I{\vec x \mid \varphi}_{\Ladj a}$.

We first show that
\(
   \text{\ref{item:wc:fin:arblocval:diag}}
   \imp
   \text{\ref{item:wc:fin:arblocval:mod}}
\).
For each $i = 1,\dots,n$ we have
$(\Ladj h)(c_i) = (\Ladj g)((\Ladj e)(c_i))$.
Hence 
$\Ladj a \models \atom((\Ladj h)(c_1),\dots,(\Ladj h)(c_n))$
since $\Ladj g \colon \Ladj Q \to \Ladj a$ is a homomorphism.

For the converse, factor $h$ as $m \comp e$,
where $e \in \Q$ and $m \in \M$.
Then by Lemma~\ref{lem:path:base}
the codomain of $e$ is a path, say $Q$,
and we have
\(
  \Ladj Q
  \models
  \atom\big( (\Ladj e)(c_1),\dots, (\Ladj e)(c_n) \big)
\)
since $\Ladj m$ is an embedding of structures with
$(\Ladj h)(c_i) = (\Ladj m)((\Ladj e)(c_i))$
for all $i = 1,\dots,n$.
%%%%%%%%%%%%%%%%%%%%%%%%%%%%%%%%%%%%%%%%%%%%%%%%%%%%%%%%%%%%%%%%%%%%%%%%%%%
\end{proof}
%%%%%%%%%%%%%%%%%%%%%%%%%%%%%%%%%%%%%%%%%%%%%%%%%%%%%%%%%%%%%%%%%%%%%%%%%%%

%%%%%%%%%%%%%%%%%%%%%%%%%%%%%%%%%%%%%%%%%%%%%%%%%%%%%%%%%%%%%%%%%%%%%%%%%%%
\begin{remark}
%%%%%%%%%%%%%%%%%%%%%%%%%%%%%%%%%%%%%%%%%%%%%%%%%%%%%%%%%%%%%%%%%%%%%%%%%%%
Note that the proof of 
\(
   \text{\ref{item:wc:fin:arblocval:mod}}
   \imp
   \text{\ref{item:wc:fin:arblocval:diag}}
\)
implies that we can always take an arboreal embedding for $g$,
while this is not assumed in
condition \ref{item:wc:fin:arblocval:diag}.
Also, the assumption that $e$ is a quotient is not used in
the implication
\(
   \text{\ref{item:wc:fin:arblocval:diag}}
   \imp
   \text{\ref{item:wc:fin:arblocval:mod}}
\).

The actual formulation of
condition \ref{item:wc:fin:arblocval:diag}
is tailored to obtain finite formulae under the
word-construction $\WC(\Ladj)$.
%%%%%%%%%%%%%%%%%%%%%%%%%%%%%%%%%%%%%%%%%%%%%%%%%%%%%%%%%%%%%%%%%%%%%%%%%%%
\end{remark}
%%%%%%%%%%%%%%%%%%%%%%%%%%%%%%%%%%%%%%%%%%%%%%%%%%%%%%%%%%%%%%%%%%%%%%%%%%%

We can now prove Proposition~\ref{prop:wc:fin}.

%%%%%%%%%%%%%%%%%%%%%%%%%%%%%%%%%%%%%%%%%%%%%%%%%%%%%%%%%%%%%%%%%%%%%%%%%%%
\begin{proof}[Proof of Proposition~\ref{prop:wc:fin}]
%%%%%%%%%%%%%%%%%%%%%%%%%%%%%%%%%%%%%%%%%%%%%%%%%%%%%%%%%%%%%%%%%%%%%%%%%%%
Lemma~\ref{lem:wc:fin:arblocval}
(together with Corollary~\ref{cor:coste:fp:triangle})
allows for some simplication
of the formula $\WCAt(\WC(\Ladj))(\atombis)(\vec y)$
with
\[
\begin{array}{l l l}
  \atombis
& =
& \atom \left(
  f^{\const x_1}_{\FG{\vec y \mid \psi}}(\vec y)
  ,\dots,
  f^{\const x_n}_{\FG{\vec y \mid \psi}}(\vec y)
  \right)
\end{array}
\]

\noindent
where $\vec x \sorting \atom$ is atomic (in $\sig$),
and where the path $P$ is $\FG{\vec y \mid \psi}$.
Namely, assuming $\Ladj P = \FG{\vec x \mid \varphi}$
we can take
$\WCAt(\WC(\Ladj))(\atombis)(\vec y)$
to be the formula
\[
  \bigvee_{\substack{
    e \colon P \epi Q
    \\
    \Ladj Q \models \atom( (\Ladj e)(\const x_1),\dots, (\Ladj e)(\const x_n) )
  }}
  (\exists \vec z) \theta_e(\vec z,\vec y)
\]

\noindent
where $Q \cong \FG{\vec z \mid \vartheta}$
and where
\[
\begin{array}{*{5}{l}}
  \MI{\vec z, \vec y \mid \theta_e}
& :
& \OI{\vec z \mid \vartheta}
& \longto
& \OI{\vec y \mid \psi}
\end{array}
\]

\noindent
is such that $\yoneda \theta_e = e$.

The formula 
$\WCAt(\WC(\Ladj))(\atombis)(\vec y)$
can be assumed to be finite if there are (up to isomorphism)
finitely many
arboreal quotients $e \colon P \epi Q$ in $\C$.
It follows that the 
above formulae
$\FEmb^{\sig}\FG{\vec x \mid \varphi}(\vec f)^{\WC(\Ladj)}$
can be assumed to be finite in this case.
%%%%%%%%%%%%%%%%%%%%%%%%%%%%%%%%%%%%%%%%%%%%%%%%%%%%%%%%%%%%%%%%%%%%%%%%%%%
\end{proof}
%%%%%%%%%%%%%%%%%%%%%%%%%%%%%%%%%%%%%%%%%%%%%%%%%%%%%%%%%%%%%%%%%%%%%%%%%%%

%%%%%%%%%%%%%%%%%%%%%%%%%%%%%%%%%%%%%%%%%%%%%%%%%%%%%%%%%%%%%%%%%%%%%%%%%%%
\section{The factorisation property}
\label{sec:fact}
%%%%%%%%%%%%%%%%%%%%%%%%%%%%%%%%%%%%%%%%%%%%%%%%%%%%%%%%%%%%%%%%%%%%%%%%%%%
In the main examples of finitely accessible wooded adjunctions
\[
\begin{tikzcd}
  \C
  \arrow[r, bend left=25, ""{name=U, below}, "\Ladj"{above}]
  \arrow[r, leftarrow, bend right=25, ""{name=D}, "R"{below}]
& \E
  \arrow[phantom, "\textnormal{\footnotesize{$\bot$}}", from=U, to=D] 
\end{tikzcd}
\]

\noindent
presented so far, the extensional category $\E$
is a category of structures, while Theorem~\ref{thm:path:main}
covers the more general case where $\E$ is the category of models for a cartesian theory. This was done mainly with an eye to future applications and also because it is the natural setting for Gabriel--Ulmer duality.

On the other hand, note that given a finitely accessible wooded adjunction
\[
\C \inadj \Struct(\Sig)
\] 

\noindent
and a cartesian theory~$\theory$ in the signature $\Sig$,
we can compose the latter adjunction with the reflection of
$\Struct(\Sig)$ into $\Mod(\theory)$,
which is in particular a finitely accessible adjunction
$\Struct(\Sig) \inadj \Mod(\theory)$
(cf.\ Remark~\ref{rem:lfp:struct} and Lemma~\ref{lem:coste:mod:filtcolim}).
The result is a finitely accessible wooded adjunction between $\C$ and $\Mod(\theory)$:
\begin{equation}\label{eq:decomp-reflection}
\C \inadj \Struct(\Sig) \inadj \Mod(\theory).
\end{equation}

The way in which the equality symbol is encoded in the context of game comonads (cf.\ Example~\ref{ex:path:bisim-FOk-equivalence} and eq.~\eqref{eq:EF-wooded-adjunction}) can be seen as an example of this phenomenon, by identifying the category $\Struct(\sig)$ with the category of models for the cartesian theory 
\[
I(x,y) \thesis_{x,y} x \Eq y, \ \ \ \ \ x \Eq y \thesis_{x,y} I(x,y)
\] 
in the signature $\sig^{I}$. We now provide another example:

\begin{example}
Recall that the modal logic $\textbf{S4}$ is the logic of reflexive and transitive Kripke frames (see e.g.\ \cite[Theorem 4.29]{brv02modal}). Assume for simplicity that $\sig$ is a modal vocabulary consisting of a single binary relation and no unary relations. Then the corresponding (pointed) Kripke models and reflexive transitive Kripke models are, respectively, (pointed) graphs and preorders. 
Lemma~\ref{lem:coste:mod:filtcolim} gives a finitely accessible adjunction
\[
\begin{tikzcd}
  \Gph_{\bullet}
  \arrow[r, bend left=25, ""{name=U, below}, "F"{above}]
  \arrow[r, leftarrow, bend right=25, ""{name=D}, "U"{below}]
& \Pre_{\bullet}
  \arrow[phantom, "\textnormal{\footnotesize{$\bot$}}", from=U, to=D] 
\end{tikzcd}
\]

\noindent
where $U$ is the inclusion of the full subcategory $\Pre_{\bullet}$ of pointed preorders
into the category $\Gph_{\bullet}$ of pointed graphs and their homomorphisms,\footnote{By \emph{graphs} we mean directed graphs, possibly with self-loops. Note that $\Gph_{\bullet}$ can be identified with a category of structures by adding a constant symbol to the signature, and $\Pre_{\bullet}$ is then the category of models for a cartesian theory in this extended signature.}
and $F$ produces a preorder from a graph by 
taking the reflexive-transitive closure of the edge relation. Composing this adjunction with the finitely accessible arboreal adjunction induced by the modal comonad (see Example~\ref{ex:path:misc:games}\ref{item:path:misc:games:modal}), for each $k\in\mathbb{N}$ we get a finitely accessible arboreal adjunction 
\begin{equation*}
\begin{tikzcd}
{\cat R^M_k(\sig)} \arrow[r, bend left=25, ""{name=U, below}, "L^M_k"{above}]
\arrow[r, leftarrow, bend right=25, ""{name=D}, "R^M_k"{below}]
& {\Gph_{\bullet}} \arrow[r, bend left=25, ""{name=U', below}, "F"{above}]
\arrow[r, leftarrow, bend right=25, ""{name=D'}, "U"{below}] & {\Pre_{\bullet}}
\arrow[phantom, "\textnormal{\footnotesize{$\bot$}}", from=U, to=D] 
\arrow[phantom, "\textnormal{\footnotesize{$\bot$}}", from=U', to=D'] 
\end{tikzcd}
\end{equation*}
which encodes the $\textbf{S4}$\emph{-unravelling up to level $k$} of a pointed Kripke frame, cf.\ e.g.\ \cite[pp.~220--221]{brv02modal}. This can be extended in a straightforward way to pointed Kripke models in the case where~$\sig$ contains an arbitrary number of unary relations, by considering the reflection of Kripke models into reflexive transitive Kripke models.
\end{example}

The finitely accessible wooded adjunctions of the form $\C \inadj \Struct(\Sig)$ are technically easier to deal with, especially when $\Sig$ is a mono-sorted relational signature. In this case, the formulae $ \FEmb_{\C}[P](\vec y)$
of Lemma~\ref{lem:emb:main} can be considerably simplified (cf.~Proposition~\ref{prop:wc:fin} and its proof).
Ultimately, this avoids much of the technical development of~\S\ref{sec:emb}, and requires applying the formula translation induced by word-constructions only to quantifier-free formulae.

One might ask whether it is always possible to reduce ourselves to the case where $\E$ is a category of structures; this leads us to the following definition.

%%%%%%%%%%%%%%%%%%%%%%%%%%%%%%%%%%%%%%%%%%%%%%%%%%%%%%%%%%%%%%%%%%%%%%%%%%%
\begin{definition}
\label{def:fact}
%%%%%%%%%%%%%%%%%%%%%%%%%%%%%%%%%%%%%%%%%%%%%%%%%%%%%%%%%%%%%%%%%%%%%%%%%%%
Let $\E = \Mod(\theory)$
with $\theory$ a cartesian theory in a signature~$\Sig$.
A finitely accessible wooded adjunction $\Ladj \colon \C \inadj \E \cocolon R$
has the \emph{factorisation property} provided that
it can be decomposed (up to natural isomorphism) as a finitely accessible adjunction $\C \inadj \Struct(\Sig)$
followed by the reflection $\Struct(\Sig) \inadj \E$,
as in eq.~\eqref{eq:decomp-reflection}.
%%%%%%%%%%%%%%%%%%%%%%%%%%%%%%%%%%%%%%%%%%%%%%%%%%%%%%%%%%%%%%%%%%%%%%%%%%%
\end{definition}
%%%%%%%%%%%%%%%%%%%%%%%%%%%%%%%%%%%%%%%%%%%%%%%%%%%%%%%%%%%%%%%%%%%%%%%%%%%

%%%%%%%%%%%%%%%%%%%%%%%%%%%%%%%%%%%%%%%%%%%%%%%%%%%%%%%%%%%%%%%%%%%%%%%%%%%
\begin{example}
\label{ex:fact:Diaconescu-cover}
%%%%%%%%%%%%%%%%%%%%%%%%%%%%%%%%%%%%%%%%%%%%%%%%%%%%%%%%%%%%%%%%%%%%%%%%%%%
Let $\cat D$ be a small category and recall from
Example~\ref{ex:Diaconescu-cover} the finitely accessible arboreal adjunction
\[
\begin{tikzcd}
  \presh{\forest}
  \arrow[bend left=25]{r}{\eadj\pi}
  \arrow[phantom]{r}[description]{\textnormal{\footnotesize{$\bot$}}}
& \presh{\cat D}
  \arrow[bend left=25]{l}{\ladj\pi}
\end{tikzcd}
\]

\noindent
where $\forest = \forest(\cat D)$ is the forest of finite sequences of composable
arrows in $\cat D$
(see also Example~\ref{ex:hintikka:mono:forest} and \S\ref{sec:emb:presh}).
We saw in Example~\ref{ex:prelim:coste:funct} that there exists an (equational)
theory $\theory = \theory(\cat D^\op)$ in a signature $\Sig = \Sig(\cat D^\op)$
such that $\presh{\cat D} \cong \Mod(\theory)$.
It is possible to show that the adjunction above has the factorisation property,
i.e.\ it can be decomposed as a finitely accessible adjunction
$\presh{\forest} \inadj \Struct(\Sig)$ followed by the reflection
$\Struct(\Sig) \inadj \presh{\cat D}$.
%%%%%%%%%%%%%%%%%%%%%%%%%%%%%%%%%%%%%%%%%%%%%%%%%%%%%%%%%%%%%%%%%%%%%%%%%%%
\end{example}
%%%%%%%%%%%%%%%%%%%%%%%%%%%%%%%%%%%%%%%%%%%%%%%%%%%%%%%%%%%%%%%%%%%%%%%%%%%

%%%%%%%%%%%%%%%%%%%%%%%%%%%%%%%%%%%%%%%%%%%%%%%%%%%%%%%%%%%%%%%%%%%%%%%%%%%
\begin{fullproof}
%%%%%%%%%%%%%%%%%%%%%%%%%%%%%%%%%%%%%%%%%%%%%%%%%%%%%%%%%%%%%%%%%%%%%%%%%%%

Recall from Example~\ref{ex:Diaconescu-cover}
the functor $\pi \colon \forest \to \cat D$
which induces the adjunction $\eadj\pi \adj \ladj\pi$.
Also, recall from Example~\ref{ex:prelim:coste:funct}
the signature $\Sig = \Sig(\cat D^\op)$ and
the (equational) theory $\theory = \theory(\cat D^\op)$ in $\Sig$
such that $\presh{\cat D} \cong \Mod(\theory)$.
Write $F \colon \Struct(\Sig) \inadj \presh{\cat D} \cocolon U$
for the adjunction induced from Lemma~\ref{lem:coste:mod:filtcolim}.

The adjunction $\eadj\pi \adj \ladj\pi$ factors (up to iso) as
\[
\begin{tikzcd}
  \presh{\forest}
  \arrow[phantom]{r}[description]{\bot}
  \arrow[bend left]{r}{L}
& \Struct(\Sig)
  \arrow[phantom]{r}[description]{\bot}
  \arrow[bend left]{l}{R}
  \arrow[bend left]{r}{F}
& \presh{\cat D}.
  \arrow[bend left]{l}{U}
\end{tikzcd}
\]

The finitely accessible adjunction $L \adj R$ is obtained as follows.
Write $\cat S$ for the syntactic category of the empty cartesian theory in $\Sig$.
There is a functor $\forest \to \cat S^\op$
which takes a node $s \in \forest$, say
\[
\begin{tikzcd}
  a_0
  \arrow{r}{k_1}
& a_1
  \arrow[dashed]{r}
& a_{n-1}
  \arrow{r}{k_{n}}
& a_n
\end{tikzcd}
\]

\noindent
to the $\cat S$-object $\OI{x:a_n | \True}$
and such that given a node $t \in \forest$ of the form
\[
\begin{tikzcd}
  a_0
  \arrow[dashed]{r}{s}
& a_n
  \arrow{r}{k_{n+1}}
& a_{n+1}
  \arrow[dashed]{r}
& a_{n+m-1}
  \arrow{r}{k_{n+m}}
& a_{n+m}
\end{tikzcd}
\]

\noindent
the $\forest$-morphism $s \leq t$ 
is taken to the $\cat S$-morphism
\[
\begin{array}{*{5}{l}}
  \MI{x : a_{n+m},y:a_n \mid y \Eq f_{k_{n+1}}(\dots f_{k_{n+m}}(x))}
& :
& \OI{x : a_{n+m} \mid \True}
& \longto
& \OI{y:a_n \mid \True}
\end{array}
\]

\noindent
Note that the identity on $s \in \forest$
is taken to the $\cat S$-identity $\MI{x:a_n,y:a_n \mid y \Eq x}$,
while $\forest$-compositions are taken to term substitutions.

By Theorem~\ref{thm:coste:synt},
this induces a functor
$\varpi \colon \forest \to \Struct(\Sig)$.
The functor $R \colon \Struct(\Sig) \to \presh\forest$
taking $M$ to $\Struct(\Sig)\funct{\varpi(-),M}$
is a right adjoint (see e.g.~\cite[Theorem I.5.2]{mm92book}).
Note that $R$ is finitary since
the image of $\varpi$ consists of finitely presentable $\Sig$-structures.

Moreover, $F \comp \varpi$ takes $s \in \forest$ as above to the presheaf
corresponding to the $\theory$-model $\FG{x:a_n \mid \True}$,
that is, to $\yoneda a_n$.
Hence,
$F \comp \varpi \cong \yoneda \comp \pi$
and
$R \comp U$ takes a presheaf $P \in \presh{\cat D}$
to
\[
\begin{array}{*{7}{l}}
  \Struct(\Sig)\funct{\varpi(-),U P}
& \cong
& \presh{\cat D}\funct{F \varpi(-), P}
& \cong
& \presh{\cat D}\funct{(\yoneda \comp \pi)(-), P}
& \cong
& P \pi
\end{array}
\]

\noindent
It follows that $R \comp U \cong \ladj\pi$.
%%%%%%%%%%%%%%%%%%%%%%%%%%%%%%%%%%%%%%%%%%%%%%%%%%%%%%%%%%%%%%%%%%%%%%%%%%%
\end{fullproof}
%%%%%%%%%%%%%%%%%%%%%%%%%%%%%%%%%%%%%%%%%%%%%%%%%%%%%%%%%%%%%%%%%%%%%%%%%%%

We now give two examples of finitely wooded adjunctions that do \emph{not}
satisfy the factorisation property (the second is even an arboreal adjunction).
The first example is less relevant from a logical point of view,
but it is a useful warm-up for the second.

%%%%%%%%%%%%%%%%%%%%%%%%%%%%%%%%%%%%%%%%%%%%%%%%%%%%%%%%%%%%%%%%%%%%%%%%%%%
\begin{example}
\label{ex:fact:pos}
%%%%%%%%%%%%%%%%%%%%%%%%%%%%%%%%%%%%%%%%%%%%%%%%%%%%%%%%%%%%%%%%%%%%%%%%%%%
Let $\sig$ be the signature $\{\Leq\}$ of posets (see Example~\ref{ex:prelim:coste:pos}),
and recall from Example~\ref{ex:path:pos} the finitely accessible wooded (self-)adjunction
\[
\begin{tikzcd}
  \Pos
  \arrow[r, bend left=25, ""{name=U, below}, "\Id"{above}]
  \arrow[r, leftarrow, bend right=25, ""{name=D}, "\Id"{below}]
& \Pos
  \arrow[phantom, "\textnormal{\footnotesize{$\bot$}}", from=U, to=D] 
\end{tikzcd}
\]

\noindent
which also detects path embeddings.
Lemma~\ref{lem:coste:mod:filtcolim} yields a finitely accessible adjunction
\[
\begin{tikzcd}
  \Gph
  \arrow[r, bend left=25, ""{name=U, below}, "F"{above}]
  \arrow[r, leftarrow, bend right=25, ""{name=D}, "U"{below}]
& \Pos
  \arrow[phantom, "\textnormal{\footnotesize{$\bot$}}", from=U, to=D] 
\end{tikzcd}
\]
where $U$ is the inclusion functor and
its left adjoint $F$
takes a graph to the poset of its strongly connected components (SCCs).

We claim that there is no adjunction
$L \colon \Pos \inadj \Gph \cocolon R$
such that $\Id \adj \Id$ factors (up to natural isomorphism) as
\[
\begin{tikzcd}
  \Pos
  \arrow[phantom]{r}[description]{\textnormal{\footnotesize{$\bot$}}}
  \arrow[bend left]{r}{L}
& \Gph
  \arrow[phantom]{r}[description]{\textnormal{\footnotesize{$\bot$}}}
  \arrow[bend left]{l}{R}
  \arrow[bend left]{r}{F}
& \Pos.
  \arrow[bend left]{l}{U}
\end{tikzcd}
\]
To see this, assume towards a contradiction that
such an adjunction $L \adj R$ exists, i.e.\  
$F L \cong \Id$ and $R U \cong \Id$.

We shall use the well-known fact that the inclusion functor
$U \colon \Pos \to \Gph$ preserves coproducts but not coequalisers.
Let $|3|$ be the discrete poset $\{0,1,2\}$,
and let $2$ be the poset $\{0 \leq 1\}$.
Consider the monotone maps $f,g \colon |3| \rightrightarrows 2+2$
\[
\begin{tikzcd}[column sep=tiny]
& 0
  \arrow[mapsto]{dl}[left]{f,g}
&
& 1
  \arrow[mapsto]{dl}[left]{f}
  \arrow[mapsto]{dr}{g}
& 
& 2
  \arrow[mapsto]{dr}{f,g}
& 
\\
  a
  \arrow[phantom]{rr}[description]{\leq}
&
& b
& 
& a'
  \arrow[phantom]{rr}[description]{\leq}
&
& b'
\end{tikzcd}
\]

\noindent
where we represent $2+2$ as $\{a\leq b\} + \{a' \leq b'\}$.
The coequaliser of $f$ and $g$
in $\Pos$ is $3 = \{0 \leq 1 \leq 2\}$,
while their coequaliser
in $\Gph$ is the following non-transitive graph:
\[
\begin{tikzcd}
  0
  \arrow[loop left]
  \arrow{r}
& 1
  \arrow[loop above]
  \arrow{r}
& 2
  \arrow[loop right]
\end{tikzcd}
\]

Since $L \colon \Pos \to \Gph$ is left adjoint,
in $\Gph$ we get a coequaliser diagram
\[
\begin{tikzcd}
  L \vert 3 \vert
  \arrow[shift left]{r}[above]{L f}
  \arrow[shift right]{r}[below]{L g}
& L 2 + L 2
  \arrow[twoheadrightarrow]{r}
& L 3.
\end{tikzcd}
\]

\noindent
On the other hand, since $F L \cong \Id$,
the graph $L|3|$ has three SCCs (and no edges between them),
$L 2$ is a graph with two SCCs, say $S_0, S_1$, and with at least one edge
from~$S_0$ to~$S_1$ but no edges from~$S_1$ to~$S_0$.
Similarly, $L 3$ has three SCCs, say $T_0,T_1,T_2$,
with edges from $T_0$ to $T_1$ and from~$T_1$ to~$T_2$
(and not the other way around).
Moreover, since $L 3$ is the coequaliser of $L f$ and $L g$ in $\Gph$,
there are no edges from $T_0$ to $T_2$ in $L 3$.

Consider now the monotone map $h \colon 2 \to 3$
that takes $0$ to $0$, and $1$ to $2$.
In $L 2$, there are vertices $u \in S_0$ and $v \in S_1$
with an edge from $u$ to $v$.
Using again the fact that~$F L \cong \Id$, it must be the case that
$L h(u) \in T_0$ and $L h(v) \in T_2$.
But then $L h$ cannot be a graph homomorphism because 
there is no edge from $T_0$ to $T_2$ in $L 3$.
A contradiction.
\qed
%%%%%%%%%%%%%%%%%%%%%%%%%%%%%%%%%%%%%%%%%%%%%%%%%%%%%%%%%%%%%%%%%%%%%%%%%%%
\end{example}
%%%%%%%%%%%%%%%%%%%%%%%%%%%%%%%%%%%%%%%%%%%%%%%%%%%%%%%%%%%%%%%%%%%%%%%%%%%

We saw in~\S\ref{sec:arboreal:def} that the category $\Pos$ is not arboreal,
so the wooded adjunction in the previous example is not arboreal.
We will now present an example of a finitely accessible arboreal adjunction,
involving ``deterministic'' Kripke structures,
which does not have the factorisation property.

%%%%%%%%%%%%%%%%%%%%%%%%%%%%%%%%%%%%%%%%%%%%%%%%%%%%%%%%%%%%%%%%%%%%%%%%%%%
\begin{example}
\label{ex:fact:det}
%%%%%%%%%%%%%%%%%%%%%%%%%%%%%%%%%%%%%%%%%%%%%%%%%%%%%%%%%%%%%%%%%%%%%%%%%%%
Let $\cat D$ be the full subcategory of $\Gph$ defined by \emph{deterministic graphs}, 
i.e.\ graphs $G=(V,E)$ such that for all $u \in V$
there is at most one $v \in V$ satisfying~$E(u,v)$.
Note that $\cat D$ is the category of models of the Horn (in particular, cartesian)
theory 
\[
\{E(x,y)\land E(x,z) \thesis_{x,y,z} y \Eq z\}.
\]

Consider the finitely accessible adjunction
$\Ladj^M \colon \Forest \inadj \Gph \cocolon R^M$
where $\Ladj^M$ sends a forest to the graph of its covering relation,
and $R^M$ sends a graph to the forest of its finite walks.
This restricts to a wooded adjunction
\[
\begin{tikzcd}
  \A
  \arrow[bend left]{r}{L^D}
  \arrow[phantom]{r}[description]{\textnormal{\footnotesize{$\bot$}}}
& \cat D
  \arrow[bend left]{l}{R^D}
\end{tikzcd}
\]
where~$\A$ is 
the full subcategory of $\Forest$
consisting of those forests in which the covering relation is \emph{deterministic},
i.e.\ each node is covered by at most one element.
In other words, $\A$ consists of those forests which are disjoint unions of branches.
The category~$\A$ is lfp, and is arboreal
with respect to the (surjective, injective) factorisation system.
Therefore, $L^D \colon \A \inadj \cat D \cocolon R^D$ is a finitely accessible arboreal adjunction detecting path embeddings.

In view of Lemma~\ref{lem:coste:mod:filtcolim}, there is a finitely accessible adjunction
\[
\begin{tikzcd}
  \Gph
  \arrow[bend left]{r}{F}
  \arrow[phantom]{r}[description]{\textnormal{\footnotesize{$\bot$}}}
& \cat D
  \arrow[bend left]{l}{U}
\end{tikzcd}
\]
where $U$ is the full embedding
and $F$ takes a graph $G = (V,E)$ to its quotient under the
equivalence relation generated by the relation 
\[
\{(u,v)\in V\times V \mid \exists w\in V \ \text{such that $E(w,u)$ and $E(w,v)$}\}.
\]
We claim that there is no finitely accessible adjunction
$L \colon \A \inadj \Gph \cocolon R$
such that the adjunction $L^D \adj R^D$
factors (up to natural isomorphism) as
\[
\begin{tikzcd}
  \A
  \arrow[phantom]{r}[description]{\textnormal{\footnotesize{$\bot$}}}
  \arrow[bend left]{r}{L}
& \Gph
  \arrow[phantom]{r}[description]{\textnormal{\footnotesize{$\bot$}}}
  \arrow[bend left]{l}{R}
  \arrow[bend left]{r}{F}
& \cat D.
  \arrow[bend left]{l}{U}
\end{tikzcd}
\]

Assume toward a contradiction that such an adjunction exists 
and so $F L \cong L^D$ and $R U \cong R^D$.
For each $n \in \NN$, write $\mathbf{n}$ for the chain $0<1<\cdots< n-1$ of length $n$ in $\A$ and let $\gr{n}$ be its image under the functor $L^{D}$; that is, $\gr{n}$ is the deterministic graph
\[
\begin{tikzcd}
  0
  \arrow{r}
& 1
  \arrow{r}
& \cdots
  \arrow{r}
& n-1.
\end{tikzcd}
\]

\noindent
In particular, $\gr{0}$ is the empty graph,
$\gr{1}$ is the graph with a vertex $0$ and no edges,
and $\gr{2}$ is the graph $0 \longto 1$.
Consider the functions $f,g \colon \gr{1} \rightrightarrows \gr{2}+\gr{2}$ below:
\[
\begin{tikzcd}
  \gr{1}
  \arrow[shift right]{d}[left]{f}
  \arrow[shift left]{d}[right]{g}
& 0
  \arrow[mapsto]{d}[left]{f}
  \arrow[mapsto]{drr}{g}
&
\\
  \gr{2} + \gr{2}
& 0
  \arrow{r}
& 1
& 0
  \arrow{r}
& 1
\end{tikzcd}
\]
In the category $\A$, denote by $!\colon \mathbf{1}\to\mathbf{2}$ the unique forest morphism and let $\varphi,\psi\colon \mathbf{1}\to \mathbf{2} + \mathbf{2}$ be obtained by composing $!$ with the first and second coproduct injection, respectively. Then $L^{D}(\mathbf{2} + \mathbf{2})$ can be identified with $\gr{2}+\gr{2}$, and $L^{D}\varphi, L^{D}\psi$ with $f$ and $g$, respectively.

The coequaliser of $\varphi$ and $\psi$ in $\A$ is $\mathbf{2}$,
while the coequaliser of $U f$ and $U g$
in $\Gph$ is the graph $H_2$, where $H_n$ denotes the graph displayed below:
\[
\begin{tikzcd}
  1
& \cdots
& n
\\
& 0
  \arrow{ul}
  \arrow{ur}
\end{tikzcd}
\]

\noindent
Now, since $L \colon \A \to \Gph$ is left adjoint,
in $\Gph$ we obtain a coequaliser diagram
\[
\begin{tikzcd}
  L\mathbf{1}
  \arrow[shift left]{r}[above]{L \varphi}
  \arrow[shift right]{r}[below]{L \psi}
& L\mathbf{2} + L\mathbf{2}
  \arrow[twoheadrightarrow]{r}
& L\mathbf{2}.
\end{tikzcd}
\]

\noindent
It is not difficult to see that $L\mathbf{1} \cong U\gr{1}$
and, using the fact that the adjunction $L\dashv R$ is finitely accessible, that $L\mathbf{2}\cong H_n$ for some $n > 0$.
Moreover, it follows from the description of $F \colon \Gph \to \cat D$
given above that $L\varphi \colon L\mathbf{1} \to L\mathbf{2} + L\mathbf{2}$
takes $0$ to one root of $L\mathbf{2} + L\mathbf{2}$,
while $L \psi$ takes $0$ to the other root of $L\mathbf{2} + L\mathbf{2}$.
Therefore, $H_{2n}$ is a coequaliser of $L\varphi$ and $L\psi$ in $\Gph$.
But $H_n \not\cong H_{2n}$ for $n > 0$, a contradiction.
\qed
%%%%%%%%%%%%%%%%%%%%%%%%%%%%%%%%%%%%%%%%%%%%%%%%%%%%%%%%%%%%%%%%%%%%%%%%%%%
\end{example}
%%%%%%%%%%%%%%%%%%%%%%%%%%%%%%%%%%%%%%%%%%%%%%%%%%%%%%%%%%%%%%%%%%%%%%%%%%%

%%%%%%%%%%%%%%%%%%%%%%%%%%%%%%%%%%%%%%%%%%%%%%%%%%%%%%%%%%%%%%%%%%%%%%%%%%%
\begin{fullproof}
%%%%%%%%%%%%%%%%%%%%%%%%%%%%%%%%%%%%%%%%%%%%%%%%%%%%%%%%%%%%%%%%%%%%%%%%%%%
We prove various statements of Example~\ref{ex:fact:det}.

\begin{claim*}
The category $\A$ is lfp.
\end{claim*}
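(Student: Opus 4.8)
The plan is to realise $\A$ as a full subcategory of the lfp category $\Forest \simeq \presh{\NN}$ (Example~\ref{ex:prelim:forests}, Theorem~\ref{thm:arboreal:presh}) cut out by a condition that is visibly preserved under limits and filtered colimits, and then to conclude via the criterion already used in Remark~\ref{rem:path:structemb}, namely \cite[Corollary 1.47]{ar94book}. The first and most delicate step is to transport the defining property of $\A$ across the equivalence. Under $\Forest \simeq \presh{\NN}$ a forest $\forest$ corresponds to the presheaf $X \colon \NN^\op \to \Set$ with $X(n)$ the set of nodes of height $n$ and transition map $\partial_n \colon X(n+1) \to X(n)$ sending a node to its (unique) predecessor in the covering relation. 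The immediate successors of $z \in X(n)$ are exactly the fibre $\partial_n^{-1}(z)$, so $\forest$ lies in $\A$ (each node is covered by at most one element) precisely when every $\partial_n$ is injective. Hence $\A$ is equivalent to the full subcategory of $\presh{\NN}$ on those presheaves all of whose transition maps are monomorphisms.

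Next I would check that this subcategory is reflective in $\presh{\NN}$ and closed under filtered colimits. Closure under filtered colimits is immediate from the presheaf description: filtered colimits in $\presh{\NN}$ are computed pointwise in $\Set$, and in $\Set$ monomorphisms are stable under filtered colimits, so each $\partial_n$ of a filtered colimit of objects of $\A$ is again injective. For reflectivity I would construct the reflector $\rho \colon \Forest \to \A$ explicitly as the quotient that, within each connected component, identifies all nodes of a fixed height; this is forced because any forest morphism into a deterministic forest must send two siblings to the same node (it preserves the covering relation, and the target has at most one successor per node), whence inductively all of their descendants are identified as well. Granting that $\rho$ is well defined and left adjoint to the inclusion, $\A$ is a reflective full subcategory of the lfp category $\Forest$ closed under filtered colimits, and is therefore lfp by \cite[Corollary 1.47]{ar94book}, exactly as in Remark~\ref{rem:path:structemb}. (Alternatively, reflectivity can be obtained abstractly: $\A$ is also closed under limits in $\presh{\NN}$, since limits are pointwise and monos are stable under arbitrary limits in $\Set$, and a full subcategory of an lfp category closed under limits and filtered colimits is reflective.)

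A second, more self-contained route uses the fact that $\A$ is already known to be arboreal with the (surjective, injective) factorisation system: by Remark~\ref{rem:path-density} condition~\ref{item:path:finaccadj:lfp:pathdense} of Lemma~\ref{lem:path:finaccadj:lfp} holds automatically, so it suffices to check that $\A$ is cocomplete and that its paths are finitely presentable. Cocompleteness follows once $\A$ is known to be reflective in the cocomplete category $\Forest$, and by Lemma~\ref{l:paths-in-presh-forest} transported along $\Forest \simeq \presh{\NN}$ the paths of $\A$ are precisely the representables $\yo(k)$ (the finite chains $\mathbf{k+1}$) together with the initial object; these are finitely presentable in $\presh{\NN}$, hence in $\A$ because $\A$ is closed under filtered colimits and the inclusion preserves them. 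I expect the main obstacle to be bookkeeping rather than conceptual: one must be careful to match ``deterministic covering relation'' with ``monic transition maps'' under the equivalence, and to remember that $\A$ is emphatically \emph{not} closed under all colimits in $\Forest$ (it fails for coequalisers, as the very branching-collapse performed by $\rho$ shows), so that only limits and \emph{filtered} colimits are available when invoking \cite[Corollary 1.47]{ar94book}.
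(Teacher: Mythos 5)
Your proposal is correct, and its first step coincides exactly with the paper's: both transport $\A$ across the equivalence $\Forest \simeq \presh{\NN}$ and identify it with the full subcategory of presheaves all of whose restriction maps are injective. Where you genuinely diverge is the concluding mechanism. The paper finishes in one line: the injectivity condition is axiomatised by a Horn (hence cartesian) theory in the signature $\Sig(\NN^\op)$ of Example~\ref{ex:prelim:coste:funct}, so $\A$ is a category of models of a cartesian theory and therefore lfp by Remark~\ref{rem:lfp:struct} (equivalently Corollary~\ref{cor:lfp-iff-models-of-T}). You instead verify categorical closure properties --- closure under filtered colimits (pointwise, monos being stable under filtered colimits in $\Set$) and reflectivity, the latter either via your explicit height-collapsing reflector or abstractly from closure under limits --- and then invoke \cite[Corollary 1.47]{ar94book}. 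Both are sound: your reflector is correct (in a tree, any morphism into a deterministic forest identifies all nodes of equal height, by induction on height, so the quotient of each component onto a chain has the universal property), and although the paper states the criterion of \cite[Corollary 1.47]{ar94book} for subcategories of $\Struct(\Sig)$ in Remark~\ref{rem:path:structemb}, it applies here because $\presh{\NN} \cong \Mod(\theory(\NN^\op))$ is itself reflective and closed under filtered colimits in $\Struct(\Sig(\NN^\op))$, so the reflections compose. The trade-off: the paper's syntactic route is shorter and its Horn presentation is reused immediately afterwards --- the paper \emph{derives} the fact that the inclusion $\A \into \Forest$ is a finitary right adjoint (your reflectivity) from the presentation via Lemma~\ref{lem:coste:mod:filtcolim}, rather than constructing it by hand --- whereas your route yields an explicit description of the reflector and avoids writing down sequents.

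One caution on your ``second, more self-contained route'': it presupposes that $\A$ is arboreal, but in the paper's logical order arboreality is proved \emph{after} lfp-ness, and its verification uses cocompleteness (for coproducts of paths). You correctly note that cocompleteness can instead be extracted from reflectivity; but once reflectivity is in hand, your first route already concludes, so the second route is a redundant detour rather than an independent alternative, and as stated it would be circular.
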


\begin{proof}[Proof of the Claim] 
We rely on Example~\ref{ex:prelim:forests},
and more precisely on the equivalence $\Forest \cong \presh{\NN}$.
Note that this equivalence takes a forest $\forest$ to the
presheaf $X_{\forest} \in \funct{\NN^\op, \Set}$
where $X_{\forest}(n)$ is the set of all $u \in \forest$ such that $\down u$
has cardinality $n$, and where the restriction map
$X_{\forest}(n+1) \to X_{\forest}(n)$ takes $u$ with
$\down u =\{v_1 \prec \dots \prec v_n \prec v_{n+1} \}$ to $v_n$.
Hence we have $\forest \in \A$ if, and only if,
all the restriction maps of $X_{\forest}$ are injective.
But this latter condition can be expressed by a cartesian (in fact Horn)
theory in the signature $\Sig(\NN^\op)$ associated to $\NN^\op$
in Example~\ref{ex:prelim:coste:funct}.
Hence $\A$ is lfp.
\end{proof}

\begin{claim*}
$\A$ is an arboreal category.
\end{claim*}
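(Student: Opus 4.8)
The plan is to verify directly that $\A$ satisfies all the requirements of Definition~\ref{def:arboreal-cat}, exploiting that $\A$ is a full subcategory of $\Forest\simeq\presh{\NN}$ (Example~\ref{ex:prelim:forests}) whose objects are the disjoint unions of branches. First I would record that $\A$ is locally small and well-powered, both inherited from $\Forest$, and that the empty forest is an initial object lying in $\A$. Next I would install the factorisation system, taking $\Q$ to be the surjective and $\M$ the injective forest morphisms. I would show that this restricts from the (epi, mono) factorisation system on the presheaf topos $\Forest$ (Example~\ref{ex:presheaf-cats-stable-fact-sys}): given $f$ in $\A$, its $\Forest$-image is a sub-forest of its codomain, and a sub-forest of a disjoint union of branches is again one, so the image lies in $\A$. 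Fullness then yields diagonal fillers in $\A$ from those in $\Forest$, and the argument that the two classes are mutually determined by lifting is exactly that in the proof of Remark~\ref{rem:prelim:fact:strong}. For stability I would note that pullbacks of forest morphisms in $\Forest$ are computed levelwise in $\Set$ and preserve injectivity of the restriction maps, hence stay in $\A$; since the pullback of a surjection along an injection in $\Set$ is surjective, stability transfers from $\Forest$. This already makes $\A$ a wooded category with a stable proper factorisation system.

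I would then identify the paths. Between two finite chains there is at most one morphism, namely the height-preserving inclusion, which exists iff the first is no longer than the second and is an isomorphism iff they have equal length. From this, $\Emb{\mathbf n}$ is the finite chain of prefixes of $\mathbf n$, so every finite chain is a path, while any object with two distinct branches or an infinite branch has an $\M$-subobject poset that is not a finite chain. Hence the paths of $\A$ are exactly the finite chains together with the initial object (Lemma~\ref{lem:path:base}). With this description, axioms (i)--(iii) of Definition~\ref{def:arboreal-cat} become immediate: every morphism between paths is a (unique) injection that is surjective only when an isomorphism, giving the 2-out-of-3 condition; disjoint unions of sets of finite chains again lie in $\A$ and, by fullness, are the coproducts in $\A$, giving coproducts of paths; and a morphism from a non-initial path (a single finite chain) into a coproduct $\coprod_i P_i$ must send the unique root into one summand and, being height-preserving, factors through that coproduct injection, while the initial object is connected by initiality. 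So every path is connected.

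The main obstacle is axiom (iv), path-generation. Here I would invoke Remark~\ref{rem:path-density} to reduce it, under (i)--(iii), to the density of $\pth\A\into\A$. The finite chains are the representables of $\presh{\NN}$, so they, together with the initial object, are dense in $\Forest$; thus for each $X\in\A$ the canonical cocone on the diagram of paths over $X$ is colimiting in $\Forest$ with vertex $X$. The delicate point is that $\A$ is \emph{not} closed under colimits in $\Forest$---for instance the coequalisers appearing in Example~\ref{ex:fact:det} leave $\A$---so one cannot simply claim that colimits agree. Instead I would use that a colimiting cocone in $\Forest$ whose vertex already lies in the full subcategory $\A$ is automatically colimiting in $\A$: cocones over the path-valued diagram to any $Y\in\A$ coincide whether computed in $\A$ or in $\Forest$ (by fullness), and they biject with $\Forest$-maps $X\to Y$, that is, with $\A$-maps. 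Hence $\pth\A\into\A$ is dense and every object of $\A$ is path-generated. Together with the stable proper factorisation system established above and the standing assumption that $\pth\A$ is essentially small, this shows that $\A$ is arboreal.
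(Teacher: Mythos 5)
Your proof is correct, and its overall strategy coincides with the paper's: equip $\A$ with the (surjective, injective) classes, show they restrict from the (epi, mono) factorisation system of $\Forest \cong \presh{\NN}$ because $\A$ is closed under images and subobjects, identify the paths of $\A$ with those of $\Forest$ (the finite chains), and transfer conditions \ref{ax:2-out-of-3}--\ref{ax:path-generated} of Definition~\ref{def:arboreal-cat} along the full inclusion $\A \into \Forest$. The difference lies in how the individual steps are discharged. Where you argue combinatorially that a subforest of a disjoint union of branches is again one, the paper invokes the Horn-theory description of $\A$ and closure of Horn models under substructures (via Remark~\ref{rem:emb:strong} and \cite[Theorem 5.12]{ar94book}); where you build coproducts of paths explicitly as disjoint unions, the paper derives condition~\ref{ax:colimits} from cocompleteness, reusing the preceding claim that $\A$ is lfp; and where you verify stability by computing pullbacks levelwise, the paper notes that the inclusion is a right adjoint (by Lemma~\ref{lem:coste:mod:filtcolim}), hence preserves pullbacks, and preserves and reflects $\Q$ and $\M$. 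On path-generation you are actually more explicit than the paper, which disposes of condition~\ref{ax:path-generated} in one line by appealing to fullness: you correctly isolate the delicate point that $\A$ is not closed under colimits in $\Forest$, and resolve it with the observation that a cocone lying in a full subcategory which is colimiting in the ambient category is colimiting in the subcategory --- which is precisely what the paper's appeal to fullness amounts to. The trade-off is that the paper's proof is shorter because it leans on machinery already in place (lfp-ness, Horn theories, reflectivity), while yours is elementary and self-contained, not even requiring the previous claim that $\A$ is lfp.
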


\begin{proof}[Proof of the Claim] 
We are going to prove that $\A$ is arboreal when equipped with
the factorisation system $(\Q,\M)$,
where $\Q$ (resp.\ $\M$) consists of the surjective (resp.\ injective) morphisms.

We have seen above that under the equivalence $\Forest \cong \presh{\NN}$,
the objects of $\A$ correspond exactly to the models
of a Horn extension of the Horn theory associated to
$\presh{\NN}$ in Example~\ref{ex:prelim:coste:funct}.
Hence we get from (a direct adaptation of) Lemma~\ref{lem:coste:mod:filtcolim}
that the full inclusion $\A \into \Forest$ is a finitary right adjoint.

It follows from Example~\ref{ex:forests-trees-presheaves}
and Example~\ref{ex:prelim:fact:forest}
that $\Forest$ is arboreal w.r.t.\ 
(surjective morphisms, injective morphisms) factorisations.
Moreover, it follows from Example~\ref{ex:path:tree} that in $\Forest$,
the paths are exactly the finite chains.

We now show that $(\Q,\M)$ is a (proper) factorisation system on $\A$.
Let $f \in \A \funct{a,b}$.
We can factor $f$ as $f = m \comp e$ where $e$ (resp.\ $m$)
is a surjective (resp.\ injective) morphism of $\Forest$.
But it follows from Example~\ref{ex:prelim:fact:forest}
(actually \cite[4.3.10.g]{borceux94vol1})
that $m$ is a strong monomorphism in $\Forest$.
Then by Remark~\ref{rem:emb:strong},
$m$ corresponds under $\Forest \cong \presh{\NN}$
to an embedding of $\Sig(\NN^\op)$ structures,
where $\Sig(\NN^\op)$ is the signature 
associated to $\NN^\op$ in Example~\ref{ex:prelim:coste:funct}.
Now, $\A$ corresponds under $\Forest \cong \presh{\NN}$, 
to the category of models of a Horn theory in $\Sig(\NN^\op)$.
This category is thus stable under substructures
(see e.g.~\cite[Theorem 5.12]{ar94book}).
Hence the domain of $m$ is actually an object of $\A$,
and it follows that $m \comp e$ is a $(\Q,\M)$ factorisation of $f$.

The proper factorization system $(\Q,\M)$ is moreover stable
since the full inclusion $\A \into \Forest$
preserves pullbacks, and preserves as well as reflects $\Q$ and $\M$.

We thus obtain that $\A$ is wooded, when equipped with $(\Q,\M)$.

It remains to check conditions \ref{ax:2-out-of-3}--\ref{ax:path-generated}
of Definition~\ref{def:arboreal-cat}.
Condition~\ref{ax:colimits} (coproducts of paths)
follows from the fact that $\A$ is lfp, and thus cocomplete.

For the other conditions, first note that the paths of $\A$
are exactly the paths of $\Forest$.
Indeed, from Example~\ref{ex:path:tree} we get
that all the paths of $\Forest$ are objects of $\A$,
from which we obtain that all the paths of $\Forest$ are actually
paths in $\A$.
Conversely, since $\A$ is closed in $\Forest$ under domains
of injective morphisms, we get that all paths of $\A$
are paths in $\Forest$.

This yields condition~\ref{ax:2-out-of-3} (``2-out-of-3'')
of Definition~\ref{def:arboreal-cat}.

As for condition~\ref{ax:connected} (on connectedness),
notice that the inclusion $\A \into \Forest$ preserves coproducts.

Finally, we obtain condition~\ref{ax:path-generated} (path-generation)
from the above and the fact that the inclusion $\A \into \Forest$
is full.
\end{proof}

\begin{claim*}
$L^D \colon \A \inadj \cat D \cocolon R^D$
is a finitely accessible adjunction.
\end{claim*}

\begin{proof}[Proof of the Claim] 
It follows from the above that the finitely accessible adjunction
$L^M : \Forest \inadj \Gph \cocolon R^M$
restricts to an adjunction
$L^D \colon \A \inadj \cat D \cocolon R^D$
where $\A$ and $\cat D$ are lfp.

In view of Proposition~\ref{p:lfp-morphisms-characterisation},
it remains to show that
$L^D \colon \A \to \cat D$
preserves finitely presentable objects.
But this follows from the fact that
in both $\A$ and $\cat D$,
the finitely presentable objects are the finite structures.
\end{proof}

\begin{claim*}
$L^D \colon \A \inadj \cat D \cocolon R^D$
detects path embeddings.
\end{claim*}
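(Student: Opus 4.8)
The plan is to verify the three conditions of Definition~\ref{def:path:detection-path-emb} for the adjunction $L^D \colon \A \inadj \cat D \cocolon R^D$, with respect to the signature $\sig = \{E\}$ of graphs. Condition~(i) is already recorded in Example~\ref{ex:fact:det}: the category $\cat D$ of deterministic graphs is $\Mod(\theory)$ for the Horn (hence cartesian) theory $\theory = \{E(x,y) \land E(x,z) \thesis_{x,y,z} y \Eq z\}$. Condition~(ii) follows from the preceding claims: the paths of $\A$ coincide with the paths of $\Forest$, namely the finite chains $\mathbf{n}$, and these are finite structures, hence finitely presentable in $\A$. The substance of the argument is condition~(iii), which I now address.

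First I would pin down the two notions of embedding involved. Since $\A$ carries the (surjective, injective) factorisation system, its embeddings are exactly the injective forest morphisms. On the other side, by Example~\ref{ex:prelim:fact:struct} the embeddings of $\sig$-structures are the homomorphisms that are injective and \emph{reflect} the edge relation. The key elementary observation is that $L^D$ leaves the underlying vertex set untouched and acts on morphisms as the identity on underlying functions (a forest morphism preserves covers, so it \emph{is} a homomorphism between the associated covering-relation graphs); consequently $f$ and $L^D f$ share the same underlying function, and $f$ is injective if and only if $L^D f$ is.

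The converse implication of condition~(iii) is then immediate: if $L^D f$ is an embedding of $\sig$-structures it is in particular injective, whence $f$ is an injective forest morphism, i.e.\ an embedding in $\A$. For the forward implication, I would start from an injective forest morphism $f \colon \mathbf{n} \emb a$ with $a \in \A$. Injectivity of $L^D f$ is already secured, so it remains to show that $L^D f$ reflects edges. Here I would invoke the standard fact that forest morphisms preserve the height of a node (the cardinality $\lvert \down x\rvert$ of its set of predecessors): roots are sent to roots and covers are preserved, so the unique chain $\down x$ is sent bijectively onto $\down f(x)$. If $L^D a$ has an edge $f(u) \to f(v)$, i.e.\ $f(u) \cvr f(v)$ in $a$, then $\lvert\down f(v)\rvert = \lvert\down f(u)\rvert + 1$, and height-preservation forces $\lvert\down v\rvert = \lvert\down u\rvert + 1$; since the domain $\mathbf{n}$ is a chain, its elements are linearly ordered, so this gives $u \cvr v$ and hence a corresponding edge $u \to v$ in $L^D\mathbf{n}$. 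Thus $L^D f$ reflects edges and is an embedding of $\sig$-structures.

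The main obstacle is precisely this reflection step: it is the only place where the hypothesis that the domain $P$ is a \emph{path} is genuinely used. For a non-linear domain in $\A$ the argument breaks down---two nodes $u,v$ lying in different branches but with $\lvert\down v\rvert = \lvert\down u\rvert+1$ are incomparable, so an edge $f(u)\to f(v)$ created by $f$ need not be reflected---which is exactly why detection is asserted only for path domains. Once height-preservation and the linearity of $\mathbf{n}$ are in hand, the verification of condition~(iii) is purely combinatorial, completing the proof that $L^D \colon \A \inadj \cat D \cocolon R^D$ detects path embeddings.
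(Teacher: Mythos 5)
Your verification of the three conditions of Definition~\ref{def:path:detection-path-emb} is correct, but your treatment of condition (iii) takes a genuinely different route from the paper's. You exploit the \emph{domain}: forest morphisms preserve the height $\lvert\down x\rvert$ of a node, so an edge $f(u)\cvr f(v)$ in $a$ forces $\lvert\down v\rvert=\lvert\down u\rvert+1$, and linearity of the chain $\mathbf{n}$ then gives $u\cvr v$. The paper instead exploits the \emph{codomain} and proves the stronger statement that an \emph{arbitrary} morphism $f\colon a\to b$ of $\A$ is injective if, and only if, $L^D f$ is an embedding of $\sig$-structures: if $f(u)\prec f(v)$ in $b$, then $v$ is not a root of $a$ (roots are sent to roots), so there is $w\cvr v$ in $a$; then $f(w)\cvr f(v)$, and since $\down f(v)$ is a chain in the forest $b$, the element $f(v)$ has at most one immediate predecessor, so $f(w)=f(u)$ and injectivity gives $w=u$. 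Your argument is perfectly adequate for the claim as stated, since detection only concerns path domains; the paper's argument buys the stronger fact that here $L^D$ detects embeddings with arbitrary domains.

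This stronger fact shows that your closing paragraph is mistaken: it is \emph{not} true that "the argument breaks down" for non-linear domains, and the scenario you describe cannot occur. If $u,v$ lie in different branches of $a$ and $L^D a$ had an edge $f(u)\to f(v)$, then by the reasoning above $v$ would have a predecessor $w$ with $f(w)=f(u)$, whence $w=u$ by injectivity of $f$, making $u$ and $v$ comparable after all --- a contradiction. So an injective forest morphism never creates edges between incomparable nodes. The restriction to paths in Definition~\ref{def:path:detection-path-emb} is there because detection is, by design, a condition about path domains only; in this example it is not what makes the statement true, and attributing the truth of condition (iii) to the linearity of the domain misidentifies where the forest structure is actually doing the work (namely, uniqueness of immediate predecessors in the codomain).
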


\begin{proof}[Proof of the Claim]
The embeddings in $\A$ are by definition the injective morphisms.
Since $L^D$ is a (non full) inclusion,
we are done if we show that a morphism $f \colon a \to b$ in $\A$
is injective exactly when it is an embedding of $\prec$-structures.

It is clear that embeddings of $\prec$-structures are injective.

For the converse, consider an injective morphism $f \in \A\funct{a,b}$.
Let $u$ and $v$ be vertices of $a$ such that $f(u) \prec f(v)$ in $b$.
Hence $f(v)$ is not a root of $b$, so that $v$ cannot be a root of $a$.
It follows that there is some vertex $w$ of $a$ such that $w \prec v$ in $a$.
But then $f(w) \prec f(v)$, so that $f(w) = f(u)$ since $b$ is a forest.
Hence $w = u$ since $f$ is injective.
\end{proof}

\begin{claim*}
$F \colon \Gph \inadj \cat D \cocolon U$
is a finitely accessible adjunction.
\end{claim*}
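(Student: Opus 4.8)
The plan is to recognise this claim as a direct instance of Lemma~\ref{lem:coste:mod:filtcolim}. First I would identify $\Gph$ with $\Struct(\sig)$, where $\sig = \{E\}$ is the mono-sorted signature with a single binary relation symbol; by Example~\ref{ex:lfp:struct} this category is lfp. Next I would record that $\cat D = \Mod(\theory)$ for the Horn theory $\theory = \{E(x,y)\land E(x,z) \thesis_{x,y,z} y \Eq z\}$, and that Horn theories are cartesian (Remark~\ref{rem:prelim:coste:horn}), so that $\cat D$ is the category of models of a cartesian theory in $\sig$.

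With these identifications in place, Lemma~\ref{lem:coste:mod:filtcolim} applies directly: the full inclusion $U \colon \cat D = \Mod(\theory) \into \Struct(\sig) = \Gph$ is a morphism of lfp categories, so in particular $\cat D$ is lfp, and $U$ preserves limits and filtered colimits. By Proposition~\ref{p:lfp-morphisms-characterisation}, a functor between lfp categories is a morphism of lfp categories precisely when it admits a left adjoint preserving finitely presentable objects; hence $U$ has such a left adjoint, namely the reflection of $\Gph$ into $\cat D$, which I would identify (up to natural isomorphism) with the functor $F$ described above. Since $U$ is finitary, Remark~\ref{rem:finitely-accessible-adj} then gives at once that $F \dashv U$ is a finitely accessible adjunction, which is the content of the claim.

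The only point requiring slightly more care is the identification of the abstract reflector supplied by Proposition~\ref{p:lfp-morphisms-characterisation} with the concrete quotient functor $F$, for which one checks the universal property of the quotient map $G \to U F G$ among homomorphisms from $G$ into deterministic graphs. I do not expect this to be a genuine obstacle, and in any case it is not needed for the claim itself, which asserts only the existence and finite accessibility of the adjunction: both follow immediately from Lemma~\ref{lem:coste:mod:filtcolim} together with Remark~\ref{rem:finitely-accessible-adj}.
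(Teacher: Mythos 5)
Your abstract step is exactly the paper's opening move: the paper also begins by citing Lemma~\ref{lem:coste:mod:filtcolim} to conclude that the inclusion $U \colon \cat D \to \Gph$ is a finitary right adjoint, so that \emph{some} finitely accessible adjunction exists. The gap is your final paragraph, where you declare the identification of this abstract left adjoint with the concrete quotient functor $F$ to be ``not needed for the claim itself''. That is a misreading of the claim: it does not assert the bare existence of a reflection, but that the \emph{specific} functor $F$ defined in Example~\ref{ex:fact:det} --- quotienting $G$ by the equivalence relation generated by pairs of vertices with a common predecessor --- is left adjoint to $U$. Accordingly, after the one-line appeal to Lemma~\ref{lem:coste:mod:filtcolim}, essentially the entire proof in the paper consists of this verification: given $K \in \cat D$ and a homomorphism $h \colon G \to U K$, it proves by induction on the generated equivalence relation $\simeq$ that $u \simeq v$ implies $h(u) = h(v)$ (the base case is where determinism of $K$ enters), and then constructs, and proves unique, the factorisation $\ladj h \colon F G \to K$ of $h$ through the quotient map $\eta_G \colon G \to U F G$.

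This step cannot be waved away, for two reasons. First, the rest of Example~\ref{ex:fact:det} runs on the concrete description of $F$, not on abstract existence: the contradiction argument and the auxiliary claims $L\mathbf{1} \cong \gr{1}$ and $L\mathbf{2} \cong H_n$ repeatedly invoke ``the description of $F$ given above'' (to see that $F L\mathbf{1} \cong L\mathbf{1}$, that all vertices of $V_0$ are identified by $F$, and that $L\varphi$ and $L\psi$ hit the two distinct roots of $L\mathbf{2} + L\mathbf{2}$). Knowing only that a reflector exists yields none of this. Second, the verification is not the formality you suggest: one must in particular make sure that $F$ takes values in $\cat D$ at all, i.e.\ that the quotient graph is deterministic, and this is genuinely delicate. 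For instance, in the graph with edges $w \to u_1$, $w \to u_2$, $u_1 \to v_1$, $u_2 \to v_2$, the one-step generated equivalence relation identifies $u_1$ with $u_2$ but not $v_1$ with $v_2$, so the resulting quotient has a vertex with two distinct successors; the equivalence relation therefore has to be understood as closed under such further identifications before the quotient lands in $\cat D$. In short, the content of the claim is precisely the part of the argument you set aside.
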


\begin{proof}[Proof of the Claim] 
Since $\cat D = \Mod(\theory)$ with $\theory$ cartesian in $\sig$,
it follows from Lemma~\ref{lem:coste:mod:filtcolim} (\S\ref{sec:coste:synt})
that the inclusion functor $U \colon \cat D \to \Gph$
is a finitary right adjoint.
Hence, we have to show that the indicated $F$ is indeed left adjoint to $U$.
To this end, we just have to prove the usual universal
property (see e.g.~\cite[Theorem IV.1.2]{maclane98book}).

Recall that $F$ takes a graph $G = (V,E)$ to its quotient under the
least equivalence relation $\simeq$ on $V$ such that $u \simeq v$
if there is some $w$ with $E(w,u)$ and $E(w,v)$.
The edge relation of $F G$ is the set of those pairs
$([u]_{\simeq}, [v]_{\simeq})$ such that
for some $u' \simeq u$ and some $v' \simeq v$, we have $E(u',v')$ in $G$.

Given a graph $G \in \Gph$, we let
$\eta_G \colon G \to U F G$ be the function 
which takes each vertex $u$ of $G$ to its $\simeq$-class $[u]_{\simeq}$ in $U F G$.
It is clear that $\eta_G$ is a graph morphism, since
if $E(u,v)$ in $G$, then there is an edge from
$[u]_{\simeq}$ to $[v]_{\simeq}$ in $U F G$.

Given some $K \in \cat D$ and some graph morphism $h \colon G \to U K$,
we have to show that there is a unique $\cat D$-morphism
$\ladj h \colon F G \to K$ such that the following commutes.
\[
\begin{tikzcd}
  G
  \arrow{d}[left]{\eta_G}
  \arrow{r}{h}
& U K
\\
  U F G
  \arrow{ur}[below, xshift=5pt]{U \ladj h}
\end{tikzcd}
\]

We first define $\ladj h \colon F G \to K$.
To this end, we show by induction on $\simeq$ that $u \simeq v$ implies $h(u) = h(v)$.
\begin{description}
\item[Base case]
In the base case, there is some $w \in V$ such that $E(w,u)$ and $E(w,v)$.
Hence, in $K$ there are edges from $h w$ to $h u$ and to $h v$.
But this implies $h u = h v$, as required.

\item[Inductive step (symmetry)]
In this case we obtained $u \simeq v$ from $v \simeq u$,
and the result directly follows from the induction hypothesis.

\item[Inductive step (transitivity)]
In this case, we obtained $u \simeq v$ from $u \simeq w$ and $w \simeq v$,
and the result again follows from the induction hypothesis.
\end{description}

Hence, if $\eta_G(u) = \eta_G(v)$, then $h(u) = h(v)$,
and this defines the value of $\ladj h$ on $[u]_{\simeq} = [v]_{\simeq}$.
Moreover, $\ladj h$ is indeed a $\cat D$-morphism, since
if $E(u',v')$ for some $u' \simeq u$ and some $v' \simeq v$,
then there is an edge from $h(u')$ to $h(v')$ in $U K$,
while $\ladj h[u]_\simeq = h(u')$ and $\ladj h[v]_\simeq = h(v')$.
Finally, we indeed have $U \ladj h \comp \eta_G = h$
from the definition of $\ladj h$.

It remains to show the uniqueness of $\ladj h \colon F G \to K$.
Let $g \colon F G \to K$ in $\cat D$ be such that $U g \comp \eta_G = h$.
This means $g[u]_\simeq = h(u) = \ladj h [u]_\simeq$ for each $[u]_\simeq \in F$,
so that $g = \ladj h$.
\end{proof}

\begin{claim*}
The full inclusion $U \colon \cat D \to \Gph$ preserves coproducts.
\end{claim*}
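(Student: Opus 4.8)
The plan is to argue that coproducts in $\Gph$ are computed as disjoint unions, that a disjoint union of deterministic graphs is again deterministic, and that fullness of the inclusion then forces this $\Gph$-coproduct to serve as the coproduct in $\cat D$ as well.

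First I would recall the explicit form of coproducts in $\Gph$. Given a family $(G_i)_{i\in I}$ with $G_i = (V_i, E_i)$, the coproduct $\coprod_{i\in I} G_i$ in $\Gph$ has vertex set $\coprod_{i\in I} V_i$ and edge relation $\coprod_{i\in I} E_i$; that is, $D \deq \coprod_{i\in I} G_i$ carries an edge between two vertices exactly when both lie in the same summand $V_i$ and the edge already belongs to $E_i$, and the coproduct injections are the evident inclusions, which are graph morphisms.

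Next I would check that, whenever every $G_i$ lies in $\cat D$, the graph $D$ is deterministic. Suppose $D$ has edges from a vertex $u$ to vertices $v$ and $w$. Since every edge of $D$ connects vertices within a single summand, $u$, $v$ and $w$ all belong to one $V_i$, and both edges lie in $E_i$. As $G_i$ is deterministic, $v = w$. Hence $D$ validates the axiom $E(x,y)\land E(x,z)\thesis_{x,y,z} y\Eq z$, so $D$ is an object of $\cat D$.

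Finally, I would invoke fullness of $\cat D \into \Gph$. Since $D$ is an object of $\cat D$ and the coproduct injections are $\cat D$-morphisms, any cocone $(G_i \to K)_{i\in I}$ in $\cat D$ is in particular a cocone in $\Gph$ and thus factors uniquely through $D$ by a $\Gph$-morphism $D \to K$; this morphism is automatically a $\cat D$-morphism because its domain $D$ and codomain $K$ both lie in the full subcategory $\cat D$. Therefore $D$ is the coproduct of $(G_i)_{i\in I}$ in $\cat D$, and $U(\coprod^{\cat D}_{i\in I} G_i) = D = \coprod^{\Gph}_{i\in I} U G_i$, which is precisely the preservation of coproducts. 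There is no genuine obstacle in this argument; the only step deserving care is the stability of the determinism axiom under disjoint union, which rests entirely on the fact that no edges of a $\Gph$-coproduct cross between distinct summands.
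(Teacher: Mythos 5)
Your proposal is correct and follows essentially the same route as the paper's own proof: describe the $\Gph$-coproduct explicitly as a disjoint union with no edges between summands, observe that determinism is therefore inherited from the summands so the coproduct lies in $\cat D$, and conclude via fullness that it also satisfies the universal property in $\cat D$. The paper's argument is just a terser version of this, citing the explicit description of coproducts in $\Struct(\Sig)$ and leaving the fullness step implicit.
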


\begin{proof}[Proof of the Claim]
Consider a family $(K_i \mid i \in I)$ of objects of $\cat D$,
with $K_i = (W_i,E_i)$,
and let $G \deq \coprod_{i \in I} K_i$ be their coproduct in $\Gph$.
By~\cite[Remark 1.5(3)]{ar94book}, the edge relation $E$ of $G$
contains exactly those pairs $((i,u_i),(j,v_j))$ 
such that $i = j$ and $E_i(u_i,v_i)$.
It follows that $G \in \cat D$.
\end{proof}

\begin{claim*}
The coproduct $\gr{2}+\gr{2}$ is the same in $\Gph$ and $\cat D$. 
The coproduct $\mathbf{2} + \mathbf{2}$ in $\A$ is the forest
with two copies of $\mathbf{2}$.
\end{claim*}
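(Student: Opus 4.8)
The plan is to deduce both statements from the coproduct-preservation properties of the two full inclusions already established earlier in this proof, so that no coproduct has to be computed from scratch inside either subcategory. Concretely, I would observe that forming the relevant coproducts in the ambient categories $\Gph$ and $\Forest$ (where they are honest disjoint unions) and then checking that the results land back in $\cat D$ and $\A$ suffices, because the inclusions carry these disjoint unions to the coproducts computed in the subcategories.

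For the first statement, I would form the coproduct $\gr{2} + \gr{2}$ in $\Gph$, which is the disjoint union of two copies of the graph $0 \to 1$. A disjoint union of deterministic graphs is again deterministic, since the coproduct creates no edges between the two summands, so $\gr{2} + \gr{2}$ lies in the full subcategory $\cat D$. By the Claim just proved, the inclusion $U \colon \cat D \to \Gph$ preserves coproducts; hence the coproduct of $\gr{2}$ and $\gr{2}$ taken in $\cat D$ is carried by $U$ to their coproduct in $\Gph$, and since $U$ is a full embedding this identifies the two. This gives the first statement.

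For the second statement I would argue symmetrically, this time using that the full inclusion $\A \into \Forest$ preserves coproducts, as noted when verifying the connectedness axiom for $\A$. The coproduct of two forests in $\Forest$ is their disjoint union, and as $\mathbf{2}$ is a single chain, the disjoint union of two copies of $\mathbf{2}$ is a disjoint union of branches, hence an object of $\A$. Preservation of coproducts by $\A \into \Forest$ then shows that this disjoint union is precisely the coproduct $\mathbf{2} + \mathbf{2}$ computed in $\A$, namely the forest with two copies of $\mathbf{2}$.

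I do not expect a genuine obstacle here: the only point requiring care is the closure check, i.e.\ that a disjoint union of deterministic graphs is deterministic and that a disjoint union of chains is a disjoint union of branches. Both are immediate from the explicit description of coproducts (disjoint unions) in $\Gph$ and $\Forest$, and once they are in place the two preservation results do the rest. As a sanity check one may also note that, $L^D$ being a left adjoint, it preserves the coproduct $\mathbf{2} + \mathbf{2}$, which is consistent with the identification $L^D(\mathbf{2} + \mathbf{2}) \cong \gr{2} + \gr{2}$ used in the main argument.
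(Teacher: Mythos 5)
Your proposal is correct. For the first statement it is essentially the paper's own argument: the disjoint union $\gr{2}+\gr{2}$ is the coproduct in $\Gph$, it is deterministic (no edges between summands), and by the preceding claim together with fullness of $U \colon \cat D \to \Gph$ it is therefore also the coproduct in $\cat D$. For the second statement, however, you take a genuinely different route. The paper stays inside the adjunction: since $L^D \colon \A \to \cat D$ is a left adjoint it preserves coproducts, so $L^D(\mathbf{2}+_{\A}\mathbf{2}) \cong \gr{2}+_{\cat D}\gr{2}$, and one then reads off the forest from its covering-relation graph (this uses, implicitly, that a forest is recovered from its covering relation, and it makes the second statement depend on the first). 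You instead work in the ambient category $\Forest$: coproducts there are disjoint unions (as $\Forest \simeq \presh{\NN}$), a disjoint union of branches is again an object of $\A$, and closure of the full subcategory $\A \into \Forest$ under these disjoint unions --- equivalently, the coproduct-preservation of the inclusion noted in the connectedness verification --- identifies the disjoint union of two copies of $\mathbf{2}$ as the coproduct in $\A$. Your route makes the two statements independent of each other and avoids the implicit reconstruction step in the paper's argument, at the cost of invoking the structure of $\Forest$ rather than reusing the adjunction $L^D \dashv R^D$ already in play; both arguments are equally short and valid, and your closing consistency check via $L^D$ is precisely the paper's own proof.
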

\begin{proof}
Indeed, the above displayed $\gr{2}+\gr{2}$ is the coproduct in $\Gph$
(\cite[Remark 1.5(3)]{ar94book}).
Hence it is also the coproduct in the cocomplete category $\cat D$,
since the inclusion $U \colon \cat D \to \Gph$ preserves coproducts.
As the (non-full) inclusion $L^D \colon \A \to \cat D$ is a left adjoint,
it follows that coproduct $\mathbf{2} + \mathbf{2}$ in 
the cocomplete category $\A$
is the forest with two copies of $\mathbf{2}$.
\end{proof}

\begin{claim*}
The coequaliser of $\varphi$ and $\psi$ in $\A$ is $\mathbf{2}$.
\end{claim*}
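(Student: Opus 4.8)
The plan is to verify the universal property of the coequaliser in $\A$ directly, by exhibiting an explicit quotient and checking it is initial among all morphisms that coequalise $\varphi$ and $\psi$. First I would fix notation: by the preceding claim, $\mathbf{2}+\mathbf{2}$ is the forest consisting of two disjoint copies of $\mathbf{2}=\{0\cvr 1\}$, with roots $r_{1},r_{2}$ and covering relations $r_{1}\cvr s_{1}$ and $r_{2}\cvr s_{2}$. Since a forest morphism preserves roots, the unique morphism ${!}\colon\mathbf{1}\to\mathbf{2}$ sends the single node of $\mathbf{1}$ to the root $0$, and hence $\varphi$ sends it to $r_{1}$ while $\psi$ sends it to $r_{2}$.

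Next I would exhibit the candidate quotient $q\colon \mathbf{2}+\mathbf{2}\to\mathbf{2}$ defined by $r_{1},r_{2}\mapsto 0$ and $s_{1},s_{2}\mapsto 1$. I would check that $q$ is a morphism of $\A$ (it preserves roots, and sends each covering pair $r_{i}\cvr s_{i}$ to $0\cvr 1$), and that $q\varphi=q\psi$, since both composites send the node of $\mathbf{1}$ to $0$. It then remains to establish universality: given any morphism $g\colon \mathbf{2}+\mathbf{2}\to Z$ of $\A$ with $g\varphi=g\psi$, I must produce a unique $h\colon\mathbf{2}\to Z$ with $hq=g$.

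The crux of the argument is the following observation. From $g\varphi=g\psi$ we get $g(r_{1})=g(r_{2})=:t$. Because $g$ preserves the covering relation, $t\cvr g(s_{1})$ and $t\cvr g(s_{2})$, so $t$ is covered by both $g(s_{1})$ and $g(s_{2})$; but $Z$ lies in $\A$, whose objects are precisely the forests in which each node is covered by at most one element, and therefore $g(s_{1})=g(s_{2})$. This step is the only place where the ambient category $\A$, rather than $\Forest$, is used, and it is exactly why the coequaliser in $\A$ is $\mathbf{2}$ rather than the ``cherry'' (one root with two incomparable children) that arises in $\Forest$. Granting it, I would define $h$ by $0\mapsto t$ and $1\mapsto g(s_{1})$, check it is a forest morphism ($t$ is a root of $Z$ as the image of the root $r_{1}$, and $0\cvr 1$ maps to $t\cvr g(s_{1})$) satisfying $hq=g$ on all four nodes, and deduce uniqueness from the fact that $q$ is surjective, hence epic. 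I expect the verifications to be entirely routine; the single conceptual point is the determinism of $Z$ forcing $g(s_{1})=g(s_{2})$.
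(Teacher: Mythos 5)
Your proof is correct. The single conceptual point you isolate -- that determinism of the objects of $\A$ forces $g(s_{1})=g(s_{2})$ once the two roots are identified -- is exactly the crux of the paper's argument as well, so the two proofs share their key idea; but your route is genuinely more direct. The paper does not verify the universal property in $\A$ at all: it computes the coequaliser of $f=L^{D}\varphi$ and $g=L^{D}\psi$ in the category $\cat D$ of deterministic graphs (where the same determinism argument appears as ``otherwise $h(f(0))=h(g(0))$ would have two distinct outgoing edges''), and then transfers the answer back to $\A$ using the fact that the non-full inclusion $L^{D}\colon\A\to\cat D$ is a left adjoint, hence preserves coequalisers. That transfer implicitly also uses that the coequaliser exists in $\A$ (cocompleteness, since $\A$ is lfp) and that $L^{D}$ reflects isomorphisms (a forest is recovered from its covering graph), neither of which your argument needs. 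What the paper's detour buys is economy within the larger example: the adjointness and object-level behaviour of $L^{D}$ are already established there, and the computation in $\cat D$ sits naturally alongside the contrasting computation of the coequaliser of $Uf,Ug$ in $\Gph$. Your version buys self-containedness: everything happens inside $\A$, using only the definition of forest morphisms, the explicit description of $\mathbf{2}+\mathbf{2}$, and the determinism condition defining $\A$.
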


\begin{proof}[Proof of the Claim] 
We show that the coequaliser of $f$ and $g$ in $\cat D$ is $\gr 2$.
The result then follows from the fact that the (non full) inclusion
$L^D \colon \A \to \cat D$ is a left adjoint.

Let $e \colon \gr 2 + \gr 2 \epi \gr 2$ take
both $0$'s to $0$, and both $1$'s to $1$.
Then $e$ is a morphism of $\cat D$, and moreover we have $e \comp f = e \comp g$.

Consider now some $\cat D$-morphism $h \colon \gr 2 + \gr 2 \to K$
such that $h \comp f = h \comp g$.
In particular, $h(f(0)) = h(g(0))$,
hence $h$ identifies the two $0$'s of $\gr 2 + \gr 2$.
But since $K \in \cat D$, it follows that $h$ must also identify the two $1$'s
of $\gr 2 + \gr 2$
(since otherwise $h(f(0)) = h(g(0))$ would have two distinct outgoing edges in $K$).
It follows that $h\colon \gr 2 + \gr 2 \to K$ factors uniquely through
$e \colon \gr 2 + \gr 2 \epi \gr 2$.
\end{proof}

\begin{claim*}
$L\mathbf{1} \cong \gr{1}$.
\end{claim*}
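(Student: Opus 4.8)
The plan is to identify $L\mathbf{1}$ concretely by exploiting the factorisation assumption $F L \cong L^D$ together with the concrete description of $L^D$ on the chain $\mathbf{1}$. Recall that $\mathbf{1}$ is the single-node chain in $\A$, so its image $L^D \mathbf{1}$ is by definition the deterministic graph $\gr 1$, namely a single vertex with no edges. Under the assumed factorisation we have $F(L\mathbf{1}) \cong L^D \mathbf{1} \cong \gr 1$. The key observation is that $F$ collapses a graph to a $\cat D$-object by quotienting out the equivalence relation generated by ``having a common predecessor'', so $F G \cong \gr 1$ forces $G$ to have exactly one equivalence class and no edges in the quotient.

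First I would pin down what graphs $G$ satisfy $F G \cong \gr 1$. Since $\gr 1$ is a single point with no loop, the quotient $F G$ must be a single $\simeq$-class, and there can be no edge in $F G$; in particular $G$ itself can have no edges, because any edge $E(u,v)$ of $G$ descends to an edge from $[u]_\simeq$ to $[v]_\simeq$ in $F G$. Thus $G$ is a discrete (edgeless) graph all of whose vertices lie in a single $\simeq$-class. But $\simeq$ is generated by pairs of vertices sharing a common predecessor, and an edgeless graph has no such pairs, so $\simeq$ is the identity relation. Hence $G$ has a single vertex and no edges, i.e.\ $G \cong \gr 1 = U\gr 1$ as a graph. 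This already shows that the underlying graph of $L\mathbf{1}$ is $\gr 1$.

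The remaining step is to check that $L\mathbf{1}$, as an object of $\Gph$, is genuinely (isomorphic to) $\gr 1$ and not merely collapsed to it by $F$; but the argument above determines $L\mathbf{1}$ up to isomorphism in $\Gph$ directly, since we showed the only graph $G$ with $F G \cong \gr 1$ is $\gr 1$ itself. Writing $\gr 1 \cong U\gr 1$ under the full embedding $U \colon \cat D \hookrightarrow \Gph$, we conclude $L\mathbf{1} \cong U\gr 1$, which is the claim (modulo the harmless identification of $\gr 1 \in \cat D$ with its image $U\gr 1 \in \Gph$).

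I expect the main subtlety to be ensuring that the chain of reductions is valid for the \emph{object} $L\mathbf{1}$ rather than just its image under $F$: a priori $F$ need not be faithful, so one must argue that the fibre $F^{-1}(\gr 1)$ contains a unique object up to isomorphism. The edgeless-plus-single-class analysis above handles this, since it constrains $G$ completely from $F G \cong \gr 1$ with no appeal to faithfulness of $F$. A small point worth stating carefully is that $\simeq$ on an edgeless graph is trivial; this is immediate from its generators but should be made explicit so the reader sees why no two distinct vertices can be identified.
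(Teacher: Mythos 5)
Your proof is correct, but it takes a genuinely different route from the paper's. The paper first uses the \emph{other} half of the factorisation hypothesis, $R U \cong R^D$, to compute $R\gr{1} \cong R U \gr{1} \cong R^D\gr{1} \cong \mathbf{1}$, and then the adjunction $L \dashv R$ to see that $\Gph\funct{L\mathbf{1},\gr{1}} \cong \A\funct{\mathbf{1},\mathbf{1}}$ is a singleton; the mere \emph{existence} of a morphism $L\mathbf{1} \to \gr{1}$ forces $L\mathbf{1}$ to be edgeless, whence $F L\mathbf{1} \cong L\mathbf{1}$ and $F L \cong L^D$ give $L\mathbf{1} \cong L^D\mathbf{1} \cong \gr{1}$. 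You instead use only the half $F L \cong L^D$ together with the explicit description of $F$, and show that the fibre of $F$ over $\gr{1}$ is trivial: if $F G \cong \gr{1}$ then $G$ is edgeless (every edge of $G$ descends to an edge of $F G$, and $\gr{1}$ has none), so the generating relation of $\simeq$ is empty, $\simeq$ is equality, and $G \cong F G \cong \gr{1}$. Your argument is more elementary and self-contained, needing neither the adjunction $L \dashv R$ nor $R U \cong R^D$ for this step, and it determines the vertex count directly rather than via $F L\mathbf{1} \cong L\mathbf{1}$. What the paper's route buys is uniformity with the next claim ($L\mathbf{2} \cong H_n$), where hom-set computations through $R$ and $R U \cong R^D$ are genuinely needed and cannot be replaced by a fibre analysis (the fibre of $F$ over $\gr{2}$ contains all the graphs $H_n$, $n>0$). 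Both arguments are valid; your worry about faithfulness of $F$ is correctly dispatched, since your constraints pin down $G$ up to isomorphism without any such assumption.
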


\begin{proof}[Proof of the Claim] 
First, note that $R\gr{1} \cong \mathbf{1}$
since
$R\gr{1} \cong R U \gr{1} \cong R^D \gr{1} \cong \mathbf{1}$.
It follows that
$\Gph\funct{L\mathbf{1},\gr{1}} \cong \A\funct{\mathbf{1},\mathbf{1}}$
is a singleton.
In particular, the graph $L\mathbf{1}$ has no edges,
and it follows from the above description of $F \colon \Gph \to \cat D$
that $F L\mathbf{1} \cong L\mathbf{1}$.
But $F L \cong L^D$, so $L\mathbf{1} \cong L^D\mathbf{1} \cong \gr{1}$
since $L^D$ is the identity on objects.
\end{proof}

\begin{claim*}
For some $n > 0$, we have $L\mathbf{2} \cong H_n$.
\end{claim*}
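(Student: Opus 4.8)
The plan is to pin down the isomorphism type of $L\mathbf{2}$ from the single constraint $F L\mathbf{2} \cong \gr{2}$ together with finiteness of $L\mathbf{2}$. First I would record that $L\mathbf{2}$ is a finite graph: since $\mathbf{2}$ is finitely presentable in $\A$ and, by hypothesis, $L \dashv R$ is finitely accessible, the left adjoint $L$ preserves finitely presentable objects (Proposition~\ref{p:lfp-morphisms-characterisation}), and the finitely presentable objects of $\Gph$ are the finite graphs. Next, from the assumed factorisation we have $F L \cong L^D$, whence $F L\mathbf{2} \cong L^D\mathbf{2} = \gr{2}$, the graph $0 \to 1$. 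Thus it suffices to establish the purely combinatorial statement that a \emph{finite} graph $G$ with $F G \cong \gr{2}$ is isomorphic to $H_n$ for some $n > 0$, and then apply it to $G = L\mathbf{2}$.

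For the combinatorial statement I would use the explicit description of $F$ recalled in the previous claim: $F G$ is the quotient of $G = (V,E)$ by the equivalence relation $\simeq$ generated by ``sharing a common predecessor'', with an edge $[u] \to [v]$ in $F G$ precisely when $E(u',v')$ for some $u' \simeq u$ and $v' \simeq v$. Since $\gr{2}$ has exactly two vertices and a single edge, $\simeq$ must have exactly two classes $C_0, C_1$ (corresponding to $0$ and $1$), with the only edge of $F G$ being $C_0 \to C_1$. The key steps are then: (i) every edge of $G$ runs from $C_0$ to $C_1$, because any edge $E(u,v)$ induces the edge $[u] \to [v]$ of $F G$, forcing $[u] = C_0$ and $[v] = C_1$; (ii) $C_0$ is a singleton $\{r\}$, since by (i) no vertex of $C_0$ is the target of an edge, so no two vertices of $C_0$ are ever related by the generating relation, and each is its own $\simeq$-class; (iii) every vertex of $C_1$ is a successor of $r$, and at least one exists, because $r$ is the only possible common predecessor, so the generating relation on $C_1$ relates exactly the successors of $r$, and for $C_1$ to be a single nonempty class these must exhaust $C_1$. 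Putting (i)--(iii) together, $G$ has vertex set $\{r\} \cup C_1$ with $C_1 = \{v_1, \dots, v_n\}$ and edges exactly $r \to v_i$; this is $H_n$, and finiteness of $G$ gives $n < \omega$, while the edge $C_0 \to C_1$ gives $n > 0$.

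The main obstacle is the structural analysis in step (iii): one has to argue carefully that no vertex of $C_1$ can fail to be a successor of $r$ (such a vertex has no predecessor, hence is isolated from the generating relation and forms its own $\simeq$-class, contradicting that $C_1$ is a single class containing a successor of $r$), and that no spurious edges occur, namely self-loops at $r$, edges within $C_1$, or edges from $C_1$ back to $r$, each being excluded by (i). This is elementary graph combinatorics, so once the description of $F$ and the two reductions (finiteness of $L\mathbf{2}$ and $F L\mathbf{2} \cong \gr{2}$) are in place, the conclusion $L\mathbf{2} \cong H_n$ with $n > 0$ follows. Note that finiteness is used only to guarantee $n < \omega$: without it the same analysis would still permit the star with infinitely many leaves.
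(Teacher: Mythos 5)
Your proof is correct, and it is genuinely leaner than the paper's. The paper's own argument uses both halves of the assumed factorisation: it first invokes $R U \cong R^D$ to compute $\Gph\funct{L\mathbf{2},\gr{2}} \cong \A\funct{\mathbf{2},\mathbf{2}+\mathbf{1}}$, a singleton, then partitions the vertices of $L\mathbf{2}$ into the fibres $V_0,V_1$ of the unique morphism $h\colon L\mathbf{2}\to\gr{2}$ and exploits the \emph{uniqueness} of $h$ to show that every vertex of $V_0$ has an outgoing edge and every vertex of $V_1$ an incoming one; only at the end does it use $F L\mathbf{2}\cong\gr{2}$ (via the quotient description of $F$) to force $V_0$ to be a nonempty singleton. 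You instead bypass the hom-set computation altogether and prove the standalone combinatorial lemma that a finite graph $G$ with $F G\cong\gr{2}$ is isomorphic to $H_n$ for some $n>0$, obtaining the bipartition from the $\simeq$-classes rather than from $h$. Your steps (i)--(iii) are sound; in particular (iii) is not circular, since the edge $C_0\to C_1$ of $FG$ already provides, by the definition of edges in the quotient, an actual edge of $G$ from $r$ into $C_1$. Both proofs get finiteness of $L\mathbf{2}$ from Proposition~\ref{p:lfp-morphisms-characterisation}. What your route buys: the claim is seen to depend only on the $F L\cong L^D$ half of the factorisation hypothesis (plus finite accessibility), so it would survive even if $R U \cong R^D$ were dropped, and the lemma about $F$ is reusable. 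What the paper's route buys: the source/target bipartition comes for free from $h$, with no analysis of the generated equivalence relation needed to obtain it, and the proof matches the style of the adjacent claims (e.g.\ $L\mathbf{1}\cong\gr{1}$), which are likewise hom-set computations. Your closing remark that finiteness is needed only to ensure $n<\omega$ is also apt---the ambient contradiction $H_n\not\cong H_{2n}$ would indeed fail for an infinite star.
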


\begin{proof}[Proof of the Claim] 
First, note that $\Gph\funct{L\mathbf{2},\gr{2}}$ is a singleton.
This follows from the fact that $R\gr{2} \cong R^D\gr{2} \cong \mathbf{2}+\mathbf{1}$,
so that
\[
\begin{array}{*{5}{c}}
  \Gph\funct{L\mathbf{2},\gr{2}}
& \cong
& \A\funct{\mathbf{2},\mathbf{2}+\mathbf{1}}
& \cong
& \A\funct{\mathbf{2},\mathbf{2}}
\end{array}
\]

\noindent
while the only forest morphism $\mathbf{2} \to \mathbf{2}$ is the identity.

Moreover, $L\mathbf{2}$ is a finite graph since $L \colon \A \to \Gph$
preserves finitely presentable objects
(see Proposition~\ref{p:lfp-morphisms-characterisation}).

Write $L\mathbf{2} = (V,E)$ and
let $h \colon L\mathbf{2} \to \gr{2}$ be the unique graph morphism.
We have $V = V_0 + V_1$ where $V_i = h^{-1}(\{i\})$.
Note that there are no edges between any two vertices of $V_i$.
Further, each $u \in V_{0}$ must have and edge to a $v \in V_1$,
since otherwise we would get a distinct morphism $h' \colon L\mathbf{2} \to \gr{2}$
by taking $u$ to $1$.
Similarly, each $v \in V_1$ must have an edge from some $u \in V_0$.
In particular, $V_0$ is non-empty if, and only if, $V_1$ is non-empty.

It remains to show that $V_0$ is a singleton.
Note that $FL\mathbf{2} \cong \gr{2}$.
Hence, it follows from the above description of $F$ that
all vertices in $V_0$ must be identified by $F$.
But since no $u \in V_0$ is the target of an edge,
this implies that $V_0$ has at most one element.
Assume toward a contradiction that $V_0= \emptyset$.
Then (by the above), we also get $V_1 = \emptyset$.
But then $L\mathbf{2}$ is $\gr{0}$, the initial object of $\A$, $\Gph$ and $\cat D$.
Since $F$ is a left adjoint, this yields 
$FL\mathbf{2} \cong \gr{0}$,
which contradicts $F L \mathbf{2} \cong L^D \mathbf{2} \cong \gr{2}$.
\end{proof}

\begin{claim*}
Let $k,\ell \colon \gr 1 \to H_n + H_n$ be the $\Gph$-morphisms such that
$k$ takes $0$ to the left root of $H_n + H_n$,
while $\ell$ takes $0$ to the right root of $H_n + H_n$.
Then the coequaliser of $k$ and $\ell$ in $\Gph$ is $H_{2n}$.
\end{claim*}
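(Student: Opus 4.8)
The plan is to compute both coequalisers explicitly at the level of underlying sets and edge relations, and verify that the universal property singles out $H_{2n}$. Recall that $H_n$ is the graph with a single root $0$ and $n$ leaves $1,\dots,n$, each receiving a single edge from $0$. So $H_n + H_n$, computed in $\Gph$ as a disjoint union (cf.\ \cite[Remark~1.5(3)]{ar94book}), has two roots---call them $0_L$ and $0_R$---each with $n$ outgoing edges to its own set of $n$ leaves. The maps $k,\ell\colon \gr 1 \rightrightarrows H_n + H_n$ send the single vertex $0$ of $\gr 1$ to $0_L$ and $0_R$ respectively, and neither map touches any edges (since $\gr 1$ has none).

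First I would describe the coequaliser $q\colon H_n + H_n \to C$ in $\Gph$ as the quotient by the least equivalence relation identifying $k(0)=0_L$ with $\ell(0)=0_R$. Since $\gr 1$ carries no edges and the only identification forced is $0_L \sim 0_R$, the quotient $C$ merges the two roots into a single vertex while leaving the $2n$ leaves untouched and keeping all $2n$ edges. Thus $C$ has one root with $2n$ outgoing edges to $2n$ distinct leaves, which is precisely $H_{2n}$. I would then verify directly that the evident map $q\colon H_n+H_n \epi H_{2n}$ coequalises $k$ and $\ell$, i.e.\ $q\comp k = q\comp \ell$, which holds because $q$ identifies the two roots and both $k,\ell$ have image contained in $\{0_L,0_R\}$.

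Next I would check the universal property: given any $\Gph$-morphism $h\colon H_n+H_n \to G$ with $h\comp k = h\comp \ell$, the condition forces $h(0_L)=h(0_R)$, and since $q$ is the quotient identifying exactly these two vertices (and nothing else), $h$ factors uniquely through $q$. Here it is essential that coequalisers in $\Gph$ are computed as in the underlying sets with the induced edge relation---this is the standard description of colimits in a category of relational structures, so no identifications beyond $0_L\sim 0_R$ are generated. The routine verification that the induced map on edges is well-defined and that the factorisation is a graph morphism completes the argument.

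The main point to be careful about---though not a genuine obstacle---is to confirm that no \emph{additional} edges or vertex-identifications are created when passing to the coequaliser in $\Gph$, as opposed to in $\cat D$. Because $\gph$ imposes no functionality constraint, the quotient is simply the set-level coequaliser equipped with the image edge relation, so the computation is clean. (This contrasts with the situation inside $\cat D$, where the determinism axiom would force the two leaf-sets emanating from the merged root to collapse, yielding $H_n$ rather than $H_{2n}$; it is exactly this discrepancy that drives the contradiction in Example~\ref{ex:fact:det}.) Thus the claim follows, and $H_n \not\cong H_{2n}$ for $n>0$ delivers the desired contradiction.
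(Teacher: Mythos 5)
Your proof is correct and follows essentially the same route as the paper's: exhibit the map $H_n + H_n \epi H_{2n}$ identifying the two roots (and injective on leaves), observe it coequalises $k$ and $\ell$, and verify the universal property directly from the single forced identification $h(0_L)=h(0_R)$. Your extra remarks on how colimits in $\Gph$ are computed set-theoretically, and the contrast with $\cat D$, are accurate but do not change the substance of the argument.
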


\begin{proof}[Proof of the Claim] 
Let $e \colon H_n + H_n \epi H_{2n}$
be the graph morphism
which identifies the two roots of $H_n + H_n$ with the root of $H_{2n}$,
and which is injective on the leaves of $H_n + H_n$.
We are going to show that $e$ is the coequaliser of $k$ and $\ell$ in $\Gph$.

First, $e$ obviously coequalises $k$ and $\ell$.
Moreover, if $h \colon H_n + H_n \to G$
coequalises $k$ and $\ell$, we have $h(k(0)) = h(\ell(0))$,
and this directly yields the existence of a unique
$\ladj h \colon H_{2n} \to G$ such that $\ladj h \comp e = h$.
\end{proof}

This concludes the proof of Example~\ref{ex:fact:det}.
%%%%%%%%%%%%%%%%%%%%%%%%%%%%%%%%%%%%%%%%%%%%%%%%%%%%%%%%%%%%%%%%%%%%%%%%%%%
\end{fullproof}
%%%%%%%%%%%%%%%%%%%%%%%%%%%%%%%%%%%%%%%%%%%%%%%%%%%%%%%%%%%%%%%%%%%%%%%%%%%

%%%%%%%%%%%%%%%%%%%%%%%%%%%%%%%%%%%%%%%%%%%%%%%%%%%%%%%%%%%%%%%%%%%%%%%%%%%
\section{Conclusion}
\label{sec:conc}
%%%%%%%%%%%%%%%%%%%%%%%%%%%%%%%%%%%%%%%%%%%%%%%%%%%%%%%%%%%%%%%%%%%%%%%%%%%

In this work, we have identified a tight upper bound (namely, $\Lang_{\infty}$) for the expressive power of back-and-forth equivalence relations induced by a finitely accessible wooded adjunction satisfying mild assumptions concerning the definability (more generally, detection) of path embeddings.
These assumptions are met in a broad range of examples.
Besides Ehrenfeucht-Fraïssé games, our results apply to pebble games, and to modal and hybrid bisimulation games
\cite{as21jlc,ar21arboreal,am22mfcs}.
We expect that the same methodology will extend to
e.g.\ guarded fragments~\cite{am21lics}.
On the other hand, our results show that game
comonads for $\MSO$ cannot lead to finitely accessible wooded adjunctions,
at least with the obvious choice of path embeddings (see Example~\ref{ex:mso}).

This suggests that in order to handle stronger logics,
one may need to consider
$\kappa$-accessible wooded adjunctions
for uncountable regular cardinals $\kappa$.
We believe that this is a natural avenue for future investigation.
In particular, games for generalised Lindstr\"om quantifiers
\cite{caicedo80mlla, ocd2021} may 
be instrumental in developing a systematic study.

%%%%%%%%%%%%%%%%%%%%%%%%%%%%%%%%%%%%%%%%%%%%%%%%%%%%%%%%%%%%%%%%%%%%%%%%%%%
\subsection*{Acknowledgment}
%%%%%%%%%%%%%%%%%%%%%%%%%%%%%%%%%%%%%%%%%%%%%%%%%%%%%%%%%%%%%%%%%%%%%%%%%%%

Research supported by the EPSRC grant EP/V040944/1
and the ANR project ReCiProg (ANR-21-CE48-0019).

%\clearpage
\bibliographystyle{amsplain}
\bibliography{IEEEabrv,bibliographie}

\opt{fullproof}{\clearpage}
\appendix
\opt{fullproof}{%%%%%%%%%%%%%%%%%%%%%%%%%%%%%%%%%%%%%%%%%%%%%%%%%%%%%%%%%%%%%%%%%%%%%%%%%%%
\section{Lfp categories as categories of structures}
\label{s:lfp-cats-of-substructures}
%%%%%%%%%%%%%%%%%%%%%%%%%%%%%%%%%%%%%%%%%%%%%%%%%%%%%%%%%%%%%%%%%%%%%%%%%%%

In Remark~\ref{rem:path:structemb}\ref{item:path:structemb:rabin}
we recalled that, given a full subcategory~$\E$ of $\Struct(\Sig)$
that is reflective and closed under filtered colimits in $\Struct(\Sig)$,
there need not be a cartesian theory~$\theory$ in~$\Sig$
such that $\E \cong \Mod(\theory)$.
This can be seen with a slight adaptation of an example given in~\cite{rabin62ascfm},
which we detail here; see also~\cite{volger79mz,har01ctgdc}.

Let $\cat K$ be a class of $\Sig$-structures.
Following~\cite[Definition 1, \S 2]{rabin62ascfm},
we say that $\cat K$ has the \emph{intersection property}
if given $M \in \cat K$ and given
a family $(N_i \mid i \in I)$ of structures $N_i \in \cat K$
such that each $N_i$ is a substructure of $M$,
we have $\bigcap_{i \in I} N_i \in \cat K$.
The above intersection of structures $\bigcap_{i \in I} N_i$
is the wide pullback in $\Struct(\Sig)$ of the substructure embeddings
$N_i \emb M$.
In particular, given a cartesian theory $\theory$ in~$\Sig$,
since $\Mod(\theory)$ is closed under limits in $\Struct(\Sig)$
(Lemma~\ref{lem:coste:mod:filtcolim}),
we get that $\Mod(\theory)$, seen as a class of $\Sig$-structures,
has the intersection property.%
\footnote{Recall that 
substructure embeddings are regular monomorphisms in $\Struct(\Sig)$,
cf.\ Example~\ref{ex:prelim:fact:struct}.}

%%%%%%%%%%%%%%%%%%%%%%%%%%%%%%%%%%%%%%%%%%%%%%%%%%%%%%%%%%%%%%%%%%%%%%%%%%%
\begin{example}
%%%%%%%%%%%%%%%%%%%%%%%%%%%%%%%%%%%%%%%%%%%%%%%%%%%%%%%%%%%%%%%%%%%%%%%%%%%
The following is adapted from~\cite{rabin62ascfm,volger79mz}.
Let $\sig$ be the signature with one
$(n+1)$-ary relation symbol $P_n$ for each $n \in \NN$.
The following sentences were devised in~\cite[\S 8]{rabin62ascfm}:
define, for each $n \in \NN$,
\[
\begin{array}{r c l}
  \varpi_n
& \deq
& (\forall x)(\forall y_1,\dots,y_n,y_{n+1})
  \big(
  P_{n+1}(x,y_1,\dots,y_n,y_{n+1})
  ~\limp~
  P_n(x,y_1,\dots,y_n)
  \big)
\\

  \varphi_n
& \deq
& (\forall x)
  (\forall \vec y,y_{n+1})
  (\forall \vec z,z_{n+1})
  \Big(
  \big(
  P_{n+1}(x,\vec y,y_{n+1})
  \land
  P_{n+1}(x,\vec z,z_{n+1})
  \big)
  ~\limp~
  \vec y \Eq \vec z
  \Big)
\\

  \gamma_n
& \deq
& (\forall x)(\exists y_1,\dots,y_n) P_n(x,y_1,\dots,y_n)
\end{array}
\]

\noindent Let $\E$ be the full subcategory of $\Struct(\sig)$ on the models
of all the above sentences, where~$n$ ranges over $\NN$.
It has been observed in~\cite{volger79mz}
that $\E$ is closed under finite limits in $\Struct(\sig)$.
By a result of~\cite{volger79mz},
this implies that $\E$ is closed in $\Struct(\sig)$
under limits and filtered colimits,
from which we get the following
(see~\cite[Theorem 5.20 and Corollary 5.21]{ar94book}).

%%%%%%%%%%%%%%%%%%%%%%%%%%%%%%%%%%%%%%%%%%%%%%%%%%%%%%%%%%%%%%%%%%%%%%%%%%%
\begin{claim*}
%%%%%%%%%%%%%%%%%%%%%%%%%%%%%%%%%%%%%%%%%%%%%%%%%%%%%%%%%%%%%%%%%%%%%%%%%%%
$\E$ is reflective and closed under filtered colimits in $\Struct(\sig)$.
%%%%%%%%%%%%%%%%%%%%%%%%%%%%%%%%%%%%%%%%%%%%%%%%%%%%%%%%%%%%%%%%%%%%%%%%%%%
\end{claim*}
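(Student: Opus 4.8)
The plan is to reduce the statement to a single verification — that $\E$ is closed under finite limits in $\Struct(\sig)$ — after which both desired properties follow from the general theory already cited in the excerpt. Since each of $\varpi_n$, $\varphi_n$, $\gamma_n$ is a finitary first-order sentence, $\E$ is the class of models of a countable first-order theory in $\sig$, namely $\E = \Mod(\{\varpi_n,\varphi_n,\gamma_n \mid n\in\NN\})$. Closure under filtered colimits is then immediate from the fact (recorded in the footnote to Lemma~\ref{lem:coste:mod:filtcolim}, cf.\ \cite[Theorem~5.20]{ar94book}) that the model class of any first-order theory is closed under filtered colimits in $\Struct(\sig)$; it can in any case be checked by hand here, since each $\varpi_n,\varphi_n$ is Horn and each $\gamma_n$ is of the positive $\forall\exists$ form, and such sentences transfer to filtered colimits by a routine chase through representatives. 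Granting closure under finite limits, \cite[Theorem~5.23]{ar94book} (the result attributed to~\cite{volger79mz}) upgrades this to closure under \emph{all} limits, and \cite[Corollary~5.21]{ar94book} then yields that $\E$ is reflective in $\Struct(\sig)$. Thus everything hinges on finite-limit closure.

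To establish closure under finite limits I would treat the terminal object, finite products and equalizers separately, recalling that limits in $\Struct(\sig)$ are computed in $\Set$ with each $P_m$ interpreted coordinatewise. The terminal object and finite products are easy: $\varpi_n$ and $\varphi_n$ are Horn, hence preserved by products, while $\gamma_n$ survives because a witness in a product may be assembled coordinatewise from witnesses in the factors. The delicate case is the equalizer $E = \{a \in M \mid f(a) = g(a)\}$, regarded as a substructure of $M$, for a parallel pair $f,g \colon M \rightrightarrows M'$ in $\E$. The universal sentences $\varpi_n$ and $\varphi_n$ descend to $E$ automatically, as they are preserved under substructures.

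The one point requiring care — and the step I expect to be the main obstacle — is showing $E \models \gamma_n$. A bare $P_n$-witness for $x\in E$ supplied by $\gamma_n$ in $M$ need not lie in $E$, so one must produce a witness whose coordinates are forced to be identified by $f$ and $g$. I would do this by invoking $\gamma_{n+1}$ rather than $\gamma_n$: given $x\in E$, choose $\vec y, y_{n+1}$ in $M$ with $P_{n+1}(x,\vec y,y_{n+1})$, so that $P_n(x,\vec y)$ holds by $\varpi_n$. Since $f(x)=g(x)$, applying $f$ and $g$ to this tuple yields two $P_{n+1}$-witnesses over one and the same element of $M'$; the functionality sentence $\varphi_n$, holding in $M'\in\E$, then forces their first $n$ coordinates to agree, i.e.\ $f(\vec y)=g(\vec y)$. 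Hence $\vec y\in E$, and as $E$ carries the restricted relations, $P_n^{E}(x,\vec y)$ holds, giving $E\models\gamma_n$ at $x$. This interlocking of $\gamma_{n+1}$, $\varpi_n$ and $\varphi_n$ is precisely what makes the equalizer closed, and is presumably the content of Volger's observation. It is also what makes the example worthwhile: the resulting $\E$ is reflective and closed under filtered colimits (hence lfp, by~\cite[Corollary~1.47]{ar94book}), yet is not equivalent to $\Mod(\theory)$ for any cartesian theory $\theory$ in the signature $\sig$.
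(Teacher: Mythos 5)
Your proof is correct and takes essentially the same route as the paper: reduce everything to closure under finite limits via \cite[Corollary 5.21]{ar94book}, dispose of the terminal object and products easily, and handle the equaliser by the same interlocking of $\gamma_{n+1}$, $\varpi_n$ and $\varphi_n$ — taking a $P_{n+1}$-witness in $M$ and using the functionality sentence in the codomain to force $f(\vec y)=g(\vec y)$, hence $\vec y\in E$ — which is exactly the paper's argument. The only (cosmetic) differences are that the paper justifies product closure by observing that all three families of sentences are regular and citing \cite[Lemma 9.1.4]{hodges93book}, where you use Horn preservation plus coordinatewise witnesses, and that you route the conclusion through \cite[Theorems 5.20 and 5.23]{ar94book} explicitly rather than through Corollary 5.21 alone.
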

%%%%%%%%%%%%%%%%%%%%%%%%%%%%%%%%%%%%%%%%%%%%%%%%%%%%%%%%%%%%%%%%%%%%%%%%%%%

%%%%%%%%%%%%%%%%%%%%%%%%%%%%%%%%%%%%%%%%%%%%%%%%%%%%%%%%%%%%%%%%%%%%%%%%%%%
\begin{fullproof}[Proof of the Claim] 
%%%%%%%%%%%%%%%%%%%%%%%%%%%%%%%%%%%%%%%%%%%%%%%%%%%%%%%%%%%%%%%%%%%%%%%%%%%
This follows from~\cite[Corollary 5.21]{ar94book}, which
only requires us to show that $\E$ is closed under finite limits
in $\Struct(\sig)$.
The sentences $\varpi_n$, $\varphi_n$ and $\gamma_n$
are regular, hence it follows from~\cite[Lemma 9.1.4]{hodges93book}
that $\E$ is closed in $\Struct(\sig)$
under products (see also~\cite[D2.4.3]{johnstone02book}).
It remains to show that $\E$ is closed
under equalisers in $\Struct(\sig)$.

We rely on the description of equalisers in $\Struct(\sig)$
given in~\cite[Remark 5.1(2)]{ar94book}.
Namely, 
the equaliser in $\Struct(\sig)$ of parallel arrows $f,g \colon M \to N$ in $\E$
is the substructure $E$ of $M$ on those $a \in M$ such that $f(a) = g(a)$.
We have to show that $E \models \varpi_n \land \varphi_n \land \gamma_n$
for all $n \in \NN$.
Note that $\varpi_n$ and $\varphi_n$ are Horn sentences, so 
their classes of models are closed under substructures
(this follows from \cite[Theorem 5.12]{ar94book}).
Hence
$E \models \varpi_n \land \varphi_n$ for all $n \in \NN$.
It remains to show that $E \models \gamma_n$ for all $n \in \NN$.
Fix $n \in \NN$ and let $a \in E$.
Since~$M$ is a model of $\gamma_{n+1}$ and $\varpi_n$,
there are $\vec b, b \in M$ such that $P_{n+1}(a,\vec b,b)$
and $P_{n}(a,\vec b)$ hold in~$M$.
Hence $P_{n+1}(f(a),f(\vec b),f(b))$
and $P_{n+1}(g(a),g(\vec b),g(b))$
both hold in $N$.
But we have $f(a) = g(a)$ because $a \in E$,
and so $f(\vec b)=g(\vec b)$ since $N\models \varphi_n$.
It follows that $\vec b \in E$, and we are done
because $E$ is a substructure of $M$.
%%%%%%%%%%%%%%%%%%%%%%%%%%%%%%%%%%%%%%%%%%%%%%%%%%%%%%%%%%%%%%%%%%%%%%%%%%%
\end{fullproof}
%%%%%%%%%%%%%%%%%%%%%%%%%%%%%%%%%%%%%%%%%%%%%%%%%%%%%%%%%%%%%%%%%%%%%%%%%%%

In particular $\E$, seen as a class of $\sig$-structures, has the intersection property.
However, by \cite[Theorem 11]{rabin62ascfm}
there is no set $\Psi$ of sentences in $\sig$
such that $\E$ is the class of models of $\Psi$,
and such that for each $\psi \in \Psi$, the class of models of $\psi$
has the intersection property.
It is then not difficult to derive the following.

%%%%%%%%%%%%%%%%%%%%%%%%%%%%%%%%%%%%%%%%%%%%%%%%%%%%%%%%%%%%%%%%%%%%%%%%%%%
\begin{claim*}
%%%%%%%%%%%%%%%%%%%%%%%%%%%%%%%%%%%%%%%%%%%%%%%%%%%%%%%%%%%%%%%%%%%%%%%%%%%
There is no cartesian theory $\theory$ in $\sig$ such that $\E = \Mod(\theory)$.
%%%%%%%%%%%%%%%%%%%%%%%%%%%%%%%%%%%%%%%%%%%%%%%%%%%%%%%%%%%%%%%%%%%%%%%%%%%
\end{claim*}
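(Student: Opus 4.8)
The plan is to show that $\E$ cannot be of the form $\Mod(\theory)$ for any cartesian theory $\theory$ in $\sig$ by contradiction, using the incompatibility between the intersection property established above and Rabin's negative result \cite[Theorem 11]{rabin62ascfm}. So suppose, towards a contradiction, that there is a cartesian theory $\theory$ in $\sig$ with $\E = \Mod(\theory)$. The key observation to exploit is that the conclusion of Rabin's theorem is not merely that $\E$ fails to have the intersection property globally (we have just proved that it \emph{does} have it), but rather the finer statement that $\E$ cannot be axiomatised by any set $\Psi$ of sentences \emph{each of whose individual model classes has the intersection property}. The strategy is therefore to extract from a cartesian theory $\theory$ such a set $\Psi$, sentence by sentence, thereby contradicting Rabin.

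First I would pass from the sequent presentation of $\theory$ to a set of sentences. Each axiom of $\theory$ is a cartesian sequent $\psi \thesis_\Env \varphi$, which is logically equivalent to the universally quantified sentence $(\forall \Env)(\psi \limp \varphi)$; let $\Psi$ be the set of these sentences over all axioms of $\theory$. Then $M \models \Psi$ iff $M$ is a model of every sequent of $\theory$, so $\Mod(\Psi) = \Mod(\theory) = \E$ as classes of $\sig$-structures. The crux is then to verify that for each individual $\chi = (\forall \Env)(\psi \limp \varphi) \in \Psi$, the class of models $\Mod(\chi)$ has the intersection property. This is where I would use the cartesian structure of $\psi$ and $\varphi$: both are $\land\exists$-formulae in which the existential quantifiers are provably unique, so they are in particular \emph{regular} (positive primitive, built from atomics by finite conjunction and existential quantification), and each single cartesian sequent defines a class closed under limits in $\Struct(\sig)$ by the same reasoning as in Lemma~\ref{lem:coste:mod:filtcolim} applied to the one-axiom theory $\{\psi \thesis_\Env \varphi\}$ (which is itself cartesian). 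Since the intersection $\bigcap_{i} N_i$ of substructures of a fixed $M$ is a wide pullback in $\Struct(\sig)$, closure under limits yields the intersection property for each $\Mod(\chi)$ separately.

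The main obstacle, and the step deserving the most care, is precisely this verification that \emph{each single} sentence $\chi \in \Psi$ cuts out a class with the intersection property, rather than only the intersection $\bigcap \Psi$ doing so. For a bare conjunction this would be routine from the Horn-style closure results, but here $\psi$ and $\varphi$ carry existential quantifiers, so I must check that a regular (equivalently, positive-primitive) implication is preserved under the relevant wide pullbacks of substructure embeddings. The cleanest route is to note that $\{\psi \thesis_\Env \varphi\}$ is a cartesian theory in $\sig$ on its own, apply Lemma~\ref{lem:coste:mod:filtcolim} to conclude that its model category is closed under limits in $\Struct(\sig)$, and then invoke the fact (recorded just before the statement) that closure under the limits computing intersections is exactly the intersection property. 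Once this is in place, $\Psi$ witnesses an axiomatisation of $\E$ by sentences each with the intersection property, contradicting \cite[Theorem 11]{rabin62ascfm}; hence no such cartesian $\theory$ in $\sig$ exists, completing the proof.
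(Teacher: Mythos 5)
Your overall strategy---produce a set $\Psi$ of sentences axiomatising $\E$ such that each individual sentence has the intersection property, then contradict \cite[Theorem 11]{rabin62ascfm}---is the same as the paper's, but the step you yourself single out as the crux is exactly where the argument breaks down. The claim that ``$\{\psi \thesis_\Env \varphi\}$ is a cartesian theory in $\sig$ on its own'' is false in general. Cartesianness of a formula, and hence of a sequent, is \emph{relative} to a theory (Definition~\ref{def:prelim:coste:th}): an existential $(\exists x)\varphi$ is cartesian over $\theory$ only if $\theory$ proves the corresponding uniqueness sequent, and an axiom of a cartesian theory is only required to be cartesian over its \emph{predecessors} in the well-ordering. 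Stripped of those predecessors, the singleton need not be a cartesian theory, so Lemma~\ref{lem:coste:mod:filtcolim} does not apply to it. Worse, the intermediate claim you need is outright false, not merely unjustified: consider the cartesian theory whose axioms are $R(x,y)\land R(x,z)\thesis_{x,y,z} y\Eq z$ (first) and $\True\thesis_x(\exists y)R(x,y)$ (second, cartesian over the first). The singleton consisting of the second axiom is not a cartesian theory, and its class of models---structures in which every element has an $R$-successor---fails the intersection property: in the model $\{a,b,c\}$ with $R=\{(a,b),(a,c),(b,b),(c,c)\}$, the submodels on $\{a,b\}$ and $\{a,c\}$ intersect in $\{a\}$ with empty $R$. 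This is precisely the shape of Rabin's sentences $\gamma_n$, which is why this obstacle cannot be argued away; your fallback remark that $\psi,\varphi$ are ``regular'' does not help either, since regular sequents define classes closed under products and filtered colimits but not under the wide pullbacks computing intersections.

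The repair is the paper's actual argument: instead of the singleton, attach to each axiom $(\psi_1 \thesis_{\vec x} \psi_2)$ of $\theory$ a \emph{finite} subtheory $\theory_{(\psi_1\thesis_{\vec x}\psi_2)}\sle\theory$ containing it that is itself cartesian. Such a finite subtheory exists because proofs are finite: the uniqueness sequents witnessing cartesianness of a given axiom use only finitely many earlier axioms, each of which in turn needs only finitely many still earlier ones, and the recursion terminates by well-foundedness of the well-ordering. Each such finite cartesian subtheory is axiomatised by a single sentence whose class of models, being $\Mod$ of a cartesian theory, is closed under limits in $\Struct(\sig)$ and hence has the intersection property; the set of these sentences still axiomatises $\E$, and this is what contradicts \cite[Theorem 11]{rabin62ascfm}.
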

%%%%%%%%%%%%%%%%%%%%%%%%%%%%%%%%%%%%%%%%%%%%%%%%%%%%%%%%%%%%%%%%%%%%%%%%%%%

%%%%%%%%%%%%%%%%%%%%%%%%%%%%%%%%%%%%%%%%%%%%%%%%%%%%%%%%%%%%%%%%%%%%%%%%%%%
\begin{fullproof}[Proof of the Claim] 
%%%%%%%%%%%%%%%%%%%%%%%%%%%%%%%%%%%%%%%%%%%%%%%%%%%%%%%%%%%%%%%%%%%%%%%%%%%
Assume toward a contradiction that $\E = \Mod(\theory)$
for some cartesian theory $\theory$ in $\sig$.
By Definition~\ref{def:prelim:coste:th},
for each sequent $(\psi_1 \thesis_{\vec x} \psi_2)$ of $\theory$, 
there is a \emph{finite} cartesian
$\theory_{(\psi_1 \thesis_{\vec x} \psi_2)} \sle \theory$
with
$(\psi_1 \thesis_{\vec x} \psi_2) \in \theory_{(\psi_1 \thesis_{\vec x} \psi_2)}$.
Then, for each
$(\psi_1 \thesis_{\vec x} \psi_2) \in \theory$, 
there is a sentence $\psi$ in $\sig$ whose class of models is exactly
$\Mod(\theory_{(\psi_1 \thesis_{\vec x} \psi_2)})$,
and thus has the intersection property.
Hence, $\E$ is the class of models of a set of sentences $\Psi$,
such that for each $\psi \in \Psi$, the class of models of $\psi$
has the intersection property.
This contradicts \cite[Theorem 11]{rabin62ascfm}.
%%%%%%%%%%%%%%%%%%%%%%%%%%%%%%%%%%%%%%%%%%%%%%%%%%%%%%%%%%%%%%%%%%%%%%%%%%%
\end{fullproof}
%%%%%%%%%%%%%%%%%%%%%%%%%%%%%%%%%%%%%%%%%%%%%%%%%%%%%%%%%%%%%%%%%%%%%%%%%%%
%%%%%%%%%%%%%%%%%%%%%%%%%%%%%%%%%%%%%%%%%%%%%%%%%%%%%%%%%%%%%%%%%%%%%%%%%%%
\end{example}
%%%%%%%%%%%%%%%%%%%%%%%%%%%%%%%%%%%%%%%%%%%%%%%%%%%%%%%%%%%%%%%%%%%%%%%%%%%

}
\opt{fullproof,draft}{%%%%%%%%%%%%%%%%%%%%%%%%%%%%%%%%%%%%%%%%%%%%%%%%%%%%%%%%%%%%%%%%%%%%%%%%%%%
\section{Additional material for~\S\ref{sec:prelim} (\nameref{sec:prelim})}
\label{sec:app:prelim}
%%%%%%%%%%%%%%%%%%%%%%%%%%%%%%%%%%%%%%%%%%%%%%%%%%%%%%%%%%%%%%%%%%%%%%%%%%%

%%%%%%%%%%%%%%%%%%%%%%%%%%%%%%%%%%%%%%%%%%%%%%%%%%%%%%%%%%%%%%%%%%%%%%%%%%%
\subsection{Structures and homomorphisms}
%%%%%%%%%%%%%%%%%%%%%%%%%%%%%%%%%%%%%%%%%%%%%%%%%%%%%%%%%%%%%%%%%%%%%%%%%%%

A \emph{many-sorted} signature $\Sig$
consists of 
a set $\Sort(\Sig)$ of \emph{sorts}
together with collections
\begin{align*}
  \Fun(\Sig)
& =
  \mathord{\bigcup}_{(\sort_1,\dots,\sort_n;\sort) \in \Sort(\Sig)^{n+1}}
  \Fun(\Sig)(\sort_1,\dots,\sort_n;\sort)
\\
  \Rel(\Sig)
& =
  \mathord{\bigcup}_{(\sort_1,\dots,\sort_n) \in \Sort(\Sig)^{n}}
  \Rel(\Sig)(\sort_1,\dots,\sort_n)
\end{align*}

\noindent
of, respectively,
\emph{function}
and \emph{relation}
symbols.
We write
\[f \colon \sort_1 \dots \sort_n \to \sort\]
if
$f \in \Fun(\Sig)(\sort_1,\dots,\sort_n;\sort)$,
and
\[R \into \sort_1 \dots \sort_n\]
if
$R \in \Rel(\Sig)(\sort_1,\dots,\sort_n)$.

%%%%%%%%%%%%%%%%%%%%%%%%%%%%%%%%%%%%%%%%%%%%%%%%%%%%%%%%%%%%%%%%%%%%%%%%%%%
\begin{definition}[Structures]
%%%%%%%%%%%%%%%%%%%%%%%%%%%%%%%%%%%%%%%%%%%%%%%%%%%%%%%%%%%%%%%%%%%%%%%%%%%
Let $\Sig$ be a signature.
\begin{enumerate}[(1)]
\item
A \emph{$\Sig$-structure} $M$ 
is given by the following data:
\begin{enumerate}[(i)]
\item
for each sort 
$\sort \in \Sort(\Sig)$,
a set $M(\sort)$;

\item
for each function symbol
$f \colon \sort_1 \dots \sort_n \to \sort$,
a function
$f_M \colon M(\sort_1) \times \dots \times M(\sort_n) \to M(\sort)$;

\item
for each relation symbol
$R \into \sort_1 \dots \sort_n$,
a relation
$R_M \sle M(\sort_1) \times \dots \times M(\sort_n)$.
\end{enumerate}

\item
A \emph{homomorphism of $\Sig$-structures}
$h \colon M \to N$
is
a family
of functions
\[
(
h^\sort \colon M(\sort) \to N(\sort)
\mid
\sort \in \Sort(\Sig)
)
\]
such that
for every
$f \colon \sort_1 \dots \sort_n \to \sort$
and
$R\into \sort_1 \dots \sort_n$,
and for all
$(a_1,\dots,a_n)\in M(\sort_1)\times \cdots \times M(\sort_n)$,
we have
\[
\begin{array}{r c r}
  h^{\sort}( f_M(a_1,\dots,a_n) )
& =
& f_N( h^{\sort_1}(a_1), \dots, h^{\sort_n}(a_n) )
\\

  R_M(a_1,\dots,a_n)
& \imp
& R_N\left( h^{\sort_1}(a_1), \dots, h^{\sort_n}(a_n) \right)
\end{array}
\]

\item
We write $\Struct(\Sig)$ for the category
whose objects are $\Sig$-structures
and whose morphisms are homomorphisms.
\end{enumerate}
\end{definition}

%%%%%%%%%%%%%%%%%%%%%%%%%%%%%%%%%%%%%%%%%%%%%%%%%%%%%%%%%%%%%%%%%%%%%%%%%%%
\subsection{Infinitary first-order logic}
%%%%%%%%%%%%%%%%%%%%%%%%%%%%%%%%%%%%%%%%%%%%%%%%%%%%%%%%%%%%%%%%%%%%%%%%%%%

We shall consider different fragments of infinitary first-order logic.
In contrast with~\cite[\S 5]{ar94book} but in accordance
with~\cite[\S D1]{johnstone02book}
we work with formulae having at most finitely many free variables.
These variables are declared \emph{contexts}
$\Env$
of the form $x_1 : \sort_1,\dots,x_n : \sort_n$
where the variables $x_1,\dots,x_n$ are pairwise distinct.

The \emph{terms-in-context}
$\Env \sorting t : \sort$
in a signature $\Sig$
are inductively defined as follows:
\begin{itemize}
\item
$x_1:\sort_1,\dots,x_n:\sort_n \sorting x_i : \sort_i$
is a term-in-context for every $i =1,\dots,n$,

\item
$\Env \sorting f(t_1,\dots,t_n) : \sort$
is a term-in-context
if
$f \colon \sort_1 \dots \sort_n \to \sort$
and
if
$\Env \sorting t_i : \sort_i$
is a
term-in-context
for each $i = 1,\dots,n$.
\end{itemize}

We now turn to formulae.

%%%%%%%%%%%%%%%%%%%%%%%%%%%%%%%%%%%%%%%%%%%%%%%%%%%%%%%%%%%%%%%%%%%%%%%%%%%
\begin{definition}[Formulae]
\label{def:app:prelim:formulae}
%%%%%%%%%%%%%%%%%%%%%%%%%%%%%%%%%%%%%%%%%%%%%%%%%%%%%%%%%%%%%%%%%%%%%%%%%%%
Let $\Sig$ be a signature.
The many-sorted language
$\Lang_\infty(\Sig) = \Lang_{\infty} (= \Lang_{\infty,\omega})$
consists of the \emph{formulae-in-context}
$\Env \sorting \varphi$
inductively defined as follows.
\begin{itemize}
\item
$\Env \sorting (t \Eq_\sort u)$ is a formula-in-context
if
$\Env \sorting t : \sort$ and $\Env \sorting u : \sort$
are terms-in-context.

\item
$\Env \sorting R(t_1,\dots,t_n)$ is a formula-in-context
if
$R \into \sort_1 \dots \sort_n$
and
if
$\Env \sorting t_i : \sort_i$ 
is a term-in-context for each $i = 1,\dots,n$.

\item
$\Env \sorting \lnot \varphi$ is a formula-in-context
if
$\Env \sorting \varphi$
is a formula-in-context.

\item
$\Env \sorting (\forall x : \sort)\varphi$
and
$\Env \sorting (\exists x : \sort)\varphi$
are formulae-in-context
if
$\Env,x : \sort \sorting \varphi$
is a formula-in-context.

\item
$\Env \sorting \bigwedge_{i \in I}\varphi_i$
and
$\Env \sorting \bigvee_{i \in I}\varphi_i$
are formulae-in-context
if $I$ is a small set and if
$\Env \sorting \varphi_i$
is a formula-in-context for each $i \in I$.
\end{itemize}
\end{definition}

\noindent
Note that $\Lang_\infty$ is a large set.
Note also that by construction, formulae-in-context
have at most finitely many free variables.
Finally, note that in case $\Sig$ is purely relational
(i.e.\ $\Fun(\Sig) = \emptyset$),
the notion of term-in-context still properly
handles the contexts of atomic formulae:
e.g.\
$\Env \sorting P(x_1,\dots,x_n)$
requires all the $x_i$ to be declared in $\Env$ with
appropriate sort,
but $\Env$ may declare additional variables.

Given a cardinal $\kappa$, we write $\Lang_\kappa (=\Lang_{\kappa,\omega})$
for the set of formulae $\varphi \in \Lang_\infty$ whose conjunctions
and disjunctions have cardinality $<\kappa$.
The following is a trivial consequence of the definitions.

%%%%%%%%%%%%%%%%%%%%%%%%%%%%%%%%%%%%%%%%%%%%%%%%%%%%%%%%%%%%%%%%%%%%%%%%%%%
\begin{fact}[$\AC$]
\label{fact:app:prelim:infty-kappa}
%%%%%%%%%%%%%%%%%%%%%%%%%%%%%%%%%%%%%%%%%%%%%%%%%%%%%%%%%%%%%%%%%%%%%%%%%%%
If $\varphi \in \Lang_\infty$,
then $\varphi \in \Lang_\kappa$ for some $\kappa$.
\end{fact}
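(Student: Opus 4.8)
The plan is to prove Fact~\ref{fact:app:prelim:infty-kappa} by well-founded structural induction on the formation of $\varphi$ according to Definition~\ref{def:app:prelim:formulae}. This is legitimate because $\Lang_\infty$ is the least collection of formulae-in-context closed under the listed clauses, so a property holding in the base cases and preserved by each formation rule holds of every $\varphi \in \Lang_\infty$. The quantity I would track through the induction is a single cardinal $\kappa$ bounding the cardinalities of \emph{all} conjunctions and disjunctions occurring anywhere in $\varphi$ (that is, in $\varphi$ and in all of its subformulae); exhibiting such a $\kappa$ is exactly the assertion $\varphi \in \Lang_\kappa$.

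First I would dispose of the clauses that introduce no infinitary connective. For an atomic formula $t \Eq_\sort u$ or $R(t_1,\dots,t_n)$ there are no conjunctions or disjunctions at all, so the membership condition is vacuous and $\varphi \in \Lang_\omega \subseteq \Lang_\kappa$ for every $\kappa \geq \omega$; one may take $\kappa \deq \omega$. For $\lnot\varphi$, $(\forall x:\sort)\varphi$ and $(\exists x:\sort)\varphi$, the connectives occurring are precisely those of the immediate subformula, so any $\kappa$ witnessing $\varphi \in \Lang_\kappa$ (furnished by the induction hypothesis) witnesses the membership of the compound formula as well.

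The only genuine case is that of an infinitary connective, say $\bigwedge_{i \in I}\varphi_i$ (the case of $\bigvee_{i\in I}\varphi_i$ being identical). By the induction hypothesis, for each $i \in I$ there exists a cardinal with $\varphi_i \in \Lang_{\kappa_i}$; to make the family of witnesses canonical I would let $\kappa_i$ be the \emph{least} such cardinal, so that $\{\kappa_i \mid i \in I\}$ is a genuine set of cardinals, obtained by Replacement from the small indexing set $I$. Putting $\mu \deq \sup\{\kappa_i \mid i \in I\}$, which exists precisely because $I$ is small, and choosing any cardinal $\kappa$ with $\kappa > \mu$ and $\kappa > |I|$ (for instance $\kappa \deq (\mu + |I|)^+$), one has $\varphi_i \in \Lang_{\kappa_i} \subseteq \Lang_\kappa$ for every $i$, while the outermost index set satisfies $|I| < \kappa$. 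Hence every conjunction and disjunction appearing in $\bigwedge_{i\in I}\varphi_i$ has cardinality $< \kappa$, i.e.\ $\varphi \in \Lang_\kappa$, which closes the induction.

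The substantive step, and the one where the Axiom of Choice enters as flagged in the statement, is this passage from the pointwise bounds $\kappa_i$ to a single global bound $\kappa$: one must assemble $(\kappa_i)_{i\in I}$ into an actual set before taking its supremum. I expect this to be the only delicate point. Selecting the least witness $\kappa_i$ at each $i$ renders the assignment definable and thereby reduces the apparent reliance on $\AC$ to a single use of Replacement; all remaining cases are immediate from the definitions, which is why the result is essentially a bookkeeping observation.
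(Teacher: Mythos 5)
Your proof is correct and follows the same overall route as the paper's: structural induction on $\varphi$, with the only substantive case being $\bigwedge_{i\in I}\varphi_i$ (dually $\bigvee$), the base and unary cases being immediate. The difference is in how the family of witnesses is assembled and combined. The paper invokes the Axiom of Choice to pick, for each $i\in I$, some cardinal $\kappa_i$ with $\varphi_i\in\Lang_{\kappa_i}$, and then takes $\kappa$ to be the cardinality of $\sum_{i\in I}\kappa_i$. You instead take the \emph{least} witness at each $i$ --- legitimate, since cardinals are well-ordered and the witnessing class is upward closed --- so the assignment $i\mapsto\kappa_i$ is definable and Replacement suffices; this removes the use of $\AC$ that the paper flags in the statement of the Fact, a genuine (if modest) strengthening. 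Your choice of the final bound is also more careful: taking $\kappa \deq (\mu+|I|)^+$ with $\mu$ the supremum of the $\kappa_i$ guarantees the \emph{strict} inequality $|I|<\kappa$ demanded by the definition of $\Lang_\kappa$. The paper's proof only secures $|I|\leq\kappa$: for instance, if every $\varphi_i$ is atomic one may have $\kappa_i=1$ for all $i$, and with $|I|=\omega$ the paper's $\kappa$ equals $\omega$, while the outer conjunction has cardinality $\omega\not<\omega$; the successor step you make explicit is exactly what repairs this. In short: same induction, but your handling of the key step is both $\AC$-free and tighter than the paper's.
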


%%%%%%%%%%%%%%%%%%%%%%%%%%%%%%%%%%%%%%%%%%%%%%%%%%%%%%%%%%%%%%%%%%%%%%%%%%%
\begin{proof}
%%%%%%%%%%%%%%%%%%%%%%%%%%%%%%%%%%%%%%%%%%%%%%%%%%%%%%%%%%%%%%%%%%%%%%%%%%%
By induction on $\varphi$.
We only discuss the case of $\varphi = \bigwedge_{i \in I} \varphi_i$.
By induction hypothesis, for each $i \in I$
there is a cardinal $\kappa_i > 0$ such that $\varphi_i \in \Lang_{\kappa_i}$.
Using the Axiom of Choice (see e.g.~\cite[\S 5]{jech06set}),
let $\kappa$ be the cardinal of $\sum_{i \in I}\kappa_i$.
Then $\varphi_i \in \Lang_{\kappa}$ for all $i \in I$.
Since $I$ has cardinality $\leq \kappa$,
we get $\varphi \in \Lang_\kappa$.
\end{proof}}

% that's all folks
\end{document}